\newcommand{\myparskip}{3pt}
\newtheorem{theorem}{Theorem}
\newtheorem{heuristic algorithm}{Heuristic Algorithm}
\newtheorem{lemma}{Lemma}[section]
\newtheorem{definition}{Definition}[section]
\newtheorem{corollary}{Corollary}[section]
\newtheorem{claim}{Claim}[section]
\newtheorem{proposition}{Proposition}[section]
\newtheorem{problem}{Problem}
\newtheorem{observation}{Observation}[section]
\newtheorem{fact}{Fact}
\makeatletter\@addtoreset{section}{part}\makeatother%
\begin{document}
\bibliographystyle{alpha}

\newcommand{\algline}{
	\rule{0.5\linewidth}{.1pt}\hspace{\fill}%
	\par\nointerlineskip \vspace{.1pt}
}
\newenvironment{tbox}{\begin{tcolorbox}[
		enlarge top by=5pt,
		enlarge bottom by=5pt,
		breakable,
		boxsep=0pt,
		left=4pt,
		right=4pt,
		top=10pt,
		boxrule=1pt,toprule=1pt,
		colback=white,
		arc=-1pt,
		]
	}
	{\end{tcolorbox}}


\newenvironment{proofof}[1]{\noindent{\bf Proof of #1.}}
{\hspace*{\fill}\stopproof}

\newenvironment{properties}[2][0]
{\renewcommand{\theenumi}{#2\arabic{enumi}}
	\begin{enumerate} \setcounter{enumi}{#1}}{\end{enumerate}\renewcommand{\theenumi}{\arabic{enumi}}}

\newif\ifnocomments


\ifnocomments

\newcommand{\znote}[1]{}

\else
\newcommand{\znote}[1]{\textcolor{red}{\sc{[ZT: #1]}}}

\fi

\ifnocomments

\newcommand{\snote}[1]{}

\else
\newcommand{\snote}[1]{\textcolor{red}{\sc{[SK: #1]}}}

\fi


\newcommand{\tG}{\textbf{G}}
\newcommand{\tH}{\textbf{H}}
\newcommand{\tE}{\textbf{E}'}
\newcommand{\tC}{\textbf{C}}
\newcommand{\tphi}{\bm{\phi}}
\newcommand{\tpsi}{\bm{\psi}}
\newcommand{\tSigma}{\bm{\Sigma}}
\newcommand{\tB}{\tilde B}
\newcommand{\dout}{D_{\mbox{\tiny{out}}}}
\newcommand{\notF}{\overline{F}}

\renewcommand{\P}{\mbox{\sf P}}
\newcommand{\NP}{\mbox{\sf NP}}
\newcommand{\PCP}{\mbox{\sf PCP}}
\newcommand{\ZPP}{\mbox{\sf ZPP}}
\newcommand{\DTIME}{\mbox{\sf DTIME}}
\newcommand{\opt}{\mathsf{OPT}}
\newcommand{\optcro}{\mathsf{OPT}_{\mathsf{cr}}}
\newcommand{\optcrors}{\mathsf{OPT}_{\mathsf{cnwrs}}}
\newcommand{\set}[1]{\left\{ #1 \right\}}
\newcommand{\sse}{\subseteq}
\newcommand{\B}{{\mathcal{B}}}
\newcommand{\tset}{{\mathcal T}}
\newcommand{\uset}{{\mathcal U}}
\newcommand{\iset}{{\mathcal{I}}}
\newcommand{\pset}{{\mathcal{P}}}
\newcommand{\nset}{{\mathcal{N}}}
\newcommand{\dset}{{\mathcal{D}}}
\newcommand{\tpset}{\tilde{\mathcal{P}}}
\newcommand{\qset}{{\mathcal{Q}}}
\newcommand{\tqset}{\tilde{\mathcal{Q}}}
\newcommand{\lset}{{\mathcal{L}}}
\newcommand{\bset}{{\mathcal{B}}}
\newcommand{\tbset}{\tilde{\mathcal{B}}}
\newcommand{\aset}{{\mathcal{A}}}
\newcommand{\cset}{{\mathcal{C}}}
\newcommand{\fset}{{\mathcal{F}}}
\newcommand{\mset}{{\mathcal M}}
\newcommand{\jset}{{\mathcal{J}}}
\newcommand{\xset}{{\mathcal{X}}}
\newcommand{\wset}{{\mathcal{W}}}
\newcommand{\gset}{{\mathcal{G}}}
\newcommand{\oset}{{\mathcal{O}}}
\newcommand{\yset}{{\mathcal{Y}}}
\newcommand{\rset}{{\mathcal{R}}}
\newcommand{\I}{{\mathcal I}}
\newcommand{\hset}{{\mathcal{H}}}
\newcommand{\sset}{{\mathcal{S}}}
\newcommand{\zset}{{\mathcal{Z}}}
\newcommand{\notu}{\overline U}
\newcommand{\vol}{\operatorname{vol}}
\newcommand{\nots}{\overline S}
\newcommand{\eint}{E^{\tiny\mbox{int}}}
\newcommand{\event}{{\cal{E}}}
\newcommand{\floor}[1]{\ensuremath{\left\lfloor#1\right\rfloor}}
\newcommand{\ceil}[1]{\ensuremath{\left\lceil#1\right\rceil}}

\newcommand{\marcon}{{\mathsf{MC}}}
\newcommand{\cov}{{\mathsf{cov}}}
\newcommand{\mst}{{\mathsf{MST}}}
\newcommand{\card}[1]{|#1|}
\newcommand{\coi}{{\mathsf{COI}}}

\newcommand{\cover}{\textsf{cover}}
\newcommand{\eps}{\varepsilon}
\newcommand{\bfs}{\textnormal{\textsf{BFS}}}
\newcommand{\pbfs}{\textnormal{\textsf{BFS}}}
\newcommand{\lv}{\textsf{lv}}
\newcommand{\tsp}{\mathsf{TSP}}
\newcommand{\gtsp}{\textsf{GTSP}}
\newcommand{\ebt}{\tset}
\newcommand{\eb}{\textsf{EB}}
\newcommand{\optmst}{\textsf{MST}}
\newcommand{\defi}{\textsf{def}}
\newcommand{\ord}{\textsf{ord}}
\newcommand{\rc}{\textnormal{\textsf{rc}}}
\newcommand{\dist}{\textnormal{\textsf{dist}}}
\newcommand{\cost}{\textnormal{\textsf{cost}}}
\newcommand{\bw}{\textsf{bw}}
\newcommand{\local}{\textsf{Local}}
\newcommand{\pseudo}{\textsf{Pseudo-IP}}
\newcommand{\vin}{v^{\textnormal{\textsf{in}}}}
\newcommand{\vout}{v^{\textnormal{\textsf{out}}}}
\newcommand{\diam}{\textsf{diam}}
\newcommand{\expect}{\mathbb{E}}
\newcommand{\proover}{\pi_{\textsf{Overwrite}}}
\newcommand{\promst}{\pi_{\textsf{MST}}}
\newcommand{\protsp}{\pi_{\textsf{TSP}}}
\newcommand{\mstest}{\textsf{MST}_{\textsf{apx}}}
\newcommand{\tspest}{\textsf{TSP}_{\textsf{apx}}}
\newcommand{\proind}{\pi_{\textsf{Index}}}
\newcommand{\ind}{\textsf{Index}}
\newcommand{\distIND}{\mathcal{D}_{\textsf{Index}}}
\newcommand{\distMST}{\mathcal{D}_{\textsf{MST}}}
\newcommand{\ic}{\textnormal{\textsf{IC}}}
\newcommand{\cc}{\textnormal{\textsf{CC}}}
\newcommand{\tvd}[2]{\ensuremath{\Delta_{\textnormal{\texttt{TV}}}(#1,#2)}}
\newcommand{\dkl}[2]{\ensuremath{D_{\textnormal{\textsf{KL}}}(#1 \| #2)}}

\newcommand{\hel}{h}
\newcommand{\II}{I}
\newcommand{\HH}{H}

\newcommand{\RV}[1]{\mathbf{#1}}
\newcommand{\prot}{\ensuremath{\Pi}}
\newcommand{\Prot}{\ensuremath{\Pi}}
\newcommand{\findmiss}{\sf{FindBit}}
\newcommand{\overwrite}{\sf{Overwrite}}
\newcommand{\distfind}{\mathcal{D}_{\textsf{FindBit}}}
\newcommand{\distover}{\mathcal{D}_{\textsf{Overwrite}}}
\newcommand{\temp}{\textsf{temp}}
\newcommand{\IA}{\textsf{IA}}
\newcommand{\IB}{\textsf{IB}}

\newcommand{\row}{\textsf{Row}}
\newcommand{\col}{\textsf{Col}}
\newcommand{\alg}{\ensuremath{\mathsf{Alg}}\xspace}

\newcommand{\opttsp}{\textnormal{\textsf{TSP}}}

\newcommand{\sep}{\sf{sep}}
\newcommand{\core}{\sf{core}}
\newcommand{\scut}{\sf{Shortcut}}
\newcommand{\adv}{\mathsf{adv}}
\newcommand{\lig}{\sf{light}}
\newcommand{\maxmat}{\mathsf{MM}}
\newcommand{\midd}{\mathsf{mid}}
\newcommand{\bottom}{\mathsf{bot}}
\newcommand{\topp}{\mathsf{top}}
\newcommand{\snfl}{tree\xspace}
\newcommand{\snfls}{trees\xspace}
\newcommand{\inn}{\sf in}
\newcommand{\wD}{w_{\downarrow}}
\newcommand{\wU}{w_{\uparrow}}
\newcommand{\walkcost}{\mathsf{MWC}}

\begin{titlepage}
	
	\title{Sublinear Algorithms and Lower Bounds for Estimating MST and TSP Cost in General Metrics}

\author{Yu Chen\thanks{University of Pennsylvania, Philadelphia, PA, USA. Email: {\tt chenyu2@cis.upenn.edu}.}   \and Sanjeev Khanna\thanks{University of Pennsylvania, Philadelphia, PA, USA. Email: {\tt  sanjeev@cis.upenn.edu}.} \and Zihan Tan\thanks{University of Chicago, Chicago, IL, USA. Email: {\tt zihantan@uchicago.edu}.}}
	\maketitle

	\thispagestyle{empty}
	\begin{abstract}

We consider the design of sublinear space and query complexity algorithms for estimating the cost of a minimum spanning tree (MST) and the cost of a minimum traveling salesman (TSP) tour in a metric on $n$ points. We start by exploring this estimation task in the regime of $o(n)$ space, when the input is presented as a stream of all $\binom{n}{2}$ entries of the metric in an arbitrary order (a metric stream). For any $\alpha \ge 2$, we show that both MST and TSP cost can be $\alpha$-approximated using $\tilde{O}(n/\alpha)$ space, and moreover, $\Omega(n/\alpha^2)$ space is necessary for this task. We further show that even if the streaming algorithm is allowed $p$ passes over a metric stream, it still requires $\tilde{\Omega}(\sqrt{n/\alpha p^2})$ space. 

We next consider the well-studied semi-streaming regime. In this regime, it is straightforward to compute MST cost exactly even in the case where the input stream only contains the edges of a weighted graph that induce the underlying metric (a graph stream), and the main challenging problem is to estimate TSP cost to within a factor that is strictly better than $2$. We show that in graph streams, for any $\eps > 0$, any one-pass $(2-\eps)$-approximation of TSP cost requires $\Omega(\eps^2 n^2)$ space. On the other hand, we show that there is an $\tilde{O}(n)$ space two-pass algorithm that approximates the TSP cost to within a factor of 1.96.

Finally, we consider the query complexity of estimating metric TSP cost to within a factor that is strictly better than $2$ when the algorithm is given access to an $n \times n$ matrix that specifies pairwise distances between $n$ points. The problem of MST cost estimation in this model is well-understood and a $(1+\eps)$-approximation is achievable by $\tilde{O}(n/\varepsilon^{O(1)})$ queries. However, for estimating TSP cost, it is known that obtaining a $(1+\eps)$-approximation requires $\Omega(n^2)$ queries even for $(1,2)$-TSP, and for general metrics, no algorithm that achieves a better than $2$-approximation with $o(n^2)$ queries is known. We make progress on this task by designing an algorithm that performs $\tilde{O}(n^{1.5})$ distance queries and achieves a strictly better than $2$-approximation when either the metric is known to contain a spanning tree supported on weight-$1$ edges or the algorithm is given access to a minimum spanning  tree of the graph. 
Prior to our work, such results were only known for the special cases of graphic TSP and $(1,2)$-TSP.

In terms of techniques, our algorithms for metric TSP cost estimation in both streaming and query settings rely on estimating the {\em cover advantage} which intuitively measures the cost needed to turn an MST into an Eulerian graph. One of our main algorithmic contributions is to show that this quantity can be meaningfully estimated by a sublinear number of queries in the query model. On one hand, the fact that a metric stream reveals pairwise distances for all pairs of vertices provably helps algorithmically. On the other hand, it also seems to render useless techniques for proving space lower bounds via reductions from well-known hard communication problems.
Our main technical contribution in lower bounds is to identify and characterize the communication complexity of new problems that can serve as canonical starting point for proving metric stream lower bounds. 				
		
	\end{abstract}
\end{titlepage}

\pagenumbering{gobble}
\setcounter{tocdepth}{2}
\tableofcontents
\newpage 
\pagenumbering{arabic}

\section{Introduction}

The minimum spanning tree (MST) problem and the metric traveling salesman (TSP) problem are among the most well-studied combinatorial optimization problems with a long and rich history (see, e.g., \cite{christofides1976worst,karpinski2015new,CQ17,CQ18,Gao18,MnichM18,karlin2021slightly}\!). 
The two problems are intimately connected to one another, as many approximation algorithms for metric TSP use a minimum spanning tree as a starting point for efficiently constructing an approximate solution. In particular, any algorithm for estimating the MST cost to within a factor of $\alpha$ immediately implies an algorithm for estimating the metric TSP cost to within a factor of $2\alpha$. 
In this work, we consider the design of sublinear space and query complexity algorithms for estimating the cost of a minimum spanning tree (MST) and the cost of a minimum metric traveling salesman (TSP) tour in an $n$-vertex weighted undirected graph $G$. An equivalent view of both problems is that we are given an $n \times n$ matrix $w$ specifying pairwise distances between them, where the entry $w[u,v]$ corresponds to the weight of the shortest path from $u$ to $v$ in $G$. It is clear that any algorithm that works with a weighted graph as input also works when the input is presented as the complete metric. However, the converse is not true. For instance, no single-pass streaming algorithm can obtain a finite approximation to the diameter (or even determine the connectivity) of a graph in $o(n)$ space when the graph is presented as a sequence of edges (a {\em graph stream}). But if instead we are presented a stream of $n^2$ entries of the metric matrix $w$ (a {\em metric stream}), there is a trivial $\tilde{O}(1)$ space algorithm for this problem -- simply track the largest entry seen. 

\subsection{Our Results}

In the first part of this work, we explore the power and limitations of graph and metric streams for MST and TSP cost estimation. We start by exploring this estimation task in the regime of $o(n)$ space in the streaming model. It is easy to show that no finite approximation to MST/TSP cost is achievable in this regime when the input stream simply contains the edges of a weighted graph that induce the underlying metric (a graph stream). However, we show that this state of affairs changes completely if the input is instead presented as all entries of the shortest-path-distance metric induced by the input graph (a metric stream). 

\begin{theorem}
\label{thm: 1 pass alpha MST upper}
For any $\alpha>1$, there is a randomized one-pass $\alpha$-approximation streaming algorithm for MST cost estimation in  metric streams using $\tilde O(n/\alpha)$ space.
\end{theorem}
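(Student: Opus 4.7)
The plan is to combine the classical Chazelle--Rubinfeld--Trevisan (CRT) identity for MST cost with a geometric bucketing of edge weights that absorbs a factor of $\alpha$. After normalizing distances to lie in $[1,W]$, set $r_i = \alpha^i$ for $i=0,1,\ldots,L:=\lceil \log_\alpha W \rceil$, and let $c_i$ be the number of connected components of the graph $G_i$ whose edges are the pairs at distance at most $r_i$. Because $c(r)$ is monotone non-increasing in $r$ and $\mathsf{MST} - (n-1) = \int_0^\infty (c(r)-1)\,dr$, the Riemann-sum bounds $\sum_i c_{i+1}(r_{i+1}-r_i) \leq \mathsf{MST}-(n-1) \leq \sum_i c_i(r_{i+1}-r_i)$ already differ by at most a factor $\alpha$, so it suffices to approximate each $c_i$ to within a small multiplicative error and return the weighted sum.

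For each scale $i$, I would use the CRT estimator $c_i = \sum_v 1/|C_i(v)|$ truncated at some level $d_i$, evaluated over $s = \tilde{O}(n/\alpha)$ uniformly sampled vertices $S$: $\hat{c}_i = (n/s)\sum_{v\in S} 1/\min(|C_i(v)|, d_i)$. With $d_i$ chosen appropriately, a standard variance computation shows that $\hat{c}_i$ concentrates around $c_i$ up to the accuracy needed, provided one can compute or bound $\min(|C_i(v)|, d_i)$ for each sampled $v$ using small space in the stream.

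The main obstacle --- and the step that crucially exploits the metric stream assumption --- is estimating $|C_i(v)|$ in the stream, since connected-component membership is a non-local property in general. The plan is to use the triangle inequality: any vertex in $C_i(v)$ lies within distance $(|C_i(v)|-1)\cdot r_i$ of $v$, so as soon as $|C_i(v)| \leq d_i$ the entire component is contained in the ball $B(v, d_i r_i) = B(v, r_{i+\lceil\log_\alpha d_i\rceil})$. For each sampled $v$ and each scale $i$, I would maintain a sketch that either records the at most $d_i$ vertex identifiers of this ball (together with those pairwise distances at level $\leq r_i$ that arrive in the stream, which are sufficient to reconstruct $C_i(v)$ on the fly) or raises a ``large'' flag once the ball exceeds $d_i$. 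Summed over $s$ sampled vertices and $L$ scales, careful amortization keeps storage at $\tilde{O}(n/\alpha)$.

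The technical heart of the proof will be the analysis of this per-vertex sketch: verifying that it is correctly updated by the metric stream events, that its output is within the accuracy needed by the CRT estimator, and that the variance of $\hat{c}_i$ concentrates within tolerance after a union bound over the $L$ scales. Putting these pieces together gives the claimed one-pass $\alpha$-approximation to $\mathsf{MST}$ in $\tilde{O}(n/\alpha)$ space.
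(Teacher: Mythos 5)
Your high-level decomposition---approximating $\mst-(n-1)$ by the Riemann sum $\sum_i c_i(r_{i+1}-r_i)$ over geometrically spaced thresholds (which intrinsically loses only a factor $\alpha$) and then estimating each $c_i$ with a truncated CRT-style estimator over $\tilde O(n/\alpha)$ sampled vertices---is a genuinely different route from the paper's. The paper never counts components at any scale: it greedily maintains an exact MST $T'$ on a uniform sample $V'$ of size $\Theta(n/\alpha)$ (which a one-pass greedy can do in $\tilde O(n/\alpha)$ space), separately estimates $\sum_{v\notin V'}\dist_w(v,V')$ from a second sample, and proves the resulting spanning-tree cost is $\tilde O(\alpha)\cdot\mst$ via an Euler-tour covering argument (every length-$\Theta(\alpha\log n)$ window of the Euler tour of the true MST hits $V'$ w.h.p.).

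The gap in your plan is the one-pass implementation of the per-vertex component sketch. The containment $C_i(v)\subseteq B(v,d_ir_i)$ is correct, but to recover $|C_i(v)|$ from the ball you need the \emph{short edges within the ball}, and a single pass over an adversarially ordered metric stream will not let you collect them. The entry $w(u,u')\le r_i$ may appear before either $w(v,u)$ or $w(v,u')$ has been seen; at that moment your sketch has no reason to retain $(u,u')$, and the information is gone. Concretely, let $v$ be a sampled vertex and take $u_1,u_2,u_3$ with $w(v,u_1)=w(u_1,u_2)=w(u_2,u_3)=r_i$, all other distances among $\{v,u_1,u_2,u_3\}$ strictly above $r_i$, and $d_i\ge 4$. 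If the stream presents $w(u_2,u_3)$ first, your sketch later holds $\{u_1,u_2,u_3\}$ as ball members but only the short edges $(v,u_1)$ and $(u_1,u_2)$, so it reports $|C_i(v)|=3$ when the truth is $4$. A secondary problem is the ``large'' flag: $|B(v,d_ir_i)|>d_i$ does \emph{not} imply $|C_i(v)|>d_i$ (the ball can contain many vertices connected to $v$ only by long edges), so the flag systematically over-truncates $|C_i(v)|$ and biases $\hat c_i$ downward by an amount not controlled by the Riemann-sum slack. Repairing the first defect seems to require a second pass, an $O(n)$-space union--find over all vertices, or an estimator that does not depend on reconstructing within-ball connectivity; the paper's algorithm sidesteps the difficulty entirely by never needing connectivity information at any scale.
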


Note that this also immediately gives a one-pass $\tilde O(n/\alpha)$-space algorithm for TSP cost estimation for any $\alpha \ge 2$ by simply doubling the MST cost estimate. The result above is in sharp contrast to what is achievable in graph streams. Using a simple reduction from the $\mathsf{Disjointness}$ problem, we can show the following lower bound for graph streams.

\begin{theorem}
\label{thm: p pass alpha graph MST lower}
For any $\alpha>1$, any randomized $p$-pass $\alpha$-approximation streaming algorithm for MST cost estimation in graph streams requires $\tilde\Omega (n/p)$ space.
\end{theorem}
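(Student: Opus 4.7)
The plan is to reduce from the two-party Set-Disjointness problem, whose randomized communication complexity on inputs of length $N$ is $\Omega(N)$ regardless of the number of rounds. A $p$-pass streaming algorithm using $s$ bits of space yields a communication protocol with total communication $O(ps)$, obtained by having the two parties exchange the algorithm's state after each processes its portion of the stream, for $2p$ rounds total. Hence a reduction with $N = \Theta(n)$ from Disjointness to $\alpha$-approximate MST cost estimation in graph streams will immediately yield the claimed space lower bound of $\Omega(n/p)$.

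For the reduction, given an instance $x, y \in \{0,1\}^{n-1}$ of Disjointness, I construct a stream on the vertex set $\{v_0, v_1, \ldots, v_{n-1}\}$ whose edges all lie on the star with center $v_0$: let $e_i = (v_0, v_i)$ for $i = 1, \ldots, n-1$. Alice's portion of the stream consists of (i) a ``heavy'' copy of $e_i$ of weight $W := 2\alpha n$ for every $i$, and (ii) a unit-weight copy of $e_i$ for each $i$ with $x_i = 0$. Bob's portion consists of a unit-weight copy of $e_i$ for each $i$ with $y_i = 0$. By construction, for each $i$ the cheapest copy of $e_i$ in the multi-stream has weight $1$ unless $x_i = y_i = 1$, in which case the only available copy has weight $W$.

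The key structural observation is that every $v_i$ with $i \ge 1$ has $e_i$ as its only incident edge (across all copies), so the MST must include exactly one copy of each $e_i$, and in particular the minimum-weight one. Letting $k = |\{i : x_i = y_i = 1\}|$ denote the intersection size, the MST cost thus equals $(n - 1 - k) + k \cdot W$. In a NO instance of Disjointness ($k = 0$) this equals $n - 1$, while in a YES instance ($k \ge 1$) it is at least $W + n - 2 > \alpha(n-1)$ by the choice of $W$. Consequently, any $\alpha$-approximation of the MST cost decides Disjointness by comparing its output against the threshold $\alpha(n-1)$.

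Combining the reduction with the $\Omega(n)$ communication lower bound for randomized Disjointness yields $ps = \Omega(n)$, i.e., $s = \Omega(n/p)$, which is the claimed bound. The one point requiring care is choosing $W$ so that the MST costs for YES and NO instances remain separated by strictly more than a factor of $\alpha$ after approximation; beyond this, the proof is a direct application of the standard streaming-to-communication simulation, so I do not anticipate a substantive obstacle.
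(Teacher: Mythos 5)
Your overall strategy is correct and matches the paper's: reduce from two-party Disjointness (which has $\Omega(n)$ randomized communication complexity even with unbounded rounds), observe that a $p$-pass $s$-space streaming algorithm gives a $2p$-round protocol with $O(ps)$ communication, and hence conclude $s = \Omega(n/p)$. Your threshold analysis with $W = 2\alpha n$ correctly separates the YES and NO MST costs by more than a factor of $\alpha$. The reduction-to-communication scaffold is exactly what the paper does.

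However, your specific gadget has a genuine flaw: it produces a \emph{multi-graph} stream. With a single center $v_0$, Alice always streams $(v_0,v_i)$ with weight $W$, and then possibly again with weight $1$ (when $x_i=0$), and Bob may stream $(v_0,v_i)$ with weight $1$ yet again (when $y_i=0$). So the same edge $(v_0,v_i)$ appears up to three times with differing weights. In the paper's model a graph stream is the edge set of a single weighted graph, streamed once each; correspondingly, the two-party problem the paper reduces to ($\mathsf{MST}^{\mathsf g}_{\mathsf{apx}}$) explicitly requires $E(G_A)\cap E(G_B)=\emptyset$, and neither party may internally repeat an edge. Your construction violates both. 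One could try to rescue it by positing that the algorithm interprets duplicate edges by taking the minimum (or latest) weight, but that is an additional modelling assumption not granted by the problem, and the MST-estimation algorithm is not obligated to behave meaningfully on ill-formed inputs.

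The paper sidesteps this by introducing a second hub: the vertex set is $\{u,u'\}\cup\{v_1,\dots,v_n\}$, Alice owns all edges $(u,v_i)$ (weight $1$ if $x_i=0$, weight $M$ if $x_i=1$) plus the edge $(u,u')$ of weight $1$, and Bob owns all edges $(u',v_i)$ (weight $1$ if $y_i=0$, weight $M$ if $y_i=1$). The edge sets are disjoint, each party streams a simple graph, and the union is connected. The MST cost is $n+1$ when the sets are disjoint and at least $M$ when they intersect (some $v_i$ then has both incident edges of weight $M$), so the same threshold argument applies. Your proof goes through once you replace the single-hub multi-edge gadget with this two-hub, edge-disjoint one; the rest of your argument (communication simulation, choice of $W$, separation) is sound.
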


We next show that there are limits to the power of metric streams, and in particular, any non-trivial approximation of MST cost still requires polynomial space even if we allow multiple passes over the stream.

\begin{theorem}
\label{thm: 1 pass alpha MST lower}
For any $\alpha>1$, any randomized one-pass $\alpha$-approximation streaming algorithm for MST cost estimation in metric streams requires $\Omega (n/\alpha^2)$ space.
\end{theorem}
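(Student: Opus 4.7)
The plan is to establish this lower bound via a reduction from a carefully chosen one-way two-party communication problem. Recall the standard simulation: a one-pass streaming algorithm using space $s$ yields a one-way Alice $\to$ Bob protocol of $s$ bits, obtained by having Alice run the algorithm on the prefix of the metric stream she produces and sending the resulting state to Bob, who completes the simulation on his suffix. Hence it suffices to exhibit a communication problem of randomized one-way complexity $\Omega(n/\alpha^2)$ together with a reduction to $\alpha$-approximate MST cost estimation in metric streams such that solving the MST task also solves the communication task.

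The natural candidate is a promise variant of set disjointness (or a closely related index-type problem) on a universe of size $N = \Theta(n/\alpha^2)$, which has one-way randomized complexity $\Omega(N)$ by standard arguments. The reduction associates each universe element with a gadget on $\Theta(\alpha^2)$ vertices, yielding $\Theta(n)$ vertices in total. Each gadget has a ``cheap'' configuration of internal MST cost $\Theta(\alpha^2)$ and an ``expensive'' configuration of internal MST cost $\Theta(\alpha^3)$; the expensive one is realized by taking the shortest-path metric of a subgraph whose distances naturally span a range up to $\Theta(\alpha)$, for example via paths of unit-weight edges of length up to $\alpha$. (This detour through a graph metric is necessary because the triangle inequality alone only allows a ratio of $2$ between the largest and smallest distance in a two-weight metric, which is insufficient for large $\alpha$.) Inter-gadget distances are fixed at a uniform value compatible with the triangle inequality over all gadgets.

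In the YES case of the communication problem all gadgets are in the cheap configuration and the total MST cost is $\Theta(n)$; in the NO case a constant fraction of gadgets are in the expensive configuration, so the total MST cost is $\Omega(\alpha n)$. Since this ratio exceeds $\alpha$ and an $\alpha$-approximate output lies in $[\opt, \alpha\cdot\opt]$, the two intervals do not overlap, so the streaming algorithm must distinguish YES from NO and thus solve the underlying communication problem. Combining this with the one-way complexity lower bound gives the desired $\Omega(n/\alpha^2)$ space bound.

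The main obstacle is arranging the reduction to be a valid one-way protocol in the metric-stream model. Unlike in the graph-stream setting, every entry of the $\binom{n}{2}$-sized metric stream must be consistent with a single underlying metric, and neither party has access to the other's input. The plan is to have Alice stream the intra-gadget distances for the gadgets she controls, Bob stream those for his gadgets, and the inter-gadget distances be pre-declared and split deterministically between the parties; gadgets whose configuration depends on both inputs (as arises in disjointness-style reductions) may be split into an Alice-side sub-gadget and a Bob-side sub-gadget joined by a fixed interface. Designing such gadgets so that the joint stream is always a valid metric regardless of the inputs, and so that the MST-cost gap of factor strictly greater than $\alpha$ is realized in the YES versus NO configurations, will form the technical core of the argument.
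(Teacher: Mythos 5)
Your high-level blueprint (reduce from a one-way communication problem of complexity $\Omega(n/\alpha^2)$, build gadgets on $\Theta(\alpha^2)$ vertices whose cheap and expensive MST costs make the overall YES/NO costs $\Theta(n)$ versus $\Omega(\alpha n)$, and fix inter-gadget distances) matches the paper's parameter regime exactly. The gap is that everything you defer to ``the technical core of the argument'' is precisely where a Disjointness- or Index-style reduction fails in metric streams, and no gadget tuning by itself repairs it.

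The obstruction is structural. In a metric stream every entry $w(u,v)$ appears and must be consistent, so whoever writes the entries inside gadget $i$ must already know them. If gadget $i$'s distances depend on $x_i \wedge y_i$, neither player can write them. Your remedy of splitting gadget $i$ into an Alice-side $A_i$ and a Bob-side $B_i$ joined by a fixed interface removes the inconsistency, but then distances inside $A_i$ are functions of $x_i$ alone, those inside $B_i$ of $y_i$ alone, and cross distances are constants. The total MST cost then decomposes additively as $\sum_i \mathrm{MST}(A_i \mid x_i) + \sum_i \mathrm{MST}(B_i \mid y_i) + C = f(x) + g(y) + C$, so the induced communication task is trivial: Alice sends $f(x)$ in $O(\log n)$ bits, and no lower bound follows. (A plain Index reduction fails even more immediately, since all gadget states are then functions of $x$ alone and Alice can transmit their aggregate cost directly.) This is exactly the collapse the paper's introduction illustrates with the ``is the graph a clique?'' example, and it is why the paper does not reduce from Disjointness or Index but instead introduces the \textsf{Overwrite} problem: Alice holds a vector in $\set{0,1,*}^n$ with coordinates randomly \emph{hidden from herself} (the $*$'s), Bob holds the full underlying vector plus a query index $i^*$, and Bob must learn what Alice knows at $i^*$. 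The random masking of Alice's own input is the mechanism that mirrors a metric stream, where Alice's partial metric reveals some pairs and conceals others, and it is what evades the $f(x)+g(y)$ collapse. Your plan has no analogue of it, so the reduction as sketched cannot yield a super-logarithmic space bound.
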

	
\begin{theorem}
\label{thm: p pass alpha MST lower}
For any $\alpha>1$, any randomized $p$-pass $\alpha$-approximation streaming algorithm for MST cost estimation requires $\tilde\Omega (\sqrt{n/\alpha p^2})$ space.
\end{theorem}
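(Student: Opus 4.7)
The plan is to reduce from a two-party communication problem whose randomized complexity is $\tilde\Omega(\sqrt{n/\alpha})$ \emph{independent of the number of rounds}, and to combine this with the standard simulation that converts any $p$-pass streaming algorithm of space $s$ into a $2p$-round protocol using $O(sp)$ bits of communication. Matching these bounds yields $sp = \tilde\Omega(\sqrt{n/\alpha})$, and hence $s = \tilde\Omega(\sqrt{n/(\alpha p^2)})$. The crucial difference from a naive adaptation of the one-pass bound of \Cref{thm: 1 pass alpha MST lower} is that the hard problem here lives on a universe of size $\Theta(\sqrt{n/\alpha})$ rather than $\Theta(n/\alpha^2)$, so that its communication hardness does not degrade with additional rounds; the factor of $p^2$ in the final bound comes entirely from the $O(sp)$ overhead of the simulation.

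Concretely, I would first choose (an appropriate variant of) $\mathsf{Set}$-$\mathsf{Disjointness}$ on a universe of size $m = \Theta(\sqrt{n/\alpha})$, whose randomized communication complexity is $\Omega(m)$ regardless of the number of rounds. Next, I would embed this problem into an $n$-point metric: a natural template is to partition the vertices into $m$ blocks of size $\Theta(\sqrt{n\alpha})$, with Alice's and Bob's inputs determining certain intra- and cross-block distances, and a public backbone fixing the remaining distances so as to preserve the triangle inequality. The calibration of block sizes and weights must be chosen so that the MST cost in a YES instance and in a NO instance differ by a factor of at least $\alpha$; a plausible strategy is to arrange things so that the MST cost depends multiplicatively on an aggregate property (such as the Hamming distance of the two inputs, which by direct sum still requires $\Omega(m)$ communication) rather than on any single coincidence. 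Finally, the simulation converts the streaming algorithm into a protocol in the usual way: in each pass Alice feeds her distances into the memory, sends its $s$-bit contents to Bob, and Bob feeds his distances and returns the updated state; the final state determines the $\alpha$-approximation and hence the answer to the embedded communication problem.

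The main obstacle lies in the metric construction. Two features make it subtle: (i) the triangle inequality couples distances set by different players and restricts how much a single input bit can reshape the metric, and (ii) a metric stream exposes every pairwise shortest-path distance, so any sub-structure not deliberately engineered into the gadget is available to the algorithm for free. Overcoming these requires designing the gadget so that a multiplicative $\alpha$-gap in MST cost emerges only from a global aggregate property of Alice's and Bob's inputs, ensuring that distinguishing the two cases truly requires $\tilde\Omega(\sqrt{n/\alpha})$ bits of communication rather than being extractable from the metric closure alone.
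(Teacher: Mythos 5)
Your quantitative scaffolding is right --- the goal is a round-insensitive communication lower bound of $\tilde\Omega(\sqrt{n/\alpha})$ combined with the standard $p$-pass-to-protocol simulation, yielding $sp = \tilde\Omega(\sqrt{n/\alpha})$ --- but the reduction you sketch from two-party $\mathsf{Set}$-$\mathsf{Disjointness}$ is precisely the kind of attack the paper argues \emph{cannot} be made to work for metric streams, and you leave exactly that part unresolved. In a metric stream the entire distance matrix is exposed, and when you split entries between Alice and Bob the ``non-edge'' (large-distance) information is explicitly present in one player's hand rather than implicit. The paper illustrates this with the clique-identification example: Alice can collapse her whole input to a single bit (``did I see any non-edge?''), so the Disjointness-style intuition that neither player can locally resolve the answer simply fails. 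You flag this as the main obstacle in your own write-up, but ``design the gadget so the gap depends on a global aggregate property'' is not a construction; the paper's authors tried and concluded that reductions from Disjointness and Index do not go through for this model, and the hardness has to be established from scratch.

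What the paper actually does is introduce a new primitive, the Clique-or-Independent-Set problem $\coi_{N,m}$ (distinguish $m$ cliques of size $N$ from $m{-}1$ cliques plus an independent set), and set it up as a \emph{multi-party} blackboard game with $k = \Theta(\log N)$ players so that each player holds one bipartite cut of each clique and therefore cannot locally tell yes from no. The lower bound of $\Omega(m/\log^3 N)$ is then proved directly: an information-complexity argument via Hellinger distance for $\coi_{N,2}$ (Claim~\ref{cla:multi-small}), lifted to $\coi_{N,m}$ by a direct-sum argument (Claim~\ref{cla:ds-mst}), with $N = \Theta(\sqrt{\alpha n})$ and $m = \Theta(\sqrt{n/\alpha})$. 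Because there are $k = O(\log n)$ players, a $p$-pass, space-$s$ algorithm induces a protocol with $O(psk)$ total communication, giving $s = \tilde\Omega(\sqrt{n/(\alpha p^2)})$. So the gap in your proposal is not a missing detail but a missing idea: you would need either to exhibit an explicit Disjointness embedding that survives the metric-closure issue (which the paper's discussion suggests is not available), or to prove a fresh communication lower bound for a bespoke problem like $\coi$, which is the route the paper takes.
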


Table~\ref{tab: MST_results} summarizes our results for MST (and TSP) cost estimation in the regime of $o(n)$ space.

\setlength{\tabcolsep}{0.5em} 
{\renewcommand{\arraystretch}{1.5}

\begin{table}[h]
	\centering
	\begin{tabular}{|c|c|c|c|l|l|}
		\hline
		\multirow{2}{*}{Stream Type}  & \multicolumn{5}{c|}{MST estimation}                                                      \\ \cline{2-6} 
		& \# of passes & Approximation ratio & \multicolumn{3}{c|}{Upper or Lower bounds}       \\ \hline
		\multirow{3}{*}{Metric Stream} & 1                & $1$                 & \multicolumn{3}{c|}{$\tilde O(n)$ (trivial)}              \\ \cline{2-6} 
		& 1                & $\alpha$                 & \multicolumn{3}{c|}{$\tilde O(n/\alpha)$ (\Cref{thm: 1 pass alpha MST upper}), $\tilde \Omega(n/\alpha^2)$ (\Cref{thm: 1 pass alpha MST lower})}              \\ \cline{2-6} 
		& $p$                & $\alpha$             & \multicolumn{3}{c|}{$\tilde \Omega(\sqrt{n/\alpha p^2})$ (\Cref{thm: p pass alpha MST lower})}              \\ \hline
		\multirow{2}{*}{Graph Stream} & 1                & $1$                 & \multicolumn{3}{c|}{$\tilde \Theta(n)$ (trivial)}              \\ \cline{2-6} 
		& $p$                & any              & \multicolumn{3}{c|}{$\Omega(n/p)$ (\Cref{thm: p pass alpha graph MST lower})}              \\ \hline
	\end{tabular}
	\caption{Summary of results for MST-cost estimation streaming algorithms.}\label{tab: MST_results}
\end{table}

We next consider the well-studied semi-streaming regime when the streaming algorithm is allowed to use $\tilde{O}(n)$ space. In this regime, it is straightforward to design a deterministic one-pass streaming algorithm to compute MST cost exactly even in graph streams, and this in turn, immediately gives an $\tilde{O}(n)$ space algorithm to estimate TSP cost to within a factor of $2$. Thus in the semi-streaming regime, the key challenging problem is to estimate TSP cost to within a factor that is strictly better than $2$. 
A special case of this problem, {\em graphic TSP cost} estimation, where the input metric corresponds to the shortest-path distances induced by an unweighted undirected graph, was recently studied in~\cite{chen2020sublinear}, and the authors gave an $\tilde{O}(n)$ space randomized one-pass streaming algorithm that achieves an $(11/6)$-approximation even in the setting of graph streams. 
This ratio was recently improved\footnote{In their paper, they give an $\tilde O(n)$-time $1.83$-approximation algorithm, which can be easily turned into a one-pass streaming algorithm with space $\tilde{O}(n)$ with the same approximation ratio.} by Behnezhad, Roghani, Rubinstein, and Saberi to $1.83$ \cite{behnezhad2023sublinear}.
However, no analogous result is known for general TSP. We show that there is in fact a good reason for this state of affairs:

\begin{theorem}
\label{thm: 1 pass TSP lower}
For any $0<\epsilon<1$, any randomized one-pass $(2\!-\!\eps)$-approximation streaming algorithm for TSP cost estimation in graph streams requires $\Omega (\epsilon^2n^2)$ space.
\end{theorem}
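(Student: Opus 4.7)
The plan is to prove the lower bound by a reduction from a hard two-party one-way communication problem, converting any randomized one-pass $s$-space streaming algorithm that $(2-\eps)$-approximates graphic TSP into a randomized protocol of $O(s)$ bits for a problem whose two-party randomized communication complexity is $\Omega(\eps^2 n^2)$. A natural candidate is (a promise or gap variant of) set disjointness on a universe of size $N = \Theta(\eps^2 n^2)$, whose randomized communication complexity is $\Omega(N)$ by the classical results of Kalyanasundaram--Schnitger and Razborov.

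The reduction takes inputs $S_A, S_B \sse [N]$ and constructs a graph stream for an $n$-vertex graph $G = H \cup E_A \cup E_B$, where $H$ is a fixed base graph (e.g., a caterpillar-style tree) and $E_A, E_B$ are edge sets encoding $S_A, S_B$. The base graph is designed so that $\opttsp(H)$ is already close to $2(n-1)$, matching the trivial MST-doubling bound. The additional edges $E_A \cup E_B$ can provide ``shortcuts'' that reduce the TSP cost, but only when the same candidate edge appears in both $E_A$ and $E_B$ (corresponding to an element of $S_A \cap S_B$); a single-sided shortcut is masked by additional decoy edges and does not help the tour. Thus in the YES case (many coordinated collisions), $\opttsp(G)$ drops to roughly $(1+o(1))(n-1)$, while in the NO case (disjoint), $\opttsp(G)$ remains close to $2(n-1)$, producing the required $(2-\eps)$-multiplicative gap.

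Given such a construction, the reduction itself is routine: Alice streams $H$ together with $E_A$, sends the $s$-bit memory state to Bob, who continues the stream with $E_B$ and reads off the TSP estimate. A $(2-\eps)$-approximation distinguishes the YES and NO cases, solving the disjointness instance using $O(s)$ bits of communication; combined with the $\Omega(\eps^2 n^2)$ randomized communication lower bound, this forces $s = \Omega(\eps^2 n^2)$.

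The main obstacle is the graph construction. One must design $H$ together with the structure of candidate shortcut edges so that (a) each ``activated'' shortcut contributes meaningfully and in a modular fashion to reducing TSP cost, (b) the construction packs $\Omega(\eps^2 n^2)$ independent bits into $O(n^2)$ possible edges while preserving the $(2-\eps)$-multiplicative TSP gap, and (c) the $(2-\eps)$ gap is rigorously verified in the worst case, typically via a $T$-join or cover-advantage style argument bounding how much a tour can save by chaining activated shortcuts against how expensive an inactive region must be traversed. Packing $\Omega(\eps^2 n^2)$ bits densely while still enforcing a multiplicative TSP gap of $(2-\eps)$ in the worst case is the technically delicate step and is where the novelty of the construction must lie; a secondary subtlety is selecting the right disjointness variant (plain, unique, or gap) so that a YES instance plants enough coordinated collisions to pull TSP down to $\approx n$, while the matching $\Omega(\eps^2 n^2)$ communication lower bound remains applicable.
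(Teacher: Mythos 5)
Your high-level plan---a one-way communication reduction with a carefully crafted graph stream---is the right framework and matches the paper's strategy, but the choice of Disjointness is a substantive problem, not merely the ``secondary subtlety'' you flag, and it is exactly where your proposal would break down.

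The paper reduces from \emph{Index}, not Disjointness, and the choice matters in two linked ways. First, with Index, Bob's input is a single index $(i^*,j^*)\in[p]\times[p]$ where $p=\Theta(\eps n)$; this index is encoded into the stream by declaring \emph{one} group of star edges heavy (weight $\approx L = n^2$) while every other part of Bob's half of the graph has weight~$1$. Alice's $p^2$ bits are each encoded by a batch of $O(1/\eps)$ candidate shortcut edges. The TSP drops by a $(2-\eps)$ factor precisely when the one heavy group selected by Bob happens to be bridged by Alice's corresponding shortcut batch --- i.e., precisely when $X_{i^*,j^*}=1$. Because only one region is heavy, the ``decoy'' shortcut batches are genuinely inert (they connect weight-$1$ star edges and cannot save anything), so no separate masking argument is needed. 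Second, the gap mechanism is \emph{local and binary}: one bit of Alice's input flips the TSP by a constant factor. That is exactly the Index promise, and the $\Omega(p^2)=\Omega(\eps^2 n^2)$ one-way lower bound for Index finishes the proof directly.

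Your version cannot work as written, for a reason that is more than a packaging choice. You want the YES case of Disjointness to plant ``many coordinated collisions,'' i.e.\ $\Omega(\eps n)$ simultaneously activated shortcut batches, so that TSP drops to about~$n$. But that is a \emph{gap} variant of disjointness, and the randomized communication complexity of ``$\ge g$ collisions vs.\ $0$ collisions'' on universe $[N]$ is only $\tilde O(N/g)$ (by sampling), so with $N=\Theta(\eps^2 n^2)$ and $g=\Theta(\eps n)$ you can hope for at most $\tilde\Omega(\eps n)$, not $\Omega(\eps^2 n^2)$. If you retreat to standard (single-collision) Disjointness, you face the opposite problem: Bob's set $S_B$ could be large, and encoding it as heavy regions spreads the heavy weight across $|S_B|$ groups, so one collision only halves a $1/|S_B|$ fraction of the heavy part and the multiplicative gap collapses; forcing $|S_B|=O(1)$ means Bob carries only $O(\log N)$ bits, which brings you back to Index. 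In short, the asymmetric structure --- Alice holds $\Theta(\eps^2 n^2)$ bits, Bob holds one pointer --- is essential, and Index, not Disjointness, is the right primitive. The construction itself (two opposed stars, alternating matchings $E_{i,j}$, and the weight scheme in which exactly the $(i^*,j^*)$ block is heavy) is the crux of the paper's proof, and your proposal still leaves it entirely to be invented.
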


However, we show that the situation changes considerably once we allow two passes and indeed there is now a deterministic $\tilde{O}(n)$ space algorithm that achieves better than a $2$-approximation to TSP cost.

\begin{theorem}
\label{thm: 2 pass TSP upper}
There is a deterministic two-pass $1.96$-approximation algorithm for TSP cost estimation in graph streams using $\tilde O(n)$ space.
\end{theorem}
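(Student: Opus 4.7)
The plan is to use the two passes to combine Christofides' classical upper bound on $\opttsp$ with the trivial doubling-based upper bound $2M$, and to output the minimum of the two. Let $M$ denote the MST cost, $T$ the MST itself, $O$ the set of odd-degree vertices of $T$, and $\textsf{CA}(T)$ the \emph{cover advantage}, i.e., the minimum-weight perfect matching on $O$ in the shortest-path metric. The standard inequalities I will use are $M \le \opttsp \le \min(2M,\, M + \textsf{CA}(T))$ (doubling, respectively shortcutting the Eulerian multigraph $T + \mathrm{matching}$) and $2\textsf{CA}(T) \le \opttsp$ (a TSP tour restricted to $O$ decomposes into two matchings on $O$). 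In pass~1 I would deterministically compute $T$, and hence $M$ and $O$, exactly; this is standard in semi-streaming in $\tilde O(n)$ space.

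In pass~2, the goal is to compute a value $\textsf{CA}'(T)$ with $\textsf{CA}(T) \le \textsf{CA}'(T) \le c\,\textsf{CA}(T)$ for some constant $c < 2$. The intended strategy is to initialize $\textsf{CA}'(T)$ as the (easy to compute) minimum tree-internal T-join, and then, during pass~2, use cheap stream edges to shortcut portions of this T-join: whenever an edge $(u,v)$ of $G$ is cheaper than the tree distance $d_T(u,v)$ between two odd vertices currently paired, the algorithm updates its tentative matching accordingly. The final output is $E = \min(2M,\, M + \textsf{CA}'(T))$, which is an upper bound on $\opttsp$ since both $2M$ and $M + \textsf{CA}'(T)$ are.

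The analysis proceeds via a case split parameterized by a threshold $\beta \approx 1/1.96$. If $\textsf{CA}(T) \le \beta M$, then $E \le M + c\,\textsf{CA}(T) \le (1 + c\beta) M \le (1+c\beta)\opttsp$. Otherwise $\opttsp \ge 2\textsf{CA}(T) > 2\beta M$, so $E \le 2M < \opttsp/\beta$. Balancing $1 + c\beta = 1/\beta = 1.96$ pins $\beta \approx 0.5102$ and $c \approx 1.88$, so the whole approximation task reduces to producing a deterministic pass-2 estimate of $\textsf{CA}(T)$ within factor roughly $1.88$.

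\textbf{Main obstacle.} The hard part will be the pass-2 approximation of $\textsf{CA}(T)$ within a factor strictly less than $2$, deterministically, in $\tilde O(n)$ space. The difficulty is that $\textsf{CA}(T)$ lives in the shortest-path metric, while a graph stream only exposes edges of $G$; the short distances between odd vertices may be realized by long paths that no single edge arrival reveals, and the naive tree-internal T-join can cost up to $M$, which is much worse than $\textsf{CA}(T)$. I expect the key technical ingredient to be a structural lemma showing that, for each MST edge $e$ inducing an odd cut, it suffices to maintain a bounded number of cheapest stream edges crossing that cut, and that a matching assembled from these ``cut-shortcut'' edges, combined with the tree-internal T-join on the remaining cuts, approximates $\textsf{CA}(T)$ within the required factor. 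Balancing space against approximation quality, and avoiding randomness while doing so, is where the bulk of the technical work will lie.
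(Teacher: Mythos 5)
Your overall framework---compute the MST in pass~1, then in pass~2 estimate a ``Christofides-like'' correction and take $\min(2M, M+\text{correction})$, balancing against the lower bound $\opttsp\ge 2\cdot(\text{matching cost})$---matches the paper's, and the arithmetic of the parameter balance is fine. But the plan has a genuine gap exactly where you flagged the ``main obstacle'': you need a deterministic one-pass $\tilde O(n)$-space algorithm that approximates the minimum-weight perfect matching on the odd-degree vertices to within a factor strictly below $2$. You have not supplied this, and the paper explicitly states that this step ``appears hard to do'' in the semi-streaming setting even given the MST; the entire design of the paper's Section~2 (the cover-advantage machinery) exists to sidestep it. Your ``cut-shortcut'' sketch does not close the gap: because the matching lives in the shortest-path metric and the stream exposes only edges of $G$, a single cheap metric edge $(u,v)$ may be realized only by a long path, and maintaining a bounded number of cheapest edges crossing each MST-edge cut does not obviously let you detect and account for such paths. (Also, beware a terminology clash: what you call \emph{cover advantage} is the min-weight matching cost; the paper's cover advantage of an edge set $E'$ is $\adv(E')=w(\cov(E'))-w(E')$, the excess of covered MST weight over the cost of $E'$. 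The difference is load-bearing.)

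The paper never tries to approximate the matching cost. Instead, pass~2 is a greedy accumulation: upon seeing an edge $e$, add it to a working set $E_{\temp}$ iff $w(e)\le \alpha\cdot w(\cov(e)\setminus\cov(E_{\temp}))$, i.e., $e$ covers enough \emph{fresh} MST weight relative to its own cost. Let $E^*$ be the final set; the output depends only on whether $w(\cov(E^*))\ge\beta\cdot\mst$. The analysis then cuts both ways from this single quantity. If $w(\cov(E^*))\ge\beta\cdot\mst$, a random half of $E^*$ together with appropriately toggled copies of MST edges gives an Eulerian multigraph of expected weight $\le (2-\tfrac{(1-\alpha)\beta}{2})\mst$, so $\tsp$ is bounded \emph{above}. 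If $w(\cov(E^*))<\beta\cdot\mst$, then the greedy rule was never triggered, which means every graph edge has marginal cover advantage $< w(e)/\alpha$; since each matching edge $e\in M$ is realized by a shortest path $Q_e$ in $G$ whose edges all appear in the stream, summing the marginal-contribution bound along each $Q_e$ and invoking $\cov(M)=E(T)$ gives $w(M)\ge\alpha(1-\beta)\mst$, so $\tsp\ge 2\alpha(1-\beta)\mst$ is bounded \emph{below}. Setting $\alpha\approx 0.715$ and $\beta\approx 0.285$ yields the $1.96$ ratio. The key insight you are missing is that one never needs to estimate the matching cost; it suffices to compute $w(\cov(E^*))$ exactly (which the greedy does) and to relate that one quantity to both an upper and a lower bound on $\tsp$.
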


We note that an interesting remaining question here is if a similar result is achievable using one pass when the input is a metric stream. As a step towards understanding the power of metric streams in semi-streaming regime, we show that any one-pass algorithm that computes TSP cost exactly requires $\Omega(n^2)$ space.  
Table~\ref{tab: TSP_results} summarizes our results for TSP cost estimation in the regime of semi-streaming space.

\setlength{\tabcolsep}{0.5em} 
{\renewcommand{\arraystretch}{1.5}

\begin{table}[h]
	\centering
	\begin{tabular}{|c|c|c|c|l|l|}
		\hline
		\multirow{2}{*}{Stream Type}  & \multicolumn{5}{c|}{TSP estimation}                                                      \\ \cline{2-6} 
		& \# of passes & Approximation ratio & \multicolumn{3}{c|}{Upper or Lower bounds}                              
		\\ \hline
		\multirow{2}{*}{Metric Stream} & 1                & $1$                 & \multicolumn{3}{c|}{$\Omega(n^2)$*}              \\ \cline{2-6} 
		& 1                & $2-\eps$                 & \multicolumn{3}{c|}{ Open }              
		\\             \hline
		\multirow{3}{*}{Graph Stream} & 1                & $2$                 & \multicolumn{3}{c|}{$\tilde \Theta(n)$ (trivial)}              \\ \cline{2-6} 
		& 1                & $2-\eps$                 & \multicolumn{3}{c|}{$\tilde \Omega(\eps^2n^2)$ (\Cref{thm: 1 pass TSP lower})}              \\ \cline{2-6} 
		& 2                & $1.96$              & \multicolumn{3}{c|}{$\tilde O(n)$ (\Cref{thm: 2 pass TSP upper})}              \\ \hline
	\end{tabular}
	\caption{Summary of results for TSP-cost estimation streaming algorithms. The statements and proofs for entries marked by $(*)$ is deferred to the full version.}\label{tab: TSP_results}
\end{table}

The second part of our paper focuses on the design of sublinear query complexity algorithms for TSP cost estimation. The related problem of estimating the MST cost using sublinear queries was first studied in the graph adjacency-list model by Chazelle, Rubinfeld, and Trevisan~\cite{ChazelleRT05}. The authors gave an $\tilde{O}(d W /\eps^2)$-time algorithm to estimate the MST cost to within a factor of $(1+\eps)$ in a graph where the average degree is $d$, and all edge costs are integers in $\set{1,\ldots,W}$. For certain parameter regimes this gives a sublinear time algorithm for estimating the MST cost, but in general, this run-time need not be sublinear. In fact, it is not difficult to show that in general, even checking if a graph is connected requires $\Omega(n^2)$ queries in the graph adjacency-list model, and hence no finite approximation to MST cost can be achieved in $o(n^2)$ queries. However, the situation changes if one restricts attention to the {\em metric} MST problem where the edge weights satisfy the triangle inequality, and the algorithm is given access to an $n \times n$ matrix $w$ specifying pairwise distances between vertices. Czumaj and Sohler~\cite{czumaj2009estimating} showed that for any $\eps > 0$, there exists an $\tilde{O}(n/\eps^{O(1)})$ query algorithm that returns a $(1 + \eps)$-approximate estimate of the metric MST cost. This result immediately implies an $\tilde{O}(n/\eps^{O(1)})$ time algorithm to estimate the TSP cost to within a factor of $(2 + \eps)$ for any $\eps > 0$. In sharp contrast to this result, so far no $o(n^2)$ query algorithms are known to approximate metric TSP cost to a factor that is strictly better than $2$. In this work, we consider sublinear query algorithms for TSP cost when the algorithm is given query access to the $n \times n$ distance matrix $w$. {\em We will assume throughout the paper that all entries of $w$ are positive integers.}

For the special case of graphic TSP, where the metric corresponds to shortest path distances of some underlying connected unweighted graph, the algorithm of Chen, Kannan, and Khanna~\cite{chen2020sublinear} combined with the recent result of Behnezhad~\cite{behnezhad2021time} (which builds on the work of Yoshida et al.~\cite{yoshida2012improved} and Onak et al.~\cite{onak2012near}),
gives an $\tilde O(n)$-query $(27/14)$-approximation algorithm for estimating graphic TSP cost. The authors in~\cite{chen2020sublinear} also show that there exists an $\eps_0 > 0$, such that any algorithm that estimates the cost of graphic TSP (or even $(1,2)$-TSP) to within a $(1 + \eps_0)$-factor, necessarily requires $\Omega(n^2)$ queries. 
Later on, Behnezhad, Roghani, Rubinstein, and Saberi \cite{behnezhad2023sublinear} improved the graphic TSP result by giving an $\tilde O(n)$-query $1.83$-approximation, and they also gave an $\tilde O(n)$-query $(1.5+\eps)$-approximation algorithm for $(1,2)$-TSP.
This leaves open the following question: Is there an $o(n^2)$ query algorithm to estimate TSP cost to a factor strictly better than $2$ when the metric is {\em arbitrary}?

We make progress on this question by designing an $\tilde{O}(n^{1.5})$-query algorithm that achieves a strictly better than $2$-approximation when either the metric is known to contain a spanning tree supported on weight-$1$ edges or the algorithm is given access to a minimum spanning  tree of the graph. 
Prior to our work, such results were only known for the special cases of graphic TSP and $(1,2)$-TSP.

\begin{theorem}
\label{thm: main G_1 connected}
There is a randomized algorithm, that, given access to an $n$-point metric $w$ with the promise that $w$ contains a minimum spanning tree supported only on weight-$1$ edges, estimates with high probability the metric TSP cost to within a factor of $(2-\eps_0)$ for some universal constant $\eps_0 > 0$, by performing $\tilde O(n^{1.5})$ queries to $w$.
\end{theorem}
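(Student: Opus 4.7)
By the promise, $\mst(w) = n-1$ and any Hamiltonian cycle in the metric uses $n$ edges of weight $\geq 1$, so $\opt \in [n,\, 2(n-1)]$; simply outputting $2(n-1)$ gives only a $(2-2/n)$-approximation, forcing us to extract more structure. Let $G_1$ be the graph of weight-$1$ edges, which is connected by the promise. For any spanning tree $T$ of $G_1$ with odd-degree vertex set $O_T$, define $\cov(T)$ as the minimum-weight perfect matching on $O_T$ in the metric $w$. By Christofides' analysis,
\begin{equation*}
\opt \;\leq\; (n-1) + \cov(T) \;\leq\; \tfrac{3}{2}\,\opt.
\end{equation*}
Consequently, any additive $\pm \eps n$ estimate of $\cov(T)$ translates (using $\opt \geq n$) into a multiplicative $(1.5+O(\eps))$-approximation of $\opt$, which is well within the desired $(2-\eps_0)$ factor.

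\textbf{Algorithmic plan.} I would sample a set $S$ of $\tilde O(\sqrt{n})$ ``pivot'' vertices and query the full rows of $w$ at each pivot, spending $\tilde O(n^{1.5})$ queries in total. From these queries we identify all weight-$1$ edges incident to $S$, and using the connectivity of $G_1$ we define a structured spanning tree $T$ of $G_1$ (for example, a BFS-like tree grown from the pivots). Next I would invoke a Behnezhad-style sublinear maximum matching estimator on $G_1$ restricted to $O_T$ to estimate this matching size within additive error $\eps n$, simulating adjacency queries to $G_1$ by single distance queries to $w$. Finally, a short calculation combined with additional sampling to estimate the average $w$-weight needed to complete a perfect matching on $O_T$ yields an additive estimate $\hat c$ of $\cov(T)$, and we output $(n-1) + \hat c$.

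\textbf{Main obstacle.} The central difficulty is controlling two intertwined sources of error: (i) the implicit spanning tree $T$ produced by the pivot-based construction determines $O_T$, so a poor tree choice could inflate $|O_T|$ up to nearly $n$; and (ii) $\cov(T)$ comprises both a max-matching term inside $G_1[O_T]$ and a contribution from residual odd vertices that must be matched through heavier $w$-edges whose weights we cannot query exhaustively. A dichotomy seems necessary: if $G_1[O_T]$ admits a near-perfect matching then $\opt$ is close to $n$ and the pivot-based matching estimate alone suffices; if it does not, then $\opt$ itself must be substantially above $n$ (since the residual odd vertices force longer tour edges), so that even a cruder estimate of the residual matching cost still lies within $(2-\eps_0)\cdot\opt$. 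Establishing this dichotomy rigorously, while keeping the query budget at $\tilde O(n^{1.5})$ and handling the effect of metric shortcuts (pairs with $w(u,v) \ll d_{G_1}(u,v)$) that the pivot sampling only partially reveals, is the principal technical hurdle.
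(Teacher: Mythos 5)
First, your plan assumes you can materialize a spanning tree $T$ of $G_1$ (a ``BFS-like tree grown from the pivots'') and then read off its odd-degree vertex set $O_T$. With $\tilde O(n^{1.5})$ queries you learn only the rows of $\tilde O(\sqrt n)$ pivots; the BFS frontier will almost immediately reach vertices whose $G_1$-adjacency you know nothing about, and each further expansion step can cost up to $n$ queries. In the worst case constructing any explicit spanning tree of $G_1$ requires $\Omega(n^2)$ queries, so $O_T$ is simply not available to the algorithm. The paper sidesteps this precisely by never building a global tree: the $\local$ and $\pbfs$ subroutines only reconstruct $O(\sqrt n)$-size local neighborhoods, and the dichotomy is organized around whether those local views are path-like or contain $2$-edge-connected structure, not around a Christofides parity argument.

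Second and more decisively, your target accuracy is unattainable in this query regime. You claim an additive $\pm\eps n$ estimate of $\cov(T)$, which would yield a $(1.5 + O(\eps))$-approximation of $\opt$. But the paper (citing~\cite{chen2020sublinear}) records that there is a universal $\eps_0 > 0$ such that estimating graphic TSP cost --- and even $(1,2)$-TSP cost, which is a special case of the weight-$1$-MST promise --- within a $(1+\eps_0)$ factor requires $\Omega(n^2)$ queries. So no $\tilde O(n^{1.5})$-query algorithm can achieve $1.5 + o(1)$, and any correct algorithm in this budget must accept a ratio bounded away from $1.5$. Concretely, the Behnezhad-style estimator you invoke only gives a $(2, \eps n)$-approximation to matching size (a factor-$2$ with additive slack), which already forfeits the sharp Christofides bound; and the residual odd vertices must be matched by heavier metric edges whose cost is exactly the quantity the paper says cannot be approximated well. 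Your ``main obstacle'' paragraph correctly senses trouble here, but the dichotomy you sketch (``if $G_1[O_T]$ has a near-perfect matching then...'') is circular, since it presupposes knowledge of $O_T$, and it does not explain how the residual cost is controlled within the query budget.

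For comparison, the paper replaces the Christofides matching quantity $\cov(T)$ by the \emph{cover advantage}: the gain from shortcutting Euler-tour segments of the MST using non-tree edges. This is a one-sided quantity (you only need to certify either that it is large, giving $\tsp \le (2-\eps)\mst$, or that it is small, giving $\tsp \ge (1+\eps)\mst$), and it can be estimated without ever reconstructing the tree globally. That one-sidedness is exactly what makes a $(2-\eps_0)$-approximation feasible at $\tilde O(n^{1.5})$ queries while a $(1.5+\eps)$-approximation is not.
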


We note that the setting of Theorem~\ref{thm: main G_1 connected} captures as a special case graphic TSP but is considerably more general, and hence difficult.

\begin{theorem}
\label{thm: main with MST}
There is a randomized algorithm, that, given access to an $n$-point metric $w$ and an arbitrary minimum spanning tree of the complete graph with edge weights given by $w$, estimates with high probability the metric TSP cost to within a factor of $(2-\eps_0)$ for some universal constant $\eps_0>0$, by performing $\tilde O(n^{1.5})$ queries to $w$.
\end{theorem}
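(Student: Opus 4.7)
The plan is to follow the same framework developed for Theorem~\ref{thm: main G_1 connected}: given the MST $T$, the TSP cost is sandwiched between $\mst$ and $\mst + \cov(T)$, where $\cov(T)$ denotes the minimum cost in the metric $w$ of a perfect matching on the odd-degree vertices of $T$. Since $T$ is given explicitly, $\mst$ is known exactly, and the task reduces to estimating $\cov(T)$ to within a suitable constant multiplicative factor using $\tilde O(n^{1.5})$ queries to $w$, and then outputting an appropriate combination of $\mst$ and the estimator. The $(2-\eps_0)$ guarantee follows from the standard tension between the Christofides bound $\tsp \leq \mst + \cov(T)$ and the shortcutting bound $\cov(T) \leq \tsp/2$: when $\cov(T)$ is small compared to $\mst$ the first bound already beats $2$, while when $\cov(T)$ is comparable to or larger than $\mst$ the second bound improves the denominator, so any constant-factor-accurate estimator for $\cov(T)$ can be converted into a $(2-\eps_0)$-approximation of $\tsp$ for a universal constant $\eps_0>0$.

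The main work is to estimate $\cov(T)$ with only $\tilde O(n^{1.5})$ queries when the edges of $T$ may have arbitrary positive integer weights. I would reduce this to the (essentially) unit-weight setting handled by Theorem~\ref{thm: main G_1 connected}, scale by scale. Concretely, partition the edges of $T$ into $O(\log n)$ dyadic weight buckets; within each bucket, the induced subforest plays the role of the weight-$1$ backbone of Theorem~\ref{thm: main G_1 connected}, rescaled by the bucket weight. For each bucket I would run the cover-advantage estimator of Theorem~\ref{thm: main G_1 connected}, which with a budget of $\tilde O(\sqrt n)$ landmark vertices per bucket (each of which is queried against all other vertices in the bucket) certifies via a balls-into-bins argument whether each odd-degree vertex admits a cheap partner at its own scale. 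Summing the per-bucket estimates produces an overall estimate of $\cov(T)$, with total query cost $\tilde O(n^{1.5})$ by the geometric nature of the buckets.

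The main obstacle I anticipate is handling matchings that cross buckets: a near-optimal matching on the odd-degree vertices of $T$ may pair a vertex incident to heavy tree edges with a vertex incident to much lighter tree edges, and such an edge is not visible to any single bucket's estimator. To address this I would prove a structural lemma showing that, at the cost of only a constant factor, a near-optimal matching on the odd-degree vertices can be chosen to be \emph{hierarchical} with respect to the dyadic buckets, so that cross-bucket matching edges can be rerouted along $T$ and replaced by intra-bucket edges without inflating the total cost by more than a constant factor. This rerouting argument is the most delicate step and is directly analogous to the edge-redirecting argument that underlies the cover-advantage analysis in the proof of Theorem~\ref{thm: main G_1 connected}; the key is to charge each detour to the tree edges it traverses, which sum to at most $\mst$ and thus do not harm the final ratio once we pass to $\mst + \widehat{\cov}(T)$.

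Finally, a union bound over the $O(\log n)$ buckets together with standard Chernoff concentration on the per-bucket sampling yields a $(1\pm\eta)$-approximation of $\cov(T)$ with high probability, for any constant $\eta$ of our choosing. Choosing $\eta$ small enough (in terms of the slack in the Christofides/shortcutting tradeoff) translates into a universal constant improvement $\eps_0 > 0$ over the $2$-approximation, completing the proof.
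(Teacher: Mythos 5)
Your proposal takes a genuinely different route from the paper, and I think it has a substantive gap that the paper's own introduction already anticipates. The paper's overview of techniques states explicitly that ``obtaining a good approximation to the minimum weight of such a perfect matching appears hard to do \ldots for a query algorithm that performs $o(n^2)$ queries, even if we are given an MST at the start,'' and the entire cover-advantage machinery of Section~\ref{subsec: cover adv} is introduced precisely to \emph{avoid} estimating the Christofides matching cost $\cov(T)$. Your plan, by contrast, makes $\cov(T)$ the central quantity to estimate, which is the very thing the authors decided to route around. The paper's actual proof of Theorem~\ref{thm: main with MST} never touches the odd-degree matching: it roots $T$, peels off $\ell$-light subtrees to obtain a ``top part'' $T'$ with $O(\sqrt n)$ special vertices (so its \emph{special cover advantage} $\adv^*(T')$ can be computed exactly via Lemma~\ref{lem: max special cover advantage} in $\tilde O(n^{1.5})$ queries), estimates the local cover advantage of $O(\sqrt n)$-sized segments via sampling (Lemma~\ref{lem: estimate adv of a set of segments}), handles the light subtrees via the skeleton/special-walk reorganization (Lemma~\ref{lem: independent path reorganization}, which reduces to a \emph{matching-size} estimation in an auxiliary graph $L$ rather than a matching-\emph{cost} estimation in $w$), and finally uses Lemma~\ref{lem: cover_advantage lower bound} to certify a lower bound on $\tsp$ when all the cover-advantage estimates come back small. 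None of this is a scale-decomposition of $T$.

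The concrete gap in your argument is the combination of the ``hierarchical rerouting lemma'' and the claimed $(1\pm\eta)$-estimate of $\cov(T)$. First, the dyadic bucketing of tree \emph{edges} does not induce a clean bucketing of the odd-degree \emph{vertices}: a vertex odd in $T$ is typically incident to edges of several weight classes, and the subforest induced by a single bucket has a completely different set of odd-degree vertices than $T$. You cannot simply ``run the cover-advantage estimator of Theorem~\ref{thm: main G_1 connected}'' per bucket, because that algorithm is a black-box TSP-cost estimator on a metric whose MST is unit-weight; it does not output an intra-bucket matching cost that can be summed, and its internal structure (BFS exploration of light subgraphs, e-block analysis, proper-tour computation) relies on the unit-weight MST being a spanning tree of the whole vertex set, not a subforest of one weight class. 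Second, the only matching primitive available in the query model at this budget (Lemma~\ref{thm: n^{1.5}-query-2-approx-of-matching-size} / Theorem~\ref{thm: unweighted matching estimation}) gives a factor-$2$ multiplicative plus additive-$\eps n$ approximation to matching \emph{size}, not a $(1\pm\eta)$ approximation to matching \emph{weight}; upgrading from the former to the latter is a nontrivial task that you would need to argue, and your ``balls-into-bins'' and ``landmark'' sketch does not explain how. Since the entire quantitative slack in the $(2-\eps_0)$ target is traded off against the accuracy of the $\cov(T)$ estimate, leaving the estimator unspecified and unanalyzed is not a cosmetic omission --- it is the heart of the claim, and as written the proposal does not establish it.
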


In what follows, we give an overview of the techniques underlying our results.

\subsection{Technical Overview}
			
\subsubsection{Overview of Algorithmic Techniques}

Our streaming algorithm for MST estimation (\Cref{thm: 1 pass alpha MST upper}) utilizes a rather natural idea. We sample $O(n/\alpha)$ vertices and maintain a MST $T'$ over them. For the remaining vertices, we maintain an estimate of the cost of connecting them to the nearest vertex in $T'$. We show that these estimates can be suitably combined to obtain an $\alpha$-approximation of MST cost.
In this subsection we focus on providing a high-level overview of the algorithms for TSP estimation.

It is well-known that $\mst\le \tsp\le 2\cdot \mst$ holds for any graph/metric, since we can construct a TSP-tour by doubling all edges of a MST (and then shortcut the obtained walk into a tour).
Since the MST cost of a graph/metric can be exactly computed by a one-pass $\tilde O(n)$ space algorithm (the greedy algorithm) in the streaming model, and can be approximated to within a factor of $(1+\eps)$ by performing $\tilde O(n)$ queries in the query model \cite{czumaj2009estimating}, to obtain a factor $(2-\eps)$ approximation for $\tsp$, it suffices to establish either $\tsp\ge (1+\eps)\cdot\mst$ or $\tsp\le (2-\eps)\cdot\mst$ holds.
From the approach due to \cite{christofides1976worst}, the minimum weight of a perfect matching on the set of all odd-degree vertices in an MST can immediately gives us the answer. However, obtaining a good approximation to the minimum weight of such a perfect matching appears hard to do, both for semi-streaming algorithms and for a query algorithm that performs $o(n^2)$ queries, even if we are given an MST at the start. To get around this issue, we consider an alternative measure, called the \emph{cover advantage}, that turns out to be more tractable in both models.

\paragraph{Cover Advantage.} Let $T$ be a MST of the input graph/metric. For an edge $f\in E(T)$ and an edge $e\notin E(T)$, we say that $f$ is \emph{covered} by $e$, iff $f$ belongs to the unique tree-path in $T$ connecting the endpoints of $e$. 
For a set $E'$ of edges, we denote by $\cov(E', T)$ the set of all edges in $E(T)$ that are covered by at least one edge in $E'$. The \emph{cover advantage} of $E'$, denoted by $\adv(E')$, is defined to be the total weight of all edges in $\cov(E', T)$ minus the total weight of all edges in $E'$. 
Intuitively, if a single-edge set $\set{e}$ where $e=(u,v)$ has cover advantage $c$, then we can construct a tour by starting from some Euler-tour of $T$ and replacing the segment corresponding to the tree path of $T$ connecting $u$ to $v$ by the single edge $e$, and thereby ``saving a cost of $c$'' from $2\cdot \mst$, the cost of the Euler-tour obtained by doubling MST edges.
Generalizing this idea, we show that if there exists a set $E'$ with cover advantage bounded away from $0$ (at least $\eps\cdot \mst$), then $\tsp\le (2-\eps/2)\cdot \mst$. Conversely, if there does not exist any set $E'$ with cover advantage close to $\mst/2$ (say at least $(1/2-\eps/2)\cdot \mst$), then $\tsp\ge (1+\eps)\cdot \mst$.
In fact, we show that the same hold for a more restricted notion called \emph{special cover advantage}, which is defined to be the maximum cover advantage of any subset $E'$ of edges that have at least one endpoint being a special vertex in $T$ (a vertex $v$ is called a \emph{special vertex} of $T$ iff $\deg_T(v)\ne 2$).
Therefore, to obtain a better-than-factor-2 approximation for $\tsp$, it suffices to obtain a constant-factor approximation for the maximum cover advantage or the maximum special cover advantage.

\paragraph{Estimating maximum cover advantage in the streaming setting.}
We construct a one-pass streaming algorithm $O(1)$-estimating the maximum cover advantage, which leads to a two-pass algorithm in \Cref{thm: 2 pass TSP upper} where in the first pass we only compute an MST of the input graph. We store edges with substantial cover advantage with respect to the MST in a greedy manner. Since all edges appear in the stream, it can be shown that, if we end up not discovering a large cover advantage, then the real maximum cover advantage is indeed small (bounded away from $\mst/2$).

\paragraph{Estimating maximum cover advantage in the query model.} The task of obtaining a constant-factor approximation to maximum cover advantage turns out to be distinctly more challenging in the query model, even if we are given explicit access to an MST of the metric. The design of sublinear query algorithms for estimating cover advantage is indeed our central algorithmic contribution.
We design an $\tilde O(n^{1.5})$-query algorithm for estimating the maximum cover advantage when either an MST is explicitly given or we can assume that the MST is supported on weight-$1$ edges. Note that the latter case generalizes graphic TSP studied in \cite{chen2020sublinear}.

The algorithms for these two cases share several similarities. To illustrate the ideas behind them, it might be instructive to consider the following two examples. In the first example, we are given an MST $T$ on $V$ that has at most $O(\sqrt{n})$ leaves. 
We can simply query the distances between all pairs $u,v\in V$ where $u$ is a special vertex of $T$, and then use the obtained information to compute the maximum special cover advantage, which takes $\tilde O(n^{1.5})$ queries since there can be at most $O(\sqrt n)$ special vertices (or in fact we can even query the distances between all pairs of special vertices in $T$ and compute the minimum weight perfect matching on them).
In the second example, we are given an MST $T$ on $V$ which is a star graph centered at a vertex $r\in V$, and all edges have weight $1$. Note that, since all edge weights are integers, in this case the distances between every pair of vertices in $V\setminus \set{r}$ is either $1$ or $2$, and it is not hard to see that the maximum cover advantage is exactly the size of a maximum weight-$1$ matching on $V\setminus \set{r}$.
Therefore, we can adapt the algorithm from \cite{behnezhad2021time} to obtain an $O(1)$-approximation of the maximum weight-$1$ matching size, using $\tilde{O}(n)$ queries.
Note that, in this case we obtain an estimate of the maximum cover advantage without computing a set of edges that achieves it.

Taking a step back, we observe that, in the first example where the number of special vertices is small, the cover advantage can be computed in a local and exhaustive manner, while in the second example where the number of special vertices is large, the cover advantage has to be estimated in a global and ``superficial'' manner. Intuitively, our query algorithms interpolate between these two approaches in an organic manner.

We now provide more details of our query algorithms in the two special cases. We first consider the special case where we are given the structure of an MST.

\paragraph{When MST is given:} We root the given MST $T$ at an arbitrary vertex. For each vertex $v\in V$, we say that it is \emph{light} iff the subtree of $T$ rooted at $v$, denoted by $T_v$, contains at most $\sqrt n$ vertices, and we call $T_v$ a \emph{light subtree} of $T$. On the one hand, the cover advantages that are local at some light subtree (achieved by edges with both endpoints in the same light subtree) can be efficiently estimated in an exhaustive manner. On the other hand, if we peel off all light subtrees from $T$, then the remaining subtree, that we denote by $T'$, contains at most $\sqrt{n}$ leaves, and therefore the special cover advantage achieved by any set of edges with at least one endpoint being a special vertex of $T'$ can also be computed in an exhaustive manner. The only type of cover advantages that is not yet computed are the one achieved by edges with endpoints in different light subtrees. We then observe that the light subtrees hanged at $T$ are similar to the edges of a star graph hanged at its root, and eventually manage to adapt the algorithm from \cite{behnezhad2021time} in a delicate way to estimate the cover advantage by edges of this type in a global manner.

\paragraph{When MST consists of only weight-$1$ edges:} 
This special case appears trickier since we do not know the structure of an MST at the start, and there may not even be a unique MST.
To circumvent this, we need to utilize the following technical result of \cite{chen2020sublinear}:
Let $G_1$ be the graph on $V$ induced by all weight-$1$ edges in the given metric, then if $G_1$ contains a size-$s$ matching consisting of only edges in $2$-edge-connected components of $G_1$, then $\tsp\le 2n-\Omega(s)$.
This result allows us to construct a local procedure that explores some neighborhood of the unknown graph $G_1$ up to a certain size, such that in the end we either reconstruct a size-$\sqrt n$ subgraph of the (locally) unique MST, or certify that a set of $\Omega(\sqrt n)$ vertices belong to some $2$-edge-connected components of $G_1$,
which will be later collected to estimate the maximum weight-$1$ matching size.

We then use this local procedure on a set of vertices randomly sampled from $V$. Let $T$ be an MST. Intuitively, if the total size of light subtrees of $T$ is non-negligible, then with high probability some of the sampled vertices will lie in light subtrees of $T$, and we can obtain an estimate of the local cover advantage within subtrees. If the total size of light subtrees is negligible, then $T'$, the subtree obtained from $T$ by peeling off all light subtrees, has roughly the same size as $T$, which means that $T$ is close to the first instructive example mentioned before -- a tree with only $O(\sqrt{n})$ special vertices. Then we can apply the local procedure to $\Omega(\sqrt{n})$ sampled vertices, to almost reconstruct the whole tree $T$, and the rest of the algorithm is similar to the algorithm in the first special case.

\subsubsection{Overview of Lower Bound Techniques}


As our algorithmic results illustrate that metric streams are more powerful than graph streams, it is perhaps not surprising that proving space lower bounds for metrics streams turns out to be a more challenging task that requires new tools.
To illustrate this point, it might be instructive to consider the following simplified versions of metric and graph streams. Let $G$ be a graph.
\begin{itemize}
\item Unweighted Graph Stream: a sequence that contains all edges of $E(G)$, and the same edge may appear more than once in the stream;
\item Unweighted Metric Stream: a sequence that contains, for each pair $u,v$ of $V(G)$, a symbol $f(u,v)$ indicating whether or not the edge $(u,v)$ belongs to $E(G)$.
\end{itemize}
Note that in unweighted metric streams, the non-edge information between pairs of vertices is also explicitly given (as the edge information), as opposed to being given implicitly in the unweighted graph stream. This seemingly unimportant distinction, unexpectedly, makes proving lower bounds for several problems much harder in unweighted metric streams than unweighted graph streams. 

For example, consider the problem of deciding whether the input graph is a clique.
On the one hand, to prove a space lower bound for streaming algorithms in unweighted graph streams, we consider the following two-player one-way communication game: Alice is given a graph $G_A$ and Bob is given a graph $G_B$ on a common vertex set $V$, and Alice and Bob want to decide if $G_A\cup G_B$ is the complete graph on $V$. It is easy to show that this communication game has back-and-forth communication complexity $\Omega(n^2)$. In fact, Alice's input graph $G_A$ can be viewed as a vector $x^A\in \set{0,1}^{\binom{V}{2}}$ and Bob's input graph $G_B$ can be viewed as a vector $x^B\in \set{0,1}^{\binom{V}{2}}$, where the coordinate $x^A_{(u,u')}$ indexed by the pair $u,u'$ of vertices in $V$ indicates whether or not the edge $(u, u')$ appears in graph $G_A$, and similarly $x^B_{(u,u')}$ indicates whether or not the edge $(u, u')$ appears graph $G_B$. It is then easy to see that the two players need to detect whether or not the bitwise-OR of vectors $x^A$ and $x^B$ is the all-one vector, which requires $\Omega(n^2)$-bits information exchange even in the back-and-forth communication model.
On the other hand, in the corresponding two-player one-way communication game for unweighted metric streams, Alice and Bob are each given a set of edge/non-edge information, with the promise that the edge/non-edge information between each pair of vertices appears in at least one of the player's input. There is a one-bit protocol: Alice simply sends to Bob a signal indicating whether or not in her input there is non-edge information between any pair of vertices, and Bob outputs ``Not a Clique'' iff either he sees Alice's ``non-edge" signal or he sees a non-edge information in his input.

The distinction that all non-edge information is explicitly given in the unweighted stream seems to fail all reductions from standard problems (like Disjointness and Index) to prove lower bounds. Therefore, in the lower bound proofs of \Cref{thm: 1 pass alpha MST lower} and \Cref{thm: p pass alpha MST lower}, we identify new ``primitive'' graph-theoretic problems, prove communication lower bounds for them, and then reduce them to MST-estimation problems. Here we briefly provide some ideas for the proof of \Cref{thm: p pass alpha MST lower}.

We consider the special type of metrics, in which the distance between every pair of vertices is either $1$ or a large enough real number. Intuitively, the problem of estimating the MST cost is equivalent to the problem of estimating the number of connected components of the graph induced by all weight-$1$ edges, which is essentially a graph-theoretic problem in unweighted metric streams.

As a first step, we consider the following problem: given an unweighted metric stream, decide whether the underlying graph is a perfect matching or a perfect matching minus one edge.
Unlike the previous clique-identification problem, we show that the corresponding two-player communication game for this problem has  communication complexity $\Omega(n)$ in the back and forth communication model, even if the complete edge/non-edge information is split between Alice and Bob. The proof is by analyzing the information complexity of any protocol for the problem,
We construct several similar input combinations for Alice and Bob, among which some cross-combination lead to different answers, and then lower bound the mutual information between the protocol transcript and the players' inputs.

However, this perfect matching vs perfect matching minus one edge problem is not sufficient for our purpose, since a perfect matching graph on $n$ vertices has $n/2$ connected components, while a perfect matching minus one edge graph on $n$ vertices has $n/2-1$ connected components, and the ratio between $n/2$ and $n/2-1$ are too small to provide a space lower bound for $\alpha$-approximation of the number of connected components.
To fix this issue, we next consider a generalization of this problem, called the \emph{Clique or Independent Set} problem ($\coi_{a,b}$) parametrized by two integers $a,b$. In this problem, we are required to decide whether the input graph is the disjoint union of $b$ cliques of size $a$ each (Yes case) or it is a disjoint union of $(b-1)$ cliques of size $a$ each and an independent set of size $a$ (No case). Note that if $a=2$ and $b=n/2$ then this problem is exactly the perfect matching vs perfect matching minus one edge problem. Now if we let $a\gg b$, then the ratio between numbers of connected components in Yes case and in No case is $(a+b-1)/b=\Omega(a/b)$, which is enough for giving a space lower bound for $o(a/b)$-approximation streaming algorithms for MST estimation.
For the proof of the communication lower bound of problem $\coi_{a,b}$, we first consider the special case $\coi_{a,2}$ and show that the communication complexity is $\tilde\Omega(1)$ via a Hellinger distance analysis on transcript distributions on certain input combinations, and then use a direct sum type argument to show that the communication complexity of $\coi_{a,b}$ is $\tilde\Omega(b)$.
Both steps use techniques similar to the ones used in the proof of communication lower bound for the perfect matching vs perfect matching minus one edge problem.
Now for a given approximation ratio $\alpha>1$, setting $a=\Theta(\sqrt{\alpha n})$ and $b=\Theta(\sqrt{n/\alpha})$ yields the desired communication lower bound, which then implies the space lower bounds for streaming algorithms.

\subsection{Organization}

We first introduce the notion of cover advantage in \Cref{subsec: cover adv}. The remainder of the paper is divided into three parts. In the first part, we present our results on streaming algorithms and lower bounds for estimating MST and TSP cost (the proofs of \Cref{thm: 1 pass alpha MST upper} to \ref{thm: 2 pass TSP upper}).
In the second part, we present the algorithm in the query model where we assume that MST is supported on weight-$1$ edges (the proof of \Cref{thm: main G_1 connected}). 
In the third part, we present the algorithm in the query model where we are additionally given access to an MST (the proof of \Cref{thm: main with MST}).

\section{Cover Advantage}
\label{subsec: cover adv}

In this section, we introduce the notion of cover advantage, which is a key notion that 
captures the gap between the MST cost and the TSP cost. Our TSP cost estimation algorithms in both streaming and query settings will crucially utilize this notion.

At a high-level, the TSP estimation algorithms in this paper are based on converting an MST $T$ of the input graph/metric into a spanning Eulerian subgraph. A trivial approach is to simply double all edges in $T$ obtaining a $2$-approximation. A more clever approach due to Christofides \cite{christofides1976worst} instead makes $T$ Eulerian by adding a minimum weight perfect matching on odd-degree vertices in $T$, obtaining a $3/2$-appproximation. However, computing a good approximation to the minimum weight perfect matching on a set of vertices appears hard to do either in the semi-streaming setting or with sublinear number of queries. We instead identify the more tractable notion, called the cover advantage, that can be efficiently implemented in the semi-streaming and query model.

We say that an edge $f$ of tree $T$ is \emph{covered} by an edge $e$ that may or may not belong to $T$, iff $f\in E(P^T_{e})$; and we say that $f$ is covered by a set $E'$ of edges, iff it is covered by some edge of $E'$. We denote by $\cov(e)$ the set of all edges of $T$ that are covered by $e$, and define $\cov(E')=\bigcup_{e\in E'}\cov(e)$.

Let $T'$ be a subtree of $T$. For each edge $e\notin E(T)$, we define $\cov(e,T')=\cov(e)\cap E(T')$. Similarly, for a set $E'$ of edges, we define $\cov(E',T')=\bigcup_{e\in E'}\cov(e,T')$.
Clearly, $\cov(E',T')=\cov(E')\cap E(T')$.

We define the \emph{cover advantage} of a set $E'$ of edges on a subtree $T'$ of $T$, denoted by $\adv(E',T')$, to be $\adv(E',T')=w(\cov(E',T'))-w(E')$. 
The \emph{optimal cover advantage} of a subtree $T'$, denoted by $\adv(T')$, is defined to be the maximum cover advantage of any set $E'$ of edges that have at least one endpoint lying in $V(T')$ on $T'$.
The \emph{optimal special cover advantage} of a subtree $T'$, denoted by $\adv^{*}(T')$, is defined to be the maximum cover advantage of any set $E'$ of edges that have at least one endpoint being a special vertex of $T'$ (a vertex $v$ is a special vertex of $T'$ iff $\deg_{T'}(v)\ne 2$).
Clearly, by definition, $\adv(T')\ge \adv^*(T')\ge 0$.


The next two lemmas show that the optimal cover advantage and the optimal special cover advantage of any subtree can be computed using a small number of queries.

\begin{lemma}
\label{lem: max advantage}
There is an algorithm, that given a subtree $T'$ of $T$, computes the optimal cover advantage of $T'$ as well as a set $E'$ of edges achieving the optimal cover advantage of $T'$, by performing at most $O(n\cdot |V(T')|)$ queries.
\end{lemma}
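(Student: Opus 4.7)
The plan is to make all potentially relevant queries up front and then solve the resulting combinatorial optimization offline, without issuing any further queries. The key observation is that every edge $e$ that can belong to an optimizing $E'$ must have at least one endpoint in $V(T')$, and the number of such edges is at most $|V(T')|\cdot(n-1) = O(n\cdot|V(T')|)$. I would therefore begin by querying $w(u,v)$ for every pair $(u,v)$ with $u \in V(T')$. In particular, since $E(T') \subseteq \{(u,v) : u,v \in V(T')\}$, this batch already reveals the weights of all edges of the subtree $T'$ itself, so the weights of both sides of the definition $\adv(E',T') = w(\cov(E',T'))-w(E')$ become available.

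After this batch the algorithm has in memory the weight of every edge that could appear in a candidate $E'$, as well as the weight of every edge of $T'$. The structure of the host tree $T$ is assumed known from the context invoking the lemma, so for every candidate edge $e=(u,v)$ (with $u \in V(T')$, say) the sub-path $\cov(e,T') = P^T_e \cap E(T')$ inside $T'$ can be identified directly: it is exactly the portion of the $u$-to-$v$ path in $T$ that lies inside the subtree $T'$. From this description $w(\cov(e,T'))$ is an offline computation, and hence so is $w(\cov(E',T'))$ and the overall objective $\adv(E',T')$ for every subset $E'$ of candidate edges.

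It then remains to maximize $\adv(E',T')$ over subsets of the candidate edges and return both the optimum value $\adv(T')$ and a witness $E'$. Since the lemma only bounds query complexity, any algorithmic choice suffices for the offline step; for instance, brute-force enumeration over the at most $2^{O(n\cdot|V(T')|)}$ candidate subsets already does the job, and no part of this step consults the matrix $w$, so no further queries are made. If a polynomial-time realization were desired, one could exploit the fact that each $\cov(e,T')$ is a path in the tree $T'$ and cast the problem as a tree-path packing solvable by a bottom-up DP on $T'$, but this is not needed for the stated bound. The total number of queries is therefore $O(n\cdot|V(T')|)$, as claimed; the only non-trivial point worth double-checking when writing out the formal proof is the step where $\cov(e,T')$ is extracted for candidates $e$ with only one endpoint in $V(T')$, which relies on the knowledge of $T$ already available to the calling algorithm.
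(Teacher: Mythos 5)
Your proposal is correct and matches the paper's proof essentially verbatim: query $w(u,v)$ for every $u\in V(T')$ and $v\in V$ (giving the stated $O(n\cdot|V(T')|)$ bound), then enumerate all candidate edge sets offline and return the best. The extra remarks about extracting $\cov(e,T')$ from the known tree structure and about a possible DP alternative are reasonable but not needed, as the paper likewise relies only on brute-force enumeration for the offline step.
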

\begin{proof}
We first query the distances between all pairs $(u,u')$ of vertices with $u\in V(T')$ and $u'\in V$. We then enumerate all eligible edge sets (sets of edges with at least one endpoint in $V(T')$) and compute the cover advantage of each of them on $T'$, from the distance information we acquired. Finally, we return the eligible set $E'$ with the maximum cover advantage on $T'$, and the cover advantage on $T'$ it achieves. Clearly, we have performed $O(n\cdot |V(T')|)$ queries.
\end{proof}

Similarly, we can prove the following lemma.

\begin{lemma}
\label{lem: max special cover advantage}
There is an algorithm, that given a subtree $T'$ of $T$, computes the optimal special cover advantage of $T'$ as well as a set $E'$ of edges achieving the optimal special cover advantage of $T'$, by performing $O(n\cdot k_{T'})$ queries, where $k_{T'}$ is the number of special vertices in $T'$.
\end{lemma}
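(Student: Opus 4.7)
The plan is to adapt the proof of \Cref{lem: max advantage} essentially verbatim, with the only change being that we restrict queries to those incident on special vertices of $T'$. Concretely, I would first query the distance $w(u,u')$ for every pair $(u,u')$ with $u$ a special vertex of $T'$ and $u'\in V$. There are $k_{T'}$ special vertices and $n$ choices for $u'$, so this stage uses $O(n\cdot k_{T'})$ queries. This suffices to determine the weight of every edge $e=(u,v)$ with at least one endpoint being a special vertex of $T'$, which is exactly the class of edges that may appear in an eligible set $E'$ for the optimal special cover advantage.

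Next, note that we are given $T'$ explicitly as a subtree of $T$, so for any candidate edge $e=(u,v)$ incident on a special vertex of $T'$ we can read off the set $\cov(e,T')$ directly from the tree structure with no additional queries, by tracing the (unique) tree path of $e$ in $T$ and intersecting it with $E(T')$. Combined with the queried weights, this lets us compute $w(\cov(E',T'))-w(E')$ for any eligible edge set $E'$ using only the information already gathered.

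Finally, I would enumerate over all eligible edge sets $E'$ (those whose edges each have at least one endpoint among the special vertices of $T'$), evaluate $\adv(E',T')$ for each, and return both the maximum value and an achieving set $E'$. Since the lemma only controls query complexity (and not running time), this brute-force enumeration is acceptable; the total query cost remains $O(n\cdot k_{T'})$ as required.

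I do not anticipate a genuine obstacle, as the argument is a direct analogue of the proof of \Cref{lem: max advantage}. The only minor thing to be careful about is that the set of edges we must consider is exactly the edges touching a special vertex of $T'$, which is precisely the class covered by our queries; hence no edge relevant to the optimal special cover advantage is missed, and the returned value is exact.
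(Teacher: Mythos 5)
Your proof matches the paper's (omitted) proof essentially verbatim: query all pairs with one endpoint a special vertex of $T'$, then enumerate all eligible edge sets and return the best one. No differences worth noting.
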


We now show that the cover advantage of edge-disjoint subtrees of an MST translates into a TSP tour whose cost is much better than $2$ times the MST cost.

\begin{lemma}
\label{lem: cover_advantage}
Let $T$ be an MST on a set $V$ of vertices, and
let $\tset$ be a set of edge-disjoint subtrees of $T$. Then $\tsp\le 2\cdot\mst-\frac{1}{2}\cdot \sum_{T'\in \tset}\adv(T')$.
\end{lemma}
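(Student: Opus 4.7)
The overall strategy is to exhibit a spanning Eulerian multigraph on $V$, with edges drawn from $E(T)$ (each used at most twice) and from the optimal edge sets $\{E'_{T'}\}_{T'\in\tset}$ (each used at most once), whose total weight is at most $2\cdot\mst - \tfrac12\sum_{T'\in\tset}\adv(T')$. The triangle inequality then lets one shortcut an Euler tour of this multigraph into a TSP tour of no greater cost.

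I would begin from the doubled MST, which has weight $2\cdot\mst$, spans $V$, and is Eulerian. The edge-disjointness of the subtrees in $\tset$ means that the cover sets $\cov(E'_{T'},T')\subseteq E(T')$ for different $T'$ lie in disjoint subsets of $E(T)$, so the modifications driven by distinct subtrees act on disjoint tree-edges; it therefore suffices to argue for each $T'\in\tset$ independently and then sum.

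For a fixed $T'\in\tset$ with $E'_{T'}$ attaining $\adv(T')$, the key combinatorial step is to extract a subset $E''_{T'}\subseteq E'_{T'}$ whose cover paths in $T'$ are pairwise edge-disjoint and which satisfies $\adv(E''_{T'}, T') \ge \tfrac12\adv(T')$. I would prove this via a $2$-coloring / greedy argument applied to the edge-intersection structure of the paths $\{\cov(e,T'): e\in E'_{T'}\}$, using the fact that any two paths in a tree meet in a (possibly empty) subpath: one of the two color classes retains at least half of the total cover advantage. A mild preprocessing step can also restrict attention to edges of $E'_{T'}$ with both endpoints in $V(T')$, using the triangle inequality to replace each cross-boundary edge by a surrogate terminating at the boundary of $T'$ without hurting the cover.

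With $E''_{T'}$ in hand, I would modify the doubled MST by, for each $e=(u,v)\in E''_{T'}$, deleting one copy of every tree edge in $\cov(e,T')$ and inserting one copy of $e$. Disjointness of covers within $E''_{T'}$, together with edge-disjointness of the subtrees in $\tset$, ensures all deletions are well-defined. A parity check confirms that the multigraph remains Eulerian: each endpoint of $e$ loses one tree-edge copy and gains one copy of $e$, while each internal vertex of the tree path $\cov(e,T')$ loses two tree-edge copies. The resulting multigraph thus has total weight
\[
2\cdot\mst - \sum_{T'\in\tset}\bigl(w(\cov(E''_{T'},T')) - w(E''_{T'})\bigr) = 2\cdot\mst - \sum_{T'\in\tset}\adv(E''_{T'},T') \le 2\cdot\mst - \tfrac12\sum_{T'\in\tset}\adv(T'),
\]
and shortcutting an Euler tour gives the desired TSP tour.

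The hard part will be the combinatorial extraction step: showing that any $E'_{T'}$ admits a subset with pairwise disjoint covers retaining at least half the advantage. This is precisely where the factor of $\tfrac12$ in the lemma statement originates; the doubling, parity, and shortcutting steps are essentially lossless, so any strengthening of the extraction would yield a correspondingly stronger lemma.
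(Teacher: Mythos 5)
Your overall plan — exhibit a low-weight spanning Eulerian multigraph from doubled MST edges plus a cheap set of non-tree edges, then shortcut — matches the paper's, and the independence across edge-disjoint subtrees is handled correctly. But your key extraction step is false, and this is where the proof breaks.

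You claim that from the optimal $E'_{T'}$ one can always extract $E''_{T'}\subseteq E'_{T'}$ whose cover paths in $T'$ are \emph{pairwise edge-disjoint} and which retains at least half the cover advantage. Consider a ``broom'': a path $v_0-v_1-\cdots-v_k$ of total weight $1$, with $v_k$ carrying $m$ pendant leaves $u_1,\ldots,u_m$, each pendant edge of weight $1$. Take $E' = \{(v_0,u_j):1\le j\le m\}$, each of weight $\eps$. Then $w(\cov(E',T'))=1+m$ and $w(E')=m\eps$, so $\adv(E',T')=1+m-m\eps$; yet any two of these edges' cover paths share the entire central path, so any pairwise-disjoint subset is a singleton, with advantage $2-\eps$. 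For $m\ge 4$ and small $\eps$ we have $2-\eps < \tfrac12(1+m-m\eps)$, so no such $E''_{T'}$ exists. The $2$-coloring idea fails for the same structural reason: paths through a common tree edge form a clique in the edge-intersection graph, which is not bipartite once three or more of them overlap, and an unspecified greedy cannot repair this.

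The paper's proof sidesteps the disjointness requirement entirely. It builds a multigraph $H_{T,E'}$ that keeps \emph{one} copy of a tree edge $f$ when $f$ is covered an \emph{odd} number of times by $E'$ and two copies when covered an even number of times (including zero); a parity check shows $H_{T,E'}$ is Eulerian for \emph{any} $E'$, with no disjointness assumption. Taking $E'$ to be a uniformly random subset of $E^*=\bigcup_i E^*_i$ that keeps each edge independently with probability $\tfrac12$, each covered tree edge is oddly covered with probability exactly $\tfrac12$, giving $\expect[w(H_{T,E'})]\le 2\cdot\mst-\tfrac12\bigl(w(\cov(E^*))-w(E^*)\bigr)$, and edge-disjointness of the $T_i$ then yields the bound. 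So the factor $\tfrac12$ comes from a parity/probabilistic argument, not from an extraction lemma; your surrounding scaffolding (doubling, Eulerian-preserving swaps, shortcutting) is sound, but the argument needs this different mechanism at its core.
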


\begin{proof}
We introduce some definitions before providing the proof.	

Let $E'$ be a set of edges that do not belong to $E(T)$. We define the multi-graph $H_{T,E'}$ as follows. Its vertex set is $V(H_{T,E'})=V$. Its edge set is the union of (i) the set $E'$; and (ii) the set $E_{[T,E']}$ that contains, for each edge $f\in E(T)$, $2$ copies of $f$ iff $f$ is covered by an even number of edges in $E'$, $1$ copy of $f$ iff $f$ is covered by an odd number of edges in $E'$. 
Equivalently, graph $H_{T,E'}$ can be obtained from the following iterative algorithm. Throughout, we maintain a graph $\hat H$ on the vertex set $V$, that initially contains two copies of each edge of $E(T)$. We will maintain the invariant that, over the course of the algorithm, for each edge $f$ of $E(T)$, graph $\hat H$ contains either one copy or two copies of $f$.
We then process edges of $E'$ one-by-one (in arbitrary order) as follows. Consider now an edge $e\in E'$ and the tree-path $P^T_e$. We add one copy of edge $e$ to $\hat {H}$. Then for each edge $f\in E(P^T_e)$, if currently the graph $\hat H$ contains $2$ copies of $f$, then we remove one copy of it from $\hat H$; if currently the graph $\hat H$ contains $1$ copy of $f$, then we add one copy of it into $\hat H$. Clearly after each iteration of processing some edge of $E'$, the invariant still holds. It is also easy to see that the resulting graph we obtain after processing all edges of $E'$ is exactly the graph $H_{T,E'}$ defined above.

We prove the following observation.
\begin{observation}
	\label{obs: displace_graph Eulerian}
	For any set $E'$, graph $H_{T,E'}$ is Eulerian.
\end{observation}
\begin{proof}
	Consider the algorithm that produces the graph $H_{T,E'}$. Initially, graph $\hat H$ contains $2$ copies of each edge of $T$, and is therefore Eulerian. It is easy to see that, in the iteration of processing the edge $e\in E'$, we only modify the degrees of vertices in the cycle $e\cup P^T_e$. Specifically, for each vertex in the cycle $e\cup P^T_e$, either its degree is increased by $2$ (if a copy is added to both of its incident edges in the cycle), or its degree is decreased by $2$ (if a copy is removed from both of its incident edges in the cycle), or its degree remains unchanged (if a copy is removed from one of its incident edges, and a copy is added to the other incident edge). Therefore, the graph $\hat H$ remains Eulerian after this iteration, and it follows that the resulting graph $H_{T,E'}$ is Eulerian.
\end{proof}

We now provide the proof of \Cref{lem: cover_advantage}. 
Denote $\tset=\set{T_1,\ldots,T_k}$.
For each index $1\le i\le k$, let $E^*_i$ be the set of edges that achieves the maximum cover advantage on $T_i$. Denote $E^*=\bigcup_{1\le i\le k}E^*_i$, and then we let $E'$ be the random subset of $E^*$ that includes each edge of $E^*$ independently with probability $1/2$. 
We will show that the expected total weight of all edges in $E(H_{T,E'})$ is at most $2\cdot\mst(G)-\frac{1}{2}\cdot \sum_{1\le i\le k}\adv(T_i)$. Note that this implies that there exists a subset $E^{**}$ of $E^*$, such that the weight of graph $H_{T,E^{**}}$ is at most $2\cdot\mst(G)-\frac{1}{2}\cdot \sum_{1\le i\le k}\adv(T_i)$. Combined with \Cref{obs: displace_graph Eulerian} and \Cref{lem: TSP bounded by Eulerian Multigraph}, this implies $\tsp\le 2\cdot\mst(G)-\frac{1}{2}\cdot \sum_{1\le i\le k}\adv(T_i)$, completing the proof of \Cref{lem: cover_advantage}.

We now show that $\expect[w(H_{T,E'})]\le 2\cdot\mst(G)-\frac{1}{2}\cdot \sum_{1\le i\le k}\adv(T_i)$. From the definition of graph $H_{T,E'}$, $E(H_{T,E'})=E'\cup E_{[T,E']}$. On one hand, from the construction of set $E'$, $\expect[w(E')]= w(E^*)/2$.
On the other hand, for each edge $f\in \cov(E^*)$, with probability $1/2$ graph $H_{T,E'}$ contains $1$ copy of it, and with probability $1/2$ graph $H_{T,E'}$ contains $2$ copies of it. Therefore, $\expect[w(E_{[T,E']})]= 2\cdot w(E(T))-w(\cov(E^*))/2$.
Note that subtrees $\set{T_i}_{1\le i\le k}$ are edge-disjoint, so the edge sets $\set{\cov(E^*,T_i)}_{1\le i\le k}$ are mutually disjoint.
Altogether, 
\begin{equation*}
\begin{split}
\expect[w(H_{T,E'})] & =2\cdot \mst-\frac{w(\cov(E^*))-w(E^*)}{2}\\
& \le 2\cdot \mst-\frac{1}{2}\cdot\sum_{1\le i\le k}\bigg(w(\cov(E^*,T_i))-w(E^*_i)\bigg)\\
& \le 2\cdot \mst-\frac{1}{2}\cdot\sum_{1\le i\le k}\bigg(w(\cov(E^*_i,T_i))-w(E^*_i)\bigg)\\
& = 2\cdot \mst-\frac{1}{2}\cdot\sum_{1\le i\le k}\adv(T_i).
\end{split}
\end{equation*}
\end{proof}

We show in the next lemma that, although the special cover advantage of any tree is upper bounded by its cover advantage by definition, but it is greater than the cover advantage of any single edge.

\begin{lemma}
\label{lem: single_edge_adv}
Let $T'$ be a subtree of $T$. Let $e=(u,v)$ be an edge with $u\in V(T')$ and $v\notin V(T')$, then $\adv(e,T')\le \adv^*(T')$.
\end{lemma}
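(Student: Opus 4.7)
The plan is to exhibit, for the given single-edge set $\set{e}$ with $e = (u,v)$, $u \in V(T')$, $v \notin V(T')$, a special vertex $u'$ of $T'$ such that $\adv(\set{(u',v)}, T') \ge \adv(e, T')$. Since $(u',v)$ then has an endpoint at a special vertex of $T'$, this immediately gives $\adv^*(T') \ge \adv(\set{(u',v)}, T') \ge \adv(e, T')$, which is the desired inequality.

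First, I will identify the unique vertex $u^* \in V(T')$ through which every tree path in $T$ from a vertex of $V(T')$ to $v$ exits $T'$. This vertex is well defined by the tree structure of $T$: since $T'$ is a connected subtree and $v \notin V(T')$, for every $x \in V(T')$ the path $P^T_{x,v}$ has a last vertex in $V(T')$, and this last vertex is the same across all $x$ (otherwise $T$ would contain a cycle). A direct consequence is $\cov(e, T') = E(P^{T'}_{u, u^*})$, the edge set of the path from $u$ to $u^*$ inside $T'$.

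Next, I will construct $u'$ by a short walk inside $T'$: starting at $u$, walk along the unique incident $T'$-edge that does not lie on $P^{T'}_{u, u^*}$ (if $u$ already has $T'$-degree $\ne 2$, do not walk; if $u = u^*$, take any incident $T'$-edge), and continue stepping through vertices of $T'$-degree exactly $2$, halting at the first vertex of $T'$-degree $\ne 2$; call this vertex $u'$. The walk terminates because $T'$ is a finite tree and the walk cannot revisit vertices, and the terminal vertex $u'$ is special in $T'$ by construction. Moreover $u$ lies on the path $P^{T'}_{u', u^*}$, so $\cov(\set{(u',v)}, T') = E(P^{T'}_{u', u^*}) \supseteq \cov(e, T')$, and these two edge sets differ by exactly the edges of $P^{T'}_{u', u}$, giving the identity $w(\cov(\set{(u',v)}, T')) = w(\cov(e, T')) + w(P^{T'}_{u', u})$.

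The final step is a short estimate using the triangle inequality of the metric twice: $w(u', v) \le w(u', u) + w(u, v)$, and $w(u', u) \le w(P^{T'}_{u', u})$ by iterating the triangle inequality along the path from $u'$ to $u$ inside $T'$. Substituting these into the definition of $\adv(\set{(u',v)}, T')$ and cancelling yields $\adv(\set{(u',v)}, T') \ge \adv(e, T')$, as required. I do not expect a serious obstacle; the only mildly delicate point is verifying that the walk defining $u'$ actually terminates at a special vertex in degenerate cases (e.g.\ $T'$ reduced to a single vertex, in which case $u = u^*$ already has $T'$-degree $0 \ne 2$ and is itself special), which follows from the observation that any nonempty finite tree contains a vertex of degree $\ne 2$.
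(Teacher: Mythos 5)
Your proof is correct and takes essentially the same route as the paper's: replace the non-special endpoint $u$ by a special vertex $u'$ reached by walking inside $T'$ away from $\cov(e,T')$, observe that $\cov(\set{(u',v)},T')$ equals $\cov(e,T')$ together with exactly the edges of $P^{T'}_{u',u}$, and cancel that extra weight against the triangle-inequality bound $w(u',v)\le w(P^{T'}_{u',u})+w(u,v)$. The only cosmetic difference is that the paper picks $u'$ to be a leaf of $T'$ on the far side of $u$, whereas you stop at the first vertex of $T'$-degree $\ne 2$; both are special vertices and the rest of the argument is identical.
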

\begin{proof}
If $u$ is a special vertex of $T'$, then the lemma follows immediately from the definition of $\adv^*(T')$. Assume now that $u$ is not a special vertex of $T'$, so $\deg_{T'}(u)=2$.
Since $u\in V(T')$ and $v\notin V(T')$, $\cov(e,T')$ is a path $P$ of $T'$ with $u$ being one of its endpoint. Clearly. we can find another leaf $u'$ of $T'$ such that the tree-path $P^{T'}_{u,u'}$ is edge-disjoint from $P$. Consider now the edge $e'=(u',v)$. On the one hand, since $u'$ is a special vertex of $T'$, $\adv(e',T')\le \adv^*(T')$. On the other hand, clearly $\cov(e',T')=P^{T'}_{u,u'}\cup P$, and from triangle inequality, $w(u',v)\le w(u',u)+w(u,v)\le w(P^{T'}_{u,u'})+w(u,v)$. Therefore, $\adv(e',T')\ge \adv(e,T')$. Altogether, we get that $\adv(e,T')\le \adv^*(T')$.
\end{proof}

We show in the next technical lemma that the edge set of any TSP tour can be partitioned into two subsets that both cover the whole tree.

\begin{lemma}
\label{lem: tour edge into two cover}
Let $T'$ be a subtree of $T$ and let $\pi$ be a tour that traverses all vertices of $T'$. Then $E(\pi)$ can be partitioned into two subsets $E(\pi)=E_0\cup E_1$, such that $E(T')\subseteq \cov(E_0)$ and $E(T')\subseteq \cov(E_1)$.
\end{lemma}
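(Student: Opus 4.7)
My plan is to establish the lemma in three stages.

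First, I would observe that for each edge $f \in E(T')$, removing $f$ from $T$ partitions $V$ into two components $V_f^1$ and $V_f^2$. Since $f$ lies in the subtree $T'$, both components meet $V(T')$, so the closed walk $\pi$ must cross the cut $(V_f^1, V_f^2)$ an even number of times, at least twice. Denoting by $X_f \subseteq E(\pi)$ the set of such crossings, $X_f$ is precisely the set of edges of $\pi$ covering $f$, and $|X_f| \geq 2$. The lemma reduces to producing a partition $E(\pi) = E_0 \sqcup E_1$ such that every $X_f$ contains edges of both parts.

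Next, via a standard shortcutting argument, I would reduce to the case where $\pi$ is a Hamiltonian cycle on $V(T')$. Specifically, each maximal sub-walk of $\pi$ through vertices outside $V(T')$ is replaced by a single shortcut edge connecting the two $V(T')$-endpoints of the sub-walk; a 2-coloring of the shortened cycle lifts to a 2-coloring of $\pi$ by coloring each original sub-walk uniformly with the color of its shortcut, and this lifting preserves the cover property because an edge of the sub-walk covers a given $f \in E(T')$ iff the shortcut edge does (one direction is immediate; the other uses the fact that any walk between vertices on opposite sides of $f$ must contain an edge crossing the cut).

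For the main construction, I would orient $\pi$ and order the crossings of each $X_f$ as $e_{f,1},\ldots,e_{f,2k_f}$ cyclically, noting that they alternate between entering and leaving $V_f^1$. Pairing consecutive crossings $(e_{f,1},e_{f,2}), (e_{f,3},e_{f,4}), \ldots$ yields $k_f$ pairs for each $f$. Let $G^*$ be the auxiliary graph on vertex set $E(\pi)$ whose edges are the union of these pairs over all $f \in E(T')$. Any proper 2-coloring of $G^*$ gives the desired partition: every pair becomes non-monochromatic, so, in particular, each $X_f$ contains both colors.

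The main technical obstacle is showing that $G^*$ is bipartite. For this I would exploit the laminar structure of the cuts $\{V_f^1\}_{f \in E(T')}$: because $T'$ is a subtree of $T$, any two such sets are either nested or disjoint. Using this laminarity, together with the fact that consecutive crossings bound arcs of $\pi$ entirely on one side of the corresponding cut, a structural case analysis (on nested vs.\ disjoint cuts) shows that any putative cycle in $G^*$ corresponds to a closed walk in the (bipartite) laminar tree of the cut family and hence has even length. Once $G^*$ is known to be bipartite, the desired 2-coloring exists. An alternative path that may sidestep the bipartiteness argument would be to construct the coloring iteratively along a DFS ordering of $T'$, using the flexibility afforded by $|X_f| \geq 2$ to absorb each new leaf edge without breaking previously-satisfied constraints; I expect the laminar-structure argument to be cleaner, but either route should work.
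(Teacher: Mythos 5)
Your proposal takes a genuinely different route from the paper's (an auxiliary graph plus 2-coloring, versus the paper's direct construction via odd-degree vertices), but it contains a real gap: the bipartiteness of $G^*$ does not follow from laminarity, and in fact fails for the pairing your argument would naturally address. Take $T'=T$ rooted at $r$, with $r$ having children $w,z$ and $w$ having children $x,y$; so $E(T') = \{f_1,f_2,f_3,f_4\}$ with $f_1=(w,x)$, $f_2=(w,y)$, $f_3=(r,w)$, $f_4=(r,z)$. Let $\pi$ be the Hamiltonian cycle $(r,x,y,z,w)$, with $\pi$-edges $e_1=(r,x),\ldots,e_5=(w,r)$. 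Then $X_{f_1}=\{e_1,e_2\}$, $X_{f_2}=\{e_2,e_3\}$, $X_{f_4}=\{e_3,e_4\}$ are forced pairs, while $X_{f_3}=\{e_1,e_3,e_4,e_5\}$ admits two pairings. Pairing $f_3$ by the runs inside the laminar side $S_{f_3}=\{w,x,y\}$ gives the pairs $(e_1,e_3)$ (bounding the run $\{x,y\}$) and $(e_4,e_5)$ (bounding $\{w\}$). The resulting $G^*$ contains the triangle $e_1$--$e_2$--$e_3$--$e_1$ and is not 2-colorable. The opposite pairing convention makes $G^*$ bipartite here, but the complements $V(T')\setminus S_f$ are not laminar (they all contain $r$), so the laminarity argument you invoke does not cover that choice either. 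In short, the pairing convention is not pinned down by your sketch, and the one that laminarity speaks to can produce odd cycles; a correct proof along these lines would need to identify the right pairing and a different argument for why that particular $G^*$ is bipartite.

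The paper avoids graph coloring altogether. Let $V'=\{v_1,\ldots,v_{2k}\}$ be the odd-degree vertices of $T'$ listed in the order they first appear along $\pi$, and split $E(\pi)$ into the arcs $E^1_\pi,\ldots,E^{2k}_\pi$ between consecutive members of $V'$; let $E_0$ be the union of the even-indexed arcs and $E_1$ the odd-indexed ones. The shortcut edges $F_0=\{(v_{2i-1},v_{2i})\}_i$ and $F_1=\{(v_{2i},v_{2i+1})\}_i$ are both perfect matchings on $V'$, hence $T'\cup F_0$ (and likewise $T'\cup F_1$) has all even degrees; since each $f\in E(T')$ is a cut of size one in $T'$ and every cut of an Eulerian graph has even size, some edge of $F_0$ crosses $f$, so $E(T')\subseteq\cov(F_0)\subseteq\cov(E_0)$ using $\cov((v_i,v_{i+1}))\subseteq\cov(E^i_\pi)$. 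This gives the partition directly. (A small secondary remark on your shortcutting reduction: the claimed ``iff'' is false in the direction from sub-walk edge to shortcut, since a detour can cross $f$ while the shortcut does not; fortunately only the other direction is needed, so the reduction itself is sound.)
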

\begin{proof}
Let $V'$ be the set of all odd-degree vertices in $T'$, so $|V'|$ is even. 
Denote $V'=\set{v_1,v_2,\ldots,v_{2k}}$, where the vertices are index according to the order in which they appear in $\pi$. 
For each $1\le i\le 2k$, we define edge $e_i=(v_i,v_{i+1})$ and define $E^{i}_{\pi}$ to be the set of all edges traversed by $\pi$ between vertices $v_i$ and $v_{i+1}$.
Clearly, $\cov(e_i)\subseteq \cov (E^{i}_{\pi})$.
We define $E_0=\bigcup_{0\le i\le k-1}E^{2i}_{\pi}$ and $E_1=\bigcup_{0\le i\le k-1}E^{2i+1}_{\pi}$.

Consider now the tour $\pi'$ induced by edges of $e_1,\ldots,e_{2k}$. Clearly, $\pi'$ is a tour that visits all vertices of $V'$. We define sets
$F_0=\set{e_{2i}\mid 0\le i\le k-1}$ and $F_1=\set{e_{2i+1}\mid 0\le i\le k-1}$, so  $E(\pi')=F_0\cup F_1$. 
We now show that $E(T')\subseteq \cov(F_0)$ and $E(T')\subseteq \cov(F_1)$.
Note that this implies that $E(T')\subseteq \cov(E_0)$ and $E(T')\subseteq \cov(E_1)$, since 
\[\cov(F_0)=
\bigg(\bigcup_{0\le i\le k-1}\cov(e_{2i})\bigg)
\subseteq 
\bigg(\bigcup_{0\le i\le k-1}\cov(E^{2i}_{\pi})\bigg) \subseteq \cov \bigg(\bigcup_{0\le i\le k-1}E^{2i}_{\pi}\bigg)=\cov(E_0),\]
and similarly $\cov(F_1)\subseteq \cov(E_1)$.

In fact, note that $F_0$ is a perfect matching on $V'$. Since $V'$ is the set of odd-degree vertices of $T'$, the graph on $V(T')$ induced by edges of $E(T')\cup F_0$ is Eulerian. Therefore, every edge of $T'$ appears in at least two sets of $\set{\cov(e')\mid e'\in E(T')\cup M_o}$. Note that for each $e'\in E(T')$, $\cov(e')=\set{e'}$. Therefore, every edge $e\in E(T')$ appears in at least one set of $\set{\cov(e')\mid e'\in F_0}$, i.e., $E(T')\subseteq \cov(F_0)$. Similarly, we get that $E(T')\subseteq \cov(F_1)$. 
This completes the proof of \Cref{lem: tour edge into two cover}.
\end{proof}

Using \Cref{lem: tour edge into two cover}, we show in the next lemma that, if the special cover advantage is small, then the TSP cost is close to $2$ times the MST cost.

\begin{lemma}
\label{lem: cover_advantage lower bound}
Let $T'$ be any subtree of an MST $T$. Then $\tsp\ge 2\cdot w(T')-2\cdot \adv^*(T')$.
\end{lemma}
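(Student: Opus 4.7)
The plan is to apply Lemma~\ref{lem: tour edge into two cover} to an optimal TSP tour $\pi$ (which certainly traverses every vertex of $T'$, since it traverses every vertex of the metric) and bound the weight of each of the two resulting parts from below by $w(T') - \adv^{*}(T')$. However, the statement of Lemma~\ref{lem: tour edge into two cover} by itself is not quite strong enough, since the sets $E_0, E_1$ it produces consist of arbitrary edges of $\pi$ whose endpoints need not be special vertices of $T'$. The key observation is that the witnesses constructed inside the proof of Lemma~\ref{lem: tour edge into two cover}, namely the chord sets $F_0$ and $F_1$ on the set $V'$ of odd-degree vertices of $T'$, do have both endpoints being special vertices of $T'$ (every odd-degree vertex of $T'$ has degree $\ne 2$, hence is special), and so they are admissible edge sets in the definition of $\adv^{*}(T')$.

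Concretely, I would fix an optimal TSP tour $\pi$, list the odd-degree vertices of $T'$ as $v_1,\dots,v_{2k}$ in the order they appear along $\pi$, and form the chord sets $F_0=\{(v_{2i},v_{2i+1})\}_{i}$ and $F_1=\{(v_{2i+1},v_{2i+2})\}_{i}$, together with the corresponding partition $E(\pi)=E_0\cup E_1$ into the $\pi$-subpaths between consecutive $v_j$'s. The proof of Lemma~\ref{lem: tour edge into two cover} already gives $E(T')\subseteq\cov(F_0)$ and $E(T')\subseteq\cov(F_1)$, so in particular $\cov(F_i,T')=E(T')$ and therefore $\adv(F_i,T')=w(T')-w(F_i)$. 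Since every edge of $F_i$ has both endpoints in $V'$, i.e., at special vertices of $T'$, the definition of the optimal special cover advantage yields $w(T')-w(F_i)=\adv(F_i,T')\le \adv^{*}(T')$, that is, $w(F_i)\ge w(T')-\adv^{*}(T')$ for $i=0,1$.

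Finally, the triangle inequality (applied along the sub-walk of $\pi$ between $v_j$ and $v_{j+1}$) gives $w(F_i)\le w(E_i)$, so summing over $i\in\{0,1\}$ we obtain
\[
\tsp = w(\pi)=w(E_0)+w(E_1)\ge w(F_0)+w(F_1)\ge 2w(T')-2\adv^{*}(T'),
\]
as required. The only mild subtlety is the degenerate case in which $T'$ has no odd-degree vertex (i.e., $T'$ is a single vertex), but then $w(T')=\adv^{*}(T')=0$ and the bound is trivial; otherwise $|V'|\ge 2$ and the construction above goes through verbatim. The main conceptual step, and the only part that is not routine, is the realization that one must open up the proof of Lemma~\ref{lem: tour edge into two cover} rather than use it as a black box, so as to exploit that the canonical witnesses $F_0,F_1$ are supported on odd-degree (hence special) vertices of $T'$.
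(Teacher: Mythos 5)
Your proof is correct and follows essentially the same route as the paper's: both open up the construction in Lemma~\ref{lem: tour edge into two cover} to observe that the chord sets $F_0,F_1$ are supported on odd-degree (hence special) vertices of $T'$, so $w(T')-w(F_i)=\adv(F_i,T')\le\adv^{*}(T')$, and then bound $w(F_0)+w(F_1)$ by the tour cost via triangle inequality. The only cosmetic difference is that the paper first shortcuts the optimal tour $\pi^*$ to a tour $\pi$ on $V(T')$ before partitioning, whereas you apply the construction directly to $\pi^*$; both are valid since Lemma~\ref{lem: tour edge into two cover} only requires the tour to visit all vertices of $T'$.
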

\begin{proof}
Let $\pi^*$ be the optimal TSP-tour that visits all vertices of $V$, so $\tsp=w(\pi^*)$.
Let $\pi$ be the tour obtained from $\pi^*$ by deleting all vertices of $T\setminus T'$, so $\pi$ is a tour that visits all vertices of $V(T')$, and, from triangle inequality, $w(\pi^*)\ge w(\pi)$. We define set $V'$ and edge sets $F_0, F_1, E_0, E_1$ in the same way as in the proof of \Cref{lem: tour edge into two cover}. From triangle inequality, $w(F_0)+w(F_1)\le w(E_0)+w(E_1)= w(\pi)$. Note that edges of $F_0$ and $F_1$ have both endpoint in $V'$, and moreover, from the definition of $V'$, all vertices of $V'$ are special vertices of $T'$.
Since $E(T')\subseteq \cov(F_0)$, from the definition of $\adv^*(T')$, we get that
$\adv^*(T')\ge w(\cov(F_0,T'))-w(F_0)=w(T')-w(F_0)$, and similarly $\adv^*(T')\ge w(T')-w(F_1)$. Therefore,
$
\tsp=w(\pi^*)\ge w(\pi)=w(E_0)+w(E_1)\ge w(F_0)+w(F_1)\ge 2\cdot (w(T')-\adv^*(T')).
$
\end{proof}

The last two lemmas in this section show that the total cover advantage and the total special cover advantage of a set of small edge-disjoint paths of an MST $T$ can be efficiently and accurately estimated.

\begin{lemma}
\label{lem: estimate adv of a set of segments}
There is an algorithm, that, given a constant $0<\eps<1$ and a set $\tset$ of edge-disjoint subtrees of $T$, with high probability, either correctly reports that $\sum_{T'\in \tset}\adv(T')\ge \eps\cdot \mst$, or correctly reports that $\sum_{T'\in \tset}\adv(T')\le 2\eps\cdot \mst$, by performing $\tilde O((n/\eps^2)\cdot\max\set{|V(T')|}_{T'\in \tset})$ queries.
\end{lemma}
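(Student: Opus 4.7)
The plan is to combine weighted sampling with the exact computation provided by \Cref{lem: max advantage}. First, since each subtree $T' \in \tset$ is a subtree of the MST $T$, we can learn the weight $w(T')$ of every $T' \in \tset$ and the value of $\mst = w(T)$ by reading off the weights of the $n-1$ edges of $T$ (at most $n$ additional queries, assuming they are not already available). The key structural observation is that $\adv(T') \le w(T')$ for every $T'$, because $\adv(E',T') = w(\cov(E',T')) - w(E')$ with $\cov(E',T') \subseteq E(T')$ and $w(E') \ge 0$; combined with the edge-disjointness of the subtrees in $\tset$, this yields $A := \sum_{T' \in \tset} \adv(T') \le \sum_{T' \in \tset} w(T') \le w(T) = \mst$, so $A/\mst \in [0,1]$ is the normalized quantity to estimate.

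Enumerate $\tset = \set{T_1, \ldots, T_k}$ and let $s = \max_{1 \le i \le k} |V(T_i)|$. Define a random variable $X$ by the importance-sampling rule: with probability $q_i := w(T_i)/\mst$, set $X = \adv(T_i)/w(T_i)$ (interpreting the ratio as $0$ if $w(T_i) = 0$); with the remaining probability $1 - \sum_i q_i \ge 0$, set $X = 0$. Then $\expect[X] = \sum_i q_i \cdot \adv(T_i)/w(T_i) = A/\mst$, and by construction $X \in [0,1]$. Draw $t = \Theta(\log n / \eps^2)$ independent samples $X_1, \ldots, X_t$, and for each sample that lands on some $T_i$ invoke \Cref{lem: max advantage} to compute $\adv(T_i)$ exactly at a cost of $O(n \cdot |V(T_i)|) = O(n \cdot s)$ queries. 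Summing over the $t$ samples, the total query complexity is $\tilde O((n s)/\eps^2)$, as required.

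Finally, form the empirical mean $\bar X = \frac{1}{t}\sum_{j=1}^{t} X_j$ and threshold at $3\eps/2$: report ``$A \ge \eps \cdot \mst$'' if $\bar X \ge 3\eps/2$, and report ``$A \le 2\eps \cdot \mst$'' otherwise. Since $X \in [0,1]$, Hoeffding's inequality gives $\Pr[|\bar X - A/\mst| > \eps/2] \le 2 \exp(-t\eps^2/2) = n^{-\Omega(1)}$ for a suitable choice of $t = \Theta(\log n / \eps^2)$. When $A \ge 2\eps \cdot \mst$, this forces $\bar X \ge 3\eps/2$ with high probability, so the first branch is correct; when $A \le \eps \cdot \mst$, it forces $\bar X \le 3\eps/2$ with high probability, so the second branch is correct; when $A$ lies strictly in $(\eps \cdot \mst, 2\eps \cdot \mst)$, either answer is valid by design, so no failure can occur in this range. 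The only conceptual step that is not entirely routine is recognizing that $\adv(T')/w(T')$ lies in $[0,1]$, which is precisely what makes the importance sampling produce a bounded estimator amenable to Hoeffding concentration; without this the variance of a naive uniform sampler could be too large.
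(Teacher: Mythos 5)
Your proof is correct and follows essentially the same approach as the paper: sample subtrees with probability proportional to $w(T')$, compute each sampled $\adv(T')$ exactly via \Cref{lem: max advantage}, and apply a Chernoff/Hoeffding bound to the $[0,1]$-valued estimator $\adv(T')/w(T')$. The only (cosmetic) difference is that you normalize the sampling probabilities by $\mst$ directly, assigning the leftover mass $1-w(\tset)/\mst$ to the value $0$, so that $\expect[X]=A/\mst$ holds exactly; the paper instead normalizes by $w(\tset)$ and then appeals to $w(\tset)\le\mst$ to convert its estimate of $A/w(\tset)$ into a decision about $A/\mst$ — both variants yield the same query complexity and correctness guarantee.
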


\begin{proof}
	Denote $s=\max\set{|V(T')|}_{T'\in \tset}$.
	We define the random variable $\gamma$ as follows. Denote $w(\tset)=\sum_{T'\in \tset}w(T')$. For each subtree $T'\in \tset$, we let $\gamma$ takes value $\adv(T')/w(T')$ with probability $w(T')/w(\tset)$. Therefore, $\gamma$ is supported on $[0,1]$, and
	$$\mathbb{E}[\gamma]=\sum_{T'\in \tset}\frac{\adv(T')}{w(T')}\cdot \frac{w(T')}{w(\tset)}
	=\frac{\sum_{T'\in \tset}\adv(T')}{w(\tset)}=\frac{\sum_{T'\in \tset}\adv(T')}{\mst}\cdot \bigg(\frac{\mst}{w(\tset)}\bigg).$$
	From Chernoff Bound, for every $0<\eps<1$, the value of $\mathbb{E}[\gamma]$ can be estimated to within an additive error of $\eps/2$, by $\tilde O(1/\eps^2)$ independent samples of $\gamma$. 
	Since $w(\tset)\le \mst$, it follows that the value of $\sum_{T'\in \tset}\adv(T')/w(\tset)$ can be also estimated to within an additive error of $\eps/2$ by $\tilde O(1/\eps^2)$ independent samples of $\gamma$. 
	It suffices to show that we can obtain one independent sample of $\gamma$ using $O(ns)$ queries. In fact, we can sample a subtree from the distribution on $\tset$, in which each $T'\in \tset$ has probability $w(T')/w(\tset)$, and then we use the algorithm from \Cref{lem: max advantage} to compute the value $\adv(T')$ using at most $O(ns)$ queries, and finally return the value $\adv(T')/w(T')$. It is easy to verify that this value is a sample of $\gamma$. This completes the proof of \Cref{lem: estimate adv of a set of segments}.
\end{proof}

Similarly, using the algorithm from \Cref{lem: max special cover advantage}, we have the following lemma.

\begin{lemma}
\label{lem: estimate special adv of a set of segments}
There is an algorithm, that, given a constant $0<\eps<1$ and a set $\tset$ of edge-disjoint subtrees of $T$, with high probability, either correctly reports that $\sum_{T'\in \tset}\adv^*(T')\ge \eps\cdot \mst$, or correctly reports that $\sum_{T'\in \tset}\adv^*(T')\le 2\eps\cdot \mst$, by performing $\tilde O((n/\eps^2)\cdot\max\set{k_{T'}\mid T'\in \tset})$ queries, where $k_{T'}$ is the number of special vertices in $T'$.
\end{lemma}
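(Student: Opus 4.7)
The plan is to mirror the sampling argument used in the proof of Lemma~\ref{lem: estimate adv of a set of segments}, replacing the subroutine of Lemma~\ref{lem: max advantage} with that of Lemma~\ref{lem: max special cover advantage}. First, I would define a random variable $\gamma$ supported on $[0,1]$ as follows: set $w(\tset)=\sum_{T'\in\tset} w(T')$, and let $\gamma$ take value $\adv^*(T')/w(T')$ with probability $w(T')/w(\tset)$ for each $T'\in\tset$. Note that $\gamma$ is indeed supported on $[0,1]$, since $0\le \adv^*(T')\le \adv(T')$, and the latter is at most $w(T')$ because any set $E'$ achieving the optimal special cover advantage satisfies $w(\cov(E',T'))\le w(E(T'))$ and $w(E')\ge 0$.

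Next, a direct computation yields
\[
\mathbb{E}[\gamma]=\sum_{T'\in\tset}\frac{\adv^*(T')}{w(T')}\cdot\frac{w(T')}{w(\tset)}=\frac{\sum_{T'\in\tset}\adv^*(T')}{w(\tset)},
\]
and since $w(\tset)\le \mst$, an estimate of $\mathbb{E}[\gamma]$ within additive error $\eps/2$ yields an estimate of $\sum_{T'\in\tset}\adv^*(T')/\mst$ within additive error $\eps/2$ as well. In particular, comparing this estimate to $3\eps/2$ lets us distinguish (with high probability) the cases $\sum_{T'\in\tset}\adv^*(T')\ge \eps\cdot\mst$ and $\sum_{T'\in\tset}\adv^*(T')\le 2\eps\cdot\mst$.

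By a standard Chernoff/Hoeffding bound, averaging $\tilde O(1/\eps^2)$ independent samples of $\gamma$ estimates $\mathbb{E}[\gamma]$ to within additive error $\eps/2$ with high probability. It remains to show that one independent sample of $\gamma$ can be produced using $\tilde O(n\cdot\max_{T'\in\tset}k_{T'})$ queries: we first draw a subtree $T'\in\tset$ from the distribution that assigns probability $w(T')/w(\tset)$ to $T'$ (this requires no distance queries since the subtrees and their weights are already known from $T$), and then invoke the algorithm of Lemma~\ref{lem: max special cover advantage} on $T'$, which computes $\adv^*(T')$ exactly using $O(n\cdot k_{T'})$ queries. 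Outputting $\adv^*(T')/w(T')$ gives a faithful sample of $\gamma$, and multiplying the per-sample cost by the $\tilde O(1/\eps^2)$ number of samples yields the claimed total query bound of $\tilde O((n/\eps^2)\cdot\max\set{k_{T'}\mid T'\in\tset})$. No step here poses a real obstacle; the only thing to verify carefully is the bound $\adv^*(T')\le w(T')$ that makes $\gamma\in[0,1]$, which follows directly from the definitions.
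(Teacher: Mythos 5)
Your proof is correct and takes essentially the same approach the paper intends: it reruns the sampling argument of Lemma~\ref{lem: estimate adv of a set of segments}, substituting the subroutine of Lemma~\ref{lem: max special cover advantage} for that of Lemma~\ref{lem: max advantage}, which is exactly what the paper's one-line proof sketch asks for. The detail you check explicitly---that $\adv^*(T')\le \adv(T')\le w(T')$ so $\gamma\in[0,1]$---is correct and is needed for the Chernoff/Hoeffding step.
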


\newpage

\part{Streaming Algorithms and Lower Bounds}


In this part we present our results on streaming algorithms and lower bounds for MST and TSP cost estimation. We start with some preliminaries in this part, and then provide the proofs of \Cref{thm: 1 pass alpha MST upper,thm: 1 pass alpha MST lower,thm: p pass alpha MST lower,thm: p pass alpha graph MST lower,thm: 2 pass TSP upper,thm: 1 pass TSP lower} in the subsequent sections.

\section{Preliminaries}

By default, all logarithms in this paper are to the base of $2$. All graphs are undirected and simple.

\subsection{Graph-Theoretic Notation}

We follow standard graph-theoretic notation. 
Let $G=(V,E)$ be an undirected unweighted graph.
For a vertex $v\in V$, we denote by $\deg_G(v)$ the degree of vertex $v$ in $G$.
For a pair $A,B$ of disjoint subsets of vertices of $G$, we denote by $E_G(A,B)$ the set of edges in $G$ with one endpoint in $A$ and the other endpoint in $B$.
For a subset $A$ of vertices of $G$, we denote by $\delta_G(A)$ the set of edges with exactly one endpoint in $A$.
We may sometimes omit the subscript $G$ in the notation above if the graph $G$ is clear from the context.


Let $V$ be a set of vertices and let $w: V\times V \to \mathbb{R}^{\ge 0}$ be a metric on $V$. For an edge $e=(u,v)$ where $u,v\in V$, we sometimes write $w(e)=w(u,v)$. We denote by $\mst(w)$ the MST cost of the weighted complete graph on $V$ with edge weights given by the metric $w$, and we denote by $\tsp(w)$ minimum metric TSP cost of same graph.

Let $T$ be a tree rooted at some vertex of $V(T)$. For a vertex $v$ of $T$, we denote by $T_v$ the subtree of $T$ rooted at vertex $v$. 
We say that a vertex $v$ in $T$ is a \emph{special vertex}, iff $\deg_T(v)\ne 2$.
For a pair $v,v'$ of vertices in $T$, we call the unique path in $T$ connecting $v$ to $v'$ the \emph{tree path} connecting $v$ to $v'$, which we denote by $P^T_{v,v'}$. 
If vertices $u,v$ are connected by an edge $e$ that does not belong to $T$, then we also write $P^T_{e}=P^T_{u,v}$.
Let $T$ be a tree rooted at some vertex $r\in V(T)$. Let $\hat T$ be a subtree of $T$. We call the vertex of $\hat T$ that is closest to $r$ in $T$ the root of subtree $\hat T$.

\paragraph{Euler Tour of a tree.} Let $T$ be a rooted tree. The \emph{euler tour} of $T$, or equivalently the \emph{depth-first traversal} of $T$, is a vertex traversal of $T$ that begins and ends at the root of $T$, traversing each edge of $T$ exactly twice -- once to enter the subtree and once to exit it.

\paragraph{Graph Streams and Metric Streams.} In the first part of the paper, we consider two types of streams: \emph{graph streams} and \emph{metric streams}. Let $w$ be a metric on a set $V$ of $n$ vertices. A metric stream of $w$ contains the distances $w(v,v')$ between all pairs $v,v'$ of vertices in $V$. A graph stream of $w$ contains all edges of an $n$-vertex weighted graph that induces the metric $w$. Specifically, let $G$ be a minimal (inclusion-wise) weighted graph on $V$ such that the shortest-path distance metric in $G$ on $V$ is identical to $w$, then a graph stream of $w$ contains all edges (and their weights) of $G$.

\subsection{Tools from Information Theory}

Let $\RV{X},\RV{Y}$ be random variables. We denote by $\HH(\RV{X})$ the \emph{Shannon Entropy} of $\RV{X}$, and denote by $\II(\RV{X};\RV{Y})$ the \emph{mutual information} of $\RV{X}$ and $\RV{Y}$. By definition, 
$\II(\RV{X};\RV{Y})=\HH(\RV{X})-\HH(\RV{X}|\RV{Y})=\HH(\RV{Y})-\HH(\RV{Y}|\RV{X})$.
We will sometimes use $I_{\dset}(\RV{X};\RV{Y})$ instead of $I(\RV{X};\RV{Y})$ if the distribution $\dset$ of the random variables $\RV{X},\RV{Y}$ is unclear from the context.
We use the following basic properties of entropy and mutual information.

\begin{claim}
\label{clm: basic properties}
Let $\RV{X},\RV{Y},\RV{Z},\RV{W}$ be random variables. Then
\begin{enumerate}
\item $\II(\RV{X};\RV{Y},\RV{Z})=\II(\RV{X};\RV{Z}|\RV{Y})+\II(\RV{X};\RV{Y})$ (the chain rule for mutual information);
\label{fact:chain-rule}
\item If $\RV{X}$ and $\RV{W}$ are independent conditioned on $\RV{Z}$, then 
$\II(\RV{X};\RV{Y} | \RV{Z},\RV{W}) \ge \II(\RV{X};\RV{Y}|\RV{Z})$.
\label{fac:mul-ind}
\end{enumerate}
\end{claim}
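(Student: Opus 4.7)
The plan is to derive both statements directly from the definitions of entropy and mutual information, together with the non-negativity of conditional mutual information. Both are classical identities, so the goal is to present short derivations that match the notation of the paper.

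For part (1), I would expand everything in terms of entropy. By definition,
$\II(\RV{X};\RV{Y},\RV{Z}) = \HH(\RV{X}) - \HH(\RV{X} \mid \RV{Y},\RV{Z})$,
while $\II(\RV{X};\RV{Y}) = \HH(\RV{X}) - \HH(\RV{X}\mid\RV{Y})$ and $\II(\RV{X};\RV{Z}\mid\RV{Y}) = \HH(\RV{X}\mid\RV{Y}) - \HH(\RV{X}\mid\RV{Y},\RV{Z})$. Summing the latter two yields $\HH(\RV{X}) - \HH(\RV{X}\mid\RV{Y},\RV{Z})$, which is exactly $\II(\RV{X};\RV{Y},\RV{Z})$. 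This gives the chain rule.

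For part (2), I would use the chain rule proved in part (1) to expand $\II(\RV{X};\RV{Y},\RV{W}\mid\RV{Z})$ in two different ways:
\begin{align*}
\II(\RV{X};\RV{Y},\RV{W}\mid\RV{Z}) &= \II(\RV{X};\RV{W}\mid\RV{Z}) + \II(\RV{X};\RV{Y}\mid\RV{Z},\RV{W}), \\
\II(\RV{X};\RV{Y},\RV{W}\mid\RV{Z}) &= \II(\RV{X};\RV{Y}\mid\RV{Z}) + \II(\RV{X};\RV{W}\mid\RV{Z},\RV{Y}).
\end{align*}
The hypothesis that $\RV{X}$ and $\RV{W}$ are independent conditioned on $\RV{Z}$ gives $\II(\RV{X};\RV{W}\mid\RV{Z}) = 0$. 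Equating the two expansions then yields
\[
\II(\RV{X};\RV{Y}\mid\RV{Z},\RV{W}) = \II(\RV{X};\RV{Y}\mid\RV{Z}) + \II(\RV{X};\RV{W}\mid\RV{Z},\RV{Y}),
\]
and since conditional mutual information is non-negative, the last term is $\ge 0$, giving the desired inequality.

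There is no real obstacle here, since both facts are standard. The only point to be careful about is that the chain rule (part 1) must be proved first so that it can be invoked in part 2; and one should be explicit about where the independence hypothesis is being used, namely to kill the term $\II(\RV{X};\RV{W}\mid\RV{Z})$. I would present the two proofs back-to-back in this order, each as a short three- or four-line calculation.
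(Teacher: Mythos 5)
Your proof is correct, and it is the standard textbook derivation. The paper itself states this claim without proof, treating both identities as well-known facts from information theory, so there is nothing in the paper to compare against; your argument fills that gap cleanly. One small point worth being explicit about: in part (2) you invoke the chain rule in its \emph{conditional} form, $\II(\RV{X};\RV{Y},\RV{W}\mid\RV{Z}) = \II(\RV{X};\RV{W}\mid\RV{Z}) + \II(\RV{X};\RV{Y}\mid\RV{Z},\RV{W})$, whereas part (1) as stated proves only the unconditional version. This is not a real gap, since the same entropy manipulation goes through verbatim after conditioning every term on $\RV{Z}$, but a careful write-up should note that the conditional chain rule is what is actually being applied.
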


\paragraph{Total variation distance, KL-divergence, and Hellinger distance.}
Let $P,Q$ be distributions on a discrete set $\Omega$. The \emph{total variation distance} between $P$ and $Q$, is defined as
$$\tvd{P}{Q} = \frac{1}{2} \cdot \sum_{\omega \in \Omega} \card{P(\omega)-Q(\omega)}.$$
The \emph{KL-divergence} between $P$ and $Q$ is defined as 
$$\dkl{P}{Q} = \sum_{\omega \in \Omega} P(\omega) \cdot\log \left( \frac{P(\omega)}{Q(\omega)} \right).$$
The \emph{Hellinger distance} between $P$ and $Q$ is defined as 
$$\hel(P,Q) = \sqrt{\frac{1}{2} \sum_{\omega \in \Omega} \bigg(\sqrt{P(\omega)} - \sqrt{Q(\omega)}\bigg)^2} = \sqrt{1-\sum_{\omega \in \Omega} \sqrt{P(\omega)Q(\omega)}}.$$

We use properties of the above measures between distributions.

\begin{claim} \label{clm:triangle}
    Let $P,Q,R$ be distributions on $\Omega$. Then $\hel(P,Q) \le \hel(P,R)+\hel(Q,R)$, and $\tvd{P}{Q} \le \tvd{P}{R}+\tvd{Q}{R}$.
\end{claim}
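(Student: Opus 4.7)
The plan is to observe that both distances are, up to a multiplicative constant, norm distances between vectors associated with the distributions, so the triangle inequality follows directly from the triangle inequality for the corresponding norms.

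First I would handle the total variation distance. Viewing $P$ and $Q$ as vectors in $\mathbb{R}^{|\Omega|}$ (whose $\omega$-th coordinate is $P(\omega)$ and $Q(\omega)$, respectively), the definition gives $\tvd{P}{Q} = \frac{1}{2} \|P - Q\|_1$, where $\|\cdot\|_1$ is the standard $\ell_1$ norm. Since $\|\cdot\|_1$ satisfies the triangle inequality, $\|P-Q\|_1 \le \|P-R\|_1 + \|Q-R\|_1$, and multiplying by $1/2$ gives the desired bound $\tvd{P}{Q} \le \tvd{P}{R} + \tvd{Q}{R}$.

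Next I would handle the Hellinger distance analogously, using the $\ell_2$ norm instead. Define the vector $\sqrt{P} \in \mathbb{R}^{|\Omega|}$ to have $\omega$-th coordinate $\sqrt{P(\omega)}$, and likewise for $\sqrt{Q}$ and $\sqrt{R}$. The first form of the definition gives $\hel(P,Q) = \frac{1}{\sqrt{2}}\|\sqrt{P} - \sqrt{Q}\|_2$. The triangle inequality for $\|\cdot\|_2$ yields $\|\sqrt{P}-\sqrt{Q}\|_2 \le \|\sqrt{P}-\sqrt{R}\|_2 + \|\sqrt{Q}-\sqrt{R}\|_2$, and multiplying by $1/\sqrt{2}$ gives $\hel(P,Q) \le \hel(P,R) + \hel(Q,R)$.

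This claim is entirely standard, and there is no real obstacle; the only thing to verify carefully is that the two formulas given in the definition of $\hel(P,Q)$ are equivalent (so that we may use the squared-difference form when invoking the $\ell_2$ triangle inequality), which is an immediate algebraic identity using $\sum_\omega P(\omega) = \sum_\omega Q(\omega) = 1$. Since no quantitative parameters need to be tracked and the two bounds are independent, the whole proof is only a few lines.
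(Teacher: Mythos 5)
Your proof is correct and is the standard argument: both distances are scaled $\ell_1$ and $\ell_2$ norm distances between the distribution vectors (respectively, their coordinatewise square roots), so the triangle inequality is inherited directly from the corresponding norm. The paper itself states this claim without proof, as a well-known fact, so there is no alternative argument to compare against; your derivation fills in exactly what one would expect.
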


\begin{claim}[Pinsker's Inequality] \label{clm:pinsker}
    Let $P,Q$ be distributions on $\Omega$. Then $\tvd{P}{Q} \le \sqrt{2\dkl{P}{Q}}$.
\end{claim}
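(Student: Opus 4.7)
The plan is to prove this by the standard two-step argument: first reduce the general case to the two-point (Bernoulli) case using the data processing inequality for KL-divergence, and then establish the two-point bound by elementary calculus. Since the paper's logarithms are base $2$, the stated bound $\tvd{P}{Q} \le \sqrt{2\dkl{P}{Q}}$ is substantially looser than the sharp classical Pinsker inequality (whose constant would be $(\ln 2)/2 \approx 0.347$ in place of $2$), which gives convenient quantitative slack.

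First I would identify the distinguishing event. Let $A = \{\omega \in \Omega : P(\omega) \ge Q(\omega)\}$. An elementary manipulation of the absolute values in the definition of TVD gives $\tvd{P}{Q} = P(A) - Q(A)$. Write $p := P(A)$ and $q := Q(A)$, so that $p \ge q$ by the defining property of $A$. Applying the data processing inequality for KL-divergence to the binary coarsening $\omega \mapsto \mathds{1}[\omega \in A]$, whose pushforwards of $P$ and $Q$ are $\mathrm{Bern}(p)$ and $\mathrm{Bern}(q)$ respectively, yields $\dkl{P}{Q} \ge \dkl{\mathrm{Bern}(p)}{\mathrm{Bern}(q)}$. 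Hence it suffices to prove the two-point inequality
\[
(p-q)^2 \;\le\; 2 \Bigl( p \log \tfrac{p}{q} + (1-p) \log \tfrac{1-p}{1-q} \Bigr) \qquad \text{for all } p, q \in [0,1],
\]
with the convention $0 \log 0 = 0$ and with boundary cases $q \in \{0,1\}$ handled by continuity (the right-hand side becomes $+\infty$ whenever $Q$ fails to dominate $P$, making the bound vacuous).

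For the two-point bound, fix $p$ and let $g(q)$ denote the right-hand side minus the left-hand side. A direct differentiation gives
\[
g'(q) \;=\; 2(q - p) \Bigl( \tfrac{1}{q(1-q)\,\ln 2} \,-\, 1 \Bigr).
\]
Since $q(1-q) \le 1/4$ on $(0,1)$, the parenthesized factor is at least $\tfrac{4}{\ln 2} - 1 > 0$, so $g'(q)$ has the same sign as $q - p$. Therefore $g$ decreases on $[0,p]$ and increases on $[p,1]$, giving $g(q) \ge g(p) = 0$ throughout. The main obstacle is really just this sign analysis, which is entirely routine thanks to the loose constant $2$; a sharper constant would force a more delicate argument. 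Combining the two-point bound with the reduction yields $\tvd{P}{Q} = p - q \le \sqrt{2\,\dkl{\mathrm{Bern}(p)}{\mathrm{Bern}(q)}} \le \sqrt{2\,\dkl{P}{Q}}$, as claimed.
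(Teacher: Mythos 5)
Your proposal is correct, though there is nothing in the paper to compare it against: the paper states \Cref{clm:pinsker} as a standard background fact in the preliminaries and supplies no proof (unlike \Cref{fac:hel-tvd} and \Cref{fac:hel-dkl}, it is not even given a citation). So the relevant question is simply whether your self-contained argument is sound, and it is. The reduction to the binary case is the textbook one: taking $A = \{\omega : P(\omega) \ge Q(\omega)\}$ gives $\tvd{P}{Q} = P(A) - Q(A)$, and the data processing inequality for KL applied to the indicator $\mathds{1}[\omega \in A]$ yields $\dkl{P}{Q} \ge \dkl{\mathrm{Bern}(p)}{\mathrm{Bern}(q)}$ with $p = P(A)$, $q = Q(A)$. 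Your derivative computation is also correct: with $g(q) = 2\bigl(p\log\tfrac{p}{q} + (1-p)\log\tfrac{1-p}{1-q}\bigr) - (p-q)^2$ and $\log = \log_2$, one gets $g'(q) = 2(q-p)\bigl(\tfrac{1}{q(1-q)\ln 2} - 1\bigr)$, the bracket is $\ge \tfrac{4}{\ln 2} - 1 > 0$, so $g$ has a global minimum at $q = p$ where it vanishes. You are also right that the constant $2$ here is loose relative to the sharp base-$2$ Pinsker constant $\tfrac{\ln 2}{2}$, which is why the sign analysis is so forgiving; the paper only needs the qualitative form, so the looser constant costs nothing downstream.
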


\begin{claim}[\!\!\!\cite{prahladhlecture}\!]
\label{fac:hel-tvd}
Let $P,Q$ be distributions on $\Omega$. Then $\hel^2(P,Q) \le \tvd{P}{Q} \le \sqrt{2} \cdot \hel(P,Q)$.
\end{claim}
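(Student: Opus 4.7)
The plan is to prove both inequalities by the standard algebraic identity $|P(\omega)-Q(\omega)| = |\sqrt{P(\omega)}-\sqrt{Q(\omega)}|\cdot(\sqrt{P(\omega)}+\sqrt{Q(\omega)})$, which connects the total variation distance (an $L^1$-type quantity in $P,Q$) to the Hellinger distance (an $L^2$-type quantity in $\sqrt{P},\sqrt{Q}$). I would start by writing out both definitions from the excerpt:
\[
\tvd{P}{Q} = \tfrac12\sum_{\omega\in\Omega}|P(\omega)-Q(\omega)|, \qquad \hel^2(P,Q) = \tfrac12\sum_{\omega\in\Omega}\bigl(\sqrt{P(\omega)}-\sqrt{Q(\omega)}\bigr)^2.
\]

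For the lower bound $\hel^2(P,Q)\le \tvd{P}{Q}$, I would observe that for every $\omega\in\Omega$ we have $\sqrt{P(\omega)}+\sqrt{Q(\omega)}\ge |\sqrt{P(\omega)}-\sqrt{Q(\omega)}|$, so multiplying both sides by the nonnegative quantity $|\sqrt{P(\omega)}-\sqrt{Q(\omega)}|$ yields $(\sqrt{P(\omega)}-\sqrt{Q(\omega)})^2 \le |P(\omega)-Q(\omega)|$. Summing over $\omega$ and multiplying by $1/2$ immediately gives $\hel^2(P,Q)\le\tvd{P}{Q}$.

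For the upper bound $\tvd{P}{Q}\le\sqrt2\cdot\hel(P,Q)$, I would apply Cauchy--Schwarz to the same factorization:
\[
\sum_{\omega}|P(\omega)-Q(\omega)| = \sum_{\omega}\bigl|\sqrt{P(\omega)}-\sqrt{Q(\omega)}\bigr|\cdot\bigl(\sqrt{P(\omega)}+\sqrt{Q(\omega)}\bigr) \le \sqrt{\sum_{\omega}\bigl(\sqrt{P(\omega)}-\sqrt{Q(\omega)}\bigr)^2}\cdot\sqrt{\sum_{\omega}\bigl(\sqrt{P(\omega)}+\sqrt{Q(\omega)}\bigr)^2}.
\]
The first factor equals $\sqrt{2}\cdot\hel(P,Q)$ by definition. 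For the second factor, I would expand $(\sqrt{P(\omega)}+\sqrt{Q(\omega)})^2 = P(\omega)+Q(\omega)+2\sqrt{P(\omega)Q(\omega)}$, use $\sum_\omega P(\omega)=\sum_\omega Q(\omega)=1$ together with the alternative form $\sum_\omega\sqrt{P(\omega)Q(\omega)}=1-\hel^2(P,Q)$ from the excerpt, and conclude that $\sum_\omega(\sqrt{P(\omega)}+\sqrt{Q(\omega)})^2 = 4-2\hel^2(P,Q)\le 4$. Substituting back gives $2\tvd{P}{Q}\le \sqrt{2}\hel(P,Q)\cdot 2$, i.e.\ $\tvd{P}{Q}\le\sqrt2\cdot\hel(P,Q)$.

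I do not expect a significant obstacle here: the argument is entirely elementary and the only nontrivial step is recognizing the product factorization and applying Cauchy--Schwarz. The only thing to be slightly careful about is handling $\omega$ where both $P(\omega)$ and $Q(\omega)$ vanish (in which case every term is $0$ and can be safely ignored), and making sure the identity $\sum_\omega\sqrt{P(\omega)Q(\omega)} = 1-\hel^2(P,Q)$ is invoked in the correct direction, since this is how the excerpt's own definition of $\hel$ is phrased.
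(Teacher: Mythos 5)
Your proof is correct, and it follows the standard textbook derivation (the same one that underlies Le Cam's inequalities). The paper does not supply its own proof of this claim — it simply cites it — so there is no internal argument to compare against; your factorization $|P(\omega)-Q(\omega)|=|\sqrt{P(\omega)}-\sqrt{Q(\omega)}|\bigl(\sqrt{P(\omega)}+\sqrt{Q(\omega)}\bigr)$, the pointwise bound $|\sqrt{P(\omega)}-\sqrt{Q(\omega)}|\le\sqrt{P(\omega)}+\sqrt{Q(\omega)}$ for the lower inequality, and Cauchy--Schwarz for the upper inequality together with $\sum_\omega(\sqrt{P(\omega)}+\sqrt{Q(\omega)})^2=4-2\hel^2(P,Q)\le 4$, is exactly the expected argument and fills the gap cleanly.
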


\begin{claim}[\!\!\!\cite{Lin91}\!]
\label{fac:hel-dkl}
Let $P,Q$ be distributions on $\Omega$. Then $$\hel^2(P,Q) \le \frac{1}{2}\cdot D_{\textnormal{\textsf{KL}}}\bigg(P\text{ } \bigg|\bigg|\text{ }\frac{P+Q}{2}\bigg)+ \frac{1}{2}\cdot D_{\textnormal{\textsf{KL}}}\bigg(Q\text{ } \bigg|\bigg|\text{ }\frac{P+Q}{2}\bigg).$$
\end{claim}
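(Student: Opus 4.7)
The plan is to reduce the claim to a pointwise inequality in the per-coordinate masses and then verify the resulting one-variable inequality by calculus. Setting $M = (P+Q)/2$, both sides of the claim decompose as sums over $\omega \in \Omega$: we have $\hel^2(P,Q) = \sum_\omega \tfrac{1}{2}(\sqrt{P(\omega)}-\sqrt{Q(\omega)})^2$, and the right-hand side equals $\sum_\omega \tfrac{1}{2}\bigl[P(\omega)\log\tfrac{2P(\omega)}{P(\omega)+Q(\omega)} + Q(\omega)\log\tfrac{2Q(\omega)}{P(\omega)+Q(\omega)}\bigr]$. Thus it suffices to prove that, for all nonnegative reals $p,q$ with $p+q>0$,
$$\tfrac{1}{2}(\sqrt{p}-\sqrt{q})^2 \;\le\; \tfrac{1}{2}\Bigl[p\log\tfrac{2p}{p+q} + q\log\tfrac{2q}{p+q}\Bigr].$$

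Next I would normalize. Both sides of this pointwise inequality are positively homogeneous of degree $1$ in $(p,q)$: scaling $(p,q) \mapsto (\lambda p,\lambda q)$ scales both sides by $\lambda$, since the arguments of the logarithms are scale-invariant. So we may assume $p+q=2$ and parametrize $p = 1+t$, $q = 1-t$ for $t \in [-1,1]$. The claim then reduces to proving
$$\phi(t) \;:=\; \tfrac{1}{2}\bigl[(1+t)\log(1+t) + (1-t)\log(1-t)\bigr] - \bigl(1 - \sqrt{1-t^2}\bigr) \;\ge\; 0 \quad \text{on } [-1,1].$$
Since $\phi(-t) = \phi(t)$, one only needs to verify $\phi \ge 0$ on $[0,1]$.

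The key observation is that $\phi$ vanishes at \emph{both} endpoints: $\phi(0) = 0$ trivially, and $\phi(1) = \log 2 - 1 = 0$ under the paper's convention that $\log$ denotes $\log_2$. To finish, I would compute the derivative
$$\phi'(t) \;=\; \tfrac{1}{2}\log\tfrac{1+t}{1-t} - \tfrac{t}{\sqrt{1-t^2}},$$
verify $\phi'(0) = 0$, and note from the Taylor expansion $\phi(t) = \tfrac{1-\ln 2}{2\ln 2}\,t^2 + O(t^4)$ that $\phi$ rises strictly from $0$ near $t = 0$. Meanwhile, as $t \to 1^-$, the term $\tfrac{t}{\sqrt{1-t^2}} \sim (1-t)^{-1/2}$ dominates $\log\tfrac{1+t}{1-t} \sim \log\tfrac{1}{1-t}$, so $\phi'(t) \to -\infty$. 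A sign-of-$\phi''$ analysis then shows $\phi'$ has exactly one sign change on $(0,1)$, giving $\phi$ a unique interior maximum and forcing $\phi \ge 0$ throughout $[0,1]$.

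The main obstacle is precisely the last step: the two-sided tightness of the bound at both $t=0$ and $t=1$ means that neither a naive Taylor estimate near $0$ nor a local expansion near $1$ suffices on its own, and that standard chaining arguments (triangle inequality for Hellinger distance followed by the $h^2 \le \tfrac{1}{2} D_{\mathsf{KL}}$ bound) only yield a bound weaker by a factor of $2\ln 2$. One must instead carry out the global calculus analysis of $\phi'$ (or equivalently $\phi''$) on the entire interval $[0,1]$ to rule out an interior zero of $\phi$, which is the heart of Lin's original argument in \cite{Lin91}.
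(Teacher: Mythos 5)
The paper cites this inequality to \cite{Lin91} and does not prove it, so there is no in-paper argument to compare against; your proof fills the gap and is correct. The route you take --- pointwise decomposition over $\Omega$, degree-one homogeneity to normalize $p+q=2$, and a one-variable calculus argument for $\phi(t)$ on $[0,1]$ --- is the natural one, and the observation that the inequality is tight at \emph{both} $t=0$ and $t=1$ under the paper's base-$2$ convention (so that chaining the Hellinger triangle inequality with $\hel^2 \le \tfrac12 D_{\mathsf{KL}}$ in nats loses a factor $2\ln 2$) is exactly right.

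The one step you leave as a sketch --- the ``sign-of-$\phi''$ analysis'' --- does go through. With $\log$ to base $2$,
\[
\phi''(t) \;=\; \frac{1}{\ln 2\,(1-t^2)} \;-\; \frac{1}{(1-t^2)^{3/2}} \;=\; \frac{1}{1-t^2}\left(\frac{1}{\ln 2} - \frac{1}{\sqrt{1-t^2}}\right),
\]
which is positive precisely when $t^2 < 1-(\ln 2)^2$. Hence $\phi'$ increases on $[0,t_0]$ and then decreases on $[t_0,1)$, where $t_0=\sqrt{1-(\ln 2)^2}$. Since $\phi'(0)=0$ (so $\phi'>0$ on $(0,t_0]$) and $\phi'(t)\to-\infty$ as $t\to 1^-$, the derivative $\phi'$ has exactly one zero $t_1\in(t_0,1)$; thus $\phi$ increases on $[0,t_1]$ and decreases on $[t_1,1]$. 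Combined with $\phi(0)=\phi(1)=0$, this forces $\phi\ge 0$ on $[0,1]$, completing your argument.
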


Let $\RV{X},\RV{Y}$ be any random variables.
We denote by $P_{\RV{X}}$ the distribution of $\RV{X}$, by $P_{\RV{Y}}$ the distribution of $\RV{X}$, and by $P_{\RV{X}\mid \RV{Y}}$ the conditional distribution of $\RV{X}$ given $\RV{Y}$. We use the following claim.

\begin{claim}
\label{clm: II to KL}
$\II(\RV{X};\RV{Y}) = \mathbb{E}_{\RV{Y}}
\bigg[D_{\textnormal{\textsf{KL}}}\big(P_{\RV{X} \mid \RV{Y}}\mid\mid P_{\RV{X}}\big)\bigg]$.
\end{claim}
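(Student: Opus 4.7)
The plan is to unfold both sides from their definitions and verify they are equal by a direct computation, writing joint probabilities as $P(x,y) = P_{\RV{Y}}(y)\cdot P_{\RV{X}\mid \RV{Y}}(x\mid y)$.

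First I would recall that mutual information admits the equivalent formulation
\[
\II(\RV{X};\RV{Y}) = \sum_{x,y} P_{\RV{X},\RV{Y}}(x,y) \log \frac{P_{\RV{X},\RV{Y}}(x,y)}{P_{\RV{X}}(x)\, P_{\RV{Y}}(y)},
\]
which follows directly from the definition $\II(\RV{X};\RV{Y}) = \HH(\RV{X}) - \HH(\RV{X}\mid \RV{Y})$ by writing the entropies as sums and combining logarithms. Next I would factor the joint distribution as $P_{\RV{X},\RV{Y}}(x,y) = P_{\RV{Y}}(y) \cdot P_{\RV{X}\mid \RV{Y}}(x\mid y)$, which cancels the $P_{\RV{Y}}(y)$ in the denominator inside the logarithm, giving
\[
\II(\RV{X};\RV{Y}) = \sum_{y} P_{\RV{Y}}(y) \sum_{x} P_{\RV{X}\mid \RV{Y}}(x\mid y) \log \frac{P_{\RV{X}\mid \RV{Y}}(x\mid y)}{P_{\RV{X}}(x)}.
\]
The inner sum is precisely $D_{\textnormal{\textsf{KL}}}\!\left(P_{\RV{X}\mid \RV{Y}=y}\,\|\,P_{\RV{X}}\right)$ by the definition of KL-divergence, so the whole expression equals $\mathbb{E}_{\RV{Y}}\!\left[D_{\textnormal{\textsf{KL}}}(P_{\RV{X}\mid \RV{Y}}\,\|\,P_{\RV{X}})\right]$, completing the proof.

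There is no real obstacle here: the claim is a standard identity whose verification is a one-line manipulation once the definitions are written out. The only thing to be slightly careful about is handling the case $P_{\RV{X},\RV{Y}}(x,y) = 0$, which is dealt with by the usual convention $0 \log 0 = 0$ (and the fact that $P_{\RV{X},\RV{Y}}(x,y) = 0$ whenever $P_{\RV{Y}}(y) = 0$, so those terms do not contribute to the outer expectation either).
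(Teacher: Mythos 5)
Your proof is correct and is the standard derivation of this identity; the paper itself states the claim without proof, treating it as a known fact from information theory. Your handling of the zero-probability terms via the $0\log 0 = 0$ convention is also the right way to make the manipulation fully rigorous.
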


\subsection{Communication Complexity and Information Complexity}

In this paper we will work with two standard communication models: the \emph{2-player one-way communication} model and the
\emph{multi-party blackboard communication} model.
In this subsection we provide the definitions of communication complexity and information complexity in these two models.

\paragraph{Two-player one-way communication model.}
Let $\mathcal{X},\mathcal{Y},\mathcal{Z}$ be discrete sets and let $P\subseteq \mathcal{X} \times \mathcal{Y} \times \mathcal{Z}$ be a relation (we will also call $P$ a \emph{problem}). 
Let $\dset$ be a distribution over the product set $\mathcal{X}\times\mathcal{Y}$, and let $(X,Y)$ (where $X\in \xset$ and $Y\in \yset$) be a pair sampled from $\dset$.
Alice receives $X$ as her input and Bob receives $Y$ as his input.
In the 2-player one-way communication model, Alice is allowed to send to Bob a single message based on her input $X$, and Bob, upon receiving this message, needs to output an answer $Z$. We say that the answer is correct iff $(X,Y,Z)\in P$.

We always allow a one-way protocol of Alice and Bob to use both public and private randomness, even against a prior distribution $\dset$ on the input pairs $(X,Y)\in \xset\times\yset$.
Let $\pi$ be an one-way protocol. We say that $\pi$ is an $\delta$-error protocol (for some $0<\delta<1$) for $P$ over the distribution $\dset$, iff the probability that, upon receiving a random pair $(X,Y)\in \xset\times\yset$ from distribution $\dset$ as the input, the protocol computes an answer $Z$ with $(X,Y,Z)\notin P$, is at most $\delta$.

\begin{definition}
The cost of a one-way protocol $\pi$ for a problem $P$ with respect to an input distribution $\dset$, denoted by $\cc_{\dset}^{\textnormal{1-way}}(\pi)$, is defined to be the expected bit-length of the transcript communicated from Alice to Bob in $\pi$, when the input pair is sampled from distribution $\dset$. 
The communication complexity of an one-way protocol $\pi$, denoted by $\cc^{\textnormal{1-way}}(\pi)$, is defined to be the maximum cost of $\pi$ with respect to any distribution, namely  $\cc^{\textnormal{1-way}}(\pi)=\max_{\dset}\set{\cc_{\dset}^{\textnormal{1-way}}(\pi)}$.
\end{definition}

If the input pair is $(X,Y)$, we denote by $\Pi_{X,Y}$ the transcript of protocol $\pi$ when executed on the input pair $(X,Y)$. 
When the procotol $\pi$ is randomized, we denote by $\RV{\Pi}_{X,Y}$ the random variable representing the transcript of $\pi$ when the input pair is $(X,Y)$. When the inputs are random variables $(\RV{X},\RV{Y})$, we denote $\RV{\Pi}=\RV{\Pi}_{\RV{X},\RV{Y}}$.

\begin{definition}
Let $\pi$ be a protocol for a problem $P$ with respect to input distribution $\dset$. The \emph{information cost} of $\pi$ is defined to be $\ic_{\dset}(\pi) = \II_{\dset}(\RV{\Pi} ; \RV{X} | \RV{Y}) + \II_{\dset}(\RV{\Pi} ; \RV{Y} | \RV{X})$.
\end{definition}

We use the following well-known propositions.

\begin{proposition}[\!\!\!\cite{BravermanR14}\!] \label{fac:cc-ic}
Let $P$ be a problem and let $\dset$ be an input distribution of $P$. Then for any one-way protocol $\pi$ for $P$ over the distribution $\dset$, $\ic_{\dset}(\pi) \le \cc^{\textnormal{1-way}}_{\dset}(\pi)$.
\end{proposition}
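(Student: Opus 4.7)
The plan is to exploit the asymmetry of one-way communication: since Alice sends a single message to Bob and Bob sends nothing back, the transcript $\RV{\Pi}$ is a function of $\RV{X}$ together with the private randomness of Alice (and, if present, the public randomness shared by both players). First, I would condition throughout on the public randomness $\RV{R}$ and argue it suffices to prove the inequality pointwise in $\RV{R}$; since both information cost and expected communication are linear in the distribution over $\RV{R}$, averaging recovers the full bound. So WLOG assume the protocol uses only private randomness $\RV{R}_A$ for Alice and $\RV{R}_B$ for Bob, independent of $(\RV{X},\RV{Y})$ and of each other.

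Next, I would handle the second term $\II_{\dset}(\RV{\Pi};\RV{Y}\mid \RV{X})$ of the information cost. Because the message is produced by Alice on input $\RV{X}$ using only her private coins $\RV{R}_A$, the random variable $\RV{\Pi}$ is a deterministic function of $(\RV{X},\RV{R}_A)$. Since $\RV{R}_A$ is independent of $(\RV{X},\RV{Y})$, conditioning on $\RV{X}$ makes $\RV{\Pi}$ independent of $\RV{Y}$, so $\II_{\dset}(\RV{\Pi};\RV{Y}\mid \RV{X})=0$. Thus the information cost reduces to the single term $\II_{\dset}(\RV{\Pi};\RV{X}\mid \RV{Y})$.

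For the remaining term, I would use the standard chain $\II(\RV{\Pi};\RV{X}\mid \RV{Y})\le \HH(\RV{\Pi}\mid \RV{Y})\le \HH(\RV{\Pi})$, so it remains to show $\HH(\RV{\Pi})\le \mathbb{E}[|\RV{\Pi}|]$. This is where the structural assumption on protocols enters: in any valid communication protocol, Bob must be able to detect when Alice's message ends, which means the set of possible transcripts $\{\pi\}$ forms a prefix-free (instantaneous) code over $\{0,1\}$. By Kraft's inequality applied to this code, Shannon's source coding bound yields $\HH(\RV{\Pi})\le \mathbb{E}[|\RV{\Pi}|]$. Combining with the definition $\cc^{\textnormal{1-way}}_{\dset}(\pi)=\mathbb{E}[|\RV{\Pi}|]$ completes the argument.

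The proof has no real obstacle; the main subtlety to be careful about is the prefix-free structure of transcripts (without it, one only gets $\HH(\RV{\Pi})\le \mathbb{E}[|\RV{\Pi}|]+O(1)$ by writing down the length prefix explicitly) and the correct handling of public randomness (by averaging $\II_{\dset}(\RV{\Pi};\RV{X}\mid \RV{Y},\RV{R})\le \mathbb{E}_{\RV{R}}\,\mathbb{E}[|\RV{\Pi}|\mid \RV{R}]$ and using Claim~\ref{clm: basic properties}(\ref{fac:mul-ind}) to remove the conditioning on $\RV{R}$ without decreasing mutual information, since $\RV{R}$ is independent of $(\RV{X},\RV{Y})$).
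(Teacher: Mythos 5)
Your proof is correct. The paper does not actually prove this proposition --- it is cited from \cite{BravermanR14} --- so there is no in-paper argument to compare against. Your argument is the natural specialization of the general ``information cost is at most communication cost'' proof to the one-way setting, where the structure buys two genuine simplifications: (i) the term $\II(\RV{\Pi};\RV{Y}\mid\RV{X})$ vanishes outright because $\RV{\Pi}$ is a deterministic function of $\RV{X}$ and randomness independent of $(\RV{X},\RV{Y})$, and (ii) the remaining term reduces to $\HH(\RV{\Pi})\le \mathbb{E}[|\RV{\Pi}|]$ rather than requiring the per-round chain-rule decomposition used for multi-round protocols in \cite{BravermanR14}. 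You also correctly isolate the two places a careless version of this argument can silently go wrong: the prefix-freeness of the transcript code (needed to get $\HH(\RV{\Pi})\le\mathbb{E}[|\RV{\Pi}|]$ rather than the weaker bound with a $+O(\log)$ overhead), and the fact that adding conditioning on public coins, which are independent of the inputs, can only increase the mutual information (which is exactly item~\ref{fac:mul-ind} of \Cref{clm: basic properties}). No gaps.
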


\begin{proposition}[\!\!\!\cite{Bar-YossefJKS04}\!] \label{fac:rect}
Let $\pi$ be any randomized protocol. Let $X,X'$ be any pair of elements in $\mathcal{X}$ and let $Y,Y'$ be any pair of elements in $\mathcal{Y}$. Then, $\hel(\RV{\Prot}_{X,Y},\RV{\Prot}_{X',Y'}) = \hel(\RV{\Prot}_{X,Y'},\RV{\Prot}_{X',Y})$. 
\end{proposition}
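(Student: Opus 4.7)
The plan is to invoke the standard \emph{rectangle property} of communication transcripts, which states that for any randomized protocol, the probability of observing a transcript $\tau$ on inputs $(X,Y)$ factorizes as
$$P_{X,Y}(\tau) = q_X(\tau) \cdot q_Y(\tau),$$
where $q_X$ depends only on Alice's input and $q_Y$ only on Bob's. In the one-way model this is essentially immediate: after fixing the public randomness $r$, Alice's message $m$ is sampled from a distribution that depends only on $(X,r)$, and Bob's output $z$ is then sampled from a distribution that depends only on $(Y,m,r)$. Including $r$ as part of the transcript $\tau = (r,m,z)$ so that it sits on both sides of the factorization, we obtain the desired pointwise product form.

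Once this factorization is in hand, the key identity drops out termwise: for every transcript $\tau$,
$$\sqrt{P_{X,Y}(\tau) \cdot P_{X',Y'}(\tau)} = \sqrt{q_X(\tau)\, q_Y(\tau)\, q_{X'}(\tau)\, q_{Y'}(\tau)} = \sqrt{P_{X,Y'}(\tau) \cdot P_{X',Y}(\tau)}.$$
Summing over all transcripts $\tau$ and using the alternative formulation $\hel^2(P,Q) = 1 - \sum_\tau \sqrt{P(\tau) Q(\tau)}$ recorded in the preliminaries yields
$$\hel^2(\RV{\Prot}_{X,Y}, \RV{\Prot}_{X',Y'}) = \hel^2(\RV{\Prot}_{X,Y'}, \RV{\Prot}_{X',Y}),$$
and taking square roots gives the claim.

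The only delicate point, and what I would flag as the main obstacle, is treating public randomness cleanly: if $r$ is \emph{not} bundled into $\tau$, the factorization only holds conditionally on $r$, and combining across $r$ would force an inequality (via Jensen or Cauchy--Schwarz) rather than an equality. The standard remedy, which I would adopt, is the convention just described above of making the public coins part of the transcript, which restores the pointwise rectangle property and makes the one-line calculation go through as an exact equality. No other step involves anything beyond direct manipulation of the factorized transcript probabilities.
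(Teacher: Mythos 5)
Your argument is the standard cut-and-paste proof and is correct as far as it goes; the paper cites this proposition from Bar-Yossef, Jayram, Kumar, and Sivakumar without reproducing a proof, so there is no in-paper argument to compare against. Your handling of public coins is the right observation and the right fix: without bundling $r$ into the transcript the factorization would only hold conditionally on $r$, and averaging over $r$ would turn the pointwise equality into an inequality (indeed one can construct protocols, e.g.\ ones where the public coin decides which party speaks, for which the identity fails if $r$ is dropped from the transcript). Including $r$ in $\tau$ restores the exact product form and is consistent with how the paper's information-cost quantities treat $\RV{\Pi}$.

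The one incomplete step is that you verify the rectangle factorization only in the one-way model, while the proposition is stated for ``any randomized protocol'' and is later invoked in the paper for $2$-player blackboard protocols with several rounds (e.g.\ the repeated applications of this proposition when bounding $\hel(\Pi_{22},\Pi_{33})$ in the bipartite-matching lower bound). The missing piece is routine: for an alternating protocol with transcript $\tau=(r,m_1,\dots,m_k)$, conditional on the history each $m_i$ is drawn from a distribution depending only on the speaker's input and the prefix, so
\[
P_{X,Y}(\tau) \;=\; P(r)\cdot \prod_{i:\ \text{Alice speaks}} P\bigl(m_i \mid X, r, m_{<i}\bigr)\cdot \prod_{i:\ \text{Bob speaks}} P\bigl(m_i \mid Y, r, m_{<i}\bigr),
\]
which factors as $q_X(\tau)\cdot q_Y(\tau)$ after splitting $P(r)$ between the two groups. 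From there your termwise identity $\sqrt{q_X q_Y q_{X'} q_{Y'}}=\sqrt{q_X q_{Y'} q_{X'} q_Y}$ and the Bhattacharyya form of the Hellinger distance close the argument exactly as in the one-way case.
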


\paragraph{Multi-party blackboard communication model.} Let $P\subseteq \mathcal{X}_1 \times\dots \times \mathcal{X}_k \times \mathcal{Y}$ be a relation.
In the multi-party communication model, there are $k$ players, that are denoted by $P_1,\ldots,P_k$.
For each $1\le i\le k$, the player $P_i$ receives an element $X_i\in \xset_i$ as her own input, where the tuple $(X_1,\ldots,X_k)$ is sampled from some input distribution $\dset$ on the product set $\mathcal{X}_1 \times\dots \times \mathcal{X}_k$.
The players $P_1,\ldots,P_k$ may exchange several messages publicly by writing them on a shared blackboard for all other players to see.
Eventually, some players will announce an output $Y$ of the function. We say that the answer is correct iff $(X_1,\ldots,X_k,Y)\in P$.
For convenience, we will also call it the blackboard model.

Similar to the two-player one-way communication model, we allow a protocol to use both public and private randomness, and we say that $\pi$ is an $\delta$-error protocol (for some $0<\delta<1$) for $P$ over the distribution $\dset$, iff the probability that, upon receiving a random tuple $(X_1,\ldots,X_k)\in \mathcal{X}_1 \times\dots \times \mathcal{X}_k$ from distribution $\dset$ as the input, the protocol computes an answer $Y$ with $(X_1,\ldots,X_k,Y)\notin P$, is at most $\delta$.
The cost of a protocol $\pi$ for a problem $P$ with respect to an input distribution $\dset$, denoted by $\cc_{\dset}(\pi)$, is defined to be the expected bit-length of the transcript communicated between the players $P_1,\ldots,P_k$ in $\pi$, when their input is sampled from $\dset$. 
The communication complexity of a protocol $\pi$, denoted by $\cc(\pi)$, is defined to be the maximum cost of $\pi$ with respect to any distribution.

We denote $X = (X_1,\ldots,X_k)$, and for each $1\le i\le k$, we denote $X_{-i} = (X_1,\ldots,X_{i-1},X_{i+1},\ldots,X_k)$. Let $\pi$ be a protocol for $P$ over $\dset$. We denote by $\Pi$ the transcript of $\pi$ and by $\RV{\Pi}$ the random variable representing the transcript of $\pi$, when $\pi$ is randomized.
We use the following propositions.

\begin{proposition} \label{fac:multi-cc}
Let $P$ be a problem and let $\dset$ be an input distribution.
Let $\pi$ be any $\delta$-error protocol for $P$ over $\dset$ in the blackboard model. Then for each $1\le i\le k$, $\II_{\dset}(\RV{\Pi};\RV{X}_{-i}|\RV{X}_i) \le \cc_{\dset}(\pi)$.
\end{proposition}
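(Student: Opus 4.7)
The plan is to establish the inequality via the standard three-step chain: bound mutual information by a conditional entropy, bound that conditional entropy by the unconditional entropy of the transcript, and finally bound the entropy of the transcript by its expected bit-length, which by definition equals $\cc_{\dset}(\pi)$.

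More concretely, the first step uses the chain-rule-style decomposition
\[
\II_{\dset}(\RV{\Pi};\RV{X}_{-i}\mid \RV{X}_i) \;=\; \HH(\RV{\Pi}\mid \RV{X}_i) - \HH(\RV{\Pi}\mid \RV{X}_i,\RV{X}_{-i}).
\]
Since conditional entropy is nonnegative (the residual uncertainty in $\RV{\Pi}$ given the full input tuple comes from the public and private coin tosses used by the protocol, and can only be $\ge 0$), we may drop the second term to get $\II_{\dset}(\RV{\Pi};\RV{X}_{-i}\mid \RV{X}_i) \le \HH(\RV{\Pi}\mid \RV{X}_i)$. The second step invokes the fact that conditioning never increases entropy, yielding $\HH(\RV{\Pi}\mid \RV{X}_i) \le \HH(\RV{\Pi})$.

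For the third step, we regard the blackboard transcript as a binary string drawn according to some distribution over $\{0,1\}^*$. Using the standard convention that the transcript is a prefix-free encoding (each player's message either has a length determined by the protocol or is padded using a self-delimiting prefix code so that message boundaries can be recognized by all parties), Shannon's source coding theorem gives $\HH(\RV{\Pi}) \le \expect[|\RV{\Pi}|]$, and by definition $\expect[|\RV{\Pi}|]=\cc_{\dset}(\pi)$. Composing the three inequalities yields the claim.

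The only subtlety, rather than a real obstacle, lies in step three: one must be careful that $\RV{\Pi}$ is defined in a form for which the prefix-free source-coding bound applies cleanly. Once this encoding convention is fixed, the rest of the argument is a routine three-line calculation using only the nonnegativity of conditional entropy, the fact that conditioning cannot increase entropy, and Shannon's inequality, none of which require any structural assumption on the multi-party protocol $\pi$.
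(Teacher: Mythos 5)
Your proof is correct and follows essentially the same chain the paper uses (in its commented-out proof): $\II_{\dset}(\RV{\Pi};\RV{X}_{-i}\mid\RV{X}_i)\le \HH(\RV{\Pi})\le \expect[|\RV{\Pi}|]=\cc_{\dset}(\pi)$, with your version merely unpacking the first inequality into two textbook sub-steps and flagging the prefix-free encoding convention needed for Shannon's bound. No substantive difference in approach.
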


\begin{proposition}[\!\!\!\cite{Bar-YossefJKS04}\!] \label{fac:multi-rect}
Let $P$ be a problem and let $\pi$ be a randomized protocol for $P$. Let $X,X'$ be a pair of inputs to $P$. Let $A\subseteq\set{1,\ldots,k}$ be an index set. 
If we define $X^{a}$ as the vector such that for each $i \in A$, $X^{a}_i=X_i$, and for each $i \notin A$, $X^{a}_i=X'_i$; and we define $X^{b}$ as the vector such that for each $i\in A$, $X^{b}_i=X'_i$, and for each $i \notin A$, $X^{b}_i=X_i$, then $\hel(\RV{\Pi}_X,\RV{\Pi}_{X'}) = \hel(\RV{\Pi}_{X^a},\RV{\Pi}_{X^b})$.
\end{proposition}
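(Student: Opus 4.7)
The plan is to mirror the two-player proof of Proposition~\ref{fac:rect}, extending the standard cut-and-paste / combinatorial rectangle property of transcripts to the blackboard model, and then plugging the resulting factorization into the Bhattacharyya form $1 - \hel^2(P,Q) = \sum_\omega \sqrt{P(\omega)\, Q(\omega)}$.

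First, I would establish the factorization. I would view the transcript $\tau = (r, m_1, m_2, \ldots)$ as consisting of the realized public coins $r$ together with the sequence of messages written on the blackboard. In each round the protocol deterministically specifies (as a function of $r$ and the history) which player speaks; that player's message is then drawn from a distribution depending only on $r$, the history, her own input $X_i$, and her private randomness $\rho_i$. Marginalizing over each $\rho_i$ and collecting all factors contributed by player $i$ along $\tau$ yields nonnegative functions $q_0, q_1, \ldots, q_k$ such that
\[\Pr[\RV{\Pi}_X = \tau] \;=\; q_0(\tau) \cdot \prod_{i=1}^{k} q_i(\tau, X_i),\]
where $q_0(\tau) = \Pr[R=r]$ and $q_i(\tau, X_i)$ depends on $X$ only through its $i$-th coordinate. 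This is the multi-player analogue of the rectangle property used in the two-player case.

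Next, I would substitute this into the Bhattacharyya identity. For any two input vectors $Y = (Y_1,\ldots,Y_k)$ and $Z = (Z_1,\ldots,Z_k)$,
\[1 - \hel^2\big(\RV{\Pi}_Y, \RV{\Pi}_Z\big) \;=\; \sum_\tau q_0(\tau) \prod_{i=1}^{k} \sqrt{q_i(\tau, Y_i) \cdot q_i(\tau, Z_i)}.\]
Applying this to the pair $(X, X')$ and to the pair $(X^a, X^b)$ produces two expressions that differ only in the arguments inside the square roots. The cut-and-paste observation is now immediate: for every index $i$, the unordered pair $\{X^a_i, X^b_i\}$ equals $\{X_i, X'_i\}$ -- if $i \in A$ then $(X^a_i, X^b_i) = (X_i, X'_i)$, and if $i \notin A$ then $(X^a_i, X^b_i) = (X'_i, X_i)$. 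Hence $q_i(\tau, X^a_i) \cdot q_i(\tau, X^b_i) = q_i(\tau, X_i) \cdot q_i(\tau, X'_i)$ for every $i$ and every $\tau$, so the two Hellinger-squared values coincide term by term.

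The main obstacle is executing the factorization step cleanly in the presence of both public and private randomness. With only private randomness, the factorization of $\Pr[\RV{\Pi}_X = \tau]$ into a product over players is standard and immediate. Public randomness is handled, as above, by including $r$ in the transcript and absorbing $\Pr[R=r]$ into a single coordinate-free factor $q_0(\tau)$, which is common to all four distributions under consideration and therefore drops out of the swap. No further properties of the protocol or of the input sets $\mathcal{X}_i$ are required.
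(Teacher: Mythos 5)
Your proof is correct, and it is the standard cut-and-paste argument from Bar-Yossef et al.~\cite{Bar-YossefJKS04}. The paper does not reprove this proposition (it is imported by citation), but your factorization of $\Pr[\RV{\Pi}_X=\tau]$ into per-player factors $q_i(\tau,X_i)$ times a coordinate-free $q_0(\tau)$, followed by the Bhattacharyya-coefficient rewrite of $1-\hel^2$ and the observation that the unordered pair $\{X^a_i,X^b_i\}$ equals $\{X_i,X'_i\}$ for every $i$, is exactly the original argument, and your handling of public coins (fold $r$ into the transcript, absorb $\Pr[R=r]$ into $q_0$) is the right way to make the rectangle property go through.
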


Lastly, we mention the connection between
the communication complexity defined in this section and the space complexity of streaming algorithms. It is well-known that any streaming algorithm can be immediately converted into a one-way communication protocol, and therefore any communication complexity lower bound in the one-way communication model implies the same lower bound on the space complexity of one-pass streaming algorithms. 
Similarly, any communication complexity lower bound in the blackboard communication model also implies the same lower bound on the space complexity of multi-pass streaming algorithms.

\section{One-Pass MST Estimation in Metric Streams}

In this section, we provide results on one-pass streaming algorithms for MST cost estimation, when the algorithm is given as input a metric stream. We will show an upper bound on the space complexity in \Cref{subsec: 1 pass MST upper} and a lower bound in \Cref{subsec: 1 pass MST lower proof}.
Our proof for the lower bound in \Cref{subsec: 1 pass MST lower proof} is via analyzing a two-player communication game, and we show in \Cref{Optimality of 1-pass n/S^2 lower bound} that our analysis in \Cref{subsec: 1 pass MST lower proof} is asymptotically optimal by giving a near-optimal protocol for the game.

\subsection{A One-Pass Streaming Algorithm for MST Estimation}
\label{subsec: 1 pass MST upper}

In this subsection, we will construct a one-pass streaming algorithm that outputs an $\alpha$-approximation of the minimum spanning tree weight, using $\tilde O(n/\alpha)$ space, thus establishing \Cref{thm: 1 pass alpha MST upper}.
At a high level, our algorithm computes the estimate by sampling a subset $V'$ of vertices, computing the weight of an MST on vertices of $V'$, and then adding to it an estimate of the distances of all remaining vertices to $V'$.
Clearly, the result computed by the algorithm is the total weight of some spanning tree, and is, therefore, lower bounded by $\mst$. 

\paragraph{Algorithm.}
We perform three tasks in parallel over the stream.
In the first task, we simply record the largest weight over all edges in the stream, namely we record $\diam(w)=\max\set{w(u,v)\mid u,v\in V}$, the diameter of metric $w$, using $\tilde O(1)$ space.
In the second task, we sample a subset $V'$ of $k=\ceil{n/\alpha}$ distinct vertices in $V$ uniformly at random at the start, and then compute a minimum spanning tree $T'$ on vertices of $V'$ and its cost $\mst'$, using $\tilde O(n/\alpha)$ space over the stream.
In the third task, we sample a subset $V''$ of $k=\ceil{n/\alpha}$ vertices in set $V\setminus V'$ uniformly at random at the start, and for each vertex $v\in V''$, we record $\dist_w(v,V')=\min\set{w(v,v')\mid v'\in V'}$, using a total of $\tilde O(n/\alpha)$ space.
Finally, we output $\mst'+100\cdot\alpha\log n\cdot(\diam(w)+\sum_{v\in V''}\dist_w(v,V'))$ as the estimate of the cost of a minimum spanning tree on $V$.

\paragraph{Approximation Ratio Analysis.}
It is clear that the algorithm described above uses $\tilde O(n/\alpha)$ space. We show that with high probability, 
\begin{equation}
\label{eqn: main}
\mst\le \mst'+100\cdot\alpha\log n\cdot\bigg(\diam(w)+\sum_{v\in V''}\dist_w(v,V')\bigg)\le \tilde O(\alpha)\cdot\mst.
\end{equation}

Let $T$ be a minimum spanning tree on $V$. Let $\pi=(v_1,v_2,\ldots,v_{2n-2})$ be an Euler-tour of $T$, and for each $1\le t\le 2n-2$, we denote by $R_{\pi,t}=\set{v_i\mid t\le i\le t+100\alpha\log n}$.
We now define two bad events such that if neither of them happens, then \ref{eqn: main} holds. We then show that with high probability, neither event happens.

\paragraph{Bad event $\xi_1$.}
Let $\xi_1$ be the event that there exists some $t$, such that $R_{\pi,t}\cap V'=\emptyset$.
We now show that $\Pr[\xi_1]=O(n^{-49})$.
Since each edge of $T$ appears at most twice in the set $\set{(v_i,v_{i+1})\mid v_i\in R_{\pi,t}}$, $R_{\pi,t}$ contains at least $50\alpha\log n$ distinct vertices. Therefore, the probability that a random subset of $\ceil{n/\alpha}$ vertices in $V$ does not intersect with $R_{\pi,t}$ is at most $(1-(50\alpha\log n/n))^{n/\alpha}\le n^{-50}$. Taking the union bound over all $1\le t\le 2n-2$, we get that $\Pr[\xi_1]=O(n^{-49})$. 
Note that, if the event $\xi_1$ does not happen, then every consecutive window of $\pi$ of length $100\alpha\log n$ contains at least one element of $V'$.

Let $v_{i_1},v_{i_2},\ldots,v_{i_{k'}}$ be the vertices of $\pi$ that belongs to $V'$ ($k'$ may be larger than $k$ since we keep all copies of the same vertex), then for each $1\le j\le k'$, $|i_j-i_{j+1}|\le 100\alpha\log n$.

\begin{claim}
\label{claim: mst simple lower bound}
If the event $\xi_1$ does not happen, then $\mst\le \mst'+\sum_{v\in V\setminus V'}\emph{\dist}_w(v,V')\le \tilde O(\alpha)\cdot\mst$. 
\end{claim}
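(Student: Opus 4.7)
\textbf{Proof plan for Claim~\ref{claim: mst simple lower bound}.}

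The plan is to prove the two inequalities separately. For the lower bound $\mst \le \mst' + \sum_{v \in V\setminus V'} \dist_w(v,V')$, I would exhibit an explicit spanning tree of $V$ whose cost is exactly the right-hand side: start from the tree $T'$ on $V'$ (which has weight $\mst'$), and then for every vertex $v \in V \setminus V'$, add one edge of weight $\dist_w(v,V')$ connecting $v$ to its nearest neighbor in $V'$. The resulting graph is connected and has exactly $|V|-1$ edges, so it is a spanning tree, and its total weight is $\mst' + \sum_{v \in V \setminus V'} \dist_w(v,V')$. Hence $\mst$ is at most this quantity.

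For the upper bound, I would split the right-hand side into the two summands and bound each against $\mst$ using the Euler tour $\pi=(v_1,\ldots,v_{2n-2})$ of $T$, whose total weight equals $2\cdot\mst$. To bound $\mst'$, I would use that if one walks around the Euler tour and visits the vertices of $V'$ in the order they first appear, the triangle inequality gives a traversal of $V'$ of weight at most $2\cdot\mst$; shortcutting this into a spanning structure on $V'$ shows $\mst' \le 2\cdot \mst$.

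To bound $\sum_{v \in V \setminus V'} \dist_w(v,V')$, I would use the hypothesis $\neg \xi_1$ crucially. For each $v \in V \setminus V'$, fix one occurrence of $v$ in $\pi$ at some position $t_v$. Since $\xi_1$ does not occur, the window $R_{\pi,t_v}$ contains some vertex $v' \in V'$, and therefore the tour-path from $v$ to $v'$ uses at most $100\alpha\log n$ consecutive edges of $\pi$. By the triangle inequality,
\[
\dist_w(v,V') \;\le\; \dist_w(v,v') \;\le\; \sum_{i=t_v}^{t_v+100\alpha\log n - 1} w(v_i,v_{i+1}).
\]
Summing over all $v \in V \setminus V'$, each edge $(v_i,v_{i+1})$ of the Euler tour is charged at most $100\alpha\log n$ times (once for each possible starting position $t_v$ in its range). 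Since the total weight of all edges in $\pi$ is $2\cdot\mst$, this yields $\sum_{v \in V\setminus V'} \dist_w(v,V') \le 100\alpha\log n \cdot 2\cdot\mst = \tilde{O}(\alpha)\cdot\mst$. Combining with $\mst' \le 2\cdot\mst$ finishes the proof.

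The only delicate step is the charging argument for $\sum_v \dist_w(v,V')$: one must be careful to fix a single occurrence $t_v$ in $\pi$ per distinct vertex $v$ (vertices may repeat in the Euler tour) so that the per-edge counting argument indeed caps the multiplicity at $100\alpha\log n$. Once this bookkeeping is set up, the remainder is routine use of the triangle inequality and the fact that $w(\pi)=2\cdot\mst$.
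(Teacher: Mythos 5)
Your proof is correct and follows essentially the same route as the paper: construct a spanning tree from $T'$ plus nearest-neighbor edges for the lower bound, and use the Euler tour of $T$ together with the no-long-gap property from $\neg\xi_1$ for the upper bound. The charging bookkeeping is organized a bit differently — the paper partitions the tour into consecutive intervals between $V'$-occurrences and bounds each interval, while you fix one occurrence per vertex and charge each tour edge at most $100\alpha\log n$ times — but these are the same argument.

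One substantive observation in your favor: to bound $\mst'$, you invoke $\mst' \le 2\cdot\mst$ via the shortcutting/tour argument, which is correct. The paper instead writes $\mst' \le \mst$, which is actually false in general for metric spaces: in a star metric with center $c$ and leaves $l_1,l_2,l_3$ at mutual distance $2$ and distance $1$ from $c$, the MST on all four points has cost $3$ while the MST on $\{l_1,l_2,l_3\}$ alone has cost $4$. The error is harmless here because the $\tilde O(\alpha)$ slack absorbs the extra factor of $2$, but your bound is the one that actually holds. Good catch, even if inadvertent.
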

\begin{proof}
On the one hand, if we denote, for each $v\in V\setminus V'$, by $v^*$ the vertex of $V'$ that is closest to $v$ (under the metric $w$), then it is clear that the edges of $E(T')\cup \set{(v,v^*)\mid v\in V\setminus V'}$ form a spanning tree on $V$, and therefore 
$\mst'+\sum_{v\in V\setminus V'}\dist_w(v,V')\ge \mst$.

On the other hand, consider now a vertex $v\in V\setminus V'$. Assume that the first appearance of vertex $v$ in the tour $\pi$ appears between $v_{i_j}$ and $v_{i_{j+1}}$. We then define $\dist_{\pi}(v,V')=\min\set{w(v,v_{i_j}),w(v,v_{i_{j+1}})}$, and it is easy to see that, by triangle inequality $\dist_{\pi}(v,V')\le \sum_{i_j\le t\le i_{j+1}-1}w(v_t,v_{t+1})$.
Since the event $\xi_1$ does not happen, we get that $\sum_{v\in V\setminus V'}\dist_{\pi}(v,V')\le 100\alpha\log n\cdot\mst$.
Finally, since $\mst'\le \mst$, we conclude that $\sum_{v\in V\setminus V'}\dist_w(v,V')\le \tilde O(\alpha)\cdot\mst$. 
\end{proof}

Define $W=\sum_{v\in V\setminus V'}\dist_w(v,V')$ and $W''=\sum_{v\in V''}\dist_w(v,V')$. We prove the following claim.
\begin{claim}
\label{claim: accurate estimate}
If $\emph{\diam}(w)\le W/(100\alpha\log n)$, then with probability $1-O(n^{-50})$, $W/2\le \alpha\cdot W''\le 2W$. 
\end{claim}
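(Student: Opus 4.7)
I would prove this via a Chernoff-type concentration argument on the random sample sum $W''$. Condition throughout on $V'$; then $W$ and all values $\{\dist_w(v,V')\}_{v\in V\setminus V'}$ become deterministic, while $V''$ is a uniformly random size-$k$ subset of $V\setminus V'$ (with $k=\lceil n/\alpha\rceil$ and $|V\setminus V'|=n-k$). By linearity of expectation,
\[\mathbb{E}[W''\mid V'] \;=\; \frac{k}{n-k}\cdot W,\]
so that $\mathbb{E}[\alpha W''\mid V'] = \alpha k/(n-k)\cdot W$. For $\alpha \ge 2$, the ratio $k/(n-k)$ sits in $[1/\alpha,\,1/(\alpha-1)]$ (up to a $1+o(1)$ factor from the ceiling), and hence $\mathbb{E}[\alpha W''\mid V']$ lies inside $[W,\,\alpha W/(\alpha-1)]\subseteq [W,\,2W]$. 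Thus the expectation is already in (essentially) the target range, and the remaining task is concentration.

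Next, I apply a multiplicative Chernoff bound to $W''$, which is valid for sampling without replacement by Hoeffding's classical theorem. The diameter hypothesis is the key input: each summand satisfies $\dist_w(v,V') \le \diam(w) \le W/(100\alpha\log n)$, while $\mathbb{E}[W''\mid V'] \ge W/\alpha$, so the ratio of expectation to maximum summand is at least $100\log n$. A standard Chernoff tail then gives
\[\Pr\!\Big[\,\big|W''-\mathbb{E}[W''\mid V']\big|\ge \delta\cdot\mathbb{E}[W''\mid V']\,\Big] \;\le\; 2\exp\!\left(-\Omega(\delta^2\log n)\right),\]
for any $\delta\in(0,1)$. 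Choosing a sufficiently small constant $\delta$ (or equivalently slightly enlarging the constant $100$ in the diameter hypothesis so that the ratio becomes $\Omega(\log n)$ with a large hidden constant) drives the failure probability down to $O(n^{-50})$.

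Combining these two steps, with probability $1-O(n^{-50})$ the value $\alpha W''$ lies within $(1\pm\delta)\cdot \mathbb{E}[\alpha W''\mid V']$, which itself lies in $[W,\,2W]$ (for $\alpha\ge 2$); after absorbing the slack $\delta$ into the $1/2$ and $2$ factors, this yields the claimed sandwich $W/2 \le \alpha W'' \le 2W$. Finally, since the conclusion holds with probability $1-O(n^{-50})$ conditional on \emph{any} realization of $V'$, it also holds unconditionally with the same probability.

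\textbf{Main obstacle.} The delicate point is the upper bound $\alpha W''\le 2W$ near $\alpha=2$, because there $\mathbb{E}[\alpha W''\mid V']$ itself is already essentially $2W$, so any substantial upward deviation would violate the claim. The $\Theta(\log n)$ factor in the diameter hypothesis is precisely what makes this feasible: it forces no single summand to be more than a $1/(100\log n)$ fraction of the mean, and this is exactly the structural gap a multiplicative Chernoff bound needs to turn an arbitrarily small constant multiplicative deviation into an $n^{-\Omega(1)}$-probability event. Without such a bound on $\diam(w)$, a single large entry could swing $W''$ far from its expectation and destroy concentration.
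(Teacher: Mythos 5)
Your proof matches the paper's approach: condition on $V'$, compute $\mathbb{E}[W''\mid V']=\tfrac{k}{n-k}\,W$, use the diameter hypothesis to bound each summand by $\diam(w)\le W/(100\alpha\log n)$, and apply a multiplicative Chernoff bound. You are slightly more careful in treating $V''$ as a without-replacement sample and invoking Hoeffding's reduction, where the paper loosely describes $W''$ as a sum of ``$k$ i.i.d.\ samples'' of $X$ even though $V''$ is a random subset. The tension you flag near $\alpha=2$ is real and in fact afflicts the paper's own proof: since $k=\lceil n/\alpha\rceil$ gives $\tfrac{\alpha k}{n-k}\ge\tfrac{\alpha}{\alpha-1}$, the paper's asserted inequality $\alpha\cdot\tfrac{(3/2)Wk}{n-k}\le 2W$ fails for $\alpha<4$, and your suggestion to ``absorb the slack $\delta$ into the constant $2$'' cannot rescue the upper half of the sandwich once $\mathbb{E}[\alpha W''\mid V']$ is already at or above $2W$. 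So the claim as literally stated is mildly overstated for small $\alpha$ in both your argument and the paper's; this is harmless downstream because the theorem only needs $\alpha W''=O(W)$ and $\alpha W''=\Omega(W)$, not the specific constants $2$ and $1/2$.
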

\begin{proof}
Let $U_{V\setminus V'}$ be the uniform distribution on all vertices of $V\setminus V'$.
We define the random variable $X: V\setminus V'\to \mathbb{R}^{\ge 0}$ as $X(v)=\dist_{w}(v,V')$ for each vertex $v\in V\setminus V'$. Therefore,  $X$ is supported on the interval $[0,\diam(w)]\subseteq [0, W/(100\alpha\log n)]$, and $\expect[X]=W/(n-k)$. Observe that $W''$ is in fact the sum of $k$ i.i.d samples of $X$. From multiplicative Chernoff bound,
\[\Pr\bigg[\frac{W k}{(3/2)(n-k)}\le W''\le  \frac{(3/2)W k}{n-k}\bigg] \ge 1-\exp\bigg(-k\cdot \frac{W/(n-k)}{W/(100\alpha\log n)}\bigg)\ge 1-n^{-50}.\]
Since $\alpha\cdot \frac{(3/2)W k}{n-k}\le 2W$ and $\alpha\cdot \frac{W k}{(3/2)(n-k)}\ge W/2$, with probability $1-O(n^{-50})$, $W/2\le \alpha\cdot W''\le W$.
\end{proof}

\paragraph{Bad event $\xi_2$.} Let $\xi_2$ be the event that $\diam(w)\le W/(100\alpha\log n)$ holds, and either $\alpha W''> 2W$ or $\alpha W''<W/2$ holds.
From \Cref{claim: accurate estimate}, $\Pr[\xi_2]=O(n^{-50})$.

We are now ready to show that, if both events $\xi_1$ and $\xi_2$ do not happen, then \eqref{eqn: main} holds.
On the one hand, if $\diam(w)\ge W/(100\alpha\log n)$, then $100\alpha\log n\cdot\diam(w)\ge \mst$; if $\diam(w)\le W/(100\alpha\log n)$, then since $\xi_2$ does not happen, $\alpha\cdot W''\ge W/2$, and then since $\xi_1$ does not happen and from \Cref{claim: mst simple lower bound}, $\mst'+100\alpha\log n\cdot W''\ge \mst'+W\ge \mst$.
On the other hand, note that $\mst'\le \mst$ and $\diam(w)\le \mst$. Moreover, since $\xi_2$ does not happen, $\alpha\cdot W''\le 2W$. Therefore,
$$\mst+100\alpha\log n\cdot\diam(w)+100\alpha\log n\cdot W''\le 
\mst+ \tilde O(\alpha)\cdot\mst + \tilde O(\alpha)\cdot\mst
= \tilde O(\alpha)\cdot\mst.$$

\subsection{Space Lower Bound for One-Pass Streaming Algorithms for MST Estimation}
\label{subsec: 1 pass MST lower proof}

In this subsection, we will show that 
any one-pass streaming algorithm that outputs an $\alpha$-approximation of the cost of a minimum spanning tree needs to use space $\Omega(n/\alpha^2)$, thus establishing \Cref{thm: 1 pass alpha MST lower}.
At a high level, our proof consider a 2-player communication game, in which Alice and Bob are each given a partial metric, and are asked to estimate the MST cost of the complete metric obtained by combining their partial metrics to within a factor of $\alpha$. It is well-known that the space complexity of any one-pass $\alpha$-approximation streaming algorithm is lower bounded by the one-way communication complexity of the game. We then show that the one-way communication complexity of the game is $\Omega(n/\alpha^2)$ by constructing a hard distribution and analyzing the information complexity of any randomized protocol for this game.

In order to describe the communication game, we first introduce the following notion on metrics and partial metrics.

\paragraph{Metrics and partial metrics.} 
We say that a function $\bar w:V\times V\to \mathbb{R}^{\ge 0}\cup \set{*}$ is a \emph{partial metric} on $V$ iff for all sequences $x_1,\ldots,x_t$ of vertices of $V$ such that $\bar w(x_i,x_{i+1})\ne *$ for all $1\le i\le t$ (where we use the convention that $t+1=1$), the inequality $\sum_{1\le i\le t-1}\bar w(x_i,x_{i+1})\ge \bar w(x_1,x_t)$ holds.
Equivalently, $\bar w$ is a partial metric iff it can be obtained from some complete metric $w$ by masking some of its entries by $*$.
Given a subset $S\subseteq V\times V$, we define the \emph{restriction of $w$ on set $S$}, denoted by $\bar w_{\mid S}$, to be the partial metric where for each pair $(x,y)\in V\times V$, if $(x,y)\in S$, then $\bar w_{\mid S}(x,y)=w(x,y)$; otherwise $\bar w_{\mid S}(x,y)=*$.
Given a partial metric $\bar w$, we say that a complete metric $w$ is a \emph{completion} of $\bar w$, iff for all pairs $x,y$ of vertices in $V$ such that $\bar w(x,y)\ne *$, $w(x,y)=\bar w(x,y)$. In other words, $w$ is a completion of $\bar w$ iff $\bar w$ is the restriction of $w$ on some subset.

Let $\bar w, \bar w'$ be two partial metrics on $V$, we say that partial metrics $\bar w, \bar w'$ are \emph{complementary}, iff there is a complete metric $w$ on $V$ and two subsets $S,S'\subseteq V\times V$ where $S\cup S'= V\times V$, such that $\bar w$ is the restriction of $w$ on $S$, and $\bar w'$ is the restriction of $w$ on $\bar S'$.
In this case, we also write $w=\bar w\cup \bar w'$.

\paragraph{Two-player metric-MST-cost approximation problem ($\emph{\sf MST}_{\sf apx}$):} This is a one-way communication problem, in which Alice is given a partial metric $\bar{w}_A$ and Bob is given a partial metric $\bar{w}_B$. The goal is to compute an estimate $Y$ of $\mst(\bar{w}_A\cup \bar{w}_B)$, when partial metrics $\bar{w}_A,\bar{w}_B$ are complementary. If $\bar{w}_A,\bar{w}_B$ are not complementary, then any answer returned by the protocol will be viewed as a correct answer.

Let $\dset$ be a distribution on the input pairs $(\bar{w}_A,\bar{w}_B)$ of partial metrics.
We say that a protocol $\pi$ of Alice and Bob $(\alpha,\delta)$-approximates the problem $\emph{\sf MST}_{\sf apx}$ over the distribution $\dset$ for some real numbers $\alpha>1$ and $0<\delta<1$, iff, whenever $\bar{w}_A,\bar{w}_B$ are complementary, with probability at least $1-\delta$, the estimate $Y$ returned by protocol $\pi$ satisfies that $Y\le \mst(w)\le \alpha\cdot Y$.

The main result of this section is the following theorem, which immediately implies \Cref{thm: 1 pass alpha MST lower}.

\begin{theorem}
\label{thm: main_1-pass n/S^2 lower bound}
For any real numbers $\alpha>1$ and $0<\delta< 1/10$, there exists a distribution $\dset$ on input pairs of the problem $\emph{\sf MST}_{\sf apx}$, such that any one-way protocol $\pi$ that $(\alpha,\delta)$-approximates the problem $\emph{\sf MST}_{\sf apx}$ over $\dset$ has $\emph{\cc}^{\textnormal{1-way}}_{\dset}(\pi)=\Omega(n/\alpha^2)$.
\end{theorem}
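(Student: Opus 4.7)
\emph{Proof plan.} My plan is to prove this lower bound via a direct information-theoretic argument on a hard input distribution $\dset$, following the Hellinger-based template used for the multi-pass $\coi$ analysis (sketched earlier in the overview) but tuned for the one-pass setting.

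\emph{Step 1 -- Reduction to counting components of a $\{1,M\}$-valued metric.} Restrict attention to metrics $w$ with entries in $\{1,M\}$ where $M=\Theta(\alpha)$. For such a metric $\mst(w)=(n-c)+(c-1)M$, where $c$ is the number of connected components of the weight-$1$ subgraph $G_1$, so the MST cost is essentially determined by $c$. Two regimes of $c$ separated by a factor of $\Theta(\alpha)$ will therefore induce MST costs separated by a factor of $\Theta(\alpha)$, and any $(\alpha,\delta)$-approximating protocol must distinguish them.

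\emph{Step 2 -- Block-structured hard distribution.} Partition $V$ into $b=\Theta(n/\alpha^2)$ blocks $B_1,\ldots,B_b$ of size $\Theta(\alpha^2)$; across blocks all pairwise distances equal $M$. Sample hidden bits $x_1,\ldots,x_b\in\{0,1\}$ i.i.d.\ and encode each block so that bit $0$ makes $B_j$ a single weight-$1$ connected component while bit $1$ makes it $\Theta(\alpha^2)$ components. Then $c$ is an affine function of $\sum_j x_j$, and the all-$0$ and all-$1$ configurations yield MST costs differing by a factor of $\Theta(\alpha)$. The partial-metric split is chosen so that (i)~every pair is seen by at least one player (complementarity) and (ii)~within each block, neither player alone can determine $x_j$ from her local view; this requires a parity- or hidden-matching-style encoding in which the bit is a global property of many edges distributed between the two players, as the naive ``clique vs.\ independent set'' encoding would leak $x_j$ from any single edge weight.

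\emph{Step 3 -- Per-block Hellinger bound.} Let $\pi$ be any one-way protocol $(\alpha,\delta)$-approximating $\mathsf{MST}_{\mathsf{apx}}$ on $\dset$. Fix a block $j$ and consider the transcript distributions $\RV{\Pi}_{x_j=0}$ and $\RV{\Pi}_{x_j=1}$ obtained by conditioning on $x_j$ and averaging over everything else. Since $\pi$ must reliably distinguish the two MST regimes (via boosting on a suitable two-case distribution), these transcripts have constant total-variation distance, and therefore constant squared Hellinger distance by Claim~\ref{fac:hel-tvd}. Combining the rectangle property (Proposition~\ref{fac:rect}) with Claim~\ref{fac:hel-dkl} and Claim~\ref{clm: II to KL} then produces
\[
\II(\RV{\Pi};\RV{x}_j \mid \RV{x}_{-j},\,\text{random split and public coins}) \;=\; \Omega(1).
\]

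\emph{Step 4 -- Direct sum.} Because the bits $x_1,\ldots,x_b$ are mutually independent under $\dset$, the chain rule (Claim~\ref{clm: basic properties}) yields
\[
\ic_{\dset}(\pi) \;\ge\; \sum_{j=1}^{b}\II(\RV{\Pi};\RV{x}_j \mid \RV{x}_{-j},\,\text{shared}) \;=\; \Omega(b) \;=\; \Omega(n/\alpha^2).
\]
Proposition~\ref{fac:cc-ic} converts this into $\cc^{\textnormal{1-way}}_{\dset}(\pi)=\Omega(n/\alpha^2)$, as desired.

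\emph{Main obstacle.} The crux is Step~3: designing the within-block encoding so that neither player alone recovers $x_j$ and so that the rectangle property produces a constant per-block Hellinger bound. The naive encoding fails property~(ii) of Step~2, so a parity/hidden-matching construction -- in the same spirit as the $\coi_{a,2}$ analysis used in the multi-pass lower bound -- seems necessary, and keeping it compatible with the complementarity constraint on $(\bar w_A,\bar w_B)$ while still enabling a clean application of Proposition~\ref{fac:rect} is the main technical hurdle.
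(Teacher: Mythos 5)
Your plan takes a genuinely different route from the paper's, and as written it has a gap that I don't see how to close.

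\textbf{What the paper actually does.} The paper does not prove Theorem~\ref{thm: main_1-pass n/S^2 lower bound} by a direct sum over i.i.d.\ block bits. It introduces an Index-flavored problem called $\overwrite$, in which Alice holds a vector $X^A\in\{0,1,*\}^p$, Bob holds a consistent $X^B$ together with a \emph{hidden index} $i^*$, and Bob must recover $X^A_{i^*}$. The paper first shows $\overwrite$ has one-way information cost $\Omega(p)$ (via a one-bit hard sub-problem and a chain-rule / direct-sum step that crucially exploits the random index $i^*$), and then reduces $\overwrite$ to $\mathsf{MST}_{\mathsf{apx}}$. In the reduction there are $p=\Theta(n/\alpha^2)$ ``ordinary'' groups of size $\Theta(\alpha^2)$ plus one \emph{special} group $U$. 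The crucial structural point is that $w_{\sf Y}$ and $w_{\sf N}$ differ \emph{only} in the distances inside $U$; the ordinary groups are identical in the two cases and exist solely so that Alice (who receives a random group-preserving relabeling) cannot tell which of her inputs corresponds to $U$, while Bob is told exactly which group is $U$ but is missing its internal distances. So the $\alpha$-gap in MST cost is concentrated in a single coordinate that Bob knows and Alice does not --- this is precisely what makes a one-way $\Omega(n/\alpha^2)$ bound possible rather than the weaker multi-pass bound $\Omega(\sqrt{n/\alpha})$ that a symmetric direct-sum argument gives.

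\textbf{Where your Step 3 breaks.} In your plan, all $b=\Theta(n/\alpha^2)$ blocks contribute to $c$ via i.i.d.\ bits $x_j$, each flipping the component count of its block between $1$ and $\Theta(\alpha^2)$. For a per-block $\Omega(1)$ information bound you need the protocol to effectively determine a random $x_j$ given $x_{-j}$, which requires the two conditional transcripts (for $x_j=0$ vs $x_j=1$) to be separated. But a single bit flip changes $c$ by $\Theta(\alpha^2)$ and hence changes $\mst$ by $\Theta(\alpha^2\cdot M)=\Theta(\alpha^3)$, while the baseline MST cost under any reasonable conditioning is $\Theta(n)$ or larger. The resulting multiplicative change is $1+\Theta(\alpha^3/n)$, which is far below $\alpha$ unless $\alpha = \Omega(\sqrt n)$. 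An $(\alpha,\delta)$-approximating protocol is therefore under \emph{no obligation} to distinguish $x_j=0$ from $x_j=1$ for a single $j$, and the sentence ``these transcripts have constant total-variation distance'' is not forced by the correctness guarantee. This is not a presentation issue; it is the reason the i.i.d.\ direct-sum template from the $\coi_{N,m}$ multi-pass argument (where the answer is a single Yes/No bit planted in a random block, with all dummy blocks fixed to the Yes case) does not transfer to a setting where the quantity to be approximated is an aggregate of all blocks. You flag this yourself as the ``main obstacle'' and gesture at ``boosting on a suitable two-case distribution,'' but that suitable two-case distribution \emph{is} the Overwrite/Index construction: only a single, hidden coordinate carries the $\alpha$-gap, and the remaining blocks serve as decoys for Alice. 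Without that asymmetry, I do not see how the per-block $\Omega(1)$ bound can be obtained, and the proof does not go through.
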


The remainder of this section is dedicated to the proof of \Cref{thm: main_1-pass n/S^2 lower bound}. We first define a new problem called the \emph{Overwrite} problem, and then prove \Cref{thm: main_1-pass n/S^2 lower bound} by reducing this new problem to the $\mstest$ problem.

\subsubsection{The Problem $\emph{\overwrite}$ and its Communication Complexity}

{\bf Overwrite problem ($\emph{\overwrite}$)}: This is a $2$-player one-way communication problem, in which Alice is given a vector $X^A\in \set{0,1,*}^{n}$, and Bob is given a vector $X^B\in \set{0,1}^{n}$ and an index $i^*\in [n]$.
They are promised that, for each $i\ne i^*$, either $X^A_i=X^B_i$ or $X^A_i =*$ holds.
Alice is allowed to send a message to Bob, and upon receiving the message, the goal of Bob is to figure out the value of $X^A_{i^*}$, the $i^*$-th coordinate of Alice's input vector.  If $X^A_{i^*}\ne *$, then the correct output  of Bob is $X^A_{i^*}$; otherwise both $0$ and $1$ are accepted as correct outputs.

We consider the following distribution of input pairs to the problem $\overwrite$.

\begin{tbox}
	\textbf{Distribution} $\distover$: A distribution on input pairs $(X^A, X^B)$ to $\overwrite$: 
	\begin{enumerate}
		\item Choose an index $i^*$ uniformly at random from $[n]$.
		\item Choose $X^B$ uniformly at random from the set $\set{0,1}^{n}$. 
		\item Set $X^A_{i^*}$ to be $*$ with probability $1/2$, $0$ with probability $1/4$, and $1$ with probability $1/4$. For each $i\in [n]$ such that $i\ne i^*$, set $X^A_{i}$ to be $*$ with probability $1/2$ and $X^B_i$ with probability $1/2$.
	\end{enumerate}
\end{tbox}

The main result of this subsection is the following lemma.
\begin{lemma}
\label{lem: hardness for overwrite}
Let $\delta$ be any constant such that $0<\delta<1/10$. Then any $\delta$-error one-way communication protocol $\pi$ for $\emph{\overwrite}$ problem over the input distribution $\dset_{\sf Overwrite}$ has $\emph{\cc}^{\textnormal{1-way}}_{\dset_{\textnormal{\sf Overwrite} }}(\pi)=\Omega(n)$.
\end{lemma}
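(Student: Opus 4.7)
The plan is to prove the $\Omega(n)$ lower bound via an information-complexity analysis of Alice's transcript $\RV{\Pi}$ under $\dset_{\textnormal{\sf Overwrite}}$, broadly paralleling the classical proof for $\ind$ but carefully adapted to handle Bob's side information $\RV{X^B}$.

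First, I record two key structural properties of $\dset_{\textnormal{\sf Overwrite}}$. (a) The marginal distribution of $\RV{X^A}$ is $\mu^n$ where $\mu = (1/2, 1/4, 1/4)$ on $\{\ast, 0, 1\}$, and this marginal is preserved under conditioning on $I^* = i$ for every $i$; I will verify this by a direct calculation using Bayes' rule on the three-step sampling procedure. (b) Because $\RV{\Pi}$ depends only on $\RV{X^A}$ (and Alice's private coins) and Alice never sees $\RV{I^*}$, the joint distribution of $(\RV{\Pi}, \RV{X^A_i})$ is the same under $\dset_{\textnormal{\sf Overwrite}}$ and under $\dset_{\textnormal{\sf Overwrite}} \mid I^* = i$; consequently $\II(\RV{\Pi}; \RV{X^A_i}) = \II(\RV{\Pi}; \RV{X^A_i} \mid I^* = i)$ for every $i$. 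A further consequence of the generative process is that $\RV{X^A_i}$ and $\RV{X^B}$ are conditionally independent given $I^* = i$.

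Second, I apply Fano's inequality to Bob's decoder. Let $\delta_i$ denote the protocol's conditional error given $I^* = i$ and $X^A_i \neq *$; then $\mathbb{E}_i[\delta_i] \le 2\delta$ since the event $\{X^A_{I^*} \neq *\}$ has probability $1/2$. Conditional on this event, $X^A_i$ is a uniform bit, so Fano yields $\II(\RV{\Pi}; \RV{X^A_i} \mid \RV{X^B}, I^*=i, X^A_i \neq *) \geq 1 - h(2\delta_i)$, and weighting by $\Pr[X^A_i \neq *] = 1/2$ this gives $\II(\RV{\Pi}; \RV{X^A_i} \mid \RV{X^B}, I^* = i) \geq (1/2)(1 - h(2\delta_i))$ for each $i$, where $h$ denotes binary entropy. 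Averaging and applying Jensen's inequality to the concave $h$ yields $\sum_{i=1}^n \II(\RV{\Pi}; \RV{X^A_i} \mid \RV{X^B}, I^* = i) \geq (n/2)(1 - h(2\delta))$.

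Third, the technical crux is transferring the bound from the form that conditions on $\RV{X^B}$ to the form $\II(\RV{\Pi}; \RV{X^A_i} \mid I^* = i) = \II(\RV{\Pi}; \RV{X^A_i}) \geq \Omega(1)$. I plan a Hellinger-distance analysis on ``cross-combinations'' of Alice's inputs: for each $i$, I will consider pairs of Alice-inputs that differ only in the $i$-th coordinate and compare the induced marginal distributions of $\RV{\Pi}$ across the three values $\{\ast,0,1\}$ of $X^A_i$. Using the conditional independence $X^A_i \perp X^B \mid I^*=i$ together with the structural invariance that Alice's message depends only on her input, I will show that at least one of the pairwise squared Hellinger distances among $\RV{\Pi} \mid X^A_i = \ast$, $\RV{\Pi} \mid X^A_i = 0$, $\RV{\Pi} \mid X^A_i = 1$ must be $\Omega(1)$ under any $\delta$-error protocol; otherwise Bob could not distinguish the three cases even with the help of $\RV{X^B}$, contradicting the Fano bound from step two. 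Converting the Hellinger gap into mutual information via Claim~\ref{fac:hel-dkl} then yields $\II(\RV{\Pi}; \RV{X^A_i}) \geq \Omega(1)$ for each $i$.

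Finally, combining these per-coordinate bounds through the chain rule and the fact that $\RV{X^A_1}, \dots, \RV{X^A_n}$ are i.i.d. under the marginal $\mu^n$ (so that Claim~\ref{clm: basic properties}(2) applies to drop the $\RV{X^A_{<j}}$ conditioning), we obtain $\cc^{\textnormal{1-way}}_{\dset_{\textnormal{\sf Overwrite}}}(\pi) \geq \HH(\RV{\Pi}) \geq \II(\RV{\Pi}; \RV{X^A}) \geq \sum_{i=1}^n \II(\RV{\Pi}; \RV{X^A_i}) \geq \Omega(n)$, completing the proof for any $\delta < 1/10$. The main obstacle I anticipate is executing step three cleanly: Bob's side information $\RV{X^B}$ is correlated with $\RV{X^A_{-I^*}}$ through the promise, so a naive attempt to ``remove'' $\RV{X^B}$ from the Fano bound fails; the cross-combination Hellinger argument is the device that leverages the specific $X^A_i \perp X^B \mid I^*=i$ structure to sidestep this difficulty.
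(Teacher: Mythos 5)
Your plan diverges from the paper's proof at the step you yourself flag as the crux, and that step contains a genuine gap that I do not see how to close. Step 2 (Fano) gives a lower bound on $\II(\RV{\Pi};\RV{X^A_{i}}\mid \RV{X^B}, I^*=i)$. Since $\RV{X^A_i}\perp\RV{X^B}\mid I^*=i$, Claim~\ref{clm: basic properties}(2) gives $\II(\RV{\Pi};\RV{X^A_i}\mid \RV{X^B},I^*=i)\ge \II(\RV{\Pi};\RV{X^A_i}\mid I^*=i)=\II(\RV{\Pi};\RV{X^A_i})$, i.e.\ your Fano bound controls a quantity that is \emph{at least} what you want, so it does not transfer in the direction you need. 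Your argument for the transfer --- ``otherwise Bob could not distinguish the three cases even with the help of $\RV{X^B}$'' --- is not valid: Bob decodes from the pair $(\RV{\Pi},\RV{X^B})$, and since $\RV{X^B}$ is correlated with $\RV{X^A_{-i}}$ (and hence with $\RV{\Pi}$), the joint distributions $(\RV{\Pi},\RV{X^B})\mid X^A_i=0$ and $(\RV{\Pi},\RV{X^B})\mid X^A_i=1$ can be far apart while the marginal laws of $\RV{\Pi}$ are essentially identical. (For a clean illustration of the phenomenon, drop the $*$'s: if $X^B_j=X^A_j$ for $j\ne i^*$, Alice can send the single bit $\bigoplus_j X^A_j$; Bob decodes perfectly, yet $\RV{\Pi}\mid X^A_i=0$ and $\RV{\Pi}\mid X^A_i=1$ are both uniform, so $\II(\RV{\Pi};\RV{X^A_i})=0$.) Thus a per-coordinate lower bound on the \emph{unconditional} $\II(\RV{\Pi};\RV{X^A_i})$ is not something you get from the error guarantee, and the Hellinger ``cross-combination'' device you invoke is inapplicable here: for a one-way protocol $\RV{\Pi}$ is a function of Alice's input alone, so Proposition~\ref{fac:rect} is vacuous and gives you no handle on the marginals.

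The paper avoids this entirely by never attempting to discard the $\RV{X^B}$-conditioning. It lower-bounds the \emph{information cost} $\II(\RV{\Pi};\RV{X^A}\mid\RV{X^B})$ directly, via an embedding of a one-bit $\overwrite$ instance into a random coordinate $i^*$ with an asymmetric public/private sampling scheme (the prefix $X^A_{<i^*}$ is sampled publicly and $X^B_{<i^*}$ privately by Bob; the suffix $X^B_{>i^*}$ is public and $X^A_{>i^*}$ private to Alice). That asymmetry is precisely what makes the conditioning in the resulting inequality
\[
\II(\RV{\Pi};\hat X^A\mid \hat X^B)\le \II(\RV{\Pi}_{\overwrite};\hat X^A\mid i^*, X^A_{<i^*}, X^B)
\]
correct via Claim~\ref{clm: basic properties}(2), after which the chain rule reassembles $\tfrac{1}{n}\II(\RV{\Pi}_{\overwrite};\RV{X^A}\mid\RV{X^B})$. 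If you want to salvage your write-up, the fix is not a sharper Hellinger estimate but to keep the $\RV{X^B}$-conditioning and perform the one-bit embedding with exactly this public/private split, which is what the paper does.
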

\begin{proof}
Consider the following one-bit version of the $\overwrite$ problem. Alice is given a bit $\hat X^A$, which is $*$ with probability $1/2$, $0$ with probability $1/4$, and $1$ with probability $1/4$. Bob is also given a bit $\hat X^B$, which is chosen uniformly at random from set $\set{0,1}$.
We denote this input distribution by $\dset^*$.
Alice is allowed to send a message to Bob, such that upon receiving this message, Bob needs to figure out the value of $\hat X^A$. Note that if $\hat X^A=*$, then Bob can return either $0$ or $1$ as the correct answer. We first show in the following claim that any protocol that allows Alice and Bob to return a correct answer with probability at least $9/10$ has information complexity $\Omega(1)$. 

\begin{claim}
\label{clm: 1 bit overwrite}
For any $\delta$-error one-way protocol $\pi$ for the one-bit $\overwrite$ problem, $\ic_{\dset^*}(\pi)=\Omega(1)$.
\end{claim}
\begin{proof}
Let $\pi$ be any $\delta$-error one-way protocol for the one-bit $\overwrite$ problem. Recall that $0<\delta<1/10$. We will show that $\ic_{\dset^*}(\pi)>10^{-5}$. Assume for contradiction that this does not hold. We denote by $\Pi$ the random variable representing the transcript of $\pi$. By definition, $\II_{\dset^*}(\Pi;\hat X^A|\hat X^B) \le \ic_{\dset^*}(\pi) \le 10^{-5}$. 
We denote by $\Pi_{*}$ the random variable representing the transcript of $\pi$ when $\hat X^A=*$, and we define $\Pi_{0}, \Pi_{1}$ similarly. Since random variables $\hat X^A$ and $\hat X^B$ are independent in $\dset^*$, from \ref{fac:mul-ind} of \Cref{clm: basic properties} and \Cref{clm: II to KL},
\[
10^{-5} \ge \II_{\dset^*}(\Pi;\hat X^A|\hat X^B) \ge \II_{\dset^*}(\Pi;\hat X^A) = \frac{\dkl{\Pi_*}{\Pi}}{2} + \frac{\dkl{\Pi_0}{\Pi}}{4} + \frac{\dkl{\Pi_1}{\Pi}}{4}.
\] 
From \Cref{clm:pinsker}, $\tvd{\Pi_0}{\Pi} \le \sqrt{2 \dkl{\Pi_0}{\Pi}} \le 10^{-2}$, and similarly $\tvd{\Pi_1}{\Pi}\le  \sqrt{2 \dkl{\Pi_1}{\Pi}} \le 10^{-2}$. Then from \Cref{clm:triangle}, $\tvd{\Pi_0}{\Pi_1} \le 1/50$. Recall that in distribution $\dset^*$, with probability $1/2$, $\hat X_A=*$; with probability $1/4$, $\hat X_A=0$; and with probability $1/4$, $\hat X_A=1$.
Since $\tvd{\Pi_0}{\Pi_1} \le 1/50$,  protocol $\pi$ is correct with probability at most 
$$\Pr[\hat X_A=*]+ \frac{1}{2}\cdot\left(\Pr[\hat X_A=0]+\Pr[\hat X_A=1]+2\cdot \tvd{\Pi_0}{\Pi_1}\right)\le \frac 1 2+\frac 1 4 +\frac{1}{50} < \frac{9}{10},$$ a contradiction to the fact that $\pi$ is an $\delta$-error protocol for $0<\delta<1/10$.
\end{proof}

Let $\pi_{\overwrite}$ be a $\delta$-error protocol for some $0<\delta<1/10$ for the $\overwrite$ problem. We now construct a $\delta$-error protocol for the one-bit $\overwrite$ problem using $\pi_{\overwrite}$ as follows.

\begin{tbox}
	\textbf{Protocol} $\pi$: A protocol for one-bit $\overwrite$ using a protocol $\pi_{\overwrite}$ for $\overwrite$. 
	
	\smallskip
	
	\textbf{Input:} An instance $(\hat X^A, \hat X^B)$ of the one-bit $\overwrite$ problem. \\
	\textbf{Output:} $\hat Z\in \set{0,1}$ as the answer to the one-bit $\overwrite$ problem.
	
	\algline
	
	\begin{enumerate}
		\item \textbf{Sampling the instance.} Alice and Bob create an instance $(X^A, X^B)$ of $\overwrite$ as follows.
		\begin{enumerate}
		\item Choose an index $i^*$ uniformly at random from $[n]$, using \underline{public coins}.
		\item For each $1\le i< i^*$, using \underline{public coins}, Alice and Bob jointly sample $X^A_i$ from distribution $\dset^*$. If $X^A_i=*$, then Bob chooses $X^B_i$ uniformly at random from $\set{0,1}$, using \underline{private coins}. If $X^A_i\ne *$, then Bob sets $X^B_i=X^A_i$.
		\item Alice sets $X^A_{i^*}=\hat X^A$ and Bob sets $X^B_{i^*}=\hat X^B$.
		\item For each $i^*<i\le n$, using \underline{public coins}, Alice and Bob jointly sample $X^B_i$ uniformly at random from $\set{0,1}$. Then Alice, using \underline{private coins}, sets $X^A_{i^*}=X^B_{i^*}$ with probability $1/2$, and sets $X^A_{i^*}=*$ with probability $1/2$.
		\end{enumerate}
		\item \textbf{Computing the answer.} Alice and Bob run the protocol $\pi_{\overwrite}$ on the instance $(X^A, X^B)$ of $\overwrite$ to compute an answer $Z$, and return $\hat Z=Z$. 
	\end{enumerate}
\end{tbox}

It is easy to see that the distribution on instances $(X^A,X^B)$ for $\overwrite$ created in the reduction by the choice of pairs $(\hat X^A, \hat X^B)$ is exactly the same as the distribution $\distover$. 
Moreover, the protocol $\pi$ returns a correct answer to the one-bit $\overwrite$ problem iff the protocol $\pi_{\sf Overwrite}$ returns a correct answer to the $\overwrite$ problem. Since $\pi_{\sf Overwrite}$ is a $\delta$-error one-way protocol to the $\overwrite$ problem for some $0<\delta<1/10$, $\pi$ is a $\delta$-error one-way protocol to the one-bit $\overwrite$ problem for the same $\delta$.
Therefore, from \Cref{clm: 1 bit overwrite}, if we denote by $\Pi$ the random variable representing the transcript of $\pi$, then $\II(\Pi; \hat X^A | \hat X^B)=\Omega(1)$.

We denote by $\Pi_{\overwrite}$ the random variable representing the transcript of $\pi_{\overwrite}$. Then,
\begin{align*}
\Omega(1)=\II(\Pi; \hat X^A| \hat X^B) & \le \II(\Pi_{\overwrite};\hat X^A\mid i^*,X^A_1,X^A_2,\dots,X^A_{i^*-1},X^B_1,X^B_2,\dots,X^B_n) \\
&= \frac{1}{n}\cdot \sum_{i=1}^n \II(\Pi_{\overwrite}; X^A_i \mid X^A_1,X^A_2,\dots,X^A_{i-1}, X^B) \\
&= \frac{1}{n}\cdot \II(\Pi_{\overwrite}; X^A | X^B),
\end{align*}
where the first equality is from \ref{fac:mul-ind} in \Cref{clm: basic properties} the mutual independence between the random variables in $\set{X^A_t\mid 1\le t< i^*}$ and $\set{X^B_t\mid 1\le t\le n}$ (note that $\Pi=\Pi_{\overwrite}$, $\hat X_B=X^B_{i^*}$ and the variables $i^*$, $\set{X^A_t}_{1\le t< i^*}$ and $\set{X^B_t}_{1\le t\le n, t\ne i^*}$ play the role of $W$ in \ref{fac:mul-ind}), the second equality follows from the fact that $i^*$ is chosen uniformly at random from set $[n]$, and the last equality is from the chain rule (\ref{fact:chain-rule} in \Cref{clm: basic properties}).

From \Cref{fac:cc-ic}, we get that $\emph{\cc}^{\textnormal{1-way}}_{\dset_{\textnormal{\sf Overwrite} }}(\pi_{\overwrite})\ge \emph{\ic}_{\dset_{\textnormal{\sf Overwrite} }}(\pi_{\overwrite}) \ge  \Omega(n)$. This completes the proof of \Cref{lem: hardness for overwrite}.
\end{proof}

\subsubsection{Completing the Proof of \Cref{thm: main_1-pass n/S^2 lower bound}}

We now prove \Cref{thm: main_1-pass n/S^2 lower bound} by reducing the problem $\overwrite$ to the problem $\mstest$. First we define two metrics for constructing the hard distribution for the $\mstest$ problem.

\paragraph{Metrics $w_{\sf Y}$ and $w_{\sf N}$.}
In order to define the distribution $\distMST$ on pairs $(\bar{w}_A,\bar{w}_B)$ of partial metrics, which serves as input to the $\mstest$ problem, we need to first define metrics $w_{\sf Y}$ and $w_{\sf N}$ and certain groups of vertices.
We use the parameters $k,L,r$ to be fixed later.
Let $p=(n-kr)/(2rk)$.
The set of vertices in the metric spaces $w_{\sf Y}$ and $w_{\sf N}$ is $V' = U \cup S \cup T$, where 
\begin{itemize}
\item $U=\set{u_{i,j}\mid i\in [k],j\in [r]}$, so $|U|=kr$;
\item $S=\set{s_{i,j,\ell}\mid i\in [k],j\in [r],\ell \in [p]}$, so $|S|=prk=(n-kr)/2$; and 
\item $T=\set{t_{i,j,\ell}\mid i\in [k],j\in [r],\ell \in [p]}$, so $|T|=prk=(n-kr)/2$.
\end{itemize}
We now define the distances between pairs of vertices of $V'$ in metrics $w_{\sf Y}$ and $w_{\sf N}$. In fact, the metrics $w_{\sf Y}$ and $w_{\sf N}$ differ in only the distance between vertices of $U$.
Consider now a pair $(v,v')$ of vertices of $V'$, we distinguish between the following cases.
\begin{itemize}
\item If $v,v'\in S$, and $v=s_{i,j,\ell}$ and $v=s_{i',j',\ell'}$, then\\ $w_{\sf Y}(v,v')=w_{\sf N}(v,v')=\min\set{L, |(irp+\ell r+j)-(i'rp+\ell' r+j')|}$.
\item If $v,v'\in T$, and $v=t_{i,j,\ell}$ and $v=t_{i',j',\ell'}$, then\\ $w_{\sf Y}(v,v')=w_{\sf N}(v,v')=\min\set{L, |(irp+jp+\ell)-(i'rp+j'q+\ell')|}$.
\item If $v\in S$ and $v'\in T$, then $w_{\sf Y}(v,v')=w_{\sf N}(v,v')=L$.
\item If $v\in S\cup T$ and $v'\in U$, then $w_{\sf Y}(v,v')=w_{\sf N}(v,v')=L$.
\item Assume $v=u_{i,j}\in U$ and $v'=u_{i',j'}\in U$. 
First, $w_{\sf N}(v,v')=L$. 
Then if $i=i'$,  $w_{\sf Y}(v,v')=|j-j'|$; if $i\ne i'$, $w_{\sf Y}(v,v')=L$.
\end{itemize}
This completes the definitions of $w_{\sf Y}$ and $w_{\sf N}$. See \Cref{fig:Yes_No_metrics} for an illustration. 
It is easy to verify that $\mst(w_{\sf Y})=(n-1)+(L-1)(k+1)$ and $\mst(w_{\sf N})=(n-1)+(L-1)((k-1)r+1)$. Additionally, for each index $1\le i\le p$, we define the set 
$S_{\ell}=\set{s_{i,j,\ell}\mid i \in [k], j\in [r]}$ and the set 
$T_{\ell}=\set{t_{i,j,\ell}\mid i \in [k], j\in [r]}$.
Sets $\set{S_{\ell}}_{\ell \in [p]}$, $\set{T_{\ell}}_{\ell \in [p]}$ and $U$ are called \emph{groups} of $V'$.

\begin{figure}[h]
	\centering
	\subfigure[An illustration of $w_{\sf Y}$. For $v,v'\in V'$, $w_{\sf Y}(v,v')=\min\set{L, \text{ the distance between }v,v'\text{ on the path}}$.]{\scalebox{0.16}{\includegraphics{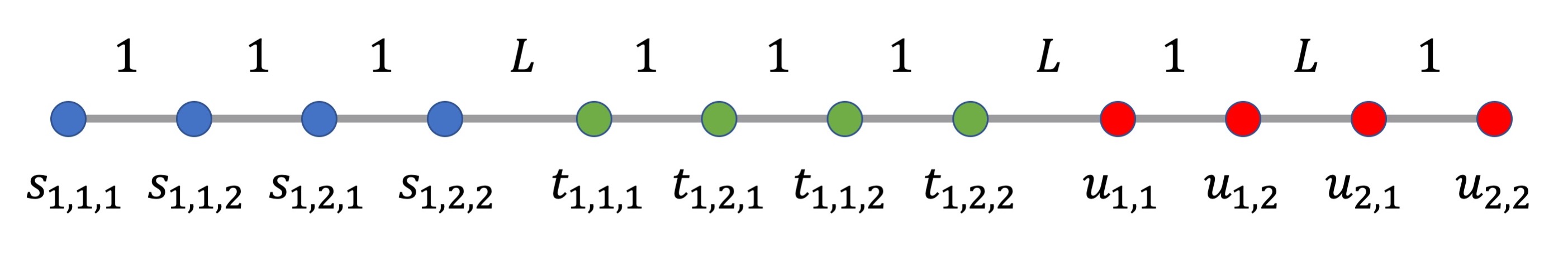}}
	}
	\hspace{0.5cm}
	\subfigure[An illustration of $w_{\sf N}$. For $v,v'\in V'$, $w_{\sf N}(v,v')=\min\set{L, \text{ the distance between }v,v'\text{ on the path}}$.]{\scalebox{0.16}{\includegraphics{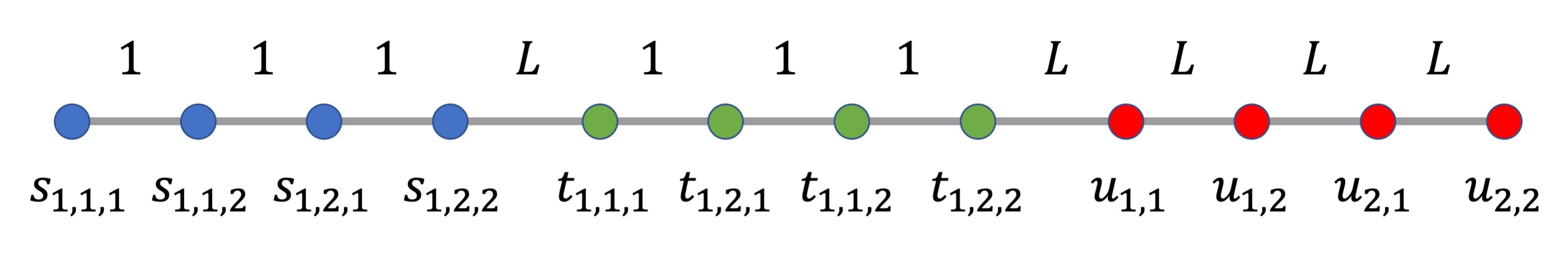}}}
	\caption{An illustration of metrics $w_{\sf Y}$ and $w_{\sf N}$ in the case $k=r=2$, $p=1$ and $n=12$.}\label{fig:Yes_No_metrics}
\end{figure}

Before we define the hard distribution on input partial metric pairs to the problem $\mstest$, we first describe the set $V$ of vertices on which the partial metrics will be defined. Denote $q=2p+1$.
Set $V$ contains $n$ vertices, and is partitioned into $q$ \emph{groups} $V=V_1\cup\ldots\cup V_q$, such that $|V_1|=\cdots=|V_q|=kr$.
We say that an one-to-one mapping $f: V\to V'$ is \emph{group-preserving}, iff $f$ maps an entire group of $V$ to an entire group of $V'$.

\begin{tbox}
	\textbf{Distribution} $\distMST(k,L,r)$: An distribution on pairs $(\bar{w}_A,\bar{w}_B)$ of partial metrics on $V$: 
	
	\begin{enumerate}
	\item Choose a random one-to-one group-preserving mapping $f: V\to V'$.
	\item Choose $Z\in \set{\sf Y,\sf N}$ uniformly at random, and then define metric $w$ as: for every pair $v,v'\in V$, $w(v,v')=w_{Z}(f(v),f(v'))$.
	\item Choose $i^*$ from $[p]$ uniformly at random. Then with probability $1/2$, set $R^{i^*}_A=\emptyset$ and with probability $1/2$, set $R^{i^*}_A=f^{-1}(U)\times f^{-1}(U)$.
	\item For each $1\le i\le p$, $i\ne i^*$, choose $z_i$ from $\set{\mathbb S,\mathbb T}$ uniformly at random. If $z_i=\mathbb S$, then define $R^i_A=f^{-1}(S_i)\times f^{-1}(S_i)$ as a subset of $V\times V$;  if $z_i=\mathbb T$, then define $R^i_A=f^{-1}(T_i)\times f^{-1}(T_i)$ as a subset of $V\times V$. 
	\item Define $\bar w_A$ as the restriction of metric $w$ on set $R_A=\bigcup_{1\le i\le p}R^i_A$ of pairs.
	\item Define $\bar w_B$ as the restriction of $w$ on the set $R_B=(V\times V)\setminus (f^{-1}(U)\times f^{-1}(U))$ of pairs.
	\end{enumerate}
\end{tbox}

Let $\alpha>1$ and $0<\delta<1/10$ be any constants.
Let $\promst$ be a protocol that $(\alpha,\delta)$-approximates the problem $\emph{\sf MST}_{\sf apx}$ over the distribution $\distMST(k,L,r)$.
We construct the following protocol $\proover$ for the problem $\overwrite$ using the protocol $\promst$. 

\begin{tbox}
	\textbf{Protocol} $\proover$: A protocol for $\overwrite$ using a protocol $\promst$ for $\mstest$. 
	
	\smallskip
	
	\textbf{Input:} An instance $(X^A, X^B)\sim \dset_{\sf Overwrite}$. \\
	\textbf{Output:} $Z\in \set{0,1}$ as the answer to $\overwrite$.
	
	\algline
	
	\begin{enumerate}
		\item \textbf{Sampling the instance.} Alice and Bob create an instance $(\bar w_A,\bar w_B)$ of $\mstest$ as follows.
		\begin{enumerate}
			\item Set parameters $k=r=\ceil{2\alpha}$ and $L=n/2\alpha$.
			\item Define the partial metric $\bar w_A$ as follows: for each $t\in [p]$, if $X^A_{t}=1$, then Alice sets the restricted metric on group $V_i$ to be the metric on $S_1$; if $X^A_{t}=0$, then Alice sets the restricted metric on group $V_t$ to be the metric on $T_1$; if $X^A_{t}=*$ then Alice does not modify the restricted metric on group $V_t$ (which remains unknown to Alice).
			\item Let $i^*$ be the index given to Bob. Bob first finds an one-to-one group-preserving mapping $f: V\to V'$, such that (i) $f(V_{i^*})=U$; and (ii) for each $t\in [q], t\ne i^*$, if $X^B_t=1$, then group $V_t$ is mapped to some group of $\set{S_i}_{i\in [p]}$; and if $X^B_t=0$, then group $V_t$ is mapped to some group of $\set{T_i}_{i\in [p]}$.
			\item Define the partial metric $\bar w_B$ as follows: for each pair $v,v'\in V$ such that at least one of $v,v'$ does not belong to group $V_{i^*}$, $\bar w_B(v,v')=w_{\sf Y}(f(v),f(v'))=w_{\sf N}(f(v),f(v'))$.
		\end{enumerate}
		\item \textbf{Computing the answer.} Alice and Bob run the protocol $\promst$ on $(\bar w_A,\bar w_B)$ to compute an estimate $Y$. If $Y\le \alpha \cdot \mst(w_{\sf Y})$, then they return $Z=1$; otherwise they return $Z=0$.  
	\end{enumerate}
\end{tbox}

\begin{claim}
\label{clm: protocol mst to index}
If the protocol $\pi_{\sf MST}$ $(\alpha,\delta)$-approximates the problem $\emph{\sf MST}_{\sf apx}$ over $\dset_{\textnormal{MST}}(k,L,r)$, then the protocol $\pi_{\overwrite}$ is a $\delta$-error protocol for the problem $\overwrite$ over $\dset_{\sf Overwrite}$.
\end{claim}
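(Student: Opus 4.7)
The proof breaks naturally into two parts: a distribution-matching argument and a correctness argument for the post-processing decision rule.

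First, I would show that when $(X^A, X^B) \sim \distover$, the random pair $(\bar w_A, \bar w_B)$ produced by $\proover$ is identically distributed to a draw from $\distMST(k, L, r)$, via a term-by-term coupling of the random choices. The index $i^*$ is uniform on $[p]$ in both distributions. For each $t \ne i^*$, the $\overwrite$ promise forces $X^A_t \in \set{*, X^B_t}$; coupled with the marginal of $X^A_t$ this exactly reproduces the joint distribution of $(z_t, R^t_A)$ in $\distMST$, namely $X^A_t = X^B_t$ corresponds to $z_t = \mathbb{S}$ (if $X^B_t = 1$) or $\mathbb{T}$ (if $X^B_t = 0$) with $R^t_A$ the corresponding within-group restriction, while $X^A_t = *$ corresponds to $t$'s group being absent from $R_A$. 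The key structural observation is that the formula for distances within $S_\ell$ (and within $T_\ell$) is independent of $\ell$, so Alice's uniform use of ``the metric on $S_1$'' or ``$T_1$'' correctly reproduces the within-$S_\ell$ or within-$T_\ell$ metric that $\distMST$ would place at the group to which $f$ sends $V_t$. Bob's choice of $f$ off of $V_{i^*}$ uses private coins to route uniformly among the $S_\ell$'s (when $X^B_t = 1$) or $T_\ell$'s (when $X^B_t = 0$), matching the marginal of $f$ in $\distMST$. Finally, $X^A_{i^*} = *$ with probability $1/2$ matches $R^{i^*}_A = \emptyset$, while $X^A_{i^*} \in \set{0,1}$ each with probability $1/4$ matches the joint event $R^{i^*}_A = f^{-1}(U) \times f^{-1}(U)$ together with $Z$ uniform in $\set{\sf Y, \sf N}$.

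Second, I would argue that whenever $(\bar w_A, \bar w_B)$ is complementary (equivalently, $X^A_{i^*} \ne *$), the decision rule of $\proover$ correctly recovers $X^A_{i^*}$ on the event that $\promst$ succeeds. Complementarity itself requires a short check: on overlapping pairs the two partial metrics agree by the $\overwrite$ promise, the union of their domains is $V \times V$ when $X^A_{i^*} \ne *$, and the combined function is a valid metric because every distance touching $V_{i^*}$ is at most $L$ so the triangle inequality holds automatically. For the decision rule, I would invoke the closed-form expressions $\mst(w_{\sf Y}) = (n-1) + (L-1)(k+1)$ and $\mst(w_{\sf N}) = (n-1) + (L-1)((k-1)r + 1)$: with $k = r = \ceil{2\alpha}$ and $L = \floor{n/(2\alpha)}$, a direct calculation gives $\mst(w_{\sf N})/\mst(w_{\sf Y}) > \alpha$ for $n$ large enough, so the inequality $Y \le \alpha\cdot \mst(w_{\sf Y})$ holds iff $w = w_{\sf Y}$, which by step one happens iff $X^A_{i^*} = 1$. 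When $X^A_{i^*} = *$ any output is accepted, so the overall error probability of $\proover$ is bounded by the error probability of $\promst$ on complementary instances drawn from $\distMST(k,L,r)$, which is at most $\delta$ by hypothesis.

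The main obstacle is the joint-distribution coupling in the first step: one must couple not only the marginals but also the correlated randomness among $(i^*, f, Z, \set{z_t}_t, R^{i^*}_A)$ in $\distMST$ and $(i^*, X^A, X^B)$ in $\distover$, while respecting that Alice is oblivious to both $f$ and $i^*$. The enabling observations that make this tractable are the $\ell$-invariance of the within-$S_\ell$ and within-$T_\ell$ metrics and the fact that $w_{\sf Y}$ and $w_{\sf N}$ agree outside $U \times U$, so that Bob's portion $\bar w_B$ is oblivious to $Z$ and all ambiguity is concentrated in Alice's within-$V_{i^*}$ metric; the gap argument in the second step is then a routine computation.
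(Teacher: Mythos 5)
Your argument mirrors the paper's own structure (a distribution-matching claim followed by the $\mst$-ratio decision rule), and the $\ell$-invariance observation, the complementarity check, and the gap computation all track what the paper does. However, the coupling you assert for the $i^*$-th group relies on a verification that you — like the paper's ``it is easy to see'' — leave implicit, and it appears to fail with the stated parameters. You claim that $X^A_{i^*}\in\set{0,1}$ each with probability $1/4$ matches the joint event $R^{i^*}_A=f^{-1}(U)\times f^{-1}(U)$ with $Z$ uniform. In $\distMST$, when $R^{i^*}_A$ is nonempty, Alice's block on $V_{i^*}$ is $w_Z$ restricted to $U$: in $w_{\sf Y}$ this is distance $|j-j'|$ within an $i$-class and distance exactly $L$ across classes, and in $w_{\sf N}$ it is all-$L$. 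In $\proover$, Alice instead fills $V_{i^*}$ with the within-$S_1$ metric (if $X^A_{i^*}=1$) or the within-$T_1$ metric (if $X^A_{i^*}=0$). For the coupling to hold, these two $kr$-point metrics must coincide with $w_{\sf Y}|_U$ and $w_{\sf N}|_U$ respectively.

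They do not, as stated. Inside $S_1$ the spacing between consecutive $i$-classes is $rp-r+1=(n-kr)/(2k)-r+1$, which with $k=r=\lceil 2\alpha\rceil$ and $L=n/(2\alpha)$ is roughly $n/(4\alpha)\approx L/2<L$; hence the $S_1$-metric assigns a distance strictly less than $L$ to pairs like $(s_{i,j,1},s_{i+1,j,1})$, while $w_{\sf Y}|_U$ assigns $L$ to the corresponding $(u_{i,j},u_{i+1,j})$. An analogous mismatch occurs for $T_1$ versus $w_{\sf N}|_U$ since $p<L$. Consequently the distribution produced by $\proover$ is not literally $\distMST(k,L,r)$, and $\bar w_A\cup\bar w_B$ is not $w_{\sf Y}\circ f$ or $w_{\sf N}\circ f$, so the $\mst$ formulas you invoke do not directly apply. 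The likely fix is either to shrink $L$ to at most $rp-r+1$, or to redefine the $U$-metrics of $w_{\sf Y}$ and $w_{\sf N}$ to equal the $S_1$- and $T_1$-metrics and recompute the $\mst$ ratio under the new definitions. You were right to flag the coupling as the main obstacle — this is the concrete arithmetic on which it hinges, and it belongs in a complete proof rather than being asserted.
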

\begin{proof}
It is easy to see that the distribution of instances $(\bar w_A,\bar w_B)$ for $\mstest$ created in the reduction by the instance $(X^A,X^B)\sim \dset_{\sf Overwrite}$, is exactly the same as the distribution $\distMST(\ceil{2\alpha},n/2\alpha,\ceil{2\alpha})$. Moreover, if $X_{i^*}=1$, then $w=w_{\sf Y}$, otherwise $w=w_{\sf N}$.
Recall that $\mst(w_{\sf Y})=(n-1)+(L-1)(k+1)$ and $\mst(w_{\sf N})=(n-1)+(L-1)((k-1)r+1)$. Therefore, if we set $k=r=\alpha$ and $L=n/\alpha$, then
\[
\frac{\mst(w_{\sf N})}{\mst(w_{\sf Y})}
=
\frac{(n-1)+((n/2\alpha)-1)((\ceil{2\alpha}-1)\cdot\ceil{2\alpha}+1)}{(n-1)+((n/2\alpha)-1)(\ceil{2\alpha}+1)}> \alpha,
\]
and it follows that any $\alpha$-approximation of $\mst(w)$ can determine the value of $\mst(w)$ (to be either $\mst(w_{\sf N})$ or $\mst(w_{\sf Y})$).
Therefore, if the protocol $\pi_{\sf MST}$ $(\alpha,\delta)$-approximates the problem $\emph{\sf MST}_{\sf apx}$ over the distribution $\dset_{\textnormal{MST}}(k,L,r)$, then the protocol $\pi_{\overwrite}$ is an $\delta$-error protocol for the problem $\overwrite$ over the distribution $\dset_{\sf Overwrite}$.
\end{proof}

Combining \Cref{clm: protocol mst to index} and \Cref{lem: hardness for overwrite}, we derive that, for any $\alpha>1$ and any $0<\delta<1$, if a protocol $\pi$ $(\alpha,\delta)$-approximates the problem $\emph{\sf MST}_{\sf apx}$ over $\dset_{\textnormal{MST}}(k,L,r)$, then $$\cc_{\dset_{\textnormal{MST}}(k,L,r)}(\pi)=\Omega(p)=\Omega\left(\frac{n-\ceil{2\alpha}^2}{2\cdot\ceil{2\alpha}^2}\right)=\Omega(n/\alpha^2).$$ This completes the proof of \Cref{thm: main_1-pass n/S^2 lower bound}.

\subsection{Discussion of the Lower Bound in \Cref{thm: main_1-pass n/S^2 lower bound}}
\label{Optimality of 1-pass n/S^2 lower bound}


In this subsection we show that 
the exponent $2$ in the lower bound $\Omega(n/\alpha^2)$ in \Cref{thm: main_1-pass n/S^2 lower bound} is the best constant we can hope for.

\begin{theorem}
There is a deterministic one-way protocol that returns an $\tilde O(\sqrt{n})$-approximation to the problem $\emph{\sf MST}_{\sf apx}$ with communication complexity $\tilde O(1)$.
\end{theorem}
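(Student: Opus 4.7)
Plan: The plan is to design a simple deterministic one-way protocol. Alice will send to Bob the maximum value $M_A = \max \bar w_A$ in her partial metric, using $O(\log n)$ bits. Bob then computes the diameter $D = \max(M_A, \max \bar w_B)$ of the complete metric (which he can do because $\bar w_A$ and $\bar w_B$ are complementary, so every pair is covered), and outputs $Y = \max(D, n-1)$.

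First I would verify that $Y$ is a valid lower bound on $\mst$. Two elementary observations suffice: since all entries of $w$ are positive integers, every spanning tree has weight at least $n-1$; and since the MST must include an edge crossing the cut between the two endpoints of the diameter edge, $\mst \ge D$. Together these give $\mst \ge \max(D, n-1) = Y$.

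Next I would bound the approximation ratio. A star spanning tree rooted at one of the two endpoints of the diameter edge has weight at most $(n-1) D$, so $\mst \le (n-1) D$. Thus
\[ \frac{\mst}{Y} \;\le\; \frac{(n-1) D}{\max(D, n-1)} \;=\; \min(D, n-1), \]
which is at most $\sqrt{(n-1) D}$ by the inequality $\min(a,b) \le \sqrt{ab}$. When $D \le \sqrt{n}$, the ratio is at most $\sqrt n$. For the regime $D > \sqrt{n}$, the simple diameter bound is insufficient, and Alice augments her message with a logarithmic number of additional summary statistics, for instance the count of her partial entries falling in each of the $O(\log n)$ geometric weight buckets $[2^i, 2^{i+1})$. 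Each count takes $O(\log n)$ bits, keeping the total communication $\tilde O(1)$. Bob combines these with the analogous counts from $\bar w_B$ to derive, for each threshold $t$, an approximate lower bound on the number of components $c_t$ of the graph $G_t$ of edges of weight at most $t$, and then uses the scale-wise bound $\mst \ge (c_t - 1)\, t$ together with the identity $\mst = \int_0^D (c_t-1)\, dt$ to pick the best scale for the output $Y$.

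The main obstacle is the second regime $D > \sqrt n$: from aggregate bucket counts alone one only gets the loose inequality $c_t \ge n - m_t$ (with $m_t$ the total number of edges below $t$), and this can vastly undercount $c_t$ when $G_t$ is dense. The heart of the argument will be to show that, nevertheless, the maximum of these loose lower bounds across all $O(\log n)$ geometric scales is within a $\tilde O(\sqrt n)$ factor of $\mst$, by a case analysis on the distribution of the $n-1$ MST edge weights. This step matches the lower bound of \Cref{thm: main_1-pass n/S^2 lower bound} at $\alpha = \sqrt n$, confirming that the exponent $2$ in $\Omega(n/\alpha^2)$ is tight.
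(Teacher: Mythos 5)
Your protocol takes a genuinely different route from the paper's, and the step you yourself flag as ``the heart of the argument'' does not go through. The paper has Alice send $Z_A = \min\{\mst(\hat w) : \hat w\text{ is a completion of }\bar w_A\}$ (and Bob output $\max\{Z_A,Z_B\}$), then argues that for each scale $2^i$, the complementarity of $\bar w_A,\bar w_B$ forces $\max\{Z_A,Z_B\}\ge\Omega(\sqrt{c_i})\cdot 2^i$, where $c_i$ is the number of components of the weight-$\le 2^i$ graph; this is combined with $\mst\le\sum_i c_i 2^i$. Your summary statistics (the diameter plus geometric bucket counts) are strictly weaker, and they break on the following instance. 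Take $V = A\cup B$ with $|A|=|B|=n/2$, where $w(a,a')=1$ for all $a,a'\in A$ and every other distance equals $M\approx n$; this is a valid metric. Here $\mst = (n/2-1)\cdot 1 + (n/2)\cdot M = \Theta(n^2)$, the diameter is $D=M=\Theta(n)$, and the number of pairs of weight $\le t$ is $m_t = \binom{n/2}{2}=\Theta(n^2)$ for every $1\le t<M$. So $n-m_t<0$ at every threshold, your bucket-based lower bound on $c_t$ contributes nothing, and you output $Y=\max(D,n-1)=\Theta(n)$, which is a $\Theta(n)$-approximation, not $\tilde O(\sqrt n)$.

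The failure is structural: $c_t$ can be $\Theta(n)$ even when $m_t=\Theta(n^2)$, because a dense clique on a $\Theta(n)$-subset supplies quadratically many short edges without reducing the component count much below $n$. Recovering from this would require exploiting metric-specific structure (for example, that small-weight components are forced to be near-cliques by the triangle inequality) rather than the crude bound $c_t\ge n-m_t$, and your plan gives no argument that does so; the paper instead gets the needed per-scale bound directly from $Z_A,Z_B$ via its Lemma 4.1, which crucially uses the complementarity of the two partial metrics (if $Z_A$ is small then Alice has imagined many pairs close, so Bob must have seen those pairs far apart, forcing $Z_B$ up). A secondary point: your floor $\mst\ge n-1$ relies on integer entries, but in the streaming/communication setting the paper allows real-valued metrics, so this floor is unavailable; the paper's $Z_A,Z_B$ are automatically valid lower bounds on $\mst$ with no integrality assumption.
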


In particular, if we set $\alpha=\tilde O(\sqrt{n})$, then the lower bound in \Cref{thm: main_1-pass n/S^2 lower bound} shows that any protocol that returns an $\tilde O(\sqrt{n})$-approximation must have communication complexity $\tilde \Omega(1)=\tilde\Omega(n/\alpha^2)$.


Recall that in the communication game, Alice is given a partial metric $\bar w_A$ and Bob is given another partial metric $\bar w_B$, such that $\bar w_A,\bar w_B$ are complementary.
The protocol is simple and is decribed below.

\paragraph{Protocol.}
Alice computes $Z_A=\min\set{\mst(\hat w)\mid \hat w\text{ is a completion of }\bar w_A}$, and Bob computes $Z_B=\min\set{\mst(\hat w)\mid \hat w\text{ is a completion of }\bar w_B}$. Alice sends $Z_A$ to Bob and then Bob outputs $\max\set{Z_A,Z_B}$ as the estimate of $\mst(w)$.

It is clear that the protocol is deterministic and has communication complexity $\tilde O(1)$. We now show that it indeed returns an $\tilde O(\sqrt{n})$-approximation to the problem $\emph{\sf MST}_{\sf apx}$.

\paragraph{Proof of Correctness.}
On the one hand, by definition, $w$ is a completion of $\bar w_A$, so $\mst(w)\ge Z_A$, and similarly $\mst(w)\ge Z_B$, so $\mst(w)\ge \max\set{Z_A, Z_B}$. It now remains to show that $\mst(w)\le \tilde O(\sqrt{n})\cdot \max\set{Z_A, Z_B}$.

\begin{lemma}
\label{lem: uniform metric completion sqrt n bound}
If $w(x,y)=1$ for all pairs $x,y$ of vertices of $V$, then $\max\set{Z_A,Z_B}\ge \Omega(\sqrt n)$.
\end{lemma}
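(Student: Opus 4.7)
I will prove the bound by a packing-and-cover argument inside Alice's completion that trades off between the size of a $\hat w_A$-packing (which controls $Z_A$) and the size of an induced $G_B$-clique (which controls $Z_B$). Fix a metric completion $\hat w_A$ of $\bar w_A$ achieving $\mst(\hat w_A) = Z_A$ (or within arbitrary additive $\epsilon$); since $w\equiv 1$, every $(u,v)\in S_A$ satisfies $\hat w_A(u,v)=1$. The strategy is to choose a maximal $P\subseteq V$ with pairwise $\hat w_A$-distances at least $1/2$: large $|P|$ will force $Z_A$ to be large, while small $|P|$ will force a large clique in $G_B$, hence $Z_B$ large.

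By the maximality of $P$, every $u\in V$ lies in some open $\hat w_A$-ball $B_v=\{u:u=v\text{ or }\hat w_A(u,v)<1/2\}$ for some $v\in P$. This yields two structural facts. First, since the pairwise $\hat w_A$-distances inside $P$ are at least $1/2$, the MST of $P$ in the induced metric has weight at least $(|P|-1)/2$. Second, by the triangle inequality, any two vertices inside a single $B_{v}$ have $\hat w_A$-distance strictly less than $1$, so they cannot belong to $S_A$; because $S_A\cup S_B=V\times V$, every such pair lies in $S_B$, so $B_v$ induces a clique in $G_B$. By pigeonhole, some $B_{v_0}$ has size at least $n/|P|$, and in any Bob-completion $\hat w_B$ the points of $B_{v_0}$ are pairwise at $\hat w_B$-distance exactly $1$.

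I will turn these into quantitative bounds via the classical Steiner-ratio inequality: in any metric space on $V$, the MST of a terminal set $T\subseteq V$ (in the induced metric) is at most twice the minimum Steiner tree of $T$ inside $V$ (proved by doubling the Steiner tree into an Eulerian multigraph and shortcutting the Eulerian tour using triangle inequality). Applying this inside $\hat w_A$ with terminals $P$ gives
\[
Z_A \;=\; \mst(\hat w_A) \;\ge\; \mathrm{Steiner}_{\hat w_A}(P) \;\ge\; \tfrac{1}{2}\cdot\mathrm{MST}_{\hat w_A}(P) \;\ge\; \tfrac{|P|-1}{4},
\]
and applying it inside $\hat w_B$ with terminals $B_{v_0}$ (on which $\hat w_B$ is the all-ones metric) gives
\[
Z_B \;\ge\; \mathrm{Steiner}_{\hat w_B}(B_{v_0}) \;\ge\; \tfrac{|B_{v_0}|-1}{2} \;\ge\; \tfrac{n/|P|-1}{2}.
\]
Combining,
\[
\max(Z_A,Z_B) \;\ge\; \max\!\left(\tfrac{|P|-1}{4},\;\tfrac{n/|P|-1}{2}\right) \;=\; \Omega(\sqrt n),
\]
since the two lower bounds balance at $|P|=\Theta(\sqrt n)$.

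The main obstacle will be the clean double invocation of the Steiner-ratio inequality (once with respect to Alice's metric and once with respect to Bob's) and the minor bookkeeping about strict versus non-strict open balls in defining the cover; both are standard. A small caveat is that $Z_A$ is defined as an infimum that may not be attained, but the argument applies to any $\hat w_A$ that is $\epsilon$-close to optimal and passes to the limit.
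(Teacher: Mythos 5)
Your proof is correct and takes a genuinely different route from the paper's, although both are packing arguments at heart. The paper fixes the optimal completion $w_A$ and splits on whether $Z_A \ge \sqrt{n}$; in the remaining case it greedily peels the MST of $w_A$ into at most $2\sqrt{n}$ clusters of diameter less than $1$, so pigeonhole yields a cluster of size $\Omega(\sqrt{n})$, all of whose pairwise distances are strictly below $1$ in $w_A$ and hence unknown to Alice; by complementarity these pairs are all in Bob's input with value $1$, giving $Z_B = \Omega(\sqrt{n})$ just as in your $B_{v_0}$ step. Your version instead builds a maximal $1/2$-separated set $P$ under $\hat w_A$ and applies the Steiner-ratio bound symmetrically on both sides, getting $Z_A \ge (|P|-1)/4$ and $Z_B \ge (n/|P|-1)/2$, then balances. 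The explicit Steiner-ratio bound on the $Z_A$ side is what lets you dispense with the case split entirely, and the maximal-packing ball cover is cleaner than the paper's greedy MST peeling (which requires some care to verify that the residual tree stays connected and that each peeling step removes weight at least $1/2$). The caveat you raise about $Z_A$ possibly not being attained is easily handled: truncating any completion at value $1$ yields another completion with no larger MST cost, so the feasible set can be taken compact and the minimum is attained.
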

\begin{proof}
Let $w_A$ be the complete metric that, over all completion $\hat w$ of $\bar w_A$, minimizes $\mst(\hat w)$.
If $Z_A=\mst(w_A)\ge \sqrt{n}$, then $\max\set{Z_A,Z_B}\ge Z_A\ge \sqrt n/2$.
We will show that, if $Z_A=\mst(w_A)<\sqrt{n}$, then there is a set $V'\subseteq V$ of at least $\sqrt{n}/2$ vertices, such that for each pair $x,y\in V'$, $w_A(x,y)<1$. Note that this implies that, for each pair $x,y\in V'$, $\bar w_A(x,y)\ne 1$, and therefore $\bar w_B(x,y)= 1$ since the partial metrics $\bar w_A,\bar w_B$ are complementary.
It follows that $Z_B\ge |V'|/2=\Omega(\sqrt{n})$.

Let $T$ be a minimum spanning tree of graph $G_{w_A}$, the complete graph with weights $w_A$ on its edges, so $w(E(T))\le \sqrt n$. 
Let $T$ be rooted at an arbitrary vertex $r$.
We now iteratively partition vertices of $V$ into disjoint clusters as follows. Throughout, we maintain (i) a tree $\hat T$, that is initialized to be $T$; and (ii) a collection $\sset$ of clusters of $V$, that initially contains no clusters.
We will ensure that, over the course of the algorithm, $\hat T$ is either an empty graph or a connected subtree of $T$ that contains the root $r$.
We now describe an iteration. 
Let $v$ be the leaf of $\hat T$ that, over all leaves of $\hat T$, maximizes its tree-distance to root $r$ in $\hat T$. Let $e_v$ be the unique incident edge of $v$.
If $w_A(e_v)\ge 1/2$, then we simply (i) add a single-vertex cluster $\set{v}$ to $\sset$; (ii) delete vertex $v$ and edge $e_v$ from $\hat T$, and then continue to the next iteration.
If $w_A(e_v)< 1/2$, then we let $S_v=\set{v'\in V(\hat T)\mid w_A(v,v')<1/2}$, and then (i) add the cluster $S_v$ to $\sset$; (ii) delete vertices of $S_v$ and their incident edges from $\hat T$, and then continue to the next iteration.
It is easy to see that, after each iteration, the total weight (under $w_A$) of all edges of $\hat T$ is decreased by at least $1/2$, and the diameter of each cluster added into $\sset$ is less than $1$. Therefore, the algorithm terminates in at most $2\sqrt{n}$ iterations, so $\sset$ contains at most $2\sqrt{n}$ clusters whose union is $V$, and the diameter of each cluster is less than $1$. Let $V'$ be the cluster of $\sset$ with maximum cardinality. It is easy to verify that $V'$ satisfies all desired conditions.
\end{proof}

Let $G_w$ be the weighted complete graph on $V$, where for each pair $v,v'\in V$, the weight on edge $(v,v')$ is $w(v,v')$. For each $1\le i\le \log n$, we let $G^i_w$ be the graph obtained from $G_w$ by deleting all edges with weight greater than $2^i$, and we denote by $c_i$ the number of connected components in graph $G^i_w$. We use the following claims.
\begin{claim}
\label{clm: mst expressed in terms of c_i}
$\mst(w)\le \sum_{i} c_i\cdot 2^{i}$.
\end{claim}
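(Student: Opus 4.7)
The plan is Kruskal-style bucketing: fix an MST $T$ of $G_w$, partition its edges by dyadic weight class, count each class using the values $c_i$, and finish with Abel summation. The central observation is that for each $i$, the MST edges of weight at most $2^i$ must form a spanning forest of $G_w^i$ (otherwise the cycle property or Kruskal's algorithm gives a cheaper MST). Hence the number of MST edges of weight at most $2^i$ equals exactly $n - c_i$.

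From this, the number of MST edges whose weight lies in the dyadic interval $(2^{i-1}, 2^i]$ is $c_{i-1} - c_i$ for each $i \ge 1$, and I will set the convention $c_0 := n$ (no MST edge has weight $\le 2^0 = 1$ strictly below $1$ because weights are positive integers; the weight-$1$ edges fall into the bucket $i=1$ automatically since $1 \in (1/2, 2]$). Each edge in bucket $i$ contributes at most $2^i$ to $\mst(w)$, so
\[
\mst(w) \;\le\; \sum_{i=1}^{K} (c_{i-1} - c_i)\cdot 2^i,
\]
where I choose $K = \log n$, which is large enough that $c_K = 1$ in the regime under consideration (the diameter of the metric is at most $n$ times the minimum weight, so $2^K \ge n$ makes $G_w^K$ connected).

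The final step is summation by parts. A direct expansion telescopes the right-hand side:
\[
\sum_{i=1}^{K} (c_{i-1} - c_i)\cdot 2^i \;=\; 2c_0 \;+\; \sum_{i=1}^{K-1} c_i \cdot 2^i \;-\; c_K \cdot 2^K.
\]
Since $c_0 \le n = 2^K \le c_K \cdot 2^{K+1}$, the slack $2c_0 - c_K\cdot 2^K$ is absorbed into the missing term $c_K \cdot 2^K$, giving
\[
\mst(w) \;\le\; \sum_{i=1}^{K-1} c_i \cdot 2^i \;+\; c_K \cdot 2^K \;=\; \sum_{i=1}^{K} c_i \cdot 2^i,
\]
as claimed.

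I do not anticipate a real technical obstacle: the argument is elementary once the identity $|\{e \in T : w(e) \le 2^i\}| = n - c_i$ is in hand, and everything else is dyadic bookkeeping. The only care needed is the treatment of the boundary terms, namely the contribution from weight-$1$ edges at the low end of the sum, and the fact that $c_K = 1$ at the high end so that the telescoping cleanly terminates.
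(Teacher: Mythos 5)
Your route differs from the paper's. The paper rounds each weight up to the next power of two to obtain $w^{+}$ and reasons about $\mst(w^{+})$; as written it asserts the MST of $G_{w^{+}}$ contains ``at least $c_i$ edges of weight at least $2^{i+1}$'' (the Kruskal count is $c_i - 1$) and derives $\mst(w^{+}) \ge \sum_i c_i\cdot 2^i$. Since the paper also uses $\mst(w) \le \mst(w^{+})$, these two inequalities face the same way and do not combine to give the claimed upper bound on $\mst(w)$; the printed argument appears to need repair of both the off-by-one and the inequality direction. Your route --- bucket the MST edges of $w$ directly by dyadic weight class, charge each bucket-$i$ edge at most $2^i$, then telescope --- sidesteps $w^{+}$ entirely and makes the boundary terms explicit. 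Both rest on the same Kruskal fact, which you state correctly: the MST edges of weight $\le 2^i$ span every component of $G^i_w$, so there are exactly $n - c_i$ of them.

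Two points need attention, one cosmetic and one a genuine gap. The cosmetic one: you say the $i=1$ bucket $(2^{i-1},2^i] = (1,2]$ picks up the weight-$1$ edges ``since $1 \in (1/2,2]$.'' It does not: $1 \notin (1,2]$, and $(1/2,2]$ is not the interval you defined. What saves the arithmetic is the convention $c_0 := n$: then your count $c_0 - c_1 = n - c_1$ equals the number of MST edges of weight at most $2$ (weight $1$ and weight $2$ lumped together), each still chargeable at $2^1$, so the upper bound survives. The interval description is wrong but the count is right.

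The substantive gap is the step $c_K = 1$ at $K = \log n$. You justify it with ``the diameter of the metric is at most $n$ times the minimum weight,'' which is simply false for an arbitrary positive-integer metric. Take all pairwise distances equal to $10n$: then $c_i = n$ for every $1 \le i \le \log n$, so $\sum_{i=1}^{\log n} c_i\cdot 2^i$ is roughly $2n^2$, while $\mst(w) = (n-1)\cdot 10n$, and the stated inequality fails. (The paper is silent on this as well.) Your telescoping requires $c_0 \le c_K 2^K$, and this is exactly where the diameter assumption is cashed in. The fix is to sum to $K = \lceil\log\max\{n,\diam(w)\}\rceil$, which forces $c_K = 1$ because $G^K_w$ is then the complete graph, and gives $2^K \ge n \ge c_0$; the downstream use of the claim then carries a $\log\diam(w)$ factor in place of $\log n$, which the $\tilde{O}(\sqrt{n})$ bound absorbs. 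You should not assert a diameter bound you cannot prove; as stated the claim itself needs it.
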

\begin{proof}
We define function $w^{+}$ as follows. For each pair $x,y\in V$, let $i$ be the unique integer such that $2^{i-1}< w(x,y)\le 2^i$, and we define $w^{+}(x,y)=2^i$. Although $w^+$ may no longer be a metric, $\mst(w^+)$ is well-defined and it is clear that $\mst(w^+)\ge \mst(w)$.
We define graph $G_{w^+}$ and graphs $\set{G^i_{w^+}}_{i}$ similarly. Clearly, for each $i$, graph $G^i_{w^+}$ contains the same set of edges as $G^i_w$. Therefore, the number of connected components in $G^i_{w^+}$ is $c_i$, and it follows that any minimum spanning tree of $G_{w^+}$ must contain, for each $i$, at least $c_i$ edges of weight at least $2^{i+1}$. Therefore, \[\mst(w^+)=\mst(G_{w^+})\ge \sum_{i}c_i\cdot (2^{i+1}-2^i)=\sum_{i}c_i\cdot 2^i.\]
This completes the proof of \Cref{clm: mst expressed in terms of c_i}.
\end{proof}
\begin{claim}
\label{clm: single c_i bound for mst}
For each $1\le i\le \log n$, $\max\set{Z_A, Z_B}\ge \sqrt{c_i}\cdot2^{i}$.
\end{claim}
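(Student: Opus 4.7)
The plan is to follow the template of \Cref{lem: uniform metric completion sqrt n bound}, rescaled by $2^i$: pick one representative from each connected component of $G^i_w$, use the assumed small value of $Z_A$ to force many representatives to lie in a ``small-$w_A$-diameter'' cluster, and then use complementarity of $\bar w_A, \bar w_B$ to forward a large lower bound onto $Z_B$. I will argue by contradiction, assuming $\max\{Z_A, Z_B\} < c\cdot \sqrt{c_i}\cdot 2^i$ for a suitably small absolute constant $c>0$.

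First, for each component $\mathcal C_j$ of $G^i_w$ choose an arbitrary representative $v_j \in \mathcal C_j$, and set $V_i^* = \{v_1,\ldots,v_{c_i}\}$. By definition of $G^i_w$, for every $j\ne k$ the representatives $v_j, v_k$ lie in different components, hence $w(v_j,v_k) > 2^i$. Next, let $w_A$ be the completion of $\bar w_A$ attaining $Z_A = \mst(w_A)$, let $T^*$ be an MST of $G_{w_A}$, and consider an Euler tour of $T^*$ of total length $2 Z_A$. Cut the tour into consecutive segments of $w_A$-length at most $2^i/4$ each; this yields at most $\lceil 8 Z_A / 2^i\rceil$ segments. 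Assign each vertex to the first segment on the tour containing it; this partitions $V$. Any two vertices in the same segment are connected along the tour by a subpath of length at most $2^i/4$, so by the triangle inequality their $w_A$-distance is at most $2^i/4 < 2^i$.

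Under the assumed upper bound on $Z_A$, the number of clusters is bounded by $8c\sqrt{c_i}$, which for $c$ sufficiently small is at most $\sqrt{c_i}/2$. By pigeonhole, some cluster contains at least $2\sqrt{c_i}$ vertices of $V_i^*$; call this set $V'$. For every $u,v\in V'$ we have $w_A(u,v) < 2^i$ while $w(u,v) > 2^i$, so $\bar w_A(u,v) \ne w(u,v)$, forcing $\bar w_A(u,v) = *$. Since $\bar w_A,\bar w_B$ are complementary, this means $\bar w_B(u,v) = w(u,v) > 2^i$ for every pair in $V'$.

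Finally, consider any completion $\hat w_B$ of $\bar w_B$. Restricted to $V'$, every pairwise distance in $\hat w_B$ exceeds $2^i$, so $\mst(\hat w_B|_{V'}) \ge (|V'|-1)\cdot 2^i$. Using the standard fact that doubling an MST of $\hat w_B$ into an Eulerian multigraph and shortcutting onto $V'$ gives a spanning tree of $V'$ of weight at most $2\,\mst(\hat w_B)$, we obtain $\mst(\hat w_B) \ge \tfrac{1}{2}(|V'|-1)\cdot 2^i = \Omega(\sqrt{c_i}\cdot 2^i)$, and hence $Z_B \ge c\cdot \sqrt{c_i}\cdot 2^i$ for a suitable $c$, contradicting the initial assumption. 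The main subtlety is verifying that the Euler-tour segmentation yields clusters of $w_A$-diameter strictly below $2^i$ (recall that each tree edge appears twice on the tour, so care is needed in bounding subpath lengths); once this geometric point is nailed down, pigeonhole and the complementarity of $\bar w_A,\bar w_B$ deliver the conclusion with only absolute-constant losses, which are absorbed in the statement of the claim.
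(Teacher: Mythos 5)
Your proof is correct, and it reaches the same conclusion via a somewhat different route from the paper's. The paper first introduces level-$i$ \emph{truncated} partial metrics $\bar w_A^{(i)},\bar w_B^{(i)}$ (capping every entry at $2^i$), shows $Z^{(i)}_A\le Z_A$ and $Z^{(i)}_B\le Z_B$, then restricts to the representatives $V'$ of the $c_i$ components (where all true distances become exactly $2^i$), and finally invokes \Cref{lem: uniform metric completion sqrt n bound} after rescaling. You instead skip truncation entirely: you cluster the representatives directly, at scale $2^i$, by cutting an Euler tour of the MST of the optimal completion $w_A$ of $\bar w_A$ into length-$(2^i/4)$ windows, giving at most $O(Z_A/2^i)$ clusters of $w_A$-diameter $<2^i$, and then use the fact that distinct components of $G^i_w$ have true distance $>2^i$ to force $\bar w_A=*$ (hence $\bar w_B$ equals the true distance) on every in-cluster pair. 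The paper's \Cref{lem: uniform metric completion sqrt n bound} does the analogous clustering with farthest-leaf balls in the MST rather than Euler-tour windows; both produce the same $O(\mst/2^i)$-cluster count and the same $\le 2^i$-diameter guarantee. Your version is more self-contained (no need for the truncated-metric reduction or for \Cref{lem: uniform metric completion sqrt n bound} as a black box), at the cost of re-deriving the clustering at each scale rather than reusing a single unit-scale lemma. Two small precision points worth noting: (i) the greedy Euler-tour cutting must allow singleton windows when a tree edge alone exceeds $2^i/4$ (you flag this as the "main subtlety"; the greedy windowing argument works because consecutive window-plus-bridging-edge blocks are edge-disjoint and each has weight $>2^i/4$, giving the $O(Z_A/2^i)$ bound); (ii) both your proof and the paper's actually establish $\max\{Z_A,Z_B\}\ge \Omega(\sqrt{c_i}\cdot 2^i)$ rather than the constant-free inequality in the claim statement, which is harmless since the claim is only used inside an $O(\cdot)$ bound downstream.
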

\begin{proof}
For a partial metric $\bar{w}$, we define its \emph{truncated metric at level-$i$} to be the metric $\bar{w}^{(i)}$, where for every pair $x,y\in V$, if $\bar{w}(x,y)\ne *$, then $\bar{w}^{(i)}(x,y)= \min\set{2^i, \bar{w}(x,y)}$; otherwise $\bar{w}^{(i)}(x,y)=*$. The truncated metric at level-$i$ of a complete metric is defined similarly.
It is easy to see that a truncated metric of a complete metric is still a valid complete metric (that is, satisfying all triangle inequalities).

Fix now an integer $1\le i\le \log n$
and consider the truncated metrics $w^{(i)}$, $\bar w_A^{(i)}$ and $\bar w_B^{(i)}$.
We denote $Z^{(i)}_A=\min\set{\hat w\mid \hat w\text{ is a completion of }\bar w^{(i)}_A}$.
Let $\tilde w_A$ be the completion of partial metric $\bar w_A$ that minimizes $\mst(\tilde w_A)$. 
It is easy to see that the truncated metric $\tilde w^{(i)}_A$ of $\tilde w_A$ is a completion of $\bar w^{(i)}_A$, and $\mst(\tilde w^{(i)}_A)\le \mst(\tilde w_A)$. Altogether, we get that $Z^{(i)}_A\le Z_A$. Similarly, if we denote by $Z^{(i)}_B=\min\set{\hat w\mid \hat w\text{ is a completion of }\bar w^{(i)}_B}$, then $Z^{(i)}_B\le Z_B$.
So $\max\set{Z_A,Z_B}\ge \max\set{Z^{(i)}_A,Z^{(i)}_B}$.

We now show that $\max\set{Z^{(i)}_A,Z^{(i)}_B}\ge 2^i\cdot \sqrt{c_i}$, thus completing the proof of \Cref{clm: single c_i bound for mst}. Recall that the graph $G^i_{w}$ is obtained from graph $G_w$ by deleting all edges of weight greater than $2^i$. Recall that $G^i_{w}$ contains $c^{i}$ connected component. Let $V'=\set{v_1,\ldots,v_{c_i}}$ be a set of $c_i$ vertices from different connected components of $G^i_{w}$. We define a partial metric $\bar w'$ as follows: for each pair $v_p,v_q\in V'$, $\bar w'(v_p,v_q)=2^i$; for any other pair $(x,y)$, $\bar w'(x,y)=*$.
We define two other partial metrics $\bar w'_A, \bar w'_B$ as follows. For a pair $v_p,v_q\in V'$, if $\bar w_A(v_p,v_q)\ne *$, then $\bar w'_A(v_p,v_q)= 2^i$, otherwise $\bar w'_A(v_p,v_q)= *$. The metric $\bar w'_B$ is defined similarly. Clearly, partial metrics $\bar w'_A, \bar w'_B$ are restrictions of $\bar w'$. Moreover, $\bar w'_A$ is a restriction of $\bar w^{(i)}_A$, and $\bar w'_B$ is a restriction of $\bar w^{(i)}_B$.
Therefore, if we denote $Z'_A=\min\set{\hat w\mid \hat w\text{ is a completion of }\bar w'_A}$ (and define $Z'_B$ similarly), then 
$Z'_A\le Z^{(i)}_A$ (and similarly $Z'_B\le Z^{(i)}_B$).
Note that, from \Cref{lem: uniform metric completion sqrt n bound}, $\max\set{Z'_A,Z'_B}\ge \Omega(\sqrt{c_i}\cdot2^{i})$, so $\max\set{Z^{(i)}_A,Z^{(i)}_B}\ge \Omega(\sqrt{c_i}\cdot2^{i})$.
\end{proof}
From \Cref{clm: mst expressed in terms of c_i} and \Cref{clm: single c_i bound for mst}, we get that
\[
\mst(w)\le \sum_{1\le i\le \log n} c_i\cdot 2^{i}
\le \sum_{1\le i\le \log n} \sqrt{n\cdot c_i}\cdot 2^{i}
= \sqrt{n}\cdot\bigg(\sum_{1\le i\le \log n} \sqrt{c_i}\cdot 2^{i}\bigg)
\le O(\sqrt{n}\log n)\cdot \max\set{Z_A, Z_B}.
\]

\section{Space Lower Bound for Multi-Pass MST Estimation in Metric Streams}

In this section, we prove a space lower bound for multi-pass streaming algorithms for MST estimation in metric streams. We will show that, for any $\alpha>1$ and $p\ge 1$, any randomized $p$-pass $\alpha$-approximation streaming algorithm for metric MST estimation requires $\tilde\Omega (\sqrt{n/\alpha p^2})$ space, thus establishing \Cref{thm: p pass alpha MST lower}.
At a high level, our proof first analyzes a multi-player communication game for MST estimation in the blackboard model, and then shows that the lower bound of the communication complexity of this problem implies the space lower bound on multi-pass streaming algorithms.


\paragraph{The $k$-player metric-MST-cost approximation problem ($\emph{\sf MST}^k_{\sf apx}$):} This is a communication problem between players $P_1,\ldots,P_k$, in which each player $P_i$ is given a partial metric $\bar{w}_i$ on some common vertex set $V$ known to all players. The goal of the players is to compute an estimate $Y$ of $\mst(w)$, if (i) there exists some complete metric $w$ on $V$, such that each $\bar w_i$ is the restriction of $w$ onto some subset $R_i\subseteq V\times V$ of pairs; and (ii) $\bigcup_{1\le i\le k}R_i=V\times V$.
If either (i) or (ii) does not hold, then any output by the algorithm will be viewed as a correct output.

In order to analyze the communication complexity of any randomized protocol that outputs an $\alpha$-approximation of the problem, we first define metrics $w_{\sf Y}$ and $w_{\sf N}$ as follows. The vertex set $V$ contains $n$ vertices, and is partitioned into $m$ groups of size $N=\floor{\sqrt{3\alpha n}}$ each, so $m=\Omega(\sqrt{n/3\alpha})$. 
We denote one of the groups by $V^*$, and call it the \emph{special group}. All other groups are called \emph{regular groups}.
We let $M$ be any parameter such that $M \gg n$. 
In both metrics $w_{\sf Y}$ and $w_{\sf N}$, the distance between every pair of vertices that come from different groups is $M$. In $w_{\sf Y}$, the distance between every pair of vertices that come from the same group is $1$.  In $w_{\sf N}$, the distance between every pair of vertices that come from the same regular group is $1$, while the distance between every pair of vertices from the special group is $M$. See \Cref{fig: metrics} for an illustration. It is easy to verify that $\mst(w_{\sf Y})\le 2mM$ and $\mst(w_{\sf N})\ge MN$, so $\mst(w_{\sf N})> \alpha \cdot \mst(w_{\sf Y})$. This means any protocol that returns an $\alpha$-approximation of the MST cost can indeed distinguish between the two metrics $w_{\sf Y}$ and $w_{\sf N}$.

\begin{figure}[h]
	\centering
	\subfigure[Metric $w_{\sf Y}$ (or graph $G_{\sf Y}$).]{\scalebox{0.24}{\includegraphics{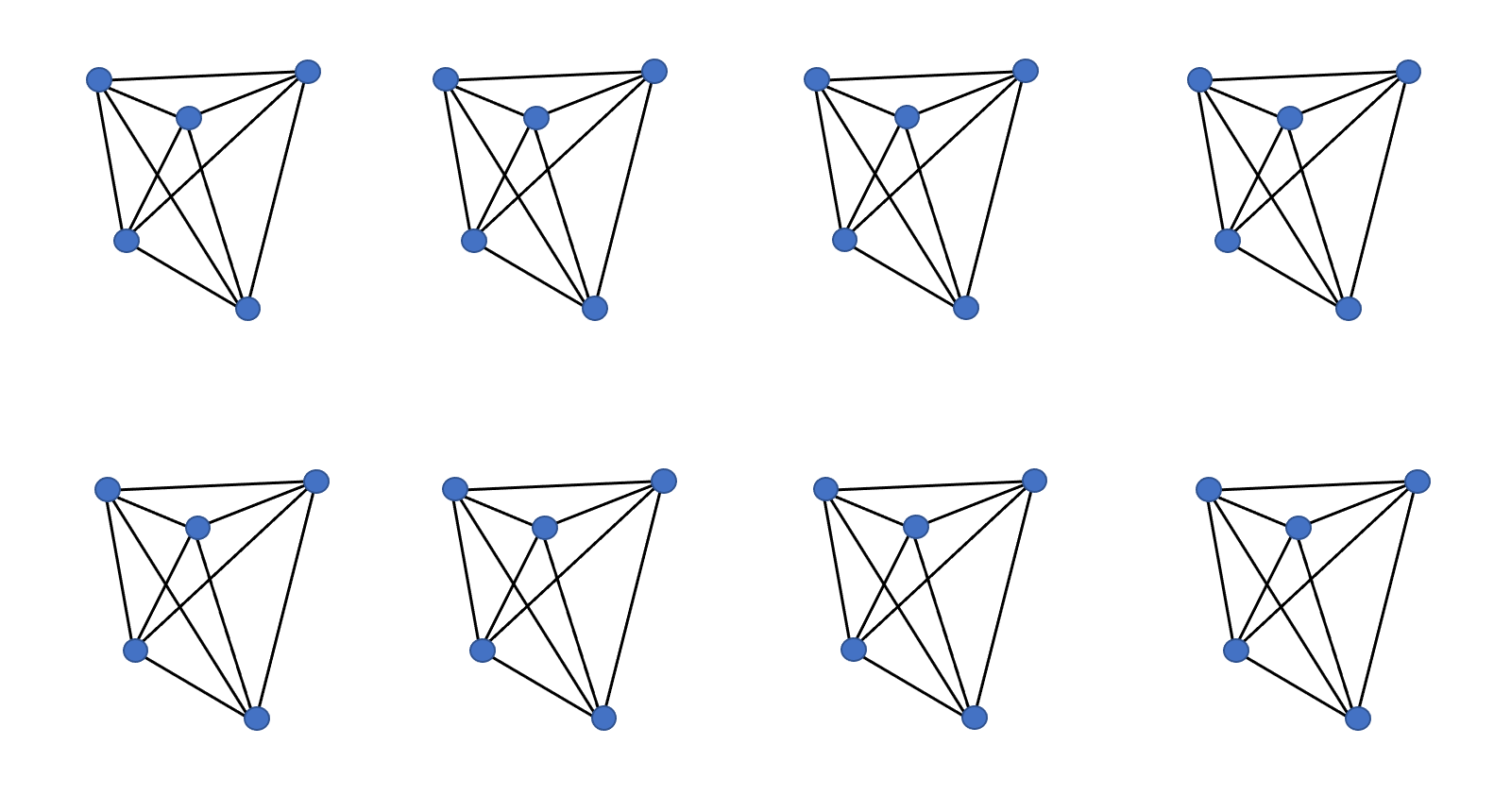}}}
	\hspace{0.5cm}
	\subfigure[Metric $w_{\sf N}$ (or graph $G_{\sf N}$).]{
		\scalebox{0.24}{\includegraphics{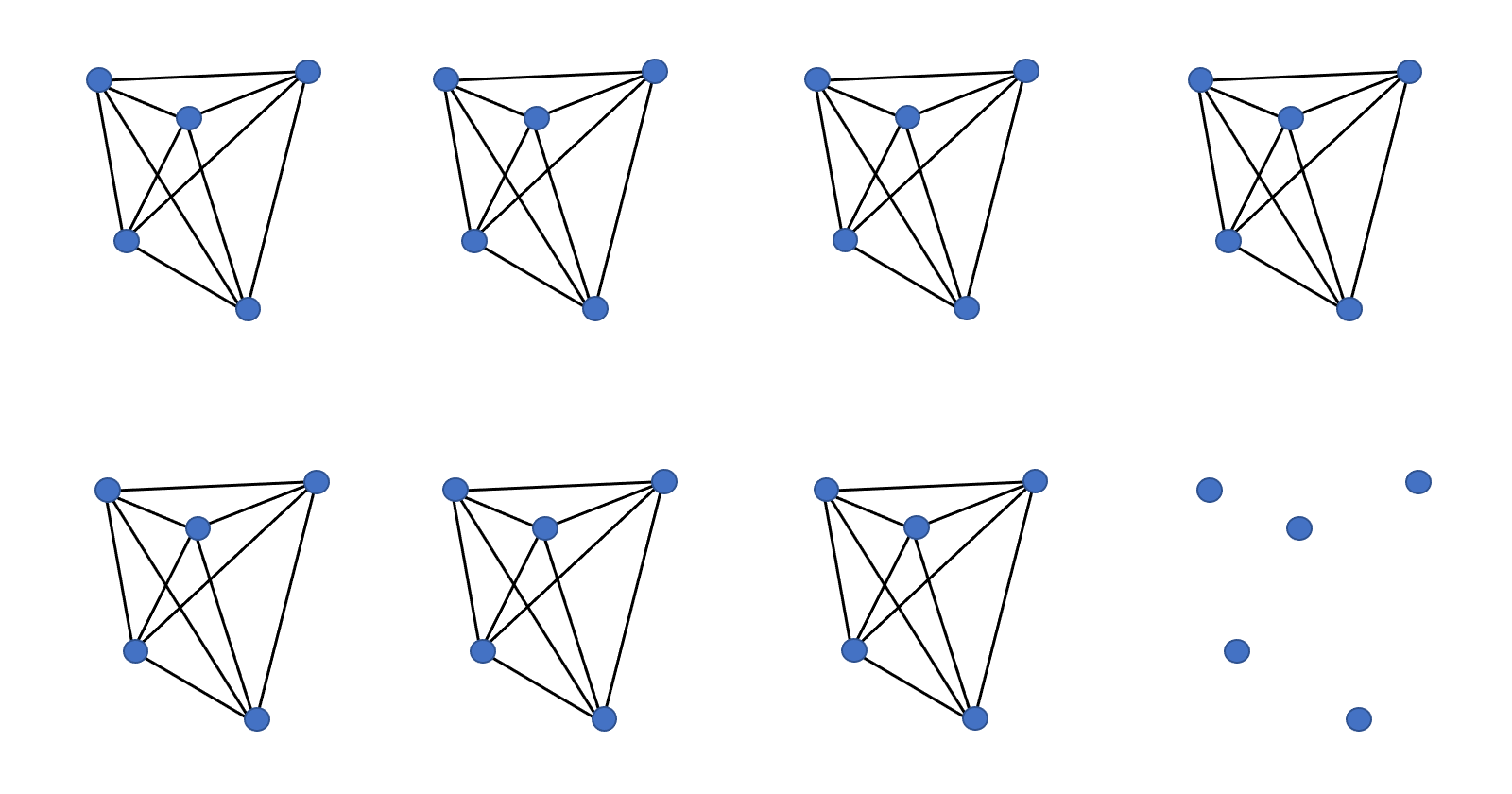}}}
	\caption{An illustration of $w_{\sf Y}$ and $w_{\sf N}$: pairs of vertices with distance $1$ are connected by edges. \label{fig: metrics}}
\end{figure}

We define $G_{\sf Y}$ to be the (unweighted) graph on the set $V$ of vertices in which there is an edge between every pair of vertices that come from the same group, and define $G_{\sf N}$ to be the graph on $V$ in which there is an edge between every pair of vertices that come from the same regular group.
So the problem of determining whether a graph is $G_{\sf Y}$ or $G_{\sf N}$ is essentially determining whether the graph consists of several equal-sized cliques or several equal-sized cliques plus an independent set of the same size.
It is easy to see that, the problem of distinguishing between metrics $w_{\sf Y}$ and $w_{\sf N}$ in the $\emph{\sf MST}^k_{\sf apx}$ problem is equivalent to the following graph-theoretic communication problem.

\paragraph{Clique or independent set problem ($\coi_{N,m}$):} 
This is a communication game between $k$ players, in which each player $P_i$ is given two subsets $R_i,R'_i\subseteq V\times V$ with $R'_i\subseteq R_i$. Intuitively, the subset $R_i$ represents the set of all pairs of vertices whose edge/non-edge information is given to the player $P_i$, and the subset $R'_i$ represents the set of all pairs in $R_i$ that has an edge connecting them. The goal of the players is to decide whether the graph $G=(V,E)$ where $E=\bigcup_{1\le i\le k}R'_i$ is $G_{\sf Y}$ or $G_{\sf N}$, if (i) $\bigcup_{1\le i\le k}R_i=V\times V$; and (ii) the information received by the players is consistent, i.e., there does not exists a pair $v,v'\in V$ and a pair $i,j\in [k]$, such that $(v,v')\in R_i\cap R_j$ and $(v,v')\in R'_i\setminus R'_j$.
If either (i) or (ii) does not hold, then any output of the protocol will be viewed as a correct output.


The main result of this section is the following theorem, which immediately implies that the communication complexity of any randomized protocol that returns an $\alpha$-approximation of problem $\emph{\sf MST}^k_{\sf apx}$ has communication complexity at least $\Omega(\sqrt{n/\alpha})$, which in turn implies \Cref{thm: p pass alpha MST lower}.

\begin{theorem} \label{thm:coi}
For any integers $m$ and $N$, any randomized protocol $\pi$ for the $\coi_{N,m}$ problem in the blackboard model with $k=\floor{2\log N}$ players has communication complexity $\Omega(m/\log^3 N)$.
\end{theorem}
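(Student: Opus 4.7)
The plan is to follow the two-step strategy sketched in the technical overview: first establish a single-coordinate lower bound for the base problem $\coi_{N,2}$ (the special group versus a single reference group), then lift it to $\coi_{N,m}$ by a direct-sum-style argument, with a total loss of $\mathrm{poly}\log N$ factors coming from the $k=\floor{2\log N}$ players.

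\textbf{Step 1: The base case $\coi_{N,2}$.} I will consider the restricted problem where there are only two groups of size $N$, of which one is chosen uniformly to be the special group with a bit $Z\in\set{\sf Y,\sf N}$ indicating whether it is a clique (${\sf Y}$) or an independent set (${\sf N}$). I will take a hard input distribution $\nu$ in which, for every pair $(u,v)$ inside each group, the edge/non-edge information about $(u,v)$ is assigned to a uniformly random player in $[k]$, and (in the ${\sf N}$-case) a uniformly random half of the special-group pairs are non-edges. For any $\delta$-error protocol $\pi$ with transcript $\RV{\Pi}$, I aim to show $\II_\nu(\RV{\Pi};Z)=\Omega(1/\log^2 N)$. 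The engine of the argument is a Hellinger/rectangle analysis: for a fixed pair $(u,v)$ and a fixed player assignment, Proposition~\ref{fac:multi-rect} gives a rectangle identity between transcripts produced when $(u,v)$'s information is flipped between ``edge'' and ``non-edge''; summing the squared Hellinger distances over the $\binom{N}{2}$ pairs via Claim~\ref{fac:hel-tvd} and Claim~\ref{fac:hel-dkl}, and comparing against the total-variation distance required to distinguish $Z$ via Claim~\ref{clm:pinsker}, produces the stated information lower bound. The factor $1/\log^2 N$ enters because the random assignment concentrates each pair's information onto a single player out of $k=\Theta(\log N)$, which dilutes the Hellinger distance per pair.

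\textbf{Step 2: Direct sum to $\coi_{N,m}$.} I will extend $\nu$ to a distribution $\mu$ on $\coi_{N,m}$ as follows: choose one distinguished regular group $V^\circ$; for every other group $V_j$ ($j\in[m-1]$), independently and uniformly decide whether $V_j$ is a clique or an independent set, with the global promise that in the ${\sf Y}$-case all groups are cliques and in the ${\sf N}$-case exactly one group $V_{j^*}$ (chosen uniformly) is an independent set. Let $Z_j$ denote the clique/independent-set indicator of $V_j$ and let $\RV{X}_j$ denote the partial edge/non-edge information about the intra-group pairs of $V_j$ distributed among the $k$ players. Since the $\RV{X}_j$'s are mutually independent conditioned on $(Z_1,\dots,Z_{m-1})$, I will apply the conditional-independence inequality from Claim~\ref{clm: basic properties}\ref{fac:mul-ind} and the chain rule Claim~\ref{clm: basic properties}\ref{fact:chain-rule} to write
\[
\II_\mu(\RV{\Pi};Z_1,\dots,Z_{m-1})\;\ge\;\sum_{j=1}^{m-1}\II_\mu\bigl(\RV{\Pi};Z_j\mid \RV{X}_{-j},Z_{-j}\bigr),
\]
and observe that each conditional term is at least the single-coordinate information cost of a protocol for $\coi_{N,2}$ between $V_j$ and $V^\circ$ on the distribution $\nu$ (because after conditioning, the transcript $\RV{\Pi}$ can be simulated by two players using private randomness to produce the irrelevant coordinates). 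Applying Step~1 to each of the $m-1$ terms gives $\II_\mu(\RV{\Pi};Z_1,\dots,Z_{m-1})=\Omega(m/\log^2 N)$. Combining this with Proposition~\ref{fac:multi-cc} and accounting for one additional $\log N$ factor for converting information cost to communication cost over $k=\Theta(\log N)$ players yields the claimed $\Omega(m/\log^3 N)$ bound.

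\textbf{Main obstacle.} The hard part is Step~1. Standard reductions from \textsf{Set-Disjointness} or \textsf{Index} fail here for exactly the reason highlighted in the overview: the non-edge information inside the special group is given to the players explicitly, so the ``no edge at position $i$'' event carries as much syntactic weight as ``edge at position $i$,'' collapsing the usual asymmetric gadgets. The technical crux will therefore be a direct Hellinger analysis on transcript distributions indexed by which single pair is the ``witness'' non-edge, exploiting the symmetry created by the random assignment of pairs to players, and then tightly tracking how the Hellinger mass scales with $k$ so that the final exponent of $\log N$ in the denominator is only $3$ rather than a larger power.
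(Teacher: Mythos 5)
Your high-level strategy (Hellinger/rectangle on a two-group base case, then direct sum) matches the paper's, but both steps have concrete flaws that would require a complete redesign of the hard distribution, and the flaws interact.

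The most serious problem is in Step~2. You propose a base distribution $\nu$ for $\coi_{N,2}$ that is (roughly) $50/50$ on $Z=\textsf{Y}$ and $Z=\textsf{N}$, and then you want to take a product over $m-1$ copies while imposing a ``global promise'' that at most one coordinate is $\textsf{N}$. These two requirements are incompatible: if the $Z_j$ are i.i.d.\ from a distribution with constant mass on $\textsf{N}$, the promise is violated with probability $1-2^{-\Omega(m)}$, and the conditional-independence plus chain-rule decomposition you invoke (Claim~\ref{clm: basic properties}\ref{fac:mul-ind} and~\ref{fact:chain-rule}) no longer produces a sum of honest $\coi_{N,2}$ information terms; the protocol $\pi$ is permitted arbitrary behavior on promise-violating inputs, so you cannot argue that ``the transcript can be simulated by two players.'' This is precisely the obstruction the paper flags in its technical overview (``$\nu$ should be supported mostly, if not always, on Yes-instances''). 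The paper's resolution is that $\nu$ for $\coi_{N,2}$ is supported \emph{only} on YES instances (the $2k+3$ combinations in $\Omega$), so the $m/2$-fold embedding is always a valid YES instance of $\coi_{N,m}$ except at the distinguished coordinate, and the reduction in Claim~\ref{cla:ds-mst} goes through cleanly. Correspondingly, the distinguishing NO instance $\Phi_{3,3}$ never appears in the support of $\nu$: it is reached purely by a chain of Hellinger/rectangle identities (Observation~\ref{cla:i23} and the steps after it). Your Step~1, which asks for $\II_\nu(\RV{\Pi};Z)=\Omega(1/\log^2 N)$ on a distribution with $Z$ explicitly appearing in both values, is therefore not the right base lemma, and proving it would not fix Step~2.

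Two secondary issues. First, assigning each intra-group pair to a uniformly random player does not support the rectangle argument you sketch: Proposition~\ref{fac:multi-rect} swaps whole player-inputs, not single pair-labels, and flipping a single pair would produce an internally inconsistent partial graph that is neither a valid YES nor NO input. The paper instead partitions the pairs by a bi-clique decomposition of $K_N$ indexed by the $k=\log N$ bit positions, so that each player $A^i$ or $B^i$ holds an entire bi-clique's worth of edge/non-edge information and the ``flipped'' inputs $A^i_{\row(1)},\ldots$ and $B^i_{\col(1)},\ldots$ remain consistent. Second, your accounting of the $\log^3 N$ factor is off: Proposition~\ref{fac:multi-cc} bounds information cost \emph{by} communication cost with no loss, so there is no extra $\log N$ to pay in the conversion; the entire $1/k^3 = 1/\log^3 N$ factor is produced inside the Hellinger chain in Claim~\ref{cla:multi-small} (one $1/k$ from the probability weight $\Pr[X_i=\row(1)]\approx 1/k$, one $1/k$ from squaring the Hellinger distance, and one $1/k$ from the $k$-step induction in Observation~\ref{cla:i23}).
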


The remainder of this section is dedicated to the proof of \Cref{thm:coi}.
As a warm up, we first consider the special case where $N=2$, namely the $\coi_{2,m}$ problem in the 2-player blackboard model (note that in this case $k=2=2\log N$). In this case, graph $G_{\sf Y}$ is a perfect matching and graph $G_{\sf N}$ is a matching of size $m-1$. We show in \Cref{COI: special case} that any randomized protocol for this problem has communication complexity $\Omega(m)$. 
We then generalize the ideas used in \Cref{COI: special case} to the general case, and complete the proof of \Cref{thm:coi} in \Cref{COI: general}.

\subsection{Communication Lower Bound for Bipartite Matching}
\label{COI: special case}

In this section, we consider the problem $\coi_{2,2m}$ (we use $2m$ instead of $m$ for convenience) in the 2-player blackboard model, and 
show that any randomized protocol for this problem has communication complexity $\Omega(m)$.
At a high level, we first show that the information cost of any randomized protocol for $\coi_{2,2}$ problem is $\Omega(1)$, and then use a direct sum type argument to prove that the information complexity and hence the communication complexity of the problem $\coi_{2,2m}$ is $\Omega(m)$.

We first construct an input distribution $\nu$ for the problem $\coi_{2,2}$ as follows. 
We denote the vertex set by $U\cup V$ where $U=\set{u_1,u_2}$ and $V=\set{v_1,v_2}$.
Recall that $\coi_{2,2}$ is a 2-player game.
Alice's three possible inputs are:
\begin{itemize}
\item $A_1$:  $R_A=(U\cup V)\times (U\cup V)$, and $R'_A=\set{(u_1,v_2),(u_2,v_1)}$;
\item $A_2$:  $R_A=(U\cup V)\times (U\cup V)$, and $R'_A=\set{(u_1,v_1),(u_2,v_2)}$;
\item $A_3$:  $R_A=(U\cup V)\times (U\cup V)\setminus \set{(u_2,v_2)}$, and $R'_A=\set{(u_1,v_2)}$.
\end{itemize}
Bob's three possible inputs are:
\begin{itemize}
\item $B_1$:  $R_B=\set{(u_1,u_2),(v_1,v_2),(u_2,v_2)}$, and $R'_B=\set{(u_2,v_2)}$;
\item $B_2$:  $R_B=\set{(u_1,u_2),(v_1,v_2)}$, and $R'_B=\emptyset$;
\item $B_3$:  $R_B=\set{(u_1,u_2),(v_1,v_2),(u_2,v_2)}$, and $R'_B=\emptyset$.
\end{itemize}

\begin{figure}[h]
	\centering
	\includegraphics[scale=0.17]{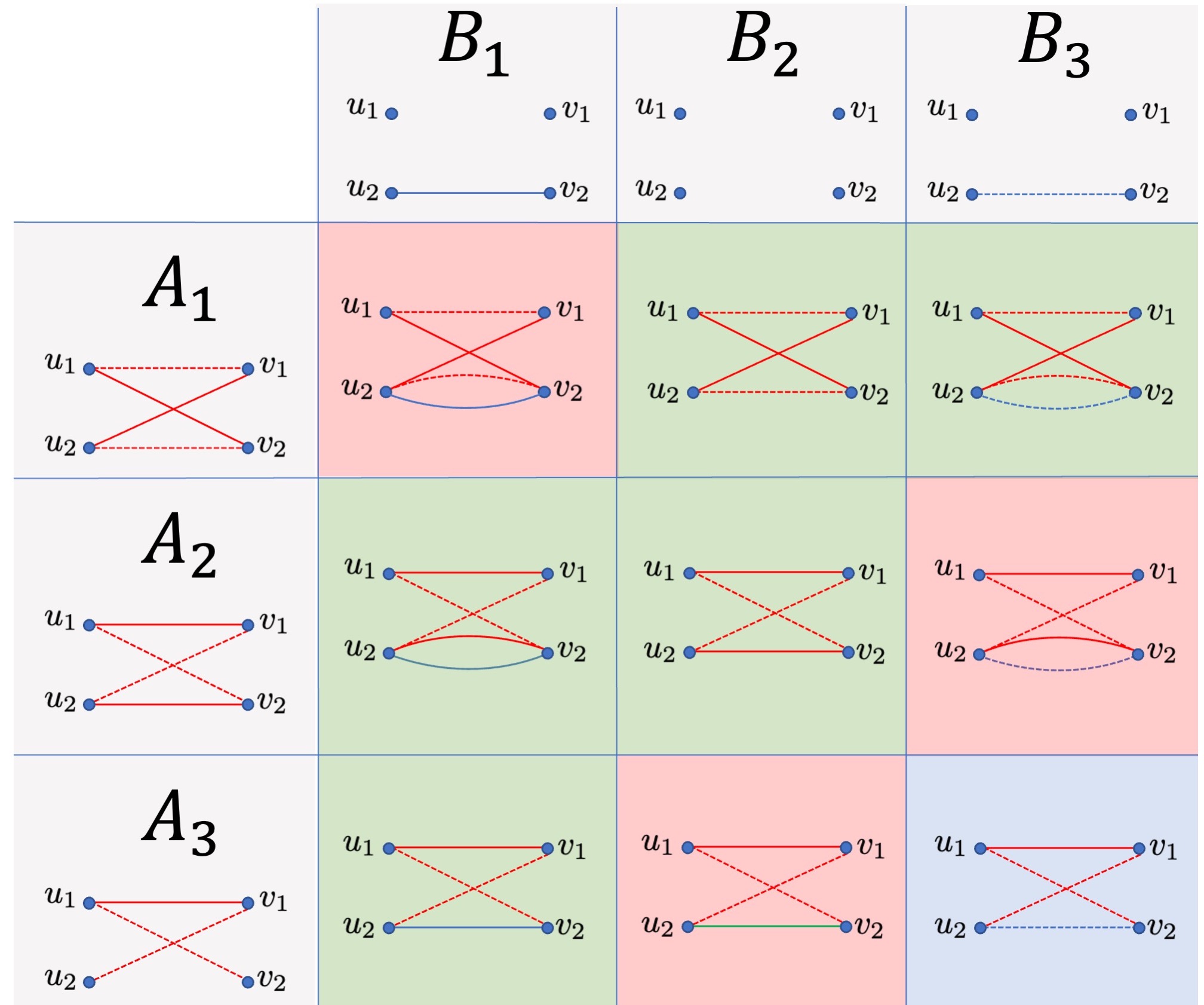}
	\caption{An illustration of the input pairs in distribution $\nu$ for the problem $\coi_{2,2}$. The red and blue lines represent edge/non-edge information in Alice's and Bob's inputs, respectively. Solid lines represent edges (pairs in $R'_A, R'_B$), and dashed lines represent non-edges (pairs in $R_A\setminus R'_A, R_B\setminus R'_B$). Note that the pairs $(u_1,u_2)$ and $(v_1,v_2)$ are given to both Alice and Bob in all inputs as non-edges, and so they are omitted in the figure for simplicity.
	The green line in the cell $(A_2,B_3)$ indicates that the information regarding pair $(u_2,v_2)$ is not given to either Alice or Bob (and thus, this input pair is invalid). 
	The green cells represent valid inputs where the graph $G=G_{\sf Y}$, the blue cell represents a valid input where the graph $G=G_{\sf N}$, and the red cells represent invalid inputs.}\label{fig:matching_input}
\end{figure}

See \Cref{fig:matching_input} for an illustration of the input pairs. 
We define the distribution $\nu$ on the input pairs to be the uniform distribution on the set $\set{(A_1,B_2),(A_1,B_3),(A_2,B_1),(A_2,B_2),(A_3,B_1)}$. We prove the following claim.

\begin{claim}
\label{clm: Omega(1) CC}
For any randomized protocol $\pi'$ for the problem $\coi_{2,2}$, $\ic_{\nu}(\pi')=\Omega(1)$.
\end{claim}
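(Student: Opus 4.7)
The plan is to locate a pair of inputs inside the support of $\nu$ whose correct answers differ, and then run the standard Hellinger/KL pipeline to turn the protocol's distinguishing requirement into a lower bound on conditional mutual information. The natural candidates are $(A_2, B_1)$ and $(A_3, B_1)$: both lie in the support of $\nu$, share Bob's input, and correspond to answers $G_{\sf Y}$ and $G_{\sf N}$, respectively. Assuming $\pi'$ is a $\delta$-error protocol for some small constant $\delta$, its outputs on these two inputs must disagree with probability at least $1 - 2\delta$, which gives $\tvd{\RV{\Pi}_{A_2,B_1}}{\RV{\Pi}_{A_3,B_1}} = \Omega(1)$, and hence by Claim~\ref{fac:hel-tvd} also $\hel(\RV{\Pi}_{A_2,B_1},\RV{\Pi}_{A_3,B_1}) = \Omega(1)$.

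Next, I would convert this separation into information content. Under $\nu$, conditioning on $\RV{X}_B = B_1$ makes $\RV{X}_A$ uniform on $\set{A_2, A_3}$, so Claim~\ref{clm: II to KL} identifies $\II(\RV{\Pi}; \RV{X}_A \mid \RV{X}_B = B_1)$ with the symmetric average of the KL-divergences from $\RV{\Pi}_{A_2,B_1}$ and $\RV{\Pi}_{A_3,B_1}$ to their midpoint distribution. Invoking Claim~\ref{fac:hel-dkl} then lower bounds this conditional mutual information by $\hel^2(\RV{\Pi}_{A_2,B_1},\RV{\Pi}_{A_3,B_1}) = \Omega(1)$. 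Since $\Pr_\nu[\RV{X}_B = B_1] = 2/5$, averaging over Bob's input yields $\II(\RV{\Pi}; \RV{X}_A \mid \RV{X}_B) = \Omega(1)$, which by the definition of $\ic_\nu$ already implies $\ic_\nu(\pi') \ge \Omega(1)$; the second summand $\II(\RV{\Pi}; \RV{X}_B \mid \RV{X}_A)$ is non-negative and not needed.

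The argument for this single-copy claim is thus rather short once the distinguishing pair has been spotted, and I do not expect any obstacle beyond a careful invocation of the Hellinger-KL machinery. What is subtle and non-obvious is the specific five-element support of $\nu$: the excluded cross-products (e.g., $(A_1, B_1)$ and $(A_2, B_3)$, which are inconsistent, and $(A_3, B_2)$, which is information-incomplete), together with the carefully balanced one-sided marginals of $\nu$, are chosen not for this single-copy bound but to support the subsequent direct-sum argument that lifts it to an $\Omega(m)$ communication lower bound for $\coi_{2, 2m}$. That step, where the chain rule decomposes the information cost of the composite protocol across $m$ independent coordinates, is where the real work lies and where the design of $\nu$ pays off; the present claim is the base case whose role is simply to supply the per-coordinate $\Omega(1)$ contribution.
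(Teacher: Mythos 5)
Your argument hinges on the assertion that $(A_2,B_1)$ and $(A_3,B_1)$ have different correct answers, and that is false. By design, \emph{every} input pair in the five-element support of $\nu$ is a $G_{\sf Y}$ (YES) instance: $(A_2,B_1)$ yields $E=\set{(u_1,v_1),(u_2,v_2)}$, and $(A_3,B_1)$ likewise yields a perfect matching (the generalization in the $\coi_{N,2}$ section, where $A^i_{\row(3)}$ has $R'_{A^i}=U^P_i\times V^P_i$ and the caption of \Cref{fig:matching_input} confirm that the blue $G_{\sf N}$ cell is $(A_3,B_3)$, and that all five cells in the support of $\nu$ are green). So there is no distinguishing pair inside the support: a protocol that ignores its input and always answers YES is a $0$-error protocol over $\nu$ with zero information cost, and your step $\tvd{\RV{\Pi}_{A_2,B_1}}{\RV{\Pi}_{A_3,B_1}}=\Omega(1)$ simply does not follow. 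The all-YES support is not an incidental byproduct of the direct-sum considerations you mention; it is exactly what makes the single-copy bound nontrivial, and it forces a worst-case correctness requirement on the protocol rather than a distributional one.

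The missing ingredient is the Hellinger cut-and-paste (rectangle) property, \Cref{fac:rect}, which the paper's proof invokes \emph{twice}. Low information cost over $\nu$ bounds Hellinger distances only between transcripts on support pairs that share a coordinate (e.g.\ $\hel(\RV{\Pi}_{21},\RV{\Pi}_{31})$ and $\hel(\RV{\Pi}_{21},\RV{\Pi}_{22})$ are small). Chaining these by the triangle inequality bounds $\hel(\RV{\Pi}_{31},\RV{\Pi}_{22})$, and only then can the rectangle property convert this into a bound on $\hel(\RV{\Pi}_{21},\RV{\Pi}_{32})$ — a distribution on an input \emph{outside} the support of $\nu$. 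A second chain, symmetric in $A_1$, together with one more rectangle step, finally bounds $\hel(\RV{\Pi}_{22},\RV{\Pi}_{33})$, where $(A_2,B_2)$ is a YES instance in the support and $(A_3,B_3)$ is the one NO instance, sitting entirely outside the support. The contradiction is with the protocol's \emph{worst-case} correctness requirement on $(A_3,B_3)$, not with any distributional error over $\nu$. Without the rectangle property one cannot connect the information cost over $\nu$ to the transcript distribution on the NO instance, and the proof cannot be completed.
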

\begin{proof}
For each $i\in \set{1,2,3}$, we let $X^A=i$ iff Alice's input is $A_i$. We define the variable $X^B$ similarly.
For any pair $i,j\in \set{1,2,3}$, we denote by $\Pi'_{ij}$ the random variable representing the transcript of $\pi'$ when Alice's input is $A_i$ and Bob's input is $B_j$ (namely $X^A=i$ and $X^B=j$).
We will show that $\ic_{\nu}(\pi')\ge 10^{-3}$. Assume for contradiction that this does not hold. Then 
\begin{align*}
10^{-3} &> \ic_{\nu}(\pi') \\
&= \II_{\nu}(\Pi';X^A|X^B) + \II_{\nu}(\Pi';X^B|X^A) \\
&= \frac{2}{5}\cdot\bigg(\II_{\nu}(\Pi';X^A|X^B=1) + \II_{\nu}(\Pi';X^A|X^B=2) + \II_{\nu}(\Pi';X^B|X^A=1) + \II_{\nu}(\Pi';X^B|X^A=2) \bigg),
\end{align*}
where the last equality follows from the definition of distribution $\nu$ (note that $\Pr[X^B=3]=1/5$ and  $\Pr[X^B=1]=\Pr[X^B=2]=2/5$, and if $X^B=3$, then $X^A=1$ with probability $1$).

Note that, from \Cref{fac:hel-dkl}, 
$$
\II_{\nu}(\Pi'; X^A |X^B=1) = \frac{1}{2}\cdot
D_{\textnormal{\textsf{KL}}}\bigg(\Pi_{21}\text{ } \bigg|\bigg|\text{ }\frac{\Pi_{21}+\Pi_{31}}{2}\bigg)
+
\frac{1}{2}\cdot
D_{\textnormal{\textsf{KL}}}\bigg(\Pi_{31}\text{ } \bigg|\bigg|\text{ }\frac{\Pi_{21}+\Pi_{31}}{2}\bigg)
\ge \hel^2(\Pi_{21},\Pi_{31}). 
$$
Therefore, $\hel^2(\Pi_{21},\Pi_{31}) < (5/2)\cdot 10^{-3}$, and so $\hel(\Pi_{21},\Pi_{31}) < 1/20$. Similarly, we get that $\hel(\Pi_{21},\Pi_{22}) < 1/20$. Then from \Cref{clm:triangle}, $\hel(\Pi_{31},\Pi_{22}) \le \hel(\Pi_{31},\Pi_{21})+\hel(\Pi_{21},\Pi_{22}) < 0.1$, and then from \Cref{fac:rect}, we get that $\hel(\Pi_{21},\Pi_{32}) < 0.1$. From \Cref{clm:triangle}, $\hel(\Pi_{22},\Pi_{32})\le\hel(\Pi_{22},\Pi_{21})+\hel(\Pi_{21},\Pi_{32})  < 0.15$. Similarly, $\hel(\Pi_{22},\Pi_{23}) < 0.15$, so $\hel(\Pi_{23},\Pi_{32}) < 0.3$. Again by \Cref{fac:rect}, $\hel(\Pi_{22},\Pi_{33})<0.3$. Lastly, from \Cref{fac:hel-tvd}, we get that $\tvd{\Pi_{22}}{\Pi_{33}} < \frac{3\sqrt{2}}{10} < 0.425$.

However, this means that, with probability at least $2\cdot(0.5-0.425)=0.15$, the transcript of $\pi'$ on the input pair $(A_2,B_2)$ and the transcript of $\pi'$ on the input pair $(A_3,B_3)$ are identical, and thus the output in these two cases are identical. However, when the input pair is $(A_2,B_2)$, the graph $G$ is $G_{\sf Y}$, while if the input pair is $(A_3,B_3)$, the graph $G$ is $G_{\sf N}$. 
This contradicts the fact that $\pi'$ is a randomized protocol for the problem $\coi_{2,2}$ (as it needs to succeed with probability at least $99/100$).
\end{proof}

We now show a communication complexity lower bound for the problem $\coi_{2,2m}$ via a reduction from problem $\coi_{2,2}$ using a standard direct sum type argument. We first construct an input distribution for $\coi_{2,2m}$ as follows.

We denote by $U\cup V$ the vertex set, where $|U|=|V|=2m$. The set $U$ is further partitioned into $m$ subsets: $U=\bigcup_{1\le i\le m}U^i$, where for each $1\le i\le m$, $U^i=\set{u^i_1,u^i_2}$. Similarly, the set $V$ is further partitioned into $m$ subsets: $V=\bigcup_{1\le i\le m}V^i$, where for each $1\le i\le m$, $V^i=\set{v^i_1,v^i_2}$. From the setting of the problem $\coi_{2,2m}$, the graph $G_{\sf Y}$ is a perfect matching between $U$ and $V$, and the graph $G_{\sf N}$ is a size-$(2m-1)$ matching between $U$ and $V$. 


The non-edge information for every pair in $(U\times U)\cup (V\times V)$ is always given to both Alice and Bob. 
For each pair $i,i'\in [m]$ with $i\ne i'$ and for every pair $(u,v)$ of vertices with $u\in U^i$ and $v\in V^{i'}$, the non-edge information is also given to both Alice and Bob. 
The distribution of the input edge/non-edge information between each $(U^i,V^i)$-pair is identical to $\nu$ (with $u^i_1$ viewed as $u_1$, $u^i_2$ viewed as $u_2$, etc), and the inputs between different pairs $(U^i,V^i)$, $(U^{i'},V^{i'})$ are independent. This completes the definition of the input distribution, and we denote this distribution by $\mu$.



We prove the following claim.

\begin{claim}
\label{COL 2m}
For any randomized protocol $\pi$ for the problem $\coi_{2,m}$, $\cc_{\mu}(\pi)=\Omega(m)$.
\end{claim}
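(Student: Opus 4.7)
The plan is to establish this direct sum claim via a standard reduction: since $\mu$ is the $m$-fold product of $\nu$ across the blocks $(U^i, V^i)$, any protocol for $\coi_{2,2m}$ induces, for each block $i$, a protocol for $\coi_{2,2}$ with information cost bounded by the ``per-block'' contribution of $\pi$'s information cost. Combining with Claim~\ref{clm: Omega(1) CC}, this gives $\ic_\mu(\pi) = \Omega(m)$, and then Proposition~\ref{fac:cc-ic} (or its $k$-player analogue) yields the desired bound.

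Concretely, for each $i \in [m]$ I will build a protocol $\pi_i$ for $\coi_{2,2}$ under $\nu$ as follows: given a $\nu$-input $(A,B)$ on $(U^i, V^i)$, Alice and Bob use public randomness to jointly sample, for each $j \neq i$, a pair $(A^j, B^j) \sim \nu$ on block $(U^j, V^j)$; each player reads off only its own coordinates of these samples. They then run $\pi$ on the resulting combined input (which by construction is distributed as $\mu$) and output $\pi$'s answer. Because the graphs across blocks are disjoint and $\nu$ is supported on valid $\coi_{2,2}$ inputs whose graphs are always perfect matchings (so the combined graph is a perfect matching iff the $i$-th block is, and a matching minus one edge iff exactly the $i$-th block is in the $\sf N$ case), the combined instance is a valid $\coi_{2,2m}$ input, $G = G_{\sf N}$ for the combined instance iff $G = G_{\sf N}$ on block $i$, so $\pi_i$ solves $\coi_{2,2}$ with at least the success probability of $\pi$. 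Hence Claim~\ref{clm: Omega(1) CC} gives $\ic_\nu(\pi_i) = \Omega(1)$.

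Denoting $X^A = (X^{A,1},\ldots,X^{A,m})$ and $X^B = (X^{B,1},\ldots,X^{B,m})$ with $(X^A,X^B) \sim \mu$, and letting $\Pi$ be $\pi$'s transcript, I will next show $\ic_\nu(\pi_i) = I(\Pi; X^{A,i} \mid X^{B,i}) + I(\Pi; X^{B,i} \mid X^{A,i})$. This is because the public randomness used in $\pi_i$ contributes zero to mutual information with the inputs (it is independent of $(A,B)$), while conditioned on this randomness the distribution of $\pi_i$'s transcript given $(A,B) = (a,b)$ coincides with the conditional distribution of $\Pi$ given $(X^{A,i}, X^{B,i}) = (a,b)$ after marginalizing out the other blocks. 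Then by the chain rule,
\begin{align*}
I(\Pi; X^A \mid X^B) &= \sum_{i=1}^m I(\Pi; X^{A,i} \mid X^B, X^{A,<i}) \\
&= \sum_{i=1}^m I(\Pi; X^{A,i} \mid X^{B,i}, X^{B,-i}, X^{A,<i}) \\
&\ge \sum_{i=1}^m I(\Pi; X^{A,i} \mid X^{B,i}),
\end{align*}
where the last inequality uses item~\ref{fac:mul-ind} of Claim~\ref{clm: basic properties}, since $X^{A,i}$ is independent of $(X^{B,-i}, X^{A,<i})$ conditioned on $X^{B,i}$ (blocks are independent under $\mu$). The analogous bound holds with $A$ and $B$ swapped, so $\ic_\mu(\pi) \ge \sum_{i=1}^m \ic_\nu(\pi_i) = \Omega(m)$.

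Finally, applying Proposition~\ref{fac:multi-cc} (with $k=2$) gives $\cc_\mu(\pi) \ge \ic_\mu(\pi) = \Omega(m)$, establishing the claim. The main obstacle here is not conceptual but bookkeeping: carefully verifying that the transcript of $\pi_i$ (which omits public randomness) has the right joint distribution with $(A,B)$ so that $\ic_\nu(\pi_i)$ exactly matches the ``per-block'' terms $I(\Pi; X^{A,i}\mid X^{B,i}) + I(\Pi; X^{B,i}\mid X^{A,i})$, and that the chain-rule decomposition combined with the block-independence under $\mu$ correctly yields the direction of inequality needed for the direct sum.
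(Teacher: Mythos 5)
There is a genuine gap in the step where you assert $\ic_\nu(\pi_i) = \II(\Pi; X^{A,i} \mid X^{B,i}) + \II(\Pi; X^{B,i} \mid X^{A,i})$. You sample \emph{all} of the other $m-1$ blocks using public randomness, so the transcript $\Pi_i$ of $\pi_i$ is $(R,\Pi)$ where $R$ determines $(X^{A,-i},X^{B,-i})$. The chain rule then gives
\[
\II(\Pi_i; X^{A,i} \mid X^{B,i}) = \II(R; X^{A,i}\mid X^{B,i}) + \II(\Pi; X^{A,i}\mid X^{B,i}, R) = \II\big(\Pi; X^{A,i}\mid X^{B}, X^{A,-i}\big),
\]
i.e.\ the public coins do not ``drop out'' of the mutual information --- they move into the conditioning. (A quick sanity check: if $\Pi = X^{A,1}\oplus\cdots\oplus X^{A,m}$ and the $X^{A,j}$ are i.i.d.\ uniform bits, then $\II(\Pi;X^{A,i}) = 0$ but $\II(\Pi; X^{A,i}\mid X^{A,-i}) = 1$, so the marginalization argument you sketch cannot be correct.) With the corrected formula, the inequality you need goes the wrong way: by item~\ref{fac:mul-ind} of Claim~\ref{clm: basic properties}, $\II(\Pi; X^{A,i}\mid X^{B},X^{A,-i}) \ge \II(\Pi; X^{A,i}\mid X^{B},X^{A,<i})$, so the per-block quantities $\ic_\nu(\pi_i)$ sum to something that can exceed $\ic_\mu(\pi)$ by a factor of $m$; you cannot conclude $\ic_\mu(\pi)=\Omega(m)$ this way.

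The cure, which the paper implements, is to split public and private randomness asymmetrically: choose $i^*$ publicly, sample $X^{A,j}$ for $j<i^*$ publicly (Bob fills in $X^{B,j}$ privately conditioned on $X^{A,j}$), and sample $X^{B,j}$ for $j>i^*$ publicly (Alice fills in $X^{A,j}$ privately). Then the public coins are $R=(i^*, X^{A,<i^*}, X^{B,>i^*})$, so $\II(\Pi_i; X^{B,i^*}\mid X^{A,i^*})=\II(\Pi; X^{B,i^*}\mid X^{A,i^*}, i^*, X^{A,<i^*}, X^{B,>i^*})$, and applying item~\ref{fac:mul-ind} to \emph{add} the conditioning on the privately sampled $X^{B,<i^*}$ yields exactly the chain-rule term $\II(\Pi; X^{B,i^*}\mid X^A, X^{B,<i^*}, i^*)$, with the inequality pointing in the right direction. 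Your graph-theoretic bookkeeping (that the combined instance is a valid $\coi_{2,m}$ input distributed as $\mu$, and that $G=G_{\sf N}$ iff block $i$ is a $G_{\sf N}$ instance) is correct, and your chain-rule decomposition of $\ic_\mu(\pi)$ is correct; only the embedding and the information-cost accounting need to change.
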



\begin{proof}
Let $\pi$ be any randomized protocol for the problem $\coi_{2,m}$.
We use $\pi$ to construct a randomized protocol $\pi'$ for problem $\coi_{2,2}$ as follows.

\begin{tbox}
    \textbf{Protocol} $\pi'$: A protocol for $\coi_{2,2}$ using a protocol $\pi$ for $\coi_{2,m}$.

    \smallskip

    \textbf{Input:} An instance $\set{(R_A,R'_A), (R_B,R'_B)}$ of problem $\coi_{2,2}$. \\
    \textbf{Output:} A label $\sf Y$/$\sf N$ indicating whether $G$ is the graph $G_{\sf Y}$ in problem $\coi_{2,2}$ (a perfect matching on $4$ vertices) or $G$ is the graph $G_{\sf N}$ in problem $\coi_{2,2}$ ($G$ only contains a single edge).

    \begin{enumerate}
        \item Alice and Bob create an instance $\set{(\hat R_A, \hat R'_A), (\hat R_B, \hat R'_B)}$ of problem $\coi_{2,m}$ as follows:
            \begin{enumerate}
                \item Alice and Bob both add all pairs $(u,u')$ with $u,u'\in U$ and all pairs $(v,v')$ with $v,v'\in V$ to $\hat R_A$ and $\hat R_B$ but not to $\hat R'_A$ or $\hat R'_B$. Similarly, for every pair $u,v$ with $u\in U^i$ and $v\in V^{i'}$ such that $i\ne i'$, add the pair $(u,v)$ to $\hat R_A$ and $\hat R_B$ but not to $\hat R'_A$ or $\hat R'_B$.
                \item Alice and Bob use \underline{public coins} to sample an index $i^* \in [m]$.
                \item For each $1 \le i < i^*$, Alice and Bob generate an instance $\set{(R_A(i),R'_A(i)),(R_B(i),R'_B(i))}$ of $\coi_{2,2}$ as follows: they first use \underline{public coins} to generate Alice's input $(R_A(i),R'_A(i))$ according to $\nu$, and then Bob uses \underline{private coins} to generate his own input $(R_B(i),R'_B(i))$ according to $\nu$ conditioned on Alice's input.
                \item For each $i^* < i \le m$, Alice and Bob generate an instance $\set{(R_A(i),R'_A(i)),(R_B(i),R'_B(i))}$ of $\coi_{2,2}$ as follows: they first use \underline{public coins} to generate Bob's input $(R_B(i),R'_B(i))$ according to $\nu$, and then Alice uses \underline{private coins} to generate her own input $(R_A(i),R'_A(i))$ according to $\nu$ conditioned on Bob's input.
                \item  Alice sets $(R_A(i^*),  R'_A(i^*))=( R_A,  R'_A)$, and Bob sets $(R_B(i^*),  R'_B(i^*))=( R_B,  R'_B)$.
                \item  Alice adds all pairs in $\bigcup_{1\le i\le m} R_A(i)$ to $\hat R_A$, adds all pairs in $\bigcup_{1\le i\le m} R'_A(i)$ to $\hat R'_A$; and Bob adds all pairs in $\bigcup_{1\le i\le m} R_B(i)$ to $\hat R_B$, adds all pairs in $\bigcup_{1\le i\le m} R'_B(i)$ to $\hat R'_B$.
            \end{enumerate} 
        \item Alice and Bob then run protocol $\pi$ on the instance $\set{(\hat R_A, \hat R'_A), (\hat R_B, \hat R'_B)}$ of problem $\coi_{2,m}$, and return the output of $\pi$.
    \end{enumerate}
\end{tbox}

Since for each $i \neq i^*$, the information pair $(R_A(i),R'_A(i)),(R_B(i),R'_B(i))$ is generated according to $\nu$. 
If we denote by $G$ the graph for the input instance $(R_A, R'_A),(R_B, R'_B)$ of problem $\coi_{2,2}$, and denote by $\hat G$ the graph for the constructed instance $(\hat R_A, \hat R'_A),(\hat R_B, \hat R'_B)$ of problem $\coi_{2,m}$, then $G=G_{\sf Y}$ iff graph $\hat G$ is a perfect matching. Therefore, the output of $\pi'$ is correct iff the output of $\pi$ is correct. On the other hand, it is easy to see that the distribution of the constructed instance $(\hat R_A, \hat R'_A),(\hat R_B, \hat R'_B)$ is identical to $\mu$.

For each $j\in \set{1,2,3}$, we let $X^A=j$ iff Alice's input $(R_A, R'_A)$ for the $\coi_{2,2}$ problem is $A_j$, and we define the variable $X^B$ similarly.
For each $1\le i\le m$ and for each $j\in \set{1,2,3}$, we let $\hat X^A_i=j$ iff the pair $(R_A(i), R'_A(i))$ constructed in the protocol is $A_j$, and we define variable $\hat X^B_i$ similarly.
Lastly, we denote $\hat X_A=(\hat X^A_1,\ldots, \hat X^A_m)$ and $\hat X_B=(\hat X^B_1,\ldots, \hat X^B_m)$.
Then, from \Cref{clm: Omega(1) CC},
\begin{align*}
\Omega(1)=\II_{\nu}(\Pi';X^B|X^A) &\le  \II_{\mu}(\Pi ; X^B\mid i^*, \hat X^A_1,\ldots, \hat X^A_m,\hat X^B_1,\ldots, \hat X^B_{i^*-1}) \\
&= \frac{1}{m} \cdot \sum_{i=1}^m \II_{\mu}(\Pi;\hat X^B_i \mid \hat X^A,\hat X^B_1,\dots,\hat X^B_{i-1}) \\
&= \frac{1}{m} \cdot \II_{\mu}(\Pi;\hat X^B | \hat X^A).
\end{align*}
where the first equality is from \ref{fac:mul-ind} in \Cref{clm: basic properties} the mutual independence between the random variables in $\set{\hat X^A_t\mid 1\le t\le m}$ and $\set{\hat X^B_t\mid 1\le t\le i^*}$, the second equality follows from the fact that $i^*$ is chosen uniformly at random from the set $[m]$, and the last equality is from the chain rule (\ref{fact:chain-rule} in \Cref{clm: basic properties}).

Lastly, from \Cref{fac:multi-cc}, we get that $\cc_{\mu}(\pi)\ge  \II_{\mu}(\Pi;\hat X^B | \hat X^A)=\Omega(m)$. This completes the proof of \Cref{COL 2m}.
\end{proof}

\subsection{Communication Lower Bound for MST Estimation in Metric Streams}
\label{COI: general}

In this subsection, we generalize the ideas and techniques used in \Cref{COI: special case} to prove a lower bound for the communication complexity of the $\coi_{N,m}$ problem, thus establishing \Cref{thm:coi}. 
Let $k=\log N$. We consider the communication game $\coi_{N,m}$ in the blackboard model between $2k$ players, denoted by $A^1,\dots,A^k,B^1,\dots,B^k$. 
Similar to \Cref{COI: special case}, we will first show that the information complexity of any randomized protocol for problem $\coi_{N,2}$ is $\tilde{\Omega}(1)$, and then show that the communication complexity of any randomized protocol for problem $\coi_{N,m}$ is $\tilde{\Omega}(m)$ via a direct sum type argument.


The crucial step in this subsection is to construct a hard input distribution for the problem $\coi_{N,2}$, that (i) can be used to show that the information complexity of the problem $\coi_{N,2}$ is $\tilde{\Omega}(1)$; and (ii) can be easily amplified (for example, via a product distribution) into an input distribution for the problem $\coi_{N,m}$ such that the direct sum argument can be applied. 
We first provide some intuition. 
Let $\nu$ be an input distribution for problem $\coi_{N,2}$.
Intuitively, we would like to mimick the construction in \Cref{COI: special case} for showing that the information complexity of the problem $\coi_{2,2}$ is $\Omega(1)$ (see \Cref{fig:matching_input}).
On the one hand, to achieve property (ii), $\nu$ should be supported mostly, if not always, on Yes-instances (where the underlying graph is $G_{\sf Y}$ in $\coi_{2,2}$), since otherwise, if with some probability $p=\omega(1/m)$ the distribution is supported on No-instances, then the probability that the product input distribution $\nu^{m/2}$ for problem $\coi_{N,2}$ is valid is at most $(1-p)^{m/2}+m\cdot p(1-p)^{m/2-1}$, which is too small and therefore insufficient for an $\tilde \Omega(1)$ information complexity lower bound of problem $\coi_{N,2}$.
On the other hand, to make a valid input distribution for problem $\coi_{2,2}$, and also to ensure that there is always some player who gets partial information and is therefore uncertain about the output, mimicking the construction in \Cref{COI: special case}, we decompose a clique of size $N=2^k$ into the union of $k$ bi-cliques, and then pair the players $A_i,B_i$ to give them partial inputs similar to the ones in \Cref{fig:matching_input}.

We now provide the construction of such an input distribution $\nu$ for problem $\coi_{N,2}$ in more details.
We denote the vertex set by $P\cup Q$, where $|P|=|Q|=N=2^k$.
Each vertex $p\in P$ are labeled by a string in $\set{0,1}^k$. Abusing the notation, we also denote by $p=(p_1,\ldots, p_k)$ the string that corresponds to the vertex $p$. For each $1\le i\le k$, we define
$U^P_{i}=\set{p\in P\mid p_i=0}$, namely the set of all vertices in $P$ that corresponds to a string whose $i$-th bit is $0$, and we define $V^P_{i}=\set{p\in P\mid p_i=1}$. 
Clearly, for each $i\in [k]$, $(U^P_{i},V^P_{i})$ is a partition (or a cut) of $P$ (that we denote as the \emph{$i$-th partition/cut of $P$}), the partitions (or cuts) in $\set{(U^P_{i},V^P_{i})\mid i\in [k]}$ are distinct, and each edge of $E(P)$ participates in at least one of these cuts.
Similarly, we label each vertex $q\in Q$ by a string in $\set{0,1}^k$, that we denote by $q$ as well, abusing the notation, and we define sets $U^Q_{i},V^Q_{i}$ the \emph{$i$-th partition/cut} $(U^Q_{i},V^Q_{i})$ of $Q$ similarly, for each $1\le i\le k$.
The edge/non-edge information given to the players will respect these partitions (and not more fine-grained).

We now describe the possible inputs to the players. Let $i$ be an index in $[k]$. The $(k+2)$ possible inputs for player $A^i$ are described below:
\begin{itemize}
	\item for each $j\in [k], j\ne i$, a general input $A^i_{j}$:
	\\ $R_{A^i}\setminus R'_{A^i} =(U^P_j \cup V^Q_j) \times (U^Q_j \cup V^P_j)$, and \\
	$R'_{A^i}=\bigg((U^P_j \cup V^Q_j)\times (U^P_j \cup V^Q_j)\bigg) \cup \bigg((U^Q_j \cup V^P_j)\times (U^Q_j \cup V^P_j)\bigg)$;\\
	(note that this input is with respect to the $j$-th partitions but not the $i$-th ones, also note that this input contains complete edge/non-edge information, i.e., in this input $A^{i}$ is given the whole underlying graph, which is $G_{\sf Y}$, consisting of a clique on $(U^P_j \cup V^Q_j)$ and a clique on $(U^Q_j \cup V^P_j)$);
	\item special input $A^i_{\row(1)}$ (corresponding to the first row in \Cref{fig:matching_input}):
	\\ $R_{A^i}\setminus R'_{A^i}=(U^P_i \times V^P_i) \cup (U^Q_i \times V^Q_i)$, and $R'_{A^i}=(U^P_i \times V^Q_i) \cup (U^Q_i \times V^P_i)$;\\
	(if we view set $U^P_i$ as $u_1$, set $U^Q_i$ as $u_2$, set $V^P_i$ as $v_1$, and set $V^Q_i$ as $v_2$, then this is exactly the first row in \Cref{fig:matching_input}, and similarly for the next two special inputs);
	\item special input $A^i_{\row(2)}$ (corresponding to the second row in \Cref{fig:matching_input}):\\ 
	 $R_{A^i}\setminus R'_{A^i}=(U^P_i \times V^Q_i) \cup (U^Q_i \times V^P_i)$, and $R'_{A^i}=(U^P_i \times V^P_i) \cup (U^Q_i \times V^Q_i)$;
	\item special input $A^i_{\row(3)}$ (corresponding to the third row in \Cref{fig:matching_input}):
	\\  $R_{A^i}\setminus R'_{A^i}= (U^P_i\times V^Q_i)\cup (U^Q_i\times V^P_i)$, and $R'_{A^i}=U^P_i\times V^P_i$.
\end{itemize}

The $(k+2)$ possible inputs for player  $B^i$  are:
\begin{itemize}
	\item for each $j\in [k], j\ne i$, a general input $B^i_{j}$:
	\\ $R_{B^i}\setminus R'_{B^i} = (U^P_j \cup V^Q_j) \times (U^Q_j \cup V^P_j)$, and \\
	$R'_{B^i}= \bigg((U^P_j \cup V^Q_j)\times (U^P_j \cup V^Q_j)\bigg) \cup \bigg((U^Q_j \cup V^P_j)\times (U^Q_j \cup V^P_j)\bigg)$;\\
	(note that this input is with respect to the $j$-th partitions but not the $i$-th ones, also note that this input contains complete edge/non-edge information, i.e., in this input $B^{i}$ is given the whole underlying graph, which is $G_{\sf Y}$, consisting of a clique on $(U^P_j \cup V^Q_j)$ and a clique on $(U^Q_j \cup V^P_j)$);
	\item special input $B^i_{\col(1)}$ (corresponding to the first column in \Cref{fig:matching_input}):\\ 
	$R_{B^i}\setminus R'_{B^i}=\emptyset$, and $R'_{B^i}=(U^Q_i \times V^Q_i)$;\\
	(if we view set $U^P_i$ as $u_1$, set $U^Q_i$ as $u_2$, set $V^P_i$ as $v_1$, and set $V^Q_i$ as $v_2$, then this is exactly the first column in \Cref{fig:matching_input}, and similarly for the next two special inputs)
	\item special input $B^i_{\col(2)}$ (corresponding to the second column in \Cref{fig:matching_input}):
	\\  $R_{B^i}\setminus R'_{B^i}= \emptyset$, and $R'_{B^i}=\emptyset$.
	\item special input $B^i_{\col(3)}$ (corresponding to the third column in \Cref{fig:matching_input}):
	\\  $R_{B^i}\setminus R'_{B^i}= U^Q_i\times V^Q_i$, and $R'_{B^i}=\emptyset$.
\end{itemize}

We now describe the distribution $\nu$. It is a uniform distribution on the following set of $(2k+3)$ input combinations:
\begin{itemize}
	\item for each $j\in [k]$, a general input combination $\Phi^j_{1,2}$:\\
	$(A^1_{j},\ldots,A^{j-1}_{j}, A^j_{\row(1)},A^{j+1}_{j},A^k_{j},B^1_{j},\ldots,B^{j-1}_{j}, B^j_{\col(2)},B^{j+1}_{j},B^k_{j})$;\\
	(intuitively, this is a combination in which all players except $A^j$ and $B^j$ know the whole graph, which is the union of a clique on $(U^P_j \cup V^Q_j)$ and a clique on $(U^Q_j \cup V^P_j)$, while players $A^j$ and $B^j$ are in the first-row-second-column scenario of \Cref{fig:matching_input})
	\item for each $j\in [k]$, a general input combination $\Phi^j_{1,3}$:\\
	$(A^1_{j},\ldots,A^{j-1}_{j}, A^j_{\row(1)},A^{j+1}_{j},A^k_{j},B^1_{j},\ldots,B^{j-1}_{j}, B^j_{\col(3)},B^{j+1}_{j},B^k_{j})$;\\
	(similarly, this is a combination in which all players except $A^j$ and $B^j$ know the whole graph, which is the union of a clique on $(U^P_j \cup V^Q_j)$ and a clique on $(U^Q_j \cup V^P_j)$, while players $A^j$ and $B^j$ are in the first-row-third-column scenario of \Cref{fig:matching_input})
	\item special input combination $\Phi_{2,1}$: $(A^1_{\row(2)},\ldots,A^k_{\row(2)},B^1_{\col(1)},\ldots,B^k_{\col(1)})$;\\
	(the underlying graph of this input combination is the union of a clique on $P$ and a clique on $Q$; where the edge-information are distributed into $k$ bi-cliques to the players, and similarly for the next two combinations)
	\item special input combination $\Phi_{2,2}$: $(A^1_{\row(2)},\ldots,A^k_{\row(2)},B^1_{\col(2)},\ldots,B^k_{\col(2)})$;
	\item special input combination $\Phi_{3,1}$: $(A^1_{\row(3)},\ldots,A^k_{\row(3)},B^1_{\col(1)},\ldots,B^k_{\col(1)})$.
\end{itemize}

We denote $\Omega=\set{\Phi^j_{1,2}\mid j\in [k]}\cup \set{\Phi^j_{1,3}\mid j\in [k]}\cup \set{\Phi_{2,1},\Phi_{2,2},\Phi_{3,1}}$.
It is not hard to verify that, for all input combinations in $\Omega$, each pair of vertices appears simultaneously in at least one set of $\set{R_{A^j}}_{j\in [k]}\cup \set{R_{B^j}}_{j\in [k]}$, and the information received by the players is consistent. Moreover, in all these combinations, the underlying graph is $G_{\sf Y}$ (a union of two cliques). In combinations $\Phi_{2,1}, \Phi_{2,2}$ and $\Phi_{3,1}$, the cliques are on vertex sets $P$ and $Q$. For any $1\le j \le k$, in combination $\Phi^j_{1,2}$ and $\Phi^j_{1,3}$, the cliques are on vertex sets $U^P_j \cup V^Q_j$ and $U^Q_j \cup V^P_j$.

To give some intuition why the information complexity of any randomized protocol is $\tilde\Omega(1)$ over the above distribution. Observe that, in each general input combination $\Phi^j_{1,2}$ or $\Phi^j_{1,3}$, there are two players $A^j, B^j$ who do not know whether or not the underlying graph is $G_{\sf Y}$ or $G_{\sf N}$, so they need to be notified by $\tilde \Omega(1)$ bit exchange. On the other hand, in each special input combination, no player knows whether or not the underlying graph is $G_{\sf Y}$ or $G_{\sf N}$.

We now provide a formal proof of an information complexity lower bound for any protocol for problem $\coi_{N,2}$.
Fix an index $1\le i\le k$. We define a random variable $X_i$ as follows. For each $j\in [k], j\ne i$, we let $X_i=j$ iff the input of player $A^i$ is $A^i_j$; for each $t\in [3]$, $X_i=\row(t)$ iff the input of player $A^i$ is $A^i_{\row(t)}$, so $X_i$ takes value from $([k]\setminus\set{j})\cup  \set{\row(1),\row(2),\row(3)}$. Similarly, we define a random variable $Y_i$ according to the input of player $B^i$, that takes a value from $([k]\setminus\set{j})\cup  \set{\col(1),\col(2),\col(3)}$. We denote $X=(X_1,\ldots,X_k)$, and for each $1\le i\le k$, we denote $X_{-i}=(X_1,\ldots,X_{i-1},X_{i+1},\ldots,X_k)$, and we define $Y, Y_{-i}$ similarly.

We prove the following claim.

\begin{claim} \label{cla:multi-small}
For any randomized protocol $\pi'$ for problem $\coi_{N,2}$, there exists an index $i\in [k]$, such that either $\II_{\nu}(\Pi';X_{-i}|X_{i}) = \Omega(1/k^3)$ or $\II_{\nu}(\Pi';Y_{-i}|Y_{i}) = \Omega(1/k^3)$ holds.
\end{claim}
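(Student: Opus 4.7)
My plan is to prove the claim by contradiction, generalizing the Hellinger-distance argument used in the proof of Claim~\ref{clm: Omega(1) CC}. The starting assumption, to be refuted, will be that there is a sufficiently small constant $c>0$ such that for every $i\in[k]$, both $\II_{\nu}(\Pi';X_{-i}|X_i)\le c/k^3$ and $\II_{\nu}(\Pi';Y_{-i}|Y_i)\le c/k^3$. For each combination $\Phi\in\Omega$, I will write $\Pi'_{\Phi}$ for the random variable representing the transcript of $\pi'$ on input $\Phi$.

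The first step will translate the assumed information bounds into pairwise Hellinger-distance bounds between transcript distributions. Combining \Cref{clm: II to KL} with \Cref{fac:hel-dkl} (exactly as in the derivation $\hel^2(\Pi_{21},\Pi_{31})\le\II(\Pi';X^A|X^B=1)$ from the 2-player proof), each information hypothesis will yield bounds of the form $\hel(\Pi'_{\Phi},\Pi'_{\Phi'})=O(\sqrt{c}/k^{3/2})$ for pairs $(\Phi,\Phi')\in\Omega^2$ that differ in the input of a single player (such as $(\Phi^j_{1,2},\Phi^j_{1,3})$), together with bounds of the same order in expectation for pairs that differ in all $B$-side inputs (such as $(\Phi_{2,1},\Phi_{2,2})$) or in all $A$-side inputs (such as $(\Phi_{2,1},\Phi_{3,1})$).

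Next, I will chain these pairwise bounds using the triangle inequality (\Cref{clm:triangle}) and the rectangle identity (\Cref{fac:multi-rect}) to control $\hel(\Pi'_{\Phi_{2,2}},\Pi'_{\Psi})$, where $\Psi$ is the combination assigning $A^i_{\row(3)}$ to every $A^i$ and $B^i_{\col(3)}$ to every $B^i$. A direct verification, analogous to the one carried out for $\Phi_{2,2}$ and $\Phi_{3,1}$ in the definition of $\nu$, shows that $\Psi$ is a consistent and fully-covered input combination whose underlying graph has $P$ as a clique and $Q$ as an independent set, and is hence a $G_{\sf N}$-instance. The chain will walk from $\Phi_{2,2}$ to $\Psi$ by flipping players' inputs one at a time, through intermediate combinations that may lie outside $\Omega$; I will repeatedly apply \Cref{fac:multi-rect} (partitioning the $2k$ players into two index sets, the multi-player analogue of the $(21,32)\leftrightarrow(22,31)$ swap used in the 2-player proof) to match each elementary Hellinger step to one of the basic $\Omega$-bounds established in the first step. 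With $O(k)$ elementary steps each contributing Hellinger cost $O(\sqrt{c}/k^{3/2})$, this chain will give $\hel(\Pi'_{\Phi_{2,2}},\Pi'_{\Psi})=O(\sqrt{c/k})$.

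Finally, \Cref{fac:hel-tvd} will convert this Hellinger bound into $\tvd{\Pi'_{\Phi_{2,2}}}{\Pi'_{\Psi}}=O(\sqrt{c/k})$; choosing $c$ to be a sufficiently small absolute constant forces $\pi'$ to behave almost identically on the Yes-instance $\Phi_{2,2}$ and the No-instance $\Psi$, contradicting the assumption that $\pi'$ correctly solves $\coi_{N,2}$ with probability at least $99/100$. The hardest part of the plan will be the clean implementation of the chain: since intermediate combinations leave $\Omega$, the information hypotheses cannot be invoked on them directly, so the crux will be orchestrating the rectangle identity so that every elementary move can still be charged to one of the $\Omega$-endpoint pairwise bounds, while keeping the total number of moves linear in $k$.
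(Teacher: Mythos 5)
Your plan follows the same strategy as the paper's proof: translate the information hypotheses into pairwise Hellinger bounds on $\Omega$-endpoint pairs via \Cref{clm: II to KL} and \Cref{fac:hel-dkl}, chain them using triangle inequality and the rectangle identity \Cref{fac:multi-rect} to reach a pair of combinations whose underlying graphs are $G_{\sf Y}$ and $G_{\sf N}$, and close via \Cref{fac:hel-tvd}. Your target $\Psi$ is exactly $\Phi_{3,3}$ in the paper's notation, and the paper's chain does indeed walk from $\Phi_{2,2}$ to $\Phi_{3,3}$; the inductive chaining you anticipate is Observation~\ref{cla:i23}.

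Two points worth flagging. First, a quantitative slip: from $\II_{\nu}(\Pi';X_{-i}\mid X_i)\le c/k^3$ one gets $\hel(\Pi'_{\Phi^i_{1,2}},\Pi'_{\Phi^i_{1,3}})= O(\sqrt{c}/k)$, not $O(\sqrt{c}/k^{3/2})$, because conditioning on $X_i=\row(1)$ only loses a factor of $\Pr_\nu[X_i=\row(1)]=\Theta(1/k)$, so $\hel^2 \le O(k)\cdot c/k^3 = O(c/k^2)$. Consequently the chained Hellinger distance after $O(k)$ steps is $O(\sqrt{c})$, not $O(\sqrt{c/k})$. This does not break the argument — a small constant $c$ still yields a contradiction — but it is precisely why the claim's exponent is $1/k^3$ and not the $1/k^2$ your scaling would falsely suggest is attainable. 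Second, ``flipping players' inputs one at a time'' is a slight mischaracterization of how the paper charges the chain: the $B$-side inputs are indeed flipped one at a time (the $k$-step induction, each step charged to an $\Omega$-pair via a rectangle swap), but the entire $A$-side transition $\row(2)\to\row(3)$ is done in one batch, charged to the single $\Omega$-pair $(\Phi_{2,1},\Phi_{3,1})$, which is why that half of the walk costs only $O(1/k)$. There is no $\Omega$-pair that isolates a single $A$-player flip, so a literal one-player-at-a-time chain on the $A$-side would need to be reorganized; you correctly identify this orchestration as the crux, and the paper's Observation~\ref{cla:i23} plus the subsequent rectangle steps are exactly the content of that orchestration.
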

\begin{proof}
Assume for contradiction that for each $1\le i\le k$, $\II_{\nu}(\Pi';X_{-i}|X_{i}) < 1/(10^5 k^3)$ and $\II_{\nu}(\Pi';Y_{-i}|Y_{i}) < 1/(10^5 k^3)$, then for each $1\le i \le k$, from \Cref{fac:hel-dkl},
	\begin{align*}
	\frac{1}{10^5\cdot k^3} & > \II_{\nu}(\Pi';X_{-i}|X_{i}) \\
	&\ge \Pr_{\nu}[X_i=\row(1)] \cdot \II_{\nu}(\Pi';X_{-i}|X_{i}=\row(1)) \\
	&= \frac{2}{2k+3} \cdot 	\frac{1}{2}\cdot\bigg(
	D_{\textnormal{\textsf{KL}}}\bigg(\Pi'(\Phi_{1,2}^i)\text{ } \bigg|\bigg|\text{ }\frac{\Pi'(\Phi_{1,2}^i)+\Pi'(\Phi_{1,3}^i)}{2}\bigg)
	+
	D_{\textnormal{\textsf{KL}}}\bigg(\Pi'(\Phi_{1,3}^i)\text{ } \bigg|\bigg|\text{ }\frac{\Pi'(\Phi_{1,2}^i)+\Pi'(\Phi_{1,3}^i)}{2}\bigg)
	\bigg) \\
	&\ge \frac{2}{2k+3} \cdot \hel(\Pi'(\Phi_{1,2}^i),\Pi'(\Phi_{1,3}^i))^2.
	\end{align*}
So $\hel(\Pi'(\Phi_{1,2}^i),\Pi'(\Phi_{1,3}^i)) < 1/40k$. Similarly, for each index $1\le i\le k$, $\hel(\Pi'(\Phi_{1,2}^i),\Pi'(\Phi_{2,2})) < 1/40k$, 
$\hel(\Pi'(\Phi_{2,1}),\Pi'(\Phi_{2,2})) < 1/40k$,
and
$\hel(\Pi'(\Phi_{2,1}),\Pi'(\Phi_{3,1})) < 1/40k$.
We define the following new input combinations:
\begin{itemize}
\item $\Phi_{2,2}=\big(A^1_{\row(2)},\ldots,A^k_{\row(2)},B^1_{\col(3)},\ldots,B^k_{\col(3)}\big)$;
\item $\Phi_{3,2}=\big(A^1_{\row(3)},\ldots,A^k_{\row(3)},B^1_{\col(2)},\ldots,B^k_{\col(2)}\big)$;
\item $\Phi_{3,3}=\big(A^1_{\row(3)},\ldots,A^k_{\row(3)},B^1_{\col(3)},\ldots,B^k_{\col(3)}\big)$.
\end{itemize}
 We prove the following claim.
\begin{observation} \label{cla:i23}
$\hel(\Pi'(\Phi_{2,2}),\Pi'(\Phi_{2,3})) < 1/10$.
\end{observation}
\begin{proof}
For each subset $S \subseteq [k]$, we define a new input combination 
$$\Phi_{2,3}^{S}=
\bigg(
\big(A^i_{\row(2)}\big)_{i\in [k]},\big(B^i_{\col(3)}\big)_{i\in S},\big(B^i_{\col(2)}\big)_{i\notin S}\bigg).$$
Note that $\Phi_{2,3}^{[k]}=\Phi_{2,3}$. We prove by inducetion on $|S|$ that, for every $S\subseteq [k]$, $\hel(\Pi'(\Phi_{2,2}),\Pi'(\Phi_{2,3}^S)) \le \card{S}/10k$. Note that \Cref{cla:i23} follows immediately if this is true.

The base case is when $|S|=1$, namely $S=\set{i}$ for some $i\in [k]$. Note that $\hel(\Pi'(\Phi_{1,2}^i),\Pi'(\Phi_{1,3}^i)) < 1/40k$ and $\hel(\Pi'(\Phi_{1,2}^i),\Pi'(\Phi_{2,2})) < 1/40k$, from \Cref{clm:triangle}, we get that $\hel(\Pi'(\Phi_{1,3}^i),\Pi'(\Phi_{2,2})) < 1/20k$. Then from \Cref{fac:multi-rect}, $\hel(\Pi'(\Phi_{1,2}^i), \Pi'(\Phi_{2,3}^{\set{i}})) < 1/20k$, and so $\hel(\Pi'(\Phi_{2,2}), \Pi'(\Phi_{2,3}^{\set{i}})) < 1/10k$, and therefore the claim is true when $|S|=1$.
		
Assume now the claim is true for all subsets $S\subseteq[k]$ with $|S|\le t$. Consider now any subset $S$ with $|S|=t+1$.
Pick an arbitrary element $i \in S$. From the induction hypothesis, $\hel(\Pi'(\Phi_{2,2}), \Pi'(\Phi_{2,3}^{S\setminus \set{i}})) \le (\card{S}-1)/10k$. On the other hand, since $\hel(\Pi'(\Phi_{2,2}), \Pi'(\Phi_{2,3}^{\set{i}})) < 1/10k$, we have $\hel(\Pi'(\Phi_{2,3}^{\set{i}}), \Pi'(\Phi_{2,3}^{S\setminus \set{i}})) < \card{S}/10k$. Then from \Cref{fac:multi-rect}, $\hel(\Pi'(\Phi_{2,2}), \Pi'(\Phi_{2,3}^{S})) < |S|/10k$. 
Therefore, the claim is true. This completes the proof of \Cref{cla:i23}.
\end{proof}
	
On the other hand, since $\hel(\Pi'(\Phi_{2,1}),\Pi'(\Phi_{2,2})) < 1/40k$,
and
$\hel(\Pi'(\Phi_{2,1}),\Pi'(\Phi_{3,1})) < 1/40k$, we get that $\hel(\Pi'(\Phi_{2,2}),\Pi'(\Phi_{3,1})) < 1/20k$. Then from \Cref{fac:multi-rect}, $\hel(\Pi'(\Phi_{2,1}),\Pi'(\Phi_{3,2})) < 1/20k$, and so $\hel(\Pi'(\Phi_{2,2}),\Pi'(\Phi_{3,2})) < 1/10k \le 1/10$. Combined with \Cref{cla:i23}, $\hel(\Pi'(\Phi_{2,3}),\Pi'(\Phi_{3,2})) < 1/5$. 
Next from \Cref{fac:multi-rect}, $\hel(\Pi'(\Phi_{2,2}),\Pi'(\Phi_{3,3})) < 1/5$. From \Cref{fac:hel-tvd},  $\tvd{\Pi'(\Phi_{2,2})}{\Pi'(\Phi_{3,3})} < 0.3$, which means, with probability at least $2\cdot(0.5-0.3)=0.4$, protocol $\pi'$ will have identical output on input combinations  $\Phi_{2,2}$ and $\Phi_{3,3}$. However, $\Phi_{3,3}$ is a valid input combination where the underlying graph is $G_{\sf N}$, while $\Phi_{2,2}$ is a valid input combination where the underlying graph is $G_{\sf Y}$. This contradicts the assumption that $\pi'$ is an $1/10$-error randomized protocol for problem $\coi_{N,2}$.
\end{proof}

We now proceed to show the communication complexity lower bound for $\coi_{N,m}$ via a reduction from $\coi_{N,2}$.
We first construct an input distribution $\mu$ for $\coi_{N,m}$ as follows. 
The vertex set is partitioned into $m/2$ groups of size $2N$ each. For every pair of vertices that come from different group, the non-edge information of this pair is given to all players. 
The distribution on the edge/non-edge information within each group is $\nu$, and the distribution for different groups are independent.


Let $\pi$ be a randomized protocol $\pi$ for problem $\coi_{N,m}$. We use $\pi$ to construct a procotol $\pi'$ for problem $\coi_{N,2}$ as follows.

\begin{tbox}
    \textbf{Protocol} $\pi'$: A protocol for $\coi_{N,2}$ using the protocol $\pi$ for $\coi_{N,m}$.

    \smallskip

    \textbf{Input:} An instance $\Phi^*$ of $\coi_{N,2}$. \\
    \textbf{Output:} A label $\sf Y$ or $\sf N$ indicating whether $G=G_{\sf Y}$ or $G=G_{\sf N}$ in problem $\coi_{N,2}$.

    \begin{enumerate}
        \item Player $A_1$ choose an index $i$ from $[m/2]$ uniformly at sample.
        \item Player $A_1$ independently sample $(m/2-1)$ instances of $\coi_{N,2}$ problem from distribution $\nu$ on $\Omega$, and denote them by $\Phi_1,\dots,\Phi_{i-1},\Phi_{i+1},\dots,\Phi_{m/2}$.
        \item Player $A_1$ sends the number $i$ and instance $\Phi_1,\dots,\Phi_{i-1},\Phi_{i+1},\dots,\Phi_{m/2}$ to all other players.
        \item The players run the protocol $\pi$ on the instance $\Phi' = (\Phi_1,\dots,\Phi_{i-1},\Phi^*,\Phi_{i+1},\dots,\Phi_{m/2})$ of the $\coi_{N,m}$ problem, where the instance $\Phi_j$ is on the $j$-th group of vertices and the instance $\Phi^*$ is on the $i$-th group of vertices, and then return the output of $\pi$.
    \end{enumerate}
\end{tbox}

We now proceed to prove a lower bound on the information complexity of $\pi'$. Let $\Phi$ be an instance of either the problem $\coi_{N,m}$ or the problem $\coi_{N,2}$. For each $1\le i\le k$, we denote by $X_i(\Phi)$ the input of player $A^i$ in the instance $\Phi$. We denote $X(\Phi)=(X_j(\Phi))_{j\in [k]}$ and 
$X_{-i}(\Phi)=(X_j(\Phi))_{j\in [k],j\ne i}$.
Similarly, we define $Y_i(\Phi)$, $Y(\Phi)$ and $Y_{-i}(\Phi)$ for inputs of players $B^1,\ldots,B^k$.

Observe that, for each $j \neq i$, instance $\Phi_{j}$ is sampled from $\nu$, so instance $\Phi'$ of problem $\coi_{N,m}$ is a $\textsf{YES}$ instance iff instance $\Phi^*$ of problem $\coi_{N,2}$ is a $\textsf{YES}$ instance. Moreover, since instance $\Phi^*$ is also sampled from $\nu$, the distribution of instance $\Phi'$ is exactly $\mu$. 

\begin{claim} \label{cla:ds-mst}
For each $1\le j\le k$,
$\II_{\nu}(\Pi';X_{-j}(\Phi^*)|X_j(\Phi^*)) \le \frac{2}{m}\cdot \II_{\mu}(\Pi;X_{-j}(\Phi')|X_j(\Phi'))$, and\\ $\II_{\nu}(\Pi';Y_{-j}(\Phi^*)|Y_j(\Phi^*)) \le \frac{2}{m} \cdot \II_{\mu}(\Pi;Y_{-j}(\Phi')|Y_j(\Phi'))$.
\end{claim}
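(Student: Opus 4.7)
The plan is to prove the claim by a direct sum argument, paralleling the proof of Claim~\ref{COL 2m} for the two-player case. I focus on the first inequality (the bound for $X_{-j}$); the bound for $Y_{-j}$ follows by an entirely symmetric argument.

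First, I would re-interpret the reduction protocol $\pi'$ with a careful split of public and private coins (analogous to the one used in the proof of Claim~\ref{COL 2m}), so that the public randomness $R$ contained in the transcript $\Pi'$ reveals only the information needed to match a chain-rule expansion of the right-hand side. Concretely, at positions $l < i^*$ the coin design arranges for the public coins to reveal $X_j^{(l)}(\Phi')$, at positions $l > i^*$ to reveal $X_{-j}^{(l)}(\Phi')$, and leaves the rest (including the $Y$-components and the complementary $X$-components) to be sampled via the players' private coins in a way that still induces the correct marginal distribution $\mu$ on $\Phi'$. Writing $\Pi' = (R,\Pi)$ and using that $R$ is independent of $\Phi^*$, the chain rule yields
\[
\II_{\nu}(\Pi';X_{-j}(\Phi^*) \mid X_j(\Phi^*)) \;=\; \II_{\mu}\!\left(\Pi;\, X_{-j}^{(i^*)}(\Phi') \,\Big|\, X_j^{(i^*)}(\Phi'),\, R\right).
\]
Averaging over the uniform choice of $i^* \in [m/2]$ and substituting the intended form of $R$ then rewrites the right side as
\[
\frac{2}{m} \sum_{l=1}^{m/2} \II_{\mu}\!\left(\Pi;\, X_{-j}^{(l)}(\Phi') \,\Big|\, X_j^{(\le l)}(\Phi'),\, X_{-j}^{(>l)}(\Phi')\right).
\]

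Next, I would expand the claim's right-hand side using the chain rule in reverse index order on the coordinates of $X_{-j}(\Phi')$ to get
\[
\II_{\mu}(\Pi; X_{-j}(\Phi') \mid X_j(\Phi')) \;=\; \sum_{l=1}^{m/2} \II_{\mu}\!\left(\Pi;\, X_{-j}^{(l)}(\Phi') \,\Big|\, X_j(\Phi'),\, X_{-j}^{(>l)}(\Phi')\right).
\]
For each fixed $l$, the chain-rule summand conditions on all of $X_j(\Phi')$ while the corresponding direct-sum summand conditions only on $X_j^{(\le l)}(\Phi')$. Since the $m/2$ groups of $\Phi'$ are mutually independent under $\mu$, the random variable $X_{-j}^{(l)}(\Phi')$ is independent of $X_j^{(>l)}(\Phi')$ given $X_j^{(l)}(\Phi')$, so \Cref{clm: basic properties} shows that augmenting the conditioning by $X_j^{(>l)}(\Phi')$ can only increase the mutual information. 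Hence each chain-rule summand upper bounds the corresponding direct-sum summand.

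Summing these inequalities over $l$ and dividing by $m/2$ yields the claimed bound for $X_{-j}$. The bound for $Y_{-j}$ is obtained by exactly the same argument after swapping the roles of $X$ and $Y$ in the coin design. The main technical delicacy I expect will be the realization of this public/private coin split while maintaining that the induced joint distribution on $\Phi'$ is exactly $\mu$; this will use the explicit combinatorial structure of the support $\Omega$ of $\nu$ (which contains only $2k+3$ input combinations) to show that the parts of $\Phi_l$ not revealed publicly can be reconstructed consistently across players from their private coins.
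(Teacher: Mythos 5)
Your argument diverges from the paper's own route at the very first step: the paper's reduction protocol $\pi'$ does not perform any public/private coin split. Player $A_1$ samples the entire auxiliary instances $\Phi_1,\dots,\Phi_{i^*-1},\Phi_{i^*+1},\dots,\Phi_{m/2}$ and broadcasts all of them, so the whole of $\Phi_{-i^*}$ sits in the transcript $\Pi'$, and the paper's proof of Claim~\ref{cla:ds-mst} manipulates conditioning sets directly (via \Cref{clm: basic properties} and the chain rule) rather than by designing $R$. What you propose instead mirrors the paper's two-player reduction from Claim~\ref{COL 2m}, and your downstream calculation is cleanly oriented: the direct-sum summand conditions on $X_j(\Phi_{\le l}),X_{-j}(\Phi_{>l})$, a subset of the reverse-chain-rule summand's conditioning $X_j(\Phi'),X_{-j}(\Phi_{>l})$, and the added $X_j(\Phi_{>l})$ is independent of $X_{-j}(\Phi_l)$ given the rest, so \Cref{clm: basic properties}~(2) applies. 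The problem is entirely in the first step, which you flag but do not resolve.

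That first step fails for this $\nu$, and there is a second, more structural obstruction. First, the feasibility: for $l<i^*$ you propose to publish only $X_j(\Phi_l)$ and have all other players privately fill in their coordinates. But when $X_j(\Phi_l)=A^j_{\row(2)}$, the posterior on $\Phi_l$ is uniform over $\{\Phi_{2,1},\Phi_{2,2}\}$, and these two input combinations differ in the inputs of \emph{every} player $B^1,\dots,B^k$ simultaneously; the $k$ of them would have to flip a single shared coin to stay consistent, which private coins cannot do. So the described reduction does not actually induce $\mu$ on $\Phi'$. Second, the claim must hold for \emph{a single fixed} $\pi'$ for \emph{every} $j$, and simultaneously for the $X$-side and the $Y$-side: the index produced by Claim~\ref{cla:multi-small} depends on $\pi'$ and is not known before $\pi'$ is fixed, and one does not know in advance whether the $X_{-i}$ or $Y_{-i}$ term is the large one. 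A coin split tailored to a particular $j$ and to the $X$-side produces a transcript that publicly leaks $X_j(\Phi_{<l})$, which does not slot into any chain-rule decomposition of $\II(\Pi;X_{-j'}(\Phi')\mid X_{j'}(\Phi'))$ for $j'\ne j$, nor of $\II(\Pi;Y_{-j}(\Phi')\mid Y_j(\Phi'))$. Your remark that the $Y$-bound ``follows by an entirely symmetric argument'' is therefore not correct as stated: symmetry would swap the split, giving a different $\pi'$, whereas the claim needs both bounds for the same one. Any fix has to address both the shared-randomness obstruction within a coordinate and the requirement that a single split serve all $2k$ directions at once.
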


\begin{proof}
Denote $\Phi_i = \Phi^*$. From the construction of protocol $\pi'$, $\Pi' = (i, \Phi_{1},\dots,\Phi_{i-1},\Phi_{i+1},\dots,\Phi_{m/2},\Pi)$.
From \ref{fac:mul-ind} in \Cref{clm: basic properties},
\begin{align*}
\II_{\nu}(\Pi';X_{-j}(\Phi^*)|X_j(\Phi^*)) &= \II_{\mu}(\Pi;X_{-j}(\Phi^*)\mid X_j(\Phi^*))\\
        &\le \II_{\mu}(\Pi;X_{-j}(\Phi^*) \mid  i,\Phi_1 \dots,\Phi_{i-1},X_j(\Phi_{i+1}),\dots,X_j(\Phi_{m/2}),X_j(\Phi^*)) \\
        & = \II_{\mu}(\Pi;X_{-j}(\Phi_i) \mid  i, X_j(\Phi_1) \dots, X_j(\Phi_{m/2}) , X_{-j}(\Phi_1),\dots,X_{-j}(\Phi_{i-1})) \\
        & = \II_{\mu}(\Pi;X_{-j}(\Phi_i) \mid i, X_j(\Phi') , X_{-j}(\Phi_1),\dots,X_{-j}(\Phi_{i-1})) \\
        & = \frac{2}{m}\cdot \sum_{i=1}^{m/2} \II_{\mu}(\Pi;X_{-j}(\Phi_i) \mid X_j(\Phi') , X_{-j}(\Phi_1),\dots,X_{-j}(\Phi_{i-1})) \\
        & = \frac{2}{m}\cdot \II_{\mu}(\Pi, X_{-j}(\Phi') | X_j(\Phi'))
    \end{align*}
where the last equation is from the chain rule (\ref{fact:chain-rule} in \Cref{clm: basic properties}). Similarly, we can prove that
    $$
    \II_{\nu}(\Pi';Y_{-j}(\Phi^*) | Y_j(\Phi^*)) \le \frac{2}{m}\cdot \II_{\mu}(\Pi, Y_{-j}(\Phi') | Y_j(\Phi')).
    $$
\end{proof}

From \Cref{cla:ds-mst} and \Cref{fac:multi-cc}, for any randomized protocol $\pi$ for the $\coi_{N,m}$ problem, $\cc_{\mu}(\pi) \ge \Omega(m/k^3)=\Omega(m/\log^3 N)$. This completes the proof of \Cref{thm:coi}.

\section{Space Lower Bounds for MST and TSP Estimation in Graph Streams}

In this section we provide our lower bound results on graph streams.
We will prove a space complexity lower bound for multi-pass streaming algorithms for MST cost estimation in \Cref{Proof of p pass alpha graph MST lower}, and a space complexity lower bound for one-pass streaming algorithms for TSP-cost estimation in \Cref{Proof of 1 pass TSP lower}.

\subsection{Space Lower Bound for Multi-Pass Streaming Algorithms for MST Estimation}
\label{Proof of p pass alpha graph MST lower}

In this section we show that any randomized $p$-pass streaming algorithm that outputs any approximation for MST estimation has to use at least $\Omega(n/p)$, thus establishing \Cref{thm: p pass alpha graph MST lower}.
Similar to \Cref{subsec: 1 pass MST lower proof}, we prove the space lower bound of streaming algorithms by analyzing a 2-player communication game defined as follows.

\paragraph{Two-player graph MST estimation problem ($\emph{\sf MST}^{\sf g}_{\sf apx}$):} This is a communication problem, in which Alice is given a weighted graph $G_A$, and Bob is given a weighted graph $G_B$, with the promise that $V(G_A)=V(G_B)$, $E(G_A)\cap E(G_B)=\emptyset$, and $G_A\cup G_B$ is a connected graph.
Alice and Bob are allowed to communicate in the blackboard model. The goal is to compute an estimate $Y$ of $\mst(G_A\cup G_B)$. 
For any constant $\alpha>0$, we say that $Y$ is an $\alpha$-approximation of $\mst(G_A\cup G_B)$ iff $Y\le \mst(G_A\cup G_B)\le \alpha\cdot Y$.


The main result of this section is the following theorem, which immediately implies \Cref{thm: p pass alpha graph MST lower}.

\begin{theorem}
\label{thm: MST graph stream lower bound}
For any parameter $\alpha>1$, for any randomized blackboard protocol $\pi$ that computes an $\alpha$-approximation for the $\emph{\sf MST}^{\sf g}_{\sf apx}$ problem, $\cc(\pi)\ge \Omega(n)$.
\end{theorem}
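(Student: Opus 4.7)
I plan to reduce from the two-party Set Disjointness problem $\mathsf{DISJ}_n$, in which Alice holds $x\in\{0,1\}^n$, Bob holds $y\in\{0,1\}^n$, and they must decide whether $S\coloneqq\{i:x_i=y_i=1\}$ is empty. It is classical that $\mathsf{DISJ}_n$ has randomized blackboard communication complexity $\Omega(n)$. Given $(x,y)$ and a weight parameter $L$ (to be chosen as a function of $\alpha$ and $n$), Alice and Bob build a weighted graph on the common vertex set $V=\{r_1,r_2,w_1,\ldots,w_n\}$ of size $n+2$. Alice's graph $G_A$ contains, for each $i$, the edge $(r_1,w_i)$ of weight $1$ if $x_i=0$ and weight $L$ if $x_i=1$, together with one ``bridge'' edge $(r_1,r_2)$ of weight $1$. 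Bob's graph $G_B$ contains, for each $i$, the edge $(r_2,w_i)$ of weight $1$ if $y_i=0$ and weight $L$ if $y_i=1$. Since Alice's edges are precisely those incident to $r_1$ while Bob's edges are of the form $(r_2,w_i)$, the sets $E(G_A)$ and $E(G_B)$ are automatically disjoint, and the bridge edge guarantees that $G_A\cup G_B$ is connected on every input, so the promise of $\mathsf{MST}^{\mathsf g}_{\mathsf{apx}}$ holds.

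The MST cost is straightforward to compute: in any spanning tree each $w_i$ is attached to the $\{r_1,r_2\}$ backbone by its cheaper incident edge, whose weight is $1$ whenever at least one of $x_i,y_i$ equals $0$ and equals $L$ precisely when $i\in S$. A short case analysis (according to whether the bridge edge $(r_1,r_2)$ is used directly or replaced by two weight-$1$ edges through some $w_i$ with $x_i=y_i=0$) yields
\[\mst(G_A\cup G_B)\;=\;n+1+(L-1)\cdot|S|.\]
Consequently, any YES instance of $\mathsf{DISJ}_n$ has $\mst=n+1$ while any NO instance has $\mst\ge n+L$. Choosing $L=\lceil\alpha(n+1)\rceil+1$ makes the ratio strictly exceed $\alpha$, so the intervals $[\mst/\alpha,\mst]$ of valid $\alpha$-approximation outputs in the two cases are disjoint. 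Alice and Bob can therefore decide $\mathsf{DISJ}_n$ by running any $\alpha$-approximation protocol $\pi$ and thresholding its output, giving $\cc(\pi)\ge\cc(\mathsf{DISJ}_n)=\Omega(n)$.

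The main subtlety I expect is the choice of orientation in the encoding. The ``direct'' convention that makes $(r_1,w_i)$ cheap when $x_i=1$ (and symmetrically for Bob) produces an MST cost depending on $|\{i:x_i=y_i=0\}|$ rather than on $|S|$, and if one were to fix the marginal sizes $|x|,|y|$ this quantity would move in lock-step with $|S|$ so that its contribution cancels the savings from a cheaper bridge, erasing the signal. Flipping the encoding so that a zero on either side already suffices to attach $w_i$ cheaply is what isolates $|S|$ as the sole driver of MST cost and allows a size-oblivious reduction directly from unrestricted $\mathsf{DISJ}_n$, yielding the $\Omega(n)$ lower bound uniformly in $\alpha$.
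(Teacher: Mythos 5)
Your proposal is correct and takes essentially the same route as the paper: both reduce from two-party Set Disjointness by placing two hub vertices and $n$ leaves, having Alice and Bob each encode their bit vector into the weights of their hub's incident edges so that the $i$-th leaf has a cheap attachment iff $x_i = 0$ or $y_i = 0$, and choosing the heavy weight large enough that any $\alpha$-approximation separates $|S|=0$ from $|S|\ge 1$. The only cosmetic differences are the choice of the heavy weight ($L=\lceil\alpha(n+1)\rceil+1$ versus the paper's $M=2\alpha n$) and the fact that you compute the exact MST cost $n+1+(L-1)|S|$ where the paper only bounds it.
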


The remainder of this section is dedicated to the proof of \Cref{thm: MST graph stream lower bound}. 
We will prove \Cref{thm: MST graph stream lower bound} via a reduction from the well-known 2-player communication problem called \emph{Disjointness} ($\textsf{Disj}$). In the $\textsf{Disj}$ problem, Alice is given a string $X^A\in \set{0,1}^n$ and Bob is given a string $X^B\in \set{0,1}^n$. They are allowed to communicate in the blackboard model. The goal is to check if there is an index $i\in [n]$ such that $X^A_i = X^B_i = 1$. 
We use the following result on the communication complexity of $\textsf{Disj}$ problem.

\begin{lemma}[\!\!\!\cite{Bar-YossefJKS04}\!]\label{lem:disj}
For any randomized protocol $\pi$ for the $\textnormal{\textsf{Disj}}$ problem, $\cc(\pi)=\Omega(n)$. 
\end{lemma}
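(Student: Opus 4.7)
The plan is to prove this classical lower bound via the information complexity framework of Bar-Yossef, Jayram, Kumar, and Sivakumar, reducing $\textsf{Disj}$ to $n$ parallel instances of a single-bit AND primitive and then analyzing the primitive via a Hellinger-distance argument. The primitive problem is: Alice holds $x\in\{0,1\}$, Bob holds $y\in\{0,1\}$, and they compute $x\wedge y$. Since a $\textsf{Disj}$ protocol computes $\bigvee_{i=1}^{n}(X^A_i\wedge X^B_i)$, it must in particular distinguish inputs in which every coordinate is a $(0,0)/(0,1)/(1,0)$-pair from inputs that additionally place a $(1,1)$ at some coordinate. On the AND primitive I will work with the ``collapsing'' distribution $\mu$ on $(X,Y)\in\{0,1\}^2$ defined by: with probability $1/2$ draw $(0,B)$ for $B$ uniform in $\{0,1\}$, otherwise draw $(A,0)$ for $A$ uniform. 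Introduce an auxiliary variable $\RV{Z}\in\{\mathsf{A},\mathsf{B}\}$ recording which side is forced to $0$; crucially, conditioned on $\RV{Z}$ the inputs $\RV{X}$ and $\RV{Y}$ are independent, a property essential to the direct-sum step.

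The main technical step, and the principal obstacle, is to show that any $\delta$-error protocol $\pi$ for AND satisfies the conditional information cost lower bound
\[
\II(\RV{\Pi};\RV{X}\mid\RV{Y},\RV{Z}) + \II(\RV{\Pi};\RV{Y}\mid\RV{X},\RV{Z}) = \Omega(1).
\]
Assuming for contradiction that this quantity is some small $\eps$, write $\RV{\Pi}_{xy}$ for the transcript on input $(x,y)$ and use \Cref{fac:hel-dkl} to deduce $\hel(\RV{\Pi}_{00},\RV{\Pi}_{01})=O(\sqrt{\eps})$ and $\hel(\RV{\Pi}_{00},\RV{\Pi}_{10})=O(\sqrt{\eps})$. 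The rectangle property \Cref{fac:rect} then yields $\hel(\RV{\Pi}_{01},\RV{\Pi}_{10}) = \hel(\RV{\Pi}_{00},\RV{\Pi}_{11})$, and combining with the triangle inequality \Cref{clm:triangle} gives $\hel(\RV{\Pi}_{00},\RV{\Pi}_{11})=O(\sqrt{\eps})$. By \Cref{fac:hel-tvd} this forces $\tvd{\RV{\Pi}_{00}}{\RV{\Pi}_{11}}$ to be small, contradicting correctness, which demands that these two transcripts induce distinct outputs with high probability.

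Finally, a direct-sum argument lifts the primitive bound to $\textsf{Disj}$. Sampling $n$ independent copies $(X^A_i,X^B_i)\sim\mu$ with auxiliary variables $\RV{Z}_1,\dots,\RV{Z}_n$ always produces a No-instance of $\textsf{Disj}$, so any correct protocol $\pi$ remains correct on this distribution. Given an AND input $(x,y)$, pick $j\in[n]$ uniformly via public randomness, plant $(x,y)$ at coordinate $j$, publicly sample $\RV{Z}_i$ for each $i\ne j$, and have each player privately sample their own bit at coordinate $i\ne j$ using independence of $\RV{X}_i,\RV{Y}_i$ given $\RV{Z}_i$; this embedding produces a protocol $\pi'$ for AND whose transcript inherits its information profile from $\pi$. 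A chain-rule computation, analogous to the one in \Cref{clm: protocol mst to index}, yields
\[
\cc(\pi)\;\ge\;\II(\RV{\Pi};\RV{X}^A,\RV{X}^B\mid\RV{Z})\;=\;\sum_{i=1}^{n}\II(\RV{\Pi};\RV{X}^A_i,\RV{X}^B_i\mid\RV{Z},\RV{X}^A_{<i},\RV{X}^B_{<i})\;\ge\;n\cdot\Omega(1),
\]
where each summand is bounded below by the conditional information-cost bound established for the primitive, completing the proof of $\cc(\pi)=\Omega(n)$.
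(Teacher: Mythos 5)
The paper does not prove this lemma at all --- it is imported directly from \cite{Bar-YossefJKS04} --- so there is no internal proof to compare against; what you have written is, in essence, the original BJKS information-complexity argument, and it is sound. Your architecture (the collapsing distribution with an auxiliary variable $Z$ that makes the two inputs conditionally independent, an $\Omega(1)$ conditional information-cost bound for the one-bit $\mathsf{AND}$ primitive via Hellinger distance together with the cut-and-paste property and the triangle inequality, and a chain-rule direct sum with the planted coordinate filled in by publicly sampling $Z_{-j}$ and privately sampling each player's own bits) is exactly the toolkit this paper itself deploys for its $\overwrite$ and $\coi$ lower bounds, so the proposal is consistent in both substance and style. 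Two details are worth making explicit if you expand the sketch: (i) correctness of the derived $\mathsf{AND}$ protocol on the planted input $(1,1)$ relies on $\pi$ being a worst-case (randomized) protocol, since the product distribution $\mu^n$ never produces a Yes-instance of $\textnormal{\textsf{Disj}}$; and (ii) your chain-rule summands condition on the other coordinates' actual bits, whereas the embedded protocol's public information only contains their $Z$-values, so you need the conditional independence of the coordinates given $Z$ (via \ref{fac:mul-ind} of \Cref{clm: basic properties}) to pass from one to the other and to identify the internal conditional information cost you bounded for the primitive with the summands you decompose --- routine, but it is the one place where the bookkeeping has to be done carefully.
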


We now show a reduction from $\textnormal{\textsf{Disj}}$ problem to the $\emph{\sf MST}^{\sf g}_{\sf apx}$ problem. Let $(X^A,X^B)$ be an input pair of the $\textnormal{\textsf{Disj}}$ problem. We construct a pair $G_A, G_B$ of weighted graphs as follows. We use the parameter $M = 2 \alpha n$.
The common vertex set is $V(G_A)=V(G_B)=\set{u,u'}\cup \set{v_i\mid 1\le i\le n}$. 
The edge set $E(G_A)$ contains: (i) for each $1\le i\le n$, an edge $(u,v_i)$, that has weight $1$ if $X^A_i=0$ and has weight $M$ if $X^A_i=1$; and (ii) an edge $(u,u')$ with weight $1$.
The edge set $E(G_B)$ contains, for each $1\le i\le n$, an edge $(u',v_i)$, that has weight $1$ if $X^B_i=0$ and has weight $M$ if $X^B_i=1$. 
Clearly, $E(G_A)\cap E(G_B)=\emptyset$ and $G_A\cup G_B$ is a connected graph.

Observe that, if there is an index $i\in [n]$ such that $x_i = y_i = 1$, then the two incident edges of $v_i$, $(u,v_i)$ and $(u',v_i)$, both have weight $M$, and since any spanning tree must contain one of these two edges, we get that $\mst(G_A\cup G_B)\ge M$. If for all indices $i\in [n]$, either $x_i=0$ or $y_i=0$ holds, then for each $i\in [n]$, there is at least one incident edge of $v_i$ with weight $1$. Taking the union of these edges with the edge $(u,u')$, we obtain a spanning tree of $G_A\cup G_B$ with weight $n+1$, so in this case $\mst(G_A\cup G_B)=n+1$. 
Since $M = 2 \alpha n$, any $\alpha$-approximation protocol for $\mst(G_A\cup G_B)$ can indeed determine the value of $\mst(G_A\cup G_B)$. Therefore, any randomized protocol for $\alpha$-approximating the $\emph{\sf MST}^{\sf g}_{\sf apx}$ problem can be turned into a randomized protocol for the $\textsf{Disj}$ problem, and therefore, from \Cref{lem:disj}, it has to have communication complexity $\Omega(n)$.

\subsection{Space Lower Bound for One-Pass Streaming Algorithms for TSP Estimation}
\label{Proof of 1 pass TSP lower}

In this subsection we show that, for any constant $\eps>0$, any randomized one-pass streaming algorithm that takes as input a graph stream and estimates the graphic TSP-cost of a the input graph to within a factor of $2-\eps$, must use $\Omega(\eps^2n^2)$ space, thus establishing \Cref{thm: 1 pass TSP lower}.
At a high level, our proof considers the 2-player communication game, in which Alice is given a graph $G_A$ and Bob is given a graph $G_B$, and they are asked to estimate the TSP-cost of metric induced by the shortest-path distance metric of graph $G_A\cup G_B$ to within a factor of $(2-\eps)$. It is well-known that the space complexity of any one-pass $(2-\eps)$-approximation streaming algorithm is lower bounded by the one-way communication complexity of the game. We then show that the one-way communication complexity of the game is $\Omega(\eps^2 n^2)$ by constructing a hard distribution and bounding its information complexity from below by reducing the {\sf Index} problem to it.

\paragraph{Two-player graphic TSP cost approximation problem ($\emph{\sf TSP}^{\sf g}_{\sf apx}$):} This is a 2-player communication problem, in which Alice is given a weighted graph $G_A$ and Bob is given a weighted graph $G_B$, with the promise that $V(G_A)=V(G_B)$, $E(G_A)\cap E(G_B)=\emptyset$ and all weights of the weighted graph $G^*=G_A\cup G_B$ form a partial metric. 
Alice is allowed to send a message to Bob, and then Bob, upon receiving this message, needs to compute an estimate $Y$ of $\tsp(G^*)$. 

Let $0<\delta<1$ be a constant and let $\dset$ be an input distribution.
We say that a protocol $\pi$ $(2-\eps,\delta)$-approximates the problem $\emph{\sf TSP}^{\sf g}_{\sf apx}$ over the distribution $\dset$, if and only if, with probability at least $1-\delta$, the estimate $Y$ returned by protocol $\pi$ satisfies that $Y\le \tsp(G^*)\le (2-\eps)\cdot Y$.

The main result of this subsection is the following theorem, which immediately implies \Cref{thm: 1 pass TSP lower}. 
\begin{theorem}
\label{thm: main_1-pass tsp lower bound}
Let $\eps>0$ be a constant. For any real number $0<\delta< 1/10$, there exists a distribution $\dset$ on input of the problem $\emph{\sf TSP}^{\sf g}_{\sf apx}$, such that, for any one-way protocol $\pi$ that $(2-\eps,\delta)$-approximates $\emph{\sf TSP}^{\sf g}_{\sf apx}$ over $\dset$, $\emph{\cc}^{\textnormal{1-way}}_{\dset}(\pi)=\Omega(\eps^2n^2)$.
\end{theorem}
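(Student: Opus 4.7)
The plan is to prove Theorem~\ref{thm: main_1-pass tsp lower bound} by exhibiting a distribution $\dset$ on inputs of $\emph{\sf TSP}^{\sf g}_{\sf apx}$ under which any $(2-\eps,\delta)$-approximation one-way protocol must communicate $\Omega(\eps^2 n^2)$ bits. Following the strategy the authors flagged, I will reduce from the {\sf Index} problem on $m = \Theta(\eps^2 n^2)$ bits: Alice holds $x \in \{0,1\}^m$, Bob holds $i \in [m]$, and Bob must output $x_i$. This problem has one-way randomized communication complexity $\Omega(m)$ under the uniform distribution on $x$, which, after the reduction, delivers the desired $\Omega(\eps^2 n^2)$ lower bound on any one-pass streaming algorithm for $(2-\eps)$-approximating graphic TSP.

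Given $(x,i)$, Alice and Bob will jointly construct a pair $(G_A, G_B)$ of weighted graphs on a common $n$-vertex set $V$ with $E(G_A) \cap E(G_B) = \emptyset$ and weights forming a valid partial metric. Using shared public coins, they first agree on a random identification of the $m$ coordinates of $x$ with $m$ ``decision slots'' distributed throughout a gadget structure on $V$. Alice encodes $x$ into $G_A$ by placing weight-$1$ edges at slots where $x_j = 1$ and weight-$M$ edges (for a suitably large $M$) elsewhere. Bob contributes a fixed skeleton of weight-$1$ edges guaranteeing connectivity and a well-behaved shortest-path metric, together with a localized gadget at the slot corresponding to $i$. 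The gadget is designed so that if $x_i = 1$ then Alice's weight-$1$ edges combine with Bob's local gadget to complete a near-Hamiltonian cycle, yielding $\tsp(G_A \cup G_B) \le (1 + O(\eps)) n$; whereas if $x_i = 0$ the best tour through slot $i$ is forced into a detour costing an additional $(1 - O(\eps)) n$, giving $\tsp(G_A \cup G_B) \ge (2 - O(\eps)) n$. Consequently any $(2-\eps)$-approximation of the TSP cost determines $x_i$ with constant advantage, and hardness of {\sf Index} transfers to $\emph{\sf TSP}^{\sf g}_{\sf apx}$.

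The parameters are balanced by choosing per-slot gadget size $\Theta(1/\eps)$ and $\Theta(\eps^2 n)$ slots per vertex, so that the total number of encoded bits is $m = \Theta(\eps^2 n^2)$ while each individual slot still carries enough ``leverage'' to realize the gap alone. The randomization over the slot identification guarantees that with probability $1-o(1)$ over the public coins, the targeted slot's gadget behaves as designed regardless of Alice's other bits. Combined with the $\Omega(m)$ hardness of {\sf Index}, and the standard reduction from one-pass streaming algorithms to one-way protocols, this gives the claimed $\Omega(\eps^2 n^2)$ space lower bound.

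The main obstacle will be showing that the ``noise'' contributed by the $m-1$ non-target coordinates of $x$ does not collectively destroy the $(2-\eps)$ gap at slot $i$. Concretely, one must show that with high probability over the random identification and over uniform $x$, the weight-$1$ edges placed at non-target slots do not create enough additional cheap shortcuts across the graph to save an extra $\Omega(\eps n)$ in the tour when $x_i = 0$. I expect this to reduce to a concentration argument: enumerate possible ``shortcut types'' across the gadget layout, bound the expected savings from each type under a uniformly random identification, and argue the total saving is $o(\eps n)$ with high probability. Once this robustness claim is in hand, the rest of the proof -- distributional reduction, averaging to reveal $x_i$ from the estimate, and passage from one-way communication to one-pass space -- is standard.
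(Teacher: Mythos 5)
Your high-level plan — reduce from {\sf Index} on $\Theta(\eps^2 n^2)$ bits, build gadgets of size $\Theta(1/\eps)$ so each encoded bit can swing the TSP cost by $\Theta(\eps n)$, and balance parameters so the total is $\Theta(\eps^2 n^2)$ — matches the paper's strategy. But your gadget design, to the extent it is specified, differs in a way that matters, and the ``noise'' concern you flag at the end is a genuine unresolved gap in your version that the paper's specific construction is engineered to avoid rather than bound probabilistically.

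Concretely: you propose that Alice place \emph{weight-$1$} edges at slots where $x_j = 1$ and weight-$M$ edges elsewhere, with Bob supplying a weight-$1$ skeleton. In that encoding, Alice's weight-$1$ edges at non-target slots are cheap shortcuts that could collectively and unpredictably shave $\Omega(\eps n)$ off the tour even when $x_i = 0$, and it is not obvious that any concentration argument over the random slot identification rescues the $(2-\eps)$ gap — the adversary here is the optimal tour, not a random one, so you would need to enumerate and union-bound over exponentially many candidate tours. The paper avoids this entirely by a different design: Bob's skeleton is a tree whose edges have weight $1$ \emph{except} for the edges incident to the two target index-groups $i^*, j^*$, which have large weight $L+2$; Alice's edges (present exactly when $X_{i,j}=1$) \emph{all} have weight $L+2$. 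As a result, Alice's edges can never enter the MST, a non-target shortcut $E_{i,j}$ connects two parts of the tree that are only distance $O(1)$ apart (so paying $L+2$ is a strict loss and the optimal tour never uses it), and the unique shortcut $E_{i^*,j^*}$ connects the two expensive subtrees and saves roughly $2rL$ versus the doubled-MST tour of cost $\approx (4r+2)L$. This makes the yes/no gap deterministic in $X_{i^*,j^*}$ alone — no concentration over the other $m-1$ bits is needed — and the reduction itself is deterministic (no public-coin identification step). In short, the bit-encoding you describe appears to be the wrong one: the trick is not ``weight-$1$ for $1$'', but ``heavy shortcut present for $1$, absent for $0$'', combined with Bob making only the target subtrees heavy. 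With the gadget as you sketched it, the robustness step you deferred is exactly where the argument would break, and I do not see a way to complete it without essentially redesigning toward the paper's construction.
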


The remainder of this section is dedicated to the proof of \Cref{thm: main_1-pass tsp lower bound}. 

We now proceed to construct the hard distribution.
Let $r$ be some parameter to be fixed later, and let $p=\floor{n/2r}$.
We first define an unweighted graph $G$ as follows. Its vertex set is $V(G)=\set{u_0,u'_{0}}\cup U \cup U'$, where $U=\set{u_{i,t}\mid i\in [p], t\in [r]}$ and $U'=\set{u'_{j,t}\mid j\in [p], t\in [r]}$.
For each $1\le i\le p$, we denote $U_i=\set{u_{i,t}\mid t\in [r]}$ and similarly for each $1\le j\le p$, we denote
$U'_j=\set{u'_{j,t}\mid t\in [r]}$. Therefore, 
$V=\set{u_0,u'_{0}}\cup \bigg(\bigcup_{i\in [p]}U_i\bigg) \cup \bigg(\bigcup_{j\in [p]}U'_j\bigg)$.
The edge set of $G$ contains (i) the edge $(u_0,v_0)$; (ii) for each $u\in U$, the edge $(u_0, u)$; and (iii) for each $u'\in U'$, the edge $(u'_0,u')$.
For each $1\le i\le p$, we denote $E_{i}=\set{(u_0,u_{i,t})\mid t\in [r]}$ and similarly for each $1\le j\le p$, we denote $E'_j=\set{(u'_0,u'_{j,t})\mid t\in [r]}$. Therefore, 
$E(G)=\set{(u_0,u'_{0})}\cup \bigg(\bigcup_{i\in [p]}E_i\bigg) \cup \bigg(\bigcup_{j\in [p]}E'_j\bigg)$.
See \Cref{fig:baseG} for an illustration.

Let $X$ be an $p\times p$ matrix, where each entry $X_{i,j}$ is either $0$ or $1$. Let $i^*,j^*$ be two indices of $[p]$ (where possibly $i^*=j^*$). We now define the weighted graph $G_{X,i^*,j^*}$ based on $G$ as follows. Its vertex set is $V(G)$. Its edge set contains (i) all edges of $E(G)$; and (ii) for each pair $i,j\in [p]$ such that $X_{i,j}=1$, the edges of $E_{i,j}=\set{(u_{i,t}, u'_{j,t}),(u_{i,t}, u'_{j,t+1})\mid t\in [r]}$.
The weights on edges of $G_{X,i^*,j^*}$ are defined as follows.
Let $L$ be a parameter to be fixed later.
The edges of $E_{i^*}\cup E'_{j^*}$ have weight $(L+2)$.
The edges of $E(G)\setminus (E_{i^*}\cup E_{j^*})$ have weight $1$.
For each pair $i,j\in [p]$ such that $X_{i,j}=1$, the edges of $M_{i,j}$ have weight $(L+2)$.
See \Cref{fig:labeledG} for an illustration.

\begin{figure}[h]
	\centering
	\subfigure[Graph $G$, where $r=2$, $p=3$ and $n=14$.]{\scalebox{0.16}{\includegraphics{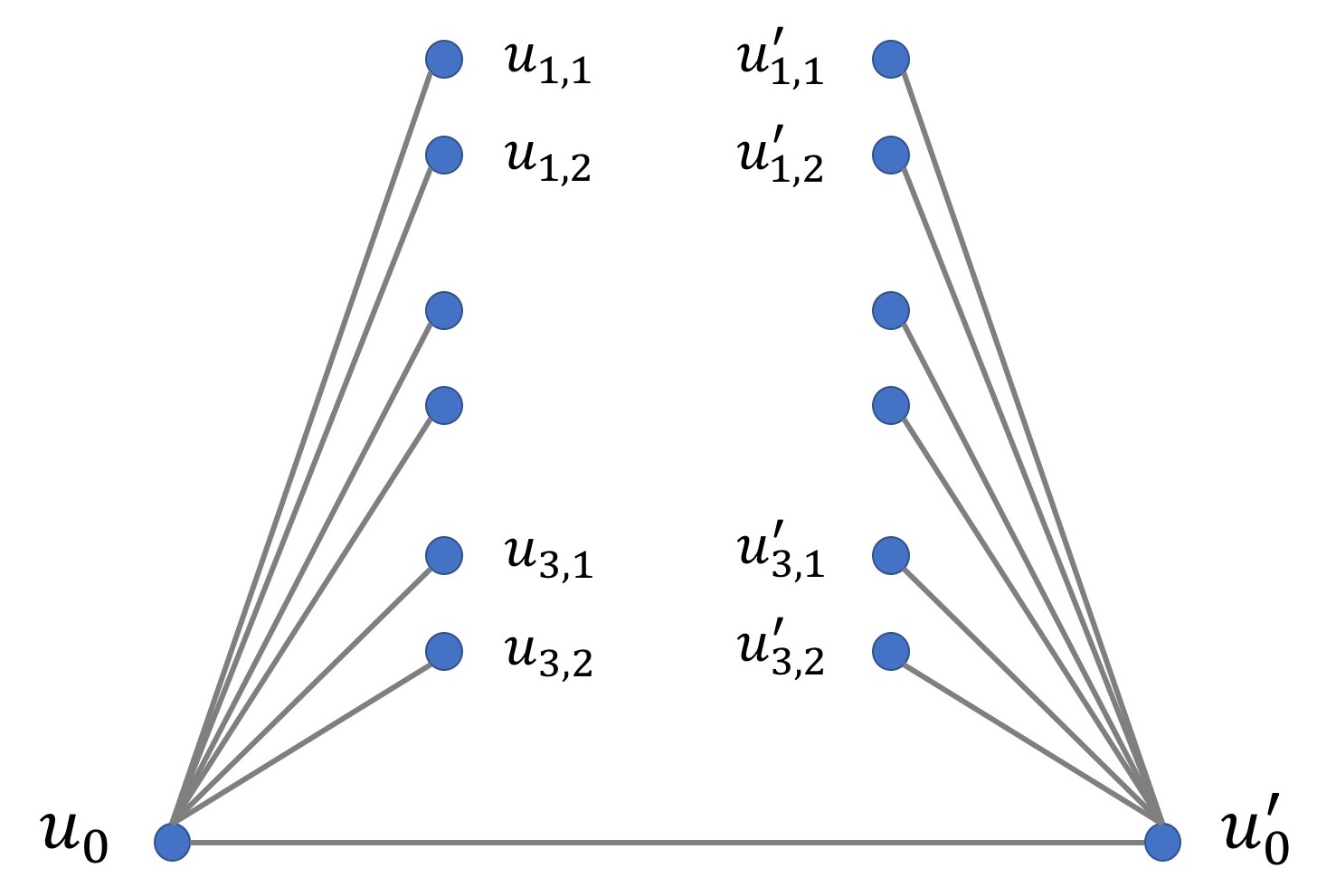}}\label{fig:baseG}}
	\hspace{0.1cm}
	\subfigure[Graph $G_{X,1,1}$, where $1$-entries of $X$ are $X_{1,1},X_{2,3},X_{3,2}$.]{\scalebox{0.16}{\includegraphics{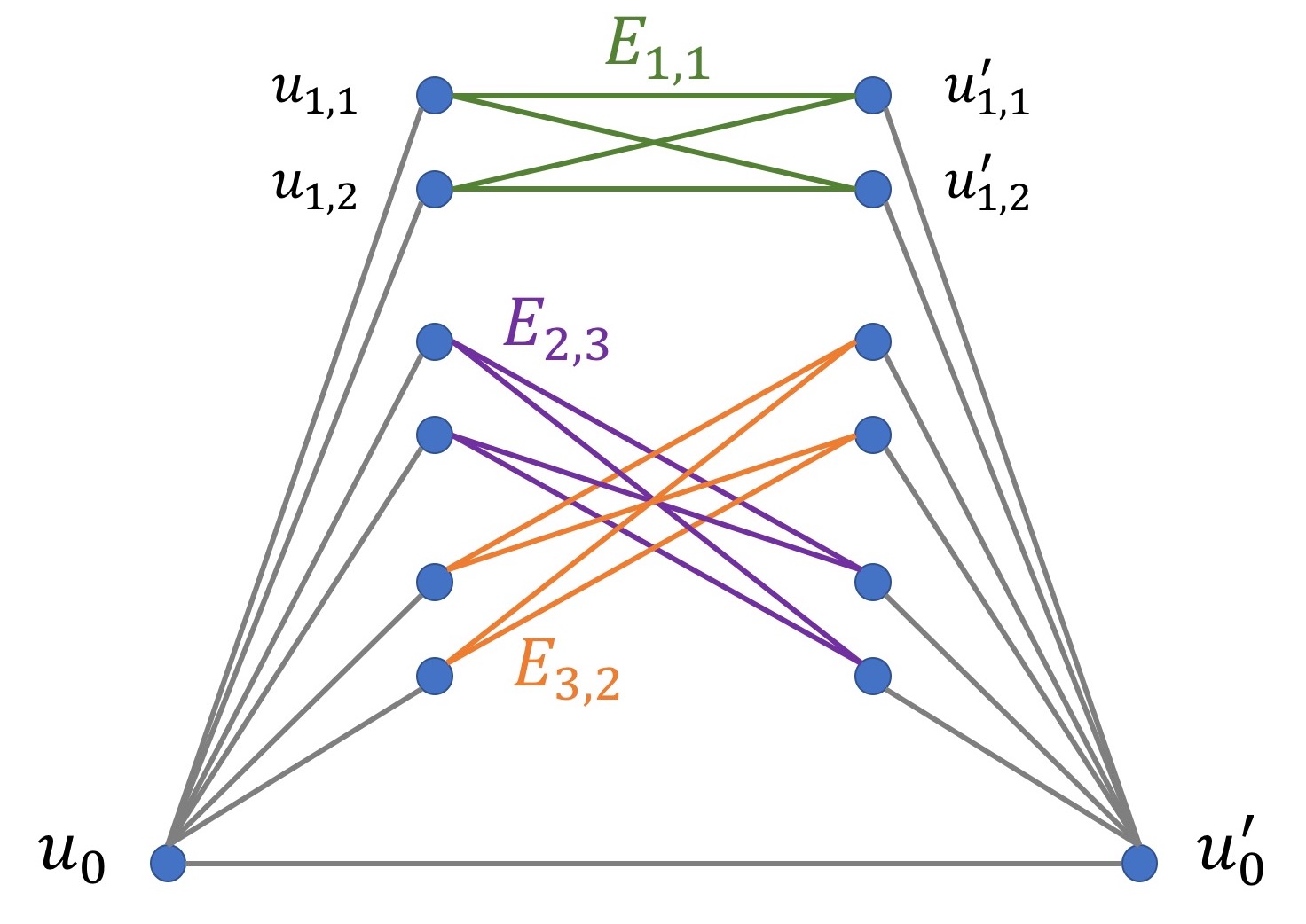}}\label{fig:labeledG}}
	\caption{An illustration of graphs $G$ and $G_{X,i^*,j^*}$.}
\end{figure}

We prove the following observation.
\begin{observation}
\label{obs: yes and no case}
$\mst(G_{X,i^*,j^*})=(n+2r-2)+(2r+1)L$.
If $X_{i^*,j^*}=0$, then $\tsp(G_{X,i^*,j^*})=2\cdot ((n+2r-2)+(2r+1)L)$. If $X_{i^*,j^*}=1$, then $\tsp(G_{X,i^*,j^*})\le 2n-6+(2r+2)L$.
\end{observation}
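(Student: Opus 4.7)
Since $G_{X,i^*,j^*}$ is a connected weighted graph, computing $\mst$ and $\tsp$ of its shortest-path metric reduces to reasoning about subgraphs and connected Eulerian multigraphs of $G_{X,i^*,j^*}$ itself: the MST of the metric closure equals the MST of the graph, and $\tsp$ equals the minimum weight of a connected Eulerian multigraph $H$ whose edges (with integer multiplicities) come from $E(G_{X,i^*,j^*})$ and whose vertex set contains $V(G)$. For the MST, I apply Kruskal's algorithm: the weight-$1$ edges form a subtree on $\set{u_0,u'_0}\cup\bigcup_{i\ne i^*}U_i\cup\bigcup_{j\ne j^*}U'_j$, after which the remaining $2r$ vertices of $U_{i^*}\cup U'_{j^*}$ must be attached using $2r$ edges of weight $L+2$. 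The choice among $E_{i^*}\cup E'_{j^*}\cup\bigcup_{X_{i,j}=1}M_{i,j}$ does not affect the total (for example, when $X_{i^*,j^*}=1$ the Hamiltonian path $M_{i^*,j^*}$ plus one hook already uses exactly $2r$ heavy edges), yielding the stated $\mst$ formula.

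For the upper bound on $\tsp$ when $X_{i^*,j^*}=1$, I construct a tour explicitly by exploiting the path $M_{i^*,j^*}$ on $U_{i^*}\cup U'_{j^*}$. Starting from $u_0$, take the heavy hook $(u_0,u_{i^*,r})$ of weight $L+2$; traverse the Hamiltonian path $M_{i^*,j^*}$, visiting each vertex of $U_{i^*}\cup U'_{j^*}$ exactly once using $2r-1$ heavy edges; take the heavy hook back to $u'_0$; then complete the tour by visiting all remaining non-special vertices via a minimum-cost walk on the star metric around $u_0$ and $u'_0$, returning to $u_0$. A direct calculation on the light portion (each non-special vertex is at metric distance $1$ from the appropriate center and $2$ from any sibling leaf) shows the total tour weight is at most $2n-6+(2r+2)L$.

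For the lower bound on $\tsp$ when $X_{i^*,j^*}=0$, I identify any TSP tour with a connected Eulerian multigraph $H$ of the type described above, so that $w(H)=\tsp$. The crucial consequence of the hypothesis $X_{i^*,j^*}=0$ is that the sets of weight-$(L+2)$ edges incident to $U_{i^*}$ and to $U'_{j^*}$ are disjoint. Every edge of $G_{X,i^*,j^*}$ incident to $U_{i^*}$ has weight $L+2$, and $\deg_H\ge 2$ at each such vertex, so the total multiplicity of heavy edges incident to $U_{i^*}$ is at least $2r$, contributing $\ge 2r(L+2)$; the symmetric bound for $U'_{j^*}$ yields another $\ge 2r(L+2)$, and disjointness gives a combined heavy contribution of $\ge 4r(L+2)$. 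Each non-special $u_{i,t}$ ($i\ne i^*$) has a unique incident weight-$1$ edge $(u_0,u_{i,t})$, which Euler's condition forces to have multiplicity $\ge 2$ in $H$, contributing $2$; summing over the $2(p-1)r$ non-special vertices gives a light contribution of $\ge 4(p-1)r$. Finally, $H$ must be connected, and under $X_{i^*,j^*}=0$ the cheapest bridge between the $u_0$-component and the $u'_0$-component is the edge $(u_0,u'_0)$ with multiplicity $\ge 2$, contributing at least $2$ more (any alternative $M_{i,j}$ bridge costs $\ge 2(L+2)$). Summing these termwise lower bounds and substituting $n=2+2pr$ yields $w(H)\ge 2\cdot\mst(G_{X,i^*,j^*})$, which, combined with the trivial bound $\tsp\le 2\cdot\mst$ from doubling the MST, gives equality.

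\paragraph{Main obstacle.} The delicate step is the lower bound in the $X_{i^*,j^*}=0$ case: one must carefully isolate the cheapest feasible contribution of each category of vertex (special versus non-special, on each side) and exploit the fact that the two heavy-edge sets are disjoint \emph{precisely} when $X_{i^*,j^*}=0$. This disjointness is what prevents any ``shortcut'' between $U_{i^*}$ and $U'_{j^*}$ and forces the tour to pay $2\cdot\mst$; the gap between the two cases is exactly what drives the $\Omega(\eps^2 n^2)$ streaming lower bound.
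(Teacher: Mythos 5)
Your treatment of the MST formula and of the $X_{i^*,j^*}=1$ upper bound follows the same lines as the paper: observe that $E(G)$ is a minimum spanning tree of $G_{X,i^*,j^*}$ and total up its weight, and then exhibit an explicit tour that threads the Hamiltonian path $E_{i^*,j^*}$ between the two hooks from $u_0$ and $u'_0$. (Watch your accounting of $(u_0,u'_0)$: the paper's displayed MST formula and the $3L$ term in its tour estimate both treat this edge as having weight $L$, whereas your description of ``the weight-$1$ subtree on $\set{u_0,u'_0}\cup\ldots$'' treats it as weight $1$; only the former is consistent with the formula $(n+2r-2)+(2r+1)L$.) Where you genuinely diverge from the paper is the lower bound for $X_{i^*,j^*}=0$: the paper offers no proof there, asserting only that the Euler tour of $G$ is optimal, whereas you attempt a real argument by decomposing the weight of a connected Eulerian multigraph $H$ vertex by vertex. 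That is the right instinct, and your use of the disjointness of the heavy-edge neighborhoods of $U_{i^*}$ and $U'_{j^*}$ is exactly where $X_{i^*,j^*}=0$ enters; but the light-edge step of your decomposition does not hold.

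Concretely, ``Euler's condition forces $(u_0,u_{i,t})$ to have multiplicity $\ge 2$'' is false whenever $X_{i,j}=1$ for some $j$: then $u_{i,t}$ also has weight-$(L+2)$ neighbors in $G_{X,i^*,j^*}$ (namely in $U'_j$), so $H$ can give $u_{i,t}$ even degree $2$ using those heavy edges with the weight-$1$ edge at multiplicity $0$ or $1$. Your termwise sum then under-counts. Worse, the equality $\tsp=2\cdot\mst$ you are trying to establish is actually false for such $X$. Take any $i\ne i^*$ with $X_{i,j^*}=1$ and an edge $e=(u_{i,t},u'_{j^*,t})\in E_{i,j^*}$, of weight $L+2$. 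Its tree path in $T=G$ consists of $(u_0,u_{i,t})$, $(u_0,u'_0)$, $(u'_0,u'_{j^*,t})$ with total weight $1+L+(L+2)=2L+3$, so this single edge has cover advantage $L+1>0$, and the paper's own cover-advantage lemma (or the explicit Eulerian multigraph obtained by doubling $T$, removing one copy of each of those three tree edges, and inserting one copy of $e$) gives $\tsp\le 2\cdot\mst-(L+1)<2\cdot\mst$. So no vertex-wise lower-bound argument can reach $2\cdot\mst$; at best one can prove $\tsp\ge 2\cdot\mst-O(L+r)$, which must then be carried into the ratio computation of Claim~\ref{clm: protocol graph stream mst to index}. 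The paper's one-line assertion here is itself too strong; your proof attempt has the virtue of making visible exactly which structural fact (the non-existence of useful shortcuts incident to $U_{i^*}\cup U'_{j^*}$) the argument actually needs, and which fails once any other $X_{i,j^*}$ or $X_{i^*,j}$ is nonzero.
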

\begin{proof}
It is easy to see, from the definition of graph $G_{X,i^*,j^*}$, that the edges of $E(G)$ form a minimum spanning tree of $G_{X,i^*,j^*}$, regardless of $X$ and the values of $i^*,j^*$. Therefore, 
\[
\begin{split}
\mst(G_{X,i^*,j^*})= & \text{ }w(u_0,u'_0)
+\sum_{i\in [p]}\sum_{t\in [r]}w(u_0,u_{i,t})
+\sum_{j\in [p]}\sum_{t\in [r]}w(u'_0,u'_{j,t})\\
= & \text{ } L+\bigg(\frac{n}{2}-r-1\bigg)+r\cdot (L+2)+\bigg(\frac{n}{2}-r-1\bigg)+r\cdot (L+2) = (n+2r-2)+(2r+1)L.
\end{split}
\]
Assume that $X_{i^*,j^*}=0$. It is easy to see that the optimal TSP-tour that visits all vertices of $G_{X,i^*,j^*}$ is the Euler-tour of tree $G$, whose cost is $2\cdot \mst(G_{X,i^*,j^*})=(2n+4r-4)+(4r+2)L$.
Assume now that $X_{i^*,j^*}=1$. Now consider the tour $\pi$ that, starting from vertex $u_0$, first sequentially visit vertices $u_{i^*,1}, u'_{j^*,1},u_{i^*,2}, u'_{j^*,2},\ldots,u_{i^*,t}, u'_{j^*,t}, u'_0$, and then visited all other vertices and finally come back to $u_0$ by travelling along edges of $E(G)\setminus (E_{i^*}\cup E'_{j^*})$ at most twice and edge $(u_0,v_0)$ once. See \Cref{fig:good_tour} for an illustration.
It is clear that the cost of such a tour is
\[
w(u_0,u'_0)+w(u_0,u'_{i^*,1})+w(u'_0,u'_{j^*,t})
+\sum_{t\in [r]}w(u_{i^*,t},u'_{j^*,t})
+\sum_{t\in [r-1]}w(u_{i^*,t},u'_{j^*,t+1})
+\sum_{e\in E(G)\setminus (E_{i^*}\cup E'_{j^*})}2\cdot w(e),
\]
which is bounded by $3L+(2r-1)(L+2)+2(n-2r-2)=2n-6+(2r+2)L$.
\begin{figure}[h]
	\centering
	\includegraphics[scale=0.15]{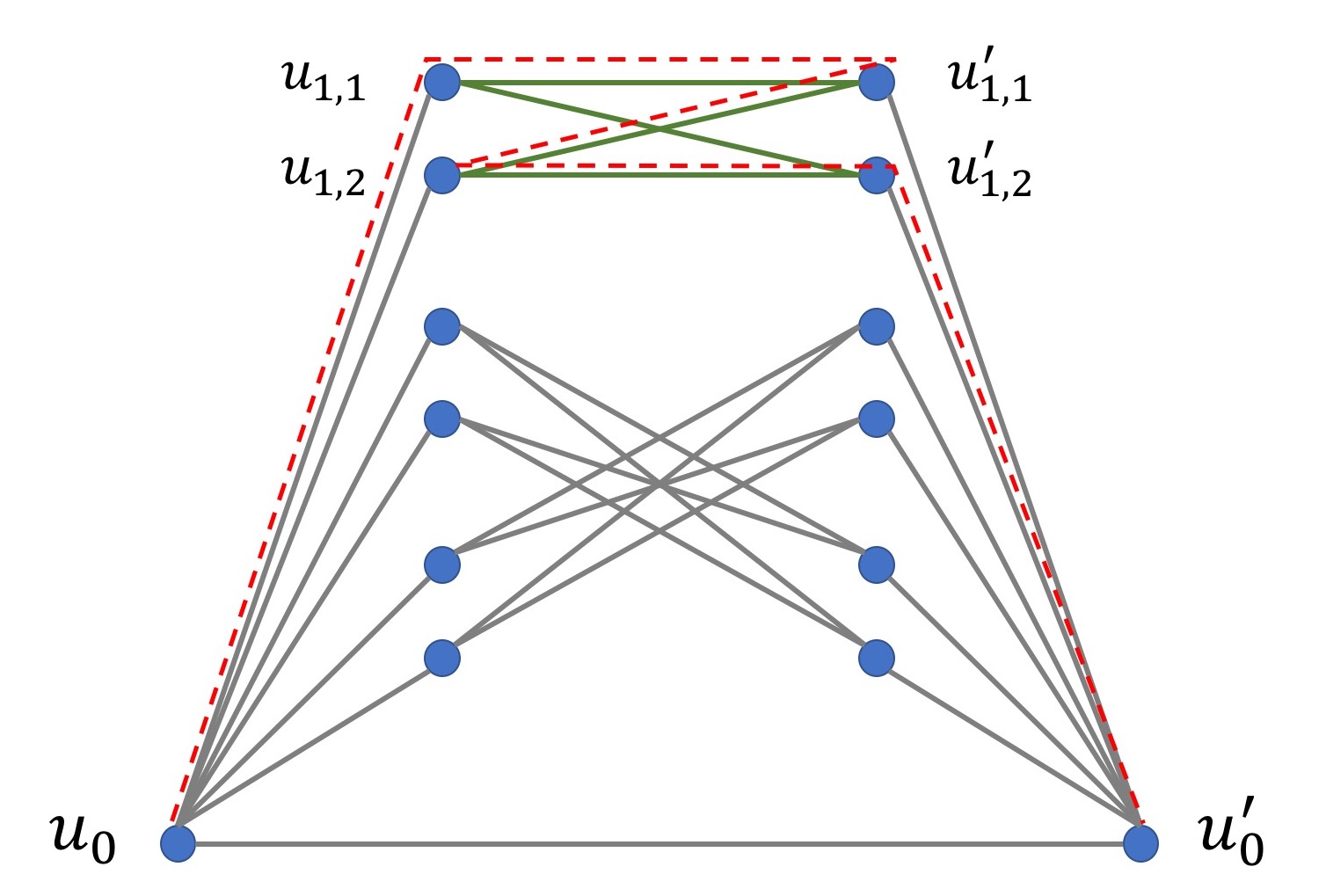}
	\caption{An illustration of part of tour $\pi$ when $X_{i^*,j^*}=1$.}\label{fig:good_tour}
\end{figure}
\end{proof}

\begin{tbox}
\textbf{Distribution} $\dset^{\sf g}_{\sf TSP}(L,r)$: An distribution on pairs $(G_A,G_B)$ of edge-disjoint graphs on $V$: 
	
	\begin{enumerate}
		\item Choose a random matrix $X$, where each entry $X_{i,j}$ is chosen uniformly at random from $\set{0,1}$.
		\item Choose two indices $i^*,j^*$ uniformly at random from $[p]$ (with replacement).
		\item Define graph $G_A$ to be the subgraph of $G_{X,i^*,j^*}$ induced by edges of $\bigcup_{i,j: X_{i,j}=1}E_{i,j}$.
		\item Define graph $G_B$ to be the subgraph of $G_{X,i^*,j^*}$ induced by edges of $E(G_{X,i^*,j^*})\setminus E(G_A)$.
	\end{enumerate}
\end{tbox}

We now show that, if we set $r=1/\eps$ and $L=n^2$, then for any protocol $\pi$ that $(2-\eps,\delta)$-approximates the problem $\emph{\sf TSP}^{\sf g}_{\sf apx}$ over $\dset^{\sf g}_{\sf TSP}(L,r)$, $\emph{\cc}^{\textnormal{1-way}}_{\dset^{\sf g}_{\sf TSP}(L,r)}(\pi)=\Omega(\epsilon^2n^2)$.

We will prove this lower bound by a reduction from the well-known problem $\ind$ to the problem $\emph{\sf TSP}^{\sf g}_{\sf apx}$. For completeness, we first provide a definition of the $\ind$ problem and state the previous result on its communication complexity.

\paragraph{Index Problem ($\ind_k$):} This is a 2-player communication problem, in which Alice is given a vector $X\in \set{0,1}^k$, and Bob is given an index $i^*\in [k]$. Alice is allowed to send a message to Bob, and then Bob, upon receiving this message, needs to outputs the value of $X_{i^*}$, the $i^*$-th coordinate of the input vector $X$ of Alice.

\begin{tbox}
	\textbf{Distribution} $\distIND^{k}$: An distribution on input pairs $(X,i^*)$ of problem $\ind_k$: 
	\begin{enumerate}
		\item Vector $X$ is chosen uniformly at random from $\set{0,1}^k$.
		\item Index $i^*$ is chosen uniformly at random from $[k]$.
	\end{enumerate}
\end{tbox}

The following lower bound on the communication complexity of $\ind_k$ over $\dset^k_{\sf Index}$ is proved in \cite{kremer1999randomized}.

\begin{theorem}[\!\!\!\cite{kremer1999randomized}\!]
Let $0<\delta<1/2$ be any constant. Then for any $\delta$-error one-way protocol $\pi$ for problem $\textnormal{\ind}$ over the distribution $\dset^k_{\textnormal{\sf Index}}$, $\cc^{\textnormal{1-way}}_{\dset^k_{\textnormal{\sf Index}}}(\pi)=\Omega(k)$.
\end{theorem}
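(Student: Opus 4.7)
The plan is to prove the Index lower bound via an information-theoretic argument, specifically by showing that the transcript of any low-error protocol must have Shannon entropy $\Omega(k)$, which in turn lower-bounds the expected message length.

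First I would set up the notation. Let $\pi$ be any $\delta$-error one-way protocol for $\ind_k$ over $\distIND^k$, let $\RV{X} = (\RV{X}_1,\ldots,\RV{X}_k)$ be the random input of Alice and let $\RV{I}$ be the random index of Bob, drawn independently according to $\distIND^k$. Let $\RV{\Pi} = \RV{\Pi}_{\RV{X}}$ denote the (random) transcript; note that in the one-way model $\RV{\Pi}$ depends only on $\RV{X}$ and Alice's private randomness, and is independent of $\RV{I}$. Bob outputs some function $g(\RV{\Pi}, \RV{I})$ of the transcript and his index.

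Second, I would exploit the correctness guarantee index-by-index. Since $\pi$ is $\delta$-error over $\distIND^k$ and $\RV{I}$ is uniform, for every fixed $i\in[k]$ we have $\Pr[g(\RV{\Pi},i) \ne \RV{X}_i] \le 2\delta$ by averaging (or by restricting to a constant fraction of $i$'s; any constant works). Then Fano's inequality applied to the binary random variable $\RV{X}_i$ yields $\HH(\RV{X}_i \mid \RV{\Pi}, \RV{I}=i) \le \HH_2(2\delta)$, where $\HH_2$ is the binary entropy function. Since $\RV{X}_i$ is uniform in $\{0,1\}$, $\HH(\RV{X}_i)=1$, so
\[
\II(\RV{X}_i ; \RV{\Pi} \mid \RV{I}=i) \;=\; \HH(\RV{X}_i) - \HH(\RV{X}_i \mid \RV{\Pi}, \RV{I}=i) \;\ge\; 1 - \HH_2(2\delta).
\]
Because $\RV{\Pi}$ is independent of $\RV{I}$, the conditioning on $\RV{I}=i$ can be dropped, giving $\II(\RV{X}_i ; \RV{\Pi}) \ge 1 - \HH_2(2\delta) = \Omega(1)$ for every $i$, where the constant is positive since $2\delta<1$.

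Third, I would aggregate across coordinates using independence. Under $\distIND^k$ the bits $\RV{X}_1,\ldots,\RV{X}_k$ are mutually independent, so by Claim~\ref{clm: basic properties} (item \ref{fac:mul-ind}) we have $\II(\RV{X}_i ; \RV{\Pi} \mid \RV{X}_{<i}) \ge \II(\RV{X}_i ; \RV{\Pi})$ for each $i$ (here $\RV{X}_{<i}$ plays the role of $\RV{W}$, and it is independent of $\RV{X}_i$). Applying the chain rule (item \ref{fact:chain-rule} of Claim~\ref{clm: basic properties}) then gives
\[
\II(\RV{X} ; \RV{\Pi}) \;=\; \sum_{i=1}^{k} \II(\RV{X}_i ; \RV{\Pi} \mid \RV{X}_{<i}) \;\ge\; \sum_{i=1}^{k} \II(\RV{X}_i ; \RV{\Pi}) \;\ge\; k\,(1 - \HH_2(2\delta)).
\]

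Finally, I would convert mutual information into communication cost. Since $\RV{\Pi}$ is a (prefix-free) bit-string, $\II(\RV{X};\RV{\Pi}) \le \HH(\RV{\Pi}) \le \expect[|\RV{\Pi}|] = \cc^{\textnormal{1-way}}_{\distIND^k}(\pi)$. Combining with the previous inequality yields $\cc^{\textnormal{1-way}}_{\distIND^k}(\pi) \ge k(1-\HH_2(2\delta)) = \Omega(k)$, as required. The only real subtlety is the application of Fano's inequality, which assumes a binary predictor of a binary variable, exactly our setting; no significant obstacle is anticipated, though one must be careful that Alice's randomness and $\RV{I}$ are independent of $\RV{X}$ so that the conditioning and independence manipulations above are valid.
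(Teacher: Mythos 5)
The paper itself gives no proof of this statement---it is imported directly from \cite{kremer1999randomized}---so the only question is whether your argument stands on its own. Your architecture (Fano per coordinate, independence of the coordinates plus the chain rule, then $\II(\RV{X};\RV{\Pi})\le \HH(\RV{\Pi})\le \expect[|\RV{\Pi}|]$) is the standard proof and is the right one. But one step is misstated and one step genuinely fails in part of the claimed range. The assertion that ``for every fixed $i$ we have $\Pr[g(\RV{\Pi},i)\ne \RV{X}_i]\le 2\delta$ by averaging'' is false: uniformity of $\RV{I}$ only bounds the \emph{average} of the per-index errors $e_i=\Pr[g(\RV{\Pi},i)\ne\RV{X}_i]$ by $\delta$. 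Your parenthetical (Markov: at least half the indices satisfy $e_i\le 2\delta$) is the correct statement and must replace it, with the chain-rule sum then restricted to those indices, which is legitimate since all terms are nonnegative.

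More substantively, the bound $\II(\RV{X}_i;\RV{\Pi})\ge 1-H_2(2\delta)$ is only valid when $2\delta\le 1/2$: Fano gives $\HH(\RV{X}_i\mid \RV{\Pi})\le H_2(e_i)$, and passing from $e_i\le 2\delta$ to $H_2(e_i)\le H_2(2\delta)$ uses monotonicity of $H_2$, which fails beyond $1/2$ (for $\delta\in(1/4,1/2)$ one can have $e_i=1/2\le 2\delta$ while $\HH(\RV{X}_i\mid\RV{\Pi})=1>H_2(2\delta)$). So as written your proof establishes the theorem for $\delta<1/4$ but not for the full range $0<\delta<1/2$ claimed in the statement (the paper's applications only need $\delta<1/10$, but the theorem asserts more). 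The clean repair avoids Markov entirely: keep the individual errors and use concavity of $H_2$, namely $\sum_i H_2(e_i)\le k\,H_2\big(\tfrac1k\sum_i e_i\big)\le k\,H_2(\delta)$ because $\tfrac1k\sum_i e_i\le\delta\le 1/2$, which together with the chain rule gives $\cc^{\textnormal{1-way}}_{\distIND^{k}}(\pi)\ge \II(\RV{X};\RV{\Pi})\ge k\,(1-H_2(\delta))=\Omega(k)$ for every constant $\delta<1/2$. The remaining points you flag are handled correctly: $\RV{\Pi}$ is independent of $\RV{I}$, so conditioning on $\RV{I}=i$ can be dropped, and Fano applies to Bob's possibly randomized output since it is obtained from $\RV{\Pi}$ by a channel.
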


Let $\protsp^{\sf g}$ be a protocol that $(2-\eps,\delta)$-approximates the problem $\emph{\sf TSP}^{\sf g}_{\sf apx}$ over the distribution $\dset^{\sf g}_{\sf TSP}(L,r)$.
We first construct the following protocol $\proind$ for solving the problem $\ind$ using the protocol $\protsp^{\sf g}$. 
Note that, if $X$ is a matrix chosen uniformly at random from $\set{0,1}^{p\times p}$, and $(i^*,j^*)$ is a pair chosen uniformly at random from $[p]\times [p]$, then this distribution is equivalent to $\distIND^{p^2}$.

\begin{tbox}
	\textbf{Protocol} $\proind$: A protocol for $\ind_{p^2}$ using a protocol $\protsp^{\sf g}$ for $\tspest^{\sf g}$. 
	
	\smallskip
	
	\textbf{Input:} An instance $(X,(i^*,j^*)) \sim \distIND^{p^2}$. \\
	\textbf{Output:} A value from $\set{0,1}$ as the answer to $\ind_{p^2}$.
	
	\algline
	
	\begin{enumerate}
		\item \textbf{Defining the instance.} Alice and Bob create an instance $(G_A,G_B)$ of $\tspest^{\sf g}$ as follows.
		\begin{enumerate}
		\item Construct the graph $G^*=G_{X,i^*,j^*}$.
		\item Define graph $G_A$ to be the subgraph of $G_{X,i^*,j^*}$ induced by edges of $\bigcup_{i,j: x_{i,j=1}}E_{i,j}$.
		\item Define graph $G_B$ to be the subgraph of $G_{X,i^*,j^*}$ induced by edges of $E(G^*)\setminus E(G_A)$.
		\end{enumerate}
		\item \textbf{Computing the answer.} Alice and Bob run the protocol $\protsp^{\sf g}$ on $(G_A,G_B)$ to compute an estimate $Y$. If $Y< 2\cdot ((n+2r-2)+(2r+1)L)$, then they return $1$; otherwise they return $0$.  
	\end{enumerate}
\end{tbox}

\begin{claim}
\label{clm: protocol graph stream mst to index}
If the one-way protocol $\pi^{\sf g}_{\sf TSP}$ $(2-\eps,\delta)$-approximates the problem $\emph{\sf TSP}^{\sf g}_{\sf apx}$ over the distribution $\dset^{\sf g}_{\sf TSP}(n^2,1/\eps)$, then the protocol $\pi_{\sf Index}$ is a $\delta$-error one-way protocol for $\emph{\ind}$ problem over the distribution $\dset_{\sf Index}$.
\end{claim}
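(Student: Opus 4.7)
The plan is to verify the two things the claim asserts: (i) the distribution of TSP instances produced by the reduction matches $\dset^{\sf g}_{\sf TSP}(n^2,1/\eps)$, and (ii) the threshold test applied to the output $Y$ of $\pi^{\sf g}_{\sf TSP}$ recovers the bit $X_{i^*,j^*}$ whenever $Y$ satisfies $Y\le\tsp(G^*)\le(2-\eps)Y$.

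\textbf{Step 1: matching the distributions.} When $(X,(i^*,j^*))$ is drawn from $\dset^{p^2}_{\textsf{Index}}$, the matrix $X$ is uniform over $\{0,1\}^{p\times p}$ and $(i^*,j^*)$ is uniform over $[p]\times[p]$ (and independent of $X$). By construction $G_A$ consists of those edges of $G_{X,i^*,j^*}$ that belong to $\bigcup_{X_{i,j}=1} E_{i,j}$, which is exactly the ``Alice'' edge set in the definition of $\dset^{\sf g}_{\sf TSP}(L,r)$, and $G_B$ is the complement inside $E(G_{X,i^*,j^*})$, which is exactly the ``Bob'' edge set. Since we set $r=1/\eps$ and $L=n^2$, the resulting joint law of $(G_A,G_B)$ is identically $\dset^{\sf g}_{\sf TSP}(n^2,1/\eps)$.

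\textbf{Step 2: separating the two cases.} Let $T_0 = 2\mst(G_{X,i^*,j^*}) = 2(n+2r-2)+2(2r+1)L$ and $T_1 = 2n-6+(2r+2)L$. By \Cref{obs: yes and no case}, if $X_{i^*,j^*}=0$ then $\tsp(G^*)=T_0$, and if $X_{i^*,j^*}=1$ then $\tsp(G^*)\le T_1$. Plugging $r=1/\eps$ and $L=n^2$ and computing $T_0-(2-\eps)T_1$, the $n^2$ coefficient is $(4/\eps+2)-(2-\eps)(2/\eps+2) = 2\eps$, so $T_0-(2-\eps)T_1 = 2\eps\cdot n^2 - O(n/\eps)$, which is strictly positive for all $n$ sufficiently large compared to $1/\eps$. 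Hence $T_0 > (2-\eps)\,T_1$, and one can pick a threshold $\tau$ with $T_1 < \tau \le T_0/(2-\eps)$ --- the threshold $\tau = 2((n+2r-2)+(2r+1)L) = T_0$ stated in the protocol works here, because in the Yes case $Y\le \tsp \le T_1 < T_0 = \tau$, while in the No case $Y\ge \tsp/(2-\eps) = T_0/(2-\eps) > T_1$, and in fact $Y\ge \tau$ precisely when $(2-\eps)Y \ge T_0 = \tsp$, which is the right-hand inequality of the approximation guarantee. (If a minor adjustment of the threshold is required to make both inequalities simultaneously tight, we use $\tau$ lying strictly in the open interval $(T_1,T_0/(2-\eps))$; the same argument goes through.)

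\textbf{Step 3: transferring the error probability.} On the event that $\pi^{\sf g}_{\sf TSP}$ returns a valid $(2-\eps)$-approximation --- an event of probability at least $1-\delta$ over its randomness and the draw of $(G_A,G_B)$ --- the threshold test in Step 2 outputs $X_{i^*,j^*}$ correctly. Since $\pi_{\textsf{Index}}$ produces its answer by a deterministic post-processing of the output of $\pi^{\sf g}_{\sf TSP}$ on an instance whose law matches $\dset^{\sf g}_{\sf TSP}(n^2,1/\eps)$, it errs with probability at most $\delta$ over $\dset^{p^2}_{\textsf{Index}}$, which is what the claim asserts. The only genuinely computational step is the two-term inequality $T_0>(2-\eps)T_1$; everything else is bookkeeping.
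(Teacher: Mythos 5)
Your proposal follows the same route as the paper --- verify that the distribution of generated TSP instances matches $\dset^{\sf g}_{\sf TSP}(n^2,1/\eps)$, show that the Yes and No TSP values are separated by a factor larger than $2-\eps$, and transfer the error probability --- and it is correct. What you add, in Step~2, is a careful look at the threshold, and you have in fact noticed a small bug in the paper's protocol $\pi_{\sf Index}$. As written, the protocol compares $Y$ with $\tau = 2\cdot((n+2r-2)+(2r+1)L) = T_0$ and outputs $0$ iff $Y\ge\tau$. But in the No case ($X_{i^*,j^*}=0$) the approximation guarantee $Y\le\tsp(G^*)\le(2-\eps)Y$ with $\tsp(G^*)=T_0$ only gives $T_0/(2-\eps)\le Y\le T_0$, so $Y<T_0$ is possible (indeed generic) and the protocol as stated would output $1$. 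Your fix --- use any threshold in $(T_1,\,T_0/(2-\eps)]$, an interval you show is nonempty by computing the $n^2$-coefficient $2\eps$ in $T_0-(2-\eps)T_1$ --- is exactly what is needed. One sentence (``$Y\ge\tau$ precisely when $(2-\eps)Y\ge T_0$'') is not literally an equivalence when $\tau=T_0$ and $2-\eps>1$, but your parenthetical remark supplies the correction. The paper's own proof glosses over the threshold with ``any $(2-\eps)$-approximation of $\tsp(G^*)$ can in fact determine the value of $X_{i^*,j^*}$''; your version makes this precise.
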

\begin{proof}
	It is easy to see that the distribution of instances $(G_A,G_B)$ for $\tspest^{\sf g}$ created in the reduction by the choice of pairs $(X,i^*,j^*)\sim \distIND$, is exactly the same as the distribution $\dset_{\textnormal{\sf TSP}}^{\sf g}(L,r)$, where $L=n^2$ and $r=1/\eps$. Moreover, from \Cref{obs: yes and no case}, if $X_{i^*,j^*}=1$, then $\tsp(G_{X,i^*,j^*})\le 2n-6+(2r+2)L$; and if $X_{i^*,j^*}=0$, then $\tsp(G_{X,i^*,j^*})=2\cdot ((n+2r-2)+(2r+1)L)$.
	 Therefore, if we set $r=1/\eps$ and $L=n^2$, then the ratio between two values of $\tsp(G^*)$ when $X_{i^*,j^*}=0$ and when $X_{i^*,j^*}=1$ is
	\[
	\frac{2\cdot ((n+2r-2)+(2r+1)L)}{2n-6+(2r+2)L}
	=
	\frac{4rL+(2n+4r+2L-4)}{2rL+(2n+2L-6)}
	> 2-\eps,
	\]
	and it follows that any $(2-\eps)$-approximation of $\tsp(G^*)$ can in fact determine the value of $X_{i^*,j^*}$.
\end{proof}

Combine \Cref{clm: protocol graph stream mst to index} and the communication complexity lower bound for problem $\ind$, we get that, if $\pi^{\sf g}_{\sf TSP}$ is an one-way protocol that $(2-\eps,\delta)$-approximates the problem $\emph{\sf TSP}^{\sf g}_{\sf apx}$ over the distribution $\dset^{\sf g}_{\sf TSP}(n^2,1/\eps)$, then $$\emph{\cc}^{\textnormal{1-way}}_{\dset^{\sf g}_{\sf TSP}(L,r)}(\pi^{\sf g}_{\sf TSP})=\Omega(p^2)=\Omega((n/r)^2)=\Omega(\eps^2n^2).$$ This completes the proof of \Cref{thm: main_1-pass tsp lower bound}.

\section{A Two-Pass Algorithm for TSP Estimation in Graph Streams}

In this section, we present a deterministic $2$-pass $1.96$-approximation algorithm for TSP estimation in graph streams, which uses $\tilde O(n)$ space, thus proving \Cref{thm: 2 pass TSP upper}. Our algorithm will utilize the notion of cover advantage, introduced in \Cref{subsec: cover adv}. This result is in a sharp contrast to \Cref{thm: 1 pass TSP lower} which showed that any single-pass algorithm requires $\Omega(n^2)$ space to obtain a better than $2$-approximation.

\paragraph{Algorithm.}
Let $\alpha,\beta \in (0,1)$ be two constants whose values will be set later.
In the first pass, we simply compute a minimum spanning tree $T$ and its cost $\mst=\sum_{e\in E(T)}w(e)$.
Throughout the second pass, we maintain a subset $E_{\temp}$ of edges, that is initialized to be $\emptyset$, and will only grow over the course of the algorithm.
Upon the arrival of each edge $e$, we compare $w(e)$ with $w(\cov(e)\setminus \cov(E_{\temp}))=\sum_{f\in \cov(e)\setminus \cov(E_{\temp})}w(f)$.
We add the edge $e$ to set $E_{\temp}$ iff $w(e)\le \alpha \cdot w(\cov(e)\setminus \cov(E_{\temp}))$.
Let $E^*$ be the set $E_{\temp}$ at the end of the algorithm. We then compute $w(\cov(E^*))=\sum_{e\in \cov(E^*)}w(e)$. 
If $w(\cov(E^*))\ge \beta \cdot \mst$, then we output $(2-\frac{(1-\alpha)\cdot\beta}{2})\cdot\mst$ as an estimate of $\tsp$; otherwise we output $2\cdot\mst$.
We use the parameters $\alpha=0.715$ and $\beta=0.285$, so $(2-\frac{(1-\alpha)\cdot\beta}{2})\approx\frac{2}{2\alpha(1-\beta)}\approx 1.96$.

\paragraph{Proof of Correctness.} The correctness of the algorithm is guaranteed by the following two claims.

\begin{claim}
\label{clm: 2-pass tsp upper bound}
If $w(\cov(E^*))\ge \beta \cdot \mst$, then $\tsp\le (2-\frac{(1-\alpha)\cdot\beta}{2})\cdot\mst$.
\end{claim}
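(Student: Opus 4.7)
The plan is to reduce the claim to Lemma \ref{lem: cover_advantage} by proving that the set $E^*$ produced by the greedy second pass has large cover advantage on the MST $T$. Specifically, I will show that the greedy acceptance rule forces $w(E^*) \le \alpha\cdot w(\cov(E^*))$, so that
\[
\adv(E^*,T) \;=\; w(\cov(E^*)) - w(E^*) \;\ge\; (1-\alpha)\cdot w(\cov(E^*)) \;\ge\; (1-\alpha)\beta \cdot \mst,
\]
where the last inequality uses the hypothesis $w(\cov(E^*)) \ge \beta \cdot \mst$.

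To establish the first inequality above, I will order the edges added to $E_{\temp}$ according to their arrival order in the stream, say $e_1, e_2, \ldots, e_t$, and let $E^{(i)}_{\temp} = \{e_1,\ldots,e_{i-1}\}$ denote the state of $E_{\temp}$ just before $e_i$ is inserted. The acceptance rule guarantees $w(e_i) \le \alpha \cdot w(\cov(e_i)\setminus \cov(E^{(i)}_{\temp}))$. The key observation is that $\cov(E_{\temp})$ is monotonically non-decreasing, and the edges newly covered by inserting $e_i$ are exactly $\cov(e_i)\setminus \cov(E^{(i)}_{\temp})$, with disjoint contributions across $i$. Summing the acceptance inequalities therefore telescopes to
\[
w(E^*) \;=\; \sum_{i=1}^{t} w(e_i) \;\le\; \alpha \cdot \sum_{i=1}^{t} w\!\left(\cov(e_i)\setminus \cov(E^{(i)}_{\temp})\right) \;=\; \alpha \cdot w(\cov(E^*)),
\]
which is exactly what is needed.

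Having lower bounded $\adv(E^*,T)$, I finish by invoking Lemma \ref{lem: cover_advantage} with the singleton family $\tset = \{T\}$. Since every edge of $E^*$ has endpoints in $V(T) = V$, the set $E^*$ is eligible in the definition of $\adv(T)$, so $\adv(T) \ge \adv(E^*,T) \ge (1-\alpha)\beta \cdot \mst$. Lemma \ref{lem: cover_advantage} then yields
\[
\tsp \;\le\; 2\cdot \mst - \tfrac{1}{2}\cdot \adv(T) \;\le\; 2\cdot \mst - \tfrac{(1-\alpha)\beta}{2}\cdot \mst \;=\; \Bigl(2 - \tfrac{(1-\alpha)\beta}{2}\Bigr)\cdot \mst,
\]
as required. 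I do not expect a serious obstacle here: the only subtlety is the telescoping argument, which hinges on the monotonicity of $\cov(E_{\temp})$ and on interpreting $w(\cov(e_i)\setminus\cov(E^{(i)}_{\temp}))$ as the incremental coverage weight contributed by $e_i$; this is straightforward since all quantities are weights on tree edges that, once covered, remain covered.
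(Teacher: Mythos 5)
Your proof is correct and takes essentially the same approach as the paper. The telescoping argument you give for $w(E^*)\le \alpha\cdot w(\cov(E^*))$ is exactly the paper's Observation~\ref{obs: marginal contribution}, and the remainder is the random-subsampling construction; the only cosmetic difference is that you invoke Lemma~\ref{lem: cover_advantage} with the singleton family $\tset=\{T\}$ (which is valid since $V(T)=V$ makes every edge set eligible for $\adv(T)$), whereas the paper repeats that lemma's sampling argument inline for $E^*$.
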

\begin{proof}
Let $E'$ be the random subset of $E^*$ that includes each edge of $E^*$ independently with probability $1/2$. We will show that the expected total weight of all edges in graph $E(H_{T,E'})$ is at most $(2-\frac{(1-\alpha)\cdot\beta}{2}) \cdot \mst$, namely $\expect[w(E(H_{T,E'}))]\le (2-\frac{(1-\alpha)\cdot\beta}{2})\cdot  \mst$. Note that this implies that there exists a subset $E^{**}$ of $E^*$, such that $w(E(H_{T,E^{**}}))\le (2-\frac{(1-\alpha)\cdot\beta}{2})\cdot \mst$. Combined with \Cref{obs: displace_graph Eulerian}, this implies that there is an Eulerian tour of the same cost (using only edges of graph $H_{T,E^{**}}$). Therefore, there is a TSP-tour of at most the same cost, completing the proof of \Cref{clm: 2-pass tsp upper bound}.

We now show that $\expect[w(E(H_{T,E'}))]\le (2-\frac{(1-\alpha)\cdot\beta}{2})\cdot  \mst$. From the definition of graph $H_{T,E'}$, $E(H_{T,E'})=E'\cup E_{[T,E']}$. On one hand, from the definition of the random subset $E'$, $\expect[w(E')]= w(E^*)/2$.
On the other hand, for each edge $f\in \cov(E^*)$, with probability $1/2$ graph $H_{T,E'}$ contains $1$ copy of it, and with probability $1/2$ graph $H_{T,E'}$ contains $2$ copies of it. Therefore, $\expect[w(E_{[T,E']})]= 2\cdot w(E(T))-w(\cov(E^*))/2=2\cdot \mst-w(\cov(E^*))/2$. 
Altogether, $\expect[w(E(H_{T,E'}))]=2\cdot \mst-(w(\cov(E^*))-w(E^*))/2$.
We use the following observation.
\begin{observation}
\label{obs: marginal contribution}
$w(E^*)\le \alpha \cdot w(\cov(E^*))$.
\end{observation}
\begin{proof}
From the algorithm of constructing the set $E^*$, an edge $e$ is added to the set $E_{\temp}$ we are maintaining, iff $w(\cov(E_{\temp}\cup \set{e}))- w(\cov(E_{\temp})) \ge w(e)/\alpha$.
Therefore, $w(\cov(E^*))\ge w(E^*)/\alpha$.
\end{proof}
From \Cref{obs: marginal contribution},
\[\expect[w(E(H_{T,E'}))]
\le 2\cdot \mst-\frac{(1-\alpha)\cdot w(\cov(E^*))}2  \le \bigg(2-\frac{(1-\alpha)\cdot\beta}{2}\bigg)\cdot  \mst.\]
This concludes the proof of \Cref{clm: 2-pass tsp upper bound}.
\end{proof}

We next show that, if we do not find a sufficiently large cover, i.e., the value of $w(\cov(E^*))$ is not sufficiently large compared with $\mst$, then $\tsp$ must be bounded away from $\mst$.

\begin{claim}
\label{clm: 2-pass tsp lower bound}
If $w(\cov(E^*)) < \beta \cdot \mst$, then $\tsp\ge 2\alpha(1-\beta)\cdot\mst$.
\end{claim}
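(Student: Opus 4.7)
The plan is to combine the greedy stopping rule that defined $E^*$ with Lemma~\ref{lem: tour edge into two cover} applied to an optimal tour. The first step will be to upgrade the pointwise rejection criterion to a statement against the final set $E^*$: since $E_{\temp}$ only grows over the stream and $\cov(\cdot)$ is monotone under union, any edge $e$ that was rejected at its arrival in fact satisfies $w(e) > \alpha \cdot w(\cov(e) \setminus \cov(E^*))$. In particular, every tree edge is rejected (for $e \in E(T)$ one has $\cov(e)=\{e\}$, and the acceptance inequality $w(e) \le \alpha \cdot w(\cov(e) \setminus \cov(E_{\temp}))$ can never hold since $\alpha<1$), so $E^* \cap E(T) = \emptyset$.

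Next, set $T_{\mathrm{uncov}} := E(T) \setminus \cov(E^*)$; the hypothesis $w(\cov(E^*)) < \beta \cdot \mst$ gives $w(T_{\mathrm{uncov}}) > (1-\beta)\cdot \mst$. Let $\pi^*$ be an optimal TSP tour, so that $\tsp=w(\pi^*)$. I apply Lemma~\ref{lem: tour edge into two cover} with $T'=T$ to partition $E(\pi^*) = E_0 \cup E_1$ with $\cov(E_0) \supseteq E(T)$ and $\cov(E_1) \supseteq E(T)$. The goal is then to show $w(E_i) > \alpha(1-\beta)\cdot \mst$ for each $i\in\{0,1\}$, from which $\tsp = w(E_0)+w(E_1) > 2\alpha(1-\beta)\cdot\mst$ follows immediately.

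To bound $w(E_i)$, I decompose $E_i$ into three pieces: tree edges $E_i^T := E_i \cap E(T)$, ``accepted'' non-tree edges $E_i^* := E_i \cap E^*$, and ``rejected'' non-tree edges $E_i^c := E_i \setminus (E(T) \cup E^*)$. Because $\cov(E_i^*) \subseteq \cov(E^*)$ is disjoint from $T_{\mathrm{uncov}}$, the inclusion $T_{\mathrm{uncov}} \subseteq E(T) \subseteq \cov(E_i) = E_i^T \cup \cov(E_i^*) \cup \cov(E_i^c)$ forces $T_{\mathrm{uncov}} \subseteq E_i^T \cup \cov(E_i^c)$. Taking weights, and then applying the end-of-stream rejection bound to every $e \in E_i^c$, yields
\[
w(T_{\mathrm{uncov}}) \;\le\; w(E_i^T) \;+\; \sum_{e \in E_i^c} w\bigl(\cov(e) \setminus \cov(E^*)\bigr) \;<\; w(E_i^T) \;+\; w(E_i^c)/\alpha \;\le\; w(E_i)/\alpha,
\]
so $w(E_i) > \alpha \cdot w(T_{\mathrm{uncov}}) > \alpha(1-\beta)\cdot \mst$, completing the argument.

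No individual step is technically deep; the care lies in choosing the three-way decomposition of $E_i$ so that (i) the greedy inequality, originally stated against the time-of-arrival set $E_{\temp}^{(e)}$, transfers cleanly to a comparison against the final set $E^*$ via monotonicity of $\cov(\cdot)$, and (ii) the ``accepted but not in $T$'' portion of $E_i$ can be discarded because its cover sits inside $\cov(E^*)$ and therefore contributes nothing to $T_{\mathrm{uncov}}$. Once these two observations are in hand, the weighted inclusion $T_{\mathrm{uncov}} \subseteq E_i^T \cup \cov(E_i^c)$ does all the work.
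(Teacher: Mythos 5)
Your high-level plan is sound and arguably a more direct route than the paper's: you apply Lemma~\ref{lem: tour edge into two cover} to the optimal tour itself, splitting $E(\pi^*)=E_0\cup E_1$, and lower-bound each half, whereas the paper routes through the same lemma to get $\tsp\ge 2w(M)$ for a minimum-weight perfect matching $M$ on the odd-degree vertices of $T$ (with $\cov(M)=E(T)$) and then lower-bounds $w(M)$. Your three-way decomposition of $E_i$ into $E_i^T$, $E_i^*$, $E_i^c$, the inclusion $T_{\mathrm{uncov}}\subseteq E_i^T\cup\cov(E_i^c)$, and the monotonicity upgrade of the greedy criterion from the arrival-time set $E_{\temp}$ to the final set $E^*$ are all correct, as is the fact that $E^*\cap E(T)=\emptyset$.

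There is, however, a genuine gap in the step where you invoke the rejection bound for every $e\in E_i^c$. The algorithm operates on a \emph{graph} stream: only the edges of the underlying weighted graph $G$ that induces the metric appear in the stream, not all $\binom{n}{2}$ vertex pairs. The edges of the optimal tour $\pi^*$ are arbitrary vertex pairs, so elements of $E_i^c$ generally never appear in the stream and were never evaluated, let alone rejected, by the algorithm; the inequality $w(\cov(e)\setminus\cov(E^*))<w(e)/\alpha$ for $e\in E_i^c$ therefore does not follow from the greedy rule. The paper closes exactly this hole in Observation~\ref{obs: small marginal contribution}: for a non-stream edge $e$ it takes a shortest path $Q_e$ in $G$ between the endpoints of $e$, uses that every edge of $Q_e$ \emph{does} appear in the stream, that $w(Q_e)=w(e)$, and that $\cov(e)\subseteq\cov(E(Q_e))$, and applies the rejection bound edge-by-edge along $Q_e$ and sums. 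The same patch plugs into your argument verbatim: replace the direct bound on each $e\in E_i^c$ by the bound obtained from decomposing $e$ along $Q_e$, and the rest of your proof goes through unchanged.
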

\begin{proof}
Recall that set $V_1(T)$ contains all vertices with odd degree in $T$. Let $M$ be a minimum-cost perfect matching on $V_{1}(T)$. 
From \Cref{lem: tour edge into two cover}, $\tsp\ge 2\cdot w(M)$.

We use the following observations.
%
%
%
%
\begin{observation}
\label{obs: odd matching covers the whole tree}
$\cov(M)=E(T)$.
\end{observation}
\begin{proof}
Let $T'$ be the tree obtained from $T$ by suppressing all degree-$2$ vertices, so $V(T')$ is exactly the set of special vertices of $T$, $M$ can be viewed as a set of edges (not belonging to $T'$) connecting pairs of vertices in $V(T')$, and it suffices to show that each edge of $E(T')$ is covered by some edge of $M$.

Consider an edge $f'\in E(T')$ where $f'=(u,v)$. Let $(S_u,S_v)$ be cut in $T$ that contains a single edge $f'$.
Clearly, both $T[S_v]$ and $T[S_u]$ have an even number of odd-degree vertices. Therefore, the number of vertices of $S_u$ with an odd degree in $T[S_u]\cup f'$ is odd, and the same holds for $S_v$, and it follows that some edge of $M$ connects a vertex of $S_v$ to a vertex of $S_u$, implying that edge $f'$ is covered by $M$.
\end{proof}

%


\begin{observation}
\label{obs: small marginal contribution}
For each $e\in M$, $w(\cov(e)\setminus \cov(E^*))< w(e)/\alpha$.
\end{observation}
\begin{proof}
%
We denote by $Q_e$ the shortest-path in $G$ connecting the endpoints of $e$ (where $G$ is the graph underlying the stream). Since $Q_e$ is a subgraph of $G$, all edges of $Q_e$ will appear in the graph stream of $G$.
Note that $w(Q_e)=w(e)$ and $\cov(e)\subseteq \cov(E(Q_e))$.

We will show that, for every edge $e'\in E(Q_e)$, $w(\cov(e')\setminus \cov(E^*))< w(e')/\alpha$. Note that the observation follows from this assertion, as 
$$w(\cov(e)\setminus \cov(E^*))\le w(\cov(E(Q_e))\setminus \cov(E^*))\le \sum_{e'\in E(Q_e)}w(\cov(e')\setminus \cov(E^*))<\frac{w(Q_e)}{\alpha}=\frac{w(e)}{\alpha}.$$

Consider now any edge $e'\in E(Q_e)$, and assume for contradiction that $w(\cov(e')\setminus \cov(E^*))\ge  w(e')/\alpha$.
Note that set $E_{\temp}$ only grows over the course of the algorithm that computes set $E^*$, and so does the set $\cov(E_{\temp})$. Therefore, when $e'$ arrives in the stream, 
$w(\cov(e')\setminus \cov(E_{\temp}))\ge w(e')/\alpha$ must hold. Then according to the algorithm, the edge $e'$ should be added to $E_{\temp}$ right away, which means that edge $e'$ will eventually belong to $E^*$, leading to $\cov(e')\subseteq \cov(E^*)$ and $w(\cov(e')\setminus \cov(E^*))=0$, a contradiction to the assumption that $w(\cov(e')\setminus \cov(E^*))\ge w(e')/\alpha$.
\end{proof}

From \Cref{obs: odd matching covers the whole tree} and \Cref{obs: small marginal contribution}, we get that
\[
(1-\beta)\cdot\mst\le \mst-w(\cov(E^*))=w(\cov(M)\setminus \cov(E^*)) \le \sum_{e\in M}w(\cov(e)\setminus \cov(E^*))< w(M)/\alpha.
\]
Therefore, $w(M)\ge \alpha(1-\beta)\cdot\mst$. Since $\tsp\ge 2\cdot w(M)$, we conclude that
$\tsp\ge 2\alpha(1-\beta)\cdot\mst$.
\end{proof}

\section{Space Lower Bound for One-Pass Exact TSP Estimation in Metric Streams}
\label{sec: one-pass exact TSP}

In this section we show that any one-pass streaming algorithm that given a metric stream for some metric $w$ on a set of $n$ vertices and computes the value of $\tsp(w)$ exactly has to use $\Omega(n^2)$ space. At a high level, our proof considers the corresponding $2$-player communication game, and show that the one-way communication complexity of the game is $\Omega(n^2)$ by a reduction from the problem $\ind$.

\paragraph{Two-Player TSP Computing Problem ($\tsp_2$):} This is a 2-player one-way communication problem, in which Alice is given partial metric $\bar w_A$ and Bob is given a partial metric $\bar w_B$, both on a set $V$ of vertices that is known to both players, with the promise that $\bar w_A$ and $\bar w_B$ are complimentary. 
The goal of the players is to compute the value of $\tsp(w)$ where $w=\bar w_A\cup \bar w_B$.

The main result of this section is the following theorem.

\begin{theorem}
\label{thm: tsp computing}
Any one-way protocol $\pi$ for the $\tsp_2$ problem has $\cc(\pi)=\Omega(n^2)$.
\end{theorem}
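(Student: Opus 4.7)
The plan is to establish the $\Omega(n^2)$ lower bound by reduction from a hard one-way communication problem on $\Theta(n^2)$ bits, analogously to how Theorem~\ref{thm: main_1-pass tsp lower bound} reduced $\ind_{p^2}$ to the $(2-\eps)$-approximation version. The starting point will be the same gadget construction, but specialized to the exact-computation regime by taking the parameter $r$ to be a small constant so that $n = 2 + 2pr = \Theta(p)$ and the hard-instance space has size $\Theta(n^2)$. Interpret Alice's hard-problem input as a matrix $X \in \{0,1\}^{p \times p}$ and Bob's input as a pair $(a^*,b^*) \in [p] \times [p]$. By Observation~\ref{obs: yes and no case}, the underlying graph $G_{X, a^*, b^*}$ satisfies $\tsp(w) = 2\cdot((n+2r-2)+(2r+1)L)$ when $X_{a^*,b^*}=0$ and $\tsp(w) \le 2n-6+(2r+2)L$ when $X_{a^*,b^*}=1$; these two values are distinct whenever $L \ge 1$, so an \emph{exact} TSP algorithm suffices to recover $X_{a^*,b^*}$.

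The main step, and the main obstacle, will be to realize the induced metric on $V$ as a genuinely complementary pair of partial metrics $(\bar w_A, \bar w_B)$ that Alice can build from $X$ alone and Bob can build from $(a^*, b^*)$ alone. The subtlety is that shortest-path distances in $G_{X, a^*, b^*}$ can depend \emph{jointly} on both inputs: for instance, $d(u_{a^*, t}, u'_{j, s})$ in the true shortest-path metric equals $L+2$ or $2L+3$ depending on whether $X_{a^*, j}=1$, and neither player alone can evaluate this. My plan to bypass this is to avoid shortest-path metrics entirely and instead define the distance function directly on $V$ by a table of rules, so that each pair's distance is a function of exactly one player's input. Concretely, I would assign all ``tree-type'' pairs (those involving $u_0$ or $u'_0$, and all pairs within $U$ or within $U'$) to $\bar w_B$, with distances equal to the spoke-plus-center path length in $G$ (which depends only on $(a^*, b^*)$); and all cross pairs $(u_{i,t}, u'_{j,s})$ to $\bar w_A$, with weight $L+2$ if $X_{i,j}=1$ and weight $2L+3$ otherwise (a fixed ``default'' independent of whether $i,j$ happen to be the special indices).

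The remaining work is to verify two claims about this directly-defined metric. First, triangle inequality holds for the declared weights, which I would check case-by-case using the identity $2L+3 \le (L+2) + (L+2)$ and the fact that all other triangles are either within the tree-type part (where they reduce to paths in a weighted tree) or have enough slack. Second, the TSP gap of Observation~\ref{obs: yes and no case} carries over unchanged, because the only place the analysis uses a cross distance is for the ``shortcut'' tour that exploits the matching edges $E_{a^*, b^*}$ precisely when $X_{a^*,b^*}=1$; in that case Alice has already declared $d(u_{a^*,t}, u'_{b^*,s}) = L+2$ by the matching-edge rule, which is exactly what the shortcut tour needs. Meanwhile, when $X_{a^*,b^*}=0$, the directly-defined cross distances are all at least the corresponding shortest-path distances in $G_{X, a^*, b^*}$, so the lower-bound side of the TSP gap (that no tour can beat $2\cdot\mst$) is preserved by the same ``double-cover'' argument as in Lemma~\ref{lem: cover_advantage lower bound}.

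Finally, with the reduction in place, any one-way protocol $\pi$ that computes $\tsp(w)$ exactly on instances in the image of this reduction yields a one-way protocol for the hard base problem with the same cost. Combined with the $\Omega(n^2)$ lower bound on the base problem, this gives $\cc(\pi) = \Omega(n^2)$ for $\tsp_2$. The expected obstacle is the triangle-inequality and TSP-gap verification for the direct metric definition in the ``conflict'' rows/columns at index $a^*$ or $b^*$; I expect these to go through with the default $2L+3$ value by a short case analysis, but it is the only place in the proof where the metric-stream setting genuinely diverges from the graph-stream argument underlying Theorem~\ref{thm: main_1-pass tsp lower bound}.
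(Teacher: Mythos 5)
Your high-level plan---reduce from a one-way problem of size $\Theta(n^2)$ and use exact TSP to read off a single bit---is the right one, and it matches the paper's strategy. But the specific gadget you chose does not carry over, and the place you flagged as needing verification (``triangle inequality $\dots$ for the direct metric definition in the `conflict' rows/columns'') is exactly where it breaks, though not for the reason you anticipated: the violation occurs even for non-conflict indices. Concretely, take $i \ne a^*$, $j \ne b^*$ with $X_{i,j} = 0$. Bob declares tree-type distances $d(u_0, u_{i,t}) = 1$ and $d(u_0, u'_{j,s}) = d(u_0,u'_0) + d(u'_0,u'_{j,s}) = L + 1$, so the path through $u_0$ has length $L+2$. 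But you want Alice to declare the cross distance $d(u_{i,t}, u'_{j,s}) = 2L + 3 > L+2$, violating the triangle inequality at $u_0$. The same obstruction appears through any vertex of $U'$: $d(u_{i,t}, u'_{j,s}) \le d(u_{i,t}, u'_{j',s'}) + d(u'_{j',s'}, u'_{j,s}) = (L+2) + 2$ when $X_{i,j'}=1$, again smaller than $2L+3$ once $L \ge 2$. There is no default value that escapes this: the light tree edges (weight $1$ spokes plus the center edge) pin every non-conflict cross distance down to at most $L+2$, so those distances are forced independent of $X$ and cannot encode the matrix. This is not a detail that a short case analysis patches; the $G_{X,a^*,b^*}$ gadget fundamentally relies on the graph-stream setting where non-edges are implicit, and does not survive the transition to a complete metric.

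The paper sidesteps this entirely with a different and much lighter gadget: a $(1,2)$-metric on $V = U \cup U'$ with $|U| = k+1$, $|U'| = k$, $n = 2k+1$. Alice holds a symmetric $0/1$ matrix and declares all within-$U$ distances as $X_{i,j}+1$; Bob holds a pair $(i^*,j^*)$ and declares all remaining distances, with weight-$1$ edges forming an alternating $U$-$U'$ Hamiltonian \emph{path} from $u_{i^*}$ to $u_{j^*}$ and all other Bob-side distances equal to $2$. Because $|U| = |U'| + 1$, any Hamiltonian cycle of weight-$1$ edges must use exactly one within-$U$ edge, and the weight-$1$ cross edges form a single path ending at $u_{i^*}, u_{j^*}$, so $\tsp(w) = n$ iff $X_{i^*,j^*} = 0$ and $\tsp(w) = n+1$ otherwise. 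Since every declared distance is $1$ or $2$, the triangle inequality holds automatically, there are no ``conflict rows'' to analyze, and the reduction to $\ind_{\binom{k+1}{2}}$ is immediate. If you want to salvage your line of attack, the lesson is to start from a metric in which all distances live in a bounded range (so the triangle inequality is free) rather than trying to force a long-spoke graph gadget into the metric-stream world.
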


The remainder of this section is dedicated to the proof of \Cref{thm: tsp computing}.

We now show a reduction from the $\ind$ problem to the $\tsp_2$ problem. Assume that $n=2k+1$ is an odd integer. 
We denote the vertex set by $V=U\cup U'$, where $|U|=k+1$ and $|U'|=k$, and in particular, we denote 
$U=\set{u_1,\ldots,u_{k+1}}$ and $U'=\set{u'_1,\ldots,u'_{k}}$.

Let $X$ be a $(k+1)\times (k+1)$ $0/1$-matrix, such that $X_{i,i}=0$ for all $i\in [k+1]$, and $X_{i,j}=X_{j,i}$ for all pairs $i,j\in [k+1]$.
For such a matrix $X$, we define a partial metric $\bar w_A$ as follows. For each pair $u_i,u_j$ of vertices in $U$, $\bar w(u_i,u_j)=X_{i,j}+1$; for each pair $u'_i,u'_j$ of vertices in $U'$, $\bar w(u_i,u_j)=*$ and for each pair $u\in U, u'\in U'$, $\bar w(u,u')=*$. Let $i^*, j^*$ be a pair of distinct indices in $[k+1]$. We define a partial metric $\bar w_B$ as follows. Let $P_{i^*, j^*}$ be any path that has endpoints $u_{i^*}, u_{j^*}$ and alternates between vertices in sets $U$ and vertices in set $U'$, i.e., $P_{i^*, j^*}=(u_{q_1}=u_{i^*},u'_{p_1},u_{q_2},u'_{p_2},\ldots,u_{q_k},u'_{p_k},u_{q_{k+1}}=u_{j^*})$, where $(p_1,p_2,\ldots, p_k)$ is a permutation of $[k]$ and $(q_1,p_2,\ldots, q_{k+1})$ is a permutation of $[k+1]$. Now for each pair $u'_i,u'_j\in U$, we set $\bar w_B(u'_i,u'_j)=2$; for each pair $u_i,u_j\in U$, we set $\bar w_B(u_i,u_j)=*$, and for each pair $u_i\in U, u'_j\in U'$, we set $\bar w_B(u_i,u_j)=1$ iff edge $(u_i, u'_j)$ belongs to path $P_{i^*,j^*}$, otherwise we set we set $\bar w_B(u_i,u_j)=2$.
It is easy to verify that for any matrix $X$ and for any index pair $(i^*,j^*)$, $\bar w_A, \bar w_B$ are valid partial metrics and are complementary.

We now describe the reduction. 
Let $\pi$ be a one-way protocol for the $\tsp_2$ problem. We construct a one-way protocol $\pi_{\ind}$ for the $\ind$ problem as follows.
Assume Alice and Bob are given an instance $(X^A, X^B)$ for the $\ind$ problem, in which $X^A$ is a $0/1$ vector of length $\binom{k+1}{2}$ and $X^B$ is an index in $[\binom{k+1}{2}]$. Note that it is equivalent Alice is given a symmetric matrix $X$ with all diagonal entries $0$ and Bob is given a pair $(i^*,j^*)$ of distinct indices of $[k+1]$. We then let Alice construct the partial metric $\bar w_A$ using matrix $X$ and Bob construct the partial metric $\bar w_B$ using the index pair $(i^*, j^*)$ as described above, and then let them run the protocol $\pi$ on the pair $\bar w_A, \bar w_B$ of partial metrics. If the output of $\pi$ is $2k+1$, then we return $0$ as the output of $\pi_{\ind}$; if the output of $\pi$ is $2k+2$, then we return $1$ as the output of $\pi_{\ind}$.

We now show that, if $\pi$ is a one-way protocol for the $\tsp_2$ problem, then $\pi_{\ind}$ is a one-way protocol for the $\ind$ problem as follows. Note that, from the construction of partial metrics $\bar w_A, \bar w_B$, in the metric $w=\bar w_A\cup \bar w_B$, there is a path $P_{i^*,j^*}$ of weight-$1$ edges, so if $w(u_{i^*},u_{j^*})=1$, then $w$ contains a Hamiltonian cycle of weight-$1$ edges. and therefore $\tsp(w)=2k+1$. On the other hand, since $w(u_{i^*},u_{j^*})\le 2$, $\tsp(w)\le 2k+2$, and it is easy to verify that $\tsp(w)= 2k+1$ implies $w(u_{i^*},u_{j^*})=1$.
Therefore, $\tsp(w)= 2k+1$ iff $X_{i^*,j^*}=0$, and $\tsp(w)= 2k+1$ iff $X_{i^*,j^*}=1$, so $\pi_{\ind}$ is a one-way protocol for the $\ind$ problem. Now \Cref{thm: tsp computing} follows from the classic communication complexity lower bound on one-way protocols for the $\ind$ problem.

\newpage
\part{Query Algorithm I}

In this part we present our algorithm in the query model, that, given a distance oracle to a metric $w$ with the promise that the complete weighted graph with edge weights given by $w$ (denoted by $G_w$) contains a minimum spanning tree consisting of only weight-$1$ edges, estimates the value of $\tsp(w)$ to within a factor of $(2-\eps_0)$, for some universal constant $\eps_0>0$, by performing $\tilde O(n^{1.5})$ queries to the oracle, thus establishing \Cref{thm: main G_1 connected}.
We start with some additional preliminaries used in this part, then introduce some crucial subroutines, and finally present the query algorithm and its analysis.

Throughout this part, we denote by $G_1$ the subgraph of $G_w$ induced by all weight-$1$ edges.

\section{Preliminaries}
\label{sec:prelim}



Let $G$ be an unweighted graph.
For a pair $v,v'$ of vertices in $G$, we denote by $\dist_G(v,v')$ the length (the number of edges) of the shortest path in $G$ connecting $v$ to $v'$.

Let $T$ be a tree.
We say that a subgraph $P$ of $T$ is an \emph{induced subpath} of $T$, iff $P$ is a path, and all internal vertices of $P$ have degree $2$ in $T$. We say that $P$ is a \emph{maximal induced subpath} if it is not a proper subpath of any other induced subpath of $T$. Therefore, for any pair $P,P'$ of maximal induced subpaths  of $T$, $P$ and $P'$ are internally vertex-disjoint. The following observation is immediate.

\begin{observation}
	\label{obs:leaves_induced_paths}
	The number of maximal induced subpaths of a tree is at most twice the number of leaves in the tree.
\end{observation}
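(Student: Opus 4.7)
The plan is a short double-counting argument via endpoints. First I would argue that every endpoint of a maximal induced subpath must be a \emph{special} vertex of $T$ (i.e.\ a vertex $v$ with $\deg_T(v)\ne 2$): if an endpoint had degree $2$, maximality would force its second incident edge to already lie in the path, making the would-be endpoint an internal vertex, a contradiction. Conversely, at each special vertex $v$ every one of the $\deg_T(v)$ incident edges initiates exactly one maximal induced subpath (walk along degree-$2$ vertices until you reach the next special vertex). Summing incidences and dividing by two, the number $M$ of maximal induced subpaths satisfies
\[
2M \;=\; \sum_{v:\,\deg_T(v)\ne 2}\deg_T(v).
\]

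Next I would plug in the degree-sum formula for a tree. Write $L$ for the number of leaves, $D_2$ for the number of degree-$2$ vertices, and $B$ for the number of branching vertices (degree $\ge 3$), so $|V(T)|=L+D_2+B$ and $\sum_v\deg_T(v)=2(L+D_2+B-1)$. Subtracting the contribution $L$ of the leaves and $2D_2$ of the degree-$2$ vertices gives $\sum_{v:\,\deg_T(v)\ge 3}\deg_T(v)=L+2B-2$, and hence
\[
2M \;=\; L+(L+2B-2)\;=\;2L+2B-2,\qquad M\;=\;L+B-1.
\]

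Finally I would invoke the standard tree inequality $B\le L-2$, which follows from $L+2D_2+3B\le \sum_v\deg_T(v)=2(L+D_2+B-1)$. This gives $M\le 2L-3<2L$ whenever $L\ge 2$; the degenerate cases $L\in\{0,1\}$ (in which $T$ is a single vertex and $M=0$) are trivial. I do not foresee any real obstacle; the one point to verify carefully is the endpoint characterization, which is immediate from the definitions of ``maximal'' and ``induced subpath.''
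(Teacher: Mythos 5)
The paper states this observation without proof, labeling it immediate, so there is no author argument to compare against. Your double-counting argument is correct and in fact establishes the exact count $M = L + B - 1$ (for trees with $L\ge 2$ leaves and $B$ branching vertices), which together with the standard bound $B\le L-2$ gives the strictly stronger $M\le 2L-3$. The key steps all check out: an endpoint of degree $2$ could be absorbed into a longer induced subpath, so endpoints must be special; each special vertex $v$ spawns exactly $\deg_T(v)$ maximal induced subpaths (one per incident edge, by walking through degree-$2$ vertices), and since a tree is acyclic no subpath gets double-counted at a single vertex, giving $2M = \sum_{\deg_T(v)\neq 2}\deg_T(v)$; the degree-sum algebra and the degenerate cases $L\in\{0,1\}$ are handled properly.

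One remark on economy: since the target bound is just $M\le 2L$, you could stop once you reach $2M = \sum_{\deg_T(v)\neq 2}\deg_T(v) = L + \sum_{\deg_T(v)\ge 3}\deg_T(v) = L + (L+2B-2) = 2L+2B-2$ and then observe $B\le L-2$ directly — but your derivation of $B\le L-2$ from the same degree-sum identity is clean and self-contained, so this is a matter of taste rather than a flaw.
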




\paragraph{E-blocks and e-block tree.}
We say that an edge $e$ of a connected graph $G$ is a \emph{bridge}, iff $G\setminus e$ is not connected.
A graph is \emph{$2$-edge-connected} if it does not contain any bridges. Let $G$ be a graph and let $H$ be an induced subgraph of $G$, we say that $H$ is a \emph{$2$-edge-connected block} (or an \emph{e-block} for brevity), iff $H$ is a maximal $2$-edge-connected induced subgraph of $G$ (namely, there is no other $2$-edge-connected induced subgraph $H'\subseteq G$, such that $H\subsetneq H'$).
We denote by $\bset(G)$ the set of e-blocks of $G$. Clearly, e-blocks in $\bset(G)$ are vertex-disjoint. 

The \emph{e-block tree} $\tset_G$ of a connected graph $G$ is defined to be the graph obtained from $G$ by contracting each e-block $B$ of  $\bset(G)$ into a vertex $v_B$.
It is easy to see that, if $G$ is connected, then $\tset_G$ is a tree.
We call such a vertex $v_B$ that corresponds to an e-block in $G$ an \emph{e-block vertex}.

We use the following results in previous work and easy propositions.

\begin{lemma}[Theorem 8 in~\cite{chen2020sublinear}]
\label{thm: n^{1.5}-query-2-approx-of-matching-size}
There is a randomized algorithm, that, given any simple graph $G$ on $n$ vertices and any parameter $\epsilon>0$, with probability $1-e^{-\Omega(n)}$, estimates the size of some maximal matching of $G$ to within an additive error of $\epsilon n$, by performing $\tilde O(n^{1.5}/\epsilon^2)$ pair queries.  
\end{lemma}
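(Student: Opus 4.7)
The plan is to combine random vertex sampling with a query-efficient local ``is $v$ matched?'' oracle for a canonical maximal matching. Fix a uniformly random permutation $\sigma$ of $V(G)$ and let $M_\sigma$ denote the greedy maximal matching obtained by scanning vertices in $\sigma$-order and, when the scanned vertex $v$ is still unmatched, pairing it with its $\sigma$-smallest unmatched neighbor (if any). Then $|M_\sigma|$ is a maximal matching size. I would sample $k=\Theta(n/\eps^2)$ vertices $v_1,\ldots,v_k$ uniformly at random (with repetition), call an oracle $\mathcal{O}(v_i)\in\set{0,1}$ on each to decide whether $v_i\in V(M_\sigma)$, and return $(n/2k)\sum_i\mathcal{O}(v_i)$; a standard Chernoff bound gives additive error at most $\eps n$ with probability $1-e^{-\Omega(n)}$.

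The core step is the oracle. Following the Nguyen--Onak and Yoshida--Yamamoto--Ito recursion, $\mathcal{O}(v)$ examines neighbors $u$ of $v$ in $\sigma$-order and, for each such $u$ with $\sigma(u)<\sigma(v)$, recursively determines whether $u$ is matched in the restriction of $M_\sigma$ to edges of rank smaller than $(u,v)$; if $u$ is unmatched at that time, then $(u,v)\in M_\sigma$ and the oracle returns ``matched''. The subtlety is that with only pair queries, enumerating the neighbors of any touched vertex would naively cost $\Theta(n)$. To avoid this, I would interleave the recursion with a degree-thresholding step at threshold $\tau=\sqrt{n}$: for each touched vertex $v$, first estimate $\deg(v)$ up to a constant factor using $\tilde O(\sqrt{n})$ random pair queries; if the estimate exceeds $\tau$, argue that with overwhelming probability one of the $\Theta(\sqrt{n})$ $\sigma$-smallest neighbors of $v$ is already matched by the time $\sigma(v)$ is processed, so the oracle can safely return ``matched'' without any further exploration; if the estimate is below $\tau$, read all at most $\tau$ neighbors via $O(n)$ pair queries and recurse with branching factor $\le\tau$.

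Under the classical analysis of greedy matching oracles run on a uniformly random ordering, the expected exploration tree has size polynomial in the effective average degree of the vertices actually touched, so the expected per-call cost of the thresholded oracle is $\tilde O(\sqrt{n})$. Multiplying by $k=\Theta(n/\eps^2)$ samples and absorbing polylogarithmic factors yields the claimed $\tilde O(n^{1.5}/\eps^2)$ total query budget, and restricting the sampled estimator to an event of the form ``every oracle call succeeds'' contributes only an $e^{-\Omega(n)}$ additive failure probability via a union bound. The main obstacle is handling the high-degree branches without ever listing their neighbors explicitly: the oracle must certify matchedness with failure probability exponentially small in $n$, so that the union bound across all recursive branches and all sampled vertices still leaves the overall success probability at $1-e^{-\Omega(n)}$. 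The required concentration comes from the randomness of $\sigma$, which injects enough independence between the ranks of the neighbors of any fixed high-degree vertex; carrying out this argument, together with the amortization over low-degree branches, is exactly the technical content of~\cite{chen2020sublinear} and is what the lemma statement invokes.
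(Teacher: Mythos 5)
Your high-level framework --- random-permutation greedy matching, membership queries for sampled vertices answered by a Nguyen--Onak/Yoshida-style local oracle, and a $\sqrt n$ degree threshold to keep each oracle call within budget in the pair-query model --- is the right family of ideas and is in the spirit of how one adapts sublinear matching estimation from the adjacency-list model to the adjacency-matrix model. But the specific shortcut you propose for high-degree vertices is not sound, and it is precisely the step where the argument has to do the real work.

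You claim that if $\deg(v) > \tau = \sqrt{n}$, then $v$ is matched in $M_\sigma$ with overwhelming probability, so the oracle may return ``matched'' without any further exploration. This is false in general, and the additive error it introduces can be $\Theta(n)$ rather than $\eps n$. Consider the bipartite graph with parts $A,B$ of size $n/2$ each, in which every $a \in A$ is adjacent to a fixed set of $10\sqrt n$ vertices in $B$ and there are no other edges. Every $a \in A$ has degree $10\sqrt n > \tau$, so your oracle labels all of $A$ as matched; yet every maximal matching in this graph has size at most $10\sqrt n$, so all but $O(\sqrt n)$ of the vertices in $A$ are unmatched, in every maximal matching and for every permutation $\sigma$. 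Your estimator would return roughly $n/4$ when the truth is $O(\sqrt n)$. This is not a low-probability failure over $\sigma$ --- it is systematic. (Even in graphs where the heuristic is ``morally'' true, the probability that a single fixed high-degree vertex is unmatched is at best of order $e^{-\Theta(\sqrt n)}$, not $e^{-\Omega(n)}$, so the union bound over the $\Theta(n/\eps^2)$ samples and their recursion trees would not deliver the stated $1-e^{-\Omega(n)}$ guarantee either.) Handling high-degree vertices in the pair-query model without overcounting the matching size is the genuine technical content of the cited theorem, and the shortcut you sketch does not supply it; a correct proof must either explore high-degree vertices more carefully, or use an estimator whose target maximal matching is chosen so that the high-degree contribution can be controlled.
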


\begin{lemma} [Corollary of Lemma~2.11 in \cite{chen2020sublinear}] 
\label{lem:matching-TSP}
Let $H$ be a connected graph on $n$ vertices. If $H$ contains a matching $M$ of size $\alpha n$ such that at most $\beta n$ edges of $M$ are bridges of $H$, then the graphic-TSP cost of $H$ is at most  $(2-\frac{2}{3}(\alpha-\beta))n$.
\end{lemma}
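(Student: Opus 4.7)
The plan is to exhibit a connected Eulerian multigraph $H'$ spanning $V(H)$ whose total edge-weight is at most $(2 - \tfrac{2}{3}(\alpha - \beta))n$; shortcutting an Euler tour of $H'$ then yields a Hamiltonian tour (hence graphic-TSP tour) of at most the same cost by the triangle inequality for shortest-path distances. All the saving relative to the trivial $2(n-1)$ bound obtained by doubling a spanning tree will come from the non-bridge matching edges.

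First I would set $M' \subseteq M$ to be the sub-matching consisting of those edges of $M$ that are \emph{not} bridges of $H$. By hypothesis, $|M'| \ge (\alpha - \beta) n$. The key structural observation is that for every $e = (u,v) \in M'$, the endpoints $u,v$ lie in a common $2$-edge-connected block (e-block) of $H$, so there exist two edge-disjoint $u$-$v$ paths in $H$. Thus every edge of $M'$ sits inside the e-block tree of $H$ rather than crossing a bridge, and hence the non-bridge matching edges can be handled independently inside their respective e-blocks.

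Next I would apply Lemma 2.11 of~\cite{chen2020sublinear}, whose role is precisely to convert a matching of non-bridge edges of size $\gamma n$ inside a connected graph into a spanning connected Eulerian multigraph of cost at most $(2 - \tfrac{2}{3}\gamma)n$; setting $\gamma = \alpha - \beta$ then gives the required bound $(2 - \tfrac{2}{3}(\alpha - \beta))n$. Concretely, I would take a spanning tree $T$ of $H$ (cost $n-1$) and for each $e \in M'$ use the alternative $u$-$v$ path guaranteed by $e$ being a non-bridge to locally replace a doubled segment of $2T$ by a single copy of that segment plus the matching edge itself, arguing via a $T$-join / parity correction (in the spirit of Mömke--Svensson) that an unbiased random choice over each $e \in M'$ yields an Eulerian multigraph whose \emph{expected} cost is $2(n-1) - \tfrac{2}{3}|M'|$. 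Derandomizing (or simply invoking the existence of a best outcome) gives the claimed deterministic bound.

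The main obstacle I anticipate is the bookkeeping that ensures the local exchanges around different edges of $M'$ compose into a single globally Eulerian and connected multigraph: one must certify that (i) every vertex retains even degree, and (ii) the resulting multigraph remains connected even after simultaneously exchanging at many non-bridge matching edges. This is precisely the parity / connectivity accounting done inside the proof of Lemma 2.11 of~\cite{chen2020sublinear}, and the reason each non-bridge matching edge only saves $2/3$ rather than a full $1$ on the cost: in the worst case a third of the cost must be spent on parity correction. Once Lemma 2.11 is invoked as a black box on the sub-matching $M'$, the corollary statement follows immediately.
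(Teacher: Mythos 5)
Your proposal is correct and matches the paper's intent: the lemma is stated as a corollary of Lemma 2.11 of \cite{chen2020sublinear}, and the derivation is exactly the one you give — restrict $M$ to the sub-matching $M'$ of non-bridge edges, note $|M'|\ge(\alpha-\beta)n$, and apply the cited lemma with $\gamma=\alpha-\beta$ (monotonicity in $\gamma$ handles the case $|M'|>(\alpha-\beta)n$). The paragraph about T-joins and parity correction describes the internals of Lemma 2.11 of \cite{chen2020sublinear} rather than the corollary step itself, but it does not affect the correctness of the reduction.
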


\begin{proposition}\label{lem: naive bounds}
If $G_1$ is connected, then $n\le \emph{\opttsp}(w)\le 2n-2$. 
\end{proposition}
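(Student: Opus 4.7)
The plan is to establish the two bounds separately, using only the two facts that (i) $w$ takes positive integer values, and (ii) $G_1$, the subgraph of $G_w$ on the weight-$1$ edges, is connected (equivalently, $G_w$ admits an MST supported on weight-$1$ edges).

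For the lower bound $\opttsp(w)\ge n$, I would observe that any TSP tour on $V$ is a Hamiltonian cycle in $G_w$ consisting of exactly $n$ edges, and every edge of $G_w$ has weight at least $1$ since all entries of $w$ are positive integers. Summing the weights along the tour immediately yields $\opttsp(w)\ge n$.

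For the upper bound $\opttsp(w)\le 2n-2$, I would use the standard MST-doubling argument. Since $G_1$ is connected, it contains a spanning tree $T$ on $V$ with exactly $n-1$ edges, each of weight $1$, so $w(T)=n-1$. Doubling every edge of $T$ gives a connected Eulerian multigraph of total weight $2(n-1)$, which admits an Eulerian circuit traversing every vertex at least once. Shortcutting this circuit to skip repeated visits yields a Hamiltonian cycle on $V$ whose cost, by the triangle inequality satisfied by $w$, is at most $2(n-1)=2n-2$. Hence $\opttsp(w)\le 2n-2$.

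No step here is expected to be an obstacle; both bounds are direct consequences of the metric and connectivity assumptions, and this proposition is used later only as a normalization statement sandwiching $\opttsp(w)$ in a range of width $n-2$.
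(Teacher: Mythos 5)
Your proof is correct and follows essentially the same route as the paper: the lower bound from the fact that each of the $n$ tour edges has weight at least $1$, and the upper bound by doubling a weight-$(n-1)$ spanning tree of $G_1$ and shortcutting the Euler tour. The paper's version is slightly more terse but uses the identical argument for both bounds.
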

\begin{proof}
On one hand, let $(v_1,\ldots,v_{n},v_1)$ be the optimal TSP-tour that traverses all vertices of $V$. Since $w(v_i,v_{i+1})\ge 1$ for all $1\le i\le n$, the cost of the tour is at least $n$. On the other hand, let $T$ be a spanning tree of $G_1$ and consider the Euler tour that traverses each edge of $T$ exactly twice. It is easy to see that such a tour can be easily shortcut into a TSP-tour that traverses all vertices of $V$, with at most the same cost $2n-2$.
\end{proof}

\begin{proposition}
\label{lem: G_1_matching_TSP_lower_bound}
Let $m$ be the size of a maximum matching in $G_1$, then $ \emph{\opttsp}(w)\ge 2n-2m$.
\end{proposition}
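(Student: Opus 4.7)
The plan is to bound $\opttsp(w)$ from below by counting weight-$1$ edges in an optimal tour, and then to convert such a collection of edges into an honest matching of $G_1$.

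First I would fix an optimal TSP tour $\pi$; since $\pi$ is a Hamiltonian cycle it uses exactly $n$ edges. Let $k$ denote the number of edges of $\pi$ lying in $G_1$, i.e.\ the edges of $\pi$ with weight $1$. Because all entries of $w$ are positive integers, every edge of $\pi$ that is not a weight-$1$ edge has weight at least $2$, so
\[
\opttsp(w) \;=\; \cost(\pi) \;\ge\; k\cdot 1 + (n-k)\cdot 2 \;=\; 2n-k.
\]
Hence it suffices to prove $k \le 2m$, and the whole proposition reduces to controlling how many weight-$1$ edges the optimal tour can contain.

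For the second step, I would look at the subgraph $H \subseteq G_1$ formed by the weight-$1$ edges of $\pi$. Since $H$ is a subset of a Hamiltonian cycle, every vertex has degree at most $2$ in $H$, so $H$ decomposes into vertex-disjoint simple paths (with the single degenerate case $k=n$ giving the whole cycle). From a disjoint union of paths it is routine to extract a matching of size at least $\lceil k/2 \rceil$ by picking alternate edges along each component; this matching is contained in $E(G_1)$, so $m \ge k/2$ and therefore $k \le 2m$. Combining this with the first inequality gives $\opttsp(w) \ge 2n-k \ge 2n-2m$, which is the claimed bound.

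The step worth thinking about carefully is the extraction of the matching from $H$: one needs to justify the $\lceil \ell_i/2\rceil$ bound on each path component (trivial) and handle the boundary case $k=n$ (where $H$ is a cycle rather than a union of paths), noting that in this case $\pi$ itself is a Hamiltonian cycle of $G_1$ and so $m$ is already as large as one could hope. Everything else is a direct consequence of integrality of the weights and the fact that a tour is a Hamiltonian cycle.
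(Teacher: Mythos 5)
Your argument is correct and is essentially the paper's: the paper splits all $n$ tour edges by index parity into two sets $M_o,M_e$, each of which (restricted to its weight-$1$ edges) is a matching of $G_1$, and so bounds the number of weight-$1$ tour edges by $2m$; you instead first restrict to the weight-$1$ edges and then alternate along the resulting path components. These are the same underlying observation --- a cycle's edge set decomposes into two matchings --- just applied in a different order.

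One remark on the boundary case you flagged: when $k=n$ and $n$ is odd, any matching in $G_1$ has size $m\le(n-1)/2$, so $2n-2m\ge n+1$, while $\opttsp(w)=n$; ``$m$ is as large as one could hope'' does not close the inequality, and the proposition as stated genuinely fails by one in that corner (e.g., $n=3$ with all distances equal to $1$). The paper's own proof has the identical defect --- for $n$ odd the set $M_o=\set{(v_i,v_{i+1}):i\text{ odd}}$ contains both $(v_n,v_1)$ and $(v_1,v_2)$, so it is not a matching --- so this is a pre-existing off-by-one you inherited rather than introduced, and it is harmless in the paper's applications, where the matching deficit $n/2-m$ is $\Omega(n)$.
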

\begin{proof}
Let $(v_1,\ldots,v_{n},v_1)$ be an optimal TSP-tour that traverses all vertices of $V$. Observe that the edges in set $M_o=\set{(v_i,v_{i+1})\mid 1\le i\le n, i\text{ odd}}$ form a matching in $G_1$, and so do the edges in set $M_e=\set{(v_i,v_{i+1})\mid 1\le i\le n, i\text{ even}}$. Since the size of a maximum matching in $G_1$ is at most $m$, the set $M_o$ contains at most $m$ edges of weight $1$, and so does the set $M_e$. Therefore, 
at most $2m$ values of $\set{w(v_i,v_{i+1})}_{1\le i\le n}$ are $1$, and it follows that $\sum_{1\le i\le n}w(v_i,v_{i+1})\ge 2(n-2m)+2m=2n-2m$.
\end{proof}

\begin{proposition}
\label{clm: degree_1 vertices implies lower bound}
If $G_1$ contains $L$ degree-$1$ vertices, then $\emph{\opttsp}(w)\ge n+L/2$.
\end{proposition}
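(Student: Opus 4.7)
The plan is to lower bound $\opttsp(w)$ by analyzing, for any TSP tour, the number of tour edges whose weight is at least $2$, and leveraging the fact that a degree-$1$ vertex in $G_1$ has only one incident weight-$1$ edge in $G_w$.

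First, I would fix an arbitrary optimal TSP tour $(v_1, v_2, \ldots, v_n, v_1)$, and partition its $n$ edges into two sets: $E_1 = \{(v_i, v_{i+1}) : w(v_i, v_{i+1}) = 1\}$ and $E_{\ge 2} = \{(v_i, v_{i+1}) : w(v_i, v_{i+1}) \ge 2\}$ (using the convention $v_{n+1} = v_1$, and recalling that all entries of $w$ are positive integers). The total cost of the tour is at least $|E_1| + 2|E_{\ge 2}| = n + |E_{\ge 2}|$, so it suffices to show $|E_{\ge 2}| \ge L/2$.

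Next, let $S$ denote the set of degree-$1$ vertices of $G_1$, with $|S| = L$. By definition of $G_1$, every vertex $v \in S$ has exactly one weight-$1$ incident edge in $G_w$, so among the two tour edges incident to $v$, at most one can belong to $E_1$, and hence at least one must belong to $E_{\ge 2}$. I would then apply a simple double-counting argument: assign to each $v \in S$ one of its incident tour edges in $E_{\ge 2}$; since each edge of $E_{\ge 2}$ has exactly two endpoints and can thus be assigned by at most two vertices of $S$, we conclude that $|E_{\ge 2}| \ge |S|/2 = L/2$. Combining this with the previous inequality yields $\opttsp(w) \ge n + L/2$.

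There is no real obstacle here — the key point is simply the observation that a degree-$1$ vertex of $G_1$ forces at least one of its two tour edges to be heavy, combined with the factor of $2$ loss from double-counting endpoints. The argument is self-contained and does not rely on the cover advantage machinery from Section~\ref{subsec: cover adv}.
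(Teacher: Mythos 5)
Your proof is correct and matches the paper's argument essentially step for step: your assignment of each degree-$1$ vertex to a heavy incident tour edge is exactly the paper's ``in charge of'' bookkeeping, and the $E_1$/$E_{\ge 2}$ split makes explicit the same cost accounting. No meaningful difference in approach.
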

\begin{proof}
Let $(v_1,\ldots,v_{n},v_1)$ be an optimal TSP-tour. If $v_i$ is a degree-$1$ vertex in $G_1$, then at most one of $v_{i-1},v_{i+1}$ is its neighbor in $G_1$, so at least one of the pairs $(v_{i-1},v_i), (v_{i},v_{i+1})$ has distance at least $2$. We say that this pair is \emph{in charge of} vertex $v_i$. Clearly, each pair is in charge of at most two degree-$1$ vertices in $G$. Therefore, at least $L/2$ pairs in $\set{(v_i,v_{i+1})\mid 1\le i\le n}$ has distance at least $2$, and it follows that $\sum_{1\le i\le n}w(v_i,v_{i+1})\ge n+L/2$.
\end{proof}

\begin{proposition}
\label{lem: matching_in_tree}
Let $T$ be a tree on $n$ vertices with $L$ leaves. Then $T$ contains a matching of size at least $(n - L)/2$.
\end{proposition}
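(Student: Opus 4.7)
I would prove the claim by induction on $n$. The base cases $n=1$ and $n=2$ are immediate: in both cases every vertex is a leaf, so $L = n$ and the bound $(n-L)/2 = 0$ is achieved by the empty matching.

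For the inductive step with $n \ge 3$, I would select a longest path in $T$ between two leaves $\ell_1$ and $\ell_2$, and let $v$ denote the neighbor of $\ell_1$ along this path. By maximality of the chosen path, every neighbor of $v$ other than the next vertex $u$ along the path must itself be a leaf of $T$; otherwise the path could be extended beyond $\ell_1$ through such a neighbor, contradicting maximality. Writing $d = \deg_T(v)$, this means exactly $d-1$ of $v$'s neighbors are leaves, one of them being $\ell_1$. If $u$ also happens to be a leaf, then $T$ is the star centered at $v$, where $n = d+1$ and $L = d$; the single edge $(v,\ell_1)$ then supplies a matching of size $1 \ge (n-L)/2 = 1/2$.

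Otherwise, I form $T'$ by deleting $v$ together with its $d-1$ leaf neighbors (including $\ell_1$), obtaining a tree on $n' = n - d$ vertices. The only non-deleted vertex whose degree changes under this deletion is $u$, since $v$'s only non-leaf neighbor is $u$; hence at most one new leaf appears in $T'$ (namely $u$, and only if $\deg_T(u) = 2$), so the leaf count $L'$ of $T'$ satisfies $L' \le L - (d-1) + 1 = L - d + 2$. The inductive hypothesis supplies a matching $M'$ in $T'$ of size at least $(n' - L')/2 \ge (n - d - L + d - 2)/2 = (n - L)/2 - 1$. Since neither $v$ nor $\ell_1$ belongs to $T'$, appending the edge $(v, \ell_1)$ to $M'$ yields a matching of size at least $(n-L)/2$ in $T$, completing the induction.

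The only delicate point is tracking how the leaf count changes under the deletion, in particular verifying that at most one new leaf is created. This is precisely what is guaranteed by choosing $v$ via a longest path: all of $v$'s other neighbors are already leaves and get removed along with $v$, so the only vertex that can newly become a leaf is the single neighbor $u$ on the path. No harder ingredients are needed.
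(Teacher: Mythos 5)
Your proof is correct, but it takes a genuinely different route from the paper's. The paper roots $T$ at an arbitrary vertex and constructs the matching greedily in a single top-down pass: it repeatedly picks an unmatched vertex all of whose ancestors are already matched and that still has an unmatched descendant, and matches it with such a descendant; the key (and only) observation is that any vertex left unmatched at termination must be a leaf, which immediately yields $|M|\ge (n-L)/2$. You instead argue by induction on $n$ via a longest leaf-to-leaf path: the penultimate vertex $v$ on such a path has all neighbors but one being leaves, so you delete $v$ together with those $d-1$ leaves, track how the leaf count can change (only the single remaining neighbor $u$ of $v$ can become a new leaf), and recover an extra matched edge $(v,\ell_1)$. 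Your version has the advantage of making the leaf-count bookkeeping completely explicit and local, at the price of having to isolate the star as a separate case and to argue about the structure around a longest path; the paper's greedy argument is shorter and constructive but leaves the termination/validity analysis implicit. One small caveat common to both: the base case $n=1$ is only consistent with the claimed bound under the convention that an isolated vertex counts as a leaf (so $L=1$), which you use and should perhaps flag, since neither proof works for $n=1$ if one insists a leaf must have degree exactly one.
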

\begin{proof}
We root $T$ at an arbitrary vertex, and then iteratively construct a matching of $ T$ as follows.
Throughout, we maintain a subset $ M$ of edges of $T$, that is initialized to be $\emptyset$, and each vertex of $ T$ is either marked matched or unmatched.
The algorithm proceeds in iterations. While there is an unmatched vertex, such that all its ancestors are matched and some of its descendants remain unmatched, we match this vertex with any of its unmatched descendants (by adding the edge connecting them into $ M$), and mark both of them matched.
It is easy to see that, at the end of this algorithm, all vertices that are left unmatched are leaves of $T$. Therefore, the matching that we constructed has size at least $ (n - L)/2$.
\end{proof}

\begin{lemma}\label{clm:number_of_degree_1}
There is an efficient algorithm, that, given any constant $\epsilon>0$, with probability at least $1-O(n^{-10})$, either (correctly) claims that the number of degree-1 vertices of $G_1$ is greater than $\epsilon n/2$, or (correctly) claims that the number of degree-1 vertices of $G_1$ is at most $2\epsilon n$. Moreover, the algorithm performs $O(n/\epsilon)$ distance queries.
\end{lemma}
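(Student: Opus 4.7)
The plan is to estimate the fraction of degree-$1$ vertices in $G_1$ via random sampling of vertices, with the degree of each sampled vertex determined by an exhaustive local query step. Specifically, I would sample a multiset $S$ of $s = \Theta(\log n /\epsilon)$ vertices uniformly at random from $V$, and for each sampled $v \in S$ query $w(v,u)$ for all $u \in V \setminus \{v\}$. Since $G_1$ consists exactly of those pairs of weight $1$, this determines $\deg_{G_1}(v)$ exactly for each $v \in S$. Let $\hat p$ be the fraction of $v \in S$ for which $\deg_{G_1}(v)=1$. The algorithm then reports ``at most $2\epsilon n$'' if $\hat p \le \epsilon$, and ``greater than $\epsilon n /2$'' otherwise. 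The total number of distance queries is $s \cdot (n-1) = O(n \log n / \epsilon)$, which fits the claimed $\tilde O(n/\epsilon)$ bound.

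For correctness, let $L$ be the true number of degree-$1$ vertices in $G_1$ and write $p = L/n$. Each sampled vertex is independently degree-$1$ with probability $p$, so $s\hat p$ is a sum of $s$ i.i.d.\ Bernoulli$(p)$ random variables with mean $sp$. I would apply a multiplicative Chernoff bound: if $p > 2\epsilon$ (so the true answer must be ``greater than $\epsilon n/2$''), then the algorithm errs only when $\hat p \le \epsilon \le p/2$, which happens with probability at most $\exp(-\Omega(sp)) \le \exp(-\Omega(s\epsilon)) \le n^{-10}$ by the choice of $s$. Symmetrically, if $p \le \epsilon/2$, the algorithm errs only when $\hat p > \epsilon \ge 2p$, again with probability at most $\exp(-\Omega(s\epsilon)) \le n^{-10}$. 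When $p \in (\epsilon/2, 2\epsilon]$, both outputs are allowed by the statement of the lemma, so there is nothing to prove.

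There is no real technical obstacle here: the only point requiring mild care is choosing the correct form of the Chernoff bound so that the sample size scales as $O(\log n/\epsilon)$ rather than $O(\log n/\epsilon^2)$. This is possible because we only need to distinguish two regimes that are separated by a constant multiplicative factor (namely $p \le \epsilon/2$ vs.\ $p > 2\epsilon$), and in this regime the multiplicative Chernoff bound, which depends on $sp$ rather than $s\epsilon^2$, saves a factor of $1/\epsilon$ over the additive Hoeffding bound. The resulting query count $O(n\log n/\epsilon)$ matches the $O(n/\epsilon)$ bound in the statement up to logarithmic factors absorbed into the $O(\cdot)$ notation.
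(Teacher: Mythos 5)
Your proof is correct and is essentially identical to the paper's: both sample $\Theta(\log n/\epsilon)$ vertices, determine each one's $G_1$-degree exactly by querying all $n-1$ distances from it, threshold the empirical degree-$1$ fraction at $\epsilon$, and apply a multiplicative Chernoff bound in the two unambiguous regimes $p \le \epsilon/2$ and $p > 2\epsilon$. (The resulting $O(n\log n/\epsilon)$ query count also appears in the paper's proof via its choice of $L=200\log n/\epsilon$ samples, so the lemma's stated $O(n/\epsilon)$ bound silently drops a $\log n$ factor in both treatments --- harmless, since the overall algorithm's budget is $\tilde O(n^{1.5})$.)
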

\begin{proof}
We sample $L=200\log n/\epsilon$ vertices from $V$ uniformly at random, and for each sampled vertex $v$, we check whether or not $v$ is a degree-$1$ vertex in $G_1$ by performing pair queries $(v,v')$ for all $v'\in V$. Let $L'$ be the number of degree-$1$ vertices vertices in our sample set. If $L'\ge \epsilon L$, we claim that the number of degree-$1$ vertices is at least $\epsilon n/2$ (a positive report), otherwise we claim that the number of degree-$1$ vertices is at most $2\epsilon n$ (a negative report).
It now remains to show the correctness of the algorithm. First assume that the number of degree-$1$ vertices in $G_1$ is at least $\epsilon n/2$. In this case, by Chernoff bound,
\[
\Pr[L'\ge 2\epsilon L]\le \exp\left(-10\log n\right)=O(n^{-10});\mbox{ and }
\Pr[L'\le \epsilon L/2]\le \exp\left(-10\log n\right)=O(n^{-10}).
\]
We then assume that the number of degree-$1$ vertices in $G_1$ is at most $\epsilon n/2$. Similarly, by Chernoff bound, the probability that the algorithm makes an incorrect claim is $\Pr[L'\ge 2\epsilon L]\le \exp\left(-10\log n\right)=O(n^{-10})$.
Therefore, with high probability, the algorithm's claim is correct.
\end{proof}

\section{Subroutines}

In this section we provide four crucial subroutines in our query algorithm of this part.
The first and the second focus on exploring the neighborhood of a vertex up to a certain size, and then intuitively compute the difference between TSP cost and MST cost within this neighborhood.
The third and the fourth focus on efficiently reconstruct long induced paths of an MST and properly connect them into a TSP tour.

\subsection{Subroutine $\local(\cdot,\cdot)$}

\begin{definition}[Light Vertices]
We say that a vertex $v$ is \emph{$\ell$-light} for some integer $\ell>0$, iff there is an edge $e\in \delta_{G_1}(v)$, such that
(i) $e$ is a bridge in $G_1$; and (ii) the connected component of graph $G_1\setminus e$ that contains $v$ has at most $\ell$ vertices. We call such an edge a \emph{witness} of $v$. 
\end{definition}

It is easy to verify that, if $\ell < n/2$, then each $\ell$-light vertex has exactly one witness.
We will assume from now on that $\ell<n/2$.
We denote by $e_v$ the unique witness edge of an $\ell$-light vertex $v$, and denote by $S_v$ the connected component of $G_1 \setminus e_v$ that contains $v$. 

\begin{definition}[Maximal Light Vertices]
We say that $v$ is \emph{maximal $\ell$-light}, iff $v$ is $\ell$-light, and there is no other $\ell$-light vertex $u$ such that the subgraph $S_u$ contains $v$. In this case we also say that $S_v$ is a \emph{maximal $\ell$-light subgraph} of $G_1$.
\end{definition} 
We denote by $L_{\ell}$ the \emph{set of all maximal $\ell$-light vertices}. 

We now describe a subroutine that, given a vertex $v\in V$ and an integer $s$, explores the neighborhood of $v$ in graph $G_1$ up to size $s$.
This subroutine can help us find the maximal $s$-light subgraph of $G_1$ that $v$ lies in, if it indeed lies in any such subgraph, and will be a crucial building block for our TSP-cost estimation algorithm.

The subroutine is called $\local(\cdot,\cdot)$, and is defined as follows.
The input to the subroutine consists of: (i) a vertex $v\in V$; and (ii) an integer $1\le s \le n$. The output of the subroutine $\local(v,s)$ is a signal \textsf{Success}/\textsf{Fail} indicating the termination status of the subroutine, and if the status is \textsf{Success}, the subroutine also returns a connected subgraph of $G_1$ that contains the vertex $v$.

The subroutine $\local(v,s)$ proceeds in two phases.
We now describe the first phase, called the \emph{exploring phase}.
Intuitively, in this step we perform a breadth-first search in $G_1$ starting at vertex $v$ and maintain a tree $T$ rooted at $v$ throughout. Initially, $T$ contains a single vertex $v$ as the root. 
The exploring phase consists of several \emph{exploring steps}. An exploring step at a vertex $\hat v$ includes: (i) performing distance queries $w(\hat v,v')$ for all vertices $v'\notin V(T)$; (ii) for all vertices $v'$ with $w(\hat v,v')=1$, adding $v'$ and the edge $(v',\hat v)$ to $T$ (so $v'$ is a child vertex of $\hat v$ in $T$). We say that a vertex is \emph{explored} iff we have performed an exploring step at it.
In the exploring phase, we start by performing an exploring step at $v$. Let $v_1,\ldots,v_k$ be the new vertices that are added to $T$ in this phase, we then perform sequentially, for each $1\le i\le k$, an exploring step at $v_i$. 
We then iteratively, for each $1\le i\le k$, take a descendant of $v_i$ that is not yet explored, and perform an exploring step at it. 
It is clear that, after each exploring step, the number of explored vertices in $T$ increases by $1$. The exploring phase is terminated once the number of explored vertices reaches $2s$.

In the second phase, which is called the \emph{checking phase}, we find the least common ancestor in $T$ of all vertices in $T$ that are not yet explored.
Clearly, there is at least one such vertex, since otherwise $G_1$ is not connected.
Let $\hat v$ be the vertex we get. If $\hat v = v$, then we return \textsf{Fail}.
Assume now that $\hat v \ne v$, and assume that $\hat v\in V(T_{v_i})$ for some first-level vertex $v_i$ in $T$ (recall that $T_{v_i}$ is the subtree of $T$ rooted at $v_i$).
If $|V(T\setminus T_{v_i})|> s$, then we return \textsf{Fail}.
If $|V(T\setminus T_{v_i})|\le s$, then return \textsf{Success} and consider the tree path in $T$ connecting $v_i$ to $\hat v$.
Let $v^*$ be the vertex on this path with lowest level, such that $|V(T\setminus T_{v^*})|\le s$. We denote $S=V(T\setminus T_{v^*})$.
We further perform queries between every pair of vertices in $S$, and output $G_1[S]$, the subgraph of $G_1$ induced by vertices in $S$. 
This completes the description of the subroutine.

The following observations are immediate.
\begin{observation}
\label{obs: number_of_queries_local}
For every vertex $v\in V$ and every integer $1\le s \le n$, subroutine $\emph{\local}(v,s)$ performs at most $O(ns)$ distance queries.
\end{observation}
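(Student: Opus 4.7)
The plan is to bound the number of queries by analyzing the two phases of $\local(v,s)$ separately and showing each contributes at most $O(ns)$ queries.

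First, I would bound the queries in the exploring phase. Each exploring step at some vertex $\hat v$ performs distance queries $w(\hat v, v')$ for all $v' \notin V(T)$, contributing at most $n$ queries per step. The key observation is that each exploring step increases the number of explored vertices in $T$ by exactly one, and the phase terminates as soon as this count reaches $2s$. Consequently there are at most $2s$ exploring steps, yielding a total of at most $2s \cdot n = O(ns)$ queries in this phase.

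Next, I would bound the queries in the checking phase. Finding the least common ancestor $\hat v$ of the unexplored vertices in $T$ requires only tree operations on the already-constructed $T$ and no additional distance queries. The only distance queries performed in this phase are the pairwise queries on the set $S = V(T \setminus T_{v^*})$. By construction $|S| \le s$, so this step contributes at most $\binom{s}{2} = O(s^2)$ queries. Since $s \le n$, this is $O(ns)$.

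Summing the two bounds gives $O(ns)$ total distance queries, which establishes the claim. There is no real obstacle here — the statement is essentially a direct bookkeeping consequence of the termination condition of the exploring phase (at most $2s$ steps, each with at most $n$ queries) and the size bound $|S| \le s$ enforced in the checking phase.
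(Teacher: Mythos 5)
Your proof is correct and is essentially the calculation the paper leaves implicit (the observation is stated without an explicit proof). You correctly identify the two sources of queries: at most $2s$ exploring steps each costing at most $n$ queries (since the counter of explored vertices increments by one per step and the phase halts at $2s$), and at most $\binom{s}{2}$ pairwise queries in the checking phase since $|S| = |V(T\setminus T_{v^*})| \le s$; both are $O(ns)$.
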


\begin{observation}
	If the vertex $v$ is $\ell$-light, then the subroutine $\emph{\local}(v,\ell)$ returns \textsf{Success} and outputs the maximal $\ell$-light subgraph of $G_1$ that contains $v$;
	otherwise the subroutine $\emph{\local}(v,\ell)$ returns \textsf{Fail}.
\end{observation}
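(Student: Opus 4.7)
The plan is to establish both directions of the observation. For the forward direction, assume $v$ is $\ell$-light and let $S^\star$ be the maximal $\ell$-light subgraph of $G_1$ containing $v$. This $S^\star$ is uniquely defined because $\ell<n/2$: any two $\ell$-light subgraphs containing $v$ have complements of size $>n/2$ that must intersect, forcing their witness bridges to be nested in the bridge-tree of $G_1$, so the $\ell$-light subgraphs containing $v$ form a chain under inclusion. Let $e^\star=(a^\star,b^\star)$ with $a^\star\in S^\star$ and $b^\star\notin S^\star$ be the unique bridge in $G_1$ with exactly one endpoint in $S^\star$.

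The key structural property of the BFS tree $T$ is that $V(T)\setminus S^\star\subseteq V(T_{b^\star})$, with $b^\star$'s parent in $T$ being $a^\star$; this is because $e^\star$ is the only $G_1$-edge between $S^\star$ and $V\setminus S^\star$, forcing any BFS path from $v$ out of $S^\star$ to traverse $e^\star$. Since $|S^\star|\le\ell$ while the algorithm performs $2\ell$ exploration steps, one shows that all of $S^\star$ is explored by termination: the first-level subtrees of $v$ intersecting $S^\star$ together contain at most $|S^\star|-1<\ell$ descendants, and the round-robin exploration exhausts them well within budget. Thus the LCA $\hat v$ lies in $V(T_{b^\star})$, so $\hat v\ne v$, and the first-level child $v_i$ with $\hat v\in V(T_{v_i})$ satisfies $V(T)\setminus V(T_{v_i})\subseteq S^\star$, so the size check passes and the algorithm returns Success. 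The remaining step is to show the deepest valid $v^\star$ on the path from $v_i$ to $\hat v$ equals $b^\star$: at $v^\star=b^\star$ the size equals $|S^\star|\le\ell$, while for any $v^\star$ strictly below $b^\star$ the set $V(T)\setminus V(T_{v^\star})$ strictly contains $S^\star$, and by maximality of $S^\star$ combined with the BFS structure (the tree edge into such $v^\star$ separates the vertex set into a piece containing $v$ that is forced to correspond to a strictly larger $\ell$-light subgraph), this set must exceed size $\ell$. The output is then $V(T)\setminus V(T_{b^\star})=S^\star$, as desired.

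For the reverse direction I argue the contrapositive: if Success is returned with output $G_1[S]$ where $S=V(T)\setminus V(T_{v^\star})$, then $v$ is $\ell$-light with $S$ the maximal $\ell$-light subgraph containing $v$. The candidate witness bridge is the tree edge $(p^\star,v^\star)$ from $v^\star$ to its parent; since $v\in S$ and $|S|\le\ell$, it suffices to show this edge is a bridge of $G_1$ whose removal yields $v$'s component equal to $S$. By construction all vertices of $S$ are explored, so every $G_1$-neighbor of a vertex in $S$ lies in $V(T)$, localizing candidate cross edges to pairs in $S\times V(T_{v^\star})$. The main obstacle is to rule out such $G_1$-cross-edges beyond the single tree edge $(p^\star,v^\star)$: the depth-maximality of $v^\star$ must be leveraged to argue that any such cross edge would allow pushing $v^\star$ further down while preserving the size constraint, contradicting the algorithm's choice. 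I expect this to be the most delicate step, requiring a careful interplay between the combinatorics of the BFS tree and the bridge/e-block structure of $G_1$, together with the assumption $\ell<n/2$ to propagate uniqueness of witnesses.
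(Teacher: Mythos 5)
The paper gives no proof; it declares the observation ``immediate.'' Your proposal supplies a detailed argument, which is more than the paper attempts, but it contains genuine gaps that prevent it from being a complete proof.

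First, the step asserting that all of $S^\star$ is explored within the $2\ell$-step budget is justified by claiming ``the first-level subtrees of $v$ intersecting $S^\star$ together contain at most $|S^\star|-1$ descendants.'' This is not quite the right quantity: one first-level subtree $T_{v_j}$ (the one whose subtree eventually contains $b^\star$) intersects $S^\star$ but extends far beyond it into $T_{b^\star}$, so its size is not bounded by $|S^\star|$. What you actually need is that, within $T_{v_j}$ itself, the portion $T_{v_j}\cap S^\star$ is exhausted before the exploration budget runs out, even though the round-robin is simultaneously spending steps inside $T_{b^\star}$. If $S^\star\cap T_{v_j}$ contains a long chain of descendants below $a^\star$ while $T_{b^\star}$ is wide and shallow, a BFS-by-depth ordering within $T_{v_j}$ can spend most of its budget in $T_{b^\star}$ before reaching the bottom of $S^\star$. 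This requires an argument about the order of exploration inside a single subtree, not just fairness across subtrees, and your sketch does not supply it.

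Second, and more centrally, your claim that the deepest valid $v^\star$ on the $v_i$--$\hat v$ path equals $b^\star$ rests on the assertion that for a strictly deeper $v^{\star\star}$, if $|V(T)\setminus V(T_{v^{\star\star}})|\le\ell$ then $V(T)\setminus V(T_{v^{\star\star}})$ is a strictly larger $\ell$-light subgraph containing $v$, contradicting maximality. But for this set to be $\ell$-light you need the tree edge into $v^{\star\star}$ to be a bridge of $G_1$ whose removal isolates exactly this set, and you never establish that. The tree $T$ produced by the round-robin BFS can have $G_1$-cross-edges between $V(T)\setminus V(T_{v^{\star\star}})$ and $T_{v^{\star\star}}$ (for instance when $T_{v^{\star\star}}$ sits inside a $2$-edge-connected block that the BFS splits between two arms), and then the tree edge is not a bridge, the set is not $\ell$-light, and the maximality argument simply does not apply. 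You correctly identify ruling out such cross edges as ``the most delicate step'' in the reverse direction and leave it open there; the same difficulty is present in the forward direction but is glossed over. That step is where the actual content of the observation lives, so the proposal is incomplete in both directions.
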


\subsection{Subgraph Reconfiguration}

Fix a parameter $0<\ell<n/2$ and a vertex $v\in L_{\ell}$ (the set of all maximal $\ell$-light vertices in $G_1$), and consider the maximal $\ell$-light subgraph $S_v$ of $G_1$. For each vertex $u\in V(S_v)$, we define its \emph{out-reach distance}, denoted as $\ord(u)$, to be the minimum distance between $u$ and a vertex of $V\setminus V(S_v)$, namely $\ord(u)=\min\set{w(u,u')\mid u'\notin V(S_v)}$. 

\begin{definition}[Subgraph Reconfiguration]
A \emph{reconfiguration} of a subgraph $S_v$ is defined to be a multi-set of unordered pairs of vertices in $V(S_v)$ that induce a connected graph on $V(S_v)$. 
Equivalently, a reconfiguration of $S_v$ can be viewed as a connected multi-graph on vertices of $S_v$, whose edges may or may not belong to $S_v$.
\end{definition}

Let $R$ be a reconfiguration of $S_v$.
We will sometimes refer to $R$ as the (multi)-graph induced by edges of $R$. The vertices of this multigraph can be partitioned into two sets: the set $V_0(R)$ contains all vertices with even degree, and the set $V_1(R)$ contains all vertices with odd degree.
The \emph{cost} of $R$ is defined as
$$\cost(R) = \sum_{(u,u') \in R} w(u,u') + \frac{1}{2}\sum_{u \in V_1(R)} \ord(u) - (\card{V(S_v)} - 1).$$
The \emph{optimal reconfiguration cost} of $S_v$, that we denote by $\rc(S_v)$, is defined to be the minimum cost of a reconfiguration of subgraph $S_v$.

\begin{lemma}
\label{clm:opt_subgraph_reconf}
There is an algorithm, that, given any maximal $\ell$-light subgraph $S$, computes a reconfiguration $R$ that achieves optimal (minimum) reconfiguration cost, by performing $O(n\ell)$ queries. 
\end{lemma}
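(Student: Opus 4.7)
The plan separates the task into a query stage that collects enough information about the metric and a purely offline computation stage that uses only this stored data.

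\textbf{Query stage.} By the definition of an $\ell$-light subgraph, $|V(S)| \le \ell$. The algorithm queries $w(u, u')$ for every $u \in V(S)$ and every $u' \in V$; this amounts to at most $|V(S)| \cdot n \le \ell n = O(n\ell)$ distance queries. From the resulting information we can read off (i) the complete restriction of the metric $w$ to $V(S)$, and (ii) the out-reach distance $\ord(u) = \min_{u' \notin V(S)} w(u,u')$ for every $u \in V(S)$. Crucially, the cost $\cost(R)$ of any reconfiguration $R$ of $S$ depends only on these two pieces of data (the first sum ranges over pairs in $V(S)$ and the second only involves $\ord(u)$ for $u \in V(S)$), so the remainder of the algorithm can proceed without further queries.

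\textbf{Computation stage.} Using only the information stored above, we search offline for a reconfiguration of minimum cost. We first restrict to a finite search space. In any optimizer $R^*$, no edge can appear with multiplicity at least $3$: removing two copies of such an edge preserves every vertex parity and keeps the multigraph connected, while strictly decreasing the weight term of $\cost(R^*)$. Hence we may assume each edge multiplicity lies in $\set{0,1,2}$, yielding at most $3^{\binom{|V(S)|}{2}}$ candidate multigraphs on $V(S)$, a finite set. We enumerate the candidates, test connectivity of each, compute $\cost(\cdot)$ from the stored data, and return a connected candidate of minimum cost. Since no queries are made during this stage, the total query complexity remains $O(n\ell)$, as required.

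The only conceptual observation underlying the plan is that $\cost(R)$ is determined by distances internal to $V(S)$ and by the values $\ord(u)$ for $u \in V(S)$, all of which are captured by the queries in the first stage; the finite-search-space argument in the second stage then guarantees correctness. We do not attempt to optimize the running time of the computation stage, since the lemma only constrains query complexity. More efficient (polynomial-time) implementations are possible---for example, by enumerating over target odd-degree sets $T' \subseteq V(S)$ and, for each $T'$, computing the minimum weight connected spanning multigraph with odd-degree set $T'$ via a minimum weight $T$-join / minimum weight perfect matching computation---but such refinements are not necessary for the stated $O(n\ell)$ query bound.
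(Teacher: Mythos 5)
Your proof follows the same approach as the paper: query all distances $w(u,u')$ with $u\in V(S)$ and $u'\in V$ (at most $|V(S)|\cdot n \le O(n\ell)$ queries, since a maximal $\ell$-light subgraph has $|V(S)|\le\ell$), and then enumerate reconfigurations offline and return the cheapest. The paper states the enumeration as a one-liner; your explicit multiplicity-at-most-$2$ argument (removing two copies preserves vertex parities and connectivity while strictly decreasing the cost, since all distances are positive integers) is a correct and welcome refinement that the paper leaves implicit.
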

\begin{proof}
We first query the distances between all pairs $(u,u')$ of vertices with $u\in V(S)$ and $u'\in V$. We then enumerate all possible reconfigurations and compute the cost of each of them, from the distance information we acquired. Finally, we return the configuration with minimum cost.  
\end{proof}

\subsection{Subroutine $\pbfs(\cdot,\cdot,\cdot,\cdot)$} 
We now describe a subroutine that, given a vertex $v$ and several parameters, explores the neighborhood of $v$ in graph $G_1$ up to a certain distance, by performing a certain number of distance queries.
As we will see, the subroutine \pbfs\text{ }is different from the subroutine \local, in the sense that the subroutine \pbfs\text{ }focuses on checking whether or not the neighborhood of a vertex $v$ in $G$ has a path-like structure, instead of checking whether or not the vertex belongs to a maximal light subgraph.
We will show that this checking can be done in a distinctly more efficient manner than the subroutine \local.

The subroutine is called $\pbfs(\cdot,\cdot,\cdot,\cdot)$, and is defined as follows.
The input to the subroutine consists of: (i) a vertex $v\in V$; (ii) an integer $1\le h \le n$ representing the depth of the BFS-tree that the subroutine is aiming for; (iii) an integer $1\le q\le n^2$ representing the number of queries that the subroutine is allowed to perform; and (iv) a real number $\alpha>1$ for determining the termination status of ths subroutine. The output of the subroutine $\pbfs(v,h,q,\alpha)$ consists of: (i) a signal \textsf{Success}/\textsf{Fail} indicating the termination status of the subroutine; and (ii) if the status is \textsf{Success}, a path $P_v$ of length $2h$ and a subgraph $H_v$ of $G_1$, both containing the vertex $v$.

The subroutine $\pbfs(v,h,q,\alpha)$ proceeds in $h$ stages. For each $1\le i\le h$, in the $i$th stage, we explore all vertices that are at distance exactly $i$ from $v$ in $G_1$ by performing distance queries.
Throughout the subroutine, each vertex is either marked as a \emph{level-$i$} vertex for some $0\le i\le h$, or is marked as \emph{unexplored}. Initially, the input vertex $v$ is marked as a level-$0$ vertex and all other vertices are marked as unexplored.
We will also maintain a tree $\hat T^h_v$, that initially contains a single vertex $v$ and no edges.
Along with exploring vertices of $V$, we also keep track of the queries that we have already performed so far, and utilize them for efficiently determining which vertices to explore in the next stage. 
Specifically, throughout the subroutine, we maintain, for each vertex $\hat v\in V$, a set
$$\qset(\hat v)=\set{v'\mid \text{the distance between vertices }\hat v\text{ and }v'\text{ has been queried}},$$ that records all distance information we have obtained about vertex $\hat v$.
Initially, all sets $\set{\qset(\hat v)}_{\hat v\in V}$ are $\emptyset$, and the sets evolve as the subroutine proceeds.

In the first stage, we query distances $w(v,v')$ for all vertices $v'\in V$. We then mark all vertices $v'$ with $w(v,v')=1$ as \emph{level-1} vertices, and add the edge $(v,v')$ to $\hat T^h_v$ for each level-$1$ vertex $v'$.

We now fix some $1<i\le h$ and describe the $i$th stage of the subroutine.
Let $v_1,\ldots,v_p$ be all level-$(i\!-\!1)$ vertices, (indexed arbitrarily). We sequentially process vertices $v_1,\ldots,v_p$ as follows. Consider a vertex $v_j$ with $1\le j\le p$.
We would like to discover all unexplored vertices that are at distance $1$ from $v_j$ by performing as few queries as possible.
Specifically, for each unexplored vertex $\hat v$, we query all distances in $\set{w(v',v_j)\mid v'\in \qset(\hat v)}$ and then compute
$X(v_j,\hat v)=\max\set{w(v',\hat v)-w(v',v_j)\mid v'\in \qset(\hat v)}$. 
We then query distance $w(v_j,\hat v)$ for all unexplored vertices $\hat v$ with $X(v_j,\hat v)\le 1$.
We then mark, among the above vertices, those vertices $\hat v$ with $w(v_j,\hat v)=1$ as level-$i$ vertices, and then add the edge $(v_j,\hat v)$ to tree $\hat T^h_v$. 
This finishes the description of processing a level-$(i\!-\!1)$ vertex $v_j$, and also finishes the description of the $i$th stage.

The subroutine also maintains a counter that records the number of distance queries performed so far. If the counter reaches $q$ before all $h$ stages are finished, then we abort the subroutine and return \textsf{Fail}. Otherwise, let $T^h_v$ be the resulting tree that we obtain. The output of the subroutine distinguishes between the following cases.

\textbf{Case 1.} If $|V(T^h_v)|\le \alpha h$, and the tree $T^h_v$ contains exactly two level-$h$ vertices $v^*,v^{**}$, and the $v$-$v^*$ path and the $v$-$v^{**}$ path in $T^h_v$ are internally vertex-disjoint, then we return \textsf{Success}. 
We denote by $P_v$ the $v^*$-$v^{**}$ path in tree $T^h_v$, and call it the \emph{support path} of $v$.
We proceed to query all distances between vertices of $V(T^h_v)$. Note that this takes at most $\alpha^2h^2$ additional queries. 
From these queries, we are able to recover the subgraph of $G_1$ induced by vertices of $V(T^h_v)$, which we denote by $H_v$.
We then return the path $P_v$ and the subgraph $H_v$.
We say that vertices $v^*,v^{**}$ are \emph{interface vertices} of $P_v$ and $H_v$.

\textbf{Case 2.} If Case 1 does not happen, then we say that Case 2 happens. Equivalently, Case 2 happens iff either $|V(T^h_v)|> \alpha h$ holds, or tree $T^h_v$ does not contain exactly two level-$h$ vertices, or tree $T^h_v$ does contain exactly two level-$h$ vertices $v^*,v^{**}$ but the $v$-$v^*$ path and the the $v$-$v^{**}$ path in $T^h_v$ share a vertex other than $v$. In this case we return \textsf{Fail}.

This completes the description of the subroutine $\pbfs(v,h,q,\alpha)$.
We use the following observations.

\begin{observation}
\label{obs: PBFS queries}
The subroutine $\pbfs(v,h,q,\alpha)$ performs at most $q+\alpha^2h^2$ distance queries.
\end{observation}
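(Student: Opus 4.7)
The plan is to separately account for the queries performed during the $h$ BFS stages and those performed after the stages terminate. The subroutine explicitly maintains a counter that records every distance query executed, whether it comes from the initial stage-$1$ sweep $w(v,v')$ for $v'\in V$, from the ``probe'' queries $w(v',v_j)$ for $v'\in\qset(\hat v)$ used to compute $X(v_j,\hat v)$, or from the subsequent queries $w(v_j,\hat v)$ for those $\hat v$ surviving the pruning by $X$. The subroutine is aborted (returning \textsf{Fail}) the moment this counter reaches $q$, so the total number of queries made while executing the $h$ stages is at most $q$.

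Next, I would argue that additional queries beyond the $h$ stages occur only in Case~1 (Success). The Case~2 branch simply returns \textsf{Fail} and performs no further queries, so the total in this branch is trivially at most $q \le q + \alpha^2 h^2$. In Case~1, the subroutine proceeds to query all pairwise distances between vertices in $V(T^h_v)$. The entry condition for Case~1 requires $|V(T^h_v)| \le \alpha h$, so the number of additional queries is at most $\binom{\alpha h}{2} \le \alpha^2 h^2 / 2 \le \alpha^2 h^2$.

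Summing the two contributions, the total number of queries made by $\pbfs(v,h,q,\alpha)$ is at most $q+\alpha^2 h^2$, as claimed.

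There is essentially no obstacle here: the statement is a bookkeeping observation whose correctness is built into the description of the subroutine. The only step requiring minor care is verifying that every distance query contemplated in the description (in particular the probe queries used to compute $X(v_j,\hat v)$) is actually charged to the counter, so that the abort condition correctly caps the per-stage work at $q$; this is immediate from the convention that the counter increments on every distance query.
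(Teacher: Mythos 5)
Your proof is correct and takes essentially the same route as the paper's: cap the per-stage work at $q$ via the abort counter, note that Case~2 adds nothing, and bound the Case~1 post-processing by $|V(T^h_v)|^2 \le \alpha^2 h^2$ using the entry condition.
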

\begin{proof}
If the number of performed queries reaches $q$ before all $h$ stages are finished, then the subroutine is aborted and has performed in total at most $q$ queries. If the subroutine ends in Case 1, then at most $\alpha^2h^2$ additional queries are performed. If the subroutine ends in Case 2, then no more queries are performed. \Cref{obs: PBFS queries} now follows.
\end{proof}

We will use following observation in analyzing our TSP-cost estimation algorithm in \Cref{sec: G_1 connected}.

\begin{observation}
\label{obs: Euler tour distance}
For every vertex $u\notin V(T^h_v)$ and for every pair $x,x'$ of vertices such that the distances $w(u,x)$ and $w(u,x')$ are queried by the subroutine, $w(x,x')\ge \min_{u'\in V(T^h_v)}\set{w(u,u')}-1$.
\end{observation}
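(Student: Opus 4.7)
The plan is to trace through the structure of the subroutine and exploit the pruning criterion $X(v_j, \hat v) \le 1$ that governs Type C queries. First, note that every query performed by $\pbfs(v,h,q,\alpha)$ has at least one coordinate that is either $v$ or a level vertex: stage 1 queries have the form $(v,v')$, and both the Type B queries $w(v', v_j)$ and the Type C queries $w(v_j, \hat v)$ always involve the currently processed level-$(i{-}1)$ vertex $v_j$. Consequently, whenever a query involves some $u \notin V(T^h_v)$ (which is never assigned any level), the other coordinate $x$ must lie in $V(T^h_v)$, and in particular satisfies $w(u,x) \ge d := \min_{u' \in V(T^h_v)}\{w(u,u')\}$.

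Next, I will establish by induction on the order in which queries are performed the invariant that for every currently unexplored vertex $\hat v$, the set $\qset(\hat v)$ is contained in the current $V(\hat T^h_v)$. The base case (after stage 1) gives $\qset(\hat v) = \{v\}$ for every $\hat v \ne v$. In the inductive step, a Type C query $w(v_j, \hat v)$ either makes $\hat v$ a new level-$i$ vertex, or leaves $\hat v$ unexplored but only appends the level vertex $v_j$ to $\qset(\hat v)$; a Type B query $w(v', v_j)$ only modifies $\qset(v')$ and $\qset(v_j)$, and since $v'$ was drawn from $\qset(\hat v_{\text{outer}})$ for some unexplored $\hat v_{\text{outer}}$, the inductive hypothesis forces $v'$ to already be a level vertex, so no unexplored vertex's $\qset$ is affected. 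Applied to $u \notin V(T^h_v)$, the invariant rules out Type B queries involving $u$ in the ``$v'$'' slot, since such a query would require $u \in \qset(\hat v)$ for some unexplored $\hat v$. Hence the queries involving $u$ are exactly the single stage 1 query $(v,u)$ together with Type C queries of the form $(v_j, u)$.

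The final step is the key calculation. List the queries involving $u$ in chronological order as $(u,x_1), (u,x_2), \ldots$, with $x_1 = v$ coming from stage 1. For any $i < j$, the query $(u,x_i)$ has already taken place by the time $(u,x_j)$ is issued, so $x_i \in \qset(u)$ at that moment. Since $(u,x_j)$ must be a Type C query (stage 1 occurs only once), the pruning criterion $X(x_j, u) \le 1$ applied with $v' = x_i$ yields $w(x_i, u) - w(x_i, x_j) \le 1$, and hence $w(x_i, x_j) \ge w(x_i, u) - 1 \ge d - 1$. The main subtlety lies in establishing the invariant on $\qset(\hat v)$, which must be tracked across both query types and across the interleaving of iterations within a single stage during which $\qset(\hat v)$ can grow while $\hat v$ remains unexplored; once the invariant is in place, eliminating Type B queries involving $u$ and applying the pruning criterion give the bound in a single line.
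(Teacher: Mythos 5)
Your proof is correct and takes essentially the same approach as the paper's: both hinge on the fact that $\qset(u) \subseteq V(T^h_v)$ whenever $u \notin V(T^h_v)$, and then extract the bound from the pruning criterion $X(x_j,u) \le 1$ applied to an earlier-queried $x_i \in \qset(u)$. The paper phrases this as a proof by contradiction and asserts the containment $\qset(u)\subseteq V(T^h_v)$ without proof, whereas you verify it explicitly via the induction on $\qset$ and then argue directly; the substance is the same.
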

\begin{proof}
We define $d_u=\min_{u'\in V(T^h_v)}\set{w(u,u')}$ for every vertex $u\notin V(T^h_v)$.
Assume the contrast that there exists some $u\notin V(T^h_v)$ and some pair $x,x'\in \qset(u)$ such that $w(x,x')\le d_u-2$.
Note that, since $u\notin V(T^h_v)$, the set $\qset(u)$ only contains vertices of $T^h_v$ (so $x,x'\in V(T^h_v)$), and the distance $w(x,u)$ ($w(x',u)$, resp.) may only be queried in the iteration of processing $x$ ($x'$, resp.) to find its children in constructing the tree $T^h_v$.
Assume without loss of generality that vertex $x$ was processed before $x'$, so the distance $w(x,u)$ was queried before the distance $w(x',u)$. Consider now the iteration of processing $x'$.
Recall that we only query distances $w(x',\hat v)$ for all unexplored vertices $\hat v$ with $X(x',\hat v)\le 1$, where $X(x',\hat v)=\max\set{w(z,\hat v)-w(z,x')\mid z\in \qset(\hat v)}$. Therefore, if the distance $w(x',u)$ is queried in this iteration, then $\max\set{w(z,u)-w(z,x)\mid z\in \qset(u)}\le 1$. However, since at this moment $x\in \qset(u)$, so $\max\set{w(z,u)-w(z,x')\mid z\in \qset(u)}\ge w(x,u)-w(x,x')\ge d_u-(d_u-2)=2$, a contradiction.
\end{proof}

\subsection{Computing Optimal Proper Tours}

Let $\qset$ be a set of vertex-disjoint paths in $G_1$, and we denote $V'=\bigcup_{Q\in \qset}V(Q)$.
We say that a tour is a \emph{$V'$-tour} iff it traverses all vertices of $V'$ exactly once.
We say that a $V'$-tour $(v_1,\ldots,v_{r},v_1)$ is a \emph{$\qset$-proper tour}, iff for each $1\le i\le r$, either $(v_i,v_{i+1})$ is an edge on some path of $\qset$, or $v_i$ is an endpoint of some path $Q\in \qset$, and $v_{i+1}$ is an endpoint of some other path $Q'\in \qset$. 

\begin{claim}
\label{clm: proper tour cost}
The minimum cost of a $\qset$-proper tour can be computed by performing $O(n+|\qset|^2)$ queries.
\end{claim}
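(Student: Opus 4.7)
The plan is to observe that a $\qset$-proper tour decomposes into two kinds of cost: the internal traversal cost of each path $Q\in\qset$, which is exactly $|V(Q)|-1$ (since $Q$ lies in $G_1$, so all its edges have weight $1$) and is fully determined by the combinatorial description of $\qset$ with no queries; and the ``jump'' cost at each step from an endpoint of one path to an endpoint of a different path. Because $\qset$ consists of vertex-disjoint paths, each path has exactly two endpoints, so there are at most $2|\qset|$ endpoints overall, and from the definition any $\qset$-proper tour is specified entirely by (i) a cyclic ordering of the paths in $\qset$ and (ii) an orientation of each path (which determines which of its two endpoints is entered and which is exited).

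First I would query the distance $w(x,y)$ for every pair $(x,y)$ of endpoints of paths in $\qset$; this takes $\binom{2|\qset|}{2} = O(|\qset|^2)$ queries. The $O(n)$ additive term in the bound absorbs a one-time scan/indexing of the input $\qset$, whose underlying vertex set has size at most $n$. The distances just queried, together with the path-length values $|V(Q)|-1$ readable directly from $\qset$, are by the above decomposition sufficient to compute the cost of any $\qset$-proper tour.

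Then I would compute the minimum-cost $\qset$-proper tour by solving a purely combinatorial problem on an auxiliary weighted graph whose nodes are the $2|\qset|$ endpoints: give each pair of endpoints from distinct paths the queried distance as its weight, and give each pair of endpoints of the same path $Q$ the weight $|V(Q)|-1$; the minimum-cost $\qset$-proper tour corresponds to the minimum-weight Hamiltonian cycle on this auxiliary graph subject to the constraint that, for every $Q\in\qset$, the intra-$Q$ edge between $Q$'s two endpoints is used. This is equivalent to a TSP-style optimization on the $|\qset|$ ``super-nodes'' with two entry/exit points each, and can be solved (for instance by brute force, or by Held--Karp dynamic programming) using only the information already obtained, so no additional queries are made.

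Summing the two contributions yields the claimed $O(n+|\qset|^2)$ query complexity. No step presents a serious obstacle: the main conceptual point is merely to verify that once the pairwise endpoint distances are known, every remaining decision is internal and query-free, which follows directly from the proper-tour definition and the fact that the paths are unit-weight paths in $G_1$.
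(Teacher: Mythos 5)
Your proof is correct and follows essentially the same structure as the paper's: query the pairwise distances among the $O(|\qset|)$ endpoints, then solve the remaining optimization over cyclic orderings and orientations of the paths without further queries. The one small difference is in how the $O(n)$ term is used: the paper spends it on querying $w(v_i,v_{i+1})$ for all consecutive pairs along each $Q\in\qset$ (not assuming anything about edge weights), whereas you observe that since $\qset$ consists of paths in $G_1$ these distances are all $1$, so the internal traversal cost of $Q$ is $|V(Q)|-1$ with no queries needed, and you repurpose the $O(n)$ to account for reading the input. Your observation is valid in this context and slightly sharpens the query count to $O(|\qset|^2)$; the paper's version is marginally more robust in that it would also apply if $\qset$ were a set of arbitrary-weight paths in $G_w$ rather than paths in $G_1$. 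Both fit within the stated $O(n+|\qset|^2)$ bound.
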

\begin{proof}
We first query, for each path $Q\in \qset$, the distance between all pairs of consecutive vertices on $Q$. 
This takes in total at most $n$ queries.
Let set $\hat V$ contain, for each path $Q\in \qset$, both endpoints of path $Q$. We then query the distances between all pairs of vertices in $\hat V$.
This takes in total $O(|\qset|^2)$ queries.
We then enumerate all $\qset$-proper tours that traverses all vertices of $V'$ and compute the cost of each of them, from the distance information we acquired. Finally, we return the minimum cost of a $\qset$-proper tour.
\end{proof}

\begin{claim}
\label{clm: long induced path concatenation}
If  $\opttsp(w)\le (1+{\epsilon})n$, $|E(\qset)|\ge (1-\eps')n$, and for all pairs $v,v'$ of vertices belonging to distinct paths of $\qset$, $w(v,v')=1$ implies that $v$ and $v'$ are endpoints of the paths in $\qset$ that they belong to, respectively, then the minimum cost of a $\qset$-proper tour is at most $(1+3\eps+4{\epsilon}')n+2|\qset|$.
\end{claim}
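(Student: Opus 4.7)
The plan is to construct an explicit $\qset$-proper tour by transforming the optimal TSP tour and carefully controlling the cost increase at each step.

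First, I would start with the optimal TSP tour $\pi^*$ on $V$, of cost $\cost(\pi^*)=\opttsp(w)\le (1+\epsilon)n$. Shortcutting $\pi^*$ through $V\setminus V'$ via the triangle inequality yields a tour $\pi_1$ on $V'$ of cost $\cost(\pi_1)\le (1+\epsilon)n$. Note that $|V\setminus V'|=n-|E(\qset)|-|\qset|\le \epsilon' n-|\qset|$, so few vertices are dropped, and since every edge of $w$ has weight $\ge 1$, at most $\epsilon n$ edges of $\pi_1$ can carry weight $\ge 2$.

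Next, I would analyze the structure of $\pi_1$ relative to $\qset$. For each $Q\in\qset$, let $r_Q\ge 1$ denote the number of maximal runs of $V(Q)$-vertices appearing in $\pi_1$, and let $R=\sum_Q r_Q$ be the total number of inter-path edges of $\pi_1$ (edges whose endpoints lie in $V(Q)$ and $V(Q')$ for distinct $Q,Q'\in\qset$). By the hypothesis, every weight-$1$ inter-path edge must connect a path-endpoint to a path-endpoint; since there are only $2|\qset|$ path-endpoints and each has degree $2$ in $\pi_1$, the number of weight-$1$ inter-path edges is bounded by $2|\qset|$. Combined with the $\epsilon n$ bound on weight-$\ge 2$ edges, this gives $R\le 2|\qset|+\epsilon n$, so the total number of ``extra runs'' across all paths satisfies $\sum_Q(r_Q-1) = R-|\qset|\le |\qset|+\epsilon n$.

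Then I would transform $\pi_1$ into a $\qset$-proper tour by merging the runs of each path: for every $Q$ with $r_Q>1$, I pick one ``anchor'' run, excise each non-anchor run of $V(Q)$ from $\pi_1$ by triangle-inequality shortcutting (which does not increase cost), and then expand the anchor run into a full traversal of $Q$ from one endpoint to the other (choosing the direction that minimizes the cost of the two new boundary connectors at the anchor). The resulting tour visits each path contiguously and is therefore a $\qset$-proper tour. Summing the contributions — path-traversal cost exactly $|E(\qset)|$; connector cost bounded via the inter-path-edge cost in $\pi_1$; merging/boundary-adjustment cost bounded by a constant times $\sum_Q(r_Q-1)\le |\qset|+\epsilon n$; and slack from shortcutting $V\setminus V'$ — the total cost comes out to at most $(1+3\epsilon+4\epsilon')n+2|\qset|$, where $|E(\qset)|\le n$ controls the path cost, the $3\epsilon n$ and $4\epsilon' n$ terms absorb the shortcut, weight-$\ge 2$ edge, and boundary-adjustment contributions, and the $2|\qset|$ term captures the $|\qset|$ connectors.

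The main obstacle is the merging-and-expansion step: naively expanding an anchor run to a full traversal of $Q$ could introduce boundary-adjustment cost of order $\ell_Q$ per path, which sums to $\Theta(n)$ and breaks the bound. The resolution is to choose the anchor run and traversal direction carefully — intuitively, to pick an anchor whose endpoint appearances in $\pi_1$ are already close to the actual path-endpoints $a_Q, b_Q$ — and to charge the boundary-adjustment cost against the original inter-path edges of $\pi_1$ adjacent to the anchor rather than against the length of $Q$. This charging relies crucially on the structural property that weight-$1$ inter-path edges in $\pi_1$ lie between path-endpoints, so that the ``natural'' connectors inherited from $\pi_1$ at the anchor's boundaries can be paired with traversal-direction choices to keep the expansion cheap.
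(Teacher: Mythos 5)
Your first two steps match the paper (shortcut $\pi^*$ to a tour $\pi_1$ on $V'$, bound the number of inter-path edges by combining the weight budget with the endpoint-to-endpoint hypothesis), but the restructuring step is where you diverge and where the gap lies. The paper restructures by iterated \emph{local edge swaps}: for each edge $(u,u')\in E(\qset)$ missing from the current $2$-regular graph $\hat G$, it deletes one non-$Q$ neighbor edge at $u$ and one at $u'$ and adds $(u,u')$ together with a shortcut between the two deleted neighbors. By the triangle inequality each swap costs at most $2\,w(u,u')=2$, \emph{independently of where $(u,u')$ sits along $Q$}, and the number of swaps is at most $|\qset|+(\eps+\eps')n$; the price is that $\hat G$ may split into several disjoint cycles, which is repaired at the end by doubling a Steiner-tree connector in $G_1$ of weight at most $\eps'n$. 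Your construction stays a single tour (so it would not need the Steiner patch), but it trades that for a per-path expansion step whose cost is not controlled.

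Concretely, the ``excise-all-but-one-run-and-expand-the-anchor'' step does not have the bounded per-step cost your accounting needs, and the fix you sketch does not close it. Take $Q=(a,v_1,\dots,v_{100},b)$ whose $Q$-vertices occur in $\pi_1$ as the runs $(a,v_1,v_2)$, $(v_{50})$, $(v_{51})$, $(v_{99},v_{100},b)$, each separated by other paths' vertices. Whichever run you pick as anchor, at most one of its two run boundaries is near a path endpoint: choosing $(a,v_1,v_2)$ makes the $a$-side connector free, but the new connector from $b_Q=b$ to the $\pi_1$-successor of $v_2$ must absorb a detour of length about $98$ that is not reflected in the inter-path-edge cost adjacent to the anchor, and the excision of the other runs gives no compensating credit — shortcutting a non-anchor run through its neighbors only guarantees a \emph{non-negative} saving (triangle inequality yields ``$\le$'', not a strict gap). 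The mirror anchor has the mirror problem. So the crux of your cost accounting, that the merging/boundary-adjustment cost is $O\big(\sum_Q (r_Q-1)\big)$, is not established, and I do not see a choice of anchor or direction that establishes it. The fix is to pay per \emph{missing tree edge} of $Q$ rather than per anchor, so each local step has a triangle-inequality-certified cost of at most $2$ — which is exactly the paper's swap argument.
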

\begin{proof}
Let $\pi^*$ be the optimal TSP-tour that visits all vertices of $V$. We will construct a $\qset$-proper tour that visits all vertices of $V'$ with cost at most $(1+3\eps+4{\epsilon}')n+2|\qset|$ as follows.
	
Let $\pi_1$ be the $V'$-tour obtained from $\pi^*$ by deleting all vertices of $V\setminus V'$ from $\pi^*$, and keeping all vertices of $V'$ in the same order as in $\pi^*$. 
Clearly, $\pi_1$ is a $V'$-tour, and $\cost(\pi_1)\le \cost(\pi^*)\le (1+\eps)n$ due to the triangle inequality. 
We denote by $G'$ the graph on $V'$ that contains every consecutive pair of vertices in $\pi_1$ as an edge. Put in other words, if $\pi_1=(v_{1},\ldots,v_{r},v_{1})$, then $E(G')=\set{(v_{i},v_{i+1})\mid 1\le i\le r}$.
Since $\cost(\pi_1)\le (1+\eps)n$, and $|V'|\ge |E(\qset)|\ge (1-\eps')n$, at most $(\eps+\eps')$ edges of $G'$ have weight more than $1$.
Since for all pairs $v,v'$ of vertices lying on distinct paths of $\qset$, $w(v,v')=1$ implies that $v$ and $v'$ are endpoints of the paths in $\qset$ that they belong to, at most $|\qset|$ edges in $G'$ have weight $1$ but do not belong to $E(\qset)$. Altogether,
at most $|\qset|+(\epsilon+\eps')n$ edges of graph $G'$ do not belong to $E(\qset)$.	
	
We now iteratively modify the graph $G'$ to obtain a graph that contains all edges of $E(\qset)$, as follows.
Throughout, we will maintain a graph $\hat G$ on $V'$, that is initialized to be $G'$. We will ensure that, at any time, $\hat G$ is the union of a set of vertex-disjoint cycles. Since $\pi_1$ is a $V'$-tour, from the definition of $G'$, this condition holds at the beginning of the algorithm.
	The algorithm continues to be executed as long as $E(\qset)\not\subseteq E(\hat G)$.
	We now describe an iteration.
	Let $e=(u,u')$ be an edge in $E(\qset)\setminus E(\hat G)$. Assume that edge $e$ belongs to a path $Q$ of $\qset$. 
	Let $u_1,u_2$ be the neighbors of $u$ in $\hat G$.
	Since $u_1,u_2\ne u$, at least one of the edges $(u_1,u),(u_2,u)$ does not belong to $Q$. Assume without loss of generality that $(u_1,u)$ does not belong to $Q$. We define $u'_1,u'_2$ similarly for $u'$ and assume without loss of generality that $(u'_1,u')$ does not belong to $Q$.
	We then remove the edges $(u_1,u),(u'_1,u')$ from $\hat G$, and add the new edges $(u_1,u'_1),(u,u')$ to $\hat G$. This completes the description of an iteration. Clearly, after this iteration, the degree for every vertex in $\hat G$ is still $2$, so the invariant that $\hat G$ is the union of disjoint cycles still holds.
	Note that after every iteration, the size of $E(\qset)\setminus E(\hat G)$ decreases by at least one. Therefore, the algorithm will eventually terminate in at most $|\qset|+(\epsilon+\eps')n$ iterations.
	
	Let $G''$ be the graph that we obtain at the end of the algorithm.
	Let $G''$ be the union of disjoint cycles $C_1,\ldots,C_t$. 
	Clearly, $G''$ contains all edges of $E(\qset)$, each path $Q\in \qset$ is entirely contained in some cycle $C_j$.
	Note that, after each iteration, the cost of $\hat G$, which is defined as $\cost(\hat G)=\sum_{(u,u')\in E(\hat G)}w(u,u')$, increases by at most $2$. This is because, from the triangle inequality, 
	$w(u_1,u'_1)+w(u,u')-w(u_1,u)-w(u'_1,u')\le  2D(u,u')=2$.
	Therefore, $\cost(G'')\le \cost(G')+2(|\qset|+(\epsilon+\eps')n)\le (1+3\eps+2{\epsilon}')n+2|\qset|$.

	We now use cycles $C_1,\ldots,C_t$ and edges of $G_1$ to construct a $\qset$-proper tour, as follows.
	Let $\hat G_1$ be the graph obtained from $G_1$ by contracting, for each $1\le j\le t$, all vertices of $C_j$ into a single node, that we denote by $v_{C_j}$.
	Note that, since $G_1$ is connected and $|E(\qset)|\ge (1-\eps')n$, $\hat G_1$ is also connected and contains at most ${\epsilon}'n$ vertices. Let $\hat E$ be the set of edges of a Steiner tree in $\hat G_1$ that spans all nodes $v_{C_1},\ldots,v_{C_t}$ with minimum cost, so $|\hat E|\le {\epsilon}'n$. Note that each edge of $\hat E$ is also an edge in $G_1$. Moreover, since for all pairs $v,v'$ of vertices lying on distinct paths of $\qset$, $w(v,v)=1$ implies that $v$ and $v'$ are endpoints of the paths in $\qset$ that they belong to, it is easy to see that each edge of $\hat E$ has to connect an endpoint of some path in $\qset$ to another endpoint of some other path in $\qset$. Consider the graph $G''\cup \hat E$. It is easy to see that, we can construct a tour $\pi$ that visits all vertices of $V'$ and endpoints of all edges in $\hat E$, by using each edge of $G''$ at most once and each edge of $\hat E$ at most twice, such that for each path $Q\in\qset'$, all vertices of $Q$ appear consecutively in $\pi_2$ in the same order as they appear on $Q$.
	Moreover, such a tour can be shortcut into a proper $V'$-tour $\pi$ with at most the same cost. See Figure~\ref{fig:proper_tour} for an illustration.
	Therefore, $\cost(\pi)\le \cost(\pi_2)\le \cost(G'')+2|\hat E|\le (1+3\eps+4{\epsilon}')n+2|\qset|$.
	\begin{figure}[!h]
		\centering
		\subfigure[Before: Paths of $\qset$ are shown in green, cycles are shown in red, and edges of $\hat{E}$ are shown in blue.]{\scalebox{0.47}{\includegraphics{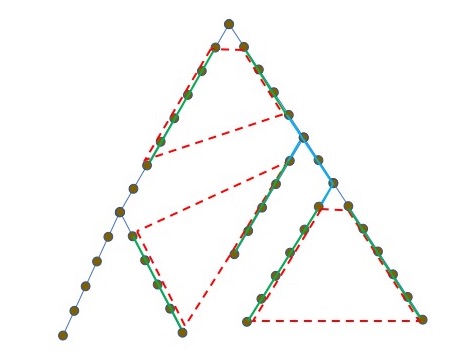}}}
		\hspace{10pt}
		\subfigure[After: The tour $\pi$ that visits all vertices of $V'$ and endpoints of all edges in $\hat E$ is shown in pink. $\pi$ can be further shortcut into a $\qset$-proper tour in a standard way.]{\scalebox{0.47}{\includegraphics{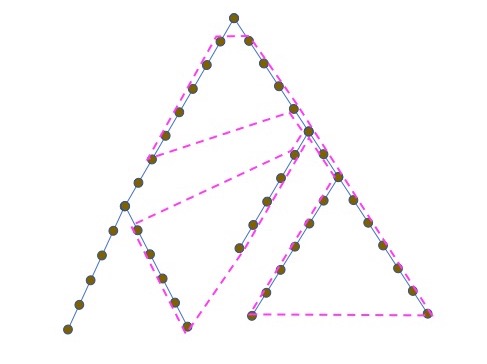}}}
		\caption{An illustration of constructing a $\qset$-proper tour. \label{fig:proper_tour}}
	\end{figure}
\end{proof}

\section{The Algorithm and its Analysis}
\label{sec: G_1 connected}

In this section we present our algorithm and its analysis, completing the proof of \Cref{thm: main G_1 connected}.
At a high level, our algorithm considers the ``bottom part'' (light subtrees) and the ``top part'' (the subtree obtained by peeling off all light subtrees) of an MST separately. If the bottom part is heavy, then we use the first two subroutines to explore random neighborhoods of the bottom parts, and then derive locally either $\tsp$ is bounded away from $2\cdot \mst$ or bounded away from $\mst$. If the ``bottom part'' is light, then almost all cost of the MST belongs to its top part, and then we use the third and the fourth subroutines to reconstruct the MST and then estimate $\tsp$ accordingly.

Parameters: $\epsilon=2^{-100}, \hat\eps=2^{-40}$, $q=n/\hat{\eps}^2$, $\ell=100\sqrt{n}$, $h=\hat \eps \ell/200$ and $\alpha=10/\hat{\eps}$.

\begin{enumerate}
\item \label{G_1_connected_step_1}
We first use the algorithm from~\Cref{clm:number_of_degree_1} to compare the number of degree-$1$ vertices of $G_1$ with $\epsilon n/40$.
If the algorithm reports that the number of degree-$1$ vertices of $G_1$ is greater than $\epsilon n/80$, then we return $2n$ as the estimate of $\opttsp(w)$. 
If the algorithm reports that the number of degree-$1$ vertices of $G_1$ is at most $\epsilon n/20$, go to~\ref{G_1_connected_step_2}.
\item \label{G_1_connected_step_2}
We sample $1/ \epsilon^2$ vertices (with replacement) from $V$ independently and uniformly at random, and perform the subroutine $\local(v,\ell)$ for each sampled vertex $v$.
If at least $\epsilon$-fraction of the subroutines $\local(\cdot,\ell)$ return \textsf{Success}. 
Let $v_1,\ldots,v_{1/ \epsilon^2}$ be the sampled vertices.
For each $1\le i\le 1/ \epsilon^2$, if the subroutine $\local(v_i,\ell)$ returns \textsf{Fail}, then we let $\gamma_i=0$. Otherwise, the subroutine $\local(v_i,\ell)$ returns \textsf{Success} and a maximal $\ell$-light subgraph $S_i$ that contains $v_i$. We apply the algorithm in \Cref{clm:opt_subgraph_reconf} to $S_i$ to compute the optimal reconfiguration cost $\rc(S_i)$ of $S_i$, and then let $\gamma_i=\rc(S_i)/|V(S_i)|$.
We now compare $\sum_{1\le i\le 1/\epsilon^2}\gamma_i$ with $1/80\epsilon$.
If $\sum_{1\le i\le 1/\epsilon^2}\gamma_i\le 1/80\epsilon$, then we return $(2-{\epsilon}/20)n$ as the estimate of $\opttsp(w)$; otherwise we return $2n$ as the estimate of $\opttsp(w)$. 
If less than $\epsilon$-fraction of the subroutines $\local(\cdot,\ell)$ return \textsf{Success}, go to~\ref{G_1_connected_step_3}.
\item \label{G_1_connected_step_3}
We sample $(100\cdot n\log n)/h$ vertices (with replacement) from $V$ independently and uniformly at random, and perform the subroutine $\pbfs(v,h,q,\alpha)$ for each sampled vertex $v$.
Let $\tilde V$ be the set of sampled vertices that the subroutine $\pbfs(\cdot,h,q,\alpha)$ returns \textsf{Success} upon. 
Recall that the subroutine \pbfs\text{ }also returns a path $P_v$ and a graph $H_v$. 
We define graphs $Z=(V,\bigcup_{v\in \tilde V}E(P_v))$ and $H=(V,\bigcup_{v\in \tilde V}E(H_v))$.
We then define vertex sets $V_{\textsf{chunk}}=\bigcup_{v\in \tilde V}V(H_v\setminus P_v)$ and $V_{\textsf{isolated}}=V\setminus (V_{\textsf{chunk}}\cup V(Z))$.
If $|V_{\textsf{isolated}}|\ge 10\hat{\epsilon}n$, go to~\ref{G_1_connected_step_4}.
If $|V_{\textsf{chunk}}|\ge 10\hat{\epsilon}n$, go to~\ref{G_1_connected_step_4'}.
Otherwise,
go to~\ref{G_1_connected_step_5}. 
\item \label{G_1_connected_step_4}
We use the algorithm from \Cref{thm: n^{1.5}-query-2-approx-of-matching-size} with parameter $\hat\epsilon/100$ to obtain an estimate of the size of some maximal matching of $G_1[V_{\textsf{isolated}}]$.
If the estimate is at most $\hat{\epsilon}n$, then we return $2n$ as the estimate of $\opttsp(w)$.
Otherwise, we return $(2-\hat{\epsilon}/200)n$ as the estimate of $\opttsp(w)$.
\item \label{G_1_connected_step_4'}
We compute the maximum size of a matching consisting of only edges that belong to some e-block of $H$.
If the size is at most $2\hat{\epsilon}n$, then we return $2n$ as the estimate of $\opttsp(w)$.
Otherwise, we return $(2-\hat{\epsilon})n$ as the estimate of $\opttsp(w)$.
\item \label{G_1_connected_step_5}
We use the following claim.
\begin{claim}
\label{clm: vertex disjoint induced paths}
We can efficiently find a set $\qset$ of vertex-disjoint induced subpaths of $Z$, such that each path of $\qset$ has length at least $\Omega(\hat\epsilon^2 h/\log^2 n)$, and $\sum_{Q\in \qset}|E(Q)|\ge (1-40\hat{\epsilon})n$.
\end{claim}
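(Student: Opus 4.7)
Reaching Step 5 means that both tests in Step 3 failed, so $|V_{\textsf{isolated}}|<10\hat\epsilon n$ and $|V_{\textsf{chunk}}|<10\hat\epsilon n$; using the partition $V = V_{\textsf{isolated}} \sqcup V_{\textsf{chunk}} \sqcup (V(Z) \setminus V_{\textsf{chunk}})$ we get $|V(Z)| \ge (1-20\hat\epsilon)n$. Since each $Q \in \qset$ will be a vertex-disjoint subpath of $Z$, we have $\sum_{Q\in\qset}|E(Q)| = \sum_{Q\in\qset}(|V(Q)|-1)$, so the conclusion $\sum_{Q}|E(Q)|\ge (1-40\hat\epsilon)n$ reduces to covering all but $O(\hat\epsilon n)$ vertices of $V(Z)$ by long vertex-disjoint induced subpaths.

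The plan is to (i) identify a small set $B \subseteq V(Z)$ of ``branching'' vertices whose removal splits $Z$ into vertex-disjoint induced subpaths/subcycles of $Z$, (ii) break each subcycle by deleting one edge, and (iii) discard the pieces that are too short. Concretely, set $B := \{x \in V(Z) : d_Z(x) \neq 2\}$. Every connected component of $Z \setminus B$ has maximum degree $2$ (so is a path or a cycle), and each such component is an \emph{induced} subpath/subcycle of $Z$: its internal vertices have degree exactly $2$ in $Z$ already, hence no chords emanate from them, while its endpoints can only have additional $Z$-edges into $B$, which lie outside the component's vertex set.

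The core of the argument is to bound $|B|$. Partition $B = B_1 \sqcup B_{\ge 3}$ by whether $d_Z(x) = 1$ or $d_Z(x) \ge 3$. A degree-$1$ vertex of $Z$ must be an endpoint of some $P_v$, so $|B_1| \le 2|\tilde V| = O(n\log n / h)$. To bound $|B_{\ge 3}|$, I exploit the fact that each $P_v$ is the spine of a BFS-tree $T_v^h$ with $|V(T_v^h)| \le \alpha h$: if $x$ is internal to $P_v$, then \emph{every} $G_1$-neighbor of $x$ lies in $V(T_v^h) \subseteq V(P_v) \cup V_{\textsf{chunk}}$ (BFS from $v$ to depth $h$ discovers all such neighbors). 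So any ``third'' edge at $x$ in $Z$, beyond its two $P_v$-edges, must either be a chord of $P_v$ (constrained by the BFS-level rule to join two vertices of $P_v$ at nearly the same BFS-depth from $v$, of which there are $O(1)$ per $v$) or point to a vertex of $V_{\textsf{chunk}}$, or else arise because $x$ is internal to a \emph{second} path $P_{v'}$ that disagrees with $P_v$ at $x$. Summing over $v\in \tilde V$ and over internal $x\in V(P_v)$, and applying Markov's inequality to the multiplicity $k_x = |\{v : x \in V(P_v)\}|$ (whose total is $|\tilde V|(2h+1) = O(n\log n)$), together with $|V_{\textsf{chunk}}|\le 10\hat\epsilon n$, I expect to obtain $|B_{\ge 3}| = O(\hat\epsilon n \log^2 n / (\hat\epsilon^2 h))$ and the same bound on $\sum_{x\in B} d_Z(x)$.

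Finally, let $L := c\hat\epsilon^2 h / \log^2 n$ for a suitable constant $c$ and let $\qset$ be the subpaths obtained from $\qset_0$ (the components of $Z\setminus B$, with cycles broken by one edge) of length $\ge L$. The number of pieces in $\qset_0$ is $O(|B|)$, so the discarded short pieces cover at most $O(|B|\cdot L) = O(\hat\epsilon n)$ vertices, cycle-truncation and $B$-incidence waste another $O(|B|) + \sum_{x\in B}d_Z(x) = O(\hat\epsilon n)$ vertices, and all remaining vertices of $V(Z)$ are covered; hence $\sum_{Q\in\qset}|E(Q)| \ge (1-20\hat\epsilon)n - O(\hat\epsilon n) \ge (1-40\hat\epsilon)n$.

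The hard part will be rigorously justifying the bound on $|B_{\ge 3}|$: the naive estimate $|B_{\ge 3}| \le 2|E(Z)|/3 \le 2h|\tilde V|$ is $O(n\log n)$, which is far too loose. The saving must come from the fact that high overlap between $P_v$'s (which drives $|E(Z)|$ down) is enforced by the $|V(T_v^h)|\le \alpha h$ constraint and the $V_{\textsf{chunk}}$ bound: essentially, each branching event at an internal $x$ of $P_v$ must either consume a $V_{\textsf{chunk}}$ vertex or be a rare BFS-chord, and the counting of these two contributions has to be done with care to avoid over-counting across different $v\in \tilde V$.
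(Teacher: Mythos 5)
Your global decomposition (remove $B=\{x\in V(Z): d_Z(x)\ne 2\}$, take the path/cycle components, break each cycle at one edge, discard pieces shorter than $L$) differs from the paper's proof, which works per path: it calls $P_v\in\pset$ \emph{bad} if $P_v$ carries more than $20\log n/\hat\eps$ vertices of $Z$-degree $\ne 2$, argues that bad paths are a small fraction of $\pset$ and thus have total length $\le 10\hat\eps n$, and then for each good $P_v$ splits $P_v$ at its few non-degree-$2$ vertices and keeps the long pieces, losing another $\le 10\hat\eps n$ across all good paths. Your bookkeeping downstream of the $|B|$ bound is fine, so the proposal stands or falls on that single estimate.

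That estimate is a genuine gap, and you flag it yourself. The bound $|B_{\ge 3}|=O(\hat\eps n\log^2 n/(\hat\eps^2 h))$ and the matching bound on $\sum_{x\in B}d_Z(x)$ are asserted, not proved, and the sketch has a concrete flaw: you claim that chords of $P_v$ (edges of $G_1$ joining two vertices of $P_v$) are limited by the BFS-level rule to ``$O(1)$ per $v$.'' In fact the BFS-level rule only forces the two endpoints of a chord to differ by at most one in level, and $P_v$ has two vertices at each level $1,\ldots,h$, so there can be $\Theta(h)$ admissible chord pairs per $v$; if even a constant fraction of these were realized as $Z$-edges you would get $\Omega(h\cdot|\tilde V|)=\Omega(n\log n)$ branching vertices, far above the $O(\hat\eps n)$ you need. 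The saving you gesture at (overlap between $P_v$'s being forced by $|V(T^h_v)|\le\alpha h$ and the $V_{\textsf{chunk}}$ budget) is indeed where the argument must come from, but you have not shown how to charge each element of $B_{\ge 3}$ to a chunk vertex or to a bounded pool of chords without over-counting across $v\in\tilde V$, and the paper's per-path phrasing is precisely what lets it localize this charge to each $P_v$ and its $O(\alpha h)$-vertex BFS tree rather than attempting a single global count over $Z$. Until you supply a rigorous argument for $|B_{\ge 3}|$ and $\sum_{x\in B}d_Z(x)$, the proof is incomplete.
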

\begin{proof}
Recall that graph $Z$ is obtained from taking the union of paths in $\pset=\set{P_v\mid v\in \tilde V}$, where $P_v$ is the support path returned by the subroutine $\pbfs(v,h,q,\alpha)$. Also recall that $|V(Z)|= |V|-|V_{\textsf{isolated}}|-|V_{\textsf{chunk}}|\ge (1-20\hat{\eps})n$.
We say that a path $P\in \pset$ is \emph{bad} iff the number of vertices $u$ in $P$ with $\deg_R(u)\ne 2$ is more than $20\log n/\hat{\eps}$; otherwise we say that $P$ is \emph{good}.
Clearly, set $\pset$ contains at most $\hat{\eps}/(20\log n)\cdot|\pset|$ bad paths.
Therefore, the total length of bad paths in $\pset$ is at most $\hat{\eps}/(20\log n)\cdot|\pset|\cdot 2h\le \hat{\eps}/(20\log n)\cdot(100n\log n/h)\cdot 2h\le 10\hat \eps n$. 
Consider now a good path $P\in \pset$, and let $P_1,\ldots,P_t$ be the subpaths of $P$ obtained by deleting all vertices $u$ of $P$ with $\deg_R(u)\ne 2$.
We denote $\sset(P)=\set{P_1,\ldots,P_t}$, and say that a path in $\sset(P)$ is \emph{short} iff its length is at most $\hat{\eps}^2 h/(20\log n)^2$; otherwise we say it is \emph{long}.
Clearly, paths $P_1,\ldots,P_t$ are induced subpaths of $Z$. And it is easy to show that the total length of all long paths in $\sset(P)$ is at least $(1-\hat{\eps}/(20\log n))|E(P)|$, namely $$\sum_{1\le i\le t}\mathbf{1}[|E(P_i)|\ge \hat{\eps}^2 h/(20\log n)^2]\cdot |E(P_i)|\ge (1-\hat{\eps}/(20\log n))\cdot |E(P)|.$$
Therefore, the total length of short subpaths in set $\bigcup_{P\in\pset: P\text{ is good}}\sset(P)$ is at most $10\hat{\eps} n$.
Let $\qset$ be the set that contains, for each good path $P\in \pset$, all long subpaths in $\sset(P)$. Then clearly the paths in $\qset$ are vertex-disjoint, and each path of $\qset$ has length at least $\hat \eps^2 h/(20\log n)^2$. Moreover, the total length of all paths in $\qset$ is at least $|V(Z)|-10\hat{\eps}n-10\hat{\eps}n\ge (1-40\hat{\eps})n$.
\end{proof}

We use the algorithm from \Cref{clm: vertex disjoint induced paths} to find a set $\qset$ of vertex-disjoint induced subpaths in $Z$, each of length at least $\Omega(\hat\epsilon^2 h/\log^2 n)$ (so $|\qset|\le  O(n\log^2n/\ell\hat{\eps}^2)$).
We denote $V'=\bigcup_{Q\in \qset}V(Q)$, so $|V'|\ge (1-40\hat{\epsilon})n$.
We then use the algorithm from~\Cref{clm: proper tour cost} to compute the minimum cost of a $\qset$-proper tour. If the cost is at most $(2-100\hat{\epsilon})n$, then we return $(2-\hat{\epsilon})n$ as the estimate of $\opttsp(w)$; otherwise we return $2n$ as the estimate of $\opttsp(w)$.
\end{enumerate}

\paragraph{Proof of Correctness.}
We first count the total number of queries performed by the algorithm. 
The algorithm in \Cref{clm:number_of_degree_1} performs $O(n/\epsilon)=O(n)$ queries.
From \Cref{obs: number_of_queries_local}, the $1/\epsilon^2$ executions of the subroutine $\local(\cdot,\ell)$ perform in total $O(n\ell/\epsilon^2)=O(n^{1.5})$ queries. 
From \Cref{obs: PBFS queries}, the $(100\cdot n\log n)/h$ executions of the subroutine $\pbfs(\cdot,h,q,\alpha)$ perform in total $O((q+\alpha^2h^2)\cdot\sqrt{n}\log n)=\tilde O(n^{1.5})$ queries. 
The algorithm from \Cref{thm: n^{1.5}-query-2-approx-of-matching-size} performs at most $O(n^{1.5}/\hat\eps^2)=O(n^{1.5})$ queries. 
The algorithm in \Cref{clm: proper tour cost} performs $\tilde O(n)$ queries.
Therefore, the whole algorithm performs $\tilde O(n^{1.5})$ queries in total.

We now show that the output of the algorithm is an $(2-\Omega({\epsilon}))$-approximation of $\opttsp(w)$. We start by defining several bad events and analyzing their probabilities.

\paragraph{Bad Event $\xi_1$.} 
We say that the bad event $\xi_1$ happens, iff 
\begin{itemize}
	\item either the number of $\ell$-light vertices in $G_1$ is at least $2\epsilon n$, namely $\sum_{v\in L_{\ell}}|V(S_v)|\ge  2\epsilon n$, while less than $\epsilon$-fraction of executions of $\local(\cdot,\ell)$ return \textsf{Success};
	\item or the number of $\ell$-light vertices in $G_1$ is at most $\epsilon n/2$, namely $\sum_{v\in L_{\ell}}|V(S_v)|\le \epsilon n/2$, while at least $\epsilon$-fraction executions of $\local(\cdot,\ell)$ return \textsf{Success}.
\end{itemize}
From Chernoff's bound, 
$\Pr[\xi_1]\le e^{-\Omega(n)}$.

\paragraph{Bad Event $\xi_2$.} 
We say that the bad event $\xi_2$ happens iff either the algorithm in \Cref{clm:number_of_degree_1} reports incorrectly, or the algorithm from \Cref{thm: n^{1.5}-query-2-approx-of-matching-size} reports incorrectly. 
From \Cref{clm:number_of_degree_1} and  \Cref{thm: n^{1.5}-query-2-approx-of-matching-size}, $\Pr[\xi_2]\le O(n^{-10})$.

\paragraph{Bad Event $\xi_3$.} 
We say that the bad event $\xi_3$ happens, iff 
\begin{itemize}
	\item either $\sum_{v\in L_{\ell}}\rc(S_v)\ge {\epsilon}n/40$, while $\sum_{1\le i\le 1/{\epsilon}^2}(\rc(S_i)/|V(S_i)|)\le 1/80\epsilon$;
	\item or $\sum_{v\in L_{\ell}}\rc(S_v)\le {\epsilon}n/160$, while $\sum_{1\le i\le 1/{\epsilon}^2}(\rc(S_i)/|V(S_i)|)\ge 1/80\epsilon$.
\end{itemize}
We define the function $\gamma: V\to \mathbb{R}$ as follows. If $v$ is $\ell$-light, then $\gamma(v)=\rc(S_v)/|V(S_v)|$, otherwise $\gamma(v)=0$. It is clear that $0\le \gamma(v)\le 1$ for all vertices $v\in V$. Therefore, if vertex $v$ is chosen uniformly at random from $V$, $\gamma(v)$ is a random variable in $[0,1]$, and moreover,
\[\mathbb{E}_{v\sim V}[\gamma(v)]=\sum_{v'\in L_{\ell}}\frac{|S_{v'}|}{n}\cdot \frac{\rc(S_{v'})}{|S_{v'}|}=\frac{\sum_{v'\in L_{\ell}}\rc(S_{v'})}{n}.\]
Note that the values $\gamma_1,\ldots,\gamma_{1/{\epsilon}^2}$ that we obtained in the algorithm are $1/{\epsilon}^2$ independent samples of the random variable $\gamma(v)$.
Therefore, from Chernoff's bound, $\Pr[\xi_3]\le e^{-\Omega(n)}$.


We denote by $X$ the output of the algorithm.
We will show that, if none of the bad events $\xi_1, \xi_2,\xi_3$ happens, then $\opttsp(w)\le X\le (2-\Omega({\epsilon}))\cdot\opttsp(w)$.
Recall that, from \Cref{lem: naive bounds}, $n\le \opttsp(w)\le 2n-2$. 

First of all, if the algorithm from \Cref{clm:number_of_degree_1} reports that $G_1$ contains at least ${\epsilon}n/80$ degree-1 vertices.
Since event $\xi_2$ does not happen, and from \Cref{clm: degree_1 vertices implies lower bound}, $\opttsp(w)\ge (1+{\epsilon}/160)n$. Therefore, our estimate $X=2n$ in this case is a $(2-\Omega({\epsilon}))$-approximation of $\opttsp(w)$. Assume from now on that algorithm in \Cref{clm:number_of_degree_1} reports that $G_1$ contains at most ${\epsilon}n/20$ degree-1 vertices.
We now distinguish between the following two cases, depending on the number of executions of the subroutine 
$\local(\cdot,\ell)$ that return \textsf{Success} in Step~\ref{G_1_connected_step_2}.

\subsection{Case 1. At least $\eps$-fraction of the executions of $\local(\cdot,\ell)$ return \textsf{Success}}

Since event $\xi_1$ does not happen, $\sum_{v\in L_{\ell}}|V(S_v)|\ge {\epsilon}n/2$.  We distinguish between the following subcases.

\paragraph{Case 1.1. $\sum_{1\le i\le 1/{\epsilon}^2}\gamma_i\le 1/80{\epsilon}$.} 
Since event $\xi_3$ does not happen, $\sum_{v\in L_{\ell}}\rc(S_v)\le {\epsilon}n/40$. We show in the following claim that $\opttsp(w)\le (2-{\epsilon}/20)n$, so our estimate $X=(2-{\epsilon}/20)n$ is a $(2-\Omega({\epsilon}))$-approximation of $\opttsp(w)$.

\begin{claim}
\label{clm:tsp_upper}
If $\sum_{v \in L_{\ell}} \card{V(S_v)} \ge {\epsilon} n/2$, $\sum_{v \in L_{\ell}} \rc(S_v) \le {\epsilon}n/40$ and $G_1$ contains at most $\epsilon n/20$ degree-1 vertices, then $\opttsp(w) \le (2- \epsilon/20)n$.
\end{claim}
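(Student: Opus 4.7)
The plan is to construct an Eulerian connected spanning multigraph $H^*$ on $V$ whose total weight is at most $(2-\epsilon/20)n$; then shortcutting an Euler tour of $H^*$ produces a TSP tour of no larger cost.

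First, since $G_1$ is connected, fix a minimum spanning tree $T$ of $G_w$ consisting entirely of weight-$1$ edges. By a standard greedy exchange, we may additionally assume that for every $v\in L_\ell$, the restriction $T\cap S_v$ is a spanning tree of $S_v$ and has weight exactly $|V(S_v)|-1$. The doubled tree $2T$ is Eulerian with weight $2(n-1)$, the familiar baseline.

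Next, for each $v\in L_\ell$, modify the portion of $2T$ inside $S_v$ using the optimal reconfiguration $R_v$ guaranteed by $\rc(S_v)$. Specifically, remove the doubled copies of the $T\cap S_v$ edges (saving $2(|V(S_v)|-1)$), add the edges of $R_v$ once each, and for every odd-degree vertex $u\in V_1(R_v)$ attach a detour to a closest external witness $u^*\notin V(S_v)$ of weight $\ord(u)$. The parity conditions work out: since $|V_1(R_v)|$ is even, the odd-degree vertices can be paired and their detours routed so that every external endpoint acquires an even number of additional incidences, keeping the overall multigraph Eulerian; by the very definition of $\rc(S_v)=w(R_v)+\tfrac12\sum_{u\in V_1(R_v)}\ord(u)-(|V(S_v)|-1)$, the local cost of this substitution is exactly $\rc(S_v)+(|V(S_v)|-1)$ in place of the original $2(|V(S_v)|-1)$. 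Summing, the resulting multigraph $H^*$ satisfies
\[
w(H^*) \;\le\; 2(n-1) \;-\; \sum_{v\in L_\ell}\bigl(|V(S_v)|-1\bigr) \;+\; \sum_{v\in L_\ell}\rc(S_v).
\]

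It remains to estimate the savings $\sum_{v\in L_\ell}(|V(S_v)|-1)$. A maximal $\ell$-light vertex $v$ with $|V(S_v)|=1$ is necessarily a degree-$1$ vertex of $G_1$, and by hypothesis there are at most $\epsilon n/20$ such vertices. For every other $v$, $|V(S_v)|\ge 2$ and hence $|V(S_v)|-1\ge |V(S_v)|/2$, so
\[
\sum_{v\in L_\ell}(|V(S_v)|-1) \;\ge\; \tfrac{1}{2}\!\left(\sum_{v\in L_\ell}|V(S_v)| - \tfrac{\epsilon n}{20}\right) \;\ge\; \tfrac{1}{2}\!\left(\tfrac{\epsilon n}{2}-\tfrac{\epsilon n}{20}\right) \;=\; \tfrac{9\epsilon n}{40}.
\]
Combining with $\sum_{v\in L_\ell}\rc(S_v)\le \epsilon n/40$, we obtain $w(H^*) \le 2n-2-\tfrac{9\epsilon n}{40}+\tfrac{\epsilon n}{40} = 2n-2-\tfrac{\epsilon n}{5} \le (2-\epsilon/20)n$, as required.

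The main technical obstacle is the simultaneous Eulerian augmentation in Step 2: one must pair up the odd-degree vertices of each $R_v$ with suitable external witnesses globally so that the extra detour edges contribute exactly the $\tfrac{1}{2}\sum\ord(u)$ term featured in the reconfiguration cost, while every external vertex retains even degree. This is a consistency argument using the triangle inequality together with the flexibility in choosing external witnesses, and is the only nontrivial step; once it is in place, the arithmetic above yields the claim comfortably.
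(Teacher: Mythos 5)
Your approach is genuinely different from the paper's. The paper does not attempt to build an Eulerian multigraph directly; it defines an ordinary graph $Y$ with edge set $E_{G_1}(V')\cup R\cup\Gamma$, shows via \Cref{obs: bridges bounded by RC} and \Cref{lem: matching_in_tree} that $Y$ has a large matching of non-bridge edges, invokes \Cref{lem:matching-TSP} as a black box to bound $\gtsp(Y)\le(2-\epsilon/10)n$, and finally compares the metric cost of that graphic tour with its graphic cost. Your direct multigraph construction is a reasonable idea, but it has two gaps.

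First, your cost accounting is off. After the substitution inside $S_v$, the multigraph contains each edge of $R_v$ once (weight $w(R_v)$) and one detour edge per odd-degree vertex $u$ (weight $\ord(u)$), so the local weight is $w(R_v)+\sum_{u\in V_1(R_v)}\ord(u)$, not $w(R_v)+\tfrac12\sum_{u}\ord(u)=\rc(S_v)+(|V(S_v)|-1)$ as you claim. The missing $\tfrac12\sum_u\ord(u)$ term is precisely the half of the boundary cost that, in the lower-bound direction (\Cref{clm:tsp_lower}), is charged to the outside of $S_v$; it does not disappear when you build the multigraph. Fortunately this particular error is patchable: since $R_v$ is a connected multigraph with weights $\ge1$, we have $w(R_v)\ge|V(S_v)|-1$ and hence $\tfrac12\sum_u\ord(u)\le\rc(S_v)$, so the savings per $S_v$ is at least $(|V(S_v)|-1)-2\rc(S_v)$, and the arithmetic still closes: $w(H^*)\le 2n-2-\tfrac{9\epsilon n}{40}+\tfrac{2\epsilon n}{40}<(2-\epsilon/20)n$.

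The more serious gap is the Eulerian parity at the external endpoints of the detour edges, which you dismiss in a sentence. Adding a single edge $(u,u^*)$ for each $u\in V_1(R_v)$ makes $u$ even but increments $\deg_{H^*}(u^*)$ by one; since $u^*$ already has even degree from the doubled tree, it becomes odd. Your claim that ``the detours can be routed so that every external endpoint acquires an even number of additional incidences'' is not substantiated: the closest external witnesses of distinct odd-degree vertices can be distinct, so they need not pair up; if you force two detours to share an endpoint, the weight is no longer $\ord(u)$; and if you instead route detours through a common hub such as $\hat v$ (the external bridge endpoint), the edge weight $w(u,\hat v)$ is only lower-bounded, not upper-bounded, by $\ord(u)$. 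Repairing parity by adding further paths between the odd external vertices introduces weight that is not controlled by $\sum\rc(S_v)$. The paper sidesteps all of this by never constructing an Eulerian multigraph: the matching lemma handles connectivity and parity implicitly inside $\gtsp(Y)$. To make your route work you would need a concrete and cost-controlled parity-repair step, and that is the missing ingredient.
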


\begin{proof}
For each vertex $v\in L_{\ell}$, let $R_v$ be a reconfiguration of the subgraph $S_v$ with minimum cost. 
Recall that set $V_1(R_v)$ contains all vertices with odd degree in multi-graph $R_v$.
For each vertex $u\in V_1(R_v)$, let $u^*$ be a vertex of $V\setminus V(S_v)$ with minimum distance from $u$, namely $u^*=\arg\min_{u'\notin V(S_v)}\set{w(u,u^*)}$ and $w(u,u^*)=\ord(u)$. 
Denote $\Gamma_v=\set{(u,u^*)\mid u\in V_1(R_v)}$. Denote $R = \bigcup_{v \in L_{\ell}} R_v$ and $\Gamma = \bigcup_{v \in L_{\ell}} \Gamma_v$.
Denote $V'=V \setminus (\bigcup_{v\in L_{\ell}}V(S_v))$.
We define a graph $Y$ such that $V(Y)=V$ and $E(Y)=E_{G_1}(V')\cup R \cup \Gamma$, where $E_{G_1}(V')$ is the set of edges in $G_1$ with both endpoints in $V'$.
To show that $\opttsp(w)$ is bounded away from $2n$, we will first show that $\gtsp(Y)$, the graphic-TSP cost of $Y$, is bounded away from $2n$, and then we show that the difference between $\opttsp(w)$ and $\gtsp(Y)$ is small. 

We first show that $\gtsp(Y)\le (2- \epsilon /10)n$.	
Note that here we view all edges of $Y$ as weight-$1$ edges.	
We use the following observation.
\begin{observation}
\label{obs: bridges bounded by RC}
For each vertex $v\in L_{\ell}$, $R_v$ contains at most $\rc(S_v)$ edges that are bridges in $Y$. 
\end{observation}
\begin{proof}
Let $H_v$ be the graph obtained by contracting all vertices of $V\setminus V(S_v)$ into a single node $v'$, so $E(H_v)=R_v\cup \Gamma_v$. From the definition of a reconfiguration and set $\Gamma_v$, it is easy to see that graph $H_v$ is Eulerian graph.
Therefore, the only edges of $R_v$ that are possibly bridges in $Y$ are the edges that have more than one copies contained in $R_v$. Note that, if we delete one copy for each of these edges from $R_v$, then the remaining edges in $R_v$ still induce a connected graph on $V(S_v)$. Therefore, the number of such edges is at most $|R_v|-(|V(S_v)|-1)\le \rc(S_v)$.
\end{proof}

Now fix a vertex $v \in L_{\ell}$. 
Let $T_v$ be a spanning tree of the graph $(V(S_v),R_v)$. Note that such a spanning tree exists since $R_v$ is connected.
Also note that, from the definition of sets $R_v$ and $\Gamma_v$, each vertex in $S_v$ is incident to at least two edges of $R_v \cup \Gamma_v$. Therefore, each leaf of $T_v$ is incident to an edge of $(R_v \cup \Gamma_v) \setminus E(T_v)$. Let $\hat L_v$ be the number of leaves in $T_v$, then $\hat L_v \le 2 (|R_v|+|\Gamma_v|-|E(T_v)|) \le 4\cdot \rc(S_v)$. 
From \Cref{lem: matching_in_tree}, $T_v$ contains a matching of size at least $(\card{V(S_v)}-\hat L_v)/2$.
From \Cref{obs: bridges bounded by RC}, we get that $R_v$ contains a matching consisting of only non-bridge edges with size at least 
$$\frac{\card{V(S_v)}-\hat L_v}{2}-\rc(S_v) \ge \frac{\card{V(S_v)}-4\cdot \rc(S_v)}{2}-\rc(S_v) = \frac{\card{V(S_v)}}{2} -3\cdot\rc(S_v).$$ 
Since sets $\set{R_v}_{v\in L_{\ell}}$ are mutually disjoint, $Y$ contains a matching of size 
\[
\sum_{v \in L_{\ell}:\card{V(S_v)}>1} \left(\frac{\card{V(S_v)}}{2}-2\cdot \rc(S_v)\right) \ge \frac{1}{2}\cdot \sum_{v \in L_{\ell}} \card{V(S_v)} -\frac{n}{2\ell} - 3\cdot \sum_{v \in L_{\ell}} \rc(S_v) \ge \left( \frac{{\epsilon}}{4}-\frac{3{\epsilon}}{40}-\frac{1}{2\ell} \right)n, 
\]
consisting of only non-bridge edges.
From \Cref{lem:matching-TSP}, $\gtsp(Y) \le (2- \epsilon/10)n$.

We now show that $\opttsp(w)-\gtsp(Y) \le  \epsilon n/20$. Note that, together with $\gtsp(Y) \le (2- \epsilon/10)n$, this implies that $\opttsp(w) \le (2- \epsilon/20)n$, thus completing the proof of \Cref{clm:tsp_upper}.
Let $\pi=(v_1,v_2,\ldots,v_{n'},v_1)$ be the optimal graphic-TSP-tour in $Y$, where possibly $n'>n$ since for each $1\le i\le n'$, $(v_i,v_{i+1})$ is required to be an edge of $Y$, so $\pi$ may visit some vertices more than once.
Moreover,
\[
\begin{split}
\cost(\pi)-\cost_g(\pi)
&= \sum_{1\le i\le n'} (w(v_i,v_{i+1})-1)\\
&= \sum_{(v_i,v_{i+1})\in E_{G_1}(V')} (w(v_i,v_{i+1})-1)+ \sum_{(v_i,v_{i+1})\in R\cup \Gamma} (w(v_i,v_{i+1})-1)\\
&= \sum_{v\in L_{\ell}}\sum_{(v_i,v_{i+1})\in R_v\cup \Gamma_v} (w(v_i,v_{i+1})-1)\\
&= 2\cdot\sum_{v\in L_{\ell}}\rc(S_v)\le \eps n/20,
\end{split}
\]
where $\cost_g(\pi)$ is the graphic-TSP-cost of the tour $\pi$, namely the cost of $\pi$ if the weight of each edge is $1$. Clearly, this implies that $\opttsp(w)-\gtsp(Y) \le  \epsilon n/20$.
\end{proof}

\paragraph{Case 1.2. $\sum_{1\le i\le 1/{\epsilon}^2}\gamma_i\ge 1/80{\epsilon}$.} 
Since event $\xi_3$ does not happen, $\sum_{v\in L_{\ell}}\rc(T_v)\ge {\epsilon}n/160$. We show in the next claim that $\opttsp(w)\ge (1+{\epsilon}/1120)n$, so our estimate $X=2n$ in this case is a $(2-\Omega({\epsilon}))$-approximation of $\opttsp(w)$.

\begin{claim}
	\label{clm:tsp_lower}
	$\opttsp(w) \ge n + (\sum_{v \in L_{\ell}} \rc(S_v))/7$. 
\end{claim}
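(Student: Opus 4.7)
The plan is to relate the TSP tour $\pi$ locally to each maximal $\ell$-light subgraph $S_v$, construct from $\pi$ a reconfiguration $R_v$ of $S_v$ whose cost upper bounds $\rc(S_v)$, and then sum these per-subgraph inequalities with a careful accounting of how tour edges are shared across different subgraphs.

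For each $v \in L_\ell$, I will restrict $\pi$ to $V(S_v)$: this yields a collection of maximal consecutive sub-sequences (``runs'') $\sigma_1, \ldots, \sigma_{k_v}$ with entry/exit endpoints $u_j^{\text{in}}, u_j^{\text{out}}$. Let $C_v^{\text{in}}$ denote the total weight of internal tour edges (those with both endpoints in $V(S_v)$), and let $B_v = \sum_j (\ord(u_j^{\text{in}}) + \ord(u_j^{\text{out}}))$ denote the sum of out-reach distances over run endpoints. The first step is to build $R_v$ by taking the internal edges of the $k_v$ runs and adding the $k_v$ virtual edges $(u_j^{\text{out}}, u_{j+1}^{\text{in}})$ (indices cyclic), which produces a connected Eulerian multigraph on $V(S_v)$ (all degrees even). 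By the definition of reconfiguration cost combined with the triangle inequality $w(u_j^{\text{out}}, u_{j+1}^{\text{in}}) \le$ (weight of the $j$-th excursion of $\pi$ at $S_v$), this yields the per-subgraph bound $\rc(S_v) \le C_v^{\text{in}} + \sum_j w(u_j^{\text{out}}, u_{j+1}^{\text{in}}) - (|V(S_v)| - 1)$.

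Next I will derive a global ``vertex-contribution'' lower bound on $\opttsp(w)$: summing the lower bounds $w(u_{\text{prev}}, u) + w(u, u_{\text{next}}) \ge 2$ for interior vertices of runs, $\ge 1 + \ord(u)$ for run endpoints, and $\ge 2\ord(u)$ for singleton runs, over all $u \in V(S_v)$ and over all $v \in L_\ell$, yields $2\opttsp(w) \ge 2n + \sum_v (B_v - 2 k_v)$, i.e.\ a ``baseline'' improvement $\opttsp(w) \ge n + \sum_v (B_v/2 - k_v)$. This already controls the easy part of the bound, corresponding to subgraphs that the tour touches at vertices with large $\ord$.

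The main remaining task, which I expect to be the technical heart of the argument, is combining the per-$v$ reconfiguration inequality with the vertex-contribution bound to produce the constant $1/7$. The crucial structural fact is that maximal $\ell$-light subgraphs are pairwise vertex-disjoint and each is attached to the rest of $G_1$ through a single bridge, so each tour edge is incident to at most two such subgraphs and hence charged to at most two of the per-$v$ reconfiguration inequalities. After rearranging and using this bounded-charging property, the per-$v$ inequality transforms into $\rc(S_v) \le c \cdot (\text{local tour excess at } S_v)$ for an absolute constant $c \le 7$, where the local excess measures by how much the tour cost attributable to $S_v$ exceeds $|V(S_v)|$. Summing over $v$ and using the double-counting bound $\sum_v (\text{local excess at } S_v) \le 2(\opttsp(w) - n)$ yields $\sum_v \rc(S_v) \le 7 \cdot (\opttsp(w) - n)$, which rearranges to the claim. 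The main obstacle will be verifying the per-$v$ inequality with constant $7$ in the regime where the tour enters $V(S_v)$ at or near the bridge vertex, where $\ord = 1$ makes the vertex-contribution bound weakest; handling this requires exploiting that only the single bridge vertex of $V(S_v)$ can have $\ord = 1$ while all other vertices have $\ord \ge 2$, which forces $\rc(S_v)$ itself to be small in such regimes and keeps the per-$v$ ratio bounded.
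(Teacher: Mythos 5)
Your first two steps are on solid ground: restricting $\pi$ to $V(S_v)$ to get runs, building a reconfiguration from the internal edges, and deriving the vertex-contribution lower bound $2\opttsp(w)\ge 2n+\sum_v(B_v-2k_v)$ are all correct. You also correctly note the only-one-$\ord$-$1$-vertex fact, which the paper uses implicitly.

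The gap is in your choice of reconfiguration and the unproved bridging step. Your Eulerian $R_v$ has cost $C_v^{\text{in}}+\sum_j w(u_j^{\text{out}},u_{j+1}^{\text{in}})-(|V(S_v)|-1)$, and the only control you have on $w(u_j^{\text{out}},u_{j+1}^{\text{in}})$ is the triangle-inequality bound by the weight of the entire $j$-th excursion of $\pi$ away from $S_v$. That excursion can traverse arbitrarily many other subgraphs and far-away vertices, so its weight is not a ``local tour excess at $S_v$'': summing the per-$v$ bounds as written gives $\sum_v\rc(S_v)\le |L_\ell|\cdot\cost(\pi)$, which is useless. The ``bounded-charging'' observation you invoke (each tour edge incident to at most two $S_v$'s) controls which $C_v^{\text{in}}$'s an edge appears in, but it does \emph{not} bound the excursion weights, since an excursion away from $S_v$ passes through tour edges that are not incident to $V(S_v)$ at all. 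There is no inequality of the promised form $\rc(S_v)\le c\cdot(\text{local excess})$ available from the Eulerian $R_v$, and your third paragraph is where this needs to be, but it is asserted rather than derived. (The arithmetic there is also off: $\rc(S_v)\le c\cdot(\text{excess})$ and $\sum_v(\text{excess})\le 2(\opttsp-n)$ give $14$, not $7$, for $c=7$.)

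The paper's proof sidesteps exactly this problem by \emph{not} making $R_v$ Eulerian. It takes $R_v=(\text{internal tour edges})\cup(\text{two copies each of }k_v-1\text{ tree edges }E'_v\text{ of }S_v\text{ chosen to make the graph connected})$, leaves the run endpoints odd, and pays $\tfrac12\sum_{u\in V_1(R_v)}\ord(u)$. Crucially, $\ord(u)$ is bounded by the weight of the single crossing tour edge at $u$ — this is genuinely local — and the remaining cost beyond the internal tour edges is only $2|E'_v|=2(k_v-1)$, which is counted, not charged via excursions. The paper then introduces a half-weight accounting $\cost_\pi(\cdot)$ (each crossing tour edge contributes half its weight to each side) so that $\cost_\pi(V')+\sum_v\cost_\pi(S_v)$ recovers $\cost(\pi)$, proves $\cost(R_v)\le\cost_\pi(S_v)-(|V(S_v)|-1)$, and bounds $|L_\ell|\le 2\alpha$ and $\sum_v|E'_v|\le 2\alpha$ (each new visit to some $S_v$ forces a crossing tour edge of weight $\ge 2$, since $e_v$ is the unique weight-$1$ boundary edge, and each such edge is shared by at most two subgraphs). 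The constant $7$ comes from $1+2\cdot 2+2=7$ in the final chain. If you replace your Eulerian $R_v$ with the paper's, your vertex-contribution bound is exactly absorbed by the $\cost_\pi$ accounting, and the rest of your plan goes through; with the Eulerian $R_v$ there is no route to a $7$-type constant.
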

\begin{proof}
	Let $\pi$ be the optimal TSP-tour that traverses all vertices of $V$, and we denote $\alpha=\cost(\pi)-n$.
	We will construct, for each maximal $\ell$-light vertex $v\in L_{\ell}$, a reconfiguration $R_v$ for its maximal $\ell$-light subgraph $S_v$, such that $\sum_{v\in L_{\ell}}\cost(R_v)\le 7\alpha$. Note that this implies  
	$\opttsp(w)\ge n+(\sum_{v\in L_{\ell}}\rc(S_v))/7$.
	
	
	Let $E(\pi)$ be the set that contains, for every pair of vertices that appear consecutively in the tour $\pi$, an edge connecting them. In other words, if $\pi=(v_1,v_2,\ldots,v_n,v_1)$, then $E(\pi)=\set{(v_i,v_{i+1})\mid 1\le i\le n}$.
	We use the following simple observation.
	\begin{observation}
		\label{obs: number of light subgraphs}
		$|L_{\ell}|\le 2\alpha$.
	\end{observation}
	\begin{proof}
		From the definition of maximal $\ell$-light subgraphs, it is easy to see that, for each $v\in L_{\ell}$, there is at least one edge in $E(\pi)$ connecting a vertex of $S_v$ to a vertex of $S_v$. Note that such an edge does not belong to $G_1$, so it has cost at least $2$. Since $G_1$ contains $|L_{\ell}|$ maximal $\ell$-light subgraphs, $E(\pi)$ contains at least $|L_{\ell}|/2$ such edges, so $n+\alpha=\cost(\pi)\ge n+|L_{\ell}|/2$. It follows that $|L_{\ell}|\le 2\alpha$.
	\end{proof}

	Let $v\in L_{\ell}$ be a maximal $\ell$-light vertex, and let $S_v$ be its maximal $\ell$-light subgraph $S_v$. 
	We construct the reconfiguration $R_v$ as follows.
	Starting with $R_v=\emptyset$, we first add all edges of $E(\pi)$ with both endpoints in $S_v$ to $R_v$. Assume that the graph on $V(S_v)$ induced by these edges contain $k$ connected components. We then select a set $E'_v$ of $(k-1)$ edges in $E(S_v)$, such that, together with edges of $E(\pi)$ whose both endpoints belong to $V(S_v)$, they induce a connected graph on $V(S_v)$. 
	Since $S_v$ is connected, such a set $E'_v$ clearly exists. 
	We then add two copies for each edge of $E'_v$ to $R_v$. This completes the construction of $R_v$.
	Clearly, $R_v$ is a valid reconfiguration.
	Note that, for a vertex $u\in S_v$, $\deg_{R_v}(u)$ is odd iff there is an edge $(u,u')$ in $E(\pi)$ connecting $u$ with some other vertex $u'\notin S_v$.

	We denote $V'=V \setminus (\bigcup_{v\in L_{\ell}}V(S_v))$, and define
	$$\cost_{\pi}(V')=\sum_{(u,u')\in E(\pi):\text{ }u,u'\in V'}w(u,u')\text{ }+\sum_{(u,u')\in E(\pi):\text{ }u\notin V', u'\in V'}\frac{w(u,u')}{2}.$$
	Clearly, $\cost_{\pi}(V')\ge |V'|$.
	We define $\cost_{\pi}(S_v)$ for all vertices $v\in L_{\ell}$ similarly. 
	Denote $E'=\bigcup_{v\in L_{\ell}}E'_v$. 
	From similar arguments in the proof of \Cref{obs: number of light subgraphs}, we can show that $|E'|\le 2\alpha$.
	Therefore, $\cost_{\pi}(V')+\sum_{v\in L_{\ell}}\cost_{\pi}(S_v)=\cost(\pi)+2\cdot\sum_{v\in L_{\ell}}|E'_v|= n+\alpha+2|E'|\le n+5\alpha$.
	Altogether,
	we have
	\[
	\begin{split}
	\sum_{v\in L_{\ell}}\cost(R_v)
	&=\sum_{v\in L_{\ell}}\left(
	\sum_{(u,u')\in R_v}w(u,u')+\frac{1}{2}\sum_{u\in V_1(R_v)}\ord(u)-(|V(S_v)|-1)\right)\\
	&\le \sum_{v\in L_{\ell}}\left(\cost_{\pi}(S_v)-(|V(S_v)|-1)\right)\\
	&\le \sum_{v\in L_{\ell}}\cost_{\pi}(S_v)-\sum_{v\in L_{\ell}}|V(S_v)|+|L_{\ell}|\\
	&\le (n+5\alpha)-\cost_{\pi}(V')-\sum_{v\in L_{\ell}}|V(S_v)|+2\alpha\\
	&\le (n+5\alpha)-n+2\alpha \le 7\alpha.
	\end{split}
	\]
	This completes the proof of \Cref{clm:tsp_lower}.
\end{proof}

\subsection{Case 2. Less than $\eps$-fraction of the executions of $\local(\cdot,\ell)$ return \textsf{Success}}

Since event $\xi_1$ does not happen, $\sum_{v\in L_{\ell}}|V(S_v)|\le 2{\epsilon}n$.
In this case the algorithm continues to execute Step~\ref{G_1_connected_step_3}.
Recall that in Step~\ref{G_1_connected_step_3} we execute the subroutine $\pbfs$ on $(100\cdot n\log n)/h$ random vertices of $V$.
Recall that $Z=(V,\bigcup_{v\in \tilde V}E(P_v))$ and $H=(V,\bigcup_{v\in \tilde V}E(H_v))$, where $\tilde V$ is the set of sampled vertices that the subroutine $\pbfs(\cdot,h,q,\alpha)$ returns \textsf{Success} upon.  
Also recall that $V_{\textsf{chunk}}=\bigcup_{v\in \tilde V}V(H_v\setminus P_v)$ and $V_{\textsf{isolated}}=V\setminus (V_{\textsf{chunk}}\cup V(Z))$. 
We further distinguish between the following sub-cases, depending on outcomes of subroutines $\pbfs$.

\paragraph{Case 2.1. $|V_{\textsf{chunk}}|\ge 10\hat{\epsilon}n$.}
We first prove the following claim.

\begin{claim}
\label{clm: most chunk vertices are in e-blocks}
The number of vertices in $H$ that do not belong to any e-block of $H$ is at most $\hat{\eps} n$.
\end{claim}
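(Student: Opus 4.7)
The plan is to show that every vertex $u \in V(H)$ which fails to lie in a non-trivial e-block of $H$ must be $\ell$-light in $G_1$; combined with the Case~2 hypothesis $\sum_{v\in L_\ell}|V(S_v)|\le 2\eps n$ and the parameter gap $\eps=2^{-100}\ll\hat{\eps}=2^{-40}$, this bounds the number of such vertices by $2\eps n\le\hat{\eps}n$. I would first reduce to the case where $u$ has BFS level strictly less than $h$ in some tree $T^h_v$ with $v\in\tilde V$: for non-path vertices $u\in V_{\textsf{chunk}}$ this is automatic from the \textsf{Success} condition of $\pbfs$, which requires exactly two level-$h$ vertices, namely the endpoints $v^*,v^{**}$ of $P_v$; for path vertices in $V(Z)$ the level-$h$ ones are precisely these support-path endpoints, totalling at most $2|\tilde V|=\tilde O(\sqrt n/\hat{\eps})\ll \hat{\eps}n$. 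For any remaining $u$, every $G_1$-neighbor of $u$ lies at a BFS level at most $h$ and hence in $V(T^h_v)$, so every $G_1$-edge incident to $u$ is captured in $H_v\subseteq H$.

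To exhibit the bridge witnessing $\ell$-lightness, I would iteratively shrink a certificate subtree. Let $T_c$ be the off-path BFS subtree rooted at a child $c$ of some path vertex $p\in V(P_v)$ such that $u\in V(T_c)$, and initialize $T^*:=T_c$. Whenever $H_v$ contains a chord $(q,r)$ with $q\in V(T^*)$ and $r\notin V(T^*)$, this chord, together with the tree path from $q$ to $c$ in $T_c$, the edge $(c,p)$, and a connecting path in $V(T^h_v)\setminus V(T^*)$ from $p$ to $r$ (along $P_v$ and possibly down another branch), forms a cycle of $H_v\subseteq H$. Since $u$ lies in no cycle of $H$, $u$ cannot lie on the portion of this cycle inside $T_c$, so $u$ must sit in a proper sub-branch of $T^*$ hanging off the $q$-to-$c$ tree path; I then replace $T^*$ by that strictly smaller sub-branch containing $u$, and iterate. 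The procedure terminates because $|V(T^*)|$ strictly decreases, and produces a subtree whose unique boundary edge in $H_v$ is the BFS tree edge $(c^*,p^*)$ from its root $c^*$ to $c^*$'s parent $p^*$ in $T^h_v$.

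Finally, because every vertex of $V(T^*)$ has BFS level $<h$, all of its $G_1$-edges are captured in $H_v$; hence $(c^*,p^*)$ is in fact the unique $G_1$-edge between $V(T^*)$ and $V\setminus V(T^*)$, making it a bridge of $G_1$ and $V(T^*)$ a component of $G_1\setminus\{(c^*,p^*)\}$ of size at most $|V(T^h_v)|\le \alpha h=\ell/20<\ell$. Thus $u$ is $\ell$-light with witness $(c^*,p^*)$, completing the reduction. The main obstacle will be to make the shrinkage step rigorous: one has to verify that the cycle produced by each chord genuinely passes through the tree path from $c$ to $q$ in $T_c$ (which uses the internally vertex-disjoint structure of the two root-to-level-$h$ branches ensured by \textsf{Success}) and that the iterative choice of sub-branch containing $u$ is unambiguous. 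A secondary subtlety is that a path vertex $u$ may appear on several $P_v$'s simultaneously, but it suffices to apply the argument to any one $v\in\tilde V$ in which $u$ occurs at BFS level $<h$, which always exists except for the negligibly many support-path endpoints.
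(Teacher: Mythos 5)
Your approach is genuinely different from the paper's, but it has a real gap. The paper peels off all $\ell$-light vertices to obtain $H'\subseteq G_1'$ (the pruned graph), observes that degree-$1$ vertices of $H'$ come from degree-$1$ vertices of $G_1'$ (of which there are at most $n/\ell$), and bounds the non-e-block region by combining this count with the $O(h)$ diameter of each $H_v$. You instead try to show that every non-e-block vertex of $H$ (up to negligibles) sits inside a small bridge-separated component of $G_1$, via an iterative subtree-shrinking argument inside a single BFS tree $T^h_v$. This would be a strictly stronger statement, and indeed it is false: the shrinking procedure starts by choosing an off-path subtree $T_c$ with $u\in V(T_c)$, which implicitly restricts $u$ to $V_{\textsf{chunk}}$. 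You never handle the case where $u$ lies on a support path $P_v$.

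This is not a boundary case you can dismiss by counting support-path endpoints: if $G_1$ is (locally) a long path, every vertex of $V(Z)$ lies on some $P_v$ at level $<h$, is not in any e-block of $H$, and is also not contained in any small bridge component of $G_1$ (the path has no bridges with a small side). Your reduction only discards the $O(|\tilde V|)$ vertices that appear at level exactly $h$; all the level-$<h$ path vertices survive the reduction but your construction of $T_c$ never applies to them, since they hang on $P_v$ rather than off it. The paper's proof does not have this difficulty because it never tries to reduce to $\ell$-lightness; the residual "path-like" structure of $H'$ is charged to the degree-$1$ vertices of the pruned graph $G_1'$, which captures bridges that have \emph{large} components on both sides. (A secondary, more cosmetic, issue: even in the off-path case your shrinking argument produces a bridge incident to the root $c^*$ of $T^*$, not to $u$ itself, so $u$ lies in a maximal $\ell$-light subgraph but is not literally ``$\ell$-light'' per the paper's definition; this matters little because the Case~2 hypothesis bounds $\sum_{v\in L_\ell}|V(S_v)|$, but you should be precise about it.)
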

\begin{proof}
Let $H'$ be the graph obtained from $H$ by deleting from it all $\ell$-light vertices in $V$ (recall that the number of $\ell$-light vertices of $V$ is at most $2\eps n$). Note that every degree-$1$ vertex in $H$ corresponds to an degree-$1$ vertex of $G'_1$. Since the number of degree-$1$ vertices of $G'_1$ is at most $n/\ell$, and since the diameter of each graph in $\set{H_v}_{v\in \tilde V}$ is at most $2h$, the number of vertices that do not belong to an e-block of $H'$ is at most $(n/\ell)\cdot (2h)=\hat{\eps}n/100$. It follows that the number of vertices in $H$ that do not belong to any e-block of $H$ is at most $\hat{\eps} n$.
\end{proof}

Assume first that the size of a maximum matching in $H$ consisting of only edges in e-blocks of $H$ is at least $2\hat{\eps}n$.
Since $H$ is a subgraph of $G_1$, the size of a maximum matching in $G_1$ consisting of only edges in e-blocks of $G_1$ is also at least $2\hat{\eps}n$. From \Cref{lem:matching-TSP}, we get that $\opttsp(w)\le (2-\hat\eps)n$, so our estimate $X=(2-\hat\eps)n$ in this case is a $(2-\Omega(\hat\eps))$-approximation of $\opttsp(w)$.

Assume now that the size of a maximum matching in $H$ consisting of only edges in e-blocks of $H$ is at most $2\hat{\eps}n$. We will show that the size of a maximum matching in $G_1$ is at most $(1/2-\hat{\eps})n$. Then from \Cref{lem: G_1_matching_TSP_lower_bound}, $\opttsp(w)\ge (1+2\hat{\eps})n$, so our estimate $X=2n$ in this case is a $(2-\Omega(\hat\eps))$-approximation of $\opttsp(w)$.
Let $M$ be a maximum matching in $H$ consisting of only edges in e-blocks of $H$. The edges of $M$ can be partitioned into three subsets: set $M_1$ contains edges with both endpoints lying in $V(H)$; set $M_2$ contains edges with both endpoints lying in $V\setminus V(H)$; and set $M_3$ contains all other edges. 
Clearly, $|M_2|\le (n-|V(H)|)/2$. From~\Cref{clm: most chunk vertices are in e-blocks}, $|M_1|\le \hat{\eps}n+2\hat\eps n=3\hat{\eps n}$, since every edge in $M_1$ either contains a vertex that does not belong to any e-block of $H$, or is entirely contained in some e-block of $H$.
Moreover, since the only vertices of $V(H)$ that are connected by edges to vertices of $V\setminus V(H)$ are interface vertices of graphs in $\set{H_v}_{v\in \tilde V}$, so $|M_3|\le 2|\tilde V|\le O(\sqrt{n}\log n)$.
Altogether, 
\[|M|=|M_1|+|M_2|+|M_3|\le (n-|V(H)|)/2+3\hat{\eps n}+O(\sqrt{n}\log n)\le (1/2-\hat\eps) n.\]

\paragraph{Case 2.2. $|V_{\textsf{isolated}}|,|V_{\textsf{chunk}}|< 10\hat{\epsilon}n$.}
Recall $V(Z)=V\setminus (V_{\textsf{isolated}}\cup V_{\textsf{chunk}})$, so $|V(Z)|\ge (1-20\hat{\eps})n$.
From algorithm in \Cref{clm: vertex disjoint induced paths}, we obtain a set $\qset$ of $\tilde O(n/h)$ paths with $|E(\qset)|\ge (1-40\hat{\eps})n$. It is easy to see, from the definition of the subroutine $\bfs$, that the condition of \Cref{clm: long induced path concatenation} is satisfied by the set $\qset$ of paths.
We then use the algorithm from \Cref{clm: proper tour cost} to compute the minimum cost of a $\qset$-proper tour.
Assume first that the cost is at least $(2-100\hat \epsilon)n$. Then from \Cref{clm: long induced path concatenation}, $\opttsp(w)>(1+\hat \epsilon)n$, and our estimate $X=2n$ in this case is a $(2-\Omega(\hat \eps))$-approximate of $\opttsp(w)$. 
Assume now that the cost is less than $(2-100\hat \epsilon)n$, then since $|E(\qset)|\ge (1-40\hat \epsilon)n$, $\opttsp(w)\le (2-100\hat \epsilon)n+2\cdot40\hat \epsilon n= (2-20\hat \epsilon)n$ (a tour with such cost can be easily constructed from the $\qset$-proper tour and Euler-tour of any spanning tree of $G_1$), and so our estimate $X=(2-\hat \epsilon)n$ in this case is a $(2-\Omega(\hat \eps))$-approximate of $\opttsp(w)$.

\paragraph{Case 2.3. $|V_{\textsf{isolated}}|\ge 10\hat{\epsilon}n$.}
From Step~\ref{G_1_connected_step_4}, we then use the algorithm from \Cref{thm: n^{1.5}-query-2-approx-of-matching-size} with parameter $\hat\epsilon/100$ to obtain an estimate of the size of some maximal matching of $G_1[V_{\textsf{isolated}}]$.
If the estimate is at most $\hat{\epsilon}n$, then we return $2n$ as the estimate of $\opttsp(w)$.
Otherwise, we return $(2-\hat{\epsilon}/200)n$ as the estimate of $\opttsp(w)$.

We first consider the case where the estimate output by the algorithm from \Cref{thm: n^{1.5}-query-2-approx-of-matching-size} is at most $\hat \eps n$.
Since event $\xi_2$ does not happen, the size of some maximal matching in graph $G_1[V_{\textsf{isolated}}]$ is at most $(1+1/100)\hat{\eps}n$, and therefore the size of a maximum matching in  $G_1[V_{\textsf{isolated}}]$ is at most $(2+2/100)\hat{\eps}n$. We use the following claim to show that our estimate $X=2n$ in this case is a $(2-\Omega(\hat\eps))$-approximate of $\opttsp(w)$.

\begin{claim}
	\label{clm: matching_size_lower_bound}	
	If $|V_{\textnormal{\textsf{isolated}}}|\ge 10\hat\epsilon n$, and the size of a maximum matching in graph $G_1[V_{\textsf{isolated}}]$ is at most $(2+2/100)\hat{\eps}n$, then $\opttsp(w)\ge (1+\hat\epsilon/10)n$.
\end{claim}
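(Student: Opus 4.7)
My plan is to upper bound the size of a maximum matching in $G_1$ and then invoke \Cref{lem: G_1_matching_TSP_lower_bound} to conclude $\opttsp(w) \ge (1+\hat\eps/10)n$. Let $M$ be a maximum matching in $G_1$, and partition $M = M_1 \sqcup M_2^H \sqcup M_2^c$ according to whether an edge has both endpoints in $V_{\textsf{isolated}}$, both in $V(H):=V\setminus V_{\textsf{isolated}}$, or exactly one endpoint in each. Since $M_1$ is a matching in $G_1[V_{\textsf{isolated}}]$, the hypothesis gives $|M_1|\le (2+2/100)\hat\eps n$. As the endpoints of the matching edges are disjoint, $2|M_2^H|+|M_2^c|\le |V(H)|\le (1-10\hat\eps)n$.

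The crux is to show that the number of ``cross'' matching edges satisfies $|M_2^c|=O(\sqrt n\log n/\hat\eps)=o(\hat\eps n)$. To this end I will establish the following structural fact: every edge $(u,v)\in E(G_1)$ with $u\in V_{\textsf{isolated}}$ and $v\in V(H)$ has $v$ being one of the two \emph{interface vertices} $v^*,v^{**}$ of some successful $\pbfs$. Indeed, $v\in V(H_w)=V(T^h_w)$ for some sampled $w$ on which $\pbfs(w,h,q,\alpha)$ returned \textsf{Success}; by the construction of the subroutine the tree $T^h_w$ contains every vertex within distance $h$ of $w$ in $G_1$, because the distance-pruning step using $X(v_j,\hat v)$ only skips a query when the triangle inequality already forces $w(v_j,\hat v)\ge 2$ and hence never discards a genuine $G_1$-neighbor. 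Since $u\in V_{\textsf{isolated}}$ means $u\notin V(T^h_w)$ yet $u$ is a $G_1$-neighbor of $v$, the vertex $v$ must lie at level exactly $h$ in $T^h_w$. The success condition forces $T^h_w$ to contain exactly two level-$h$ vertices, so $v\in\{v^*,v^{**}\}$. The total number of interface vertices over all successful samples is at most $2|\tilde V|\le 2\cdot (100 n\log n/h)=O(\sqrt n\log n/\hat\eps)$ (using $h=\hat\eps\ell/200=\hat\eps\sqrt n/2$). Since distinct edges in the matching $M_2^c$ use distinct $V(H)$-endpoints, each being an interface vertex, $|M_2^c|$ is bounded by this same quantity.

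Combining the bounds,
\[
|M|\;\le\;|M_1|+\frac{|V(H)|+|M_2^c|}{2}\;\le\;2.02\hat\eps n+\frac{(1-10\hat\eps)n}{2}+o(\hat\eps n)\;=\;\frac n 2-2.98\hat\eps n+o(\hat\eps n),
\]
so \Cref{lem: G_1_matching_TSP_lower_bound} yields $\opttsp(w)\ge 2n-2|M|\ge n+5.96\hat\eps n-o(\hat\eps n)\ge (1+\hat\eps/10)n$ for all sufficiently large $n$. The main obstacle is the interface-vertex argument; once one confirms that the pruning inside $\pbfs$ never misses a genuine $G_1$-neighbor, so that $V(T^h_w)$ is precisely the $h$-ball around $w$ in $G_1$, the remainder of the proof is routine bookkeeping.
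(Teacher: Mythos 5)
Your proof is correct and takes essentially the same route as the paper: partition $M$ by whether endpoints lie inside or outside $V_{\textsf{isolated}}$, bound the cross edges $M_3$ (your $M_2^c$) by observing that their $V\setminus V_{\textsf{isolated}}$-endpoints must be interface vertices of successful $\pbfs$ runs, and finish with \Cref{lem: G_1_matching_TSP_lower_bound}. You simply unpack the paper's ``it is easy to verify'' step (showing a $G_1$-neighbour of an isolated vertex inside some $T^h_w$ must sit at level exactly $h$ because the pruning rule $X(v_j,\hat v)\le 1$ cannot reject a genuine weight-$1$ neighbour) and phrase the final count directly rather than by contradiction; the substance is identical.
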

\begin{proof}
From \Cref{lem: G_1_matching_TSP_lower_bound}, it is sufficient to show that the size of a maximum matching in graph $G_1$ is at most $(1/2-\hat\eps/20)n$. Assume for contradiction that this is not true.
We will show that the size of a maximum matching in $G_1[V_{\textsf{isolated}}]$ is at least $(2+2/10)\hat{\eps}n$, leading to a contradiction.

Let $M$ be a maximum matching in $G_1$, so $|M|\ge (1/2-\hat\epsilon/20)n$. The edges of $M$ can be partitioned into three subsets: set $M_1$ contains edges with both endpoints lying in $V_{\textsf{isolated}}$; set $M_2$ contains edges with both endpoints lying in $V\setminus V_{\textsf{isolated}}$; and set $M_3$ contains all other edges, so $|M|=|M_1|+|M_2|+|M_3|$. 
Since $M_1$ is a matching in $G_1[V_{\textsf{isolated}}]$, it suffices to show that $|M_1|\ge (2+2/10)\hat{\eps}n$.
Clearly, $|M_2|\le (n-|V_{\textsf{isolated}}|)/2$. Since $|V_{\textsf{isolated}}|\ge 10\hat{\eps} n$, it suffices to show that $|M_3|\le 2\hat \eps n$. 
From the definition of $M_3$, each edge of $M_3$ has one endpoint in $V_{\textsf{isolated}}$.
It is easy to verify that, for each vertex $\hat v\in V_{\textsf{isolated}}$, the vertices in $H$ that $\hat v$ may possibly be adjacent to are interface vertices of graphs of $\set{H_v\mid v\in \tilde V}$, and there are at most $|\tilde V|=\tilde O(n^{0.5})$ of them.
Therefore, 
$|M_3|\le \tilde O(n^{0.5}) <2\hat{\eps}n$.
\end{proof}

We now consider the case where the estimate output by the algorithm from \Cref{thm: n^{1.5}-query-2-approx-of-matching-size} is greater than $\hat \eps n$.
Since event $\xi_2$ does not happen, the size of some maximum matching in graph $G_1[V_{\textsf{isolated}}]$ is at most $0.99\hat{\eps}n$. The following crucial claim immediate implies that our estimate $X=(2-\hat\eps/200)n$ in this case is a $(2-\Omega(\hat\eps))$-approximate of $\opttsp(w)$ with high probability, and thus completes the proof of \Cref{thm: main G_1 connected}.
The proof of \Cref{clm: matching_size_upper_bound} is long and technical, and is therefore deferred to \Cref{sec: Proof of matching_size_upper_bound}.
\begin{claim}
\label{clm: matching_size_upper_bound}
If $|V_{\textnormal{\textsf{isolated}}}|\ge 10\hat \epsilon n$, and the size of a maximum matching in $G_1[V_{\textnormal{\textsf{isolated}}}]$ is at least $0.99\hat \epsilon n$, then with probability $1-O(n^{-99})$, $\opttsp(w)\le (2-\hat \epsilon/125)n$.
\end{claim}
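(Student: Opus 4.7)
The plan is to apply \Cref{lem:matching-TSP} to $G_1$: since $\opttsp(w)\le \gtsp(G_1)$ (an unweighted tour in $G_1$ has cost equal to its length under $w$), it suffices to exhibit a matching $M$ in $G_1$ of size $\alpha n$ of which at most $\beta n$ edges are bridges of $G_1$, where $\alpha-\beta\ge 3\hat\eps/250$; \Cref{lem:matching-TSP} then yields $\gtsp(G_1)\le (2-\tfrac{2}{3}(\alpha-\beta))n\le (2-\hat\eps/125)n$. The starting point is the maximum matching $M^*$ in $G_1[V_{\textnormal{\textsf{isolated}}}]$ whose size $|M^*|\ge 0.99\hat\eps n$ is guaranteed by the hypothesis together with the assumption that the bad event $\xi_2$ does not occur.

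Setting $\alpha\cdot n= |M^*|$, I now need to upper bound the number of edges of $M^*$ that are bridges of $G_1$. Split such bridges into two types: (i) edges with at least one endpoint in some maximal $\ell$-light subgraph; and (ii) edges with both endpoints outside every maximal $\ell$-light subgraph. Since we are in Case~2 and event $\xi_1$ does not occur, $\sum_{v\in L_\ell}|V(S_v)|\le 2\eps n$, so type~(i) contributes at most $2\eps n$ bridges — negligible compared with $\hat\eps n$ because $\eps\ll \hat\eps$.

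The heart of the argument concerns type~(ii). I introduce a good event $\xi_4$: for every vertex $x\in V$ whose depth-$(h/2)$ BFS tree in $G_1$ is an induced path, at least one vertex in the $h/2$-neighborhood of $x$ in $G_1$ is sampled in Step~\ref{G_1_connected_step_3}. Since that step samples $(100n\log n)/h$ vertices uniformly, a Chernoff and union bound gives $\Pr[\xi_4]\ge 1-O(n^{-99})$. Conditioning on $\xi_4$, I claim that any vertex $v\in V_{\textnormal{\textsf{isolated}}}$ that is not in a maximal $\ell$-light subgraph must have its depth-$h$ BFS tree in $G_1$ either branch ($>2$ vertices at level $h$, or the two paths sharing an internal vertex) or be dense ($>\alpha h$ vertices), since otherwise some sampled neighbor $u$ of $v$ on the induced path would satisfy $\pbfs(u,h,q,\alpha)=\textsf{Success}$ with $v\in V(T^h_u)$, contradicting $v\in V_{\textnormal{\textsf{isolated}}}$. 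Here it is essential that the query budget $q=n/\hat\eps^2$ exceeds the $O(\alpha^2 h^2) = O(n/\hat\eps^2\cdot (\text{const}))$ cost needed to finish $\pbfs$ on path-like regions, so $\pbfs$ cannot fail due to exhausting queries. Such branching or density forces $v$ to lie in an e-block of $G_1$ of size $\ge 2$, and in particular at least one $G_1$-edge incident to $v$ is not a bridge.

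Finally, I replace any bridge edge $(u,u')\in M^*$ by a non-bridge edge of $G_1$ incident to $u$ (guaranteed to exist by the previous paragraph), rematching at most $\tilde O(\sqrt n)$ pairs overall, because the only obstacles to this rewiring are vertices near the boundary of the $\tilde V$-explored region. This produces a matching $M$ in $G_1$ with $|M|\ge |M^*|-o(\hat\eps n)\ge 0.98\hat\eps n$ in which at most $2\eps n +\tilde O(\sqrt n)\le 0.01\hat\eps n$ edges are bridges. Setting $\alpha=0.98\hat\eps$ and $\beta=0.01\hat\eps$ yields $\alpha-\beta\ge 3\hat\eps/250$, and \Cref{lem:matching-TSP} completes the proof. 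The main obstacle will be the local-to-global structural step inside $\xi_4$: cleanly ruling out the query-budget failure mode of $\pbfs$, and converting bridges of $M^*$ into non-bridge matched edges without cascading cancellations that would erode the $0.99\hat\eps n$ size advantage of $M^*$.
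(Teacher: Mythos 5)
Your proposal has a fundamental gap in the structural step, and it diverges substantially from the paper's actual argument. The pivotal claim — that if a vertex $v\in V_{\textsf{isolated}}$ (not in a light subgraph) has a branching or dense depth-$h$ BFS tree in $G_1$, then ``$v$ lies in an e-block of $G_1$ of size $\ge 2$, and in particular at least one $G_1$-edge incident to $v$ is not a bridge'' — is simply false. Take $G_1$ to be a spider with several long legs meeting at a center $c$: every vertex within distance $h$ of $c$ has a BFS tree that branches at $c$, yet all edges of $G_1$ are bridges and there is no e-block of size $\ge 2$ anywhere. Branching in the local BFS tree says nothing about whether the edges at or near $v$ are bridges. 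The paper never asserts that $v$ itself lies in an e-block; instead it works with the augmented e-block tree $T$ of the \emph{pruned} subgraph $G_1'$, restricts to \emph{smooth} vertices (at tree-distance $\ge 2h$ from all special vertices of $T$ and with bounded neighborhood), and then shows in Case 2.3.1 that a third level-$h$ vertex $u(v)\notin V(T)$ — the ``failure-certificate,'' not $v$ — must lie in an e-block, with a nontrivial counting argument (\Cref{obs:type-1}) to extract a matching from each such e-block.

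A second gap is the claim that ``\textsf{PBFS} cannot fail due to exhausting queries'' on path-like regions because $q \gg \alpha^2 h^2$. The $\alpha^2 h^2$ queries are the \emph{post-Success} cost of reading off $H_v$; the $h$ stages themselves can consume all $q$ queries even when $T^h_v$ is a short path, because the query-saving filter $X(v_j,\hat v)\le 1$ may fail to prune queries to vertices whose $\qset$-history is uninformative. This is precisely the scenario the paper handles in Case 2.3.2 (via \Cref{obs: Euler tour distance}, \Cref{clm:bfs-fail-connect}, and \Cref{clm: selected shortcuts}): a query-budget failure on a smooth vertex produces a ``shortcut'' pair $(\vin,\vout)$ of small weight spanning a long tree-path, and a disjoint collection of these shortcuts is what creates e-blocks. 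Finally, the closing ``rewiring'' of bridge matching edges $(u,u')$ into non-bridge edges incident to $u$ is not a valid matching transformation: the target endpoint may already be matched, and the non-bridge edge is not guaranteed to exist. The paper does not rematch $M^*$ at all — it abandons $M^*$ once it is shown that most of its edges are bridges lying in $T$, and builds an entirely new non-bridge matching from the e-blocks produced by the failure-certificate / shortcut analysis.
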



\subsection{Proof of \Cref{clm: matching_size_upper_bound}}
\label{sec: Proof of matching_size_upper_bound}

In this section we provide the proof of \Cref{clm: matching_size_upper_bound}.
We start with the some new definitions in subsections \ref{subsec: new defs_1}, \ref{subsec: new defs_2} and then complete the proof in \Cref{subsec: finally the complete proof}.

\subsubsection{Pruned Subgraph and its Augmented E-Block Tree}
\label{subsec: new defs_1}

Recall that $L_{\ell}$ is the set of maximal $\ell$-light vertices in $V$.
We construct another vertex-weighted graph $G_1'$, that we call the \emph{pruned subgraph} of $G_1$, by starting from $G_1$ and processing each vertex of $L_{\ell}$ as follows. 	
For each maximal $\ell$-light vertex $v\in L_{\ell}$, recall that $e_v$ is the witness edge of $v$, and we denote by $\hat v$ the other endpoint of $e_v$, which we call the \emph{witness vertex} for $v$. By definition, $\hat v$ is not $\ell$-light.
Let $G_1'$ be the graph obtained from $G_1$ by deleting, for each vertex $v\in L_{\ell}$, all vertices in the subgraph $S_v$ and their incident edges.
It is easy to see that $G_1'$ is a subgraph of $G_1$. We use the following simple observation.

\begin{observation}
	\label{obs: number of deg-1 in pruned subgraph}
	Graph $G_1'$ contains at most $n/\ell$ degree-1 vertices.
\end{observation}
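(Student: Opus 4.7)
The plan is to prove Observation~7.3 by a charging argument: for each degree-$1$ vertex $u$ of $G_1'$, I will associate a ``cap'' $C(u) \subseteq V$ of size strictly greater than $\ell$, and show that these caps are pairwise disjoint for distinct degree-$1$ vertices. Summing $\sum_u |C(u)| \le n$ then yields the desired bound.

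First I would establish two preliminary structural facts. (i) Any vertex $u \in V(G_1')$ fails to be $\ell$-light in $G_1$: if $u$ were $\ell$-light, it would lie in some $S_v$ with $v \in L_\ell$ (possibly $u = v$ itself) and hence would have been removed when forming $G_1'$. (ii) Since each witness edge $e_v$ is the unique edge of $G_1$ joining $S_v$ to $V\setminus S_v$, any $G_1$-neighbor $w$ of a vertex $u \in V(G_1')$ lying in $\bigcup_{v \in L_\ell} S_v$ must satisfy $(u,w) = e_v$ for some $v$, so that $w=v$ and $u=\hat v$. I then define $C(u) = \{u\} \cup \bigcup_{v \in L_\ell :\, \hat v = u} S_v$. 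Pairwise disjointness of the caps follows from (i) together with the fact that the sets $\{S_v\}_{v \in L_\ell}$ are pairwise disjoint; the latter can be verified by fixing any spanning tree $T$ of $G_1$ so that each bridge $e_v$ lies in $T$ and $S_v$ coincides with the $v$-side of $T \setminus e_v$, using the assumption $\ell < n/2$ to rule out the ``crossing'' configuration of two such subtrees (a direct cardinality count forces one to be contained in the other or to be disjoint), and then invoking the maximality of $\ell$-light vertices to rule out containment.

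The main obstacle is showing $|C(u)| > \ell$, and the key step is to argue that the unique $G_1'$-edge $(u,u')$ incident to $u$ is actually a bridge of $G_1$. Suppose it were not. Then there would be a cycle of $G_1$ through $(u,u')$ that must re-enter $u$ via a second edge $(u,w)$ with $w \ne u'$; since $\deg_{G_1'}(u)=1$, we have $w \notin V(G_1')$, and fact (ii) forces $(u,w) = e_v$ for some $v \in L_\ell$. But $e_v$ is itself a bridge of $G_1$, so in $G_1 \setminus \{u\}$ the set $S_v$ is disconnected from $V \setminus (S_v \cup \{u\})$, making it impossible for the cycle to return from $u'$ back to $v$ without revisiting $u$, a contradiction. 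Hence $(u,u')$ is a bridge of $G_1$, and the component of $G_1 \setminus (u,u')$ containing $u$ is exactly $C(u)$; since $u$ is not $\ell$-light (by fact (i)), this component has more than $\ell$ vertices. Combined with disjointness of the caps, this yields $\ell \cdot |\{u : \deg_{G_1'}(u)=1\}| < \sum_u |C(u)| \le n$, completing the proof.
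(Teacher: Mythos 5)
Your proof is correct and takes essentially the same approach as the paper's: your cap $C(u)=\{u\}\cup\bigcup_{v:\hat v=u}S_v$ is exactly the weight the paper assigns to each vertex of $G_1'$, and both arguments reduce to showing that these caps are pairwise disjoint and that any degree-$1$ vertex of $G_1'$ has a cap of size more than $\ell$. You have simply spelled out the steps the paper leaves implicit, namely that $(u,u')$ is a bridge of $G_1$ whose $u$-side is precisely $C(u)$, and that the sets $\{S_v\}_{v\in L_\ell}$ are pairwise disjoint.
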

\begin{proof}
	For each vertex that is not the witness vertex for any vertex of $L_{\ell}$, we define its weight to be $1$. 
	For each vertex $u$ such that $u$ is the witness vertex for maximal $\ell$-light vertices $v_1,\ldots,v_r\in L_{\ell}$ (namely $\hat v_1=\cdots =\hat v_r=u$), we define its weight to be $1+\sum_{1\le i\le r}|V(S_{v_i})|$.
	Therefore, the total weight of all vertices in $G'_1$ is $n$.
	From the construction of $G_1'$, each degree-1 vertex $u$ of $G_1'$ must be the witness vertex for some vertex of $L_{\ell}$, and its weight is at least $\ell$.
	\Cref{obs: number of deg-1 in pruned subgraph} then follows.
\end{proof}

We then construct a tree $T$ as follows. We start from the e-block tree $\tset_{G'_1}$ of the pruned subgraph $G'_1$, and process all e-blocks of $G'_1$ whose corresponding e-block vertices in $\tset_{G'_1}$ have degree $2$ one-by-one as follows.
Consider such an e-block $B$ of $G'_1$. 
Note that there are at most two vertices of $B$ that are incident to an edge of $\delta_{G_1}(B)$. We denote them by $s,t$ (where possibly $s=t$). We replace the e-block vertex $v_B$ in $\tset_{G'_1}$ with an arbitrary shortest path in $B$ connecting $s$ to $t$.
This completes the construction of $T$. 
We say that $T$ is the \emph{augmented e-block tree} for graph $G_1'$.
We denote by $L$ the total length of all shortest paths that are added to e-block tree $\tset_{G'_1}$ in the construction of tree $T$.

We use the following simple observations.

\begin{observation}
\label{obs: basics about T}
The following statements are true for $T$:
\begin{itemize}
\item $T$ contains at most $n/\ell$ leaves;
\item all bridges of graph $G'_1$ are contained in $T$;
\item every degree-$2$ vertex of $T$ is also a vertex of the original graph $G_1$; and
\item for every pair $v,v'$ of degree-$2$ vertices in $T$ that belong to the same maximal induced subpath of $T$, $\dist_{G_1}(v,v')=\dist_T(v,v')$.
\end{itemize}
\end{observation}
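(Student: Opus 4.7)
The plan is to verify the four properties in turn, relying on the fact that the construction of $T$ from $\tset_{G'_1}$ only modifies degree-$2$ e-block vertices by replacing them with shortest paths inside the corresponding e-blocks.

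I would start with parts 2 and 3 as they are essentially immediate from the construction. For part 2, every bridge of $G'_1$ is an edge between distinct e-blocks of $G'_1$, i.e. an edge of $\tset_{G'_1}$. The construction of $T$ never deletes such an edge; it only reroutes it from the contracted vertex $v_B$ to the endpoint $s$ or $t$ of the replacing shortest path inside $B$. Hence all bridges of $G'_1$ appear as edges of $T$. For part 3, any e-block vertex that remains in $T$ is one whose degree in $\tset_{G'_1}$ is $1$ or $\ge 3$, so it is never a degree-$2$ vertex of $T$. Thus every degree-$2$ vertex of $T$ is either an internal vertex of one of the added shortest paths, or an endpoint $s$ or $t$ of such a path; in either case it is a vertex of some e-block of $G'_1$, and therefore a vertex of $G_1$.

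Part 1 I would handle by identifying the leaves of $T$ with leaves of $\tset_{G'_1}$ (again because the construction only alters degree-$2$ e-block vertices), and then applying the same weighting scheme used in the proof of Observation about degree-$1$ vertices of $G'_1$. For any leaf e-block $B$ of $\tset_{G'_1}$, let $e$ be the unique bridge of $G'_1$ with one endpoint in $B$. Then $e$ is still a bridge of $G_1$: the only $G_1$-edges leaving $G'_1$ from $B$ are witness edges toward subgraphs $S_v$ with $\hat v \in B$, and each such $S_v$ is connected to the rest of $G_1$ only through $\hat v$, so it cannot create an alternate path around $e$. Consequently, removing $e$ from $G_1$ isolates the vertex set $V(B) \cup \bigcup_{v:\hat v \in B} V(S_v)$. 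If this set had at most $\ell$ vertices, then every vertex of $B$ would be $\ell$-light with witness $e$, and hence would lie in some maximal $\ell$-light subgraph and be deleted from $G'_1$, contradicting $B \subseteq G'_1$. Therefore the total weight of the vertices of $B$ (as defined in the proof of Observation about degree-$1$ vertices) exceeds $\ell$. Summing over all leaf e-blocks and using that the total weight equals $n$ yields the desired bound on the number of leaves.

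The main obstacle is part 4. Fix two degree-$2$ vertices $v, v'$ in the same maximal induced subpath $P$ of $T$. The inequality $\dist_{G_1}(v,v') \le \dist_T(v,v')$ is immediate since every edge of $T$ is an edge of $G_1$. For the reverse inequality I would analyze how $P$ traverses the e-block tree: $P$ alternates between bridges of $G'_1$ and the internal portions of the shortest paths placed inside a sequence $B_1,\ldots,B_k$ of degree-$2$ e-blocks. Any $G_1$-walk from $v$ to $v'$ must cross the same bridges, because each such bridge is also a bridge in $G_1$ (same argument as in part 1) and because detouring around an e-block $B_i$ would force the walk to use both of its external bridges, producing a cycle in $\tset_{G'_1}$, a contradiction. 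Hence any $G_1$-walk from $v$ to $v'$ decomposes into segments, each living inside a single $B_i$ and going from its entry vertex to its exit vertex. Inside $B_i$, the length of such a segment is at least the $B_i$-distance between those two points, which by choice equals the length of the shortest path used in building $T$; moreover, since this path is a shortest $B_i$-path, distances along it between internal vertices equal the corresponding $B_i$-distances. Summing over segments gives $\dist_{G_1}(v,v') \ge \dist_T(v,v')$, completing the proof.
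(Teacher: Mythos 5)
The paper states this observation without proof --- the source contains only a brief commented-out restatement, deferring part~1 to the bound on degree-$1$ vertices of $G'_1$ and treating parts~2--4 as immediate --- so your proposal is filling in an omitted argument, and your structure (leaves of $T$ identified with leaf e-blocks via a weighting argument, bridges of $G'_1$ remaining bridges of $G_1$, degree-$2$ vertices of $T$ living on the inserted shortest paths, and subpaths of shortest $B$-paths being shortest) is the natural one and does the job.

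One factual slip in part~1 is worth repairing: you claim that if the component of $G_1\setminus e$ containing a leaf e-block $B$ had at most $\ell$ vertices, then \emph{every} vertex of $B$ would be $\ell$-light with witness $e$. This is false for vertices of $B$ not incident to $e$, since the definition of $\ell$-light requires the witness edge to lie in $\delta_{G_1}(v)$. What is true is that the endpoint of $e$ inside $B$ is $\ell$-light with witness $e$, and because $B$ is $2$-edge-connected (so $B$ contains no bridges and sits entirely in one component of $G_1\setminus e_w$ for any candidate witness $e_w$), the maximal $\ell$-light subgraph containing that endpoint contains all of $B$, so $B$ would still be deleted in forming $G'_1$; the contradiction, and hence the weight bound $>\ell$ per leaf e-block, survives. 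A milder imprecision occurs in part~3: a degree-$2$ \emph{singleton} e-block vertex of $\tset_{G'_1}$ also remains in $T$ as a degree-$2$ vertex after its trivial replacement, so it is not quite accurate that only degree-$1$ and degree-$\ge 3$ e-block vertices survive; the conclusion is unaffected since such a vertex is literally a vertex of $G_1$.
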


\begin{observation}
\label{obs: total length of deg-2 block paths}
$G_1$ contains a matching of size at least $L/2$ consisting of only non-bridge edges.
\end{observation}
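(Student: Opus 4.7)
\textbf{Proof proposal for Observation~\ref{obs: total length of deg-2 block paths}.}

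The plan is to exhibit the required matching explicitly by combining a local matching on each of the expanded shortest paths. Recall that $T$ is obtained from $\tset_{G'_1}$ by replacing each degree-$2$ e-block vertex $v_B$ with a shortest path $P_B$ in the e-block $B$ of $G_1'$ connecting the (at most two) vertices of $B$ that are incident to edges in $\delta_{G_1'}(B)$. Let $\bset^* \subseteq \bset(G_1')$ be the collection of e-blocks $B$ of $G_1'$ whose corresponding e-block vertex in $\tset_{G_1'}$ has degree exactly $2$, and for each $B\in \bset^*$ denote by $\ell_B = |E(P_B)|$ the length of the inserted path, so $L = \sum_{B\in \bset^*} \ell_B$.

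The first step is to observe that each edge of $P_B$ is a non-bridge of $G_1$. Indeed, $P_B$ is a subpath of the induced subgraph $B$ of $G_1'$, and $B$ is by definition $2$-edge-connected. Hence every edge of $P_B$ lies on some cycle contained in $B\subseteq G_1'\subseteq G_1$, which means that deleting it from $G_1$ cannot disconnect $G_1$. The second step is to produce a matching inside each $P_B$: since $P_B$ is a simple path on $\ell_B+1$ vertices, taking every other edge along $P_B$ yields a matching $M_B\subseteq E(P_B)$ of size $\lceil \ell_B/2\rceil \ge \ell_B/2$.

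The third step is to combine these local matchings. Distinct e-blocks of $G_1'$ are vertex-disjoint (as are all $2$-edge-connected blocks), and the paths $P_B$ are internal to their respective blocks, so the vertex sets $V(P_B)$ for $B\in \bset^*$ are pairwise disjoint. Therefore $M := \bigcup_{B\in \bset^*} M_B$ is a valid matching in $G_1$, consisting entirely of non-bridge edges of $G_1$ by the first step, and of size
\[
|M| \;=\; \sum_{B\in \bset^*} |M_B| \;\ge\; \sum_{B\in \bset^*} \frac{\ell_B}{2} \;=\; \frac{L}{2},
\]
which is exactly the claimed bound. No serious obstacle is anticipated; the only point that needs a moment's care is ensuring that the vertex-disjointness of the blocks transfers to the inserted paths, but this is immediate since each $P_B$ is internal to $B$.
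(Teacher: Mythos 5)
Your proof is correct. The paper states this as an observation without supplying a proof, and the argument you give is the natural (and, as far as I can tell, the intended) one: the inserted paths $P_B$ live inside $2$-edge-connected induced subgraphs of $G_1'\subseteq G_1$, so every edge of $P_B$ lies on a cycle in $G_1$ and is therefore not a bridge; taking alternate edges along each path gives a matching of size at least $\ell_B/2$; and since distinct e-blocks of $G_1'$ are vertex-disjoint (a fact the paper records when it introduces $\bset(G)$), these local matchings union to a matching of size at least $L/2$ in $G_1$.

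One tiny remark that costs nothing but might be worth making explicit if this were to be written up: when $s=t$ (the two boundary vertices of $B$ coincide) the inserted path $P_B$ is trivial with $\ell_B=0$, so it contributes nothing to $L$ and nothing to the matching; your bound $|M_B|\ge \ell_B/2$ is trivially satisfied. This degenerate case is allowed by the paper's construction and your argument already handles it without modification.
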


\subsubsection{Representatives and Smooth Vertices}
\label{subsec: new defs_2}

Recall that every vertex of $V$ is either an $\ell$-light vertex or a vertex of $V(G'_1)$.
We first define, for each vertex $v\in V$, a set $R(v)$ of vertices of $T$, that we call the set of \emph{representatives} of $v$ in $T$, as follows:
\begin{itemize}
	\item if a vertex $v$ belongs to $G_1'$ but does not belong to any e-block of $G_1'$, then we define $R(v)=\set{v}$;
	\item if a vertex $v$ belongs to some e-block $B$ of $G_1'$, whose corresponding e-block vertex $v_B$ is a special vertex in the e-block tree $\tset_{G'_1}$ of $G_1$, then we let $R(v)$ contain only the vertex of $T$ that represents the e-block $B$ of $G_1'$;
	\item if a vertex $v$ belongs to some e-block $B$ of $G_1'$, whose corresponding e-block vertex $v_B$ is a degree-$2$ vertex in $\tset_{G'_1}$, then we let $R(v)$ contain all vertices on the shortest path that are added to $\tset_{G'_1}$ when processing the e-block vertex $v_B$.
	\item if $v$ is an $\ell$-light vertex, then we let $v'$ be the unique witness vertex in $G_1$ of the maximal $\ell$-light subgraph that contains $v$, and we define $R(v)$ depending on whether or not $v'$ belongs to some e-block of $G'_1$ similarly.
\end{itemize} 
	
We say that a vertex $v$ in $T$ is \emph{smooth}, iff (i) it is at tree-distance at least $2h$ from all special vertices of $T$; and
(ii) the number of vertices $\hat v$ in $V$ such that $\dist_T(R(\hat v), v)\le 2h$ is at most $10h/\hat{\eps}$, namely 
$|\set{\hat v\mid \dist_T(R(\hat v),v)\le 2h}|\le 10h/\hat{\eps}$, 
where $\dist_T(R(\hat v), v)=\min_{v'\in R(\hat v)}\set{\dist_T(v',v)}$.
	
We use the following claim.

\begin{claim} \label{obs:badnum}
The number of non-smooth vertices is at most $0.45\hat \eps n+L$.
\end{claim}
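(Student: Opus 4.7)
The plan is to bound separately (A) the number of vertices of $T$ violating condition~(i) (being within tree-distance $2h$ of some special vertex), and (B) the number of vertices of $T$ that satisfy~(i) but violate~(ii) (having more than $10h/\hat{\eps}$ representatives nearby).

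For (A), I will use the structural facts from \Cref{obs: basics about T} and \Cref{obs:leaves_induced_paths}: $T$ has at most $n/\ell$ leaves, hence at most $2n/\ell$ special vertices and at most $2n/\ell$ maximal induced subpaths. A vertex failing~(i) is either special itself or lies in some maximal induced subpath within tree-distance $2h$ of one of the two endpoints; thus each subpath contributes at most $4h$ non-endpoint violators. Substituting $h = \hat{\eps}\ell/200$, the total bound is $8hn/\ell + 2n/\ell \le 0.04\hat{\eps}n + O(\sqrt{n})$.

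For (B), observe that if $v$ satisfies~(i), then $N_T^{2h}(v)$ is entirely contained in the unique maximal induced subpath $P$ of $T$ containing $v$. I will split such violators into two groups: those lying on some replacement path $P_B$ (for a degree-$2$ e-block $B$), which number at most $\sum_B |P_B| = L$ trivially; and the remaining ``naked'' vertices in $\bigcup_P P^\circ$. For the naked group I apply Markov's inequality to the double-counted sum
\[
\Sigma \;:=\; \sum_{v \text{ naked in some }P^\circ} |B(v)|,\qquad\text{where }|B(v)| = |\{\hat v \in V : R(\hat v) \cap N_T^{2h}(v) \ne \emptyset\}|.
\]
Exchanging the order of summation, I classify each $\hat v \in V$ by the structure of $R(\hat v)$. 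If $R(\hat v)=\{r\}$ for a special vertex $r$, then $r$ is an endpoint of every maximal induced subpath incident to it, so every $v \in P^\circ$ is at tree-distance $>2h$ from $r$ and the contribution is \emph{zero}. If $R(\hat v)=\{r\}$ for a naked degree-$2$ vertex $r$ on some $P$, at most $4h+1$ naked $v \in P^\circ$ lie within $2h$ of $r$. If $R(\hat v)=P_B$ for a degree-$2$ e-block $B$ with $P_B \subseteq P$, the naked $v \in P^\circ$ with $N_T^{2h}(v)\cap P_B\neq\emptyset$ are exactly the naked vertices on $P$ within $2h$ of an endpoint of $P_B$ (the interior of $P_B$ itself is non-naked), which number at most $4h$.

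Letting $n_{\mathrm{nr}}$ and $n_{\mathrm{fat}}$ denote the number of $\hat v \in V$ of the second and third types respectively, we obtain
\[
\Sigma \;\le\; (4h+1)\,n_{\mathrm{nr}} + 4h\,n_{\mathrm{fat}} \;\le\; (4h+1)(n_{\mathrm{nr}}+n_{\mathrm{fat}}) \;\le\; (4h+1)\,n,
\]
so Markov's inequality yields at most $(4h+1)n\cdot \hat{\eps}/(10h) \le 0.4\hat{\eps}n + O(\sqrt{n})$ violators in the naked group (the error $\hat{\eps}n/(10h)=O(\sqrt{n})$ since $h=\Theta(\hat{\eps}\sqrt{n})$). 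Adding everything gives $|\text{non-smooth}| \le 0.04\hat{\eps}n + L + 0.4\hat{\eps}n + O(\sqrt{n}) \le 0.45\hat{\eps}n + L$ for sufficiently large $n$. The main obstacle, and the key insight that makes the constants work, is the ``zero-contribution'' fact for single-vertex special representatives: without it, a single large e-block $B$ with $v_B$ special in $\tset_{G'_1}$ (which could attach up to $|V(B)|$ vertices to a high-degree point of $T$) would inflate $\Sigma$; it is crucial that condition~(i) forces smooth vertices to be at tree-distance $>2h$ from every such $r$, together with the fact that each special vertex sits only at endpoints of its incident maximal induced subpaths.
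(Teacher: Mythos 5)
Your overall strategy (exchange $\sum_v |B(v)|$ and apply Markov) genuinely differs from the paper's, which greedily selects type-2 vertices spaced $4h+2$ apart on each contracted maximal induced subpath $P'$ and uses pairwise disjointness of their balls to pack them into a budget of $n$ without ever bounding $\sum_v|B(v)|$. Your classification of $\hat v$ by the shape of $R(\hat v)$ and the per-$\hat v$ bounds of $4h+1$ and $4h$ for the naked-singleton and contracted-path cases are sound.

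The gap is in the first case. You claim that when $R(\hat v)=\{r\}$ for a special vertex $r$, the contribution to $\Sigma$ is zero because naked $v\in P^\circ$ are at tree-distance $>2h$ from $r$. But condition~(i) of smoothness only guarantees $\dist_T(v,r)\ge 2h$, not strict inequality: a naked type-2 vertex sitting at distance exactly $2h$ from the endpoint $r$ of its maximal induced subpath does have $r$ within its radius-$2h$ neighborhood, so $\hat v\in B(v,2h)$. This matters quantitatively. The e-block whose representative is $r$ can contain $\Theta(n)$ vertices $\hat v$ with $R(\hat v)=\{r\}$, and $r$ can be an endpoint of $\deg_T(r)=\Theta(n/\ell)=\Theta(\sqrt n)$ subpaths, so the neglected contribution to $\Sigma$ can reach $\Theta(n^{1.5})$. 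Dividing by $10h/\hat\eps=\Theta(\sqrt n)$ then inflates your Markov bound by $\Theta(n)$, which overwhelms the target $0.45\hat\eps n$ since $\hat\eps$ is a tiny constant. You flag this ``zero-contribution fact'' as the key insight making the constants work, so this is a structural misreading of the definition rather than a typo. The approach is salvageable --- there are at most two naked type-2 vertices per maximal induced subpath at distance exactly $2h$ from an endpoint, hence $O(n/\ell)=O(\sqrt n)$ such boundary vertices in total; count those directly and run your Markov argument on the rest, where the special-vertex contribution really is zero --- but as written the proof does not go through.
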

	
\begin{proof}
We say that a vertex $v$ in $T$ is a \emph{type-$1$} (\emph{type-$2$}, resp.) non-smooth vertex, iff vertex $v$ does not satisfy property (i) (property (ii), resp.).
From \Cref{obs: basics about T}, $T$ contains at most $n/\ell$ leaves, and therefore $T$ contains at most $2n/\ell$ maximal induced subpaths (from \Cref{obs:leaves_induced_paths}). It is easy to see that each maximal induced subpath of $T$ contains at most $4h$ type-1 non-smooth vertices. 
Therefore, the number of type-1 non-smooth vertices is at most $4h\cdot (2n/\ell)=\hat \eps n/25$. 

Let $P$ be a maximal induced subpath in $T$.
Let $\qset_P$ be the set of shortest paths that are added to $\tset_{G'_1}$ in the construction of path $P$ of $T$.
Let $P'$ be the path obtained from $P$ by contracting all paths in $\qset_P$, so $|V(P)|=|V(P')|+\sum_{Q\in \qset_P}|E(Q)|$.
Let $v_1,\ldots,v_k$ be all type-2 non-smooth vertices in $P'$, where the vertices appear on $P'$ in this order. 
Define $S_P=\set{v_1,v_{4h+3},v_{8h+5},\ldots,v_{(4h+2)\cdot\floor{k/(4h+2)}+1}}$.
It is clear that, if we denote $B(v,2h)=\set{\hat v\mid \dist_T(R(\hat v),v)\le 2h}$, then for every pair $v_i,v_j$ of vertices in $S_P$, $B(v_i,2h)\cap B(v_j,2h)=\emptyset$. Moreover, if we denote by $\pset$ the set of all maximal induced subpaths in $T$, then the sets $\set{B(v,2h)}_{v\in S(P), P\in \pset}$ are mutually disjoint. Therefore, $|\bigcup_{P\in \pset}S_P|\le n/(10h/\hat\eps)$.
On the other hand, the number of type-2 non-smooth vertices in $T$ is at most $(4h+2)\cdot |\bigcup_{P\in \pset}S_P|+\sum_{P\in \pset}|E(\qset_P)|$. Therefore, the number of type-2 non-smooth vertices is at most $(4h+2)\cdot n/(10h/\hat\eps)+L\le  0.41\hat \eps n+L$.
Altogether, the number of non-smooth vertices is at most $ 0.45\hat \eps n+L$.
\end{proof}

\subsubsection{Completing the Proof of \Cref{clm: matching_size_upper_bound}}
\label{subsec: finally the complete proof}

Recall that we have denoted by $L$ the total length of all shortest paths that are added to e-block tree $\tset_{G'_1}$ in the construction of tree $T$.
From \Cref{obs: total length of deg-2 block paths}, if $L\ge \hat \eps n/80$, then $G_1$ contains a matching of size $\hat \eps n/80$ with no bridge edges. From \Cref{lem:matching-TSP}, this implies that $\opttsp(w)\le (2-\hat \epsilon/120)n$, and \Cref{clm: matching_size_upper_bound} then follows.
Therefore, we assume from now on that $L< \hat \eps n/80$.
And then from \Cref{obs:badnum}, the number of non-smooth vertices is at most $\hat{\eps}n/2$.

Let $M$ be a maximum matching in graph $G_1[{V_{\textsf{isolated}}}]$, so $|M|\ge 0.99\hat\epsilon n$. If at least $\hat \eps n/40$ edges of $M$ belong to e-blocks of $G_1[{V_{\textsf{isolated}}}]$, then the claim is true since all such edges also belong to e-blocks of $G_1$. 
Therefore, we only need to consider the case where at least $0.99\hat\epsilon n-\hat \eps n/40 \ge 0.96 \hat \eps n$ edges of $M$ are bridges in $G_1$. 
Since the number of $\ell$-light vertices in $V$ is at most $2\eps n$, at least $(0.96\hat \eps -2\eps)n\ge 0.95\hat \eps n$ edges of $M$ also belong to the pruned subgraph $G'_1$ as bridges. 
Recall that all bridges of $G'_1$ are contained in $T$, so at least $0.95\hat\eps n$ edges of $M$ belong to $T$.
We now delete from these edges, all edges with at least one endpoint that is at tree-distance at most $2h$ from some special vertex of $T$. 
Clearly, we have deleted at most $2h\cdot (2n/\ell)\le 0.02\hat{\eps}n$ vertices.
We denote by $M'$ the set of remaining edges of $M$ in $T$, so $|M'|\ge 0.93\hat{\eps} n$. 
We denote by $V'$ the set of endpoints of edges in $M'$, so $\card{V'} \ge 1.86 \hat \eps n$.

\begin{claim}
\label{clm: many failed vertices in T}
If $|V_{\textnormal{\textsf{isolated}}}|\ge 10\hat \epsilon n$, and the size of a maximum matching in $G_1[V_{\textnormal{\textsf{isolated}}}]$ is at least $0.99\hat \epsilon n$, then with probability $1-O(n^{-99})$, $T$ contains at least $0.93 \hat \eps n$ vertices that subroutine $\pbfs(\cdot,h,q,\alpha)$ returns \textnormal{\textsf{Fail}} on.
\end{claim}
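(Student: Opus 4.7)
The plan is to prove the contrapositive: assuming $|F \cap V(T)| < 0.93 \hat{\eps} n$ where $F$ denotes the set of vertices of $T$ on which $\pbfs$ returns \textsf{Fail}, I will show that the event described in the premise occurs with probability at most $O(n^{-99})$. Recall from the discussion preceding the claim that the premise furnishes a set $V' \subseteq V_{\textsf{isolated}} \cap V(T)$ with $|V'| \ge 1.86 \hat{\eps} n$; each $v \in V'$ lies in the interior of a maximal induced subpath of $T$ of length at least $4h$, so that $|B_T(v, h)| = 2h+1$. Also, \Cref{obs:badnum} together with the assumed bound $L < \hat{\eps} n / 80$ bounds the number of non-smooth vertices of $V(T)$ by $0.46 \hat{\eps} n$.

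The central structural step is to prove that for every smooth $u \in V(T)$, the subroutine $\pbfs(u, h, q, \alpha)$ returns \textsf{Success}, and its output subgraph $H_u$ contains every $v$ with $\dist_{G_1}(u, v) \le h$; conversely, every non-smooth $u$ of type-$2$ (large radius-$2h$ $T$-ball) forces $\pbfs$ to exceed its vertex/query budget and so lies in $F$. Property (ii) of smoothness, combined with $\dist_T(R(v), u) \le \dist_{G_1}(u, v)$ (which holds because $T \subseteq G_1$), yields $|V(T_u^h)| \le 10h/\hat{\eps} = \alpha h$, respecting $\pbfs$'s size cap and query budget $q = n/\hat{\eps}^2$; property (i) ensures the BFS tree is path-like with exactly two level-$h$ vertices reached via internally disjoint paths. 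Let $A \subseteq V(T)$ be the smooth vertices on which $\pbfs$ succeeds; up to the $\hat\eps n / 25$ type-1 non-smooth vertices, this gives $V(T) \setminus A \subseteq F \cap V(T)$.

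For the probabilistic part, define $\mathrm{Cov}(v) = \{u \in A : v \in C(u)\}$, so that the structural step gives $\mathrm{Cov}(v) \supseteq B_T(v, h) \cap A$. For any $v$ with $|\mathrm{Cov}(v)| \ge h$, $\Pr[v \in V_{\textsf{isolated}}] \le (1 - h/n)^K \le \exp(-hK/n) = n^{-100}$, where $K = 100 n \log n/h$ is the number of samples. A union bound over $v \in V(T)$ yields, with probability at least $1 - O(n^{-99})$, that every $v \in V(T) \cap V_{\textsf{isolated}}$ satisfies $|\mathrm{Cov}(v)| < h$, and hence $|B_T(v, h) \cap A| < h$. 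Conditioning on this event together with the premise, each $v \in V'$ has $|B_T(v, h) \setminus A| \ge h + 1$, and double-counting gives
$$|V'|(h+1) \le \sum_{v \in V'} |B_T(v, h) \setminus A| = \sum_{u \in V(T) \setminus A} |B_T(u, h) \cap V'| \le |V(T) \setminus A| \cdot (2h+1),$$
because every $u$ contributing to the right-hand side lies near a long induced subpath of $T$ and hence satisfies $|B_T(u, h)| \le 2h+1$. Rearranging yields $|V(T) \setminus A| \ge |V'|(h+1)/(2h+1) \ge 0.93 \hat{\eps} n$, and combined with $V(T) \setminus A \subseteq F \cap V(T)$ (up to the negligible $\hat\eps n / 25$ additive slack) this contradicts the assumption.

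The main obstacle is the structural step---arguing that $\pbfs$ succeeds on every smooth vertex and fails on every type-2 non-smooth vertex. Conditions (c) and (d) of $\pbfs$---exactly two level-$h$ vertices reached by internally disjoint paths from $u$---require a delicate case analysis of how the BFS in $G_1$ propagates along the long induced subpath of $T$ containing $u$: e-block cycles sitting on the subpath can in principle produce two BFS routes to the same level-$h$ vertex (violating (d)), and light subgraphs attached to the subpath can contribute spurious level-$h$ vertices (violating (c)). Ruling out these pathologies at smooth vertices, while simultaneously confirming that type-2 non-smooth vertices do overload the budget of $\pbfs$, is the core technical burden of the proof.
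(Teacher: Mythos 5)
Your high-level plan---a probabilistic argument that tree-balls around isolated vertices are mostly \textsf{Fail}-vertices, followed by a double count---is the same skeleton the paper uses, but you insert a detour through smoothness that is both unnecessary and, in one direction, actually false.

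The paper's proof works directly with $V_f$, the set of all vertices of $T$ on which $\pbfs$ returns \textsf{Fail}, and never classifies vertices as smooth or non-smooth inside this claim. It argues: for any $v$, if the metric ball $B(v,h)=\set{v'\mid w(v,v')\le h}$ contains at least $h$ \textsf{Success}-vertices, then w.h.p.\ one of them is sampled and so $v\notin V_{\textsf{isolated}}$; by a union bound, w.h.p.\ every $v\in V_{\textsf{isolated}}$ has at most $h$ \textsf{Success}-vertices in $B(v,h)$; for $v\in V'$ the tree-ball $N_v\subseteq B(v,h)$ has exactly $2h+1$ elements, so at least $h+1$ of them are in $V_f$; and since every tree vertex lies in at most $2h+1$ of the sets $N_v$, double counting yields $\card{V_f}\ge\card{V'}(h+1)/(2h+1)>0.93\hat\eps n$. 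No structural claim about where $\pbfs$ succeeds or fails is needed; only the definitional fact that a \textsf{Success} call $\pbfs(u,\cdot)$ outputs $H_u$ containing every vertex at $G_1$-distance at most $h$ from $u$.

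Your version replaces $V_f$ by $V(T)\setminus A$ with $A=\set{\text{smooth}\wedge\textsf{Success}}$, runs the same double count to get $\card{V(T)\setminus A}\ge 0.93\hat\eps n$, and then must convert this into a lower bound on $\card{F\cap V(T)}$ via the containment $V(T)\setminus A\subseteq F\cup\set{\text{type-1 non-smooth}}$. That containment is exactly your ``structural step,'' and its direction (a) (smooth implies \textsf{Success}) is false: immediately after this claim, the paper introduces $\hat V$ as the set of \emph{smooth} vertices on which $\pbfs$ returns \textsf{Fail}, proves $\card{\hat V}\ge 0.43\hat\eps n$, and the remainder of Case 2.3 (\Cref{clm:bfs-fail-connect}, \Cref{clm: selected shortcuts}, \Cref{obs:type-1}) analyzes precisely these smooth \textsf{Fail} vertices. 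Indeed, if (a) held then $V(T)\setminus A$ would be exactly the non-smooth set, which by \Cref{obs:badnum} (together with the standing bound $L<\hat\eps n/80$) has size at most $\hat\eps n/2$, flatly contradicting your own bound $\card{V(T)\setminus A}\ge 0.93\hat\eps n$ and showing the premise could never hold --- clearly wrong, since Case 2.3 of the algorithm's analysis is exactly the case where the premise does hold.

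Even setting aside the falsity of (a), the constants do not close: the type-1 slack you subtract is $\hat\eps n/25=0.04\hat\eps n$, which would land you at $\card{F\cap V(T)}\ge 0.89\hat\eps n$, short of the required $0.93\hat\eps n$; the slack $\card{V'}(h+1)/(2h+1)-0.93\hat\eps n=\Theta(\hat\eps n/h)$ is far too small to absorb it. The fix is simply to drop the set $A$ and the structural step and count \textsf{Fail}-vertices directly, exactly as the paper does.
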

\begin{proof}
Recall that in Step~\ref{G_1_connected_step_3}, we have sampled a set of $(100\cdot n\log n)/h$ random vertices in $V$ to execute the subroutine $\pbfs$ on, and a vertex of $V$ is eventually added to $V_{\textsf{isolated}}$ if it does not belong to any graph of $\set{H_v}_{v\in \tilde V}$, where $\tilde V$ is the subset of sampled vertices that the subroutine $\pbfs$ returns \textsf{Success} on.
For any vertex $v\in V$, if the set $B(v,h)=\set{v'\mid w(v,v')\le h}$ contains at least $h$ vertices that the subroutine $\pbfs(\cdot,h,q,\alpha)$ returns \textsf{Success} upon, then with probability $1-(1-h/n)^{100n\log n/h}=1-O(n^{-100})$, at least one of these vertices is sampled in Step~\ref{G_1_connected_step_3} and the corresponding graph $H_{v'}$ returned by the subroutine $\pbfs(v',h,q,\alpha)$ contains $v$, and therefore $v$ will not be added to $V_{\textsf{isolated}}$. 
From the union bound over all vertices of $V$, with probability at least $1-O(n^{-99})$, for each vertex $v$ in $V_{\textsf{isolated}}$, set $B(v,h)$ contains at most $h$ vertices that the subroutine $\pbfs(\cdot,h,q,\alpha)$ returns \textsf{Success} upon. 
We assume from now on that this is the case.
		
For each vertex $v$ in $V'$, let $N_v$ be the set of vertices that are at tree-distance at most $h$ from $v$ in $T$, and let $\hat N_v\subseteq N_v$ be the subset that contains all vertices that the subroutine $\pbfs(\cdot,h,q,\alpha)$ returns \textsf{Fail} on.
For each vertex $v\in V'$, it is immediate that $N_v\subseteq B(v,h)$, so $|N_v\setminus \hat N_v|\le h$, and since $\card{N_v}= 2h+1$, we have $|\hat N_v|\ge h+1$. 
On the other hand, from the definition of $V'$, every vertex in $N_v$ is at tree-distance at least $h$ from all special vertices of $T$. Therefore, each vertex of $T$ appears in at most $(2h+1)$ sets of $\set{N_v}_{v\in V'}$.
Altogether, if we let $V_f$ be the set of vertices in $T$ that the subroutine $\pbfs(\cdot,h,q,\alpha)$ returns \textsf{Fail} upon, then
$|V_f|\cdot (2h+1)\ge \sum_{v\in V'}|N_v|\ge \sum_{v\in V'}|\hat N_v|\ge |V'|\cdot (h+1)$, 
and it follows that $|V_f|\ge \card{V'} (h+1)/(2h+1) > 0.93 \hat \eps n$.
\end{proof}

Let $\hat V$ be the set of all smooth vertices in $T$ that the subroutine $\pbfs(\cdot,h,q,\alpha)$ returns \textsf{Fail} on. From \Cref{obs:badnum} and \Cref{clm: many failed vertices in T}, so $\card{\hat V} \ge 0.93\hat{\eps}n-\hat{\eps}n/2\ge 0.43\hat \eps n$. 

We delete from $\hat V$ all vertices such that $|V(T^h_v)|\ge 10h/\hat{\eps}$. Since the number of $\ell$-light vertices is at most $2\eps n$, at most $(2\eps n)\cdot (2h+1)/(10h/\hat{\eps})\le \eps n$ vertices are deleted. Therefore, at least $0.4\hat \eps n$ vertices remain in $\hat V$. 
It is easy to see that, if the subroutine $\pbfs(v,h,q,\alpha)$ returns \textsf{Fail} for some remaining vertex $v$ in $\hat V$, then either tree $T^h_v$ contains at least $3$ level-$h$ vertices, or the subroutine $\pbfs(v,h,q,\alpha)$ has performed $q$ queries before all its $h$ stages are finished. 
We partition the remaining set $\hat V$ into two subsets accordingly: set $\hat V_1$ contains vertices $v$ such that $T^h_v$ contains at least $3$ level-$h$ vertices; and set $\hat V_2$ contains vertices $v$ such that $\pbfs(v,h,q,\alpha)$ performs $q$ queries before all its $h$ stages are finished. We distinguish between the following cases.
		
\paragraph{Case 2.3.1. $\card{\hat V_1}\ge 0.2 \hat \eps n$.}
Consider now a vertex $v\in \hat V_1$. Since $v$ is a smooth vertex, $T$ contains at most $2$ vertices that are at tree-distance $h$ from $v$. 
Since tree $T^h_v$ contains at least $3$ level-$h$ vertices, there exists a vertex $u\notin V(T)$ such that $w(u,v)=h$. 
We call such a vertex a \emph{failure-certificate} for $v$, and designate any such vertex as $u(v)$. 
Clearly, either $u(v)$ is an $\ell$-light vertex, or $u(v)$ belongs to some e-block of $G'_1$. 
It is not hard to see that each $\ell$-light vertex may be a failure-certificate for at most $2$ smooth vertices. 
Since there are at most $2\eps n$ $\ell$-light vertices, set $\hat V_1$ contains at most $4\eps n$ smooth vertices that have an $\ell$-light failure-certificate. 
Therefore, $\hat V_1$ contains at least $0.19\hat \eps n$ vertices $v$ such that the failure-certificate $u(v)$ belongs to some e-block of $G'_1$. We then show using in the following observation that the size of a maximum matching consisting of only edges in e-blocks of $G_1'$ is at least $0.19 \hat \eps n /6 \ge \hat\eps n/40$.
And then together with \Cref{lem:matching-TSP}, this implies that $\opttsp(w)\le (2-\hat\eps/60)n$, thus completing the proof of \Cref{clm: matching_size_upper_bound} in this case.
		
\begin{observation} \label{obs:type-1}
Every e-block $B$ of $G'_1$ contains a matching of size at least $\frac{1}{6} \bigg|\set{v\in \hat V_1 \text{ }\big|\text{ } u(v)\in V(B)}\bigg|$.
\end{observation}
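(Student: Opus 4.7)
The plan is to construct a matching in $B$ explicitly from the shortest $G_1$-paths associated with the failure-certificates. Fix an e-block $B$ of $G'_1$ and let $D_B = \{v \in \hat V_1 : u(v) \in V(B)\}$. For each $v \in D_B$, fix a shortest $G_1$-path $P_v$ from $v$ to $u(v)$ of length exactly $h$. Since $v \in V(T)$ and $u(v) \in V(B)\setminus V(T)$, the path $P_v$ must enter $B$ at some boundary vertex (of which there are at most two, namely $s$ and $t$, regardless of whether $v_B$ has degree $2$ or not in $\tau_{G'_1}$) and then continue inside $B$ until reaching $u(v)$. Let $Q_v$ denote the sub-path of $P_v$ that lies in $B$, and let $f_v$ denote the last edge of $Q_v$, i.e., the edge incident to $u(v)$.

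Next, I would bound over-counting using the smoothness of $v$ together with the refinement $|V(T^h_v)|\le 10h/\hat\epsilon$ imposed on vertices remaining in $\hat V_1$ at this stage of the algorithm. For $v_1, v_2 \in D_B$ with the same terminal edge $f_{v_1}=f_{v_2}$, the common endpoint is $u(v_1)=u(v_2)$, so both $v_1,v_2$ lie within $G_1$-distance $2h$ of $u(v_1)$, and hence each lies in the BFS-ball of radius $h$ centered at the same failure-certificate. Combined with the fact that each smooth $v$ is at tree-distance $\ge 2h$ from every special vertex of $T$ (so the first portion of $P_v$ is forced onto the unique maximal induced subpath of $T$ through $v$), I would argue that at most a constant number (at most three) of distinct $v \in D_B$ can share the same last edge $f_v$. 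Here the constant $3$ arises because the path $P_v$ can approach $u(v)$ from each of the at most two sides along $T$ and, after entering $B$, the local direction of approach inside $B$ contributes at most one additional degree of freedom.

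Finally, using that the set $F_B = \{f_v : v \in D_B\} \subseteq E(B)$ has cardinality at least $|D_B|/3$, I would extract a matching from $F_B$ by the standard greedy/alternating selection: since the edges of $F_B$ all lie in $B$ (a simple graph), iteratively picking edges that do not share an endpoint with previously selected ones loses at most a factor of $2$, yielding a matching $M_B \subseteq E(B)$ of size at least $|F_B|/2 \ge |D_B|/6$. The main obstacle is making the constant-factor counting in the second step fully rigorous: one must carefully handle the degenerate cases where $v$ itself is a boundary vertex of $B$ (so the ``first'' and ``last'' portions of $P_v$ collapse), and where multiple short branches of $T^h_v$ reach into $B$; the smoothness condition and the filtering on $|V(T^h_v)|$ are precisely what let us absorb these cases into the constant $6$ rather than a larger factor.
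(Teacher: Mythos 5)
Your overall plan — associate to each $v \in D_B$ an edge $f_v$ inside $B$, bound the multiplicity, and then extract a matching — is a plausible local approach, but both of the non-trivial steps have real gaps, and the paper in fact argues along entirely different (global) lines. The paper's proof splits on the length of $Q_B$, the shortest path through $B$ that was grafted into $T$: if $|E(Q_B)| \ge x/3$ this path already supplies a matching of size $x/6$; otherwise at least $x/3$ of the certificates' owners $v$ lie in one of the two components $T_B$ of $T \setminus E(Q_B)$, and because those $v$ are smooth they sit on a single maximal induced subpath through $s_B$, so the values $w(v,s_B)$ are pairwise distinct, hence (via $w(u(v),s_B)=h-w(v,s_B)$) the values $w(u(v),s_B)$ are pairwise distinct nonnegative integers, forcing $B$ to have diameter $\ge x/3$ and therefore a matching of size $\ge x/6$ along a shortest path.

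The first gap in your argument is the multiplicity bound. You assert that the initial segment of $P_v$ ``is forced onto the unique maximal induced subpath of $T$ through $v$'' because $v$ is tree-far from special vertices. But $P_v$ is a shortest $G_1$-path, not a path in $T$. A smooth degree-$2$ vertex of $T$ can still have arbitrarily many $G_1$-neighbors — for instance $v$ may itself lie on the grafted shortest path of some other e-block $B'$ and have many neighbors inside $B'$ — so nothing forces $P_v$ to follow $T$. Also note that $f_{v_1}=f_{v_2}$ does not give $u(v_1)=u(v_2)$: they could be the two distinct endpoints of that shared edge. So the claim ``at most $3$ distinct $v$ share $f_v$'' is unsupported; the only way the paper rules out collisions is the distinctness of $w(u(v),s_B)$, which is a global consequence of the path structure near $s_B$, not a local one.

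The second gap is the last step. A set of $k$ edges of a simple graph does \emph{not} contain a matching of size $k/2$: if all the $f_v$ share the vertex that is not $u(v)$, $F_B$ is a star and the maximum matching in it has size $1$. The ``greedy loses a factor $2$'' fact compares a maximal matching to a \emph{maximum} matching, not to the number of edges. To get a matching of size $\Omega(|F_B|)$ you would additionally need a degree bound on the edge set $F_B$, which your argument does not establish. These two gaps are exactly what the paper's case split and the ``distinct distances from $s_B$'' observation are designed to circumvent.
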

\begin{proof}
Since vertices of $\hat V_1$ are smooth, they are at tree-distance at least $2h$ from all special vertices of $T$. Therefore, if $B$ contains at least one failure-certificate $u(v)$ for some vertex $v\in \hat V_1$, then $B$ must contain exactly two vertices that are incident to edges of $\delta_{G_1}(B)$, which we denote by $s_B,t_B$, and $T$ contains a shortest path in $B$ connecting $s_B$ to $t_B$, which we denote by $Q_B$.

Denote $S=\set{v\in \hat V_1 \text{ }\big|\text{ } u(v)\in V(B)}$ and $x=|S|$.
If $|E(Q_B)|\ge x/3$, then $Q_B$ contains a matching of size at least $x/6$.
Assume now that $|E(Q_B)|< x/3$. 
Since $\hat V_1\subseteq V(T)$, at least $2x/3$ vertices of $S$ do not belong to $B$. 
Therefore, at least one of the two subtrees obtained by deleting all edges of $Q_B$ from $T$ contains at least $x/3$ vertices of $R$.
Assume without loss of generality that the subtree containing vertex $s_B$, that we denote by $T_B$, contains at least $x/3$ vertices of $S$. 
From the definition of $T$, it is easy to see that the distances in $\set{w(v,s_B)\mid v\in V(T_B)\cap \hat V_1}$ are mutually distinct, which implies that the distances in $\set{w(u(v),s_B)\mid v\in V(T_B)\cap \hat V_1}$ are also mutually distinct. Therefore, the diameter of the e-block $B$ is at least $x/3$, and it follows that $B$ contains a matching of size at least $x/6$, completing the proof of \Cref{obs:type-1}.
\end{proof}

\paragraph{Case 2.3.2. $|\hat V_2|\ge 0.2\hat \eps n$.} 

For each vertex $v\in \hat V_2$, we designate an arbitrary vertex in the set $R(v)$ as the \emph{representative} of $v$, denoted by $r(v)$.
We use the following claim.
		
\begin{claim} \label{clm:bfs-fail-connect}
For each vertex $v \in \hat V_2$, there exists a pair $(\vin,\vout)$ of vertices in $V$, such that: (i) the tree-distance between vertices $r(\vin)$ and $v$ in $T$ is at most $h$; (ii) the tree-distance between vertices $r(\vin)$ and $r(\vout)$ in $T$ is at least $h$; and (iii) $w(\vin,\vout) \le h/500$.
\end{claim}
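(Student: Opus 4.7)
The plan is to exploit the failure mode of $\pbfs(v,h,q,\alpha)$: since the subroutine exhausted its query budget of $q=n/\hat\eps^2$ but built a tree $T^h_v$ with fewer than $10h/\hat\eps$ vertices, the vast majority of its queries must cross the boundary of $V(T^h_v)$. I will first partition the queries into Type-A (both endpoints in $V(T^h_v)$) and Type-B (one endpoint in each of $V(T^h_v)$ and $V\setminus V(T^h_v)$); since there are at most $\binom{|V(T^h_v)|}{2}\le 50h^2/\hat\eps^2=o(q)$ possible Type-A pairs, at least $q/2$ of the queries are Type-B. For each $\hat v\notin V(T^h_v)$ that was queried, let $\qset_f(\hat v)\subseteq V(T^h_v)$ be the set of $v'$ with $w(\hat v,v')$ queried and $d_{\hat v}=\min_{u'\in V(T^h_v)}w(\hat v,u')$; by \Cref{obs: Euler tour distance}, every pair $x,x'\in\qset_f(\hat v)$ satisfies $w(x,x')\ge d_{\hat v}-1$.

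Next I plan to restrict attention to queried $\hat v$'s whose representative is far from $v$ in $T$. Tracking the X-condition through the BFS stages yields $w(v,\hat v)\le h$ for every queried $\hat v$, while the smoothness of $v$ bounds by $10h/\hat\eps$ the number of $\hat v\in V$ whose representative lies within $T$-distance $2h$ of $v$; combined with the $\gtrsim\sqrt n/(10\hat\eps^2)$ distinct queried $\hat v$'s implied by the Type-B lower bound, this ensures that many queried $\hat v$'s in fact have $\dist_T(r(\hat v),v)>2h$. On this ``far-representative'' family, a pigeonhole/packing argument produces some $\hat v$ with $d_{\hat v}\le h/500$: otherwise each $\qset_f(\hat v)$ would be an $(h/500-1)$-separated subset of $V(T^h_v)$ in the metric $w$, and combining the $w$-diameter bound $\le 2h$ with the explicit BFS-tree structure of $V(T^h_v)$ inherited from $G_1$ gives a per-vertex upper bound on $|\qset_f(\hat v)|$ that, multiplied by $|V\setminus V(T^h_v)|\le n$, contradicts the lower bound $q/2$ on Type-B queries.

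Given such a $\hat v$, I set $\vout=\hat v$ and let $\vin\in V(T^h_v)$ realize $d_{\hat v}$, so that condition~(iii) holds by construction. Condition~(i) then follows from $\vin\in V(T^h_v)$ via a representative-propagation lemma that any $G_1$-edge moves one's representative in $T$ by at most one step (endpoints of a $G_1$-edge either share representatives -- by living in a common e-block of $G_1'$ or in a common maximal $\ell$-light subgraph -- or are joined by a bridge of $G_1'$ that survives in $T$), so a weight-$1$ $G_1$-path of length $\le h$ yields $\dist_T(r(\vin),v)\le h$. Condition~(ii) follows from our choice of $\vout$ as far-representative: $\dist_T(r(\vout),v)>2h$ together with $\dist_T(r(\vin),v)\le h$ forces $\dist_T(r(\vin),r(\vout))\ge h$ by the tree triangle inequality. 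The main obstacle I anticipate is the packing step in the second paragraph: because $w$ restricted to $V(T^h_v)$ need not be a doubling metric, a naive diameter-based packing estimate is too weak by itself, and I expect to need to route the argument through a Cauchy--Schwarz double-count on triples $(\hat v,x,x')$ with $x,x'\in\qset_f(\hat v)$, matched against the relatively small number of pairs $(x,x')\in V(T^h_v)^2$ with $w(x,x')\ge h/500-1$.
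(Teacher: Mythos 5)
Your high-level route mirrors the paper's: you correctly identify that the query budget $q=n/\hat\eps^2$ must be spent mostly on pairs with one endpoint outside $V(T^h_v)$, you invoke Observation~\ref{obs: Euler tour distance} to conclude that each such $\qset(\hat v)$ is a $(d_{\hat v}-1)$-separated subset of $V(T^h_v)$ under $w$, and you use smoothness of $v$ to guarantee that some such $\hat v$ has a far representative. Choosing $\vout=\hat v$ and $\vin$ realizing $d_{\hat v}$ then gives (iii) directly and (ii) by the tree triangle inequality, exactly as in the paper.

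However, the step you flag as ``the main obstacle'' is a genuine gap, and your proposed fix does not close it. You need a per-$\hat v$ bound of the form $|\qset(\hat v)| = O(|V(T^h_v)|/d_{\hat v})$, and you correctly observe that a bare diameter-plus-separation packing estimate is useless here. But the Cauchy--Schwarz double-count over triples $(\hat v,x,x')$ does not work: the number of pairs $(x,x')\in V(T^h_v)^2$ with $w(x,x')\ge h/500-1$ is \emph{not} small (most pairs in $V(T^h_v)$ are at that separation once $|V(T^h_v)|\gg h/500$), and even if you count all $\binom{|V(T^h_v)|}{2}=O(n)$ pairs, you have no upper bound on the multiplicity with which a fixed pair $(x,x')$ can appear across the $\qset(\hat v)$'s, so the double-count gives no contradiction against $q/2=\Theta(n/\hat\eps^2)$. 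The missing ingredient is the \emph{Euler tour} argument: since the Euler tour of $T^h_v$ has length at most $2|V(T^h_v)|$ and consecutive Euler steps are weight-$1$ $G_1$-edges, any $(d_{\hat v}-1)$-separated subset of $V(T^h_v)$ occupies Euler-tour positions pairwise $\ge d_{\hat v}-1$ apart, hence $|\qset(\hat v)|\le 2|V(T^h_v)|/(d_{\hat v}-1)$. This is the linear (not quadratic) packing bound a tree affords, and it is what the paper uses. With it, a single pigeonhole over the $\le n$ vertices outside $N'_v$ already produces some $x$ with $|\qset(x)|\ge (q-100n)/n$, and then $d_x\le 2|V(T^h_v)|/|\qset(x)|+1\le h/500$. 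A secondary issue: your intermediate claim of $\gtrsim\sqrt n/(10\hat\eps^2)$ \emph{distinct} queried $\hat v$'s is not justified without a prior per-$\hat v$ bound on $|\qset(\hat v)|$ (the same bound you are missing), so the argument as written is circular at that point; the paper sidesteps this by counting queries rather than distinct vertices.
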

\begin{proof}
For each $u\notin T^h_v$, we denote by $d_{u}$ the minimum distance between $u$ and a vertex in $T^h_v$, namely $d_u=\min_{u'\in V(T^h_v)}\set{w(u,u')}$. 
Recall that set $\qset(u)$ contains all vertices $v'$, where the distance between $v'$ and $u$ has been queried in the subroutine $\pbfs(u,h,q,\alpha)$.
As a corollary of \Cref{obs: Euler tour distance}, every pair of vertices in $\qset(u)$ are at tree-distance at least $(d_u-1)$ in $T$. Therefore, $\card{\qset(u)}\le 2\card{V(T^h_v)}/(d_u-1)$, since otherwise there exists a pair of vertices in $\qset(u)$ that are at distance at most $(d_u-2)$ in the Euler tour of $T^h_v$, and so they are at distance at most $(d_u-2)$, a contradiction.

Denote $N_v=V(T^h_v)$. From the definition of the subroutine $\pbfs(v,h,q,\alpha)$, if the distance between a pair $v'_1,v'_2$ of vertices are ever queried, then at least one of $v'_1,v'_2$ belongs to set $N_v$.
Let $N'_v$ be the set of vertices $u$ such that the tree-distance between vertices $r(u)$ and $v$ in $T$ is at most $2h$. Since $v$ is a smooth vertex, $\card{N'_v} \le 10h/\hat\eps$. 
Therefore, the number of performed queries on the distances between a pair of vertices in $N'_v$ is at most $(10h/\hat\eps)^2\le 100n$. Since $\pbfs(v,h,q,\alpha)$ returns \textsf{Fail}, there exists a vertex $x\notin N'_v$ with $\card{\qset(x)} \ge (q-100n)/n \ge 2^{20}/\hat{\eps}$. From \Cref{obs: Euler tour distance}, $d_w \le \frac{2\cdot (10h/\hat{\eps})}{2^{20}/\hat{\eps}} +1 \le  h/500$. We designate $x$ as $\vout$, and define $\vin=\arg_{v'\in N_v}\set{w(v',\vout)}$.
It remain to verify the properties (i) and (ii).
On one hand, since $\vin \in N_v$ and $v$ is a smooth vertex, $r(\vin)$ and $v$ are at tree-distance at most $h$ in $T$. On the other hand, since $r(\vout)$ and $v$ are at tree-distance at least $2h$ in $T$, from the triangle inequality, $r(\vout)$ and $r(\vin)$ are at tree-distance at least $h$ in $T$.
\end{proof}

Consider now a vertex $v \in \hat V_2$, let $X'_v$ be the tree-path in $T$ connecting $\vin$ to $\vout$.
Since $v$ is smooth, vertices $\vin$ and $v$ belong to the same maximal induced subpath of $T$. We denote by $X_v$ the intersection between this maximal induced subpath and $X'_v$. Clearly, $X_v$ is a path of length at least $h$.
We use the following claim.

\begin{claim}
\label{clm: selected shortcuts}
Let $P$ be a maximal induced subpath of $T$ and define $\hat V(P)= \hat V_2 \cap V(P)$.
Then there exists a subset $\bar V(P)\subseteq \hat V(P)$, such that: (i) for any pair $v,v'$ of vertices in $\bar V(P)$, the paths $X_v$ and $X_{v'}$ are disjoint; and (ii) $\sum_{v\in \bar V(P)} |V(X_v)| \ge |\hat V(P)|/5$.
\end{claim}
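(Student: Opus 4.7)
The proof rests on first pinning down the structure of each $X_v$ and then running a greedy interval-scheduling argument with a charging analysis. I will begin by verifying two structural properties of $X_v$: that $X_v$ is a subpath of $P$ of length at least $h$, and that one of its endpoints coincides with $r(\vin)$, which itself lies within tree-distance $h$ of $v$ in $T$. These follow from three inputs: the smoothness of $v$ (guaranteeing that the entire tree-neighborhood of radius $2h$ around $v$ lies inside $P$ and away from special vertices of $T$), property~(i) of \Cref{clm:bfs-fail-connect} (placing $r(\vin)$ inside $P$, within distance $h$ of $v$), and property~(ii) of the same claim (giving a lower bound of $h$ on the tree-distance from $r(\vin)$ to $r(\vout)$). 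The path $X_v$ is then the portion of the tree-path from $r(\vin)$ to $r(\vout)$ that lies inside $P$; whether this path remains entirely within $P$ or exits at a special vertex of $T$, its length is at least $h$ and $r(\vin)$ serves as one of its endpoints.

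Next, I will split $\hat V(P)$ into two subsets according to whether $X_v$ extends to the left or to the right of $r(\vin)$ along $P$, and work with whichever subset is larger; assume without loss of generality that the right-extending subset, call it $\hat V^R(P)$, has size at least $|\hat V(P)|/2$. Within $\hat V^R(P)$, I will sort vertices by the position of $r(\vin)$ along $P$ and greedily build a candidate $\bar V(P)$: scanning left-to-right, I include a vertex $v$ whenever $X_v$ is disjoint from every previously selected interval. Disjointness of the selected intervals $\{X_v\}_{v \in \bar V(P)}$ is immediate by construction.

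The heart of the argument is the charging analysis. For two consecutive selected vertices $v_j$ and $v_{j+1}$ in the greedy order, any unselected vertex $v$ with $r(\vin)$ positioned between them must satisfy $r(\vin) \in (r(v_j^{\textnormal{\textsf{in}}}),\, r(v_j^{\textnormal{\textsf{in}}}) + |X_{v_j}|]$, for otherwise $X_v$ would be disjoint from $X_{v_j}$ and $v$ would have been selected by the greedy. Since $v$ lies within tree-distance $h$ of $r(\vin)$ on $P$, and since the elements of $\hat V(P) \subseteq V(P)$ occupy distinct positions on $P$, the number of such unselected vertices is at most $|X_{v_j}| + 2h + 1$. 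Summing over all selected vertices and using $|X_{v_j}| \geq h$, I conclude $|\hat V^R(P)| = O\bigl(\sum_{v \in \bar V(P)} |V(X_v)|\bigr)$, which after tightening the constants yields the required bound $\sum_{v \in \bar V(P)} |V(X_v)| \ge |\hat V(P)|/5$.

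I expect the primary obstacle to be the first step: carefully handling the fact that $\vin$ and $\vout$ need not themselves be vertices of $T$ (they are arbitrary vertices of $V$), so the tree-path defining $X_v$ and all subsequent distance comparisons must be phrased in terms of the representatives $r(\vin)$ and $r(\vout)$, and the definition of $X_v$ via the map $R(\cdot)$ must be interpreted consistently. Once these structural facts are in place, the greedy-plus-charging step is essentially routine, though squeezing the constant down to $1/5$ (rather than the weaker $1/10$ one gets by splitting into left- and right-extending subsets) will likely require running a single unified greedy instead of partitioning, and a slightly more careful accounting of the boundary vertices between consecutive selected intervals.
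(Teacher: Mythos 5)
Your overall structure — verify that $X_v$ is a subpath of $P$ of length at least $h$ with $r(\vin)$ as one endpoint, then run a greedy plus a charging argument — is sound, but the specific greedy you chose does not deliver the stated constant, and this is not merely a cosmetic issue. Sorting by position of $r(\vin)$ and splitting into left- and right-extending families costs a factor of $2$ up front, and the per-selected-vertex charge $|X_{v_j}| + 2h + 1 \le 3|X_{v_j}| + 1$ then gives you roughly $|\hat V(P)| \le 8 \sum_{v \in \bar V(P)} |V(X_v)|$, i.e.\ a $1/8$ bound rather than $1/5$. You flag this yourself ("will likely require running a single unified greedy"), but you do not supply the unified greedy, and a left-to-right scan of undirected intervals is awkward to charge cleanly precisely because two intervals can overlap the same region from opposite sides.

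The paper sidesteps this entirely by choosing a different selection rule: greedily pick the vertex $\hat v$ that \emph{maximizes} $|X_{\hat v}|$ among those whose path is disjoint from the already-selected ones. The payoff is a one-shot bound that replaces your two-stage accounting. If $v$ is the current maximizer and $v_1$ is any vertex with $X_{v_1} \cap X_v \ne \emptyset$, then by maximality $|V(X_{v_1})| \le |V(X_v)|$, and since $v_1$ lies within tree-distance $h \le |V(X_{v_1})|$ of its own interval's endpoint, $v_1$ lies within tree-distance $|V(X_{v_1})| + h \le 2|V(X_v)|$ of $X_v$ on $P$. There are at most $5|V(X_v)|$ positions on $P$ that close to $X_v$ (the $|V(X_v)|$ vertices of $X_v$ itself plus $2|V(X_v)|$ on each side), so each selected interval eliminates at most $5|V(X_v)|$ candidates, and summing gives $|\hat V(P)| \le 5 \sum_{v \in \bar V(P)} |V(X_v)|$ exactly as required. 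The direction of extension never enters, so there is no factor-$2$ loss. To repair your argument you would need to replace the position-sorted scan with this max-length greedy, at which point the proof essentially becomes the paper's.

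One smaller point: you identify the $\vin$ versus $r(\vin)$ distinction as "the primary obstacle." It is a real (if mild) notational imprecision in the paper — the tree-path $X'_v$ is of course the path between $r(\vin)$ and $r(\vout)$, not between $\vin$ and $\vout$, since the latter need not be vertices of $T$ — but once you interpret everything through $R(\cdot)$, as you propose, the structural facts about $X_v$ fall out directly from \Cref{clm:bfs-fail-connect} and smoothness, and this is not where the difficulty lies. The difficulty is the constant, and that is exactly the piece your proposal leaves open.
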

\begin{proof}
We iteratively construct the set $\bar V(P)$ as follows. 
Throughout, we maintain a set $Z$ of vertices of $\hat V(P)$, initialized $\emptyset$.
The algorithm proceeds in iterations, and will continue to be executed as long as there is a vertex $v\in \hat V(P)$, such that the path $X_v$ is disjoint from all paths in $\set{X_{u}\mid u\in Z}$.
We now describe an iteration. Let $\hat v$ be a vertex in $\hat V(P)$ that, among all vertices $v$ in $\hat V(P)$ such that $X_v$ is disjoint from all paths in $\set{X_{u}\mid u\in Z}$, maximizes $|X_v|$. We add vertex $\hat v$ into $Z$ and continue to the next iteration.
This finishes the description of the algorithm.
Let $\bar V(P)$ be the set $Z$ at the end of this algorithm.
It is clear from the algorithm that, for any pair $v,v'$ of vertices of $\bar V(P)$, the paths $X_v$ and $X_{v'}$ are disjoint. It remains to show that $\sum_{v\in \hat V'(P)} |V(X_v)| \ge |\hat V(P)|/5$.

We claim that, if $v=\arg_{\hat v\in \hat V(P)}\max\set{|X_{\hat v}|}$, then the number of vertices $v'$ such that $X_v\cap X_{v'}\ne \emptyset$ is at most $5|X_v|$. Note that this immediately implies that $\sum_{v\in \bar V(P)} |V(X_v)| \ge |\hat V(P)|/5$.
It remains to prove the claim.
Let $v_1$ be a vertex of $\hat V(P)$ such that $X_{\hat v} \cap X_{v} \neq \emptyset$. Since $v_1$ is a smooth vertex in $P$, $\vin_1$ also belongs to $P$. Note that $\vin_1$ is an endpoint of path $X_{v_1}$, and we denote by $x$ the other endpoint of $X_{v_1}$. 
From the optimality of $v$, $|V(X_{v_1})|\le \card{V(X_{v})}$. Since the tree-distance between $v_1$ and $X_{v_1}$ in $T$ is at most $h$, the tree-distance between $v_1$ and $X_v$ in $T$ is at most $\card{V(X_{v_1})}+h \le 2\card{V(X_{v})}$. Clearly, there are at most $5\card{X_{v}}$ vertices on $P$ that are at tree-distance at most $2\card{X_{v}}$ from $X_{v}$ in $T$. 
\end{proof}

Let $\pset$ be the set of maximal induced subpaths of $T$. 
Denote $\bar V=\bigcup_{P\in\pset} \bar V(P)$.
From \Cref{clm: selected shortcuts}, $|E(\rset)| \ge |\hat V|/5$, the paths of $\rset$ are mutually disjoint, and each path of $\rset$ have length at least $h$ each.
Therefore, $|\rset| \le |E(\rset)|/h$, and $\sum_{v\in \bar V}w(\vin,\vout)\le (h/500) \cdot (|E(\rset)|/h) = |E(\rset)|/500$. 
On the other hand, if we add, for each vertex $v\in \bar V$, the edge $(\vin,\vout)$ to $T$, then all edges of $E(\rset)$ are contained in some e-block in the resulting graph, and are therefore also contained in some e-block of $G_1$. 
Additionally, it is easy to see that there is a matching of size at least $|E(\rset)|/3$ that only contains edges of $E(\rset)$. Via similar arguments in the proof of \Cref{clm:tsp_upper}, we can show that 
$$\opttsp(w) \le 2n-\frac{|E(\rset)|}{3}\cdot \frac 2 3+\frac{|E(\rset)|}{500}\le 
2n-\frac{|E(\rset)|}{5}\le (2-\hat\eps/125)n.$$


\newpage
\part{Query Algorithm II}

In this part we present our second algorithm in the query model, that, given a distance oracle to a metric $w$ and an MST of the complete weighted graph with edge weights given by $w$, estimates the value of $\tsp(w)$ to within a factor of $(2-\eps_0)$, for some universal constant $\eps_0>0$, by performing $\tilde O(n^{1.5})$ queries to the oracle, thus establishing \Cref{thm: main with MST}.
We start with some additional preliminaries used in this part, then introduce some crucial subroutines, and finally present the query algorithm and its analysis in \Cref{sec: main_with_MST}.

\section{Preliminaries}

Let $H$ be a subgraph of $G$. 
For a set $E'$ of edges, we denote $w(E')=\sum_{e\in E'}w(e)$. For a subgraph $H'$ of $H$, we denote $w(H')=w(E(H'))$, and we denote $w(H)=w(E(H))$.

The following lemma is folklore.
\begin{lemma}
	\label{lem: TSP bounded by Eulerian Multigraph}
	Let $H$ be any connected Eulerian multigraph on $V$. Then $\tsp\le w(E(H))$.
\end{lemma}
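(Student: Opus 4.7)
The plan is to use the standard ``Euler tour, then shortcut'' argument. First, I would invoke the hypothesis that $H$ is a connected Eulerian multigraph on $V$ to conclude that $H$ admits a closed Euler walk $W = (v_{i_1}, v_{i_2}, \ldots, v_{i_m}, v_{i_1})$ that traverses each edge of $E(H)$ exactly once. Since $H$ is connected and spans $V$, every vertex of $V$ appears at least once in $W$. The total cost of this walk, measured with $w$, is exactly $\sum_{j} w(v_{i_j}, v_{i_{j+1}}) = w(E(H))$.

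Next, I would shortcut $W$ into a Hamiltonian cycle on $V$. Concretely, scan $W$ from left to right, and retain the first occurrence of each vertex, producing a sequence $(u_1, u_2, \ldots, u_n, u_1)$ where each vertex of $V$ appears exactly once. This sequence is a valid TSP tour $\pi$ on $V$. Each edge $(u_k, u_{k+1})$ of $\pi$ corresponds to a contiguous subwalk of $W$ connecting $u_k$ to $u_{k+1}$, and by repeated application of the triangle inequality for the metric $w$, we have $w(u_k, u_{k+1}) \le \sum_{\text{edges along this subwalk}} w(\cdot, \cdot)$. Summing over $k$, and noting that each edge of $W$ (equivalently, of $E(H)$) is counted at most once in this bound, we obtain $\cost(\pi) \le w(E(H))$.

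Finally, since $\tsp$ is the minimum cost over all TSP tours on $V$, and $\pi$ is one such tour, we conclude $\tsp \le \cost(\pi) \le w(E(H))$, completing the proof.

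No step here is an obstacle; the only thing to be slightly careful about is justifying that a connected Eulerian multigraph does admit a closed Euler tour (this is the classical theorem of Euler, which requires connectivity plus every vertex having even degree, both guaranteed by the hypothesis), and that the shortcutting step uses only the triangle inequality of $w$, which is a metric by assumption.
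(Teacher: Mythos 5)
Your proof is correct and is exactly the standard ``Euler tour then shortcut'' argument; the paper itself labels this lemma as folklore and gives no proof, so your write-out supplies precisely the argument the authors are implicitly appealing to. The one point worth making fully explicit is that the shortcutting partitions the closed Euler walk into contiguous (and hence edge-disjoint) subwalks, one per tour edge including the final $(u_n,u_1)$ return edge, so that each edge of $E(H)$ contributes to the triangle-inequality bound at most once.
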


We use the following theorem from \cite{behnezhad2021time}.


\begin{theorem}[Theorem 1.4 in \cite{behnezhad2021time}\!]
\label{thm: unweighted matching estimation}
For any $\hat\eps>0$, there is an algorithm that takes an unweighted graph as input, with probability $1-1/\textnormal{poly}(n)$, computes a $(2,\hat\eps n)$-approximation to the size of maximum matching, by performing $\tilde O(n/\hat\eps^3)$ queries to the adjacency matrix of the input graph.
\end{theorem}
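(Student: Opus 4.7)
The approach is to estimate the size of a \emph{random greedy maximal matching} (RGMM) via local simulation combined with vertex sampling. Recall that an RGMM is obtained by fixing a uniformly random permutation $\pi$ of the edges and scanning the edges in the order of $\pi$, adding each edge to the matching $M_\pi$ whenever both of its endpoints are still unmatched. Since $M_\pi$ is a maximal matching, $\mu(G)/2 \le |M_\pi| \le \mu(G)$ where $\mu(G)$ is the maximum matching size; in particular, any additive $(\hat\eps n/2)$-approximation of $\mathbb{E}_\pi[|M_\pi|]$ is automatically a $(2,\hat\eps n)$-approximation of $\mu(G)$. The two tasks are therefore: (i) reduce ``estimate $|M_\pi|$'' to ``decide whether a sampled vertex $v$ is matched in $M_\pi$'' via Chernoff/Hoeffding on vertex samples, and (ii) implement the decision oracle with few adjacency-matrix queries.

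\textbf{Step 1 (sampling reduction).} Sample $s=\Theta(\log n/\hat\eps^2)$ vertices $v_1,\ldots,v_s\in V$ uniformly, draw a fresh $\pi$, and for each $v_i$ let $X_i$ be the indicator that $v_i\in V(M_\pi)$. Output $\hat m = \frac{n}{2s}\sum_i X_i$. Then $\mathbb{E}[\hat m]=\mathbb{E}_\pi[|M_\pi|]$, and a standard Hoeffding bound gives $|\hat m - \mathbb{E}_\pi[|M_\pi|]|\le \hat\eps n/4$ with probability $1-1/\mathrm{poly}(n)$. Combined with Markov concentration of $|M_\pi|$ around its mean, we obtain the desired $(2,\hat\eps n)$-approximation.

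\textbf{Step 2 (local oracle for ``is $v$ matched'').} This is the technical heart. Following the Yoshida--Yamamoto--Ito / Onak--Onak / Behnezhad line, implement a recursive oracle $\textsc{IsMatched}(v)$ that explores neighbors of $v$ in the order induced by $\pi$: for each neighbor $u$ in $\pi$-order, call $\textsc{Matched-Before}(u,\pi(u,v))$, which itself recursively checks whether $u$ was matched by some earlier edge. The first neighbor $u$ for which the recursive call returns ``unmatched at time $\pi(u,v)$'' matches $v$. In the adjacency-matrix model, each access to the ``next random neighbor of $u$ in $\pi$-order'' is simulated by querying uniformly random pairs $(u,w)$ and accepting with the appropriate probability; this introduces a $1/d(u)$-factor cost per discovered neighbor, absorbed by the analysis below.

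\textbf{Step 3 (query-complexity analysis, and the main obstacle).} The core difficulty is bounding the expected number of queries spawned by a single call to $\textsc{IsMatched}(v)$. The plan is to adapt Behnezhad's two-level argument: (a) show that truncating $\pi$ after an $O(\hat\eps)$-fraction of edges changes $\mathbb{E}[|M_\pi|]$ by at most $\hat\eps n/4$, so the oracle only has to simulate a prefix of $\pi$; (b) for the truncated process, bound the expected recursion tree size at $v$ via an amortized ``rank'' argument that charges each recursive call to an ordered edge in $\pi$, using the key lemma that in an RGMM the probability an edge survives to be processed is inversely proportional to the number of earlier incident edges. Summed over $s$ sampled vertices and combined with the $1/d$-sampling overhead of the adjacency-matrix model, a careful accounting (using a bucketing of vertices by degree, and handling the high-degree case by the trivial $n$-query scan of a single row) yields total query complexity $\tilde O(n/\hat\eps^3)$. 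The main obstacle is this amortized bound: naively, the recursion at a vertex of degree $d$ could examine $\Theta(d)$ neighbors, and it is only the interaction between the random edge order and the stopping rule ``first unmatched-at-time-$\pi(e)$ neighbor matches me'' that keeps the expected cost per vertex at $\tilde O(1/\hat\eps^2)$ rather than $\Omega(\bar d)$.
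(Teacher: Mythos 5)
This statement is not proved in the paper at all: it is Theorem~1.4 of the cited work of Behnezhad \cite{behnezhad2021time}, imported verbatim and used as a black box, so there is no internal proof to compare your argument against. Judged on its own terms, your sketch correctly identifies the standard framework behind that result (random greedy maximal matching, vertex sampling with a Chernoff/Hoeffding reduction, and a local ``is $v$ matched?'' oracle in the style of Yoshida--Yamamoto--Ito and Onak et al.), but it is not a proof. The entire technical content of the cited theorem is the bound on the expected number of adjacency-matrix probes made by the local oracle, and your Step~3 explicitly defers this to ``adapt Behnezhad's two-level argument,'' which is circular when the goal is to establish Behnezhad's theorem. Without an actual proof of the amortized recursion-size bound (and of how the $1/d(u)$ neighbor-sampling overhead in the adjacency-matrix model is absorbed), the claimed $\tilde O(n/\hat\eps^3)$ query complexity is asserted, not derived.

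Two concrete points in the sketch are also wrong or misleading. First, the truncation claim in Step~3(a) --- that discarding all but an $O(\hat\eps)$-fraction prefix of $\pi$ changes $\mathbb{E}[|M_\pi|]$ by at most $\hat\eps n/4$ --- is false as stated: if $G$ is a perfect matching on $n$ vertices, keeping only an $\hat\eps$-fraction of the edges reduces the greedy matching size from $n/2$ to $\hat\eps n/2$. Whatever pruning is used must be justified differently (the actual argument controls the exploration cost via rank-based charging, not a global truncation of the permutation). Second, the appeal to ``Markov concentration of $|M_\pi|$ around its mean'' is both incorrect as a concentration statement and unnecessary: since $M_\pi$ is maximal for every fixed $\pi$, the pointwise bounds $\mu(G)/2\le|M_\pi|\le\mu(G)$ hold deterministically, so you only need concentration of the vertex-sampling estimator conditioned on $\pi$. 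As it stands, the proposal is a reasonable roadmap to the literature but leaves the theorem's central lemma unproven.
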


\section{Subroutine I: Reorganization of an Independent Set of $c$-Trees}

In this section we introduce the first crucial subroutine that will be used in the query algorithm.
At a high level, the subroutine in this section handles the problem in which we are given a set of \emph{simple} subtrees of an MST, and we want to estimate the TSP cost of a tour that only visits all vertices of the given subtrees. It turns out that the simplicity (which will be explained later) of the input subtrees can be properly utilized to reduce the number of queries needed in the TSP cost estimation task.

Let tree $T$ be rooted at a vertex $r\in V(T)$. We introduce the following definition.

\begin{definition}
We say that a subtree $F$ of $T$ is a \emph{$c$-\snfl}, for some integer $c>0$, iff tree $F$ contains at most $c$ leaves.
\end{definition}

\begin{definition}
We say that a set $\fset$ of $c$-\snfls in $T$ is \emph{independent}, iff (i) the subtrees in $\fset$ are edge-disjoint; and (ii) for every pair $v,v'$ of vertices in $V(\fset)$ that serve as non-root vertices of different subtrees of $\fset$, $v$ is neither an ancestor nor a descendant of $v'$ in $T$.
\end{definition}

\paragraph{Skeleton of an independent set $\fset$ of $c$-\snfls.}
Let $\fset$ be an independent set of $c$-\snfls of $T$.
For each $c$-\snfl $F\in \fset$, we denote by $r_F$ the root of $F$ (recall that the root $r_F$ is the vertex of $V(F)$ that is closest to the root $r$ of $T$).
Let tree $T'$ be obtained from $T$ by contracting all edges of $E(T)\setminus E(\fset)$. 
We call tree $T'$ the \emph{skeleton} of $\fset$.
Since the set $\fset$ of $c$-\snfls is independent, it is not hard to see that, equivalently, $T'$ can be viewed as the graph obtained by taking the union of all trees in $\fset$ and then identifying all vertices of $\set{r_F}_{F\in \fset}$.
We denote $V'=V(T')$ and denote by $r'$ the vertex of $T'$ that corresponds to vertices of $\set{r_F}_{F\in \fset}$, then every vertex of $V'\setminus \set{r'}$ is also a vertex of $V(\fset)$. 

Based on the given metric $w$ on $V$, 
we define a function $w': V'\times V'\to \mathbb{R}^{\ge 0}$ as follows.
For every pair $v_1,v_2\in V'\setminus\set{r'}$, we define $w'(v_1,v_2)=w(v_1,v_2)$.
For every $v\in V'\setminus\set{r'}$, we define $w'(r',v)=w(v_F,v)$, where $F$ is the $c$-\snfl of $\fset$ that contains $v$. Note that $w'$ may not be a metric on $V'$. But since $w$ is a metric, the restriction of $w'$ on $(V'\setminus \set{r'})\times (V'\setminus \set{r'})$ is a metric. We call $w'$ the \emph{induced metric} of $w$ on $V'$.

We prove the following observation.

\begin{observation}
$T'$ is a minimum spanning tree of $w'$.
\end{observation}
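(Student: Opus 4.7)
I will verify the minimum spanning tree property for $T'$ via the classical \emph{cut property}: a spanning tree $\tilde{T}$ of a weighted graph is an MST iff for every $e \in E(\tilde{T})$, the weight of $e$ is minimum among all edges crossing the cut of $V(\tilde{T})$ induced by removing $e$ from $\tilde{T}$. Fix an arbitrary edge $e \in E(T')$. Using the equivalent ``root identification'' description of $T'$, there is a unique tree $F \in \fset$ and a unique edge $\hat e = (u,v) \in E(F)$ corresponding to $e$, where $u$ is the parent of $v$ in $T$ (and hence in $F$). Let $A$ be the component of $T' \setminus e$ not containing $r'$, and $B = V' \setminus A$. By construction, $A$ corresponds (as a set of vertices of $V$) to the set of non-root vertices of $F$ that are descendants of $v$ in $F$; note that $w'(e) = w(\hat e)$ in both subcases ($u \neq r_F$ and $u = r_F$).

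The key structural step is to relate $(A,B)$ to the cut $(V(T_v), V \setminus V(T_v))$ induced in $T$ by the edge $\hat e$. I will show:
\begin{itemize}
\item $A \subseteq V(T_v)$, because every descendant of $v$ in the subtree $F$ is also a descendant of $v$ in $T$;
\item $B \setminus \{r'\} \subseteq V \setminus V(T_v)$: non-root vertices of $F$ not in $A$ are non-descendants of $v$ in $F$, hence (since $F$ is a subtree of $T$) non-descendants of $v$ in $T$; and non-root vertices of any other $F' \in \fset$ cannot be descendants of $v$ in $T$, by the independence property of $\fset$.
\end{itemize}

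With this containment in hand, I verify the cut property. Take any edge $\hat f = (x,y)$ in the complete graph on $V'$ with $x \in A$ and $y \in B$, and split into two cases. If $y \neq r'$, then $w'(\hat f) = w(x,y)$ by definition of $w'$; since $(x,y)$ crosses the cut $(V(T_v), V \setminus V(T_v))$ in the complete graph on $V$ and $\hat e$ also crosses this cut, the fact that $T$ is an MST of $w$ yields $w(x,y) \ge w(\hat e) = w'(e)$. If instead $y = r'$, then $w'(\hat f) = w(x, r_F)$, where $F$ is the tree of $\fset$ that contains $x$ (by the containment $A \subseteq V(F) \setminus \{r_F\}$, this $F$ is the same $F$ as above). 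Because $x$ is a descendant of $v$ in $F$ and $r_F$ is an ancestor of $v$ in $T$, the unique $T$-path from $x$ to $r_F$ contains $\hat e$; applying the cycle property of the MST $T$ (or equivalently the cut property for $\hat e$) to the pair $\{x, r_F\}$ gives $w(x, r_F) \ge w(\hat e) = w'(e)$. In either case $w'(e) \le w'(\hat f)$, so $e$ is a minimum-weight edge across the cut $(A,B)$, and the cut property is established for every edge of $T'$. Combined with the fact that $T'$ is a spanning tree of $V'$ (which follows from a simple vertex/edge count and connectivity of each $F$ through $r'$), this shows that $T'$ is an MST of $w'$.

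The main obstacle is the case $y = r'$: unlike standard MST-contraction arguments, here $w'$ does \emph{not} assign to $(r',v)$ the contracted shortest-path distance in $T$, but rather $w(r_F, v)$ directly from the original metric. Hence I cannot quote ``contraction of an MST is an MST'' as a black box, and must handle this case by explicitly invoking the cycle property for the pair $(x, r_F)$ using that $\hat e$ lies on their $T$-path. The independence of $\fset$ is used precisely to guarantee that the cut $(A, B)$ in $T'$ can be embedded within a single tree cut $(V(T_v), V \setminus V(T_v))$ of $T$.
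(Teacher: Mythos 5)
Your proof is correct and takes the dual route to the paper's: the paper argues via the cycle property (by contradiction, exhibiting a non-tree pair $v,v'$ whose $w'$-distance would beat the heaviest edge on $P_{T'}(v,v')$, and then observing that $E(P_{T'}(v,v'))\subseteq E(P_T(v,v'))$ to contradict optimality of $T$), whereas you argue via the cut property (directly verifying, for each $e\in E(T')$, that $w'(e)$ is minimum over the cut $(A,B)$ it induces in $T'$). These are equivalent MST characterizations, so there is no mathematical gap either way, but the two write-ups emphasize different things. The paper's version is terse and the key containment $E(P_{T'}(v,v'))\subseteq E(P_{T}(v,v'))$ is asserted without argument; your version is more explicit in two useful respects. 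First, you isolate and verify the structural embedding of the $T'$-cut $(A,B)$ inside the $T$-cut $(V(T_v),V\setminus V(T_v))$, and you explicitly invoke the independence condition to rule out a non-root vertex of some other $F'$ landing in $V(T_v)$ — exactly the point the paper leaves implicit. Second, you correctly flag the delicate spot: for $y=r'$ the weight $w'(r',x)=w(r_F,x)$ is not a contracted distance, so one cannot simply cite "contraction of an MST is an MST"; one must observe that $\hat e$ lies on $P^T_{x,r_F}$ (since $x$ is a $T$-descendant of $v$ and $r_F$ a $T$-ancestor of $u$) and invoke the exchange property there. The paper handles this same case by noting $P_{T'}(r',v')$ coincides with $P_T(r_F,v')$ edge-by-edge, which is the same observation phrased for the cycle property. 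Overall your argument is sound and, if anything, more self-contained than the published one.
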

\begin{proof}
Assume for contradiction that $T'$ is not a spanning tree of $w'$. Then there is a pair $v,v'\in V'$ with $(v,v')\notin E(T')$, such that $w'(v,v')<\max\set{w(e)\mid e\in P_{T'}(v,v')}$. Assume first that $v,v'\ne r$, so $w'(v,v')=w(v,v')$. Since both $v,v'$ are vertices of $V$, from the construction of $T'$, $E(P_{T'}(v,v'))\subseteq E(P_{T}(v,v'))$, and it follows that $w(v,v')<\max\set{w(e)\mid e\in P_{T}(v,v')}$, which is a contradiction to the fact that $T$ is a minimum spanning tree of $w$. Assume now that $v=r$. We denote by $P$ the path of $\pset$ that contains $v'$, and denote by $v''$ the endpoint of $P$ that is closer to $r$ in $T$. From the definition of $T'$, $w'(v,v')=w(v',v'')$. Therefore, $w(v',v'')<\max\set{w(e)\mid e\in P_{T'}(v,v')}=\max\set{w(e)\mid e\in P_{T}(v',v'')}$, a contradiction to the fact that $T$ is a minimum spanning tree of $w$.
\end{proof}

\paragraph{Special walks on a skeleton.}
A \emph{walk} on $V'$ is a sequence $\sigma=(v_1,\ldots,v_{k})$ of vertices of $V'$, where vertices in sequence $\sigma$ are allowed to repeat. 
We say that a walk $\sigma$ on $V'$ is a \emph{special walk}, iff $\sigma$ starts and ends at $r'$, and every vertex of $V\setminus \set{r'}$ is visited exactly once (while the vertex $r'$ can be visited an arbitrary number of times).
Let $w: V\times V\to \mathbb{R}^{\ge 0}$ be any metric on $V$, and let $w': V'\times V'\to \mathbb{R}^{\ge 0}$ be the induced metric of $w$ on $V'$. The \emph{minimum special walk cost} of $w'$, denoted by $\walkcost(w')$, is defined to be the minimum cost of any special walk on $V'$, where the cost of a walk $\sigma$ is defined to be $w'(\sigma)=\sum_{1\le i\le k-1}w'(v_i,v_{i+1})$.

The main result of this section is the following lemma, whose proof is deferred to \Cref{subsec: proof of spider_walk_estimation}.

\begin{lemma}
\label{lem: spider_walk_estimation}
There is an efficient algorithm, that takes as input a constant $0<\eps<10^{-9}$, an integer $0<c<10^3$, a metric $w$ on $V$ and a set $\fset$ of independent $c$-\snfls of $T$, either correctly reports that $\walkcost(w')\le \big(2-\frac{\eps^4}{2\log (1/\eps)}\big)\cdot \mst(w')$ where $w'$ is the induced metric of $w$ on $V'=V(T')$ and $T'$ is the skeleton of $\fset$, or correctly reports that $\walkcost(w')\ge (2-c_0\cdot c\cdot \eps)\cdot\mst(w')$, where $c_0$ is some universal constant, by performing $\tilde O(n)$ queries.
\end{lemma}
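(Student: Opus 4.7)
The plan is to estimate $\walkcost(w')$ by adapting the cover-advantage framework of Section~\ref{subsec: cover adv} from TSP to special walks on the skeleton tree $T'$. A distance query to the induced metric $w'$ can be simulated by a single query to $w$, so we may treat $w'$ as directly accessible. Since every $F \in \fset$ is a $c$-\snfl with at most $2c-1$ special vertices, Lemma~\ref{lem: estimate special adv of a set of segments} applied to the edge-disjoint family $\fset$ lets us test, at any fixed threshold $\tau > 0$, whether $\sum_{F \in \fset} \adv^*(F) \ge \tau \cdot \mst(w')$ or $\le 2\tau \cdot \mst(w')$ using $\tilde O(nc/\tau^2)$ queries, which is $\tilde O(n)$ for constant $c$ and $\tau = \mathrm{poly}(\eps)$.

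I would then establish two structural claims linking special-vertex cover advantage to $\walkcost(w')$. The upper-bound claim adapts Lemma~\ref{lem: cover_advantage}: if some subset $E^*$ of edges of $w'$ achieves total cover advantage at least $\beta \cdot \mst(w')$ on the subtrees of $\fset$, then $\walkcost(w') \le (2-\beta/2)\cdot \mst(w')$. The proof is the probabilistic construction of Lemma~\ref{lem: cover_advantage} -- sub-sample each advantageous edge independently with probability $1/2$ and form the Eulerian multigraph $H_{T',E'}$ on $V'$ -- followed by a per-excursion shortcutting to a valid special walk; between consecutive visits to $r'$ the induced metric $w'$ coincides with $w$ so the triangle inequality applies, while the free revisits to $r'$ need not be shortcut. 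The lower-bound claim adapts Lemma~\ref{lem: cover_advantage lower bound}: if $\sum_{F \in \fset} \adv^*(F) \le \gamma \cdot \mst(w')$, then $\walkcost(w') \ge (2 - O(c\gamma)) \cdot \mst(w')$, proved by splitting an optimal special walk $\sigma$ into its excursions, restricting each excursion to single trees of $\fset$, and applying Lemma~\ref{lem: tour edge into two cover} tree-by-tree to obtain two edge-disjoint collections on the special vertices of each $F$ that jointly cover $F$; the factor $c$ absorbs the per-tree boundary terms from the at-most-$c$ excursions entering $F$.

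Given the two claims, the algorithm instantiates Lemma~\ref{lem: estimate special adv of a set of segments} at a geometric sequence of $O(\log(1/\eps))$ thresholds $\tau_j$, ranging from $\Theta(\eps^4/\log(1/\eps))$ up to $\Theta(\eps)$. If the test succeeds at some $\tau_j$, the algorithm reports the upper bound $(2-\tau_j/2) \mst(w') \le (2 - \eps^4/(2\log(1/\eps))) \mst(w')$. Otherwise all tests fail, certifying $\sum_{F \in \fset} \adv^*(F) = O(\eps \cdot \mst(w'))$, and the algorithm reports the lower bound $(2 - c_0 c \eps) \mst(w')$ via the lower-bound claim. The $\log(1/\eps)$ denominator in the upper bound arises from probing multiple scales, together with a union bound over the per-threshold failure probabilities of Lemma~\ref{lem: estimate special adv of a set of segments}, while the total query budget remains $\tilde O(n)$ since each test costs $\tilde O(nc/\tau_j^2)$ and the smallest $\tau_j$ is $\mathrm{poly}(\eps)$.

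The main technical obstacle I anticipate is in the lower-bound adaptation, specifically controlling the constant $c_0 c$. Lemma~\ref{lem: cover_advantage lower bound} is stated for a single subtree of $T$, whereas here we must sum over the family $\fset$, and a special walk can enter and leave each $c$-\snfl as many as $c$ times (once per leaf), with each such excursion contributing boundary terms that must be charged to the lower bound. The independence of $\fset$ -- non-root vertices of distinct trees are neither ancestors nor descendants -- is crucial here, since it ensures that special vertices of different trees are not double-counted, but the within-tree excursion accounting must still be done carefully to yield the claimed linear dependence on $c$. A secondary nuisance is that $w'$ is not a proper metric at $r'$, so triangle-inequality-based shortcutting applies only strictly within excursions (where $w'=w$) and the role of $r'$ as a free shortcut point must be tracked throughout the construction.
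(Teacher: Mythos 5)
The central gap is in your lower-bound claim. Having $\sum_{F\in\fset}\adv^*(F)$ small does \emph{not} force $\walkcost(w')$ to be close to $2\mst(w')$: a special walk may jump directly from one $c$-tree to another without returning to $r'$, and the resulting savings are governed by \emph{cross-tree} cover advantages that the within-tree quantities $\adv^*(F)$ cannot see. Concretely, let $\fset=\{F_1,F_2\}$ be two length-$L$ paths of weight-$1$ edges hanging off a common vertex $r=r_{F_1}=r_{F_2}$ of $T$, with leaf endpoints $v_1,v_2$, and take the shortest-path metric in $T$ augmented by one extra edge $(v_1,v_2)$ of weight $L$. Then $\mst(w')=2L$, while every single edge $e$ touching a special vertex of $F_1$ has $w(\cov(e,F_1))\le w(e)$ (in particular $\adv(\{(v_1,v_2)\},F_1)=L-L=0$); by subadditivity of cover advantage over edge sets this forces $\adv^*(F_1)=\adv^*(F_2)=0$, so your hypothesis holds with $\gamma=0$. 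Yet the special walk that traverses $F_1$ out to $v_1$, jumps to $v_2$ at cost $L$, and traverses $F_2$ back to $r'$ has total cost $3L=\tfrac32\mst(w')$, strictly below your claimed $(2-O(c\gamma))\mst(w')=2\mst(w')$. That walk makes a single $r'$-excursion visiting both trees, so ``restricting each excursion to single trees'' discards exactly the edge $(v_1,v_2)$ whose savings cannot be charged to $\sum_F\adv^*(F)$.

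This cross-tree savings is precisely what the paper's extra machinery handles: it introduces the pairwise quantity $\zeta_{i,j}$ (the cover advantage on $F_i\cup F_j$ of edges joining special vertices of the two trees), builds a weighted graph $L$ on the tree indices with an edge $(i,j)$ whenever $\zeta_{i,j}\ge(1-\alpha)(w'_i+w'_j)$, and controls $\walkcost(w')$ through $\maxmat(L)$ in \emph{both} directions. The upper bound pairs matched trees into joint excursions to get $\walkcost(w')\le 2\mst(w')-\tfrac{1-\alpha}{2}\maxmat(L)$; the lower bound (Claim~\ref{clm: walk cost lower bound by max-matching}) splits an arbitrary special walk at special vertices and uses Vizing's theorem to bound the cross-tree jumps by $O(c)$ matchings, giving $\walkcost(w')\ge(2-O(c(1-\alpha)))\mst(w')-O(c)\cdot\big(\maxmat(L)+\sum_F\adv^*(F)\big)$. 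Estimating $\maxmat(L)$ within the $\tilde O(n)$ query budget (Claim~\ref{clm: max weighted matching estimation}) is itself nontrivial: pair queries to $L$ are simulated by $O(c^2)$ distance queries, trees are bucketed by weight, and the unweighted matching-size estimator of \cite{behnezhad2021time} is run per bucket. Your upper-bound direction and your use of Lemma~\ref{lem: estimate special adv of a set of segments} are sound and match the paper's first case, but without the matching estimate the argument cannot reach the regime where within-tree advantages are small yet cross-tree pairings are profitable.
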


We then obtain the following crucial corollary, whose proof is deferred to \Cref{subsec: Proof of independent path reorganization}.

\begin{corollary}
\label{lem: independent path reorganization}
There is an efficient algorithm, that takes as input a constant $0<\eps<10^{-9}$, an integer $0<c<10^3$, a metric $w$ on $V$ and an independent set $\fset$ of $c$-\snfls in $T$ such that $\sum_{F\in \fset}w(F)\ge (\frac{1}{2}+\eps)\cdot\mst(w)$,
computes an $\big(2-O(\frac{\eps^4}{\log (1/\eps)})\big)$-approximation of $\tsp(w)$, by performing $\tilde O(n)$ queries.
\end{corollary}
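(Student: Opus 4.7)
The plan is to invoke \Cref{lem: spider_walk_estimation} once on $(w, \fset, c)$ with a rescaled accuracy parameter $\eps' := \eps/K$, where $K$ is a constant depending only on $c$ and on the universal constant $c_0$ of the Lemma, chosen large enough that $c_0 c \eps' \le \eps/10$ (for instance $K = 10 c_0 c$). Writing $T'$ for the skeleton of $\fset$, $w'$ for the induced metric on $V' = V(T')$, and $\delta := (\eps')^4/(2\log(1/\eps')) = \Theta(\eps^4/\log(1/\eps))$, observe that $\mst(w') = w(T') = w(\fset)$. The single call to the Lemma dominates the overall query complexity at $\tilde O(n)$, and its report splits into two cases that the algorithm will handle separately.

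In the case the Lemma reports $\walkcost(w') \le (2-\delta)\mst(w')$, I will output $E_a := (2-\delta/2)\mst(w)$. To verify this is a valid $(2-\Omega(\eps^4/\log(1/\eps)))$-approximation of $\tsp(w)$, I plan to exhibit a TSP tour of $V$ of cost at most $(2-\delta/2)\mst(w)$. Contracting each $F\in\fset$ to its root $r_F$ turns $T$ into a tree $T^{\#}$ on $(V\setminus V(\fset))\cup\{r_F\}_{F\in\fset}$ of total weight $\mst(w)-w(\fset)$, whose Euler doubling traverses every vertex of $V \setminus (V'\setminus\{r'\})$ at cost $2(\mst(w)-w(\fset))$. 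Splicing this doubled traversal into any special walk on $V'$ at one of the occurrences of $r'$ produces a closed walk visiting all of $V$, and shortcutting this yields a TSP tour of cost at most $\walkcost(w') + 2(\mst(w)-w(\fset)) \le (2-\delta)w(\fset) + 2(\mst(w)-w(\fset)) = 2\mst(w) - \delta\cdot w(\fset) \le (2-\delta/2)\mst(w)$, where the final step uses $w(\fset) \ge (\tfrac12 + \eps)\mst(w)$. Since $\mst(w) \le \tsp(w)$, the output $E_a$ is sandwiched between $\tsp(w)$ and $(2-\delta/2)\tsp(w)$, as required.

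The remaining case is when the Lemma reports $\walkcost(w') \ge (2-c_0 c\eps')\mst(w')$; here I will output $E_b := 2\mst(w)$, which trivially satisfies $E_b \ge \tsp(w)$, so the approximation guarantee reduces to proving $\tsp(w) \ge (1+\Omega(\eps))\mst(w)$ (far stronger than needed). My route to this is the inequality $\walkcost(w') \le \tsp(w)$: given it, the case~(b) report combined with $w(\fset) \ge (\tfrac12+\eps)\mst(w)$ and the choice $c_0 c\eps' \le \eps/10$ gives $\tsp(w) \ge (2 - c_0 c\eps')(\tfrac12+\eps)\mst(w) \ge (1 + \tfrac74\eps - O(\eps^2))\mst(w)$, so that $E_b/\tsp(w) \le 2 - \Omega(\eps)$, which is much smaller than the target $2 - \Omega(\eps^4/\log(1/\eps))$. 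The hard part, and the main obstacle of the proof, is establishing $\walkcost(w') \le \tsp(w)$. My intended approach is to take an optimal TSP tour $\tau$ on $V$, shortcut it (by triangle inequality) to a Hamiltonian cycle $\tau''$ on $V'\setminus\{r'\}$ of cost at most $\tsp(w)$, and then convert $\tau''$ into a valid special walk on $V'$ by inserting visits to $r'$ at strategically chosen consecutive pairs $(u,v)$ of the cycle; each such insertion changes the cost by $w'(r',u) + w'(v,r') - w'(u,v) = w(r_{F_u},u) + w(v, r_{F_v}) - w(u,v)$, and I plan to show via a triangle-inequality-based amortization, exploiting the independence of $\fset$ and the at-most-$c$-leaves property of each $F$, that a set of insertions can be found whose total excess is non-positive (or at worst $o(\eps\cdot\mst(w))$, which still suffices). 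I expect this amortization to be the most delicate piece of the argument, requiring a careful case analysis of how $\tau$ traverses the various trees of $\fset$ relative to their roots.
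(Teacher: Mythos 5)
Your case~(a) is essentially the paper's argument, almost verbatim. The Euler-doubling-of-the-contracted-tree-plus-splice construction is exactly what is proved in the paper's Claim ``$\tsp(w)\le \walkcost(w')+2\cdot w(E(T)\setminus E(\fset))$'', and your downstream arithmetic using $w(\fset)\ge(\tfrac12+\eps)\mst(w)$ matches the paper's. So half the proof is fine.

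The gap is in case~(b). You have correctly identified that the crux is to relate $\walkcost(w')$ to $\tsp(w)$, but the inequality you propose, $\walkcost(w')\le\tsp(w)$, is not what the paper proves, and it is not true as stated. What the paper proves is
\[
\walkcost(w')\;\le\;\tsp(w)\;+\;2\cdot\sum_{F\in\fset}\adv^*(F),
\]
and the additive term cannot be dispensed with by triangle-inequality amortization. The reason your planned amortization is blocked: when you cut the shortcut Hamiltonian cycle $\tau''$ at a pair $(u,v)$ and insert $r'$, the excess is $w(r_{F_u},u)+w(r_{F_v},v)-w(u,v)$. Even if you cut at the place where the original tour $\tau$ passes through the top component $U_0$, say at a vertex $z\in U_0$, the excess per endpoint is $w(r_{F_u},u)-w(u,z)$, which is bounded only by $\adv((u,z),F_u)\le\adv^*(F_u)$ — it need not be $\le 0$. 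Since $\adv^*(F)$ can be $\Theta(w(F))$ when the metric ``compresses'' the two ends of a $c$-tree, the total excess $\sum_F\adv^*(F)$ can be a constant fraction of $\mst(w)$, and no amount of clever choice of insertion site is guaranteed to remove it: the lower-bounding argument only ever gives $\adv^*$, not zero.

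Consequently, your single call to the spider-walk Lemma is not sufficient. The paper's algorithm makes a \emph{preliminary} call (to the lemma that estimates $\sum_{F\in\fset}\adv^*(F)$ for a set of edge-disjoint subtrees with parameter $\eps_1=\Theta(\eps)$) and branches: if $\sum\adv^*(F)$ is large, it directly exploits the cover-advantage lemma to certify $\tsp\le(2-\Omega(\eps))\mst$ and outputs accordingly, \emph{without ever estimating} $\walkcost$; only if $\sum\adv^*(F)$ is certified small does it invoke the spider-walk estimator, and then the slack $2\sum\adv^*(F)\le O(\eps)\cdot\mst$ is harmless in the lower bound $\tsp\ge\walkcost(w')-2\sum\adv^*(F)\ge(1+\Omega(\eps))\mst$. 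Your write-up omits this branching entirely, and it is exactly what makes case~(b) work. To repair your proof you should prove the paper's version of the lower bound (with the $\adv^*$ slack) rather than the unconditional $\walkcost\le\tsp$, and add a first-phase estimate of $\sum_F\adv^*(F)$ that routes large-advantage instances directly to a ``$\tsp$ is bounded away from $2\mst$'' answer.
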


\subsection{Proof of \Cref{lem: spider_walk_estimation}}
\label{subsec: proof of spider_walk_estimation}

In this section we provide the proof of \Cref{lem: spider_walk_estimation}.
We denote the $c$-\snfls of $\fset$ by $F_1,\ldots,F_k$, and for each $1\le i\le k$, we denote $w'_i=w'(F_i)$, so $\mst(w')=\sum_{1\le i\le k}w'_i$.

Consider a pair $F_i, F_j$ of distinct $c$-\snfls in $\fset$. Let $V^*_i$ be the set of special vertices of $F_i$, and we define set $V^*_j$ similarly. We define $\zeta_{i,j}$ to be the maximum cover advantage of any subset of edges with both endpoints in $V^*_i\cup V^*_j$ on $F_i\cup F_j$. Since $|V^*_i|, |V^*_j|\le 2c$, the value of $\zeta_{i,j}$ can be computed with $O(c^2)$ queries.

We construct a weighted graph $L$ as follows. We use a parameter $0.9<\alpha<1$ whose value will be set later. The vertex set of $L$ is $V(L)=\set{1,\ldots,k}$. For every pair $i,j\in V(L)$, set $E(L)$ contains the edge $(i,j)$ iff $\zeta_{i,j}\ge (1-\alpha)\cdot(w'_i+w'_j)$, and, if such an edge exists in $E(L)$, then it has weight $w'_i+w'_j$. Let $\maxmat(L)$ be the maximum total weight of a matching in $L$. We prove the following claims.

\begin{claim}
\label{clm: max weighted matching estimation}
There is an algorithm that performs $\tilde O(c^2n)$ queries and outputs an estimate $X$ of $\maxmat(L)$, such that 
$$X\le \maxmat(L)\le \frac{16\cdot \log (\frac{1}{1-\alpha})}{(1-\alpha)^2}\cdot X+\frac{2}{(1-\alpha)^3}\cdot\sum_{F\in \fset}\adv^*(F)+\frac{\mst(w')}{\log^{8}n}.$$ 
\end{claim}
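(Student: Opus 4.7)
The plan is to estimate $\maxmat(L)$ by applying the unweighted matching estimator of Theorem~\ref{thm: unweighted matching estimation} to each of $O(\log_{1/(1-\alpha)} n)$ weight-buckets of $L$ and aggregating. The key enabling observation is that an adjacency query to $L$ --- deciding whether $\zeta_{i,j} \ge (1-\alpha)(w'_i + w'_j)$ --- can be answered using $O(c^2)$ distance queries, since $\zeta_{i,j}$ is determined by the at most $(2c)^2$ pairwise distances in $V^*_i \cup V^*_j$ together with the edge weights of $F_i$ and $F_j$; the latter are computed upfront for all $i$ in $O(n)$ total queries, as the $F_i$'s are edge-disjoint.

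The algorithm first discards trees with $w'_i < \mst(w')/(n\log^8 n)$ (their joint matching contribution is at most $\mst(w')/\log^8 n$ and is absorbed into the additive error) and partitions the rest geometrically: with $W=\mst(w')$, set $B_t=\{i:w'_i\in((1-\alpha)^{t+1}W,(1-\alpha)^t W]\}$ for $t=0,\dots,T=O(\log n/\log(1/(1-\alpha)))$. For each bucket, run Theorem~\ref{thm: unweighted matching estimation} on $L[B_t]$ with precision $\hat\eps=(1-\alpha)^3/\log^{10}n$, obtaining $\tilde m_t\le m_t\le 2\tilde m_t+\hat\eps|B_t|$ where $m_t$ is the unweighted maximum matching size in $L[B_t]$; the total cost over all buckets is $\tilde O(\sum_t |B_t|/\hat\eps^3)\cdot O(c^2)=\tilde O(c^2 n)$ distance queries. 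Output $X=2\sum_t\tilde m_t(1-\alpha)^{t+1}W$. The lower bound $X\le \maxmat(L)$ is immediate: the disjoint union of the bucket-wise witnessing matchings is a valid matching in $L$ of weight at least $\sum_t m_t\cdot 2(1-\alpha)^{t+1}W\ge X$.

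For the upper bound, I would decompose $M^*=\bigsqcup_{s\le t}M^*_{s,t}$ by bucket-pairs. The same-bucket part satisfies $\sum_t w(M^*_{t,t})\le \sum_t m_t\cdot 2(1-\alpha)^t W\le \frac{2}{1-\alpha}X+\mst(w')/\log^8n$ after unpacking the estimator guarantee. For cross-bucket edges, I stratify by gap $\Delta=t-s$. Short-gap edges $(\Delta\le \Delta_0:=\lceil\log_{1/(1-\alpha)}2\rceil=O(\log(1/(1-\alpha))))$ have weight at most $2(1-\alpha)^s W$, a constant factor of the bucket-$s$ same-bucket edge weight; summing over $\Delta\le \Delta_0$ produces an $O(\Delta_0)$-factor loss over the same-bucket bound, yielding the overall $O(\log(1/(1-\alpha))/(1-\alpha)^2)$ multiplicative factor on $X$. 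Long-gap edges $(\Delta>\Delta_0$, so $w'_j<w'_i/2)$ are charged to $\adv^*$: starting from a witnessing edge set $E'$ for $\zeta_{i,j}$, one splits $E'=E'_{ii}\cup E'_{jj}\cup E'_{ij}$ by endpoint location, discards $E'_{jj}$ (losing at most $w'_j$ of cover advantage, since any cover inside $F_j$ is bounded by $w'_j$), and replaces each cross-edge $(u,v)\in E'_{ij}$ by an intra-$V^*_i$ edge $(u,u^*)$ for a fixed anchor $u^*\in V^*_i$, incurring at most $O(w'_j)$ penalty via the triangle inequality $w(u,u^*)\le w(u,v)+w(v,u^*)$. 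This yields $\adv^*(F_i)\ge\Omega((1-\alpha)^2 w'_i)$, and summing over long-gap matched $i$'s produces the additive $\frac{2}{(1-\alpha)^3}\sum_{F\in\fset}\adv^*(F)$ term.

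The principal technical obstacle is the long-gap replacement argument --- verifying that substituting cross-edges $(u,v)\in E'_{ij}$ by intra-$V^*_i$ edges preserves the cover advantage in $F_i$ up to an $O(w'_j)$ penalty. This will hinge on choosing the anchor $u^*\in V^*_i$ carefully (likely as the vertex of $V^*_i$ closest to $V(F_j)$ in the tree $T$, or the root of $F_i$) and on the edge-disjoint independence of $\fset$, which ensures that tree-paths from $V^*_i$ to $V^*_j$ decompose cleanly into an $F_i$-portion, an $F_j$-portion, and a between-portion whose covers can be analyzed separately. The short-gap shifting, the upfront bucketing, and the final aggregation are routine once this cover-advantage manipulation is in place.
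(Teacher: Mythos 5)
Your algorithm diverges from the paper's at a point that is not cosmetic: you run the unweighted matching estimator only on the within-bucket induced subgraphs $L[B_t]$, and your output $X=2\sum_t \tilde m_t(1-\alpha)^{t+1}W$ depends solely on the per-bucket matching sizes $m_t$. These carry no information about the cross-bucket part of $M^*$. If, say, $M^*$ is a perfect matching between $B_s$ and $B_{s+1}$ while $L[B_s]$ and $L[B_{s+1}]$ are each edgeless, then every $\tilde m_t=0$, your estimate is $X=0$, and yet $\maxmat(L)$ can be a constant fraction of $\mst(w')$. This is exactly the regime your ``short-gap'' step is supposed to cover, and the argument there does not go through: you say short-gap edges give ``an $O(\Delta_0)$-factor loss over the same-bucket bound,'' but the same-bucket bound is an inequality on $m_s$, and there is no inequality relating $|M^*_{s,t}|$ to $m_s$ or $m_t$ --- you are conflating a per-edge weight comparison with a matching-cardinality comparison. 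The $\adv^*$ lever does not rescue these edges either: for $(i,j)\in L$ with $w'_i\ge w'_j$, the membership of $(i,j)$ in $L$ gives $\adv^*(F_i)\ge\zeta_{i,j}-w'_j\ge(1-\alpha)(w'_i+w'_j)-w'_j$, which is negative whenever $w'_j>\frac{1-\alpha}{\alpha}w'_i$; an edge straddling the boundary of two adjacent buckets can have $w'_j/w'_i$ arbitrarily close to $1$, so it is cross-bucket in your partition but forces no $\adv^*$ contribution.

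The paper handles this by estimating matchings directly across bucket pairs. It first prunes $L$ to a graph $L'$ by deleting every edge $(i,j)$ with $\zeta_{i,j}\ge 2\min\{w'_i,w'_j\}+(1-\alpha)^3\max\{w'_i,w'_j\}$; exactly those deleted edges force the $\frac{2}{(1-\alpha)^3}\sum_F\adv^*(F)$ additive term, and the surviving graph $L'$ has no edges across more than $\ell=\lceil\log_2(1/(1-\alpha)^2)\rceil$ dyadic levels. It then runs the Theorem~\ref{thm: unweighted matching estimation} estimator on each $L'_{t,t+r}$ for $0\le r\le\ell$, forms $Y_r=\sum_t X_{t,t+r}2^t$, and outputs $\max_r Y_r/2$. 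The bucket-pair estimation is the piece missing from your algorithm; it is what makes the cross-bucket portion of $M^*$ visible, and the $\log(1/(1-\alpha))$ in the claimed multiplicative factor is precisely the number of offsets $r$. Two smaller notes: $\Delta_0=\lceil\log_{1/(1-\alpha)}2\rceil$ is $\le 1$ for $\alpha\ge 1/2$, not $O(\log(1/(1-\alpha)))$ --- you inverted the base change --- so your short-gap/long-gap split is essentially ``adjacent bucket vs.\ everything else''; and the anchor-substitution argument you flag as the principal obstacle is unnecessary, since for any $(i,j)\in L$ the witnessing set $E'_{i,j}$ already has $\adv(E'_{i,j},F_i)=\zeta_{i,j}-w'(\cov(E'_{i,j},F_j))\ge\zeta_{i,j}-w'_j$, and after discarding the edges of $E'_{i,j}$ with both endpoints in $V^*_j$ (which cover nothing in $F_i$) the remaining set qualifies for $\adv^*(F_i)$ directly.
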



\begin{proof}
Denote $q= \log_2 (n^2/\eps)$. For each $1\le t\le q$, we define set $V_t=\set{i\mid 2^t\le w'_i< 2^{t+1}}$. Clearly, sets $V_1,\ldots,V_{q}$ partition set $V(L)$.
Define $\ell=\ceil{\log_{2}\big(\frac{1}{(1-\alpha)^2}\big)}$. 
Let graph $L'$ be obtained from $L$ by removing all edges $(i,i')$ such that 
$\zeta_{i,i'}\ge 2\cdot \min\set{w'_{i}, w'_{i'}}+(1-\alpha)^3\cdot\max\set{w'_{i}, w'_{i'}}$.

We first prove the following observations.
\begin{observation}
Let $t,t'$ be any pair such that $t'>t+\ell$, and let $i,i'$ be any two indices such that $i\in V_t$ and $i'\in V_{t'}$. Then the edge $(i,i')$ does not belong to $L'$.
\end{observation}
\begin{proof}
Since $t'>t+\ell$, from the definition of $\ell$, $w'_{i'}>\frac{1}{(1-\alpha)^2}\cdot w'_{i}$. If $\zeta_{i,i'}\ge (1-\alpha)\cdot (w'_{i}+w'_{i'})$, then 
\[
\begin{split}
\zeta_{i,i'}
\ge  (1-\alpha)\cdot w'_{i}
= & \text{ } (1-\alpha)\cdot\bigg((1-\alpha)^2\cdot w'_{i'}+(2\alpha-\alpha^2)\cdot w'_{i'}\bigg)\\
> &\text{ } (1-\alpha)^3\cdot w'_{i'}+ (1-\alpha)\cdot(2\alpha-\alpha^2)\cdot\frac{ w'_{i}}{(1-\alpha)^2}\\
=& \text{ }
(1-\alpha)^3\cdot w'_{i'}+2 \cdot w'_{i}\\
=& \text{ }
(1-\alpha)^3\cdot \max\set{w'_{i}, w'_{i'}}+2 \cdot \min\set{w'_{i}, w'_{i'}}.
\end{split}
\]
From the definition of $L'$, the edge $(i,i')$ does not belong to $L'$.
\end{proof}
\begin{observation}
\label{obs: bad edge can be ignored}
$\maxmat(L\setminus L')\le \frac{2}{(1-\alpha)^3}\cdot\sum_{F\in \fset}\adv^*(F)$.
\end{observation}
\begin{proof}
Let $M''$ be a matching in graph $L\setminus L'$. Consider an edge $(i,j)\in M''$ and assume that $w'_i\ge w'_j$, so $\zeta_{i,j}\ge 2\cdot w'_{j}+(1-\alpha)^3\cdot w'_{i}$.
Let $E'_{i,j}$ be the set of edges with both endpoints in $V^*_i\cup V^*_j$ that achieves the maximum cover advantage on $F_i\cup F_j$.
Consider now the cover advantage of set $E'_{i,j}$ on $F_{i}$. Note that
\[w'(\cov(E'_{i,j},F_i))\ge w'(\cov(E'_{i,j},F_i\cup F_j))-w'(F_j)= \zeta_{i,j}-w'_j> (1-\alpha)^3\cdot w'_{i}.\]
Therefore, since all edges of $E'_{i,j}$ have one endpoint being a special vertex of $F_i$, we get that $\adv^*(F_i)\ge (1-\alpha)^3\cdot w'_{i}\ge \frac{(1-\alpha)^3}{2}\cdot (w'_{i}+w'_{j})$.
Altogether,
\[\sum_{(i,j)\in M''}(w'_i+w'_j)\le \frac{2}{(1-\alpha)^3}\cdot\sum_{(i,j)\in M''}\bigg(\adv^*(F_i)+\adv^*(F_j)\bigg)
\le \frac{2}{(1-\alpha)^3} \cdot\sum_{F\in \fset}\adv^*(F).\]
\end{proof}
%
\begin{observation}
\label{obs: pair query can be simulated}
A pair query in graph $L'$ can be simulated by $O(c^2)$ distance queries in $w$.
\end{observation}
\begin{proof}
Since for each pair $i,j\in [k]$, the value of $\zeta_{i,j}$ can be computed using $O(c^2)$ queries, and whether or not the edge $(i,j)$ belongs to graph $L'$ depends solely on the value of $\zeta_{i,j}$, the pair query $(i,j)$ for any pair $i,j\in [k]$ can be obtained by $O(c^2)$ distance queries in $w$.
\end{proof}

For each pair $1\le t,t'\le q$ such that $|t-t'|\le \ell$, we define $L'_{t,t'}$ as the unweighted graph on $V_t\cup V_{t'}$, whose edge set is $E_{L'}(V_t\cup V_{t'})$. We then apply the algorithm from \Cref{thm: unweighted matching estimation} (with $\hat \eps = 1/\log^{9}n$) to the unweighted graph $L'_{t,t'}$. Let $X_{t,t'}$ be the estimate we obtain, so $X_{t,t'}\le \maxmat(L'_{t,t'})\le 2\cdot X_{t,t'}+\hat \eps\cdot |V_t\cup V_{t'}|$. 
Since $q=O(\log n)$ and $\ell=O(1)$, combined with \Cref{obs: pair query can be simulated}, the number of queries performed by all the applications of the algorithm from \Cref{thm: unweighted matching estimation} is $\tilde O(c^2n)$.
For each $0\le r\le \ell$, we compute $Y_r=\sum_{1\le t\le q}X_{t,t+r}\cdot 2^t$. We compute $X=\max_{0\le r\le \ell}\set{Y_r/2}$ and return $X$ as the estimate of $\maxmat(L)$. It remains to show that $X$ satisfies the properties in \Cref{clm: max weighted matching estimation}.

For each pair $1\le t,t'\le q$ such that $|t-t'|\le \ell$, we let $M'_{t,t'}$ be a maximum matching in graph $L'_{t,t'}$, and denote by $M_{t,t'}$ the corresponding matching in $L$. Clearly, for each $0\le r\le \ell$, $M_r=\bigcup_{1\le t\le q}M_{t,t_r}$ is the union of two matchings in $L$. Additionally, according to the definition of the partition $(V_1,\ldots,V_{q})$ and the definition of edges of $L$, we get that $w'(M_r)\ge Y_r$.
Therefore, each $Y_r$ is a lower bound for $2\cdot\maxmat(L)$, and so $X\le \maxmat(L)$. On the other hand, let $M^*$ be a maximum weight matching in $L$. For each $0\le r\le \ell$ and for each $1\le t\le q$, we denote $M^*_{t,t+r}=M^*\cap E(V_t,V_{t+r})$. From \Cref{thm: unweighted matching estimation}, we get that \[w(M^*_{t,t+r})\le 2\cdot 2^{t+\ell}\cdot |M^*_{t,t+r}|\le  2\cdot 2^{t+\ell}\cdot (2\cdot X_{t,t+r}+\hat \eps\cdot |V_t\cup V_{t+r}|).\]
From the definition of sets $V_t$ and $V_{t+r}$,
$|V_t\cup V_{t+r}|\le \mst(w')/2^{t}$.
Therefore, combined with \Cref{obs: bad edge can be ignored}, we get that
\begin{equation*}
\begin{split}
w(M^*) & \le \frac{2}{(1-\alpha)^3}\cdot\sum_{F\in \fset}\adv^*(F)+ \sum_{0\le r\le \ell}\sum_{1\le t\le q}w(M^*_{t,t+r})\\
& \le \frac{2}{(1-\alpha)^3}\cdot\sum_{F\in \fset}\adv^*(F)+\sum_{0\le r\le \ell}\sum_{1\le t\le q}2\cdot 2^{t+\ell}\cdot \bigg(2\cdot X_{t,t+r}+\hat \eps\cdot |V_t\cup V_{t+r}|\bigg)\\
& \le \frac{2}{(1-\alpha)^3}\cdot\sum_{F\in \fset}\adv^*(F)+4\cdot\sum_{0\le r\le \ell}2^{\ell}\cdot \bigg(\bigg(\sum_{1\le t\le q}2^{t}\cdot X_{t,t+r}\bigg)+\hat \eps \cdot \mst(w')\bigg)\\
& = \frac{2}{(1-\alpha)^3}\cdot\sum_{F\in \fset}\adv^*(F)+2^{\ell+2}\cdot\bigg(\hat \eps \cdot \mst(w')+\sum_{0\le r\le \ell} Y_r\bigg)\\
& \le \frac{2}{(1-\alpha)^3}\cdot\sum_{F\in \fset}\adv^*(F)+2^{\ell+2}\cdot(\ell+1) \cdot 2X +2^{\ell+2}\cdot \hat \eps \cdot \mst(w') \\
& \le \frac{2}{(1-\alpha)^3}\cdot\sum_{F\in \fset}\adv^*(F)+\frac{16 \cdot\log \big(\frac{1}{1-\alpha}\big)}{(1-\alpha)^2}\cdot X+\frac{\mst(w')}{\log^{8}n}.
\end{split}
\end{equation*} 
This completes the proof of \Cref{clm: max weighted matching estimation}.
\end{proof}

\begin{claim}
\label{clm: walk cost upper bound by max-matching}
$\walkcost(w')\le 2\cdot \mst(w')-\frac{(1-\alpha)}{2}\cdot \maxmat(L)$.
\end{claim}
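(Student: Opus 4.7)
The plan is to mimic the proof of \Cref{lem: cover_advantage}, carried out for the skeleton $T'$, the induced metric $w'$, and special walks. First I would let $M^*$ be a maximum-weight matching in $L$ realising $\maxmat(L)$, and for each matched pair $(i,j)\in M^*$ pick a set $E^*_{i,j}$ of edges with both endpoints in $V^*_i\cup V^*_j$ that achieves the cover advantage $\zeta_{i,j}$ on $F_i\cup F_j$. Write $E^* := \bigcup_{(i,j)\in M^*}E^*_{i,j}$ and let $E'\subseteq E^*$ include each edge independently with probability $1/2$. Form the multigraph $\bar H := H_{T',E'}$ on $V'$ exactly as in the construction of \Cref{lem: cover_advantage}. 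By \Cref{obs: displace_graph Eulerian}, $\bar H$ is connected and Eulerian.

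Next, I would bound $\expect[w'(\bar H)]$ via the identity that underlies the proof of \Cref{lem: cover_advantage}:
\[
\expect[w'(\bar H)] \;=\; 2\,\mst(w') \;-\; \tfrac12\bigl(w'(\cov(E^*,T')) - w'(E^*)\bigr).
\]
The crucial structural observation is that the matched pairs of $M^*$ are vertex-disjoint in $L$, so the subtrees $\{F_i\cup F_j : (i,j)\in M^*\}$ are pairwise edge-disjoint in $T'$, and any edge of $E^*_{i,j}$ can only cover $T'$-edges inside $F_i\cup F_j$. Consequently the cover and cost contributions decompose additively across matched pairs, and using the definition of edges in $L$ gives
\[
w'(\cov(E^*,T')) - w'(E^*) \;\ge\; \sum_{(i,j)\in M^*}\zeta_{i,j} \;\ge\; (1-\alpha)\sum_{(i,j)\in M^*}(w'_i+w'_j) \;=\; (1-\alpha)\,\maxmat(L).
\]
Derandomising then produces a deterministic $E''\subseteq E^*$ with $w'(H_{T',E''}) \le 2\mst(w') - \tfrac{1-\alpha}{2}\maxmat(L)$.

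Finally, I would convert the Eulerian multigraph $H_{T',E''}$ into a special walk of no greater $w'$-cost. Since $r'$ is an articulation vertex of $H_{T',E''}$, the multigraph decomposes at $r'$ into sub-multigraphs associated with each unmatched $c$-\snfl\ of $\fset$ and each matched pair in $M^*$. Each sub-multigraph, together with $r'$, inherits even degree at $r'$ and is Eulerian. For every sub-multigraph I would lift back to the original vertex set $V$ by un-identifying $r'$ into its constituent roots (either a single $r_F$, or $r_{F_i}$ and $r_{F_j}$), take an Euler tour (when the two split degrees are both even) or an Eulerian path from $r_{F_i}$ to $r_{F_j}$ (when both are odd), and shortcut repeated visits to non-root vertices using the triangle inequality in the genuine metric $w$. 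Concatenating these sub-walks at $r'$ gives a special walk whose $w'$-cost is at most $w'(H_{T',E''})$.

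The main obstacle will be this last step. The induced function $w'$ fails the triangle inequality on paths that traverse $r'$ between different $c$-\snfls, so shortcutting an Euler tour of $H_{T',E''}$ directly inside $V'$ can strictly increase the $w'$-cost. The remedy is the component-wise lift back to $V$, where the true metric $w$ governs everything and the standard shortcut argument applies. The technical care lies in verifying that, for each lifted sub-multigraph, the $w$-weight of the lift equals the $w'$-weight of the original sub-multigraph inside $\bar H$ (paying particular attention to edges of $E^*_{i,j}$ that are incident to a root vertex, where the choice of which side $r'$ represents has to match between the cover-advantage accounting and the lifted walk), so that the per-component shortcut arguments in $V$ translate cleanly into the desired $w'$-bound for the resulting special walk.
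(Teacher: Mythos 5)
Your proposal is correct and follows essentially the same route as the paper: the paper also picks the maximum-weight matching $M^*$ in $L$, builds a walk for each matched pair $(i,j)$ from the Eulerian multigraph obtained by doubling $F_i\cup F_j$ and then flipping along a random half of $E^*_{i,j}$ to extract an expected saving of $\zeta_{i,j}/2$, doubles unmatched trees, and concatenates at $r'$. The one place you go beyond the paper is the explicit treatment of the shortcut step: you correctly note that $w'$ fails the triangle inequality across $r'$ and resolve it by un-identifying $r'$ into the constituent roots $r_{F_i},r_{F_j}$ and shortcutting in the genuine metric $w$; the paper simply invokes \Cref{lem: cover_advantage} to produce the per-pair walk without dwelling on this point (and, incidentally, has a minor typo in the intermediate display, writing $w'_i+w'_j-\zeta_{i,j}/2$ where $2w'_i+2w'_j-\zeta_{i,j}/2$ is meant, though the final bound is the same).
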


\begin{proof}
Let $M$ be the max-weight matching in $L$. We construct a special walk $\sigma$ as follows. Consider first a pair $(i,j)\in M$. From \Cref{lem: cover_advantage}, we know that there is a walk $\sigma_{i,j}$ that starts and ends at $r'$, visits all vertices of $F_i\cup F_j$ exactly once, and has cost $w'(\sigma_{i,j})\le w'(F_i)+w'(F_j)-\zeta_{i,j}/2$.
Consider now an index $i$ that does not participate in any pair in $M$, we define $\sigma_{i}$ to be the walk obtained from the Euler tour of $F_i$ by deleting the repeated appearance of all vertices. So the walk $\sigma_i$ starts and ends at $r'$, visits all vertices of $V(F_i)$ exactly once, and $w'(\sigma_{i})\le 2\cdot w'_i$.
Finally we define the walk $\sigma$ as the concatenation of all walks $\set{\sigma_{i,j}\mid (i,j)\in M}\cup \set{\sigma_{i}\mid i \text{ does not participate in any pair of } M}$.
It is not hard to verify that $\sigma$ is a special walk.
From the above discussion, we get that
\begin{equation*}
\begin{split}
w'(\sigma) & \le \sum_{(i,j)\in M} (w'_i+w'_j-\zeta_{i,j})+ \sum_{i:(i,*)\notin M}2\cdot w'_i\\
& \le 2\cdot \mst(w')-\frac{(1-\alpha)}{2}\cdot  \sum_{(i,j)\in M}(w'_i+w'_j)\le  2\cdot\mst(w')-\frac{(1-\alpha)}{2}\cdot \maxmat(L).
\end{split}
\end{equation*}
This completes the proof of \Cref{clm: walk cost upper bound by max-matching}.
\end{proof}

\begin{claim}
\label{clm: walk cost lower bound by max-matching}
$\walkcost(w')\ge \bigg(2-3c\cdot(1-\alpha)\bigg)\cdot \mst(w')-
3c\cdot
\maxmat(L)
-4c\cdot\sum_{F\in \fset} 
\adv^*(F)$.
\end{claim}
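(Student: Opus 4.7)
The plan is to fix an optimal special walk $\sigma^*$ achieving $w'(\sigma^*)=\walkcost(w')$ and decompose it into its maximal arcs $A_1,\ldots,A_m$, where each arc is the segment of $\sigma^*$ between two consecutive visits to $r'$ and visits only non-root vertices in between. For each arc $A_t$, let $S_t\subseteq\fset$ denote the set of trees $F$ such that $A_t$ visits at least one vertex of $V(F)\setminus\{r_F\}$, and let $p_t=|S_t|$. Since $\fset$ is independent, every non-root vertex of $V'$ belongs to a unique tree, so the arcs partition $V'\setminus\{r'\}$ and each tree $F_j$ is touched by some set of arcs whose visited portions collectively cover $V(F_j)\setminus\{r_{F_j}\}$. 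The quantity we must upper bound is the ``savings''
\[
\delta \;=\; 2\mst(w') - \walkcost(w') \;=\; 2\sum_{F\in\fset} w'(F) - \sum_{t=1}^m w'(A_t),
\]
in terms of $\maxmat(L)$, the special cover advantages $\adv^*(F)$, and a $(1-\alpha)\mst(w')$ slack.

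For each arc $A_t$ I will lower bound $w'(A_t)$ via \Cref{lem: cover_advantage lower bound} applied to the subtree $F_{A_t}=\bigcup_{F\in S_t}F$ of $T$: after using triangle inequality to ``complete'' the arc into a full tour of $V(F_{A_t})$ (which, because each $F\in\fset$ has at most $c$ leaves, costs at most $O(c\cdot p_t)$ extra travel through the $O(c\cdot p_t)$ special vertices already at hand), the lemma yields
\[
w'(A_t) \;\ge\; 2\sum_{F\in S_t} w'(F) \;-\; 2\adv^*(F_{A_t}) \;-\; O(c\cdot p_t)\cdot(\text{local slack}).
\]
Thus $\delta_t := 2\sum_{F\in S_t} w'(F)-w'(A_t) \le 2\adv^*(F_{A_t})+O(c\cdot p_t)\cdot(\text{local slack})$. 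The key structural step is then to show that $\adv^*(F_{A_t})$ decomposes, up to an additive error of $O(c)\sum_{F\in S_t}\adv^*(F)$, into a sum of pairwise contributions $\zeta_{i,j}$ over pairs $(F_i,F_j)\in S_t\times S_t$. The argument is that any edge set realizing $\adv^*(F_{A_t})$ has at most $O(c\cdot p_t)$ endpoints at special vertices (each tree contributes at most $2c$ such), and can therefore be partitioned into at most $O(c\cdot p_t)$ edge subsets each supported on a single pair of trees; each such subset achieves cover advantage at most $\zeta_{i,j}$ on the corresponding pair $F_i\cup F_j$.

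With this decomposition in place, I classify each pair $(F_i,F_j)$ contributing to some $\delta_t$ into two buckets. If $\zeta_{i,j}\ge(1-\alpha)(w'_i+w'_j)$, the pair is an edge of $L$ of weight $w'_i+w'_j$; since each tree participates in at most $O(c)$ such pairs across all arcs (each $F_j$ is touched by arcs contributing at most $O(c)$ distinct neighbors, using that $F_j$ has at most $c$ special vertices acting as ``portals''), the union of these pairs is a disjoint union of at most $3c$ matchings in $L$, contributing at most $3c\cdot\maxmat(L)$. If $\zeta_{i,j}<(1-\alpha)(w'_i+w'_j)$, then the contribution of this pair to $\delta_t$ is at most $2\zeta_{i,j}<2(1-\alpha)(w'_i+w'_j)$, and since each tree is covered by $O(c)$ such pairs, summing yields at most $3c(1-\alpha)\mst(w')$. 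The remaining $\sum_t O(c\cdot p_t)\cdot\adv^*(F)$-type losses from the completion step and from the pairwise decomposition sum to at most $4c\sum_{F\in\fset}\adv^*(F)$. Combining the three buckets produces the claimed inequality.

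The main obstacle is the structural decomposition of $\adv^*(F_{A_t})$ into pairwise $\zeta_{i,j}$ contributions: a priori the optimal edge set realizing $\adv^*(F_{A_t})$ can route edges across many trees simultaneously, and showing that one can re-route it into a union of single-pair configurations while losing only $O(c)\sum_{F\in S_t}\adv^*(F)$ is the crux of the argument and is precisely where the factor $c$ in the final bound arises. Once this decomposition is established, the accounting of pairs into matchings of $L$ versus the $(1-\alpha)\mst$ slack is routine.
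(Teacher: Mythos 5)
Your proposal partitions the walk at visits to the contracted root $r'$, so each arc $A_t$ may touch many trees at once, and you then need to decompose $\adv^*(F_{A_t})$ (the special cover advantage of the \emph{union} of all trees the arc touches) into a sum of pairwise $\zeta_{i,j}$'s. You correctly flag this as ``the crux of the argument,'' but you never prove it, and it is genuinely non-trivial: an optimal edge set realizing $\adv^*(F_{A_t})$ can have every edge straddle a different pair of trees, and re-routing it into single-pair configurations without blowing up the loss is not obviously possible. The completion step (turning an arc into a full tour of $F_{A_t}$ ``at cost $O(c\cdot p_t)$ extra travel'') is also asserted, not shown. These are two real gaps, not just omitted details.

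The paper avoids both difficulties by choosing a different decomposition point. Instead of cutting the walk at $r'$, it cuts at the set $V^*$ of all non-root \emph{special vertices} of the trees in $\fset$, writing $\sigma = (v^*_1, \sigma_1, v^*_2, \sigma_2, \ldots, v^*_s, \sigma_s, v^*_1)$ where no $\sigma_i$ contains a vertex of $V^*$. With this partition, every subwalk naturally joins exactly two special vertices $v^*_i, v^*_{i+1}$ in trees $F_{j(i)}$, $F_{j(i+1)}$, so the analysis is already pairwise and no decomposition lemma is needed. If $\sigma_i$ avoids $r'$, triangle inequality shortcuts the whole subwalk to the single edge $e^*_i = (v^*_i, v^*_{i+1})$ and its cover deficit is at most $\zeta_{j(i),j(i+1)}$ by definition of $\zeta$. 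If $\sigma_i$ passes through $r'$, it is split at the first and last $r'$-visit and the deficit is charged to $2\adv^*(F_{j(i)}) + 2\adv^*(F_{j(i+1)})$ via \Cref{lem: single_edge_adv}. The set of type-1 shortcut edges has at most $2c$ endpoints in any one $V^*_j$, so Vizing's theorem splits it into at most $2c+1$ ``matchings'' (no two edges in the same class share a tree), each of which contributes at most $\maxmat(L) + (1-\alpha)\mst(w')$ (edges that are in $L$ count toward a matching of $L$; edges not in $L$ have $\zeta_{j,j'} < (1-\alpha)(w'_j + w'_{j'})$, and these weights sum to at most $\mst(w')$ since each $j$ appears once per class). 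The bucketing you describe in your final paragraph has the right flavor, but without the cut-at-special-vertices step you are left needing the unproved decomposition of $\adv^*$ over a union of trees, so the proposal as written does not close.
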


\begin{proof}
Let $\sigma$ be the minimum cost special walk on $V'$, so $w'(\sigma)=\walkcost(w')$.
Define $V^*=\big(\bigcup_{1\le i\le k}V^*_i\big)\setminus \set{r'}$ to be the set of all non-root special vertices of all subtrees in $\fset$.
%
Denote $V^*\setminus \set{r'}=\set{v^*_1,\ldots,v^*_s}$.
Since each vertex of $V^*$ is visited exactly once in the walk $\sigma$, up to naming, we can assume without loss of generality that the walk $\sigma$ can be partitioned into subwalks $\sigma=(v^*_1,\sigma_1,v^*_2,\sigma_2,\ldots,v^*_s,\sigma_s,v^*_1)$, where for each $1\le i\le s$, the walk $\sigma_i$ does not contain any vertex of $V^*$.

Fix an index $1\le i\le s$ and consider the subwalk $(v^*_i,\sigma_i,v^*_{i+1})$. Assume first that $r'\notin V(\sigma_i)$. From the definition of $w'$, for every pair $v,v'\in V'\setminus \set{r'}$, $w'(v,v')=w(v,v')$. Since $w$ is a metric, the total cost of the subwalk $(v^*_i,\sigma_i,v^*_{i+1})$ is at least $w'(v^*_i,v^*_{i+1})$, which is, assuming $v^*_{i}\in V^*_j$ and $v^*_{i+1}\in V^*_{j'}$, at least $\cov(e^*_i,F_j\cup F_{j'})-\zeta_{j,j'}$. Assume now that $r'\in V(\sigma_i)$. We further partition the walk $\sigma_i$ at the first and the last appearances of $r'$ as $\sigma_i=(\sigma^{a}_i,r',\sigma^{b}_i, r', \sigma^{c}_i)$ (if $r'$ appears only once in $\sigma_i$, then $\sigma^{b}_i$ is an empty subwalk). 
Assume $v^*_i$ belongs to $V^*_j$, the set of special vertices of tree $F_j$.
Note that, the subwalk $\sigma^a_i$ may or may not contain vertices outside of $F_j$, and it is not hard to see that, in either case, the $w'$-cost of the subwalk $(v^*_i,\sigma^{a}_i,r')$ is at least $w'(v^*_i,r_{F_j})-\adv^*(F_j)$, and similarly, if $v^*_{i+1}$ belongs to $V^*_{j'}$, then the $w'$-cost of the subwalk $(v^*_i,\sigma^{c}_i,r')$ is at least $w'(v^*_{i+1},r_{F_{j'}})-\adv^*(F_{j'})$.
Therefore, if we denote $e^*_i=(v^*_i,v^*_{i+1})$, then 
\[
\begin{split}
w'(v^*_i,\sigma_i,v^*_{i+1})\ge &\text{ } w'(v^*_i,r_{F_j})+w'(v^*_{i+1},r_{F_{j'}})-\adv^*(F_j)-\adv^*(F_{j'})\\
\ge &\text{ } w'(P^{F_j}_{v^*_i,r_{F_j}})+w'(P^{F_{j'}}_{v^*_{i+1},r_{F_{j'}}})-2\cdot\adv^*(F_j)-2\cdot \adv^*(F_{j'})\\
= &\text{ } \cov(e^*_i,F_j\cup F_{j'})-2\cdot\adv^*(F_j)-2\cdot \adv^*(F_{j'})\\
= &\text{ } \cov(e^*_i,T')-2\cdot\adv^*(F_j)-2\cdot \adv^*(F_{j'}).
\end{split}\]

Denote $E^*=\set{(e^*_i,e^*_{i+1})\mid i\in [s]}$. We say that an edge $(e^*_i,e^*_{i+1})$ in $E^*$ is a type-1 edge iff $r'\notin V(\sigma_i)$, otherwise we say that it is a type-2 edge. We denote by $E^*_1$ the set of all type-1 edges, and we define set $E^*_2$ similarly. For each $1\le j\le k$, we denote by $E^*_{1,j}$ the set of all type-1 edges with at least one endpoint in $V^*_j$, and we define the set $E^*_{2,j}$ similarly.

On the one hand, clearly for each $1\le j\le k$, $|E^*_{1,j}|\le |V^*_j|\le 2c$. Therefore, from Vizing's theorem \cite{vizing1964estimate}, set $E^*_1$ can be partitioned into at most $2c+1$ subsets, such that in each of these subsets, no two edges have endpoint in the same set of $\set{V^*_1,\ldots,V^*_k}$.
On the other hand, clearly for each $1\le j\le k$, $|E^*_{1,j}|\le |V^*_j|\le 2c$. Altogether, if we define, for each $1\le i\le s$, $j(i)$ to be the index $j\in [k]$ such that $v^*_i\in V^*_j$, then
\[
\begin{split}
\walkcost(w')\ge &\text{ }
\bigg(2-3c\cdot(1-\alpha)\bigg)\cdot \mst(w')-
3c\cdot
\maxmat(L)
-4c\cdot\sum_{F\in \fset} 
\adv^*(F)\\
\ge &\text{ }
2\cdot \mst(w')-\sum_{i: e^*_i\in E^*_1} 
\zeta_{j(i),j(i+1)}
-
\sum_{i: e^*_i\in E^*_2} 
\bigg(2\cdot\adv^*(F_{j(i)})+2\cdot \adv^*(F_{j(i+1)})
\bigg)\\
\ge &\text{ }
2\cdot \mst(w')-\sum_{i: e^*_i\in E^*_1} 
\zeta_{j(i),j(i+1)}
-2\cdot 2c\cdot \sum_{F\in \fset} 
\adv^*(F)\\
\ge &\text{ }
2\cdot \mst(w')-
(2c+1)\cdot 
\bigg(\maxmat(L)+(1-\alpha)\cdot\mst(w')\bigg)
-4c\cdot\sum_{F\in \fset} 
\adv^*(F)\\
= &\text{ }
\bigg(2-(2c+1)(1-\alpha)\bigg)\cdot \mst(w')-
(2c+1)\cdot 
\maxmat(L)
-4c\cdot\sum_{F\in \fset} 
\adv^*(F)\\
= &\text{ }
\bigg(2-3c\cdot(1-\alpha)\bigg)\cdot \mst(w')-
3c\cdot 
\maxmat(L)
-4c\cdot\sum_{F\in \fset} 
\adv^*(F).
\end{split}
\]
This completes the proof of \Cref{clm: walk cost lower bound by max-matching}.
\end{proof}

\paragraph{Algorithm.}
We now describe the algorithm for \Cref{lem: spider_walk_estimation}.
We use the parameters
$\alpha=1-\eps$ and set $c_0=100$.


We first apply the algorithm from \Cref{lem: estimate special adv of a set of segments} to tree $T$ and the independent set $\fset$ of $c$-\snfls with parameter $2\eps^4$. If the algorithm from \Cref{lem: estimate special adv of a set of segments} reports that $\sum_{P\in \pset}\adv^*(P)>2\eps^4 \cdot \mst(w')$, then we report that $\walkcost(w')\le \big(2-\frac{\eps^4}{2\log (1/\eps)}\big)\cdot \mst(w')$. Otherwise, 
we then construct the graph $L$ as described before and apply the algorithm from \Cref{clm: max weighted matching estimation} to obtain an estimate $X$ of  $\maxmat(L)$.
If $X\ge \frac{\eps^3}{\log (1/\eps)}\cdot \mst(w')$, we report that $\walkcost(w')\le\big(2-\frac{\eps^4}{2\log (1/\eps)}\big)\cdot \mst(w')$. 
Otherwise, we report that $\walkcost(w')\le (2-c_0\cdot \eps)\cdot \mst(w')$.

\paragraph{Proof of Correctness.}
We now prove the correctness of the algorithm.
Since the algorithm calls the algorithm from \Cref{lem: estimate special adv of a set of segments} once and the algorithm from \Cref{clm: max weighted matching estimation} once, it performs $\tilde O(c^2n)$ queries.

Assume first that the algorithm from \Cref{lem: estimate special adv of a set of segments} reports that $\sum_{F\in \fset}\adv^*(F)>2\eps^4 \cdot \mst(w')$.
Then from similar arguments in the proof of \Cref{lem: cover_advantage}, it is easy to show that 
$$\walkcost(w')\le 2\cdot\mst(w')-\frac{1}{2}\cdot \sum_{F\in \fset}\adv^*(F)\le (2-\eps^4) \cdot \mst(w'),$$
so our report that $\walkcost(w')\le \big(2-\frac{\eps^4}{2\cdot\log (1/\eps)}\big)\cdot \mst(w')$ is correct in this case. We assume now that the algorithm from \Cref{lem: estimate special adv of a set of segments} reports that $\sum_{P\in \pset}\adv^*(P)< 4\eps^4 \cdot \mst$. In this case, assume first that the estimate $X$ given by algorithm from \Cref{clm: max weighted matching estimation} satisfies that $X\ge \frac{\eps^3}{\log (1/\eps)}\cdot \mst(w')$. Then from \Cref{clm: walk cost upper bound by max-matching} and \Cref{clm: max weighted matching estimation},
\[
\walkcost(w')\le 2\cdot \mst(w')-\frac{(1-\alpha)}{2}\cdot \maxmat(L)\le 2\cdot \mst(w')-\frac{\eps}{2}\cdot X
\le \bigg(2-\frac{\eps^4}{2\cdot\log (1/\eps)}\bigg) \cdot \mst(w').
\]
so our report that $\walkcost(w')\le \big(2-\frac{\eps^4}{2\log (1/\eps)}\big)\cdot \mst(w')$ is correct.
Assume now $X< \frac{\eps^3}{\log (1/\eps)}\cdot \mst(w')$. Then from \Cref{clm: walk cost lower bound by max-matching} and \Cref{clm: max weighted matching estimation}, 
%
\begin{equation*}
\begin{split}
\walkcost(w')\ge &\text{ }
\bigg(2-3c\cdot(1-\alpha)\bigg)\cdot \mst(w')-
3c\cdot
\maxmat(L)
-4c\cdot\sum_{F\in \fset} 
\adv^*(F)\\
\ge &\text{ }
(2-3c\cdot \eps)\cdot \mst(w')-
3c\cdot
\maxmat(L)
-4c\cdot\sum_{F\in \fset} 
\adv^*(F)\\
\ge &\text{ }
(2-3c\cdot \eps)\cdot \mst(w')-
3c\cdot
\maxmat(L)
-4c\cdot 4\eps^4\cdot\mst(w')\\
\ge &\text{ }
(2-4c\cdot\eps)\cdot \mst(w')-
3c\cdot
\maxmat(L)\\
\ge &\text{ }
(2-4c\cdot \eps)\cdot \mst(w')-
3c\cdot
\bigg(\frac{16\cdot \log (\frac{1}{1-\alpha})}{(1-\alpha)^2}\cdot X+\frac{2}{(1-\alpha)^3}\cdot\sum_{F\in \fset}\adv^*(F)+\frac{\mst(w')}{\log^8 n}\bigg)\\
\ge &\text{ }
(2-5c\cdot \eps)\cdot \mst(w')-
3c\cdot
\bigg(\frac{16\cdot \log (1/\eps)}{\eps^2}\cdot X+\frac{2}{\eps^3}\cdot\sum_{F\in \fset}\adv^*(F)\bigg)\\
\ge &\text{ }
(2-5c\cdot \eps)\cdot \mst(w')-
3c\cdot
\frac{16\cdot \log (1/\eps)}{\eps^2}\cdot X
-24\eps\cdot\mst(w')\\
\ge &\text{ }
(2-29c\cdot \eps)\cdot \mst(w')-
3c\cdot
\frac{16\cdot \log (1/\eps)}{\eps^2}\cdot \frac{\eps^3}{\log (1/\eps)}\cdot \mst(w')\\
\ge &\text{ }
\bigg(2-\big(29c+ 48c\big)\cdot \eps\bigg)\cdot \mst(w')\ge (2-c_0\cdot c\cdot \eps)\cdot \mst(w').
\end{split}
\end{equation*}
where the last step is due to the definition of $c_0$, and so our report that $\walkcost(w',r')>(2-c_0\cdot\eps) \cdot \mst(w')$ is correct in this case.
This completes the proof of \Cref{lem: spider_walk_estimation}.

\subsection{Proof of \Cref{lem: independent path reorganization}}
\label{subsec: Proof of independent path reorganization}

In this subsection we provide the proof of \Cref{lem: independent path reorganization}.
%
We first prove the following claims.

\begin{claim}
	\label{clm: tsp lower bound by walkcost}
	$\tsp(w)\ge \walkcost(w')- 2\cdot\sum_{F\in \fset}\adv^*(F)$.
\end{claim}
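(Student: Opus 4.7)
The plan is to construct, from any optimal TSP tour $\pi^*$ on $V$, a special walk $\sigma$ on $V'$ whose cost is at most $\tsp(w)+2\sum_{F\in\fset}\adv^*(F)$. Since $\walkcost(w')\le\cost(\sigma)$ by definition, rearranging then yields the claim.

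First, I would shortcut $\pi^*$ to the set $V'':=V'\setminus\{r'\}$ of non-root vertices across all subtrees in $\fset$, obtaining a cyclic tour $\tilde\pi=(u_1,u_2,\ldots,u_m)$ on $V''$. By the triangle inequality in $w$, $\cost_w(\tilde\pi)\le\tsp(w)$. For each $t\in[m]$, write $F_t\in\fset$ for the unique subtree of which $u_t$ is a non-root vertex (uniqueness follows from the independence condition on $\fset$: non-root vertices of distinct subtrees are incomparable in $T$, and in particular distinct). Call an index $t$ a \emph{cross-tree transition} if $F_t\ne F_{t+1}$ (indices cyclic).

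Next, I would convert $\tilde\pi$ into a special walk $\sigma$ by inserting visits to $r'$ at a carefully chosen set of cross-tree transition positions, so that $\sigma$ starts and ends at $r'$ and each non-root is visited exactly once. Each insertion at position $t$ replaces the direct edge $(u_t,u_{t+1})$ of $w$-weight $w(u_t,u_{t+1})$ by the pair $(u_t,r',u_{t+1})$ of combined $w'$-weight $w(u_t,r_{F_t})+w(r_{F_{t+1}},u_{t+1})$. The core estimate is that for a cross-tree transition edge $e=(u_t,u_{t+1})$ we have $u_t\in V(F_t)$ and $u_{t+1}\notin V(F_t)$, so \Cref{lem: single_edge_adv} applies and gives
\[
w(u_t,r_{F_t})\;\le\; w\!\big(P^{F_t}_{u_t,r_{F_t}}\big)\;=\;w(\cov(e,F_t))\;=\;\adv(\{e\},F_t)+w(e)\;\le\;\adv^*(F_t)+w(u_t,u_{t+1}),
\]
and symmetrically $w(r_{F_{t+1}},u_{t+1})\le\adv^*(F_{t+1})+w(u_t,u_{t+1})$. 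Hence each insertion costs at most $\adv^*(F_t)+\adv^*(F_{t+1})$ above the savings from removing $w(u_t,u_{t+1})$, modulo a careful accounting of the $w(u_t,u_{t+1})$ terms.

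The main obstacle is ensuring that the chosen insertion set causes each subtree $F$ to be implicated in at most two transitions, so that the extra cost telescopes to $2\sum_F\adv^*(F)$ rather than the naive $\sum_F c_F\adv^*(F)$ (where $c_F$ is the number of maximal $F$-runs in $\tilde\pi$, which may be large). The plan is to reorganize the non-root visits within $\tilde\pi$ so that the vertices of each $V_F$ appear as a single contiguous block, and then insert $r'$ only at the boundaries between consecutive blocks. The reorganization within one $F$-block is bounded by a cover-advantage argument: swapping the order of $F$-visits costs extra at most $2\adv^*(F)$ per $F$, because the tree paths $P^{F}_{u,r_F}$ used as rerouting detours are themselves absorbed by $\adv^*(F)$ via \Cref{lem: single_edge_adv}. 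Combining the reorganization bound with the per-transition insertion bound and summing over $F$, one obtains $\cost(\sigma)\le\cost_w(\tilde\pi)+2\sum_{F\in\fset}\adv^*(F)\le\tsp(w)+2\sum_{F\in\fset}\adv^*(F)$, which is exactly the claimed inequality after rearrangement.
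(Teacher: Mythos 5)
Your overall framework (shortcut, then insert visits to $r'$, then bound the insertion cost via \Cref{lem: single_edge_adv}) is in the right spirit, but there are two genuine gaps, and the second one cannot be repaired without changing the approach.

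First, you have over-constrained the target object. A special walk is only required to start and end at $r'$; there is no requirement that consecutive non-root vertices belong to the same subtree. Since $w'$ agrees with $w$ on $V'\setminus\{r'\}$, the walk is free to jump directly from a vertex of one $F$ to a vertex of another. Consequently only a \emph{single} insertion of $r'$ is needed — there is no reason to insert $r'$ at block boundaries, and the entire ``reorganize $\tilde\pi$ into contiguous $F$-blocks'' step (which you assert but do not prove costs $\le 2\adv^*(F)$ per $F$) is addressing a problem that does not exist. Even granting the unproven reorganization bound, inserting $r'$ at every block boundary implicates each $F$ in two transitions, so the transition cost alone is already $\approx 2\sum_F\adv^*(F)$, and adding the reorganization cost would overshoot the target constant.

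Second, and more fundamentally, the per-insertion estimate you derive is not tight enough even for a single insertion, and the slack does not go away. Your bound reads $w(u_t,r_{F_t})+w(r_{F_{t+1}},u_{t+1})\le \adv^*(F_t)+\adv^*(F_{t+1})+2\,w(u_t,u_{t+1})$, so after subtracting the removed edge $w(u_t,u_{t+1})$ the \emph{net} increase still contains an unabsorbed $+\,w(u_t,u_{t+1})$. This is a casualty of your very first step: by shortcutting all the way down to $V''$ you have discarded the tour's detours through $V\setminus V'$, and in particular the vertices of $U_0$ (the connected component of $T$ after removing $\fset$ that contains the roots $r_F$). The paper instead keeps the decomposition $\pi=(v'_1,\sigma_1,v'_2,\ldots,\sigma_s,v'_1)$, reindexes so that some $\sigma_1$ contains a vertex $v'\in U_0$, and uses the edges $e'_1=(v'_1,v')$, $e'_2=(v'_2,v')$ (with the $U_0$-vertex $v'$, not the next non-root $v'_2$) in \Cref{lem: single_edge_adv}. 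This gives $w'(v'_1,r')\le w(v'_1,v')+\adv^*(F_1)$ and $w'(v'_2,r')\le w(v',v'_2)+\adv^*(F_2)$, and the terms $w(v'_1,v')+w(v',v'_2)$ are already dominated by the $w$-cost of the sub-tour $(v'_1,\sigma_1,v'_2)$ by the triangle inequality, so there is no leftover $w$-term. Without retaining a $U_0$-vertex to serve as pivot, you cannot reproduce this cancellation.
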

\begin{proof}
Let $\pi$ be an optimal TSP-tour on $V$, so $w(\pi)=\tsp$. We will convert tour $\pi$ into a special walk on $V'$ as follows. 
%
Let $\uset$ be the set of connected components of $T\setminus \big(\bigcup_{F\in \fset}F\big)$. It is easy to see that there is a unique connected component $U_0\in \uset$ that contains all vertices of $\set{r_F\mid F\in \fset}$.

Recall that each vertex of $V'\setminus\set{r'}$ is a vertex of $V$.
We denote $V'\setminus\set{r'}=\set{v'_1,\ldots,v'_s}$, where the vertices are indexed according to their appearance in $\pi$.
We first partition the tour $\pi$ at vertices of $V'$ into $\pi=(v'_1,\sigma_1,v'_2,\sigma_2,\ldots,\sigma_{s},v'_1)$, where each $\sigma_i$ is a simple walk in $V(\uset)$.
Assume without loss of generality (up to naming on vertices of $V'\setminus \set{r'}$) that $\sigma_{1}$ contains a vertex of $U_0$. 
We then construct a walk $\sigma$ as $\sigma=(v'_1,r',v'_2,v'_3,\ldots,v'_{s},v'_1)$. Clearly, $\sigma$ is a special walk on $V'$. 
	
We now show that $w'(\sigma)\le \tsp+2\cdot\sum_{F\in \fset}\adv^*(F)$. On the one hand, for each $2\le i\le s$, from the definition of $w'$ and the triangle inequality of metric $w$, $w'(v'_i,v'_{i+1})$ is at most the total $w$-cost of the walk $(v'_i,\sigma_i,v'_{i+1})$. On the other hand, let $v'$ be an arbitrary vertex of $V(U_0)\cap V(\sigma_1)$, then the total $w$-cost of the walk $(v'_1,\sigma_1,v'_{2})$ is at least $w(v'_1,v')+w(v', v'_2)$.
Assume that $v'_1\in F_1$ and $v'_2\in F_2$.
Then from the definition of $w'$, $w'(v'_1,r')+w'(r', v'_2)=w(v'_1,r_{F_1})+w(v'_2,r_{F_2})$.

Now if we consider the edge $e'_1=(v'_1,v')$, from \Cref{lem: single_edge_adv}, we get that
\[\adv^*(F_1)\ge \adv(e'_1,F_1)\ge w(P^{T}_{v'_1,r_{F_1}})-w(v'_1,v')\ge w'(r',v'_1)-w(v'_1,v'),\]
and similarly, if we denote $e'_2=(v'_1,v')$, then
\[\adv^*(F_2)\ge \adv(e'_2,F_2)\ge w(P^{T}_{v'_2,r_{F_2}})-w(v'_2,v')\ge w'(r',v'_2)-w(v'_2,v').\]
Since $\adv^*(F_1)+\adv^*(F_2)\le 2\cdot\sum_{F\in \fset}\adv^*(F)$, we get that
\[\bigg(w(v'_1,v')+w(v', v'_2)\bigg)-\bigg(w'(v'_1,r')+w'(v'_2,r')\bigg)\le 2\cdot\sum_{F\in \fset}\adv^*(F).\]
Altogether, we get that
\[\walkcost(w')\le w'(\sigma)\le \tsp -  (w(v'_1,v')+w(v', v'_2))+w'(v'_1,r')+w'(v'_2,r')\le \tsp+2\cdot\sum_{F\in \fset}\adv^*(F).\]
This completes the proof of \Cref{clm: tsp lower bound by walkcost}.
\end{proof}

\begin{claim}
	\label{clm: tsp upper bound by walkcost}
	$\tsp(w)\le \walkcost(w')+2\cdot w(E(T)\setminus E(\fset))$.
\end{claim}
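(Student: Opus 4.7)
The plan is to construct an explicit connected Eulerian multigraph $H$ on $V$ with $w(H)\le \walkcost(w')+2\cdot w(E(T)\setminus E(\fset))$; the claim then follows from \Cref{lem: TSP bounded by Eulerian Multigraph}. Let $\sigma$ be an optimal special walk on $V'$ and \emph{lift} each of its edges to an edge on $V$: a transition $(v,v')$ with $v,v'\in V'\setminus\{r'\}$ is kept as $(v,v')$, while a transition $(r',v)$ or $(v,r')$ becomes the edge $(r_{F(v)},v)$, where $F(v)$ is the $c$-\snfl of $\fset$ containing $v$. By the definition of $w'$, the resulting edge multiset $E_\sigma$ has total weight exactly $\walkcost(w')$. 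In $(V,E_\sigma)$ every interior vertex of a $c$-\snfl has degree $2$ (its two $\sigma$-neighbors), every non-root spine vertex has degree $0$ (it never appears in $\sigma$), and each root $r_F$ has some degree $q_F$ whose parity is not controlled a priori; however, since $\sigma$ is a closed walk, the total degree of $r'$ in $\sigma$ is even, so $\sum_F q_F$ is even.

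To correct the root parities cheaply I would exploit the ``spine'' $\Gamma$, the subgraph of $T$ with edge set $E(T)\setminus E(\fset)$. By the independence of $\fset$, every interior vertex of a $c$-\snfl has all of its $T$-incident edges inside that $c$-\snfl and hence is isolated in $\Gamma$, while the remaining vertices $V_{\mathrm{sp}}:=\{r_F:F\in\fset\}\cup(V\setminus V(\fset))$ form a connected subtree of $T$. Since $R_{\mathrm{odd}}:=\{r_F:q_F\text{ odd}\}\subseteq V_{\mathrm{sp}}$ has even size, there is a unique $T$-join $S\subseteq E(\Gamma)$ on $R_{\mathrm{odd}}$, and trivially $w(S)\le w(E(T)\setminus E(\fset))$.

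I then assemble $H$ on $V$ by including one copy of each edge in $E_\sigma$, one copy of each edge in $S$, and two copies of each edge in $(E(T)\setminus E(\fset))\setminus S$, yielding
\[
w(H)=\walkcost(w')+w(S)+2\bigl(w(E(T)\setminus E(\fset))-w(S)\bigr)\le \walkcost(w')+2\cdot w(E(T)\setminus E(\fset)).
\]
A short case check shows $H$ is Eulerian: interior vertices have degree exactly $2$; each non-root spine vertex $u$ has degree of parity $\deg_S(u)\equiv 0\pmod 2$ since $u\notin R_{\mathrm{odd}}$; and each root $r_F$ has degree of parity $q_F+\deg_S(r_F)\equiv 0\pmod 2$ by the defining property of $S$. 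Connectedness is immediate: every spine edge appears in $H$ at least once, so $\Gamma$ is spanned, and each interior vertex is linked through its chain of $E_\sigma$-edges to the spine via some root.

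The only real subtlety is the choice of multiplicity pattern on spine edges. Naively doubling every edge of $\Gamma$ preserves the odd parities at $R_{\mathrm{odd}}$, whereas downgrading precisely the $T$-join edges from multiplicity $2$ to multiplicity $1$ simultaneously fixes all root parities and saves $w(S)\ge 0$ in weight --- exactly what is required to keep the additive overhead at $2\cdot w(E(T)\setminus E(\fset))$ rather than worse.
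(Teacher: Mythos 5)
Your proof is correct and is essentially the paper's construction expressed in $T$-join language: both lift the optimal special walk $\sigma$ to $V$ (your $E_\sigma$ coincides with the paper's $\bigcup_{1\le i\le s}L_i$), fix parities on the spine, double the remaining spine edges, and close with \Cref{lem: TSP bounded by Eulerian Multigraph}. Your $T$-join $S$ on $R_{\mathrm{odd}}$ is exactly the set of spine edges that receive a single copy in the paper's $E_{[U_0,M]}$, since the odd-degree vertices of the paper's auxiliary multigraph $M=\set{(u_i,u'_i)\mid 1\le i\le s}$ are precisely $R_{\mathrm{odd}}$.

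One imprecision worth flagging: independence of $\fset$ does not by itself imply that every interior vertex of a $c$-\snfl is isolated in $\Gamma$. Condition (ii) of the definition only constrains ancestor/descendant relations between non-root vertices of \emph{different} subtrees of $\fset$; a non-root vertex $v$ of some $F\in\fset$ can still have a downward $T$-edge to a descendant that lies outside every $c$-\snfl (e.g.\ when $\fset$ is a single subtree that does not reach the leaves of $T$). Fortunately your argument does not actually need the isolation: $S\subseteq E(\Gamma)$ already follows because for $e\in E(F)$, the component of $T-e$ away from the root of $T$ cannot contain the root $r_{F'}$ of a non-trivial $c$-\snfl $F'$ (that would put a non-root vertex of $F'$ below a non-root vertex of $F$, violating independence), hence contains no vertex of $R_{\mathrm{odd}}$, so $e\notin S$. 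Likewise the parity check at such a non-isolated $v$ still holds, since $\deg_{E_\sigma}(v)=2$ is even and $\deg_S(v)$ is even for all $v\notin R_{\mathrm{odd}}$ by the defining property of the $T$-join.
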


\begin{proof}
Let $\sigma$ be the special walk on $V'$ with minimum cost. We first partition $\sigma$ at every appearance of $r'$ into $\sigma=(r',\sigma_1,r',\sigma_2,r'\ldots,r',\sigma_s,r')$. From the definition of a special walk, each $\sigma_i$ is a walk in $V'$ that does not contain $r'$, and the sets $\set{V(\sigma_i)}_{1\le i\le s}$ partitions $V'\setminus \set{r'}$.
Recall that every vertex in $V'\setminus \set{r'}$ is also a vertex of $V'$. 
	
We construct a Eulerian multigraph $H$ on $V$ as follows. 
Similar as the proof of \Cref{clm: tsp lower bound by walkcost}, we let $\uset$ be the set of connected components obtained from $T$ by deleting all subtrees of $\fset$, and let $U_0$ be the unique connected component in $\uset$ that contains all vertices of $\set{r_F\mid F\in \fset}$. So $E(T)=E(\fset)\cup(\bigcup_{U\in \uset}E(U))$. 

For each $U\in \uset\setminus \set{U_0}$, we denote by $H_U$ the multigraph on $V(U)$ that contains two copies of every edge of $E(U)$, so $w(H_U)=2\cdot w(U)$. 
We define multigraph $H_{U_0}$ as follows.
For each $1\le i\le s$, we denote by $v_i$ ($v'_i$, resp.) the first (last, resp.) vertex of $\sigma_i$, and we denote by $u_i$ ($u'_i$, resp.) the closer-to-$r$ endpoint of the subtree in $\fset$ that contains $v_i$ ($v'_i$, resp.). 
We denote $M=\set{(u_i,u'_i)\mid 1\le i\le s}$.
We set $H_{U_0}=(V(U_0),E_{[U_0,M]})$ (see definition in \Cref{subsec: cover adv}), so $w(H_{U_0})\le 2\cdot w(U_0)$.
Then for each $1\le i\le s$, we define the $L_i$ to be the graph induced by edges of $E(\sigma_i)\cup\set{(v_i,u_i),(v'_i,u'_i)}$. So for each $1\le i\le s$, $w(L_i)$ equals the $w'$-cost of the subwalk $(r',\sigma_i,r')$ of $\sigma$, and it follows that $\sum_{1\le i\le s}w(L_i)=\walkcost(w')$.
Finally, we define graph
$H=(\bigcup_{1\le i\le s}L_i)\cup (\bigcup_{U\in \uset}H_U)$.
It is not hard to verify that $H$ is a connected Eulerian graph on $V$. 
From \Cref{lem: TSP bounded by Eulerian Multigraph}, $\tsp\le w(H)$.
	
On the other hand, 
$$w(H)= \sum_{1\le i\le s}w(L_i)+\sum_{U\in \uset}w(H_U)\le 
\walkcost(w')+2\cdot \sum_{U\in \uset}w(U)=\walkcost(w')+2\cdot w(E(T)\setminus E(\fset)).$$
Altogether, we get that $\tsp\le \walkcost(w')+2\cdot w(E(T)\setminus E(\fset))$.
\end{proof}

We now describe the algorithm for \Cref{lem: independent path reorganization}.
We use parameters $\eps_1=\eps/4$ and $\eps_2=\eps_0/(4\cdot c\cdot c_0)$, so $\eps_1,\eps_2=\Omega(\eps)$.

We first apply the algorithm from \Cref{lem: estimate special adv of a set of segments} to tree $T$ and the independent set $\fset$ of $c$-\snfls in $T$, with parameter $\eps_1$. If it reports that $\sum_{F\in \fset}\adv^*(F)>\eps_1 \cdot \mst(w)$, then we return $(2-\eps_1/2) \cdot \mst$ as an estimate of $\tsp$. Otherwise, we continue to apply the algorithm from \Cref{lem: spider_walk_estimation} to metric $w$, the set independent set $\fset$ of $c$-\snfls in $T$ with parameter $\eps_2$. If the algorithm reports that $\walkcost(w')\le (2-\eps^4_2/2\log (1/\eps_2))\cdot\mst(w')$, then we return $(2-\eps^4_2/4\log (1/\eps_2)) \cdot \mst$ as an estimate of $\tsp$. Otherwise, we return $2\cdot \mst$.

We now prove the correctness of the above algorithm.
First, since the algorithm calls the algorithm from \Cref{lem: estimate special adv of a set of segments} once and the algorithm from \Cref{lem: spider_walk_estimation} at most once, it performs $\tilde O(n)$ queries.
Let $X$ be the estimate returned by the algorithm. We now show that $\tsp\le X\le (2-\Omega(\eps^4/\log (1/\eps)))\cdot \tsp$.

Assume first that the algorithm from \Cref{lem: estimate special adv of a set of segments} reports that $\sum_{F\in \fset}\adv^*(F)>\eps_1 \cdot \mst$.
Then from \Cref{lem: cover_advantage} and the fact that $\adv(F)\ge \adv^*(F)$ for every tree $F\in \fset$, 
$$\tsp\le 2\cdot\mst-\frac{1}{2}\cdot \sum_{F\in \fset}\adv(F)\le 2\cdot\mst-\frac{1}{2}\cdot \sum_{F\in \fset}\adv^*(F)\le \bigg(2-\frac{\eps_1}{2}\bigg) \cdot \mst,$$ 
so our estimate $X=(2-\eps_1/2) \cdot \mst$ in this case is a $(2-\Omega(\eps))$-approximation of $\tsp$. Assume now that the algorithm from \Cref{lem: estimate special adv of a set of segments} reports that $\sum_{F\in \fset}\adv^*(F)< 2\eps_1 \cdot \mst$. In this case, assume first that the algorithm  from \Cref{lem: spider_walk_estimation} reports that $\walkcost(w')\le (2-\eps^4_2/2\log (1/\eps_2))\cdot\mst(w')$. Then from \Cref{clm: tsp upper bound by walkcost}, 
\begin{equation*}
\begin{split}
\tsp &\le \walkcost(w',r')+2\cdot w(E(T)\setminus E(\fset))\\
& \le \bigg(2-\frac{\eps^4_2}{2\log (1/\eps_2)}\bigg)\cdot \bigg(\frac{1}{2}+\eps\bigg)\cdot\mst+ 2\cdot \bigg(\frac{1}{2}-\eps\bigg)\cdot\mst \le \bigg(2-\frac{\eps^4_2}{4\log (1/\eps_2)}\bigg)\cdot\mst,
\end{split}
\end{equation*}
so our estimate $X=(2-\eps^4_2/4\log (1/\eps_2))\cdot \mst$ in this case is a $(2-\Omega(\eps^4/\log (1/\eps)))$-approximation of $\tsp$.
Assume now that the algorithm from \Cref{lem: spider_walk_estimation} reports that $\walkcost(w')\ge (2-c_0\cdot c\cdot \eps_2)\cdot\mst(w')$. Then from \Cref{clm: tsp lower bound by walkcost}, 
\begin{equation*}
\begin{split}
\tsp & \ge \walkcost(w')- 2\cdot\sum_{F\in \fset}\adv^*(F)\\
& \ge (2-c_0\cdot c\cdot \eps_2)\cdot\mst(w')- 2\eps_1\cdot \mst\\
& \ge (2-c_0\cdot c\cdot  \eps_2)\cdot\bigg(\frac{1}{2}+\eps\bigg)\cdot \mst- 2\eps_1\cdot \mst\\
& \ge \bigg(1+2\eps-\frac{c_0\cdot c\cdot  \eps_2}{2}-c_0\cdot c\cdot  \eps_2\cdot\eps-2\eps_1\bigg)\cdot\mst \ge (1+\eps)\cdot\mst.
\end{split}
\end{equation*}
and so our estimate $X=2 \cdot \mst$ in this case is a $\big(2-\Omega\big(\eps^4/\log (1/\eps)\big)\big)$-approximation of $\tsp$.
This completes the proof of \Cref{lem: independent path reorganization}.

\section{Subroutine II: Light Subtrees, Segments, and Extensions}
\label{sec: light subtrees}

In this section, we introduce the second main subroutine that will be used in the query algorithm. Intuitively, the subroutine in this section partitions the tree into top and bottom parts, similar to the techniques used in the first query algorithm in Part II. In addition, we also further partition the top part of the MST into segments, that will be used later as basic objects to compute cover advantage on.

\begin{definition}[Maximal Light Vertices]
	We say that a vertex $v$ is \emph{maximal $\ell$-light} for some integer $\ell>0$, iff the subtree $T_v$ contains at most $\ell$ vertices, and the subtree $T_{v'}$ contains more than $\ell$ vertices, where $v'$ is the parent node of $v$ in $T$.
\end{definition}

We denote by $L_{\ell}$ the set of all maximal $\ell$-light vertices.
Let $v$ be a maximal $\ell$-light vertex and let $v'$ be the parent vertex of $v$ in $T$. We denote $T^+_v=T_{v}\cup \set{(v,v')}$, so $T^+_v$ is a subtree of $T$ (but note that $T^+_v$ is not equal to $T_{v'}$ since $v'$ may have other child nodes in $T$). We call $T^+_v$ the \emph{$\ell$-light subtree} of $T$ at $v$. We denote $w_{\lig}=\sum_{v\in L_{\ell}}w(T^+_v)$.

Next, we compute a partition of the tree $T$ into segments (which we will define later), as follows.
Let $T'$ be the tree obtained from $T$ by deleting from it, for each vertex $v\in L_{\ell}$, all edges and vertices (except for the root) of $T^+_v$. We use the following observation.

\begin{observation}
	Tree $T'$ has at most $n/\ell$ leaves.
\end{observation}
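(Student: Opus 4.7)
The plan is to show that every leaf of $T'$ is the root of a subtree of $T$ of size strictly greater than $\ell$, and that these subtrees are pairwise disjoint; the bound $n/\ell$ will then follow by summing sizes.

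First I would classify the leaves of $T'$. A leaf $u$ of $T'$ either had children in $T$ that were all deleted, or was already a leaf of $T$ that survived. In the first case, for each child $c$ of $u$ in $T$, we have $c \in V(T_v)$ for some $v \in L_\ell$; since $u \in V(T')$, $v$ cannot be a strict ancestor of $c$ (else $v$ would be an ancestor of $u$ too, and $u$ would be deleted), so $c = v \in L_\ell$. Since $c \in L_\ell$ requires $|T_{\mathrm{parent}(c)}| > \ell$, we get $|T_u| > \ell$. In the second case (a leaf of $T$ that remains in $T'$), I would walk up from $u$ and observe that if any ancestor had subtree size $>\ell$, then the first such ancestor would create a maximal $\ell$-light vertex on the path from the root to $u$, contradicting that $u \in V(T')$; hence every ancestor of $u$ has subtree size $\le \ell$, forcing $n \le \ell$. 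In the regime of interest $\ell < n$ this case does not occur, and it is the only minor point that needs care.

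Next I would argue that distinct leaves of $T'$ are incomparable in $T$. If $u_1, u_2$ were both leaves of $T'$ with $u_1$ a strict ancestor of $u_2$ in $T$, then $T'$ being obtained by removing entire subtrees of $T$ means every vertex on the $u_1$–$u_2$ tree path survives in $T'$ (otherwise $u_2$ would also have been deleted). Then $u_1$ has a descendant in $T'$, contradicting that $u_1$ is a leaf of $T'$. Consequently the subtrees $\{T_u : u \text{ is a leaf of } T'\}$ are pairwise vertex-disjoint subsets of $V$.

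Combining the two ingredients, $\sum_{u} |T_u| \le n$ and each $|T_u| \ge \ell + 1$, so the number of leaves of $T'$ is at most $n/(\ell+1) \le n/\ell$, which is the claim. No step is technically difficult; the only place requiring attention is ruling out surviving leaves of $T$, which is where the whole argument would break down without the disjointness of the maximal $\ell$-light subtrees.
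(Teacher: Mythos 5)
Your proof is correct and takes the same approach as the paper: observe that each leaf of $T'$ roots a subtree of $T$ with more than $\ell$ (or at least $\ell$) vertices, that these rooted subtrees are pairwise vertex-disjoint, and conclude by summing sizes against $n$. You spell out the case analysis (children all deleted vs.\ a surviving leaf of $T$, and why the latter cannot occur when $\ell<n$) more explicitly than the paper's one-line argument, but the underlying idea is identical.
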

\begin{proof}
	Let $v$ be a leaf of $T'$. Then clearly $|V(T_v)|\ge \ell$, since otherwise $v$ is an $\ell$-light vertex and should not belong to $T'$. On the other hand, let $v,v'$ be any pair of leaves in $T'$, then subtrees $T_v, T_{v'}$ are vertex-disjoint. It follows that $T'$ has at most $n/\ell$ leaves.
\end{proof}

For every vertex $v\in V(T')$, we denote by $X_{\ell}(v)$ the set of all maximal $\ell$-light child vertices of $v$ in $T$, and denote by $\tset_{\ell}(v)$ the set of all $\ell$-light subtrees rooted at vertices of $X_{\ell}(v)$. For each $v\in V(T')$, we denote $n_{\ell}(v)=\sum_{u\in X_{\ell}(v)}|V(T_u)|$, and for each path $P$ in $T'$, we denote $n_{\ell}(P)=\sum_{v\in V(P)}n_{\ell}(v)$.
We prove the following lemma.

\begin{lemma}
\label{lem: partitioning into segments}
We can efficiently partition $T'$ into a set $\pset$ of $O(n/\ell)$ vertex-disjoint paths, such that, for each path $P\in \pset$, either $P$ contains a single-vertex of $T$, or satisfies that $n_{\ell}(P)\le 11\ell$.
\end{lemma}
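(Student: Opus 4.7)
The plan is to decompose $T'$ in two stages. First, I carve out every special vertex of $T'$ (vertex with degree $\neq 2$) as its own single-vertex path. Since $T'$ has at most $n/\ell$ leaves, a standard counting argument on trees gives that $T'$ has at most $O(n/\ell)$ special vertices, contributing $O(n/\ell)$ single-vertex paths to $\pset$. Removing these special vertices from $T'$ leaves a disjoint union of paths (the ``interior paths''), where each interior path corresponds to the internal degree-$2$ stretch of a maximal induced subpath of $T'$. By \Cref{obs:leaves_induced_paths}, the number of maximal induced subpaths is at most $2 \cdot (n/\ell)$, so there are at most $O(n/\ell)$ interior paths.

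In the second stage, I process each interior path $P = (w_1, w_2, \ldots, w_m)$ greedily. Maintain a current subpath $S$ with total $n_\ell$-weight $W$, both initially empty. For $i = 1, \ldots, m$: if $W + n_\ell(w_i) \le 11\ell$, append $w_i$ to $S$ and update $W$; otherwise, first close the current $S$ as a completed subpath (if nonempty), and then either start a new $S = \{w_i\}$ if $n_\ell(w_i) \le 11\ell$, or add $\{w_i\}$ as a single-vertex path on its own if $n_\ell(w_i) > 11\ell$. At the end of $P$, close any remaining nonempty $S$. By construction, every non-single-vertex subpath produced has $n_\ell$-weight at most $11\ell$, and every vertex of $T'$ lies in exactly one path of $\pset$.

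The main accounting step is to show that the greedy splitting produces only $O(n/\ell)$ subpaths in total. I partition the produced non-single-vertex completed subpaths into two types. A ``big'' subpath has final weight $W > 11\ell/2$; since the total weight across all of $T'$ satisfies $\sum_{v \in V(T')} n_\ell(v) \le n$, there can be at most $2n/(11\ell) = O(n/\ell)$ big subpaths. A ``small'' subpath was closed because the next vertex $w_i$ satisfied $W + n_\ell(w_i) > 11\ell$ while $W \le 11\ell/2$, forcing $n_\ell(w_i) > 11\ell/2$; each small subpath is charged injectively to its triggering vertex $w_i$, and the number of such triggers is again at most $2n/(11\ell)$. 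The single-vertex paths created for vertices with $n_\ell(w_i) > 11\ell$ are also bounded by $n/(11\ell)$. Finally, there is at most one ``tail'' subpath per interior path (closed by reaching the end of $P$), contributing $O(n/\ell)$. Adding the $O(n/\ell)$ single-vertex paths from stage one, we obtain the required $O(n/\ell)$ bound.

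The construction is clearly efficient (a single traversal of $T'$ after identifying special vertices), and requires no distance queries since $T'$ and the values $n_\ell(v)$ are already determined by $T$. I do not expect serious obstacles: the only subtle point is the three-way charging argument in the counting step, which must simultaneously handle the upper bound $11\ell$ on the subpath weight and the possibility that a single heavy vertex forces premature closure of a lightweight subpath.
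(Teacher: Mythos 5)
Your proof is correct, but it takes a slightly different route from the paper's. The paper preemptively carves out not only the special vertices of $T'$ but also all ``giant'' vertices (those with $n_\ell(v) \ge 10\ell$) as single-vertex paths before the greedy chunking. This preprocessing makes the remaining accounting trivial: on a giant-free interior path, the paper chooses cut points so that every chunk has $n_\ell$-weight at least $\ell$ (and, because each individual vertex contributes $< 10\ell$, at most $11\ell$), which immediately gives at most $n_\ell(Q)/\ell$ chunks per interior path $Q$ and hence $O(n/\ell)$ chunks overall. You instead leave heavy vertices in place and handle them on-the-fly inside the greedy, which forces the three-way charging (big subpaths, small triggered subpaths charged to a heavy vertex, and tails) to recover the $O(n/\ell)$ bound. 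Both arguments are sound; the paper's giant-vertex removal buys a one-line chunk count, while your version avoids the extra preprocessing pass at the cost of a more delicate charging step. The only cosmetic wrinkle is that your ``partition into two types'' sentence reads as if big and small exhaust the cases, when in fact tails are a third category (or an overlap); this is a matter of exposition, not a gap in the argument.
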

\begin{proof}
We say that a vertex $v$ of $T'$ is \emph{giant} iff $n_{\ell}(v)\ge 10\ell$. Clearly, the number of giant vertices in $T'$ is at most $n/(10\ell)$.
For each vertex $v\in T'$ that is either a special vertex or a giant vertex $v$, we add a path into $\pset$ that contains the single vertex $v$.
We then delete from $T'$ all special vertices and all giant vertices. It is easy to see that the remaining subgraph of $T'$ is the union of at most $O(n/\ell)$ vertex-disjoint subpaths of $T'$, that we denote by $\qset$. We further partition each path of $\qset$ as follows. Let $Q=(v_1,\ldots,v_r)$ be a path of $\qset$, we compute a sequence of indices $i_0=0< i_1 <\ldots< i_k<r$, such that, for each $1\le j\le k$, $\ell\le \sum_{i_{j-1}+1\le t\le i_j}n_{\ell}(v_t)\le 11\ell$, so $k\le n_{\ell}(Q)/\ell$. It is easy to see that this can be done since $Q$ does not contain giant vertices. We then define, for each $1\le j\le k$, path $P_j=(v_{i_{j-1}+1},\ldots, v_{i_j})$, and we define $P_{k+1}=(v_{i_{k}+1},\ldots, v_{r})$. Clearly, paths $P_1,\ldots,P_{k+1}$ partition $Q$, and we add them into set $\pset$. Altogether, we have added a total of $$O\bigg(\frac n \ell\bigg)+\sum_{Q\in \qset} \bigg(\frac{n_{\ell}(Q)}{\ell}+1\bigg)\le O\bigg(\frac n \ell\bigg)+|\qset|+\bigg(\frac{\sum_{Q\in \qset}n_{\ell}(Q)}{\ell}\bigg)=O\bigg(\frac n \ell\bigg)$$ 
paths into $\pset$, and each path $P$ of $\pset$ is either a single-vertex path, or satisfies that $n_{\ell}(P)\le 11\ell$.
\end{proof}

\paragraph{Segments of $T$.}
We apply the algorithm from \Cref{lem: partitioning into segments} to tree $T$. Let $\pset$ be the resulting set of vertex-disjoint paths of $T$.
For each path $P\in \pset$, we denote by $T_P$ the subtree of $T$ that contains (i) all vertices of $P$; and (ii) for each vertex $v\in P$, the set of all $\ell$-light subtrees rooted at vertices of $X_{\ell}(v)$. In other words, $T_P$ is the subgraph of $T$ induced by all vertices of $V(P)\cup (\bigcup_{v\in P}V(\tset_{\ell}(v)))$.
We call each subtree $T_P$ a \emph{segment} of tree $T$. For each vertex $v\in T'$ that is either a special vertex or a giant vertex, we also call each $\ell$-light subtree in $\tset_{\ell}(v)$ a \emph{segment} of $T$. Therefore, if we denote by $V'$ the set of all special and giant vertices in $T'$, then the set of segments of $T$ is $\sset=\set{T_P\mid P\in \pset}\cup \big(\bigcup_{v\in V'}\tset_{\ell}(v)\big)$. Clearly, the segments are edge-disjoint subgraphs of $T$, and each segment contains $O(\ell)$ vertices.

\paragraph{Extension of $T'$.} Let $v$ be a vertex of $T'$. Recall that $X_{\ell}(v)$ is the set of all $\ell$-light child vertices of $v$, and $\tset_{\ell}(v)$ is the set of $\ell$-light subtrees rooted at vertices of $X_{\ell}(v)$. We say that a path $Q$ in $T$ is \emph{$v$-nice}, iff one endpoint of $Q$ is $v$ and the other endpoint of $Q$ is a vertex of $V(\tset_{\ell}(v))$.
In other words, $Q$ is $v$-nice iff it is a nice path of $T$ with $v$ as its close-to-$r$ endpoint. 
%
%
We say that a subgraph (not necessarily connected) $F$ of $T\setminus T'$ is a $k$-\emph{extension} of $T'$, iff $F=\bigcup_{1\le i\le k}F_i$, where each $F_i$ is a $v$-nice path for some $v\in V(T')$.
We say that $F$ is a maximum $k$-extension of $T'$, iff $F$ is the $k$-extension of $T'$ with maximum total weight $w(F)$. 
We denote by $w^{\sf ext}_k(T')$ the weight of a maximum $k$-extension.
The following observation is immediate.

\begin{observation}
\label{obs: compute k-extension}
Given any integer $k$, we can efficiently compute $w^{\sf ext}_k(T')$ and a $k$-extension $F$ of $T'$ that achieves the weight $w^{\sf ext}_k(T')$.
\end{observation}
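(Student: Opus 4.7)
The plan is to reduce the computation of $w^{\sf ext}_k(T')$ to a combination of tree dynamic programming and a knapsack-style selection. The key structural observation is the following correspondence between $k$-extensions and subtree selections. Since the edges of $T \setminus E(T')$ partition into the edge sets of the $\ell$-light subtrees $\{T^+_u\}_{u \in L_\ell}$, and any $v$-nice path consists of the edge $(v_u, u)$ followed by a downward path in $T_u$ (where $v_u$ denotes the parent of $u$ in $T$), it follows that any $k$-extension $F$ decomposes uniquely as a disjoint union $F = \bigsqcup_{u \in L_\ell} F_u$, where each $F_u$ is either empty or a subtree of $T^+_u$ containing both $v_u$ and $u$. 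Furthermore, a subtree $F_u \subseteq T^+_u$ containing $v_u$ and $u$ can be expressed as the union of exactly $\lambda(F_u)$ $v_u$-nice paths, where $\lambda(F_u)$ is the number of leaves of $F_u$ other than $v_u$ (one nice path per such leaf), and no fewer will suffice. Hence computing $w^{\sf ext}_k(T')$ reduces exactly to the optimization
\[
w^{\sf ext}_k(T') \;=\; \max\;\sum_{u \in L_\ell} w(F_u) \quad \text{subject to} \quad \sum_{u \in L_\ell}\lambda(F_u) \le k.
\]

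The first stage of the algorithm computes, for each $u \in L_\ell$ and each integer $j \ge 0$, the quantity
\[
f_u(j) \;=\; \max\bigl\{\,w(F_u)\,:\, F_u \text{ is a subtree of } T^+_u,\ \{v_u,u\}\subseteq V(F_u),\ \lambda(F_u) \le j\,\bigr\},
\]
with the convention $f_u(0)=0$ corresponding to omitting $T^+_u$ entirely. Each $f_u$ is computed by a standard rooted tree DP on $T^+_u$, rooted at $v_u$: for every vertex $x$ in $T_u$ and every integer $j$, record the maximum weight of a rooted subtree of $T^+_u$ at $x$ with at most $j$ leaves, combining the contributions of children of $x$ by a convolution on the leaf-budget. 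Since the subtrees $\{T^+_u\}_{u \in L_\ell}$ are vertex-disjoint within $T\setminus T'$ and have total size at most $n$, the entire family $\{f_u\}_{u \in L_\ell}$ can be computed in time polynomial in $n$.

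The second stage assembles the $f_u$'s into the final answer by solving the bounded knapsack $\max \sum_u f_u(j_u)$ subject to $\sum_u j_u \le k$ and $j_u \in \mathbb{Z}_{\ge 0}$. This is a textbook DP: process the subtrees $u \in L_\ell$ one by one, maintaining for each remaining budget $0 \le b \le k$ the best total weight achievable using the subtrees processed so far. Standard backpointer bookkeeping in both stages allows us to recover, from the optimal choice of budgets $(j_u)_{u \in L_\ell}$ and the corresponding optimal subtrees $F_u \subseteq T^+_u$, an explicit $k$-extension $F = \bigsqcup_{u} F_u$ achieving $w(F) = w^{\sf ext}_k(T')$, as required. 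No step requires any distance queries beyond reading edge weights already present in $T$, so the algorithm is efficient in the conventional sense.
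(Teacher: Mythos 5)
The paper does not actually provide a proof of this observation — it is stated as "immediate" and left to the reader — so there is no canonical argument to compare against. Your proof is correct, and the two-stage decomposition is the natural way to make the claimed efficiency precise.

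Your structural reduction is sound: the subtrees $T^+_u$ for $u \in L_\ell$ are edge-disjoint, every $v$-nice path that enters $T^+_u$ must start at $v_u$ and pass through $u$ (the edge $(v_u,u)$ being the only edge of $T^+_u$ incident to $v_u$), so restricting a $k$-extension $F$ to each $T^+_u$ gives either $\emptyset$ or a connected subtree of $T^+_u$ containing $v_u$ and $u$, and $v_u$ is always a leaf of that subtree. Your leaf-counting argument for the minimum number of nice paths needed to realize a given $F_u$ is also correct: each leaf of $F_u$ other than $v_u$ must be the far endpoint of some nice path, and conversely the paths from $v_u$ to the non-root leaves cover all of $F_u$. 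The two DPs (one per light subtree to compute $f_u(\cdot)$, then the knapsack assembly over $u \in L_\ell$) are standard and run in polynomial time, as you say. One very minor wrinkle you might note for completeness: when $\sum_u \lambda(F_u) < k$, you pad with duplicated or degenerate nice paths (which contribute zero weight), so the constraint $\sum_u \lambda(F_u) \le k$ rather than $= k$ is the right one; you implicitly use this but it is worth a word.
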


\section{The Main Algorithm and its Analysis} 
\label{sec: main_with_MST}

In this section we provide the main algorithm in this part and then complete the proof of \Cref{thm: main with MST}.
At a high level, our algorithm computes the cover advantage of the the top part and the bottom part of the MST separately. On the one hand, the cover advantage of the bottom parts (light subtrees) can be estimated efficiently in an exhaustive way. On the other hand, the top part is a subtree that contains few special vertices, so its special cover advantage can be estimated efficiently. Finally, for the cover advantage achieved by edges with an endpoint between different light subtrees, we utilize the first subroutine to estimate it. We then carefully combine the estimates to compute the final output.

Parameters: $\ell=\sqrt{n}, \epsilon=2^{-100}/c_0$, where $c_0$ is the constant in \Cref{lem: spider_walk_estimation}.

\begin{enumerate}
\item \label{WithMST_step_1}
We first apply the algorithm from \Cref{lem: max special cover advantage} to the subtree $T'$ of $T$ that is defined in \Cref{sec: light subtrees}, and compute the value of $w(T')$. If $\adv^*(T')\ge (\eps/10)\cdot \mst$, then we return $X=(2-\eps/20)\cdot \mst$ as the estimate of $\tsp$.
If $\adv^*(T')\le  (\eps/10)\cdot \mst$ and $w(T')\ge (1/2+\eps)\cdot \mst$, then we return $X=2\cdot \mst$ as the estimate of $\tsp$. Otherwise, go to \ref{WithMST_step_2}.
\item \label{WithMST_step_2}
We compute a maximum weight independent set $\fset$ of $800$-\snfls in $T\setminus T'$ (note that this only uses edge weights of $T$ and does not require performing additinal queries). If $w(\fset)\ge (1/2+\eps)\cdot\mst$, then we apply the algorithm from \Cref{lem: independent path reorganization} to $\fset$, and return the output of \Cref{lem: independent path reorganization} as an estimate of $\tsp$.
If $\eps\cdot\mst<w(\fset)< (1/2+\eps)\cdot\mst$, we then apply the algorithm from \Cref{lem: spider_walk_estimation} to $\fset$.
If it reports that $\walkcost(w')\le \big(2-\frac{\eps^4}{2\log (1/\eps)}\big)\cdot \mst(w')$ (see the definition of $w'$ in \Cref{lem: spider_walk_estimation}), then we return $X=\big(2-\frac{\eps^5}{2\log (1/\eps)}\big)\cdot \mst$ as the estimate of $\tsp$.
Otherwise, go to \ref{WithMST_step_3}.
If $w(\fset)\le \eps\cdot\mst$, then go to \ref{WithMST_step_3}.
\item \label{WithMST_step_3} 
We apply the algorithm from \Cref{lem: estimate adv of a set of segments} to the set $\sset$ of segments of $T$ defined in \Cref{sec: light subtrees}, and parameter $\eps/1000$. If the algorithm reports that $\sum_{S\in \sset}\adv(S)\ge (\eps/1000)\cdot \mst$, then we return $X=(2-\eps/2000)\cdot \mst$ as the estimate of $\tsp$. Otherwise, go to \ref{WithMST_step_4}.
\item \label{WithMST_step_4} 
We apply the algorithm from \Cref{obs: compute k-extension} to compute a maximum $k$-extension $F$ of $T'$, and its weight $w^{\sf ext}_k(T')$. Denote $T''=T'\cup F$, and then we apply the algorithm from \Cref{lem: max special cover advantage} to tree $T''$.
If $\adv(T'')\ge (\eps/10)\cdot \mst$, then we return $X=(2-\eps/20)\cdot \mst$ as the estimate of $\tsp$.
If $\adv(T'')<  (\eps/10)\cdot \mst$, and $w(T'')>(1/2+\eps)\cdot \mst$, then we return $X=2\cdot \mst$ as the estimate of $\tsp$. Otherwise, go to \ref{WithMST_step_5}.
\item \label{WithMST_step_5} 
Return $X=2\cdot \mst$ as the estimate of $\tsp$.
\end{enumerate}


We first count the total number of queries performed by the algorithm. First, in Step \ref{WithMST_step_1} we invoke the algorithm from  \Cref{lem: max special cover advantage}, which performs $\tilde O(n^{1.5})$ queries, since the number of special vertices in $T'$ is at most $O(n/\ell)=O(\sqrt{n})$.
Second, in Step \ref{WithMST_step_2} we invoke the algorithm from \Cref{lem: independent path reorganization}, which performs $\tilde O(n)$ queries.
Third, in Step \ref{WithMST_step_3} we invoke the algorithm from \Cref{lem: estimate adv of a set of segments}, which performs in total $\tilde O(n^{1.5})$ queries, since each segment contains $O(\ell)=O(\sqrt{n})$ vertices.
Fourth, in Step \ref{WithMST_step_4} we invoke the algorithm from \Cref{lem: max special cover advantage}, which performs $\tilde O(n^{1.5})$ queries, since the number of special vertices in $T''$ is at most $O(n/\ell+k)=O(\sqrt{n})$.
Altogether, the algorithm performs in total $\tilde O(n^{1.5})$ queries.

We now show that the algorithm indeed returns a $(2-\Omega(\eps^5/\log (1/\eps)))$-approximation of $\tsp$.

\textbf{Case 1.} Assume that, in Step \ref{WithMST_step_1}, the algorithm from  \Cref{lem: max special cover advantage} returns that $\adv(T')\ge (\eps/10)\cdot \mst$. From \Cref{lem: cover_advantage}, we get that $\tsp\le 2\cdot \mst-(\eps/10)\cdot \mst/2=(2-\eps/20)\cdot \mst$, so our estimate $X=(2-\eps/20)\cdot \mst$ in this case is an $(2-\Omega(\eps))$-approximation of $\tsp$.

\textbf{Case 2.} Assume that, in Step \ref{WithMST_step_1}, the algorithm from \Cref{lem: max special cover advantage} returns that $\adv(T')\le (\eps/10)\cdot \mst$, and $w(T')\ge (\frac{1}{2}+\eps)\cdot \mst$. From \Cref{lem: cover_advantage lower bound}, we get that $\tsp\ge 2\cdot w(T')-2\cdot\adv(T')\ge  (1+\frac{9\eps}{5})\cdot \mst$, so our estimate $X=2\cdot \mst$ in this case is an $(2-\Omega(\eps))$-approximation of $\tsp$.

\textbf{Case 3.} Assume that, in Step \ref{WithMST_step_2}, the independent set $\fset$ of $800$-\snfls satisfies that $w(\fset)> (1/2+\eps)\cdot\mst$. Then from \Cref{lem: independent path reorganization}, the output is an $(2-\Omega(\eps^4/\log (1/\eps)))$-approximation of $\tsp$. Assume that $\eps\cdot\mst<w(\fset)< (1/2+\eps)\cdot\mst$ and $\walkcost(w')\le \big(2-\frac{\eps^4}{2\log (1/\eps)}\big)\cdot \mst(w')$, then from similar arguments in \Cref{subsec: Proof of independent path reorganization}, it is easy to show that the estimate $X=\big(2-\frac{\eps^5}{2\log (1/\eps)}\big)\cdot \mst$ in this case is a $\big(2-\frac{\eps^5}{2\log (1/\eps)}\big)$-approximation of $\tsp$.

\textbf{Case 4.} Assume that, in Step \ref{WithMST_step_3}, the algorithm from \Cref{lem: estimate adv of a set of segments} reports that $\sum_{S\in \sset}\adv(S)\ge (\eps/1000)\cdot \mst$. From \Cref{lem: cover_advantage}, we get that $\tsp\le (2-\eps/2000)\cdot \mst$, so our estimate $X=(2-\eps/2000)\cdot \mst$ in this case is a $(2-\Omega(\eps))$-approximation of $\tsp$.

\textbf{Case 5.} Assume that we return an estimate in Step \ref{WithMST_step_4}. Similar to Case 1 and Case 2, it is easy to show that our estimate is a $(2-\Omega(\eps))$-approximation of $\tsp$.

We assume from now on that Cases $1$ to $5$ do not happen.
In other words,
\begin{properties}{P}
\item $\adv(T')< (\eps/10)\cdot \mst$;
\label{prop_1}
\item the maximum weight of an independent set $\fset$ of $800$-\snfls in $T\setminus T'$ is at most $(\frac{1}{2}+\eps)\cdot \mst$;
\label{prop_2}
\item $w(T')< (\frac{1}{2}+\eps)\cdot \mst$; 
\label{prop_3}
\item $\adv(\sset):=\sum_{S\in \sset}\adv(S)\le (\eps/500)\cdot \mst$; and
\label{prop_4}
\item $w(T')+\text{ max weight of an }(100\sqrt n)\text{-extension of }T' < (\frac{1}{2}+\eps)\cdot \mst$.\label{prop_5}
\end{properties}
We will show that the above five properties implies that $\tsp\ge (1+\Omega(1))\cdot\mst$, which implies that our estimate $X=2\cdot \mst$ in Step \ref{WithMST_step_5} is a $(2-\Omega(1))$-approximation of $\tsp$.

We first prove the following lemma.

\begin{claim} \label{clm:in-edge}
Let $\hat T$ be a subtree of $T$ and let $\hat E$ be a set of edges that do not belong to $T$. For each edge $f \in E(\hat{T})$, we denote by $\cov_{\hat E}(f)$ the number of edges in $\hat E$ that covers $f$. Then for any integer $k \ge 2$ and for any real number $0<\delta<1/k$, 
\[w(\hat E) \ge \bigg(1-(k-1)\delta\bigg)\cdot \bigg(\sum_{f\in E(\hat T)}w(f)\cdot\min\set{k,\cov_{\hat E}(f)}\bigg) - \frac{\adv(\hat{T})}{2\delta}.\]
\end{claim}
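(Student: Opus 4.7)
The strategy is to pass through a random subset of $\hat E$ and exploit the definition of $\adv(\hat T)$ to bound $w(\hat E)$ from below. Concretely, construct a random subset $S \subseteq \hat E$ by including each edge independently with probability $p$, where $p$ will be a small constant multiple of $\delta$. Then $\mathbb E[w(S)] = p\cdot w(\hat E)$, and for each tree edge $f$ we have $\Pr[f \in \cov(S,\hat T)] = 1-(1-p)^{\cov_{\hat E}(f)}$, so
\[
\mathbb E\bigl[w(\cov(S,\hat T))\bigr] \;=\; \sum_{f \in E(\hat T)} w(f)\bigl(1-(1-p)^{\cov_{\hat E}(f)}\bigr).
\]

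Next, the definition of $\adv(\hat T)$ immediately gives $w(\cov(S',\hat T)) - w(S') = \adv(S',\hat T) \le \adv(\hat T)$ for every realization $S'$ of $S$, hence $w(S) \ge w(\cov(S,\hat T)) - \adv(\hat T)$ pointwise. Taking expectation and dividing by $p$ yields the template bound
\[
w(\hat E) \;\ge\; \frac{1}{p}\sum_{f \in E(\hat T)} w(f)\bigl(1-(1-p)^{\cov_{\hat E}(f)}\bigr) \;-\; \frac{\adv(\hat T)}{p}.
\]

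The remaining ingredient is the elementary inequality: for every positive integer $c$,
\[
\frac{1-(1-p)^c}{p} \;\ge\; \min\{k,c\}\bigl(1-(k-1)p\bigr).
\]
This will be proved via the geometric-series identity $(1-(1-p)^c)/p = \sum_{j=0}^{c-1}(1-p)^j$, lower-bounded by keeping only the $\min\{k,c\}$ smallest terms (each at least $(1-p)^{\min\{k,c\}-1}$), followed by Bernoulli's inequality $(1-p)^{\min\{k,c\}-1} \ge 1-(\min\{k,c\}-1)p \ge 1-(k-1)p$. Substituting this inequality into the template bound yields a statement of the claim form, up to the precise constant in front of $\adv(\hat T)$.

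The main obstacle will be recovering the factor $1/(2\delta)$ rather than $1/\delta$ in front of $\adv(\hat T)$; a direct application of the single random-subset argument with $p=\delta$ only gives $\adv(\hat T)/\delta$. To sharpen this, the plan is to exploit both $S$ and its complement $\hat E \setminus S$ simultaneously: the two bounds $w(S) \ge w(\cov(S,\hat T))-\adv(\hat T)$ and $w(\hat E\setminus S) \ge w(\cov(\hat E\setminus S,\hat T))-\adv(\hat T)$ sum to $w(\hat E)\ge w(\cov(S,\hat T))+w(\cov(\hat E\setminus S,\hat T))-2\adv(\hat T)$, and choosing $p=2\delta$ while averaging $w(S)$ and $w(\hat E\setminus S)$ against a symmetrized coverage expression yields a factor-of-two saving on the advantage term at the cost of a matching adjustment in the coverage coefficient. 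All remaining steps are routine expectation manipulations and the elementary Bernoulli estimate above.
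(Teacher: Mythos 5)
Your template is exactly the paper's: sample a random subset $E'\subseteq\hat E$ with inclusion probability $p$, use the pointwise bound $w(E')\ge w(\cov(E',\hat T))-\adv(\hat T)$, take expectations, and divide by $p$. The paper chooses $p=2\delta$ and this is the whole story for the $1/(2\delta)$ factor; no averaging of $S$ against $\hat E\setminus S$ is needed or used.

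The genuine gap is in your ``elementary inequality.'' First, a small slip: the $\min\{k,c\}$ terms of $\sum_{j=0}^{c-1}(1-p)^j$ that are each at least $(1-p)^{\min\{k,c\}-1}$ are the \emph{largest} ones, $j=0,\dots,\min\{k,c\}-1$, not the smallest; if $c>k$ the smallest $k$ terms are $(1-p)^{c-k},\dots,(1-p)^{c-1}$ and the parenthetical claim fails. But the more substantive issue is that this bound is too lossy by a factor of two in the correction term. Your chain gives $\frac{1-(1-p)^c}{p}\ge \min\{k,c\}\bigl(1-(k-1)p\bigr)$, so with $p=2\delta$ the coverage coefficient becomes $1-2(k-1)\delta$, not the claimed $1-(k-1)\delta$. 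You misdiagnose the source of the factor of $2$: it is not in the $\adv(\hat T)/p$ term (with $p=2\delta$ that term is already $\adv(\hat T)/(2\delta)$), it is in the coverage coefficient. Consequently the proposed averaging trick over $S$ and $\hat E\setminus S$ is a dead end: summing the two pointwise bounds gives $w(\hat E)\ge w(\cov(S,\hat T))+w(\cov(\hat E\setminus S,\hat T))-2\adv(\hat T)$, and since $\hat E\setminus S$ includes each edge with probability $1-p\approx 1$ its coverage expectation is essentially $w(\cov(\hat E,\hat T))$, which does not produce the weighted-$\min\{k,\cdot\}$ expression at all.

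The fix is to sharpen the estimate of $1-(1-p)^c$. Instead of Bernoulli on a single power, use term-by-term Bernoulli (equivalently, the Bonferroni inequality): $\sum_{j=0}^{c-1}(1-p)^j\ge\sum_{j=0}^{c-1}(1-jp)=c\bigl(1-\tfrac{(c-1)p}{2}\bigr)$, i.e. $1-(1-p)^c\ge cp-\binom{c}{2}p^2$. By monotonicity in $c$ this yields $\frac{1-(1-p)^c}{p}\ge \min\{k,c\}\bigl(1-\tfrac{(k-1)p}{2}\bigr)$, and substituting $p=2\delta$ gives precisely $\min\{k,c\}\bigl(1-(k-1)\delta\bigr)$. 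This is exactly what the paper does, and it makes your template close correctly without any complementary-set manipulation.
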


\begin{proof}
Let $E'$ be a random subset of $\hat E$ that includes every edge of $\hat E$ independently with probability $2\delta$. 
Consider an edge $f\in E(\hat T)$. If $\cov_{\hat E}(f)=c$, then the probability that $f\in \cov(E')$ is $1-(1-2\delta)^c \ge 2\delta c - 2\delta^2c(c-1)$, and if $c\ge k$, then $\Pr[f\in \cov(E')]\ge 2\delta k - 2\delta^2k(k-1)$.
Therefore, from linearity of expectation, we get that
$\expect[w(E')]=2\delta\cdot w(\hat E)$ and 
\[
\begin{split}
\expect[w(\cov(E',\hat T))]\ge &\text{ } \sum_{f\in E(\hat T)}w(f)\cdot\bigg(2\delta \cdot\min\set{k,\cov_{\hat E}(f)} \cdot\bigg(1-\delta\cdot\big(\min\set{k,\cov_{\hat E}(f)}-1\big)\bigg)\bigg)\\
\ge &\text{ }\sum_{f\in E(\hat T)}w(f)\cdot\bigg(2\delta \cdot\min\set{k,\cov_{\hat E}(f)}\cdot (1-\delta(k-1))\bigg)\\
= &\text{ } 2\delta \cdot \big(1-\delta(k-1)\big)\cdot\bigg(\sum_{f\in E(\hat T)}w(f)\cdot \min\set{k,\cov_{\hat E}(f)}\bigg).
\end{split}\] 
From the definition of $\adv(\hat T)$,
$\adv(\hat T)\ge \expect[w(E')]-\expect[w(\cov(E',\hat T))]$. Now \Cref{clm:in-edge} follows from the above inequalities.
%
\end{proof}


%

Let $\pi$ be an optimal TSP-tour that visits all vertices of $V$, so $\tsp=w(\pi)$.
We construct a set $E'$ of edges as follows. We start with the set $E(\pi)$ of the edges traversed in tour $\pi$. Then for each edge $e\in E(\pi)$ such that either both endpoints of $e$ belong to $V(T')$ or both endpoints of $e$ belong to the same segment of $\sset$, we replace $e$ with edges in $E(P^T_e)$, the tree path in $T$ connecting the endpoints of $e$. Denote by $E'_{\pi}$ the resulting set after this step. Note that $E'_{\pi}$ may contain many copies of the same edge. 
Finally, for each edge $e$ that has more than $2$ copies contained in $E'_{\pi}$, if $E'_{\pi}$ contains an odd number of copies of $e$, then we delete all but one copies from $E'_{\pi}$; if $E'_{\pi}$ contains an even number of copies of $e$, then we delete all but two copies from $E'_{\pi}$. Denote by $E'$ the resulting set of edges, so each edge has at most $2$ copies contained in $E'$. 
It is easy to verify that the graph induced by edges of $E'$ (with multiplicity) is connected and Eulerian.
Additionally, we prove the following observations.

\begin{observation}
\label{obs: new set weight bound}
$w(E')-w(E_{\pi})\le \sqrt{\eps}\cdot\mst$.
\end{observation}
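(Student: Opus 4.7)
\begin{proofof}{Sketch for \Cref{obs: new set weight bound}}
The plan is to bound the weight gained in two stages: first, when replacing tour edges $e$ with their tree paths $P^T_e$; second, when capping multiplicities at two. I observe first that for each tree edge $f\in E(T)$, letting $c(f):=\cov_{E_{\text{repl}}}(f)$ be the number of replaced tour edges whose tree path contains $f$ (where $E_{\text{repl}}\subseteq E(\pi)$ is the set of replaced edges), the final multiplicity $K(f)$ of $f$ in $E'$ satisfies $K(f)\le \min\{c(f),2\}$ regardless of whether the parity-preserving reduction keeps $1$ or $2$ copies. Consequently,
\[
w(E')-w(E_\pi)\ \le\ \sum_{f\in E(T)} w(f)\cdot\min\{c(f),2\}\ -\ w(E_{\text{repl}}),
\]
so it suffices to show the right-hand side is at most $\sqrt{\eps}\cdot\mst$.

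Next I decompose $E_{\text{repl}}=E_1\sqcup\bigsqcup_{S\in\sset}E_2^{(S)}$, where $E_1$ is the set of replaced tour edges having both endpoints in $V(T')$ and $E_2^{(S)}$ is the set having both endpoints in the segment $S$. For $e\in E_1$ the tree path $P^T_e$ lies in $T'$, and for $e\in E_2^{(S)}$ it lies in $S$. Writing $c_0(f):=\cov_{E_1}(f)$ and $c_S(f):=\cov_{E_2^{(S)}}(f)$, we have $c(f)=c_0(f)+\sum_S c_S(f)$ with $c_0$ supported on $E(T')$ and $c_S$ supported on $E(S)$. Using the elementary inequality $\min\{2,a+\sum b_i\}\le \min\{2,a\}+\sum\min\{2,b_i\}$ for nonnegative reals, I obtain
\[
\sum_{f}w(f)\min\{2,c(f)\}\ \le\ \sum_{f\in E(T')}w(f)\min\{2,c_0(f)\}\ +\ \sum_{S\in\sset}\sum_{f\in E(S)}w(f)\min\{2,c_S(f)\}.
\]

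I then apply \Cref{clm:in-edge} with $k=2$ to each piece, once with $(\hat T,\hat E)=(T',E_1)$ and once with $(\hat T,\hat E)=(S,E_2^{(S)})$ for each segment, to conclude that each summand is at most $\tfrac{1}{1-\delta}\bigl(w(\hat E)+\adv(\hat T)/(2\delta)\bigr)$. Summing and invoking properties \ref{prop_1} and \ref{prop_4} (which jointly give $\adv(T')+\sum_S\adv(S)\le \tfrac{\eps}{9}\mst$), together with $w(E_{\text{repl}})\le w(E_\pi)=\tsp\le 2\mst$, yields
\[
\sum_f w(f)\min\{2,c(f)\}-w(E_{\text{repl}})\ \le\ \frac{\delta\cdot w(E_{\text{repl}})+\eps\mst/(18\delta)}{1-\delta}\ \le\ \frac{2\delta\mst+\eps\mst/(18\delta)}{1-\delta}.
\]
Choosing $\delta=\Theta(\sqrt{\eps})$ balances the two terms and yields a bound of $\sqrt{\eps}\cdot\mst$ (the factor $(1-\delta)^{-1}$ is negligible since $\eps$ is a fixed tiny constant).

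The only delicate point is verifying the uniform multiplicity bound $K(f)\le \min\{c(f),2\}$, which requires a case check on the parity rule but holds without loss; the rest is bookkeeping around the decomposition of $E_{\text{repl}}$ and a direct invocation of \Cref{clm:in-edge}. No further technical obstacle is expected.
\end{proofof}
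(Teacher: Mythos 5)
Your proposal is correct and follows essentially the same route as the paper: both apply \Cref{clm:in-edge} with $k=2$ separately to $T'$ and to each segment $S\in\sset$, both use the observation that after replacement and parity-trimming each tree edge $f$ appears in $E'$ with multiplicity at most $\min\{2,\cov(f)\}$, and both invoke Properties~\ref{prop_1} and~\ref{prop_4} together with $w(E_\pi)=\tsp\le 2\mst$ and an optimized choice $\delta=\Theta(\sqrt\eps)$. Your use of the subadditivity $\min\{2,a+\sum b_i\}\le\min\{2,a\}+\sum\min\{2,b_i\}$ to split $c(f)$ cleanly handles the possible overlap of $E(T')$ with $E(T_P)$ on the path edges $E(P)$, which the paper handles implicitly; this is a slightly tidier piece of bookkeeping but not a different argument.
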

\begin{proof}
Let $S$ be a segment in $\sset$. We denote by $E_{\pi}(S)$ the subset of $E_{\pi}$ that contains all edges with both endpoints in $S$.
We apply \Cref{clm:in-edge} with $\delta=\sqrt{\eps}/5$ and $k=2$ (so $\delta<1/k$) to set $E_{\pi}(S)$ of edges and segment $S$, and get
\[
w(E_{\pi}(S))\ge \bigg(1-\frac{\sqrt{\eps}}{5}\bigg)\cdot \bigg(\sum_{f\in E(S)}w(f)\cdot\min\set{2,\cov_{E_{\pi}(S)}(f)}\bigg) - \frac{\adv(S)}{2\sqrt\eps/5}.
\]
On the other hand, we denote by $E'(S)$ the subset of the resulting set $E'$ that contains all edges with both endpoints in $S$. Then from the construction of $E'$, it is easy to see that, for each edge $f\in E(S)$, the number of its copies contained in $E'(S)$ is at most $\min\set{2,\cov_{E_{\pi}(S)}(f)}$. Therefore,
\[
w(E_{\pi}(S))\ge \bigg(1-\frac{\sqrt{\eps}}{5}\bigg)\cdot 
w(E'(S)) - \frac{\adv(S)}{2\sqrt\eps/5}.
\]
Similarly, for all other segments in $\sset$ and for tree $T'$ we can get similar inequalities. Altogether, if we denote by $E^*_{\pi}$ the subset of $E_{\pi}$ that contains all edges such that neither both endpoints belong to some segment in $\sset$ nor both endpoints belong to $V(T')$, then clearly $E'=E^*_{\pi}\cup \big(\bigcup_{S\in \sset}E'(S)\big)$, and so
\[
\begin{split}
w(E_{\pi})= & \text{ }
w(E^*_{\pi})+w(E_{\pi}(T'))+\sum_{S\in \sset}w(E_{\pi}(S))\\
\ge & \text{ } w(E^*_{\pi})+ \bigg(1-\frac{\sqrt{\eps}}{5}\bigg)\cdot
w(E'(T')) - \frac{\adv(T')}{2\sqrt\eps/5}+\sum_{S\in \sset}\bigg(\bigg(1-\frac{\sqrt{\eps}}{5}\bigg)\cdot
w(E'(S)) - \frac{\adv(S)}{2\sqrt\eps/5}\bigg)\\
= & \text{ } w(E')- \frac{\sqrt{\eps}}{5}\cdot
\bigg(w(E'(T')) + \sum_{S\in \sset}
w(E'(S))\bigg) 
-\frac{\adv(T')+\sum_{S\in \sset}\adv(S)}{2\sqrt\eps/5}\\
\ge & \text{ } w(E')- \frac{\sqrt{\eps}}{5}\cdot
\mst 
-\frac{(\eps/10)\cdot\mst +(\eps/500)\cdot\mst}{2\sqrt\eps/5}> w(E')- \sqrt{\eps}\cdot
\mst,
\end{split}
\]
where the last but one inequality utilizes Properties \ref{prop_1} and \ref{prop_4}.
\end{proof}

\begin{observation} \label{clm:upper}
	There is a subset $E^*$ of $E'$, such that (i) $E(T')\subseteq \cov(E^*, T)$; and (ii) $E^*$ contains $O(n/\ell)$ edges with at least one endpoint in $V(T\setminus T')$. 
\end{observation}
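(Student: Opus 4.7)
The plan is to construct $E^*$ by taking the set $E_1$ of edges of $E'$ with both endpoints in $V(T')$ and augmenting it with a small number of external edges from $E_2 := E' \setminus E_1$. I will first observe that the replacement step in the construction of $E'$ forces $E_1 \subseteq E(T')$: whenever a $\pi$-edge has both endpoints in $V(T')$ it is replaced by edges on its tree-path in $T$, which lies entirely in $T'$, and the analogous fact holds for within-segment replacements restricted to their $V(T')$-part. Since each edge of $E_1$ is therefore a tree-edge that covers only itself, the set of $T'$-edges uncovered by $E_1$ is exactly $U := E(T') \setminus E_1$, and the task will reduce to covering $U$ by $O(n/\ell)$ edges of $E_2$.

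To analyze $U$, I will contract each maximal $\ell$-light subtree of $T$ to its root in $T'$, turning the multigraph induced by $E'$ into a connected Eulerian multigraph $H$ on $V(T')$. Any $E_1$-edge joining two different components of $T' \setminus U$ would itself lie in $U$, a contradiction; hence $E_1$-edges stay within the components $T'_1,\ldots,T'_m$ of $T' \setminus U$, and in fact $E(T'_i) \subseteq E_1$ for every $i$. Contracting each $T'_i$ to a single vertex $c_i$ then yields a tree $T^*$ with $m = |U|+1$ vertices and edge set $U$; the $E_1$-edges become self-loops in $T^*$, and the projections of $E_2$ (after removing self-loops) form a connected Eulerian multigraph on $\{c_1,\ldots,c_m\}$. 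A key structural bound is $L(T^*) \le L(T') = O(n/\ell)$: each leaf $c_i$ of $T^*$ corresponds to a limb $T'_i$ attached to the rest of $T'$ by a single $U$-edge, and such a limb must contain at least one $T'$-leaf, since any vertex of $T'_i$ other than its attachment point has the same degree in $T'_i$ as in $T'$.

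The main remaining step is to extract $O(L(T^*)) = O(n/\ell)$ edges of $E_2$ whose tree-paths in $T^*$ together cover $E(T^*) = U$. I plan to use the classical fact that a tree with $L$ leaves admits an edge-cover by $\lceil L/2 \rceil$ paths obtained by pairing its leaves, and to realize the pairing via $E_2$-edges using the connected Eulerian projection multigraph on $\{c_1,\ldots,c_m\}$: traversing its Euler tour and splitting at the $O(L(T^*))$ leaves of $T^*$ yields sub-walks that together traverse every edge of $T^*$, from each of which one can extract a constant number of $E_2$-edges whose $T^*$-projections realize the required path-cover. The hardest part will be ensuring that this splitting produces only $O(L(T^*))$ actual $E_2$-edges rather than $\Omega(|U|)$, since a priori the $E_2$-projections may be short and might seem to force many edges to connect far-apart leaves; I will handle this by exploiting both the Eulerian parity (each $c_i$ has even degree $\ge 2$ in the projection multigraph, so leaves can indeed be paired) and the connectivity of the projection multigraph to stitch $E_2$-edges together along the Euler tour into at most $O(L(T^*))$ effective leaf-to-leaf paths.
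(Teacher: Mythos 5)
Your structural setup is clean and largely correct: the observation that $E_1 := E' \cap (V(T') \times V(T'))$ consists only of tree edges of $T'$ (so each covers only itself), the contraction of the components of $T' \setminus U$ to form $T^*$ with edge set $U$, and the bound $L(T^*) \le L(T') = O(n/\ell)$ are all sound and are a genuinely different way of organizing the problem than the paper does. However, there is a real gap in the step you flag as the hardest, and the resolution you sketch does not close it.

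The abstract claim you are implicitly relying on --- that if $H^*$ is a connected Eulerian multigraph on $V(T^*)$ whose tree-paths cover $E(T^*)$, then some $O(L(T^*))$ edges of $H^*$ already cover $E(T^*)$ --- is \emph{false} in general. Take $T^*$ to be a path $c_0 - c_1 - \cdots - c_k$ (so $L(T^*) = 2$) and let $H^*$ consist of two parallel copies of each edge $(c_i,c_{i+1})$. This $H^*$ is connected and Eulerian, but every $H^*$-edge covers exactly one $T^*$-edge, so any covering subset has size $\ge k$. "Stitching $E_2$-edges together along the Euler tour into effective leaf-to-leaf paths" does not help: concatenating $k$ short edges into one walk still leaves you holding $k$ edges, and the cover advantage of the walk is the union of the $k$ individual tree-paths, not the tree-path of a single long chord. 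Eulerian parity and connectivity alone are simply not enough.

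What actually rules out the bad scenario is the segment decomposition $\pset$ of $T'$ and the specific way $E'$ was built. Since every $\pi$-edge with both endpoints in $V(T')$ or in a common segment is replaced by its tree path before forming $E'$, every non-tree edge of $E'$ has its endpoints in distinct segments. Consequently, for a path $P \in \pset$ (which has no special vertices), any $E'$-edge $e$ whose tree path reaches an interior edge of $P$ must enter or leave the segment $T_P$ through one of the two boundary edges $f_1(P), f_2(P)$, and therefore $\cov(e, T') \cap E(P)$ is a contiguous prefix or suffix of $P$. Picking the two extremal such edges $e_1(P), e_2(P)$ (the ones whose prefix/suffix is maximal) then shows that $U \cap E(P)$ is covered by at most two $E_2$-edges per $P$; together with the $O(n/\ell)$ edges of $T'$ incident to special or giant vertices, this gives the claimed $O(n/\ell)$ budget, since $|\pset| = O(n/\ell)$. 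This localization to segments is precisely what the paper's proof does path-by-path, and it is the ingredient your contraction picture is missing: the bound comes from counting \emph{segments}, not from counting leaves of $T^*$.
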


\begin{proof}
Recall that in \Cref{lem: partitioning into segments} we partitioned tree $T'$ into a set $\pset$ of vertex-disjoint paths, such that, for each $P\in \pset$, either $\pset$ contains a single vertex (that is either a special vertex or a giant vertex) of $T'$, or contains no special vertices of $T'$. We construct the set $E^*$ of edges by processing paths of $\pset$ one-by-one, as follows.

Initially, we set $E^*=\emptyset$.
Consider a path $P\in \pset$. Assume first that path $P$ consists of a single vertex $v$ of $T'$. Then for each edge $f$ of $T'$ that is incident to $v$, we arbitrarily pick an edge $e\in E'$ such that $f\in \cov(e, T')$, and add $e$ to $E^*$. Clearly, such an edge exists since the graph induced by edges of $E'$ is connected and Eulerian on $V$.
Assume now that path $P$ contains no special vertices of $T'$. We denote by $f_1(P),f_2(P)$ the two edges of $T'$ that are incident to the endpoints of $P$. We then let $e_1(P)$ be an edge in $E'$ that, among all edges $e$ in $E'$ that covers $f_1(P)$, maximizes $|\cov(e,T')\cap E(P)|$, and we define edge $e_2(P)$ similarly for $f_2(P)$. 
If $E(P)\subseteq \cov(\set{e_1(P),e_2(P)}, T')$, then we add edges $e_1(P),e_2(P)$ to set $E^*$.
If $E(P)\not\subseteq \cov(\set{e_1(P),e_2(P)}, T')$, then let $P'$ be the subpath of $P$ that is not covered by either $e_1(P)$ or $e_2(P)$. From the construction of $E'$, it is easy to verify that all edges of $P'$ belong to $E'$ (since otherwise they are covered by an non-tree-edge with both endpoints lying in $T_P$, the segment corresponding to path $P$, a contradiction). We then add edges $e_1(P),e_2(P)$ and all edges of $E(P')$ to set $E^*$. This completes the construction of set $E^*$.

It is easy to verify that $E(T')\subseteq E^*$ from the construction. We now show that the number of edges in $E^*$ with an endpoint in $V(T\setminus T')$ is $O(n/\ell)$. In fact, for a path $P\in \pset$ that does not contain a special vertex of $T'$, we added to $E^*$ at most two edges  with an endpoint in $V(T\setminus T')$; and for all single-special-vertex paths of $\pset$, since $T'$ contains $O(n/\ell)$ leaves, we added a total of $O(n/\ell)$ edges to $E^*$. Therefore, the number of edges in $E^*$ with an endpoint in $V(T\setminus T')$ is $O(n/\ell)$. This completes the proof of \Cref{clm:upper}.
\end{proof}



Let $E''$ be the subset of $E'$ that contains all  edges $e$ such that (i) both endpoints of $e$ belong to the same segment of $\sset$; and (ii) there is no edge $e'\in E', e'\ne e$, such that $e\in \cov(e',T)$. In other words, edges of $E''$ are the bridges in the graph induced by edges of $E'$ (ignoring multiplicities), and it is easy to verify that each edge has two copies contained in $E'$. We prove the following simple observation.

\begin{observation} \label{clm:tree-edge-cover}
$w(E') \ge \mst + w(E'')$.
\end{observation}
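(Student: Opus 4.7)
The plan is to decompose the multiset $E'$ into its underlying edge set plus the ``extra'' copies of multiplicity-$2$ edges, and then to lower-bound each piece separately.

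First, I would let $E'_{\mathsf{u}}$ denote the underlying (simple) edge set of $E'$, that is, the set of distinct edges that appear in $E'$ regardless of multiplicity. Since each edge of $E'$ has multiplicity at most $2$, I can write
\[
w(E') \;=\; w(E'_{\mathsf{u}}) \;+\; \sum_{e \in E'_{\mathsf{u}}:\,\text{mult}_{E'}(e)=2} w(e).
\]
The first summand is easy to handle: the paragraph preceding the observation already notes that the multigraph induced by $E'$ is connected (in fact Eulerian) on $V$, so its underlying simple graph on $V$ is connected as well. Therefore $E'_{\mathsf{u}}$ contains a spanning tree of the complete graph on $V$, and hence $w(E'_{\mathsf{u}}) \ge \mst$.

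For the second summand, I would simply invoke the remark made just before the observation: every edge of $E''$ has two copies in $E'$, i.e.\ $\text{mult}_{E'}(e)=2$ for each $e\in E''$. This means $E'' \subseteq \{e \in E'_{\mathsf{u}} : \text{mult}_{E'}(e)=2\}$, and so
\[
\sum_{e \in E'_{\mathsf{u}}:\,\text{mult}_{E'}(e)=2} w(e) \;\ge\; w(E'').
\]
Combining the two bounds yields $w(E') \ge \mst + w(E'')$, as required.

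I do not anticipate any real obstacle here, since both ingredients are direct consequences of facts already established: connectedness of the graph induced by $E'$ gives the MST bound, and the multiplicity-$2$ property of $E''$ (explicitly stated in the excerpt) gives the extra-copies bound. The only thing to be careful about is distinguishing the weight $w(E')$ as a multiset sum from $w(E'_{\mathsf{u}})$ as a set sum; writing out the decomposition above makes this transparent.
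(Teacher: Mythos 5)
Your proof is correct, and it takes a genuinely different (and more elementary) route than the paper. You decompose globally: write $w(E') = w(E'_{\mathsf u}) + w(\text{extra copies})$, bound the first term by $\mst$ using only the connectivity of the graph induced by $E'$ (so $E'_{\mathsf u}$ contains a spanning tree), and bound the second by $w(E'')$ using the stated fact that every edge of $E''$ is doubled in $E'$. The paper instead decomposes \emph{locally}: it deletes $E''$ from $T$, looks at the resulting tree-components $T_1,\ldots,T_r$, argues that the restriction of $E' \setminus E''$ to each $V(T_i)$ is still connected and Eulerian, and then invokes the fact that each $T_i$, being a connected subtree of the MST $T$, is itself the MST on $V(T_i)$, so $w(E'_i) \ge w(T_i)$; summing over components and accounting for the two copies of each $E''$-edge gives $w(E') \ge 2w(E'') + (\mst - w(E'')) = \mst + w(E'')$. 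Both yield the same bound, but your argument avoids the component analysis and the ``subtree of an MST is an MST of its vertex set'' fact entirely, relying only on the two properties explicitly stated just before the observation (connectedness of $E'$ as a multigraph on $V$, and $E''$ having multiplicity two). In that sense your proof is cleaner for this particular statement; the paper's component-wise version packages the information in a form that more closely mirrors the surrounding argument about segments and bridges, but this is a stylistic rather than substantive difference.
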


\begin{proof}
Let $T_1,\ldots,T_r$ be the connected components of graph $T\setminus E''$. From the definition of set $E''$, it is easy to see that each edge of $E'\setminus E''$ has both endpoints belong to the same connected component $T_i$, and moreover, if we denote by $E'_i$ the subset of $E'$ that contains all edges of $E'\setminus E''$ with both endpoints in $T_i$, then the graph induced by edges of $E'_i$ (with multiplicity) is connected and Eulerian on $V(T_i)$. Therefore, $w(E'_i)\ge w(T_i)$ since $T_i$ is the MST on $V(T_i)$. Altogether,
\[w(E')\ge 2\cdot w(E'')+\sum_{1\le i\le r}w(E'_i)\ge 2\cdot w(E'')+\sum_{1\le i\le r}w(T_i)\ge  w(E'')+\mst.\]
\end{proof}

From \Cref{clm:tree-edge-cover}, if $w(E'')>2\sqrt{\eps}\cdot \mst$, then $w(E')>(1+2\sqrt{\eps})\cdot\mst$ and then from \Cref{obs: new set weight bound}, $w(\pi)>(1+2\sqrt{\eps})\cdot\mst-\sqrt{\eps}\cdot \mst \ge (1+\sqrt\eps)\cdot\mst$. Therefore, our estimation $X=2\cdot\mst$ in this case is a $(2-\Omega(\sqrt \eps))$-approximation of $\tsp$.
We assume from now on that $w(E'') \le 2\sqrt{\eps}\cdot \mst$.

We use the following lemma, whose proof is deferred to \Cref{Proof of lem:buttom}.

\begin{lemma} \label{lem:buttom}
Assume $w(E'')\le 2\sqrt{\eps}\cdot \mst$.
Let $T''$ be a subtree of $T$ such that $T'\subseteq T''$ and $w(T''\setminus T') \le \mst/20$.
Let $\hat E_{T''}$ be the subset of $E'$ that contains all edges with at least one endpoint in $V(T \setminus T'')$. Then 
\[w(\hat E_{T''}) \ge \bigg(1.9- 5\cdot\frac{w(T''\setminus T')}{\mst}\bigg)\cdot w(T\setminus T'') - 10\sqrt{\eps}\cdot\mst.\]
\end{lemma}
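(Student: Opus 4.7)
\textbf{Proof proposal for \Cref{lem:buttom}.}

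The plan is to apply \Cref{clm:in-edge} to the forest $T\setminus T''$ (componentwise) with the set $\hat E_{T''}$, and use the fact that the multigraph induced by $E'$ is connected and Eulerian to certify the required lower bound on $\cov_{\hat E_{T''}}(f)$ for each tree edge $f$. The key preliminary observation is: for any edge $f\in E(T)$, if we let $c_f$ denote the number of edges of $E'$ (counted with multiplicity) that cover $f$, then $c_f$ is even because both sides of the cut across $f$ have even total degree in the Eulerian multigraph $E'$, and $c_f\ge 1$ because $E'$ is connected; hence $c_f\ge 2$. Furthermore, when $f\in E(T\setminus T'')$, every $E'$-edge covering $f$ has at least one endpoint in the component of $T\setminus\{f\}$ hanging below $T''$, hence lies in $V(T\setminus T'')$, so such an edge belongs to $\hat E_{T''}$. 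Therefore $\cov_{\hat E_{T''}}(f)\ge 2$ for every $f\in E(T\setminus T'')$.

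I would then partition $E(T\setminus T'')$ into its connected components (each such component is a subtree of $T$ obtained from a light-subtree of $T\setminus T'$ by pruning the extension $F=T''\setminus T'$) and apply \Cref{clm:in-edge} to each such subtree $\tilde T$ with $k=2$ and $\delta=\tfrac{1}{20}+\tfrac{5w(T''\setminus T')}{2\mst}$, so that the prefactor $2(1-\delta)=1.9-5w(T''\setminus T')/\mst$ is produced directly. Summing over components and absorbing the (modest) double-counting that can occur when a single $E'$-edge covers tree edges in two different components (which can be charged cleanly because the two components in question share an attaching vertex in $T''$, and the path in $T$ between the two endpoints passes through $T''$ in a controlled way), I would obtain
\[
w(\hat E_{T''})\;\ge\;\bigl(1.9-5w(T''\setminus T')/\mst\bigr)\cdot w(T\setminus T'')\;-\;\frac{1}{2\delta}\sum_{\tilde T}\adv(\tilde T).
\]

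The remaining task is to bound $\sum_{\tilde T}\adv(\tilde T)$. Each connected component $\tilde T$ of $T\setminus T''$ is a subtree of a single segment $S\in\sset$ (since $T\setminus T'$ decomposes into subtrees hanging off $T'$, each of which is contained in one segment, and $\tilde T$ is obtained from such a subtree by further removing pieces of the extension $F$). Any witnessing edge set for $\adv(\tilde T)$ with at least one endpoint in $V(\tilde T)$ is also a witnessing edge set inside $S$ (up to attaching-edge adjustments), so $\sum_{\tilde T}\adv(\tilde T)\le \sum_{S\in\sset}\adv(S)+O(w(T''\setminus T'))$, where the second term accounts for the fact that peeling off parts of the extension can introduce at most $w(T''\setminus T')$ extra witnessing weight (via triangle inequality on attaching edges). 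Using property \ref{prop_4} giving $\sum_{S\in\sset}\adv(S)\le (\eps/500)\mst$, together with $w(T''\setminus T')\le\mst/20$ and $\delta\ge 1/20$, the total error term is at most $10\sqrt{\eps}\mst$, completing the proof.

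The main obstacle will be the second paragraph: the componentwise application of \Cref{clm:in-edge} must be done without overcounting $E'$-edges that simultaneously cover tree edges in multiple hanging components, and the bound $\sum_{\tilde T}\adv(\tilde T)\le \sum_S\adv(S)+O(w(T''\setminus T'))$ requires a careful accounting of how the extension $F$ interacts with the segment decomposition $\sset$ (in particular, of how removing an extension edge from a segment changes its advantage). Getting the exact coefficient $5w(T''\setminus T')/\mst$ on $w(T\setminus T'')$ is what pins down the choice $\delta=\tfrac{1}{20}+\tfrac{5w(T''\setminus T')}{2\mst}$, and verifying that the resulting $1/(2\delta)$ error term is still $O(1)$ (so that $O(\eps\mst)+O(w(T''\setminus T'))$ advantage errors collapse into $10\sqrt{\eps}\mst$) is the delicate numeric check the full proof must perform.
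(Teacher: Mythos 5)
Your opening observation is correct and useful: since the multigraph induced by $E'$ is connected and Eulerian, every cut in $T$ is crossed an even, positive number of times, so every tree edge $f\in E(T\setminus T'')$ is covered by at least two edges of $E'$, and all such covering edges have an endpoint in $V(T\setminus T'')$ and hence belong to $\hat E_{T''}$. Your bound $\sum_{\tilde T}\adv(\tilde T)\le \sum_{S\in\sset}\adv(S)$ is also correct, because each component $\tilde T$ of $T\setminus T''$ is a subtree of a single segment, and cover advantages of edge-disjoint subtrees of a segment $S$ sum up to at most $\adv(S)$.

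The gap is the double-counting, which you flag as ``modest'' and claim can be ``charged cleanly'', but which in fact is the central difficulty and cannot be dismissed. When you apply \Cref{clm:in-edge} componentwise, each $\hat E_{\tilde T}$ (the edges of $\hat E_{T''}$ covering some edge of $\tilde T$) is forced to contain \emph{every} edge of $\hat E_{T''}$ that covers an edge of $\tilde T$, or else the hypothesis $\cov_{\hat E_{\tilde T}}(f)\ge 2$ breaks. But an edge $(u,v)$ with $u$ and $v$ in two different components $\tilde T_1,\tilde T_2$ is then placed in both $\hat E_{\tilde T_1}$ and $\hat E_{\tilde T_2}$. Since every non-tree edge of $E'$ with both endpoints in $V(T\setminus T'')$ must have its endpoints in two different segments (edges within the same segment were replaced by tree paths in the construction of $E'$), and those endpoints lie in two different components, \emph{all} such edges are double-counted. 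Summing the componentwise bounds therefore gives
\[
2\,w(\hat E_{T''})\ \ge\ \sum_{\tilde T}w(\hat E_{\tilde T})\ \ge\ 2(1-\delta)\cdot w(T\setminus T'')-\frac{\sum_{\tilde T}\adv(\tilde T)}{2\delta},
\]
i.e.\ a prefactor of $(1-\delta)\approx 0.9$, not $2(1-\delta)\approx 1.9$. Nothing in the hypotheses you invoke (Property \ref{prop_4} and $w(E'')\le 2\sqrt\eps\,\mst$) rules out the possibility that almost all of $\hat E_{T''}$ consists of such cross-component, cross-segment edges; in that regime your approach provably cannot exceed prefactor $1$.

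This is exactly the regime the paper's proof is built to handle, and it does so with a structurally different argument. The paper partitions $\hat E_{T''}$ into $\hat E_1$ (one endpoint outside $F=T\setminus T''$), $\hat E_2$ (both endpoints in $F$, different segments), and tree edges $\hat E_3,\hat E_4$ inside $F$; it applies \Cref{clm:in-edge} with the large parameter $k=c=400$ only to saturate $\hat E_1,\hat E_2$, and then splits into two cases according to how much of $E(F)$ is covered $\ge c$ times. In the case where little is heavily covered, the remaining forest $F'$ is an independent set of $2c$-trees, and the crucial step is to interpret $\hat E'_1\cup\hat E_2\cup\hat E_3$ as a special walk on the skeleton of $F'$, and to invoke the lower bound on the minimum special walk cost of the maximum-weight $800$-tree $F^*$ from Step \ref{WithMST_step_2} (via \Cref{lem: spider_walk_estimation}). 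It is precisely this walk-cost lower bound — a hypothesis you never use — that rules out the adversarial ``many cross-segment edges'' scenario. A proof of \Cref{lem:buttom} that relies only on \Cref{clm:in-edge} and Property \ref{prop_4}, as yours does, cannot close this gap.
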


Assume that $w(T')\le 0.47\cdot\mst$, so $w(T\setminus T')\ge 0.53\cdot\mst$.
Applying \Cref{lem:buttom} to subtree $T''=T'$ (so $w(T''\setminus T')=0$ and $w(T\setminus T'')\ge 0.53\cdot\mst$),
we get that $w(E')>0.53\cdot \mst\cdot 1.9-10\sqrt{\eps}\cdot\mst$, and then from \Cref{obs: new set weight bound}, $w(\pi)> (1.007 - 11\sqrt{\eps})\cdot\mst$,
and so our estimation $X=2\cdot\mst$ in this case is a $(2-\Omega(1))$-approximation of $\tsp$. 
We assume from now on that $w(T') > 0.47\cdot\mst$.

Let $E^*$ be the set of edges in \Cref{clm:upper}. 
Since $E(T')\subseteq \cov(E^*, T)$, from the definition of $\adv(T')$ and Property \ref{prop_1}, we get that $w(E^*) \ge w(T')-\adv(T') \ge (0.47 - \eps/10) \cdot\mst$. On the other hand, let $F$ be the subgraph of $T$ induced by edges of $\cov(E^*,T)$, so $T'\subseteq F$. Since $E^*$ only contains $O(n/\ell)$ edges with an endpoint in $V(T\setminus T')$, it is easy to verify that $F\setminus T'$ is an $O(n/\ell)$-extension of $T'$. From Property \ref{prop_3}, $w(F) \le (1/2+\eps)\cdot \mst$, so $w(F\setminus T') \le (0.03+\eps)\cdot\mst$. 

Last, applying \Cref{lem:buttom} to subtree $F$ (note that $w(F\setminus T') \le \mst/20$), and noticing that $\hat E_F\cap E^*=\emptyset$ (since from the definition of $F$, every edge of $E^*$ has both endpoints in $F$), we get that
\[
\begin{split}
w(E')\ge w(E^*)+w(\hat E_F) \ge & \text{ } \bigg(0.47-\frac{\eps}{10}\bigg)\cdot\mst+(1.9-0.2)\cdot(w(T)-w(F)) - 10\sqrt{\eps}\cdot\mst\\
 \ge & \text{ } 
\bigg(0.47-\frac{\eps}{10}\bigg)\cdot\mst+ 1.7 \cdot \bigg(\frac{1}{2}-\eps\bigg)\cdot\mst - 10\sqrt{\eps}\cdot\mst\\
 \ge & \text{ }   1.32\cdot\mst - 12\sqrt{\eps}\cdot\mst.
\end{split}
\] 
Then from \Cref{obs: new set weight bound}, $w(\pi) \ge (1.32-13\sqrt{\eps})\cdot\mst>1.3\cdot\mst$, so our estimation $X=2\cdot\mst$ in this case is a $(2-\Omega(1))$-approximation of $\tsp$.
This completes the proof of \Cref{thm: main with MST}.

\subsection{Proof of Lemma~\ref{lem:buttom}}
\label{Proof of lem:buttom}

Recall that we are given a subtree $T''$ of $T$ such that $T'\subseteq T''$.
For convenience, we denote $\hat E=\hat E_{T''}$ to be the subset of $E'$ that contains all edges with at least one endpoint in $V(T\setminus T'')$.
Throughout the subsection, we set constant $c=400$.

Denote $F=T\setminus T''$. Clearly, $F$ is a subgraph of the graph obtained by taking the union of all segments, namely $F\subseteq \bigcup_{S\in \sset} S$. 
We partition set $\hat E$ into four subsets as follows: set $\hat E_1$ contains all edges of $\hat E$ with exactly one endpoint in $F$; and set $\hat E_2$ contains all edges of $\hat E$ with its endpoints lying in different segments; set $\hat E_3$ contains all edges $e$ of $\hat E$ such that both endpoints of $e$ belong to the same segment and $e$ as a tree edge is covered by some edge of $\hat E_1\cup \hat E_2$; and set $\hat E_4$ contains all edges $e$ of $\hat E$ such that both endpoints of $e$ belong to the same segment and $e$ as a tree edge is not covered by any edge of $\hat E_1\cup \hat E_2$. Clearly, $\hat E=\hat E_1\cup \hat E_2\cup \hat E_3 \cup \hat E_4$, and from the construction of $\hat E$ and $E'$, sets $\hat E_3$ and $\hat E_4$ are subsets of $E(F)$. Moreover, $\hat E_4\subseteq E''$, so $w(\hat E_4)\le 2\sqrt{\eps}\cdot \mst$, and from the previous discussion on set $E''$, each edge of $\hat E_4$ has at least two copies contained in $\hat E_4$. In other words, if we denote by $F_4$ the set of edges without multiplicity in $\hat E_4$, then $w(\hat E_4)\ge 2\cdot w(F_4)$.

Let $E^*$ be the subset of $E(F)$ that contains all edges that are covered by either at least $c$ edges of $\hat E_1$ or at least $c$ edges of $\hat E_2$.
We distinguish between the following two cases, depending on whether or not the total weight of edges in $E^*$ is large enough.

\subsubsection*{Case 1. $w(E^*)> w(F\setminus  F_4)\cdot (2/c)$}

We construct a new set $\hat E'_1$ of edges from set $\hat E_1$, similarly to the construction of set $E'$ from set $E(\pi)$, as follows.
We start with the set $\hat E_1$. For each edge $e\in \hat E_1$ such that both endpoints of $e$ belong to the same segment, we replace $e$ with edges in $\cov(e,F)$, the set of edges in $F$ that are covered by $e$. 
Finally, for each edge $e$ that has more than $c$ copies contained in the current edge set, if the current set contains an odd number of copies of $e$, then we delete from it all but one copies of $e$; if the current set contains an even number of copies of $e$, then we delete from it all but two copies of $e$. Denote by $\hat E'_1$ the resulting set of edges, so each edge has at most $c$ copies contained in $\hat E'_1$. 
From \Cref{clm:in-edge} (by setting $\delta=\sqrt{\eps}/5c$ and $k=c$),  and observing that, for each edge $f\in E(F)$, the number of its copies contained in $\hat E'_1$ is at most $\min\set{c, \cov_{\hat E_1}(f)}$,
we get that
\[
\begin{split}
w(\hat E_1)\ge & \text{ } \bigg(1-(c-1)\cdot\frac{\sqrt{\eps}}{5c}\bigg)\cdot\bigg(\sum_{f\in E(F)}w(f)\cdot\min\set{c,\cov_{\hat E_1}(f)}\bigg) - \frac{\sum_{S\in \sset}\adv(S)}{2\sqrt{\eps}/5c}\\
\ge & \text{ }
\bigg(1-\frac{\sqrt{\eps}}{5}\bigg)\cdot w(\hat E'_1) - \frac{(\eps/500)\cdot \mst}{2\sqrt{\eps}/5c}
\ge w(\hat E'_1)-3\sqrt{\eps}\cdot\mst,
\end{split}
\]
where the second inequality utilizes Property \ref{prop_4}.

We then construct a new set $\hat E'_2$ from $\hat E_2$ in the same way as the construction of $\hat E'_1$ of edges from $\hat E_1$. From similar arguments, and noticing that each edge of $\hat E_2$ covers the edges of at most two connected component of $F$, we get that 
$2\cdot w(\hat E_2)\ge  w(\hat E'_2)-3\sqrt{\eps}\cdot\mst$.

Note that, from the construction of new sets $\hat E'_1, \hat E'_2$, it is easy to verify that every edge in $E^*$ has at least $(c-1)$ of its copies contained in set $\hat E'_1\cup \hat E'_2$. Also, every edge of $F\setminus F_4$ has at least two of its copies contained in $\hat E'_1\cup \hat E'_2\cup \hat E_3$. Therefore,
\[
\begin{split}
w(\hat E) = & \text{ } w(\hat E_1)+w(\hat E_2)+w(\hat E_3)+w(\hat E_4)\\
\ge & \text{ } w(\hat E'_1)-3\sqrt{\eps}\cdot\mst + \frac{w(\hat E'_2)-3\sqrt{\eps}\cdot\mst}{2}+w(\hat E_3)+2\cdot w(F_4)\\
\ge & \text{ }\frac{w(\hat E'_1\cup \hat E'_2\cup \hat E_3)}{2}-5\sqrt{\eps}\cdot\mst+2\cdot w(F_4)\\
\ge & \text{ }\bigg(\frac{c-1}{2}\bigg)\cdot w(E^*)+w(F\setminus F_4)-5\sqrt{\eps}\cdot\mst+2\cdot w(F_4)\\
\ge & \text{ }\bigg(\frac{c-1}{2}\bigg)\cdot w(F\setminus F_4)\cdot \frac{2}{c}+w(F\setminus F_4)-5\sqrt{\eps}\cdot\mst+2\cdot w(F_4)\\
\ge & \text{ }\bigg(\frac{2c-1}{c}\bigg)\cdot w(F)-5\sqrt{\eps}\cdot\mst,
\end{split}
\]
which implies that the required inequality of \Cref{lem:buttom} holds.

\subsubsection*{Case 2. $w(E^*)\le w(F\setminus F_4)\cdot (2/c)$}

Define forest $F'=F\setminus (F_4\cup E^*)$ and denote $\eta=w(F)/\mst$.
Since $\eta< 0.05$, and $w(T')\le (\frac{1}{2}+\eps)\cdot \mst$, $w(F')\ge 0.44\cdot\mst$.
We claim that 
\begin{equation}
\label{eqn: only one}
w(\hat E'_1)+w(\hat E_2)+w(\hat E_3)\ge (1.91-5\eta)\cdot w(F').
\end{equation}
Note that, if the inequality \ref{eqn: only one} is true, then since we have assumed that $w(E^*)\le w(F\setminus F_4)\cdot (2/c)$,
\[
\begin{split}
w(\hat E) = & \text{ } w(\hat E_1)+w(\hat E_2)+w(\hat E_3)+w(\hat E_4)\\
\ge & \text{ } w(\hat E'_1)-3\sqrt{\eps}\cdot \mst +w(\hat E_2)+w(\hat E_3)+2 \cdot w(F_4)\\
\ge & \text{ } (1.91-5\eta)\cdot w(F')+2\cdot w(F_4)-3\sqrt{\eps}\cdot \mst\\
\ge & \text{ } (1.9-5\eta)\cdot w(F\setminus F_4)+2 \cdot w(F_4)-3\sqrt{\eps}\cdot \mst\\
\ge & \text{ } (1.9-5\eta)\cdot w(F)-3\sqrt{\eps}\cdot \mst,
\end{split}
\]
which implies that the required inequality of \Cref{lem:buttom} holds.

The remainder of this subsection is dedicated to the proof of inequality \ref{eqn: only one}.
We first prove the following observation.

\begin{observation}
$F'=F\setminus (F_4\cup E^*)$ is an independent set of $2c$-\snfls.
\end{observation}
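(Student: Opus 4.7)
I need to verify two properties for $F' = F \setminus (F_4 \cup E^*)$ viewed as a collection of its connected components: each component is a subtree of $T$ with at most $2c = 800$ leaves, and the components satisfy the ancestor-descendant independence condition from \Cref{sec: light subtrees}.

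The first step is a localization argument. Since $T' \subseteq T''$, every vertex of $V(F) = V(T) \setminus V(T'')$ lies outside $V(T')$ and hence in some maximal $\ell$-light subtree $T_u$ with $u \in L_\ell$. Because maximal $\ell$-light vertices are pairwise incomparable under the ancestor order in $T$, the subtrees $\{T_u\}_{u \in L_\ell}$ are pairwise vertex-disjoint. Consequently, every edge of $E(F)$, and therefore every connected component of $F'$, lies entirely within a single $T_u$. A direct consequence is that the components of $F$ themselves already form an independent set in $T$: if $w_1$ is a non-root vertex of one component $H_1$ and $w_2$ a non-root vertex of a different component $H_2$, and $w_1$ were an ancestor of $w_2$ in $T$, then the path from $w_1$ down to $w_2$ stays inside $V(F) \cap T_u$ (since any detour through $V(T'')$ is impossible inside a single $T_u$ by the tree structure of $T''$) and therefore connects $w_1$ and $w_2$ through $F$, forcing $H_1 = H_2$, a contradiction.

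The second step, and the main obstacle, is to promote the independence from $F$ to $F'$ while simultaneously establishing the $2c$ leaf bound. This is not automatic: if two non-adjacent edges were removed along a single root-to-leaf chain within a component of $F$, the resulting sub-components would violate independence. To rule this out I plan to exploit the structural definitions of $F_4$ and $E^*$. Edges of $F_4$ are tree edges covered by no edge of $\hat E_1 \cup \hat E_2$; by the construction of $E'$ from the TSP tour $\pi$, such an edge has two copies in $E'$ and is therefore a bridge in the Eulerian multigraph induced by $E'$, so it sits at a terminal position where no external non-tree edge reaches and its removal prunes a pendant branch rather than splitting a vertical chain. Edges of $E^*$ are covered by at least $c$ edges of $\hat E_1$ or $\hat E_2$, and by a Vizing-type charging argument in the spirit of the proof of \Cref{clm: walk cost lower bound by max-matching}, each leaf created in a component of $F'$ can be charged to one of the covering non-tree edges crossing a branching point of $T_u$. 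Summing the charge over leaves of one sub-component will yield the $2c$ bound (with the factor $2$ reflecting the separate accounting for $\hat E_1$-covers and $\hat E_2$-covers), while aligning the charges with the branching structure of $T_u$ preserves the ancestor-descendant independence across sub-components.

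The hardest part is executing this charging argument rigorously: I need to show, first, that within each component of $F$ the removed edges induce a decomposition in which each sub-component has a single ``entry point'' into the rest of the tree, so that non-root vertices across sub-components remain incomparable in $T$; and second, that the resulting leaf count stays at most $2c$. For the former, the key will be showing that the removed edges form cut sets whose positions are dictated by segment boundaries and by the covering patterns of $\hat E_1 \cup \hat E_2$, rather than arbitrary positions along vertical chains of $T_u$. For the latter, it should suffice to bound the number of branching vertices in each sub-component by using that every edge of $F'$ is covered by at most $c-1$ edges from each of $\hat E_1$ and $\hat E_2$, controlling the ``branching budget'' inherited from the absence of $E^*$-edges.
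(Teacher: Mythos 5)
Your proposal correctly handles the warm-up: $F$'s components each live inside one $T_u$ with $u \in L_\ell$, and since $T''$ is a connected subtree of $T$ containing the root, it is ancestor-closed, so its complement $F$ is descendant-closed and the ancestor-descendant path between two $F$-vertices never re-enters $T''$. That gives independence of $F$ itself, and this much is fine.

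However, there is a genuine gap in the rest. You explicitly flag the promotion from $F$ to $F'$ and the $2c$ leaf bound as ``the hardest part'' and leave them as a plan, and the plan is built on the wrong machinery. You invoke Eulerian bridges of $E'$ to locate $F_4$ and a Vizing-type charging in the spirit of the matching-partition inside \Cref{clm: walk cost lower bound by max-matching} to control the leaves, but neither of these is what actually drives the statement. The paper's argument rests on one structural fact that you never state: if $f \in F \setminus E^*$ is covered by some $e' \in \hat E_1 \cup \hat E_2$, then every ancestor of $f$ inside the same component $C$ of $F$ is also covered by that same $e'$. This is immediate from the definitions — an $\hat E_1$ edge has one endpoint in $T''$, and an $\hat E_2$ edge has its endpoints in different segments (hence different components of $F$, since each component lies in a single segment); either way, the intersection of the cover path with $C$ is a chain from some vertex of $C$ all the way up to the root $r_C$. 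Once you have this monotonicity, both removals become trivial to locate: the $\hat E_1$-cover-count and $\hat E_2$-cover-count are non-decreasing towards $r_C$ along every root-to-leaf path of $C$, so $E^* \cap C$ is an upward-closed edge set and $F_4 \cap C$ (the uncovered edges) is downward-closed. Removing an upward-closed set and a downward-closed set from a subtree keeps the components as honest subtrees hanging off a single top vertex, which is exactly what independence demands; and the leaf bound $2c$ then comes from the cover-count ceiling ($< c$ from each of $\hat E_1, \hat E_2$ on every surviving edge) combined with the same monotone chain structure. Your ``pendant branch'' intuition for $F_4$ and the Vizing-style branching-vertex charging for $E^*$ do not supply this; in particular, the Vizing ingredient from \Cref{clm: walk cost lower bound by max-matching} is about decomposing a matching-like set into color classes and has no obvious bearing on why removed edges align with the rooted structure of $C$ rather than cutting it in the middle.

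So: the localization and $F$-independence steps are right, but the core of the observation — why $E^*$ and $F_4$ sit at the top and bottom of each component, and why the leaf count is controlled — is missing, and the route you sketch toward it is not the one that works.
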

\begin{proof}
On the one hand, note that each edge of $F\setminus E^*$ is covered at most $2c$ times by edges of $\hat E_1\cup \hat E_2$. Also note that, from the definition of sets $\hat E_1$ and $\hat E_2$, if an edge of $F\setminus E^*$ is covered by some edge of $\hat E_1\cup \hat E_2$, then all its ancestors are also covered by the same edge. Therefore, $F\setminus E^*$ is an independent set of $2c$-\snfls. On the other hand, for the same reasons, it is easy to verify that edges of $F_4$ are at the bottom of each connected component of $F$. Therefore, $F'=(F\setminus E^*)\setminus F_4$ is also an independent set of $2c$-\snfls.
\end{proof}

Let $T_{F'}$ be the skeleton of $F'$. From the construction of sets $\hat E'_1, \hat E_2$, it is easy to verify that the graph induced by edges of $\hat E'_1\cup \hat E_2 \cup \hat E_3$ is connected and Eulerian on $V(T_{F'})$, so the minimum special walk cost on $T_{F'}$ is at most $w(\hat E'_1\cup \hat E_2 \cup \hat E_3)$.

Since $w(\hat E_4)\le 2\sqrt{\eps}\cdot \mst$, 
\[
\begin{split}
w(F')\ge & \text{ } \bigg(1-\frac{2}{c}\bigg)\cdot w(F\setminus F_4)\ge \bigg(1-\frac{2}{c}\bigg)\cdot\bigg(w(T\setminus T')-\eta\cdot\mst -2\sqrt{\eps}\cdot \mst\bigg)\\
\ge & \text{ } w(T\setminus T')-(\eta+2\sqrt{\eps}+2/c)\cdot\mst.
\end{split}
\]

Let $F^*$ be the $(2c)$-\snfl computed in Step \ref{WithMST_step_2} of the algorithm. From the maximality of $F^*$, $w(F^*)\ge w(F')$. Let $T_{F^*}$ be the skeleton of $F^*$. Consider the special walk on $T_{F^*}$ constructed as follows. We start with the tour induced by all edges of $\hat E'_1\cup \hat E_2 \cup \hat E_3\cup \hat E_4$, and delete all vertices that do not belong to $F^*$. 
Assume for contradiction that $w(\hat E'_1\cup \hat E_2 \cup \hat E_3)<(1.91-5\eta)\cdot w(F')$. Then the cost of such a walk is at most
\[
\begin{split}
w(\hat E'_1\cup \hat E_2 \cup \hat E_3\cup \hat E_4)+2\cdot (\eta+2\sqrt{\eps}+4/c)\cdot\mst 
< & \text{ } (1.91-5\eta)\cdot w(F')+ (2\eta+4\sqrt{\eps}+4/c)\cdot\mst\\
< & \text{ } 1.91\cdot w(F')+ 0.02\cdot \mst\\
< & \text{ } 1.99\cdot w(F')\\
\le & \text{ } 1.99\cdot w(F^*),
\end{split}
\]
where the last but one step uses the property that $w(F')>0.44\cdot\mst$ (and so $5\eta\cdot w(F')>2\eta\cdot \mst$).
This is a contradiction to the fact that the minimum special walk cost on $F^*$ is at least $(2-10^3\cdot c_0\cdot\eps)\cdot w(F^*)$. Therefore, inequality \ref{eqn: only one} holds. This completes the proof of the correctness of the algorithm.

\section{Future Directions}

In this work, we studied the problems of MST and TSP cost estimation in the streaming
and query settings. For TSP cost estimation, we introduced and utilized a novel notion called \emph{cover advantage} that may prove useful for solving this problem in other computational models also.
In the streaming setting, an interesting open problem is to obtain a one-pass $o(n^2)$-space $(2-\eps)$-approximate estimation of TSP cost in the metric stream.
In the query model, we believe a major open problem is to obtain an $o(n^2)$-query $(2-\eps)$-approximate estimation of TSP-cost in general metrics.

\section*{Acknowledgements}

We thank Santhoshini Velusamy for pointing to us a mistake in a previous version of the paper.

\bibliography{REF}

\begin{thebibliography}{ORRR12}

\bibitem[Beh21]{behnezhad2021time}
Soheil Behnezhad.
\newblock Time-optimal sublinear algorithms for matching and vertex cover.
\newblock {\em arXiv preprint arXiv:2106.02942}, 2021.

\bibitem[BJKS04]{Bar-YossefJKS04}
Ziv Bar{-}Yossef, T.~S. Jayram, Ravi Kumar, and D.~Sivakumar.
\newblock An information statistics approach to data stream and communication
  complexity.
\newblock {\em J. Comput. Syst. Sci.}, 68(4):702--732, 2004.

\bibitem[BR14]{BravermanR14}
Mark Braverman and Anup Rao.
\newblock Information equals amortized communication.
\newblock {\em {IEEE} Trans. Inf. Theory}, 60(10):6058--6069, 2014.

\bibitem[BRRS23]{behnezhad2023sublinear}
Soheil Behnezhad, Mohammad Roghani, Aviad Rubinstein, and Amin Saberi.
\newblock Sublinear algorithms for tsp via path covers.
\newblock {\em arXiv preprint arXiv:2301.05350}, 2023.

\bibitem[Chr76]{christofides1976worst}
Nicos Christofides.
\newblock Worst-case analysis of a new heuristic for the travelling salesman
  problem.
\newblock Technical report, Carnegie-Mellon Univ Pittsburgh Pa Management
  Sciences Research Group, 1976.

\bibitem[CKK20]{chen2020sublinear}
Yu~Chen, Sampath Kannan, and Sanjeev Khanna.
\newblock Sublinear algorithms and lower bounds for metric tsp cost estimation.
\newblock {\em arXiv preprint arXiv:2006.05490}, 2020.

\bibitem[CQ17]{CQ17}
Chandra Chekuri and Kent Quanrud.
\newblock Approximating the held-karp bound for metric {TSP} in nearly-linear
  time.
\newblock In {\em 58th {IEEE} Annual Symposium on Foundations of Computer
  Science, {FOCS} 2017, Berkeley, CA, USA, October 15-17, 2017}, pages
  789--800, 2017.

\bibitem[CQ18]{CQ18}
Chandra Chekuri and Kent Quanrud.
\newblock Fast approximations for metric-{TSP} via linear programming.
\newblock {\em CoRR}, abs/1802.01242, 2018.

\bibitem[CRT05]{ChazelleRT05}
Bernard Chazelle, Ronitt Rubinfeld, and Luca Trevisan.
\newblock Approximating the minimum spanning tree weight in sublinear time.
\newblock {\em {SIAM} J. Comput.}, 34(6):1370--1379, 2005.

\bibitem[CS09]{czumaj2009estimating}
Artur Czumaj and Christian Sohler.
\newblock Estimating the weight of metric minimum spanning trees in sublinear
  time.
\newblock {\em SIAM Journal on Computing}, 39(3):904--922, 2009.

\bibitem[Gao18]{Gao18}
Zhihan Gao.
\newblock On the metric s-t path traveling salesman problem.
\newblock {\em {SIAM} Review}, 60(2):409--426, 2018.

\bibitem[Har11]{prahladhlecture}
Prahladh Harsha.
\newblock Lecture notes on communication complexity.
\newblock 2011.

\bibitem[KKG21]{karlin2021slightly}
Anna~R Karlin, Nathan Klein, and Shayan~Oveis Gharan.
\newblock A (slightly) improved approximation algorithm for metric tsp.
\newblock In {\em Proceedings of the 53rd Annual ACM SIGACT Symposium on Theory
  of Computing}, pages 32--45, 2021.

\bibitem[KLS15]{karpinski2015new}
Marek Karpinski, Michael Lampis, and Richard Schmied.
\newblock New inapproximability bounds for tsp.
\newblock {\em Journal of Computer and System Sciences}, 81(8):1665--1677,
  2015.

\bibitem[KNR99]{kremer1999randomized}
Ilan Kremer, Noam Nisan, and Dana Ron.
\newblock On randomized one-round communication complexity.
\newblock {\em Computational Complexity}, 8(1):21--49, 1999.

\bibitem[Lin91]{Lin91}
Jianhua Lin.
\newblock Divergence measures based on the shannon entropy.
\newblock {\em {IEEE} Trans. Inf. Theory}, 37(1):145--151, 1991.

\bibitem[MM18]{MnichM18}
Matthias Mnich and Tobias M{\"{o}}mke.
\newblock Improved integrality gap upper bounds for traveling salesperson
  problems with distances one and two.
\newblock {\em European Journal of Operational Research}, 266(2):436--457,
  2018.

\bibitem[ORRR12]{onak2012near}
Krzysztof Onak, Dana Ron, Michal Rosen, and Ronitt Rubinfeld.
\newblock A near-optimal sublinear-time algorithm for approximating the minimum
  vertex cover size.
\newblock In {\em Proceedings of the twenty-third annual ACM-SIAM symposium on
  Discrete Algorithms}, pages 1123--1131. Society for Industrial and Applied
  Mathematics, 2012.

\bibitem[Viz64]{vizing1964estimate}
Vadim~G Vizing.
\newblock On an estimate of the chromatic class of a p-graph.
\newblock {\em Discret Analiz}, 3:25--30, 1964.

\bibitem[YYI12]{yoshida2012improved}
Yuichi Yoshida, Masaki Yamamoto, and Hiro Ito.
\newblock Improved constant-time approximation algorithms for maximum matchings
  and other optimization problems.
\newblock {\em SIAM Journal on Computing}, 41(4):1074--1093, 2012.

\end{thebibliography}

\end{document}